
\documentclass[a4paper,11pt]{article}
\usepackage{geometry}               
\geometry{letterpaper}                 
\usepackage{epstopdf}
\usepackage{amsthm}
\usepackage{amsfonts}
\usepackage{mathrsfs}
\usepackage{hyperref}
\usepackage{MnSymbol}
\usepackage{color}
\usepackage{tikz}
\usepackage{graphicx}
\usepackage[affil-it]{authblk}
\usepackage[applemac]{inputenc} 
\usepackage[toc,page]{appendix}
\usepackage{bbm}
\geometry{margin=1in}


\newtheorem{theorem}{Theorem}[section]
\newtheorem{lemma}[theorem]{Lemma}
\newtheorem{proposition}[theorem]{Proposition}
\newtheorem{corollary}[theorem]{Corollary}

\newtheorem{remark}[theorem]{Remark}

\newtheorem*{theorem*}{Theorem}
\newtheorem*{prop*}{Proposition}

\newtheoremstyle{named}{}{}{\itshape}{}{\bfseries}{.}{.5em}{\thmnote{#3}#1}
\theoremstyle{named}


\newcommand{\bea}{\begin{eqnarray}}
\newcommand{\eea}{\end{eqnarray}}
\def\beaa{\begin{eqnarray*}}
\def\eeaa{\end{eqnarray*}}



\newcommand{\MM}{\mathcal{M}}
\def\DD{{\mathcal D}}
\def\TT{{\mathcal T}}
\def\PP{\mathcal{P}}


\def\D{{\bf D}}
\def\F{{\mathbf{F}}}
\def\X{{\mathbf{X}}}
\def\J{{\mathbf{J}}}
\def\Y{{\mathbf{Y}}}
\def\w{{\mathbf{w}}}
\def\R{{\bf R}}
\def\B{{\bf B}}
\def\W{{\bf W}}
\def\g{{\bf g}}
\def\M{{\bf M}}
\def\CCC{{\Bbb C}}
\def\Ddot{\dot{\D}}
\def\RR{\mathcal{R}}

\def\a{{\alpha}}

\def\b{{\beta}}

\def\ga{\gamma}

\def\de{\delta}
\def\De{\Delta}

\def\ka{\kappa}
\def\la{\lambda}
\def\La{\Lambda}
\def\om{\omega}
\def\vphi{\varphi}

\def\th{\theta}

\def\nab{\nabla}

\def\trch{{\mbox tr}\, \chi}
\def\chih{{\hat \chi}}
\def\chib{{\underline \chi}}
\def\chibh{{\underline{\chih}}}
\def\etab{{\underline \eta}}
\def\omb{{\underline{\om}}}
\def\bb{{\underline{\b}}}
\def\aa{{\underline{\a}}}
\def\xib{{\underline \xi}}
\def\Xib{\underline{\Xi}}
\def\lap{{\triangle}}
\def\atr{\,^{(a)}\mbox{tr}}
\def\trchb{{\tr \,\chib}}
\def\atrch{\atr\chi}
\def\atrchb{\atr\chib}
\def\rhod{\,\dual\rho}
\def\nabc{\,^{(c)}\nab}

\def\Bdot{\dot{B}}
\def\Edot{\dot{E}}
\def\Fdot{\dot{F}}


\def\qf{\frak{q}}
\def\pf{\mathfrak{p}}

\def\Ffr{\mathfrak{F}}
\def\Bfr{\mathfrak{B}}
\def\Xfr{\mathfrak{X}}
\def\sk{\mathfrak{s}}
\def\Pfr{\mathfrak{P}}
\def\Qfr{\mathfrak{Q}}

\def\bF{\,^{(\mathbf{F})} \hspace{-2.2pt}\b}
\def\bbF{\,^{(\mathbf{F})} \hspace{-2.2pt}\bb}
\def\rhoF{\,^{(\mathbf{F})} \hspace{-2.2pt}\rho}
\def\rhodF{\,^{(\mathbf{F})} \hspace{-2.2pt}\rhod}

\def\BF{\,^{(\mathbf{F})} \hspace{-2.2pt} B}
\def\BBF{\,^{(\mathbf{F})} \hspace{-2.2pt}\underline{B}}
\def\PF{\,^{(\mathbf{F})} \hspace{-2.2pt} P}


\def\Mor{{\mbox{Mor}}}

  \def\dkb{ \, \mathfrak{d}     \mkern-9mu /}
  \def\BEF{B\hspace{-2.5pt}E \hspace{-2.5pt} F}
\def\EF{E \hspace{-2.5pt} F}
\def\BEFdot{\dot{\BEF}\hspace{-2.5pt}}
\def\EFdot{\dot{\EF}\hspace{-2.5pt}}
  
\def\Lb{{\,\underline{L}}}

\def\pr{{\partial}}
\def\les{\lesssim}

\def\c{\cdot}
\def\dual{{\,\,^*}}
\def\div{{\mbox div\,}}

\def\Hb{\,\underline{H}}
\def\Ab{\underline{A}}
\def\Bb{\underline{B}}
\def\Xb{\underline{X}}
\def\Xh{\widehat{X}}
\def\Xbh{\widehat{\Xb}}
\def\tr{\mbox{tr}}
\def\hot{\widehat{\otimes}}

\def\squared{\dot{\square}}
\def\lab{\label}

\def\DDc{\,^{(c)} \DD}
\def\DDb{\ov{\DD}}

\def\DDov{\ov{\DD}}

\def\nn{\nonumber}
\def\ov{\overline}

\def\Lie{{\mathcal L}}
\newcommand{\Lieb}{\Lie \mkern-10mu /\,}

\def\Qr{\mbox{Qr}}

\def\DDs{ \, \DD \hspace{-2.4pt}\dual    \mkern-20mu /}

\def\DDs{ \, \DD \hspace{-2.4pt}\dual    \mkern-20mu /}
\def\DDd{ \, \DD \hspace{-2.4pt}    \mkern-8mu /}

\def\Rhat{{\widehat{R}}}
\def\Kh{\,^{(h)}K}

      \def\ntrap{trap\mkern-18 mu\big/\,}
          \def\Mtrap{\,\MM_{trap}}
\def\Mntrap{{\MM_{\ntrap}}}

\def\NN{\mathcal{N}}
\def\OO{\mathcal{O}}

\def\QQ{\mathcal{Q}}
\def\LL{\mathcal{L}}

\def\BB{\mathcal{B}}
\def\EE{\mathcal{E}}
\def\AA{\mathcal{A}}
\def\VV{\mathcal{V}}
\def\SS{\mathcal{S}}
\def\FF{\mathcal{F}}
\def\UU{\mathcal{U}}

\def\piX{\, ^{(X)}\pi}
\def\piY{\, ^{(Y)}\pi}
\def\That{{\widehat{T}}}
\def\Si{\Sigma}
\def\HH{\mathcal{H}}

\def\g{{\bf g}}

\def\aund{{\underline{a}}}
\def\bund{{\underline{b}}}
\def\cund{{\underline{c}}}
\def\Z{\textbf{Z}}

\def\Rdot{\dot{\R}}

\def\psia{{\psi_{\underline{a}}}}
\def\psib{{\psi_{\underline{b}}}}

\def\psiao{{{\psi_1}_{\underline{a}}}}
\def\psibo{{{\psi_1}_{\underline{b}}}}
\def\psico{{{\psi_1}_{\underline{c}}}}

\def\psiat{{{\psi_2}_{\underline{a}}}}
\def\psibt{{{\psi_2}_{\underline{b}}}}
\def\psict{{{\psi_2}_{\underline{c}}}}

\def\T{{\bf T}}
     \def\dk{\mathfrak{d}}

\def\Db{{\dot{\D}}}
\def\Sa{{S_\aund}}
\def\Sb{{S_\bund}}
\def\Sc{{S_\cund}}
\def\RRt{\tilde{\RR}}

\def\YY{\mathcal{Y}}

\def\RRtp{\RRt'}
\def\RRtpp{\RRt''}
\def\AAa{\AA^{\aund}}
\def\VVa{\VV^{\aund}}
\def\AAt{\widetilde{\AA}}

\def\BEFdot{\dot{\BEF}}


\allowdisplaybreaks

\numberwithin{equation}{section}

\begin{document}

 \title{Boundedness and Decay for the Teukolsky System \\ in Kerr-Newman Spacetime I: The Case $|a|, |Q| \ll M$}
 
 \author[1]{Elena Giorgi\footnote{elena.giorgi@columbia.edu}}
 
\affil[1]{\small Department of Mathematics, Columbia University}

\maketitle

\begin{abstract}

We prove boundedness and polynomial decay statements for solutions of the Teukolsky system for electromagnetic-gravitational perturbations of a Kerr-Newman exterior background, with parameters satisfying $|a|, |Q| \ll M$. The identification and analysis of the Teukolsky system in Kerr-Newman has long been problematic due to the long-standing problem of coupling and failure of separability of the equations. 
Here, we analyze a system satisfied by novel gauge-invariant quantities representing gravitational and electromagnetic radiations for coupled perturbations of a Kerr-Newman black hole. The bounds are obtained by making use of a generalization of the Chandrasekhar transformation into a system of coupled generalized Regge-Wheeler equations derived in previous work \cite{Giorgi7}. Crucial in our resolution is the use of a combined energy-momentum tensor for the system which exploits its symmetric structure, performing an effective decoupling of the perturbations. 
As for other black hole solutions, such bounds on the Teukolsky system provide the first step in proving the non linear stability of the Kerr-Newman metric to gravitational and electromagnetic perturbations.

\end{abstract}

{\footnotesize

\setcounter{tocdepth}{2}
\tableofcontents

}

\section{Introduction}

The stability problem of black holes solutions to the Einstein equation is a well-studied and very active topic of research in classical General Relativity. Even though many recent works focus on the problem of stability of the Kerr family $(\MM, g_{M,a})$ (including the Schwarzschild case $a=0$) as solutions to the Einstein vacuum equation $\operatorname{Ric}[g]=0$, the most general explicit black hole solution is given by the 3-parameter \textit{Kerr-Newman family} \cite{Newman} $(\MM, g_{M,a, Q})$ (of which Kerr is a subfamily), representing rotating and charged black holes, solution to the Einstein-Maxwell equations 
\bea\label{eq:E-M}
\operatorname{Ric}[g]=2 F \c F - \frac 1 2 g |F|^2, \qquad \operatorname{div} F=0, \quad dF=0,
\eea
where the 2-form $F$ is the electromagnetic tensor. The ultimate goal in the resolution of the stability problem is the one of \textit{nonlinear stability}, corresponding to the the proof of dynamic stability of the Kerr-Newman family, without symmetry assumptions, as solutions to \eqref{eq:E-M}. In the case of vacuum, in addition to the celebrated proof of nonlinear stability of Minkowski space \cite{Ch-Kl}, the nonlinear stability of the Schwarzschild family (under axially symmetric polarized perturbations \cite{KS} and for data which lie on a codimension-3 ``submanifold" of moduli space \cite{DHRT}) and the very slowly rotating Kerr family (as the combination of works \cite{KS-GCM1}\cite{KS-GCM2}\cite{KS:Kerr}\cite{Shen}\cite{GKS}) has been obtained very recently. In the case of positive cosmological constant, the nonlinear stability of the very slowly rotating Kerr-de Sitter and Kerr-Newman-de Sitter family had previously been obtained in \cite{Hintz-Vasy}\cite{Hintz-M}, see also \cite{Allen1}.

Considered the benchmark of the study of General Relativity, illustrating many of the geometrical properties and also difficulties of the more general case, the Einstein vacuum equation has long been the major focus of the black hole perturbation theory analysis, from the pioneering works of Regge-Wheeler \cite{Regge-Wheeler}, Teukolsky \cite{Teukolsky}, Whiting \cite{Whiting} to the more recent works in mathematical General Relativity \cite{DHR}\cite{TeukolskyDHR}\cite{ma2}\cite{Y-R}. It has long been known that in vacuum black hole gravitational perturbations are governed by the Regge-Wheeler \cite{Regge-Wheeler} and Zerilli \cite{Ze} equations for metric perturbations of Schwarzschild and by the Teukolsky equation \cite{Teukolsky} for curvature perturbations of Kerr. 

On the other hand, in the case of the Einstein-Maxwell equations \eqref{eq:E-M}, the gravitational field interacts with the source of electromagnetic radiation, and the main equations describing the coupling of gravitational and electromagnetic perturbations on a rotating and charged black hole had not been clearly identified for a long time, see already Section \ref{sec:previous-works} for previous works. 

In \cite{Giorgi7}\cite{Giorgi8}, we started a program whose long-term goal is the rigorous proof of the nonlinear stability of the Kerr-Newman family to coupled electromagnetic-gravitational perturbations. In \cite{Giorgi7} we identified a set of gauge-invariant quantities, denoted $\Bfr, \Ffr, A, \Xfr$,  describing linearized electromagnetic-perturbations of Kerr-Newman\footnote{The quantities $\Bfr, \Ffr, \Xfr$ do not have an analogue in the case of vacuum, while $A$ corresponds to a tensorial version of the standard Teukolsky variable.} and the system of Teukolsky equation they satisfy. The quantities $\Bfr, \Ffr, A, \Xfr$ completely encode gravitational and electromagnetic radiations: if they vanish, then the perturbation of a Kerr-Newman black hole reduces to a change of gauge or change of black hole parameters. In particular, decay estimates for the Teukolsky variables gives full control of the electromagnetic and gravitational radiations in perturbations of Kerr-Newman. It also represents the first step towards further estimates of the remaining gauge-dependent parts of the perturbations. 

This paper is part of a series aimed to obtain boundedness and decay for the linearized Teukolsky system derived in \cite{Giorgi7}. The purpose of the present paper is to obtain decay estimates for the Teukolsky variables $\Bfr, \Ffr, A, \Xfr$ for very slowly rotating and very weakly charged black holes, i.e. in the case\footnote{If $a=0$, the equivalent Theorem has been proved in \cite{Giorgi7a} for $|Q|<M$ and if $Q=0$ the system of equations reduces to the one in Kerr, for which the equivalent theorem has been obtained in \cite{GKS}.} of  $0< |a|, |Q| \ll M$. In part II of this series, we shall obtain an analogue of our main theorem for the case of axially symmetric perturbations of very slowly rotating and strongly charged black holes, i.e. for $0< |a| \ll M$, $|Q| <M$. Finally, in part III of this series, we shall extend the result to $0< |a| \ll M$, $|Q| <M$ with no symmetry assumptions.

A rough version of our main result is the following:

 \begin{theorem}[Main Theorem, rough version I]
\lab{main-thm-intro-1}
Let $0< |a|, |Q| \ll M$. Solutions $\Bfr, \Ffr, A, \Xfr$ to the Teukolsky system arising from regular initial data remain uniformly bounded in the exterior region and satisfy pointwise decay estimates.
\end{theorem}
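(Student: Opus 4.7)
The plan is to leverage the Chandrasekhar-type transformation of \cite{Giorgi7}, which recasts the Teukolsky system for $(\Bfr, \Ffr, A, \Xfr)$ as a \emph{coupled} system of generalized Regge--Wheeler equations for auxiliary quantities $(\pf,\qf)$ whose principal part is a wave operator with short-range potential. I would first establish boundedness, integrated local energy decay, and polynomial decay for $(\pf,\qf)$ by a physical-space vector-field argument, and then recover the corresponding estimates for the Teukolsky variables $(\Bfr, \Ffr, A, \Xfr)$ by integrating the transport relations that define the Chandrasekhar transformation and by Sobolev embedding on a standard hyperboloidal foliation of the exterior.

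\textbf{Combined energy-momentum tensor and multipliers.} Because $(\pf,\qf)$ are coupled through zeroth and first-order terms proportional to $Q$, treating each Regge--Wheeler equation separately produces inhomogeneities that cannot be absorbed at the level of energy estimates. Following the hint in the abstract, the central device is to introduce a single symmetric $2$-tensor $\QQ[\pf,\qf]$ on $\MM$ built as a weighted sum of the individual Regge--Wheeler energy-momentum tensors, with weights chosen so that the indefinite cross-terms in $\div \QQ$ cancel by the symmetric structure of the coupling (the way $\pf$ enters the $\qf$ equation mirrors how $\qf$ enters the $\pf$ equation). This effective decoupling allows the standard multiplier hierarchy to be run as if the system were diagonal: the Killing generator $T=\partial_t$ yields a basic energy, almost-timelike everywhere in the exterior for $|a|\ll M$; a Dafermos--Rodnianski redshift vector field controls regularity across the horizon; a Morawetz multiplier of the form $f(r)\partial_{r^*}$ degenerating at the photon sphere yields integrated local energy decay (ILED); and the $r^p$-hierarchy produces the polynomial decay in the asymptotic region. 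Commutations with the Killing vectors $T,\partial_\phi$, together with a perturbative approximation of the Carter operator in the low-frequency regime, remove the degeneration at trapping.

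\textbf{Main obstacle.} The principal difficulty is the simultaneous presence of three obstructions: trapping at the photon sphere, coupling between gravitational and electromagnetic radiation, and (mild) superradiance in the ergoregion. The smallness assumption $|a|,|Q|\ll M$ is essential on all three counts, and is exploited to treat $(a,Q)$ as continuity parameters in a perturbative bootstrap off the base case $a=Q=0$, in which the combined tensor $\QQ$ reduces to an unweighted sum of the Schwarzschild Regge--Wheeler energies and the argument specializes to the analysis of \cite{Giorgi7a}. The hardest single step will be the non-degenerate ILED: one must simultaneously keep the coupling errors small enough to be absorbed into the positive Morawetz bulk, handle the trapping via commutation with an approximate Carter operator (Kerr--Newman not being known to admit a genuine Carter tensor yielding separability, so the Dafermos--Rodnianski--Shlapentokh-Rothman frequency-space method is unavailable and the whole argument must be carried out in physical space), and propagate the smallness uniformly in $(a,Q)$. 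Once ILED is obtained, boundedness follows from the $T$-energy estimate combined with redshift, and pointwise polynomial decay follows by iterating the $r^p$-hierarchy together with commutation with angular operators and Sobolev embedding.
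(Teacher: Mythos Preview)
Your overall strategy is essentially the paper's: Chandrasekhar transformation to $(\pf,\qf)$, a combined energy-momentum tensor $\QQ[\pf,\qf]=\QQ[\pf]+8Q^2\QQ[\qf]$ exploiting the symmetric coupling to cancel cross-terms, a physical-space multiplier hierarchy ($\That$, redshift, Morawetz, $r^p$), and transport estimates to recover $(\Bfr,\Ffr,A,\Xfr)$. Two points deserve correction, however.

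First, a factual inaccuracy: Kerr--Newman \emph{does} admit a genuine Carter tensor, and the associated second-order operator $\OO$ commutes with the scalar d'Alembertian (this is \cite{Giorgi8}). The failure of separability is not due to the absence of a Carter structure but to the \emph{coupling operators} $\ov{\DD}\cdot$ and $\DD\hot$ on the right-hand side of the Regge--Wheeler system, which do not respect spheroidal harmonics. So the trapping is handled not by a ``perturbative approximation'' of the Carter operator but by the actual Andersson--Blue commutation with $\OO$, adapted to tensors.

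Second, a genuine gap: the combined energy-momentum tensor handles the coupling at the level of the \emph{uncommuted} estimates, but you have not addressed what happens when you commute with $\OO$ to beat trapping. The commutator $[\OO,C_i]$ produces terms of the form $O(ar^{-2})\dk^{\leq 2}(\pf,\qf)$, which are acceptable in Morawetz but \emph{not} in the commuted energy estimate, since they pair with $\nab_T$-derivatives that degenerate at trapping. The paper's resolution (Lemma~\ref{lemma:theoperqatorwidetildeOOcommutingwellwtihRWmodel}) is to replace $\OO\pf,\OO\qf$ by modified quantities $\widetilde{\pf},\widetilde{\qf}$ that include explicit coupling corrections (e.g.\ $-\tfrac{4Q^2|q|^3}{q^3}\ov{\DD}\cdot\qf$ added to $\OO\pf$), chosen so that the bad commutator terms are traded for $\nab_{\Rhat}$-derivatives, which are non-degenerate. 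This modification is not $O(a)$-small and is one of the genuinely new technical ingredients; your sketch does not anticipate it.
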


The precise statement of the above will be given as Theorem \ref{theorem:unconditional-result-final}, see also Corollary \ref{cor:pointwise-decay} for the pointwise decay statements obtained.

The proof of Theorem \ref{main-thm-intro-1} makes use of derived quantities from $\Bfr, \Ffr$, denoted $\pf, \qf$, through the so-called \textit{Chandrasekhar transformation} into a better-behaved system of generalized Regge-Wheeler equations (also partly derived in \cite{Giorgi7}), for which we obtain energy and Morawetz estimates through the extension of Andersson-Blue's method \cite{And-Mor} as in \cite{GKS}. The crucial observation that allows this technique to work is that, in addition to the coupling of the equations with the lower order terms $\Bfr, \Ffr$ which can be recovered by transport estimates as in the case of Kerr \cite{ma2}\cite{TeukolskyDHR}, the system presents a permissible coupling at higher order between the electromagnetic and the gravitational radiations $\pf, \qf$. This coupling appears through \textit{spacetime adjoint operators}, whose interaction produces cancellations of problematic terms and creation of boundary terms, effectively decoupling the equations at the level of energy estimates, see already Section \ref{sec:overview-intro} for more details.

The extension of Andersson-Blue's method as introduced in the proof of nonlinear stability of slowly rotating Kerr family \cite{GKS} is applied here in such a way that the decay estimates for $\Bfr, \Ffr, A, \Xfr, \pf, \qf$ can be straightforwardly upgraded to the nonlinear case, upon the corresponding set-up of bootstrap assumptions. The natural extension of the decay estimates comprising the nonlinear terms will be derived in future work in the framework of the nonlinear stability of Kerr-Newman.

In what follows, we give an overview of previous works in Section \ref{sec:previous-works}, recall the Teukolsky and generalized Regge-Wheeler system in Section \ref{sec:teuk-intro} and give an overview of the proof in Section \ref{sec:overview-intro}. Finally, we end this introduction in Section \ref{sec:outline} with an outline of the paper.

\subsection{Previous works on perturbations of charged black holes}\label{sec:previous-works}

Here we recall previous works on perturbations of the charged Reissner-Nordstr\"om and Kerr-Newman black holes as treated by the physics and mathematical communities.

\subsubsection{Reissner-Nordstr\"om}

The spherically symmetric case of the Kerr-Newman family corresponding to $a=0$ is the 2-parameter family of Reissner-Nordstr\"om black hole, representing a charged black hole with mass $M$ and charge $Q$ within the subextremal and extremal range $|Q|\leq M$.

As in the case of Schwarzschild, the metric perturbations of the Reissner-Nordstr\"om solution had been studied by the black hole perturbation theory community in the 70's, leading to a generalization of the Regge-Wheeler \cite{Regge-Wheeler} and Zerilli \cite{Ze} equations for axial and polar perturbations. The axial perturbations are described by a pair of equations which can be decoupled through a renormalization of the main quantities into two Regge-Wheeler equations, and the polar perturbations can also be reduced to two independent equations of second order which can be decoupled into two Zerilli equations, as obtained by Moncrief and Zerilli in \cite{Moncrief1}\cite{Moncrief2}\cite{Moncrief3}\cite{Zerilli} for mode-separated solutions, see also  \cite{Chandra-RN}\cite{Xanta}.
Perturbations of Reissner-Nordstr\"om via the Newman-Penrose formalism, or curvature perturbations, had also been considered and the corresponding linearization of the Maxwell equations, in addition to the Teukolsky equation as in Schwarzschild or Kerr, were obtained in \cite{Chandra-RN2}, see also \cite{Bicak}\cite{Chandra}. In \cite{Chandra}, the equations are analyzed in phantom gauge and separated in modes, and they are decoupled into a pair of independent equations of second order of the Teukolsky type.

In the 70's, Chandrasekhar developed a transformation theory \cite{Chandra-RN2} with the goal of converting the Teukolsky equations of curvature perturbations into one-dimensional wave equations, or Regge-Wheeler equation, of metric perturbations. We refer to it as the (fixed-frequency) \textit{Chandrasekhar transformation}. This transformation, newly redefined in physical-space in the case of Schwarzschild by Dafermos-Holzegel-Rodnianski \cite{DHR} in 2016, will play an important role in the proof of our main result.

The nonmodal linear stability of Reissner-Nordstr\"om under odd perturbations has been treated in \cite{Dotti2}, where the full contraction of the electromagnetic tensors and Weyl curvature are shown to satisfy a coupled system of wave equations, which are decoupled upon decomposition to the spherical harmonics, and then shown to be bounded following the seminal work by Kay-Wald \cite{Kay-Wald}. Stability for charged spacetimes in higher dimensions has been treated in \cite{highdim}.
We end the discussion about the treatment of perturbations of Reissner-Nordstr\"om in the physics community by mentioning that the quasi-normal modes of Reissner-Nordstr\"om had been computed in \cite{Gunter1}\cite{Gunter2}\cite{Berti-Kokkotas}. Notice that no quasi-normal mode is purely electromagnetic or purely gravitational, as they are all accompanied by emission of both electromagnetic and gravitational radiation.

The mathematical proof of the linear stability of the Reissner-Nordstr\"om family for $|Q|<M$ to gravitational and electromagnetic perturbations has been obtained in 2019 in our series of works \cite{Giorgi4}\cite{Giorgi5}\cite{Giorgi6}\cite{Giorgi7a}. The proof relies on two parts: control of the gauge-invariant quantities and control of the remaining components of the solutions. As mentioned above, one of the main difficulties in the electrovacuum case is the absence of a clear system of equations governing the perturbations. In \cite{Giorgi4}\cite{Giorgi5}, we defined a set of gauge-invariant quantities satisfying a system of coupled Teukolsky equations. Using a physical-space version of the Chandrasekhar transformation, we derived a symmetric system of Regge-Wheeler equations, whose energy and Morawetz estimates can be obtained \cite{Giorgi7a} through standard techniques for the scalar wave equation. Such estimates for the Teukolsky variables are then used under the choice of Bondi gauge to obtain control of all the remaining quantities in \cite{Giorgi6}.

The Regge-Wheeler system obtained in \cite{Giorgi7a} has been used by Apetroaie \cite{Mario} in the case of extremal Reissner-Nordstr\"om $|Q|=M$, where stability properties coexist with instabilities, which manifest themselves as non-decay or blow up of quantities along the event horizon, consisting with the Aretakis instability \cite{Aretakis}, see also \cite{Lucietti}\cite{Aretakis2}.

\subsubsection{Kerr-Newman}\label{sec:KN-intro}

Gravitational and electromagnetic perturbations of the charged and rotating black hole case, the Kerr-Newman spacetime, are very different from the similar cases of Kerr or Reissner-Nordstr\"om and have in fact long been known to be problematic due to the failure of the main equations to separate or decouple \cite{Chitre}\cite{Lee}. 
As stated by Chandrasekhar in Section 111 of \cite{Chandra}, ``the principal obstacle is in finding separated equation" and in the ``apparent indissolubility of the coupling between the spin-1 and spin-2 fields in the perturbed spacetime", due to the behavior of spheroidal harmonics with respect to the adjoint operators appearing on the coupled system in a non-spherically symmetric background. For more details on the failure of the mode-decomposed stability of Kerr-Newman see Section 1.1 of \cite{Giorgi7}.

Many works in the physics community have followed to compensate for the failure of separability: perturbations with either the metric or the electric field fixed were considered in \cite{DF}, perturbations of test null rays in the unstable orbit in the eikonal limit were treated in \cite{mash}, electromagnetic-gravitational perturbations in the slow-rotation limit were analyzed in \cite{Pani}\cite{Pani2}. In \cite{Lee}, coupled equations for the standard Teukolsky variables $\Psi_0$ and the gauge-invariant\footnote{Our gauge invariant quantity $\Bfr$ is the tensorial version of $2\Psi_1\phi_1-3\phi_0\Psi_2$.} combination $2\Psi_1\phi_1-3\phi_0\Psi_2$ were obtained for perturbations of Kerr-Newman, but the analysis was restricted to the case $a=0$ because of the failure of separability.

The resonant mode spectrum of the Kerr-Newman spacetime is currently unknown and is a major unsolved problem in General Relativity, as most techniques to calculate quasi normal modes frequencies require the equations to separate.
Quasi-normal modes of Kerr-Newman had been computed in \cite{Kokkotas} for gravitational perturbations and a fixed electromagnetic field, and in \cite{Berti-Kokkotas} for scalar perturbations (which are separable) and various approximations of electromagnetic-gravitational perturbations, such as slow-rotation \cite{Pani2} or weakly-charged limit \cite{Mark}. Numerical schemes involving the full evolution of Kerr-Newman perturbations have been considered in \cite{Zilhao}. 
 In \cite{Dias}, a Frobenius analysis of two coupled PDEs for the standard Teukolsky variables $\Psi_0$ and the gauge-invariant combination $2\Psi_1\phi_1-3\phi_0\Psi_2$ did not find any unstable mode in a large region of parameter space, see also \cite{Dias2}\cite{Dias3} for further improvements. The recent development in computations of quasi normal modes of Kerr-Newman black holes have also reinvigorated the numerical simulations of merger of charged black holes in relation to gravitational waves detections \cite{Bozzola}\cite{Wang}. In \cite{MG21}, a positive-definite energy functional for axisymmetric perturbations of Kerr-Newman is shown to be conserved for a class of perturbations. 
 
 Our goal is to follow the roadmap laid out by the mathematical proof of the stability of Reissner-Nordstr\"om \cite{Giorgi4}\cite{Giorgi5}\cite{Giorgi6}\cite{Giorgi7a} and extend it to the case of Kerr-Newman. As in the spherically symmetric case, it is possible to identify a set of gauge-invariant quantities satisfying a system of Teukolsky equation and their derived quantities satisfying generalized Regge-Wheeler equations, this was done in \cite{Giorgi7}. Because of the well-known lack of separability of this system, we cannot make use of techniques involving separation of variables (in particular the one involving spheroidal harmonics) such as in  \cite{DRSR}\cite{Y-R}, but we rather rely on the technique introduced by Andersson-Blue \cite{And-Mor} which uses the hidden symmetry of the Carter tensor for an analysis solely in physical space, whose validity is restricted to $|a| \ll M$. Since the Carter operator only commutes with the wave equation in vacuum, in \cite{Giorgi8} we showed that this method can be extended to the case of the electrovacuum Kerr-Newman, preparing the stage for the proof of Theorem \ref{main-thm-intro-1} in this paper.
 
 Finally, recently an alternative approach based on the use of harmonic coordinates was used to prove the linear stability of slowly rotating and weakly charged Kerr-Newman by He \cite{Lili}. The use of harmonic gauge, in the style of \cite{Kerr-lin2}, avoids the use of gauge-invariant quantities representing electromagnetic and gravitational radiations and their Teukolsky equations, while performing the linearization in a fixed gauge.

\subsection{The Teukolsky and generalized Regge-Wheeler system}\label{sec:teuk-intro}

The original approach of metric perturbations of Schwarzschild and Reissner-Nordstr\"om does not simply generalize to the case of Kerr, where a fundamental progress was made with the discovery of the Teukolsky equation \cite{Teukolsky} for two components of curvature perturbations, decoupled from the rest of the system of linearized gravity. As discussed in Section \ref{sec:KN-intro}, the derivation of equivalent equations in Kerr-Newman presented only partial progress in the literature, with coupled and unseparable equations identified in specific gauges or for some gauge-invariant quantities.

In \cite{Giorgi7}, we identified a complete\footnote{The set is complete in the sense that the vanishing of these quantities implies the absence of radiation.} set of eight gauge invariant quantities, denoted by
\beaa
A, \quad \Ab, \qquad \Ffr, \quad  \mathfrak{\underline{F}}, \qquad \mathfrak{B}, \quad \underline{\mathfrak{B}}, \qquad \mathfrak{X}, \quad \underline{\mathfrak{X}},
\eeaa
which are defined in terms of the Ricci coefficients, curvature and electromagnetic components of the perturbation, four with positive and four with negative signature (or spin). The quantities $A, \Ab, \Ffr,  \mathfrak{\underline{F}}$ are complex-valued 2-tensors while $\mathfrak{B}, \underline{\mathfrak{B}}, \mathfrak{X}, \underline{\mathfrak{X}}$ are complex-valued 1-tensors.
The quantities $A, \Ab$ reduce to the Teukolsky variables in Kerr in the case of vacuum. For the explicit definition of the remaining quantities see \eqref{eq:definitions-Ffr-Bfr-Xfr}. Since the positive and negative signature (obtained ones from the others by flipping the ingoing and outgoing null directions) obey symmetric properties, in the following we only concentrate on the positive ones $A, \Ffr, \Bfr, \Xfr$.

In Theorem 6.1 in \cite{Giorgi7}, we derived the following coupled system of Teukolsky equations for $A, \Ffr, \Bfr, \Xfr$:
\bea\label{eq:Teuk-intro}
\begin{split}
\TT_1(\mathfrak{B})&=\M_1[\mathfrak{F}, \mathfrak{X}] \\
\TT_2(\mathfrak{F})&=\M_2[A, \mathfrak{X}, \mathfrak{B}] \\
\TT_3(A)&=\M_3[\Ffr, \Xfr]
\end{split}
\eea
where $\TT_i$ are wave-like operators with first order terms and potentials and the $\M_i$ are first order differential operators in the arguments, see \eqref{operator-Teukolsky-B}--\eqref{definition-MM2} for their expressions. In the case of vacuum, $\TT_3$ reduces to the Teukolsky operator in Kerr with $\M_3=0$. 

In addition to the system of Teukolsky equations, the quantities $A, \Ffr, \Bfr, \Xfr$ satisfy a (redundant) system of four transport equations, see \eqref{relation-F-A-B}--\eqref{nabc-3-mathfrak-X}. Making use of these additional relations, in Kerr-Newman it is sufficient to consider the system of the Teukolsky equations for $\Bfr$ and $\Ffr$ only, coupled with the transport equations involving $A$ and $\Xfr$.

As in the case of Schwarzschild \cite{DHR}, Reissner-Nordstr\"om \cite{Giorgi4}\cite{Giorgi5} and Kerr (see \cite{ma2}\cite{TeukolskyDHR}\cite{GKS}), it is not possible to obtain boundedness and decay estimates directly for the Teukolsky equations due to the presence of the first order terms in the wave-like operators $\TT_i$. For this reason, we perform a Chandrasekhar transformation of the quantities $\Bfr, \Ffr$. In \cite{Giorgi7}, we defined the derived\footnote{Observe that in contrast with the vacuum case where the Chandrasekhar transformation consists of two null derivatives, here only one derivative is applied.} quantities 
\beaa
\pf&=& f_1(r, \th) \big(\nab_3 \Bfr + C_1(r, \th) \Bfr \big) \\
\qf&=& f_2(r, \th) \big( \nab_3 \Ffr + C_2(r, \th) \Ffr\big)
\eeaa
for covariant derivatives along the incoming null direction $e_3$, and suitable functions\footnote{Here, we slightly modify the definition of $C_1, C_2$ with respect to the ones in \cite{Giorgi7} in order to obtain an improved structure of the lower order terms of the generalized Regge-Wheeler system, see the proof of Theorem \ref{main-theorem-RW}.} $f_1, f_2, C_1, C_2$. The quantities $\pf$ and $\qf$, respectively a complex-valued 1-tensor and a complex-valued 2-tensor, were proved in Theorem 7.3 in \cite{Giorgi7} (improved here in Theorem \ref{main-theorem-RW}) to satisfy the following system of wave equations:
\bea\label{eq:gRW-intro}
\begin{split}
 \squared_1\pf-i  \frac{2a\cos\th}{|q|^2}\nab_{\partial_t} \pf  -V_{1, f}  \pf &=4Q^2 \frac{\ov{q}^3 }{|q|^5} \left(  \ov{\DD} \c  \qf \right) + L_1 \\
\squared_2\qf -i  \frac{4a\cos\th}{|q|^2}\nab_{\partial_t} \qf -V_{2,f}  \qf &=-   \frac 1 2\frac{q^3}{|q|^5} \left(  \DD \hot  \pf  -\frac 3 2 \left( H - \Hb\right)  \hot \pf \right) + L_2 
\end{split}
 \eea
 where $V_{1,f}, V_{2, f}$ are real positive potentials, $q$ denotes the complex function $q=r+i a\cos\th$, $\DD$ are complex horizontal derivatives defined in \eqref{eq:def-DD}, and $H, \Hb$ are complex valued Ricci coefficients. Here $L_1, L_2$ are lower order terms depending on first and zero-th order derivatives of $\Bfr, \Ffr, A, \Xfr$. Equations \eqref{eq:gRW-intro} form the \textit{generalized Regge-Wheeler system} in Kerr-Newman. Observe that the left hand side of the above equations have the same structure as the generalized Regge-Wheeler equations obtained in Kerr, see \cite{ma2}\cite{TeukolskyDHR}\cite{GKS}, but presenting the additional coupling on the right hand sides.
 
 As anticipated from the discussions in Section \ref{sec:KN-intro}, the main system of equations  \eqref{eq:gRW-intro} for the gauge-invariant quantities $\pf, \qf$ representing electromagnetic and gravitational radiations cannot be decoupled. Moreover, because of the presence of the operators $\ov{\DD} \c$ and $ \DD \hot$ on the right hand side the above cannot be separated in spheroidal harmonics. Nevertheless, the system presents a symmetric structure that allows for an \textit{effective decoupling} at the level of the energy-momentum tensor of the system.
 In the following, we give an overview on how we  circumvent the lack of decoupling and separability to derive boundedness and decay estimates for the above system.

\subsection{Overview of the proof}\label{sec:overview-intro}

The decay estimates for $\Bfr, \Ffr, A, \Xfr$ are obtained through the analysis of the generalized Regge-Wheeler system \eqref{eq:gRW-intro} for $\pf, \qf$, combined with the definition of the Chandrasekhar transformation and the transport estimates involving $A, \Xfr$. We state here a second rough version of our main Theorem. 

 \begin{theorem}[Main Theorem, rough version II]
\lab{main-thm-intro-2}
Let $0< |a|, |Q| \ll M$. 
Solutions $\Bfr, \Ffr, A, \Xfr$ to the Teukolsky system \eqref{eq:Teuk-intro} and their Chandrasekhar transformed $\pf, \qf$ solutions to the generalized Regge-Wheeler system \eqref{eq:gRW-intro} arising from regular initial data satisfy energy boundedness, integrated local energy decay and $r^p$ hierarchy of estimate.
\end{theorem}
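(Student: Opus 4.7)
The plan is to take the generalized Regge-Wheeler system \eqref{eq:gRW-intro} for $(\pf,\qf)$ as the primary object and recover estimates for $\Bfr, \Ffr, A, \Xfr$ afterwards via the Chandrasekhar transformation together with the redundant transport relations \eqref{relation-F-A-B}--\eqref{nabc-3-mathfrak-X}. The central step is the construction of a combined energy-momentum tensor for the pair $(\pf,\qf)$: contracting the first equation of \eqref{eq:gRW-intro} with $\ov{\pf}$ and the second with $\ov{\qf}$, with relative weights tuned so that the coefficients $4Q^2$ and $-\tfrac{1}{2}$ on the coupling compensate, the principal coupling terms $\ov{\DD}\c\qf$ and $\DD\hot\pf$ pair up under integration by parts. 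Since $\ov{\DD}\c$ and $\DD\hot$ are formal $L^2$-adjoints on complex horizontal tensors modulo curvature and Ricci corrections, this produces a top-order cancellation of the coupling, leaving only boundary terms and lower order corrections controlled by $H-\Hb$ and by $q-\ov{q}=2ia\cos\th$. This is the ``effective decoupling at the level of energy estimates'' alluded to in Section~\ref{sec:overview-intro}, and is what produces a coercive divergence identity in physical space despite the failure of mode separation.

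\textbf{Morawetz, redshift, $r^p$ decay.} With a combined energy current at hand, I would follow the Andersson-Blue framework as extended to the electrovacuum setting in \cite{Giorgi8} and to coupled systems in \cite{GKS}. Commuting the system with $\partial_t$, with the axial Killing field, and with the second-order Carter operator (which commutes with the principal part of \eqref{eq:gRW-intro} modulo terms of size $O(|a|)$ by \cite{Giorgi8}), produces the elliptic hierarchy required to control all angular and second-order derivatives. Combining this with a Morawetz multiplier of Andersson-Blue type and a Dafermos-Rodnianski redshift vector field at the horizon yields integrated local energy decay with the expected photon-sphere trapping degeneracy, which is removed after a further $\partial_t$ commutation. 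The Dafermos-Rodnianski-Schlue $r^p$ hierarchy then delivers polynomial decay at null infinity. Throughout, the imaginary first-order terms $-i\tfrac{2a\cos\th}{|q|^2}\nab_{\partial_t}\pf$ and its analogue, together with the residual coupling errors left over from the adjoint cancellation, are treated perturbatively, absorbed by the bulk Morawetz term at the price of a factor $|a|+|Q|$; the potentials $V_{1,f},V_{2,f}$ being non-negative contribute favorably.

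\textbf{Closing $L_1,L_2$; transfer to Teukolsky variables; main obstacle.} The terms $L_1,L_2$ in \eqref{eq:gRW-intro} contain first and zeroth order derivatives of $\Bfr,\Ffr,A,\Xfr$ which are not controlled by the RW estimate alone. I would close them by a transport bootstrap: integrating the Chandrasekhar relation $\pf=f_1(\nab_3\Bfr+C_1\Bfr)$ and its analogue for $\qf$ along incoming null hypersurfaces produces energy bounds for $\Bfr,\Ffr$ in terms of the energies of $\pf,\qf$ plus data, after which the relations \eqref{relation-F-A-B}--\eqref{nabc-3-mathfrak-X} propagate control to $A$ and $\Xfr$. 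Each such transfer loses a factor proportional to $|a|+|Q|$, absorbable by smallness of the parameters. Pointwise decay then follows from the energy and Morawetz bounds via additional Killing and Carter commutations combined with Sobolev embedding on the spheres. The hard part of the whole argument is the first step: the relative weights in the combined energy must be tuned so that the top-order coupling cancels exactly, while the residual errors (commutators of $\DD$ with the multiplier vector field, $H-\Hb$ contributions from the $\DD\hot$ term, and the $a\cos\th$ imaginary first-order terms) remain absorbable by the bulk Morawetz integral at size $|a|+|Q|$. This delicate interplay between the adjoint structure of the coupling and the perturbative size of the error terms is precisely what forces the restriction $|a|,|Q|\ll M$.
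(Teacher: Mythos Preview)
Your outline matches the paper's architecture closely: the combined energy-momentum tensor $\QQ[\pf]+8Q^2\QQ[\qf]$ (your ``tuned weights''), the spacetime-adjointness cancellation of the coupling via Lemma~\ref{lemma:adjoint-operators}, Andersson--Blue commutation with the Carter operator for Morawetz, transport recovery of $\Bfr,\Ffr,A,\Xfr$, and $r^p$ estimates. Two steps, however, require more than ``absorb $O(|a|+|Q|)$ errors into the bulk,'' and your sketch would fail there as written.

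\textbf{Commuted energy at trapping.} Commuting with $\OO$ does produce $O(|a|r^{-2})\dk^{\le 2}$ errors (Lemma~\ref{lemma:symmetry-operators-system}), and these are absorbable in the \emph{Morawetz} estimate. But in the \emph{energy} estimate they get paired with $\nab_T\dk^{\le 2}\psi$, which is degenerate at trapping and at the same differentiability level, so no absorption is possible. The paper fixes this (Lemma~\ref{lemma:theoperqatorwidetildeOOcommutingwellwtihRWmodel}) by passing to modified operators $\widetilde\psi_1,\widetilde\psi_2$ that include corrections $-\tfrac{4Q^2|q|^3}{q^3}\ov\DD\c\psi_2$ and $\tfrac12\tfrac{|q|^3}{\ov q^3}\DD\hot\psi_1$, which are \emph{not} $O(|a|)$; Remark~\ref{remark:imaginary-potential} shows any $O(|a|)$ modification would introduce an imaginary potential. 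The resulting commutators are of the form $\nab_{\Rhat}^{\le 1}\dk^{\le 1}$, absorbable because $\nab_{\Rhat}$ is non-degenerate at trapping. This is a genuine source of the $|Q|\ll M$ restriction beyond the mechanism you name.

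\textbf{Closing $L_1,L_2$ at trapping.} In the energy estimate one faces $\int_{\Mtrap}\nab_T\ov\pf\cdot L_1+8Q^2\nab_T\ov\qf\cdot L_2$. Integrating by parts in $T$ puts $\nab_T$ on $L_1,L_2$, which already contain first derivatives of $\Bfr,\Ffr$; naively this is one derivative too many relative to $\pf,\qf$. The paper (Proposition~\ref{lemma:crucial1}) uses the precise form of $L_1,L_2$ from Theorem~\ref{main-theorem-RW}: the leading pieces are $(a^2\nab_T+a\nab_Z)(\Bfr,\Ffr)$, which via the definition of $\pf,\qf$ as $\nab_3$-derivatives reduce to boundary terms, together with coupling pieces $L_{coupl}\,\ov\DD\c\Ffr$ and $L_{coupl}\,\DD\hot\Bfr$ that cancel by a \emph{second} application of the adjointness structure under the same $8Q^2$ weight. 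Finally, your transport argument recovers only $\nab_3,\nab_4$ of $\Bfr,\Ffr$; the angular derivatives come from using the Teukolsky operators $\TT_1,\TT_2$ as elliptic identities (Lemma~\ref{lemma:transport-angular-Bfr-Ffr}), not from transport.
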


We give here an overview of the proof of Theorem \ref{main-thm-intro-2}.

\subsubsection{The combined energy-momentum tensor}

The general strategy to obtain energy and integrated local energy decay estimates for the system \eqref{eq:gRW-intro} is through the vectorfield method, which consists in applying the divergence theorem to the current obtained by the contraction of the energy-momentum tensor associated to a wave equation of the form $\square \psi - V \psi=0$, i.e. 
\beaa
\QQ[\psi]_{\mu\nu}&=&  \frac 1 2 \big(\Db_\mu  \psi \c \Db _\nu \ov{\psi}+\Db_\mu  \ov{\psi} \c \Db _\nu \psi\big) -\frac 12 \g_{\mu\nu}(\Db_\la \psi\c\Db^\la \ov{\psi} + V\psi \c \ov{\psi})
\eeaa
for a real potential $V$, with well-chosen vectorfields, such as $\partial_t$ to obtain energy estimates. Here $\Db_\mu$ denoted projected derivatives to the horizontal space since $\pf, \qf$ are tensors which are horizontal, i.e. orthogonal to the null directions. 

Notice that the coupling terms on the right hand sides of \eqref{eq:gRW-intro} appear through a first order angular derivative. In particular, in the derivation of the energy estimates, when the equations are multiplied by $\nabla_{\partial_t}\ov{\pf}$ and $\nabla_{\partial_t}\ov{\qf}$ respectively, the coupling generates terms of the form $ (\ov{\DD} \c  \qf) \c (\nabla_{\partial_t}\ov{\pf})$ and $( \DD \hot  \pf) \c (\nabla_{\partial_t}\ov{\qf})$, for which both of the terms degenerate at trapping. In the case of Reissner-Nordstr\"om, this structure was uncovered in \cite{Giorgi7a}, where it was crucially used that the elliptic operators $\div$ and $\nabla \hot$ are adjoint operators on the sphere, i.e.
\beaa
\int_{S}  \psi_1 \c (\div \psi_2) = - \int_S  (\nabla \hot \psi_1)\c \psi_2 
\eeaa
for any $\psi_1, \psi_2$ 1- and 2-tensors respectively, to obtain the full cancellation of these problematic terms.

On the other hand, in Kerr-Newman the horizontal distribution orthogonal to the principal null frame is not integrable, and so the integration of the horizontal operators $\ov{\DD} \c$ and $\DD \hot $, corresponding to the complexification of  $\div$ and $\nabla \hot$, cannot be restricted to spheres. They instead verify the following \textit{spacetime adjointness property}:
   \bea\label{eq:adjointness-property-intro}
 ( \DD \hot   \psi_1) \c   \ov{\psi_2}  &=&  -\psi_1 \c (\DD \c \ov{\psi_2}) -( (H+\Hb ) \hot \psi_1 )\c \ov{\psi_2} +\D_\a (\psi_1 \c \ov{\psi_2})^\a,
 \eea
 for $\psi_1, \psi_2$ complex-valued 1- and 2-tensors respectively. Here $H, \Hb$ are the complexification of the Ricci coefficients $\eta, \etab$ and $\D$ is the spacetime covariant derivative. Nevertheless, the specific structure of the coupling operators 
\beaa
 C_1[\psi_2]=\frac{\ov{q}^3 }{|q|^5} \left(  \ov{\DD} \c  \psi_2  \right) , \qquad C_2[\psi_1]=\frac{q^3}{|q|^5} \left(  \DD \hot  \psi_1 -\frac 3 2 \left( H - \Hb\right)  \hot \psi_1 \right),
\eeaa
on the right hand sides of \eqref{eq:gRW-intro} still allows for a full cancellation of the problematic terms, upon creation of spacetime divergence terms, treated as boundary terms. In fact, the  coupling operators $ C_1[\psi_2], C_2[\psi_1]$ are \textit{spacetime adjoint operators}, i.e.\footnote{In particular, the angular derivative falling on the function $\frac{q^3}{|q|^5}$ precisely interacts with the lower order term $-\frac 3 2 \left( H - \Hb\right)$ to create the term $ -( (H+\Hb ) \hot \psi_1 )\c \ov{\psi_2} $ needed in the spacetime adjointness property \eqref{eq:adjointness-property-intro}, see the proof of Proposition \ref{prop:general-computation-divergence-P}.} 
\beaa
\psi_1 \c \ov{C_1[\psi_2]}&=& - C_2[\psi_1]  \c \ov{\psi_2} +\D_\a\big(\frac{q^3}{|q|^5}(\psi_1 \c \ov{\psi_2})^\a\big),
\eeaa
 for $\psi_1, \psi_2$ complex-valued 1- and 2-tensors respectively. So, upon integration on the spacetime, the coupling terms in the generalized Regge-Wheeler system \eqref{eq:gRW-intro} cancel out as in the case of Reissner-Nordstr\"om, modulo boundary terms. 
  
 Equivalently, the symmetric structure of $ C_1[\qf], C_2[\pf]$ can be interpreted as allowing for a definition of a \textit{combined energy-momentum tensor} for the system, manifesting itself as an effective decoupling of the equations. More precisely, by defining 
\beaa
\QQ[\pf, \qf]&:=& \QQ[\pf] + 8Q^2 \QQ[\qf]
\eeaa
and the corresponding combined current obtained as a contraction with a vectorfield $X$:
\beaa
\PP_\mu^{(X)}[\pf, \qf]&:=& \QQ[\pf, \qf]_{\mu\nu} X^\nu= \QQ[\pf]_{\mu\nu} X^\nu+ 8Q^2 \QQ[\qf]_{\mu\nu}X^\nu,
\eeaa
we obtain that the terms in the divergence of $\PP_\mu^{(X)}[\pf, \qf]$ due to the coupling operators $C_1[\qf], C_2[\pf]$ present the cancellation of the highest order terms, upon creation of spacetime divergence terms. The combined energy-momentum tensor is crucially used to obtain conditional boundedness of the energy for the system (see Proposition \ref{prop:energy-estimates-conditional}) and conditional Morawetz estimates for the system (see Proposition \ref{proposition:Morawetz1-step1}). These steps are conditional on the control of the lower order terms $L_1$, $L_2$.

\subsubsection{Symmetry operators for the system}

In order to avoid the issue of lack of separability in modes for the generalized Regge-Wheeler system \eqref{eq:gRW-intro} in Kerr-Newman, we obtain the Morawetz estimates for the system by using an extension of the Andersson-Blue's method \cite{And-Mor}. Their method relies on the use of the hidden symmetry of the Carter tensor through its derived second order operator which commutes with the scalar wave equation in Kerr. The Morawetz estimates are then applied to the wave equation upon commutation with four second order differential symmetry operators. 

In \cite{Giorgi8}, we proved that the Carter operator commutes with the scalar wave equation in Kerr-Newman, and so Andersson-Blue's method  \cite{And-Mor} can be extended to that case. Here though the wave operator is a tensorial one and the coupling terms given by the operators $C_1[\qf], C_2[\pf]$ do not commute with the Carter operator. We then have to modify the symmetry operators by considering the pair of tensors
\beaa
\OO( \pf), \qquad \OO (\qf) -3|q|^2\Kh \qf,
\eeaa
where $\OO$ is the second order modified Laplacian obtained from the Carter tensor, see \eqref{eq:def-OO}, and $\Kh$ is the horizontal Gauss curvature, see \eqref{eq:def-Gauss}. The above combination constitutes a pair of tensors which also satisfies a generalized Regge-Wheeler system, while the commutation generates terms of the form $O(|a|r^{-2})\dk^{\leq 2}(\pf+\qf)$, where $\dk$ denotes weighted derivatives (see Lemma \ref{lemma:symmetry-operators-system}). These terms are acceptable in the derivation of the Morawetz estimates, and the choice of functions appeared in Kerr \cite{GKS} also yields positivity of the bulk for the combined energy-momentum tensor for $|a|, |Q| \ll M$ (see Proposition \ref{prop:morawetz-higher-order}).

To complete the estimates for the commuted quantities, we need to obtain commuted energy estimates for the system. In this case, commutation terms of the form $O(|a|r^{-2})\dk^{\leq 2}(\pf+\qf)$ are not acceptable, since in the energy estimates they will be multiplied by $\nab_{\partial_t}\dk^{\leq 2}(\pf+\qf)$ which is not controlled at trapping. We then have to further modify the symmetry operators: we define, see \eqref{eq:def-widetilde-psi1}-\eqref{eq:def-widetilde-psi2}, 
\beaa
\widetilde{\pf}&:=&\OO(\pf)+i\frac{2a(r^2+a^2+|q|^2)\cos\th}{|q|^2}\nab_{\partial_t}\pf +i\frac{2a^2\cos\th}{|q|^2}\nab_{\partial_\phi}\pf-\frac{4Q^2|q|^3}{q^3}\ov{\DD}\c\qf, \\
\widetilde{\qf}&:=&\OO(\qf)-3|q|^2 \Kh \qf+i\frac{4a(r^2+a^2+|q|^2)\cos\th}{|q|^2}\nab_{\partial_t}\qf +i\frac{4a^2\cos\th}{|q|^2}\nab_{\partial_\phi}\qf+\frac 1 2\frac{|q|^3}{\ov{q}^3}\DD \hot \pf.
\eeaa
The added lower order terms completely cancel the commutators (in part due to the tensorial nature of the operator and in part due to the coupling), modulo terms of the form $O(|a|r^{-2})\nabla_{\partial_r}\dk^{\leq 1}(\pf+\qf)$. These terms are acceptable in the energy estimates due to the non-degeneracy of the $\nabla_{\partial_r}$ derivative in the Morawetz bulk. This allows to obtain commuted energy estimates for the system (see Proposition \ref{THM:HIGHERDERIVS-MORAWETZ-CHP3}).

\subsubsection{Transport and elliptic estimates for the lower order terms}

The commuted energy-Morawetz estimates described above are conditional on the control of the right hand sides of the generalized Regge-Wheeler system  \eqref{eq:gRW-intro} $L_1, L_2$. This control is then obtained, for $|a| \ll M$, in two parts.

First, the terms $L_1, L_2$ appearing in the energy norms are shown to be controlled by the norms of $\pf, \qf$ and first derivatives of $\Bfr, \Ffr, A, \Xfr$ (see Proposition \ref{lemma:crucial1}). This is non-trivial since $L_1, L_2$ are lower order terms depending on first and zero-th order derivatives of $\Bfr, \Ffr, A, \Xfr$ and $\pf, \qf$ are themselves at the level of one derivatives of $\Bfr, \Ffr$, for which some of the derivatives are degenerate at trapping. In fact, in the derivation of the energy estimates, when $L_1, L_2$ gets multiplied by $\nabla_{\partial_t}\ov{\pf}, \nabla_{\partial_t}\ov{\qf}$, since $\nabla_{\partial_t}$ is degenerate at trapping, one wants to integrate by parts in $\partial_t$ and absorb terms of the form $\ov{\pf} \c \nabla_{\partial_t} L_1$, $\ov{\qf} \c \nabla_{\partial_t} L_2$, which contain though two derivatives of $\Bfr, \Ffr$. 
For this reason, the precise structure of the first order term is crucially important. More precisely, in Theorem \ref{main-theorem-RW} we prove, see \eqref{eq:definition-L1}-\eqref{eq:definition-L2}, 
 \beaa
L_1&=& -q^{1/2} \ov{q}^{9/2} \Big(\frac{2a^2\De}{r^2|q|^4} \nab_T\Bfr+\frac{2a\De }{r^2|q|^4} \nab_Z\Bfr\Big) +8Q^2 q^{-3/2} \ov{q}^{5/2} \ov{L}_{coupl} (\ov{\DDc}\c\mathfrak{F})+ O(|a|)(\Bfr, \Ffr, \Xfr) \nn \\
L_2&=& -q\ov{q}^2 \Big(\frac{8a^2\De}{r^2|q|^4} \nab_T\Ffr+\frac{8a\De }{r^2|q|^4} \nab_Z\Ffr\Big)+ q\ov{q}^2 L_{coupl} \DDc\hot \Bfr+ O(|a|)(\Bfr, \Ffr, \Xfr, A).
\eeaa
In the context of the energy estimates, the first two terms above are reduced to boundary terms using the definition of $\pf, \qf$. The coupling terms involving the scalar function $L_{coupl}$ get cancelled and effectively decoupled thanks to the combined energy-momentum tensor.

Secondly, the norms of the first derivatives of $\Bfr, \Ffr, A, \Xfr$  are shown to be controlled by the norms of $\pf, \qf$ and their initial energy, see Proposition \ref{lemma:crucial2}. This makes use once again of the Chandrasekhar transformation, i.e. the definition of $\pf, \qf$ as $\nab_3$ derivatives of $\Bfr, \Ffr$ respectively. Using transport equations in the $\nabla_3$ direction, one obtain controls of $\nabla_3$ and $\nabla_4$ derivatives of $\Bfr, \Ffr$. In order to add the control of the angular derivatives, we use the Teukolsky equations in \eqref{eq:Teuk-intro} and elliptic estimates. Similarly, the control on the derivatives of $A, \Xfr$ is obtained through the transport identities that they satisfy together with $\Bfr, \Ffr$. 

By putting all the above together, we obtain the proof of Theorem \ref{theorem:unconditional-result-final}.

\subsection{Outline of the paper}\label{sec:outline}

We end this introduction with an outline of the paper.

We begin in Section \ref{sec:KN} by recalling the main properties of the Kerr-Newman metric and presenting the Teukolsky and generalized Regge-Wheeler system partly obtained in \cite{Giorgi7}. We refine the results in \cite{Giorgi7} in Theorem \ref{main-theorem-RW}, whose proof appears in Appendix \ref{proof-thm-derivation-equations}.

In Section \ref{sec:energies} we define various energy quantities which allow us in particular to formulate our main result, stated as Theorem \ref{theorem:unconditional-result-final}. In Section \ref{sec:logic-proof} we present the logic of the proof and the main intermediate steps  of the proof of Theorem \ref{theorem:unconditional-result-final}, such as conditional boundedness of the energy (Proposition \ref{prop:energy-estimates-conditional}), conditional Morawetz estimates (Proposition \ref{proposition:Morawetz1-step1}), commuted Morawetz estimates (Proposition  \ref{prop:morawetz-higher-order}), commuted energy estimates (Proposition \ref{THM:HIGHERDERIVS-MORAWETZ-CHP3}) and the estimates for the lower order terms (Proposition \ref{lemma:crucial1} and Proposition \ref{lemma:crucial2}). 

In Section \ref{sec:preliminaries}, we collect important preliminaries for the following proofs: we introduce the combined energy-momentum tensor for a model system in Section \ref{sec:model-system} and the symmetry operators for the system in Section \ref{sec:proof-symmetry-operators}. We also extend the combined energy-momentum tensor to the commuted case in Section \ref{sec:generalized-current}, and we collect important elliptic identities and estimates in Section \ref{section:elliptic-estimates}.

In Section \ref{section:energy-estimates} we obtain the conditional boundedness of the energy for the model system, proving Proposition \ref{prop:energy-estimates-conditional}. This conditional statement holds for $|Q|<M$, $|a| \ll M$. 

In Section \ref{sec:conditional-mor-est}, we obtain the conditional Morawetz estimates for the model system, proving Proposition \ref{proposition:Morawetz1-step1}. Here, the choice of functions are similar to the ones appeared in \cite{GKS} and this conditional statement holds for $|Q|, |a| \ll M$.

In Section \ref{sec:commuted-Morawetz}, we obtain the commuted Morawetz estimates for the model system through the symmetry operators, proving Proposition \ref{prop:morawetz-higher-order}. Here, the choice of functions are also similar to the ones appeared in \cite{GKS} and this conditional statement holds for $|Q|, |a| \ll M$.

In Section \ref{sec:commuted-energy-estimates}, we obtain the commuted energy estimates for the model system through the modified symmetry operators, proving Proposition \ref{THM:HIGHERDERIVS-MORAWETZ-CHP3}. 

In Section \ref{sec:estimates-gRW}, we use the previous intermediate results for the model system to complete the proof the Theorem \ref{theorem:unconditional-result-final} by obtaining the estimates for the lower order terms of the generalized Regge-Wheeler system, proving Proposition \ref{lemma:crucial1} and Proposition \ref{lemma:crucial2}. 

Finally, in Appendix  \ref{proof-thm-derivation-equations} we postponed the derivation of the generalized Regge-Wheeler system, i.e. the proof of Theorem \ref{main-theorem-RW}, and in Appendix \ref{sec:appendix-commutators} we collect needed statements about commutators and energy currents.

\bigskip

\noindent\textbf{Acknowledgements.} The author acknowledges the support of NSF Grants No. DMS-2128386 and No. DMS-2306143 and of a grant of the Simons Foundation (825870, EG).

\bigskip

\section{The Teukolsky and generalized Regge-Wheeler system on Kerr-Newman exterior spacetimes}\label{sec:KN}

We recall in this section the Teukolsky and generalized Regge-Wheeler system on Kerr-Newman spacetimes. 
We begin in Section \ref{sec:KN-metric} with a review of the Kerr-Newman metric. We then recall the main properties of electromagnetic-gravitational perturbations of Kerr-Newman spacetime in Section \ref{sec:perturbations} and present the associated Teukolsky and generalized Regge-Wheeler system in Section \ref{sec:T-system} and  \ref{sec:gRW-system}.

\subsection{The Kerr-Newman metric}\label{sec:KN-metric}

We review here the Kerr-Newman metric and associated properties.

For each $a$, $Q$, $M$ satisfying $a^2+Q^2 <M^2 $, the Kerr-Newman metric in Boyer-Lindquist coordinates $(t, r,  \th, \phi)$ takes the form
\bea\label{metric-KN}
g_{a, Q, M}=-\frac{\Delta}{|q|^2}\left( dt- a \sin^2\th d\phi\right)^2+\frac{|q|^2}{\Delta}dr^2+|q|^2 d\th^2+\frac{\sin^2\th}{|q|^2}\left(a dt-(r^2+a^2) d\phi \right)^2,
\eea
where
\beaa
q=r+ i a \cos\th \label{definition-q}, \quad |q|^2=r^2+a^2\cos^2\th, \quad \Delta = (r-r_{+}) (r-r_{-}), \quad r_{\pm}=M\pm \sqrt{M^2-a^2-Q^2}.
\eeaa

We recall that the ambient manifold with boundary $\MM$ is diffeomorphic to $\mathbb{R}^{+} \times \mathbb{R} \times \mathbb{S}^2$, and it admits coordinates $(r, t^{*}, \th, \phi^*)$ known as Kerr star coordinates. When expressed in these coordinates, the metric \eqref{metric-KN} extends smoothly to the event horizon $\mathcal{H}^+$ defined as the boundary $\partial \MM=\{ r=r_{+}\}$.

The coordinate vectorfields $T=\partial_{t^*}$ and $Z=\partial_{\phi^*}$ coincide with the coordinate vectorfields $\partial_t$ and $\partial_\phi$ in Boyer-Lindquist coordinates, which are Killing for the metric \eqref{metric-KN}. The stationary Killing vectorfield $T=\partial_t$ is asymptotically timelike as $r \to \infty$, and spacelike close to the horizon, in the so-called ergoregion $\{ \Delta - a^2\sin^2\th <0\}$. The vectorfield 
 \bea
\lab{define:That}
\That:&=&T +\frac{a}{r^2+a^2} Z
\eea
is timelike for $\{r>r_+\}$ and null   on the horizon $\{r=r_+\}$ and the Killing field 
\beaa
\That_\HH:=T+\om_\HH Z, \qquad \quad \om_\HH:=\frac{a}{r_{+}^2+a^2}=\frac{a}{2Mr_{+}-Q^2},
\eeaa
  is null on the horizon $\mathcal{H}^+$ and timelike in a neighborhood of it in the exterior. 
We also define the radial vectorfield which is regular at the horizon as $\Rhat=\frac{\Delta}{r^2+a^2}\partial_r$.

The vectorfields 
\bea
\lab{eq:Out.PGdirections-Kerr}
e_4=\frac{r^2+a^2}{\Delta}\pr_t+\pr_r+\frac{a}{\Delta}\pr_\phi, \qquad e_3=\frac{r^2+a^2}{|q|^2}\pr_t-\frac{\Delta}{|q|^2}\pr_r+\frac{a}{|q|^2}\pr_\phi
\eea
define principal null directions, with the normalization $\g(e_3, e_4)=-2$. The vectorfield $L=\frac{\De}{r^2+a^2} e_4$ extends smoothly to $\HH^{+}$ to be parallel to the null generator, while $\Lb=\frac{|q|^2}{r^2+a^2} e_3$ extends smoothly to $\HH^+$ so as to vanish identically and   $\Delta^{-1} \Lb$ has a smooth non-trivial limit on $\HH^{+}$. 

In the case of Kerr-Newman spacetime, the principal null frame \eqref{eq:Out.PGdirections-Kerr} is not integrable, as its orthogonal vector space does not span a sphere, but rather a 2-plane distribution. We refer to it as a horizontal structure. 
The null frame can be completed by adding two orthonormal vectorfield $e_1$, $e_2$ orthogonal to $e_3, e_4$, for instance $e_1=\frac{1}{|q|}\pr_\th$, $e_2=\frac{a\sin\th}{|q|}\pr_t+\frac{1}{|q|\sin\th}\pr_\phi$.

The inverse of the metric can be written as
\bea\label{inverse-metric-Kerr}
|q|^2 g_{a, Q, M}^{\a\b}&=& \Delta \partial_r^\a \partial_r^\b+\frac{1}{\Delta} \RR^{\a\b}
\eea
where
\bea
\RR^{\a\b}&=&  -(r^2+a^2)^2 \partial_t^\a \partial_t^\b-2a(r^2+a^2)\partial_t^{(\a} \partial_\phi^{\b)}-a^2  \partial_\phi^\a \partial_\phi^\b+ \Delta O^{\a\b}, \label{definition-RR-tensor}\\
 O^{\a\b}&=& \partial_\th^\a  \partial_\th^\b  +\frac{1}{\sin^2\th} \partial_{\vphi}^\a \partial_{\vphi}^\b+2a\partial_t^{(\a} \partial_\vphi^{\b)}+a^2 \sin^2\th \partial_t^\a \partial_t^\b. \nn
\eea
Observe that $O^{\a\b}= |q|^2 ( e_1^\a e_1^\b + e_2^\a e_2^\b)$ for the above $e_1$, $e_2$ and therefore by denoting $\nabla$ the projection to the horizontal structure of the covariant derivative, we have $ O^{\a\b} \nab_\a\psi\c\nab_\b\psi =|q|^2\big(|\nab_1\psi|^2+|\nab_2\psi|^2 \big) =: |q|^2 |\nab\psi|^2$.

For all values $\tau \in \mathbb{R}$, the hypersurfaces $\widetilde{\Sigma}_\tau=\{ t^*=\tau\}$ are spacelike and we denote the unit future normal of $\widetilde{\Sigma}_\tau$ by $\widetilde{n}_{\Sigma_\tau}$. We will consider hypersurfaces $\Sigma_\tau$ which connect the event horizon and null infinity, which are smooth and spacelike everywhere and which become asymptotically null near infinity. For an explicit construction of a time function $\widetilde{t}^*$ that realizes $\Sigma_\tau$ as its level sets, see \cite{TeukolskyDHR}. We denote $n_{\Sigma_\tau}$ the unit normal vector to $\Sigma_\tau$. The unit normal vector $n_{\Sigma_\tau}$ can be chosen so that $g(n_{\Sigma_\tau}, e_a)=O(|a|r^{-1})$.
We use the notation $\MM(0, \tau)=\{ 0 \leq \widetilde{t}^* \leq \tau\}$, $\mathcal{H}^+(0, \tau)=\mathcal{H}^+ \cap \{ 0 \leq \widetilde{t}^* \leq \tau\}$ and $\mathscr{I}^+(0, \tau)=\mathscr{I}^+ \cap \{ 0 \leq \widetilde{t}^* \leq \tau\}$, where $\mathscr{I}^+$ denotes future null infinity. We denote $n_{\HH^+}$ and $n_{\mathscr{I}^+}$ the unit normal vector to $\HH^+$ and $\mathscr{I}^+$ respectively.

In the present paper, we will restrict to the very slowly rotating case, which allows crucial simplifications in the analysis. For example, the trapped null geodesics in Kerr-Newman with zero angular momentum are concentrated on the hypersurface \cite{Giorgi8}
\bea\label{eq:definition-TT}
\TT:= r^3-3Mr^2 + ( a^2+2Q^2)r+Ma^2=0,
\eea 
 and therefore in the very slowly rotating case the trapping region is localized near $r_{trap}=\frac{3M + \sqrt{9M^2-8Q^2}}{2}$, known as the photon sphere of Reissner-Nordstr\"om, which for $|Q| \ll M$ is localized near $r=3M$. Fix parameters $A_1 < 3M < A_2$, for sufficiently small $|a| , |Q| < \lambda \ll M$, the future trapped null geodesics all asymptote an $r$ value contained in $[A_1, A_2]$. We denote 
\beaa
\MM_{trap}(0, \tau)=\MM(0, \tau) \cap \{ r \in [A_1, A_2]\}
\eeaa
 and $\MM_{\ntrap}(0, \tau)$ the complement of $\MM_{trap}$ in $\MM(0, \tau)$. Similarly, the ergoregion is localized near $r=M+\sqrt{M^2-Q^2}$, and we can choose $|a|$ small enough so that the ergoregion is contained in $r < \frac 1 2 A_1$.
Finally, let $R \gg 4M$ be a fixed large value of $r$, and we denote $\MM_{r \geq R}(0, \tau)=\MM(0, \tau) \cap \{ r \geq R \}$. We also fix a cut-off function $\chi_{nt}(r)$  which is equal to $1$ for $r \geq R$ and equal to $0$ on $\MM_{trap}$.

Fixing a cut-off function $\chi(r)$ which is equal to $1$ for $r \leq \frac 1 2 A_1$ and $0$ for $r \geq \frac 3 4 A_1$, we define the vectorfield 
\bea\label{eq:definition-That-chi}
\That_\chi:= T + \frac{a}{r^2+a^2} \chi Z.
\eea
 We note that by our construction, this vectorfield is now timelike for all $r>r_{+}$, and equal to $T$ on $r \geq A_1$, in particular on the trapping region.

\subsection{Electromagnetic-Gravitational perturbations of Kerr-Newman}\label{sec:perturbations}

The Teukolsky and generalized Regge-Wheeler system that we will analyze arise as consequences of the full system of linear electromagnetic and gravitational perturbations of Kerr-Newman spacetime as solutions to the Einstein-Maxwell equations. We recall here the main notations for the geometrical quantities associated to such perturbations.

The Ricci coefficients associated to a null pair $(e_3, e_4)$ are defined as
  \beaa
\chib_{ab}=\g(\D_ae_3, e_b),\quad \chi_{ab}=\g(\D_ae_4, e_b), \quad \xib_a=\frac 1 2\g(\D_3 e_3, e_a),\quad \xi_a=\frac 1 2 \g(\D_4 e_4, e_a),\\
\omb=\frac 1 4 \g(\D_3e_3, e_4),\quad  \om=\frac 1 4\g(\D_4 e_4, e_3),\quad \etab_a=\frac 1 2\g(\D_4e_3, e_a), \quad \eta_a=\frac 1 2 \g(\D_3 e_4, e_a),
\eeaa
where $\D$ is the covariant derivative and $\D_a=\D_{e_a}, a=1,2$. 

Observe that in the case of Kerr and Kerr-Newman spacetime, the 2-tensors $\chi_{ab}$ and $\chib_{ab}$ associated to the principal null frame are not symmetric, as a consequence of the fact that the space which is orthogonal to the principal null frame is not integrable 
\cite{GKS}.
Following \cite{GKS}, we  introduce the  notations
\beaa
\trch:=\de^{ab}\chi_{ab}, \quad  \trchb:=\de^{ab} \chib_{ab}, \quad \atr\chi:=\in^{ab} \chi_{ab}, \qquad \atr\chib:=\in^{ab} \chi_{ab},
\eeaa
where latin indices $a,b$ denote summation over $a,b=1,2$.

The electromagnetic components associated to $(L, \Lb)$ are defined as
\beaa
\bF_a=\F(e_a, e_4), \quad \bbF_a=\F(e_a, e_3), \quad \rhoF=\frac 1 2 \F(e_3 , e_4),\qquad \dual\rhoF=
\frac 1 2   \dual \F (e_3 , e_4)
\eeaa
where $\F$ is the electromagnetic tensor and $\dual$ denotes the Hodge dual with respect to the horizontal structure. The curvature components of the Weyl curvature $\W$ associated to $(L, \Lb)$ are 
\beaa
\a_{ab}&=&\W(e_4,e_a,e_4,e_b), \qquad \aa_{ab}=\W(e_3,e_a,e_3,e_b),\\
\b_a&=&\frac 1 2 \W(e_a,e_4,e_3,e_4), \qquad \bb_a=\frac 1 2 \W(e_a, e_3, e_3, e_4)\\
 \rho&=&\frac 1 4 \W(e_3,e_4,e_3,e_4),\qquad \dual\rho=
\frac 1 4  \dual \W(e_3,e_4,e_3,e_4).
\eeaa

Following \cite{GKS}, we denote by $\sk_k(\CCC)$ the set of complex horizontal anti-self dual $k$-tensors on $\MM$ and we extend the above definitions as follows:
\beaa
 X=\chi+i\dual\chi, \quad \Xb=\chib+i\dual\chib, \quad H=\eta+i\dual \eta, \quad \Hb=\etab+i\dual \etab, \quad \Xi=\xi+i\dual\xi, \quad \Xib=\xib+i\dual\xib\\
 \BF= \bF + i \dual \bF,  \quad  \BBF= \bbF + i \dual \bbF, \qquad \PF=\rhoF + i \dual \rhoF \\
A=\a+i\dual\a,  \quad \Ab=\aa+i\dual\aa, \qquad  B=\b+i\dual\b, \quad \Bb=\bb+i\dual\bb,\qquad  P=\rho+i\dual\rho. 
\eeaa
We recall the differential operators for a horizontal 1-tensor $f$ and a horizontal 2-tensor $u$:
\beaa
\div f =\de^{ab}\nab_b f_a,\quad (\nab\hot f)_{ba}=\frac 1 2 \big(\nab_b f_a+\nab_a  f_b-\de_{ab}( \div f)\big), \quad (\div u)_a= \de^{bc} \nab_b u_{ca}.
\eeaa
We also define the complexified version of the $\nab_a$ horizontal derivative as $\DD_a= \nab_a + i \dual \nab_a$.
More precisely, For $F= f+ i \dual f \in\sk_1 (\mathbb{C})$ and $U= u + i \dual u \in \sk_2 (\mathbb{C})$
\bea\label{eq:def-DD}
\begin{split}
\DD\hot(f+i\dual f) &:= (\nabla+i\dual\nabla)\hot(f+i\dual f)=2\big(\nab \hot f + i \dual (\nab \hot f)\big)\\
\DDov (u+i\dual u) &:= (\nabla- i\dual\nabla) (u+i\dual u)= 2\big( \div u + i \dual(\div u) \big).
\end{split}
\eea

The operators above satisfy the following adjointness relation.

 \begin{lemma}[Lemma 2.11 in \cite{Giorgi7}]\label{lemma:adjoint-operators}
 For $F=f+i\dual f \in \sk_1(\CCC)$ and $U=u+i\dual u \in \sk_2(\CCC)$, we have
   \bea
 ( \DD \hot   F) \c   \ov{U}  &=&  -F \c (\DD \c \ov{U}) -( (H+\Hb ) \hot F )\c \ov{U} +\D_\a (F \c \ov{U})^\a.
 \eea
 \end{lemma}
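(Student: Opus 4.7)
The plan is to derive the identity as an integration-by-parts computation of the spacetime divergence $\D_\a (F \c \ov{U})^\a$, where the complex horizontal 1-tensor $V := F \c \ov{U}$ has components $V^a = F^b \ov{U}_b{}^a$. The identity should be read as the analogue in the non-integrable horizontal structure of Kerr-Newman of the sphere-adjointness $\int_S (\nab \hot f) \c u = -\int_S f \c \div u$ that held in Reissner-Nordstr\"om, with the discrepancy measured by the correction $((H + \Hb) \hot F) \c \ov{U}$ coming from the non-integrability of the principal null frame.

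The first ingredient is the reduction of the spacetime divergence of a horizontal vector to a horizontal divergence plus a connection correction. Expanding $\D_\a V^\a$ in the null frame $(e_3, e_4, e_1, e_2)$, using horizontality $V_3 = V_4 = 0$, together with the connection coefficients $\g(\D_3 e_4, e_a) = 2\eta_a$ and $\g(\D_4 e_3, e_a) = 2\etab_a$, one obtains
$$\D_\a V^\a = \nab_a V^a + (\eta + \etab) \c V.$$
The second ingredient is the complexification of the standard identity $\nab_a(\alpha^b \gamma_b{}^a) = (\nab \hot \alpha) \c \gamma + \alpha \c \div \gamma$, valid for any real 1-form $\alpha$ and symmetric-traceless 2-tensor $\gamma$ (since $\gamma$ annihilates the antisymmetric part and the trace of $\nab \alpha$). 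Applying this identity to the four real contractions produced by Leibniz expansion of $\nab_a(F^b \ov{U}_b{}^a)$ after decomposing $F = f + i \dual f$ and $\ov{U} = u - i \dual u$, and reassembling via the duality relations $\dual \alpha \c \gamma = -\alpha \c \dual \gamma$, $\dual \alpha \c \dual \gamma = \alpha \c \gamma$, $\nab \hot (\dual f) = \dual(\nab \hot f)$ and $\div(\dual u) = \dual(\div u)$ (all valid for symmetric-traceless tensors), reconstructs precisely the complex combination $(\DD \hot F) \c \ov{U} + F \c (\DD \c \ov{U})$ in accordance with the normalizations $\DD \hot F = 2(\nab \hot f + i \dual(\nab \hot f))$ and $\DD \c \ov{U} = 2(\div u - i \dual(\div u))$. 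The remaining correction term $(\eta + \etab) \c V$ from the first step is then matched with $((H + \Hb) \hot F) \c \ov{U}$ by the same duality algebra, starting from the identity $(e \hot f) \c u = 2 e \c (f \c u)$ for real 1-form $e$ and symmetric-traceless $u$.

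The main technical obstacle is the careful bookkeeping of the factors of $2$ and $i$ arising from the explicit normalizations of $\DD \hot$ and $\DDov$ and from the complex pairings (which effectively double their real counterparts, e.g. $F \c \ov{F} = 2|f|^2$). Since both the divergence term and the $(H + \Hb) \hot F$ correction involve several such factors coming from the anti-self-dual versus self-dual pairings, the matching of constants requires care; but once this is performed, the identity follows directly from combining the two ingredients above.
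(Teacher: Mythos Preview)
The paper does not prove this lemma here; it is simply quoted from the earlier work \cite{Giorgi7}, so there is no in-text argument to compare against. Your approach is the correct and standard one: expand the spacetime divergence of the horizontal vector $V^a=F^b\ov{U}_b{}^a$ using the null-frame identity $\D_\a V^\a=\nab_a V^a+(\eta+\etab)\c V$, apply the real Leibniz identity $\nab_a(\alpha^b\gamma_b{}^a)=(\nab\hot\alpha)\c\gamma+\alpha\c\div\gamma$ to the four real pieces of $F\c\ov{U}$, and reassemble the complex operators. This is exactly how such adjointness relations are established in the non-integrable horizontal formalism, and your interpretation of the $(H+\Hb)$ correction as the obstruction to sphere-adjointness is on point.

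One small correction on the bookkeeping you flagged: the real identity is $(e\hot f)\c u = e\c(f\c u)$ without the factor of $2$ you wrote; the doubling enters only when you pair anti-self-dual with self-dual complex tensors (e.g.\ $F\c\ov{U}=2[(f\c u)-i(f\c\dual u)]$), and these factors of $2$ from the complex pairings must be tracked consistently against the factors of $2$ built into the definitions $\DD\hot F=2(\nab\hot f+i\dual(\nab\hot f))$ and $\DD\c\ov{U}=2(\div u-i\dual\div u)$. Once this is done uniformly the identity closes.
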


We will denote the real and imaginary part by $\Re$ and $\Im$ respectively.

\subsection{The Teukolsky system}\label{sec:T-system}

In \cite{Giorgi7}, we defined the following set of gauge-invariant quantities associated to a linear electromagnetic-gravitational perturbation of the Kerr-Newman spacetime:
\beaa
A, \quad \Ab, \qquad \Ffr, \quad  \mathfrak{\underline{F}}, \qquad \mathfrak{B}, \quad \underline{\mathfrak{B}}, \qquad \mathfrak{X}, \quad \underline{\mathfrak{X}}
\eeaa
where
\bea\label{eq:definitions-Ffr-Bfr-Xfr}
\begin{split}
 \mathfrak{F}&=  -\frac 1 2 \DDc \hot \BF-\frac 3 2 H \hot \BF +\PF\Xh, \qquad \mathfrak{\underline{F}}=  -\frac 1 2 \DDc \hot \BBF-\frac 3 2 \Hb \hot \BBF -\ov{\PF}\Xbh \\
 \mathfrak{B}&= 2\PF B - 3\ov{P} \BF, \qquad  \underline{\mathfrak{B}}= 2\ov{\PF} \Bb - 3 P \BBF \\
 \mathfrak{X}&=\nabc_4\BF+\frac 3 2 \ov{\tr X} \BF-2\PF \Xi, \qquad  \underline{\mathfrak{X}}=\nabc_3\BBF+\frac 3 2 \ov{\tr \Xb} \BBF+2\ov{\PF} \Xib.
 \end{split}
  \eea
where $\DDc$ and $\nabc$ denote conformal invariant versions of the covariant derivatives,  see \cite{Giorgi7}\cite{GKS} for more details.

In \cite{Giorgi7}, Proposition 5.7,  we derived the following transport equations satisfied\footnote{An equivalent system of transport equations is satisfied by $\Ab, \underline{\Ffr}, \underline{\Bfr}, \underline{\Xfr}$ obtained by inverting $e_4$ and $e_3$.} by the $A, \Ffr, \Bfr, \Xfr$:
\bea
 \PF \left(\nabc_3A+\frac{1}{2} \tr\Xb A \right)&=& \frac 1 2 \DDc \hot \mathfrak{B}+  3H  \hot  \mathfrak{B} -\left(3 \ov{P}+2\PF\ov{\PF}\right)\mathfrak{F}\label{relation-F-A-B}\\
\nabc_4 \mathfrak{F}+\left(\frac 3 2 \ov{\tr X} +\frac 1 2  \tr X\right)\mathfrak{F}&=&-\frac 1 2 \DDc \hot \mathfrak{X} - \frac 1 2 \left( 3    H+ \Hb \right) \hot \mathfrak{X}-\PF A   \label{relation-F-B-A}\\
\nabc_4\mathfrak{B}+3\ov{\tr X} \mathfrak{B}&=&\PF \left(\ov{\DDc}\c A  +  \ov{ \Hb} \c A \right)- \left( 3\ov{P}-2 \PF \ov{\PF} \right)\mathfrak{X}\label{nabb-4-mathfrak-B}\\
\nabc_3 \mathfrak{X} +\frac 1 2 \ov{\tr \Xb} \ \mathfrak{X}&=&-\ov{\DDc} \c \mathfrak{F}  -\ov{H} \c \mathfrak{F}-2\mathfrak{B} \label{nabc-3-mathfrak-X}
\eea

Finally, in \cite{Giorgi7}, Theorem 6.1, we derived the following Teukolsky system satisfied\footnote{In Theorem 6.1 of \cite{Giorgi7} we also derived the Teukolsky equation satisfied by $A$, but since we will not make use of it in the derivation of the decay estimates we do not recall it here. As before, an equivalent system is satisfied by $\underline{\Ffr}$, $\underline{\Bfr}$.} by $\Ffr, \Bfr$:
\bea\label{final-system-schematic}
\TT_1(\mathfrak{B})&=&\M_1[\mathfrak{F}, \mathfrak{X}] \label{Teukolsky-B}\\
\TT_2(\mathfrak{F})&=&\M_2[ \mathfrak{B}, A, \mathfrak{X}] \label{Teukolsky-F}
\eea
where the operators $\TT_1$ and $\TT_2$ are given by
  \bea
 \TT_1(\Bfr)&:=&  - \nabc_3\nabc_4\Bfr+ \frac 1 2 \ov{\DDc}\c (\DDc \hot \Bfr)-3\ov{\tr X} \nabc_3\Bfr- \left(\frac{3}{2}\tr\Xb +\frac 1 2\ov{\tr\Xb}\right)\nabc_4\Bfr \nn\\
   &&+\left( 6H+ \ov{H}+ 3  \ov{ \Hb}  \right)\c  \nabc  \Bfr+\left(-\frac{9}{2}\tr\Xb \ov{\tr X} -4 \PF \ov{\PF}+9 \ov{ \Hb} \c  H\right)\Bfr \label{operator-Teukolsky-B} \\
\TT_2(\Ffr)&:=& -\nabc_3\nabc_4 \Ffr+ \frac 1 2 \DDc \hot (\ov{\DDc} \c \Ffr)-\left(\frac 3 2 \ov{\tr X} +\frac 1 2  \tr X\right)\nabc_3\Ffr \nn\\
&&- \frac 1 2 \left(\tr \Xb+\ov{\tr \Xb}  \right)\nabc_4\Ffr  + \left(4   H+ \ov{H}+  \Hb \right)\c \nabc \Ffr \nn\\
&&+\left(- \frac 3 4 \tr \Xb  \ov{\tr X}- \frac 1 4\ov{\tr \Xb}   \tr X-P +2\PF\ov{\PF} - \frac 3 2\ov{\DDc}\c H\right)\Ffr+\frac 1 2 \Hb \hot ( \ov{H} \c \Ffr) \label{operator-Teukolsky-F}
\eea
and the right hand sides are given by
\bea
\M_1[\mathfrak{F}, \mathfrak{X}]&:=& 2\PF\ov{\PF}\left[2\ov{\DDc}\c\mathfrak{F}+4\ov{\Hb}\c\mathfrak{F}- \left(2\tr \Xb -  \ov{\tr \Xb}\right) \ \mathfrak{X} \right]  \label{definition-MM1}\\
\M_2[A, \mathfrak{X}, \mathfrak{B}]&:=& - \frac 1 2 \DDc \hot \mathfrak{B}-\left( H+ \Hb \right) \hot \mathfrak{B}  -  \frac 1 2\PF \left(2\tr \Xb-\ov{\tr \Xb}  \right) A   + \frac 3 2  \nabc_3 H \hot \mathfrak{X} .\label{definition-MM2}
\eea

Standard theory yields that the Teukolsky system in the form \eqref{Teukolsky-B}-\eqref{Teukolsky-F} coupled with the transport equations \ref{relation-F-A-B}--\eqref{nabc-3-mathfrak-X} is well-posed in $\MM(0, \tau)$ with initial data $(\Bfr_0, \Bfr_1, \Ffr_0, \Ffr_1, A_0, \Xfr_0)$ defined on $\Sigma_0$ in $H^j_{loc}(\Sigma_0) \times H^{j-1}_{loc}(\Sigma_0) \times H^j_{loc}(\Sigma_0) \times H^{j-1}_{loc}(\Sigma_0)\times H^j_{loc}(\Sigma_0)\times H^j_{loc}(\Sigma_0)$ for $j\geq1$, with $(\Bfr, \Ffr, A, \Xfr)|_{\Sigma_0}=(\Bfr_0, \Ffr_0, A_0, \Xfr_0)$, and $(n_{\Sigma_0} \Bfr, n_{\Sigma_0} \Ffr)|_{\Sigma_0}=(\Bfr_1, \Ffr_1)$.

\subsection{The generalized Regge-Wheeler system}\label{sec:gRW-system}

Finally, we state here the generalized Regge-Wheeler (gRW) system that we will analyze here. The gRW system is connected to the Teukolsky system through the so-called Chandrasekhar transformation.
The following theorem was partly derived in \cite{Giorgi7}, as Theorem 7.3.

\begin{theorem}\label{main-theorem-RW} Consider a linear electromagnetic-gravitational perturbation of Kerr-Newman spacetime with mass $M$, charge $Q$ and rotation $a$, and its associated complex gauge invariant quantities $A, \Ffr, \Bfr, \Xfr$.
Then the complex gauge-invariant quantities\footnote{An equivalent system is satisfied by $\underline{\pf}$, $\underline{\qf}$, obtained by inverting $e_4$ and $e_3$.} $\pf \in \sk_1(\CCC)$ and $\qf\in \sk_2(\CCC)$, defined as
\bea
\mathfrak{p}&=& q^{\frac 1 2 } \ov{q}^{\frac 9 2 } \Big(\nabc_3 \Bfr + \big(2\trchb-\frac 1 2  \frac {(\atrchb)^2}{ \trchb} -\frac 5 2  i  \atrchb \big)  \Bfr \Big), \label{eq:definition-pf}\\
 \qf&=& q \ov{q}^2\Big( \nabc_3 \Ffr + \big(\trchb-2 \frac {(\atrchb)^2}{ \trchb} -3 i \atrchb \big) \Ffr \Big), \label{eq:definition-qfF}
\eea
 satisfy
 the following coupled system of wave equations:
\bea
 \squared_1\pf-i  \frac{2a\cos\th}{|q|^2}\nab_T \pf  -V_{1,f}  \pf &=&4Q^2 \frac{\ov{q}^3 }{|q|^5} \left(  \ov{\DD} \c  \qf \right) + L_1 \label{final-eq-1}\\
\squared_2\qf -i  \frac{4a\cos\th}{|q|^2}\nab_T \qf -V_{2,f}  \qf &=&-   \frac 1 2\frac{q^3}{|q|^5} \left(  \DD \hot  \pf  -\frac 3 2 \left( H - \Hb\right)  \hot \pf \right) + L_2\label{final-eq-2}
 \eea
where 
\begin{itemize}
\item $\squared_1$ and $\squared_2$ denote the wave operator for horizontal 1- and 2-tensors respectively,
\item $\nab_T$ denote the horizontal covariant derivative with respect to $T=\partial_t$,
\item $V_{1,f}$ and $V_{2,f}$ are real positive potentials explicitly given by
 \beaa
 V_{1,f}=\frac{1}{r^2}\big(1-\frac{2M}{r}+\frac{6Q^2}{r^2} \big)+O(a^2r^{-4}), \qquad V_{2,f}=\frac{4}{r^2}\big(1-\frac{2M}{r}+\frac{3Q^2}{2r^2} \big)+O(a^2r^{-4}),
 \eeaa
 \item $L_1$ and $L_2$ denote lower order terms depending on $\Bfr$, $\Ffr$, $A$, $\Xfr$, explicitly given by 
 \bea
L_1&=& -q^{1/2} \ov{q}^{9/2} \Big(\frac{2a^2\De}{r^2|q|^4} \nab_T\Bfr+\frac{2a\De }{r^2|q|^4} \nab_Z\Bfr\Big) +8Q^2 q^{-3/2} \ov{q}^{5/2} \ov{L}_{coupl} (\ov{\DDc}\c\mathfrak{F}) \nn \\
&&+ O(|a|r) \Bfr +  O(|a|Q^2r^{-2})( \mathfrak{F}, \ \mathfrak{X}),  \label{eq:definition-L1} \\
L_2&=& -q\ov{q}^2 \Big(\frac{8a^2\De}{r^2|q|^4} \nab_T\Ffr+\frac{8a\De }{r^2|q|^4} \nab_Z\Ffr\Big)- Q q L_{A} A + q\ov{q}^2 L_{coupl} \DDc\hot \Bfr  \nn  \\
&&+O(|a|)  \mathfrak{B} +  O(|a|r^{-1}) ( \mathfrak{F} , \ \mathfrak{X}  ) \label{eq:definition-L2}
\eea
 where $L_{coupl}$ and $L_A$ are the complex functions
 \beaa
 L_{coupl}&=& \frac 3 4  \frac {(\atrchb)^2}{ \trchb}+  \frac 3 4  i  \atrchb, \qquad L_{A}= -3 (\atrchb)^2  +  3  i  \frac {(\atrchb)^3}{ \trchb}. 
 \eeaa
 \end{itemize}
\end{theorem}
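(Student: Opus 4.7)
The plan is to perform a Chandrasekhar-type commutation: apply the weighted derivative operators $f_1(\nabc_3+C_1)$ and $f_2(\nabc_3+C_2)$ matching the definitions \eqref{eq:definition-pf}--\eqref{eq:definition-qfF}, with $f_1=q^{1/2}\ov{q}^{9/2}$ and $f_2=q\ov{q}^2$ and the indicated $C_1, C_2$, to the Teukolsky equations $\TT_1(\Bfr)=\M_1$ and $\TT_2(\Ffr)=\M_2$. The precise choice of weights and potentials is dictated by the requirement that the first-order $\nabc_4$ contributions in the principal part cancel and the resulting zeroth-order residue reorganize into a real positive potential with the stated leading behavior.

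For the left-hand sides, one commutes $\nabc_3+C_1$ into $\TT_1(\Bfr)$: the leading piece $-\nabc_3\nabc_4\Bfr$ produces, after multiplication by $\nabc_3$ and up to curvature and Ricci-coefficient commutators, a $-\nabc_3\nabc_4\pf$ contribution, which together with the reorganization of $\frac{1}{2}\ov{\DDc}\c(\DDc\hot\cdot)$ into $\frac{1}{2}\ov{\DD}\c(\DD\hot\cdot)$ on $\pf$ assembles into the tensorial wave operator $\squared_1$ acting on $\pf$. The non-vanishing anti-symmetric pieces $\atrch,\atrchb$ of the principal null frame in Kerr-Newman (which encode its non-integrability) generate the characteristic term $-i\frac{2a\cos\th}{|q|^2}\nab_T\pf$, while the remaining zeroth-order residue (combining $\nabc_3 C_1$, $C_1^2$, and contributions from $\omb,\trchb,f_1$) reorganizes into the real positive potential $V_{1,f}$. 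The derivation for $\qf$ is analogous; the factor of two in $\frac{4a\cos\th}{|q|^2}$ reflects the rank-two nature of $\qf$ and the different weight $f_2$.

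For the right-hand sides, applying $f_1(\nabc_3+C_1)$ to $\M_1=2\PF\ov{\PF}(2\ov{\DDc}\c\Ffr+4\ov{\Hb}\c\Ffr-(2\tr\Xb-\ov{\tr\Xb})\Xfr)$ introduces $\nabc_3$ acting on $\ov{\DDc}\c\Ffr$; commuting past the horizontal derivative and reorganizing via Lemma \ref{lemma:adjoint-operators} yields $\ov{\DD}\c\nabc_3\Ffr$, which by the definition of $\qf$ equals the coupling $\frac{\ov{q}^3}{|q|^5}\ov{\DD}\c\qf$ modulo lower order terms, the $4Q^2$ prefactor arising from $2\PF\ov{\PF}$ together with the weights $f_1,f_2$. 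Symmetrically, applying $f_2(\nabc_3+C_2)$ to $\M_2$ yields $-\frac{1}{2}\frac{q^3}{|q|^5}\DD\hot\pf$ with the correction $\frac{3}{2}(H-\Hb)\hot\pf$ arising from the conformal mismatch between $\DDc$ and $\DD$ on anti-self-dual tensors. All remaining contributions (including $\nabc_3$ falling on $\PF,\ov{\PF},H,\Hb,\Xb$, the sub-principal parts of $\M_1,\M_2$, and the $QA$ piece feeding in through \eqref{relation-F-A-B}) collect into $L_1$ and $L_2$. The explicit $\nab_T\Bfr,\nab_Z\Bfr$ (and $\nab_T\Ffr,\nab_Z\Ffr$) terms in \eqref{eq:definition-L1}--\eqref{eq:definition-L2} arise from the $t$- and $\phi$-components in the coordinate expansion of $\nabc_3$ on Kerr-Newman: they carry $O(a)$ and $O(a^2)$ coefficients and are isolated precisely because they cannot be absorbed into $\pf,\qf$ but can be handled by integration by parts in $T,Z$ in the energy estimates downstream.

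The main obstacle is the sheer volume of commutator bookkeeping on the non-integrable horizontal structure, where $\nabc_3$ fails to commute cleanly with $\ov{\DDc}\c$, $\DDc\hot$, $\nabc_4$ or with itself, generating a proliferation of terms in $P,\PF,X,\Xb,H,\Hb$ that must be simplified using the explicit Kerr-Newman null structure equations. A secondary but crucial subtlety is the refinement relative to Theorem 7.3 of \cite{Giorgi7}: the slight modification of $C_1,C_2$ must be tuned so that the residual top-order coupling at the level of lower order terms collapses exactly to the single scalar function $L_{coupl}=\frac{3}{4}(\atrchb)^2/\trchb+\frac{3}{4}i\atrchb$ multiplying $\DDc\hot\Bfr$ or $\ov{\DDc}\c\Ffr$. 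This adjoint structure is essential because in the energy analysis sketched in Section \ref{sec:overview-intro} these coupling terms must cancel against each other via the spacetime adjointness property in the combined energy-momentum tensor; hence the derivation is not merely a transcription of \cite{Giorgi7} but an organization that anticipates the algebraic structure required downstream.
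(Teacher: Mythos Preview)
Your overall strategy matches the paper's: apply the Chandrasekhar operator $f_i(\nabc_3+C_i)$ to the Teukolsky equations, commute, rescale, and organize the right-hand sides $\M_1,\M_2$ into the coupling plus lower order terms. The paper carries this out in five steps in Appendix~\ref{proof-thm-derivation-equations}, largely quoting \cite{Giorgi7} for Steps~1--3 and~5 and adding a new Step~4.

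However, you misidentify the mechanism behind the two refinements that distinguish this theorem from Theorem~7.3 of \cite{Giorgi7}. First, the explicit $\nab_T\Bfr,\nab_Z\Bfr$ (and $\nab_T\Ffr,\nab_Z\Ffr$) terms in $L_1,L_2$ do \emph{not} come from ``the coordinate expansion of $\nabc_3$''. They arise because after commutation the lower order operator $L_{\Pfr}$ contains residual first-order terms $Z_4^{\Bfr}\nabc_4\Bfr + Z_a^{\Bfr}\c\nabc\Bfr$; the key computation (Proposition~\ref{prop:potentials-real}, Step~4) is that the specific choice $\Re(\widetilde{C}_1)=-\tfrac12(\atrchb)^2/\trchb$ forces the $e_1$-component $Z_1^{\Bfr}$ to vanish, leaving only $Z_4^{\Bfr}\nabc_4\Bfr + Z_2^{\Bfr}\nabc_2\Bfr$, which is then rewritten via the frame identity $\tfrac12\tfrac{\De}{r^2+a^2}e_4 = T + \tfrac{a}{r^2+a^2}Z - \tfrac12\tfrac{|q|^2}{r^2+a^2}e_3$ and $e_2=\tfrac{a\sin\th}{|q|}T+\tfrac{1}{|q|\sin\th}Z$ as the stated $\nab_T,\nab_Z$ combination (the $e_3$ piece is absorbed back into $\pf$). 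The analogous choice $\Re(\widetilde{C}_2)=-2(\atrchb)^2/\trchb$ kills $W_1^{\Ffr}-Z_1^{\Ffr}$. Second, the purpose of tuning $C_1,C_2$ is therefore this vanishing of the $e_1$-components of the angular one-forms, not directly the collapse to $L_{coupl}$; the function $L_{coupl}=W_a^{\Bfr}=\tfrac12(C_1-C_2-\tr\Xb)$ and the relation $\ov{Y_a^{\Ffr}}=4W_a^{\Bfr}$ are then read off as consequences of the chosen $C_1,C_2$ in the final step. Your sketch would not, as written, explain why the angular $\nabc_a$ pieces reduce to $\nab_T,\nab_Z$ rather than leaving an uncontrolled $\nabc_1$ term.
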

\begin{proof} The proof of Theorem \ref{main-theorem-RW} was partly obtained as the proof of Theorem 7.3 in \cite{Giorgi7}, where $\pf$ and $\qf$ had been defined\footnote{In \cite{Giorgi7}, the quantity $\qf$ was denoted $\qf^\F$, in analogy with the respective quantities obtained in \cite{Giorgi7a} in Reissner-Nordstr\"om spacetime. } as 
\beaa
\mathfrak{p}&=& q^{\frac 1 2 } \ov{q}^{\frac 9 2 } \Big(\nabc_3 \Bfr + \big(2\trchb -\frac 5 2  i  \atrchb \big)  \Bfr \Big), \\
 \qf&=& q \ov{q}^2\Big( \nabc_3 \Ffr + \big(\trchb -3 i \atrchb \big) \Ffr \Big),
\eeaa
which gave a sub-optimal structure of the lower order terms $L_1$ and $L_2$. In Appendix \ref{proof-thm-derivation-equations} we complete the proof of Theorem \ref{main-theorem-RW} by showing that the new definitions as in \eqref{eq:definition-pf} and \eqref{eq:definition-qfF} give the structure of $L_1$ and $L_2$ as stated.
\end{proof}

\begin{remark} The left hand side of the generalized Regge-Wheeler equations are similar to the generalized Regge-Wheeler equation obtained in Kerr for the Chandrasekhar transformation $\qf$ (in that case corresponding to two null derivatives) of the curvature components $A$ and $\Ab$. In \cite{GKS} (see also \cite{ma2}\cite{TeukolskyDHR}) we have
\beaa
\squared_2\qf -i  \frac{4a\cos\th}{|q|^2}\nab_{\T} \qf -V  \qf &=& L_\qf[A] +\mbox{Err}[\squared_2\qf]
\eeaa
where $\T$ is a vectorfield in perturbations of Kerr that reduces to $T=\partial_t$ in Kerr, $V$ is a real scalar function and $L_{\qf}[A]$ is a linear second order operator in $A$, given in the outgoing frame by
  \beaa
 L_\qf[A]&=& q\ov{q}^3    \Big(- \frac{8a^2 \De}{r^2|q|^4}\nab_\T \nab_3 A -\frac{8a  \De }{r^2|q|^4} \nab_\Z \nab_3A+ W_4 \  \nab_4A+ W_3\nab_3A+W\c\nab A +W_0  A\Big).
 \eeaa
In \cite{GKS}, $\mbox{Err}[\squared_2\qf]$ denotes the nonlinear error terms, which are deemed acceptable in the set-up of the continuity scheme based on bootstrap assumptions. Here, the gRW equations in \eqref{final-eq-1}-\eqref{final-eq-1}, in addition to the similar right hand side, have additional coupling terms on the right hand side. The error terms are not present in this analysis, but are expected to be treated as in the Kerr case.

\end{remark}

 \section{Energy quantities and statement of the Main Theorem}\label{sec:energies}
 
 We give the definitions of weighted energy quantities in Section \ref{sec:def-weighted-energies} and a precise statement of the main theorem of this paper (Theorem \ref{theorem:unconditional-result-final}) in Section \ref{sec:statement-theorem}. We will then discuss the logic of the proof with its intermediate steps in Section \ref{sec:logic-proof}.

 \subsection{Definition of weighted energies}\label{sec:def-weighted-energies}
 
 Here we will define a number of weighted energies needed in the statement of Theorem \ref{theorem:unconditional-result-final} and auxiliary quantities used in intermediate steps of the proof.
 
 \subsubsection{Weighted energies for $\pf$, $\qf$}
 
 The energies in this section will in general be applied to $\pf$, $\qf$ satisfying \eqref{final-eq-1}-\eqref{final-eq-2} or to $\psi_1$, $\psi_2$ satisfying the model system \eqref{final-eq-1-model}-\eqref{final-eq-2-model}. The integrals are understood to be with respect to the volume form of the respective submanifold of integration.

Let $p$ be a free parameter which will eventually take values $0\leq p < 2$. We define the following weighted energies on $\Sigma_\tau$:
 \beaa
 E[\pf , \qf](\tau) &=&\int_{\Si_\tau} \left( |\nab_4\pf|^2+ |\nab_4\qf|^2 +  \frac{\De}{r^4} \big( |\nab_3\pf|^2 +   |\nab_3\qf|^2\big)+|\nab\pf|^2 +|\nab\qf|^2 + r^{-2} \big( |\pf|^2  + |\qf|^2\big)\right), \\
 E_p[\pf, \qf](\tau)&=& E[\pf , \qf](\tau)+\int_{\Sigma_{\tau} \cap \{ r \geq R\}} r^p \left(|\nab_4 \pf|^2+|\nab_4 \qf|^2  +|\nab\pf|^2 +|\nab\qf|^2+ r^{-2}\big( |\pf|^2 + |\qf|^2\big) \right).
 \eeaa
 \begin{remark} Observe that $E[\pf , \qf](\tau) $ is degenerate at the event horizon, as the control of the $\nab_3$ derivatives vanishes. Also, for $p=0$ we have $E_0[\pf, \qf](\tau)=E[\pf, \qf](\tau)$.
\end{remark}
 On the event horizon $\mathcal{H}^+$ we define the energy:
 \beaa
F_{\HH^+}[\pf , \qf](\tau_1,\tau_2) &= & \int_{\HH^+(\tau_1, \tau_2)}\Big( |\nab_L\pf |^2+ |\nab_L\qf |^2+|\nab\pf|^2+|\nab\qf|^2+ | \pf |^2+| \qf |^2\Big).
\eeaa
On null infinity $\mathscr{I}^+$ we define the energy
\beaa
F_{\mathscr{I}^+, p}[\pf, \qf](\tau_1,\tau_2) &=& \int_{\mathscr{I}^+(\tau_1, \tau_2)}\left( |\nab_3\pf |^2+ |\nab_3\qf |^2+ r^p \big( |\nab\pf|^2+ |\nab\qf|^2+ r^{-2} | \pf |^2+ r^{-2} | \qf |^2\big)\right).
\eeaa

In addition to the energy fluxes, we define the following degenerate spacetime energies:
       \beaa
 \Mor[\pf, \qf](\tau_1, \tau_2)&=&\int_{\MM(\tau_1, \tau_2) } 
  \left(    r^{-2} \big( | \nab_{\Rhat}  \pf|^2+|\nab_{\Rhat} \qf|^2\big) +r^{-3}\big(|\pf|^2+  |\qf|^2\big) \right)\\
      &&+ \int_{\Mntrap(\tau_1, \tau_2)} \left(  r^{-2}|\nab_3\pf|^2+ r^{-2} |\nab_3 \qf|^2 + r^{-1}  |\nab  \pf|^2+ r^{-1} |\nab \qf|^2\right),\\
B_{p}[\pf, \qf](\tau_1, \tau_2)&=& \Mor[\pf, \qf](\tau_1, \tau_2)\\
&& +\int_{\MM_{r\geq R}(\tau_1,\tau_2)}  r^{p-1} \left(|\nab_4 \pf|^2+|\nab_4 \qf|^2  +|\nab\pf|^2 +|\nab\qf|^2+ r^{-2}\big( |\pf|^2 + |\qf|^2\big) \right).
\eeaa

We denote the combined energies and spacetime energies as:
\beaa
\EF_p[\pf, \qf](\tau_1,\tau_2)&=&\sup_{\tau\in[\tau_1, \tau_2]} E_p[\pf, \qf](\tau) +F_{\HH^+}[\pf , \qf](\tau_1,\tau_2)+F_{\mathscr{I}^+, p}[\pf, \qf](\tau_1,\tau_2), \\
\BEF_p[\pf, \qf](\tau_1,\tau_2)&=&B_p[\pf, \qf](\tau_1, \tau_2) + \EF_p[\pf, \qf](\tau_1,\tau_2).
\eeaa

       Finally, we define the higher derivative norms as
       \beaa
       E^s[\pf, \qf](\tau)=\sum_{k_1, k_2 \leq s} E[\dk^{k_1}\pf, \dk^{k_2} \qf](\tau),
       \eeaa
       and similarly for $E^s_p[\pf, \qf](\tau)$, $F^s_{\HH^+}[\pf , \qf](\tau_1,\tau_2)$, $F^s_{\mathscr{I}^+, p}[\pf, \qf]$, and $\Mor^s[\pf, \qf](\tau_1, \tau_2)$, $B^s_{p}[\pf, \qf](\tau_1, \tau_2)$, $\BEF^s_p[\pf, \qf](\tau_1,\tau_2)$, where $\dk=(\nab_3, r\nab_4, r\nab)$ denotes weighted derivatives.

        \subsubsection{Weighted energies for $A, \Ffr, \Bfr, \Xfr$}

 The energies in this section will in general be applied to $A, \Ffr, \Bfr, \Xfr$ satisfying the Teukolsky system in the form \eqref{Teukolsky-B}-\eqref{Teukolsky-F} coupled with the transport equations \ref{relation-F-A-B}--\eqref{nabc-3-mathfrak-X}.

 We define the following weighted energies on $\Sigma_\tau$:
\beaa
\Edot_p[\Bfr, \Ffr](\tau)&=& \int_{\Si_\tau} \left( r^{p+8}\big(|\nab_4\Bfr|^2+|\nab_{3}\Bfr|^2+r^{-2}|\Bfr|^2\big) +r^{p+4}\big(|\nab_4\Ffr|^2+|\nab_{3}\Ffr|^2+r^{-2}|\Ffr|^2\big) \right), \\
E_p[\Bfr, \Ffr](\tau)&=&\Edot_p[\Bfr, \Ffr](\tau)+ \int_{\Si_\tau}  \left( r^{p+7} |\nab \Bfr|^2 +r^{p+3}|\nab \Ffr|^2 \right), \\
E_p[A, \Xfr](\tau)&=& \int_{\Si_\tau} r^{p+3}\big(   |\nab_3A|^2+  |\nab A|^2+ r^{-1}  |A|^2+ |\nab_3\Xfr|^2+ |\nab \Xfr|^2+r^{-1} |\Xfr|^2\big).
\eeaa

On the event horizon $\mathcal{H}^+$ we define the energies:
\beaa
F_{\mathcal{H}^+}[\Bfr, \Ffr](\tau_1, \tau_2)&=&\int_{\HH^+(\tau_1, \tau_2)  }  \left(|\nab_{L}(\Delta\Bfr)|^2+|\Delta\Bfr|^2+|\nab_{L}(\Delta\Ffr)|^2+|\Delta\Ffr|^2\right),\\
F_{\mathcal{H}^+}[A, \Xfr](\tau_1, \tau_2)&=&\int_{\HH^+(\tau_1, \tau_2)  }  \left(    |\Delta^2A|^2+ |\Delta^2\Xfr|^2\right).
\eeaa

In addition to the energy fluxes, we define the following spacetime energies:
 \beaa
\Bdot_p[\Bfr, \Ffr](\tau_1, \tau_2)&=& \int_{\MM(\tau_1, \tau_2)}    \left( r^{p+7} \big(|\nab_3\Bfr|^2+|\nab_4\Bfr|^2+r^{-2}|\Bfr|^2\big)+r^{p+3} \big(|\nab_3\Ffr|^2+|\nab_4\Ffr|^2+r^{-2}|\Ffr|^2\big) \right), \\
B_p[\Bfr, \Ffr](\tau_1, \tau_2)&=&\Bdot_p[\Bfr, \Ffr](\tau_1, \tau_2)+ \int_{\MM(\tau_1, \tau_2)} \left(    r^{p+7} |\nab \Bfr|^2+r^{p+3} |\nab \Ffr|^2 \right)\\
B_p[A, \Xfr](\tau_1, \tau_2)&=& \int_{\MM(\tau_1, \tau_2)} r^{p+3}\big(  |\nab_3A|^2+  |\nab A|^2+ r^{-2} |A|^2+ |\nab_3\Xfr|^2+ |\nab \Xfr|^2+r^{-2} |\Xfr|^2\big).
\eeaa

We denote the combined energies and spacetime energies as:
 \beaa
 \EFdot_p[\Bfr, \Ffr](\tau_1, \tau_2)&=& \sup_{\tau\in[ \tau_1, \tau_2]} \Edot_p[\Bfr, \Ffr](\tau) +F_{\HH^+}[\Bfr, \Ffr](\tau_1, \tau_2),\\
\BEFdot_p[\Bfr, \Ffr](\tau_1, \tau_2)&=&\Bdot_p[\Bfr, \Ffr](\tau_1, \tau_2)+\sup_{\tau\in[ \tau_1, \tau_2]} \Edot_p[\Bfr, \Ffr](\tau) +F_{\HH^+}[\Bfr, \Ffr](\tau_1, \tau_2), \\
\BEF_p[\Bfr, \Ffr](\tau_1, \tau_2)&=&B_p[\Bfr, \Ffr](\tau_1, \tau_2)+\sup_{\tau\in[ \tau_1, \tau_2]} E_p[\Bfr, \Ffr](\tau) +F_{\HH^+}[\Bfr, \Ffr](\tau_1, \tau_2), \\
\BEF_p[A, \Xfr](\tau_1, \tau_2)&=&B_p[A, \Xfr](\tau_1, \tau_2)+\sup_{\tau\in[ \tau_1, \tau_2]} E_p[A, \Xfr](\tau) +F_{\HH^+}[A, \Xfr](\tau_1, \tau_2).
\eeaa

We define the higher derivative norms as
       \beaa
       \Edot^s_p[\Bfr, \Ffr](\tau)=\sum_{k_1, k_2 \leq s} \Edot_p[\dk^{k_1}\Bfr, \dk^{k_2} \Ffr](\tau),
       \eeaa
       and similarly for $E^s_p[\Bfr, \Ffr](\tau)$, $E^s_p[A, \Xfr](\tau)$, $\Fdot^s_{\HH^+}[\Bfr , \Ffr](\tau_1,\tau_2)$, $F^s_{\HH^+}[\Bfr , \Ffr](\tau_1,\tau_2)$, $F^s_{\HH^+}[A, \Xfr](\tau_1,\tau_2)$, and $\Bdot^s_{p}[\Bfr, \Ffr](\tau_1, \tau_2)$, $B^s_{p}[\Bfr, \Ffr](\tau_1, \tau_2)$, $B^s_{p}[A, \Xfr](\tau_1, \tau_2)$, $\BEFdot^s_p[\Bfr, \Ffr](\tau_1,\tau_2)$, $\BEF^s_p[\Bfr, \Ffr](\tau_1,\tau_2)$, $\BEF^s_p[A, \Xfr](\tau_1,\tau_2)$, where $\dk=(\nab_3, r\nab_4, r\nab)$ denotes weighted derivatives.

        \subsubsection{Combined weighted energies}
 
Finally, we define the following combined weighted energies:
\beaa
E^s_p[\Bfr, \Ffr, A, \Xfr](\tau) &=&E^s_p[\Bfr, \Ffr](\tau) +E^s_p[A, \Xfr](\tau) , \\
\BEF^s_p[\Bfr, \Ffr, A, \Xfr] (\tau_1, \tau_2)&=& \BEF^s_p[\Bfr, \Ffr](\tau) +\BEF^s_p[A, \Xfr](\tau), \\
E^s_p[\pf, \qf, \Bfr, \Ffr, A, \Xfr](\tau) &=&E^s_p[\pf, \qf](\tau) +E^s_p[\Bfr, \Ffr, A, \Xfr](\tau) , \\
\BEF^s_p[\pf, \qf, \Bfr, \Ffr, A, \Xfr] (\tau_1, \tau_2)&=& \BEF^s_p[\pf, \qf](\tau) +\BEF^s_p[\Bfr, \Ffr, A, \Xfr] (\tau_1, \tau_2).
\eeaa

 \subsection{Precise statement of the Main Theorem}\label{sec:statement-theorem}

We can now give a precise statement of the Main Theorem, stated in its rough versions as Theorems \ref{main-thm-intro-1} and \ref{main-thm-intro-2}. 

 \begin{theorem}
\lab{Thm:Nondegenerate-Morawetz}\lab{theorem:unconditional-result-final}

Let $\Bfr, \Ffr, A, \Xfr$ be solutions of the Teukolsky system in the form \eqref{Teukolsky-B}-\eqref{Teukolsky-F} coupled with the transport equations \eqref{relation-F-A-B}-\eqref{nabc-3-mathfrak-X} and let $\pf, \qf$ defined by \eqref{eq:definition-pf}-\eqref{eq:definition-qfF} satisfying the gRW system \eqref{final-eq-1}-\eqref{final-eq-2}. 

 For $0< |a|, |Q| \ll M$, the following energy boundedness, integrated local energy decay and $r^p$ hierarchy of estimate hold true for $0< p<2$ and $s \geq 2$:
 \beaa
   \BEF^s_p[\pf, \qf, \Bfr, \Ffr, A, \Xfr] (0, \tau) \les  E^s_p[\pf, \qf, \Bfr, \Ffr, A, \Xfr](0).
       \eeaa
\end{theorem}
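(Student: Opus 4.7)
The plan is to first establish the estimates for the Chandrasekhar-transformed pair $(\pf,\qf)$ satisfying the gRW system \eqref{final-eq-1}--\eqref{final-eq-2}, and then to recover $\Bfr,\Ffr,A,\Xfr$ from transport estimates using the Chandrasekhar definitions \eqref{eq:definition-pf}--\eqref{eq:definition-qfF} and the transport system \eqref{relation-F-A-B}--\eqref{nabc-3-mathfrak-X}. The decisive structural input is that the coupling operators $C_1[\qf]=\tfrac{\ov{q}^3}{|q|^5}\ov{\DD}\c\qf$ and $C_2[\pf]=\tfrac{q^3}{|q|^5}\bigl(\DD\hot\pf-\tfrac{3}{2}(H-\Hb)\hot\pf\bigr)$ appearing on the right-hand sides of \eqref{final-eq-1}--\eqref{final-eq-2} are spacetime adjoints of one another, so that working with the combined energy-momentum tensor $\QQ[\pf,\qf]:=\QQ[\pf]+8Q^{2}\QQ[\qf]$ produces an effective decoupling: the contributions of $C_1,C_2$ to the divergence cancel modulo spacetime divergences absorbed as boundary terms. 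All multiplier estimates will be performed on this combined current.

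\textbf{Multiplier hierarchy for $(\pf,\qf)$.} Contracting $\QQ[\pf,\qf]$ with the trapping-adapted stationary vectorfield $\That_\chi$ of \eqref{eq:definition-That-chi} produces, for $|a|\ll M$, a coercive energy identity (modulo the contribution of $L_1,L_2$). For integrated local energy decay I would employ an Andersson--Blue physical-space Morawetz multiplier adapted to the Reissner--Nordstr\"om photon sphere $r_{trap}\approx\tfrac{1}{2}(3M+\sqrt{9M^{2}-8Q^{2}})$; positivity of the associated bulk for the combined current is a perturbation of the Kerr analysis \cite{GKS} in the regime $|a|,|Q|\ll M$. The horizon redshift is handled by a standard $\Rhat$-type vectorfield giving a non-degenerate $F_{\HH^{+}}$ flux, and the Dafermos--Rodnianski $r^{p}$-hierarchy near $\mathscr{I}^{+}$ supplies the large-$r$ weights; each of these steps preserves the cancellation of the $C_1,C_2$ coupling thanks to the combined current. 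The conditional output is $\BEF_{p}[\pf,\qf](0,\tau)\les E_{p}[\pf,\qf](0)+\mathrm{err}(L_1,L_2)$.

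\textbf{Higher order via modified symmetry operators.} To propagate these estimates to $s\geq 2$ derivatives in the regime $|a|\ll M$, I would commute with the second-order modified Laplacian $\OO$ built from the Carter tensor, which by \cite{Giorgi8} commutes with the scalar wave operator on Kerr--Newman. For the tensorial coupled system at hand the naive commutators produce top-order errors of the form $O(|a|r^{-2})\dk^{\leq 2}(\pf+\qf)$, acceptable in the Morawetz bulk but fatal in the energy estimate since they multiply $\nab_{\partial_{t}}$-derivatives degenerating at trapping. The resolution is to consider the modified commuted quantities $\widetilde\pf=\OO(\pf)+(\text{lower order})$ and $\widetilde\qf=\OO(\qf)-3|q|^{2}\Kh\qf+(\text{lower order})$ as in the introduction, chosen precisely so that the residual commutator error reduces to $O(|a|r^{-2})\nab_{\partial_{r}}\dk^{\leq 1}(\pf+\qf)$; this is absorbable since the $\nab_{\partial_{r}}$ derivative is non-degenerate in the Morawetz bulk. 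Iterating yields the commuted estimate $\BEF^{s}_{p}[\pf,\qf]$.

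\textbf{Closing the estimates.} The core difficulty is to make the estimates unconditional by absorbing $\mathrm{err}(L_1,L_2)$ back into the left-hand side. Two interlocking substeps are needed. First, the precise structure of $L_1,L_2$ given by \eqref{eq:definition-L1}--\eqref{eq:definition-L2} must be exploited: the coupling-type terms $\ov{\DDc}\c\Ffr$ and $\DDc\hot\Bfr$ appearing inside $L_1,L_2$ are the analogues of the main coupling $C_1,C_2$ and again cancel at the level of the combined current; the $\nab_{T}\Bfr$ and $\nab_{T}\Ffr$ pieces degenerate exactly where the $\nab_{\partial_{t}}$ multipliers do, so integration by parts in $\partial_{t}$ combined with the identity $\pf\sim \ov{q}^{9/2}q^{1/2}\nab_{3}\Bfr+(\text{lower order})$ trades a degenerate $T$-derivative for the already-controlled $\nab_{3}$-derivative. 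Second, derivatives of $\Bfr,\Ffr,A,\Xfr$ must themselves be recovered in terms of $(\pf,\qf)$: the $\nab_{3}$ derivatives follow by inverting the Chandrasekhar transformation (a transport equation in $\nab_{3}$), $\nab_{4}$ derivatives from the Teukolsky equations \eqref{Teukolsky-B}--\eqref{Teukolsky-F}, angular derivatives via elliptic estimates on those same equations, and control of $A,\Xfr$ from the transport system \eqref{relation-F-A-B}--\eqref{nabc-3-mathfrak-X}. The main obstacle is the circular character of this reconstruction combined with the trapping degeneracy: $\pf,\qf$ are one derivative of $\Bfr,\Ffr$ and that derivative degenerates precisely at $r_{trap}$. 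The resolution is the integration-by-parts trick above, which crucially depends on the sharpened structure of $L_1,L_2$ secured by the refined choice of $\pf,\qf$ in \eqref{eq:definition-pf}--\eqref{eq:definition-qfF}. After all cancellations, the net error has size $(|a|+|Q|)\,\BEF^{s}_{p}[\pf,\qf,\Bfr,\Ffr,A,\Xfr]+E^{s}_{p}(0)$, absorbable for $|a|,|Q|\ll M$, which yields Theorem \ref{theorem:unconditional-result-final}.
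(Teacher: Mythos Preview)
Your proposal is correct and follows essentially the same approach as the paper: the combined energy-momentum tensor $\QQ[\pf]+8Q^{2}\QQ[\qf]$ with the spacetime adjointness of $C_1,C_2$, the $\That_\chi$/Morawetz/redshift/$r^p$ hierarchy, the modified symmetry operators $\widetilde\pf,\widetilde\qf$ reducing the commutator error to a non-degenerate $\nab_{\Rhat}$-type term, the integration by parts in $\partial_t$ at trapping combined with the Chandrasekhar relation to absorb the $L_1,L_2$ contributions, and the transport/elliptic recovery of $\Bfr,\Ffr,A,\Xfr$ are all exactly as in the paper. One small inaccuracy: the $\nab_4$-derivatives of $\Bfr,\Ffr$ are obtained in the paper by commuting the Chandrasekhar transport identity with $\nab_{\Rhat},\nab_4$ rather than from the Teukolsky equations; the latter are used only for the angular derivatives via elliptic estimates.
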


As an example of the pointwise estimates which follow immediately from the above theorem, we have the following corollary.

\begin{corollary}\label{cor:pointwise-decay} Let $(\Bfr_0, \Bfr_1, \Ffr_0, \Ffr_1, A_0, \Xfr_0)$ be smooth and of compact support. Then the solutions $\Bfr, \Ffr, A, \Xfr$ satisfy 
\beaa
\big| r^{2+\de} A \big|, \quad \big| r^{4+\de} \Bfr \big|, \quad \big| r^{2+\de}\Ffr \big|, \quad \big| r^{2+\de}\Xfr \big| \quad  \leq  \quad C |\widetilde{t}^*|^{-(1-\de)}
\eeaa
where $\delta>0$ and $C$ depends on appropriate higher Sobolev weighted norms.
\end{corollary}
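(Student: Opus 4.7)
The corollary follows from Theorem \ref{theorem:unconditional-result-final} by the well-established route for this kind of statement: a Dafermos-Rodnianski pigeonhole argument converts the $r^p$ hierarchy into polynomial $\widetilde{t}^*$-decay of the energy; commutation with the Killing field $T=\partial_t$ transfers this decay to higher derivatives; and weighted Sobolev embedding on $\Sigma_\tau$ extracts pointwise bounds in which the $r$-weights built into the energy norms reappear as pointwise $r$-decay. I sketch the three steps in order.

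First, for any fixed $p_0\in(0,2)$, the estimate furnished by Theorem \ref{theorem:unconditional-result-final} at level $p_0$ controls the bulk norm $B^s_{p_0-1}[\Bfr,\Ffr,A,\Xfr](0,\tau)$ uniformly in $\tau$, so by the standard dyadic-slab averaging argument one extracts, inside each dyadic interval $[T_n,2T_n]$, a good slice $\Sigma_{\tau_n}$ on which $E^s_{p_0-1}$ is already small, reapplies the theorem on that slice, and iterates to produce
\beaa
E^s_q[\Bfr, \Ffr, A, \Xfr](\tau) &\lesssim& (1+\widetilde{t}^*)^{-(p_0-q)}\, E^s_{p_0}[\Bfr,\Ffr,A,\Xfr](0),\qquad 0\leq q\leq p_0.
\eeaa

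Second, since $T=\partial_t$ is Killing for the Kerr-Newman metric, all coefficients of the operators $\TT_i,\M_i$ in \eqref{Teukolsky-B}-\eqref{nabc-3-mathfrak-X} and of the Chandrasekhar transformation \eqref{eq:definition-pf}-\eqref{eq:definition-qfF} are $T$-invariant, so $T^j$ commutes cleanly through the entire Teukolsky and generalized Regge-Wheeler system: the commuted tuple $(T^j\Bfr,T^j\Ffr,T^j A,T^j\Xfr,T^j\pf,T^j\qf)$ obeys the same system with data given by the $T^j$-derivatives of the original. Applying the first step to each commuted level, at the price of a fixed loss of derivatives in $s$, yields the analogous $\widetilde{t}^*$-decay for all relevant $T$-derivatives.

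For the third step, we convert $L^2$ energy control on $\Sigma_\tau$ into pointwise bounds. Combining the standard 2D Sobolev embedding $H^2(\mathbb{S}_r^2)\hookrightarrow L^\infty$ on spheres of constant $r$ with the one-dimensional weighted inequality
\beaa
r^{\beta+1}|f(r,\omega)|^2 &\lesssim& \int_r^{\infty}\bigl(\rho^{\beta}|f|^2+\rho^{\beta+2}|\partial_{\rho}f|^2\bigr)\,d\rho,
\eeaa
and inspecting the $r$-weights $r^{p+8},r^{p+4},r^{p+3}$ present in $E^s_p[\Bfr,\Ffr]$ and $E^s_p[A,\Xfr]$, one reads off pointwise bounds of the schematic form $|r^{(p+9)/2}\Bfr|^2,|r^{(p+5)/2}\Ffr|^2,|r^{(p+5)/2}A|^2,|r^{(p+5)/2}\Xfr|^2 \lesssim E^{s}_p[T^{\leq 1}(\Bfr,\Ffr,A,\Xfr)](\tau)$. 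Choosing $p=2-2\delta$ and feeding in the $\widetilde{t}^*$-decay from step one then delivers the announced estimate; the $r$-exponents produced by this route are in fact somewhat larger than the $4+\delta,\,2+\delta$ appearing in the statement, which accounts for the available slack.

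The scheme is entirely routine and presents no essential obstacle beyond the commutation book-keeping; in particular, the constant $C$ in Corollary \ref{cor:pointwise-decay} depends on a suitably high weighted Sobolev norm of the initial data, with precise order determined by the loss of derivatives incurred in step two.
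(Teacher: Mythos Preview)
The paper gives no proof of this corollary; it is simply stated as following ``immediately'' from Theorem~\ref{theorem:unconditional-result-final}, and your three-step outline (pigeonhole on the $r^p$ hierarchy, commutation with the Killing field $T$, weighted Sobolev on $\Sigma_\tau$) is precisely the standard route that is being invoked. Steps one and two are correctly described.

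There is, however, a parameter slip at the end of step three. In your notation the Sobolev inequality yields $|r^{(q+9)/2}\Bfr|^2\lesssim E^s_q(\tau)$ (and the analogous bounds for $\Ffr,A,\Xfr$), while the pigeonhole gives $E^s_q(\tau)\lesssim\tau^{-(p_0-q)}$. To reach the stated amplitude decay $\tau^{-(1-\delta)}$ one needs $p_0-q\geq 2(1-\delta)$; with $p_0$ close to $2$ this forces $q$ to be \emph{small} (e.g.\ $q\approx\delta$), not $q=2-2\delta$. Your choice $p=2-2\delta$ would produce only $\tau^{-\delta}$ in amplitude, with a stronger $r$-weight, and that does not imply the announced estimate. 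Taking instead $q$ near $0$ gives $r$-exponents $(9+q)/2\approx 4.5$ and $(5+q)/2\approx 2.5$, still comfortably above $4+\delta$ and $2+\delta$; this is the actual origin of the slack you allude to. With that correction the argument goes through as you describe.
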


\subsection{Logic of the proof}\label{sec:logic-proof}

The remainder of the paper concerns the proof of Theorem \ref{Thm:Nondegenerate-Morawetz}. We state here the intermediate results and describe the steps that lead to the proof.

\subsubsection{The model system and the combined energy-momentum tensor}

We consider the following model problem for $\psi_1 \in \sk_1(\CCC)$ and $\psi_2 \in \sk_2(\CCC)$:
\bea
 \squared_1\psi_1  -V_1  \psi_1 &=&i  \frac{2a\cos\th}{|q|^2}\nab_T \psi_1+4Q^2 C_1[\psi_2]+ N_1 \label{final-eq-1-model}\\
\squared_2\psi_2 -V_2  \psi_2 &=&i  \frac{4a\cos\th}{|q|^2}\nab_T \psi_2-   \frac {1}{ 2} C_2[\psi_1]+ N_2\label{final-eq-2-model}
 \eea
 where
 \bea\label{eq:potentials-model}
 V_1=\frac{1}{|q|^2}\big(1-\frac{2M}{r}+\frac{6Q^2}{r^2} \big), \qquad V_2=\frac{4}{|q|^2}\big(1-\frac{2M}{r}+\frac{3Q^2}{2r^2} \big),
 \eea
 and the coupling operators are given by 
 \bea\label{eq:C_1-C_2}
 C_1[\psi_2]=\frac{\ov{q}^3 }{|q|^5} \left(  \ov{\DD} \c  \psi_2  \right) , \qquad C_2[\psi_1]=\frac{q^3}{|q|^5} \left(  \DD \hot  \psi_1 -\frac 3 2 \left( H - \Hb\right)  \hot \psi_1 \right).
 \eea
Here $N_1$ and $N_2$ are for now some unspecified right hand sides of the equations.

\begin{remark}\label{rem:connection-model-system-gRW} Observe that for $\psi_1=\pf$ and $\psi_2=\qf$, the gRW system \eqref{final-eq-1}-\eqref{final-eq-2} can be represented by the above model system with
 \beaa
 N_1&=& (V_{1,f}-V_1) \psi_1 +L_1 = O(a^2 r^{-4}) \psi_1 + L_1\\
 N_2&=& (V_{2,f}-V_2) \psi_2 +L_2= O(a^2 r^{-4}) \psi_2 + L_2
 \eeaa
where $L_1$ and $L_2$ are given by \eqref{eq:definition-L1} and \eqref{eq:definition-L2}. 
In the following we obtain energy and Morawetz estimates for solutions to the model system, while keeping the dependence on the right hand sides of the equations $N_1$ and $N_2$ as conditional. We will apply the obtained estimates to the $N_1$ and $N_2$ of the gRW system at the end, see already Section \ref{sec:recap-estimates-gRW}.
\end{remark}

The model system exhibits a symmetric structure that allows for the definition of a combined energy-momentum tensor associated to the system, as 
\beaa
\QQ[\psi_1, \psi_2]_{\mu\nu}&:=& \QQ[\psi_1]_{\mu\nu}+8Q^2 \QQ[\psi_2]_{\mu\nu},
\eeaa
where $\QQ[\psi_i]_{\mu\nu}$ is the energy-momentum tensor associated to a wave equation with real potential $V_i$, see already \eqref{def:energy-momentum-tensor}. 
Let $X$ be a vectorfield, $w$ a scalar and $J$ a one-form, one can then also define an associated combined current for the system:
\beaa
 \PP_\mu^{(X, w, J)}[\psi_1, \psi_2]&:=& \PP_\mu^{(X, w, J)}[\psi_1]+8 Q^2 \PP_\mu^{(X, w, J)}[ \psi_2],
\eeaa
where $\PP_\mu^{(X, w, J)}[\psi_i]$ is the current associated to a wave equation, see already \eqref{eq:definition-current}.
The divergence of the combined current verifies crucial cancellations in virtue of the symmetric structure of the system. In particular, the right hand side of the system is showed to present adjointness properties with respect to integration on the spacetime. More precisely, in Proposition \ref{prop:general-computation-divergence-P} we compute
\beaa
\D^\mu \PP_\mu^{(X, w, J)}[\psi_1, \psi_2]&=& \EE^{(X, w, J)}[\psi_1, \psi_2]+\mathscr{N}_{first}^{(X, w)}[\psi_1,\psi_2]+\mathscr{N}_{coupl}^{(X, w)}[\psi_1,\psi_2]\\
&&+\mathscr{N}_{lot}^{(X, w)}[\psi_1,\psi_2]+\mathscr{R}^{(X)}[\psi_1, \psi_2],
\eeaa
and 
prove that the terms $\mathscr{N}_{coupl}^{(X, w)}[\psi_1,\psi_2]$ in the divergence of $\PP_\mu^{(X, w, J)}[\psi_1, \psi_2]$ due to the coupling operators only contain up to one derivative of $\psi_1$ or $\psi_2$, with the cancellation of the terms involving two derivatives, upon creation of spacetime divergence terms.
This is done in Section \ref{sec:model-system}.

\subsubsection{Conditional boundedness of the energy for the model system}

Using the combined energy-momentum tensor and current defined above with the vectorfield $\That_\chi$ defined by \eqref{eq:definition-That-chi}, we prove the following energy boundedness statement, conditional on the control of the Morawetz bulk $\Mor[\psi_1, \psi_2](0, \tau)$.

\begin{proposition}\label{prop:energy-estimates-conditional} Let $\psi_1, \psi_2$ be solutions of the model system \eqref{final-eq-1-model}-\eqref{final-eq-2-model}. For $0<|Q|<M$, $|a|\ll M$, the following conditional energy boundedness estimate holds true:
 \bea
 \label{eq:energy-estimates-conditional}
 \begin{split}
&E[\psi_1, \psi_2](\tau)+F_{\HH^+}[\psi_1, \psi_2](0,\tau)+F_{\mathscr{I}^+, 0}[\psi_1, \psi_2](0,\tau)    \\
&\les  E[\psi_1, \psi_2](0)+|a|  \Mor[\psi_1, \psi_2](0, \tau) \\
 &  +\left|\int_{\MM(0, \tau)}\Re\Big( \nabla_{\That_\chi}\ov{\psi_1} \c N_1+8Q^2  \nabla_{\That_\chi}\ov{\psi_2}  \c N_2\Big)\right|+\int_{\MM(0, \tau)}\Big( |N_1|^2+Q^2 |N_2|^2\Big).
 \end{split}
 \eea
\end{proposition}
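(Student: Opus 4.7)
The plan is to apply the divergence theorem to the combined current $\PP_\mu^{(\That_\chi)}[\psi_1,\psi_2]:=\QQ[\psi_1,\psi_2]_{\mu\nu}(\That_\chi)^\nu$ (with $w=0$, $J=0$) on the region $\MM(0,\tau)$, where $\That_\chi$ is the timelike vectorfield defined by \eqref{eq:definition-That-chi}. Because $\That_\chi$ is uniformly timelike in the exterior, agrees with the Killing generator $T+\om_\HH Z$ on $\HH^+$, and coincides with $T$ outside $\{r<\frac{3}{4}A_1\}$, the induced boundary integrals on $\Sigma_\tau$, $\HH^+$ and $\mathscr{I}^+$ are non-negative and are equivalent, respectively, to $E[\psi_1,\psi_2](\tau)$, $F_{\HH^+}[\psi_1,\psi_2](0,\tau)$ and $F_{\mathscr{I}^+,0}[\psi_1,\psi_2](0,\tau)$; positivity of the potentials $V_1,V_2$ in \eqref{eq:potentials-model} for $0<|Q|<M$ ensures the $|\psi_i|^2$ contributions in the energy have the correct sign.

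I would then collect the bulk contributions using the decomposition of $\D^\mu \PP^{(\That_\chi)}_\mu$ provided by Proposition \ref{prop:general-computation-divergence-P}. The deformation tensor piece $\mathscr{E}^{(\That_\chi)}$ vanishes identically from the $T$-part and, from the cutoff $Z$-part, is supported in $\{r\le\tfrac{3}{4}A_1\}\subset\MM_{\ntrap}$ and carries an overall factor of $|a|$, hence is controlled by $|a|\,\Mor[\psi_1,\psi_2](0,\tau)$. The Teukolsky first-order terms $i\tfrac{2a\cos\th}{|q|^2}\nab_T\psi_1$ and $i\tfrac{4a\cos\th}{|q|^2}\nab_T\psi_2$, when contracted with $\nab_{\That_\chi}\ov{\psi_i}$, yield a purely imaginary pairing with $\nab_T\ov{\psi_i}$ (which vanishes upon taking real part) plus a commutator piece with the $\tfrac{a\chi}{r^2+a^2}\nab_Z\ov{\psi_i}$ component; the latter is $O(a^2)$ and supported away from trapping, hence again absorbed by $|a|\,\Mor[\psi_1,\psi_2](0,\tau)$.

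The heart of the argument is the treatment of the coupling terms, and this is where the combined weight $8Q^2$ in $\QQ[\psi_1,\psi_2]$ is essential. The contributions $4Q^2\,\nab_{\That_\chi}\ov{\psi_1}\cdot C_1[\psi_2]$ and $-4Q^2\,\nab_{\That_\chi}\ov{\psi_2}\cdot C_2[\psi_1]$ (with complex conjugates) would each, taken separately, contain angular derivatives that degenerate at trapping and could not be absorbed. However, the spacetime adjointness of $C_1$ and $C_2$ noted in the introduction — a consequence of Lemma \ref{lemma:adjoint-operators} combined with the precise form of the Ricci coefficient correction in $C_2$ — converts the sum into a pure spacetime divergence $\D_\alpha(\tfrac{q^3}{|q|^5}(\nab_{\That_\chi}\ov{\psi_1}\cdot\ov{\psi_2})^\alpha)$ plus terms with at most one derivative. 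The divergence integrates to boundary terms that are absorbable by the energies on $\Sigma_\tau$, $\HH^+$, $\mathscr{I}^+$, using the smallness of $|a|,|Q|$ and Cauchy--Schwarz; the remaining one-derivative coupling terms are bounded by $|a|\,\Mor[\psi_1,\psi_2](0,\tau)$ (in fact by $\varepsilon\,\Mor + C_\varepsilon\,E$ after Cauchy--Schwarz, but the Grönwall in $\tau$ would then close using the smallness of the parameters).

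Finally, the right-hand side contributions $\nab_{\That_\chi}\ov{\psi_1}\cdot N_1$ and $8Q^2\,\nab_{\That_\chi}\ov{\psi_2}\cdot N_2$ are left as inhomogeneous terms — they produce exactly the first integral on the right of \eqref{eq:energy-estimates-conditional} after taking real parts, while the $|N_i|^2$ terms arise as error from the Cauchy--Schwarz used to absorb any remainder arising from the boundary terms produced by the adjointness manipulation. Assembling these estimates and applying Grönwall's inequality in $\tau$, using that $|a|$ is sufficiently small so that the $|a|\,\Mor$ contribution from the deformation and first-order pieces cannot close on its own (hence the estimate stays conditional on control of $\Mor$), yields \eqref{eq:energy-estimates-conditional}. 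The main obstacle I anticipate is bookkeeping the boundary terms arising from the spacetime divergence produced by the coupling adjointness — these must be shown to be strictly sub-leading compared to the horizon and null-infinity fluxes, which requires carefully tracking the weights $\tfrac{q^3}{|q|^5}=O(r^{-2})$ against the $r$-weights in the fluxes at $\mathscr{I}^+$.
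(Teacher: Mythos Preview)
Your overall framework is correct --- apply the divergence theorem to $\PP^{(\That_\chi,0,0)}_\mu[\psi_1,\psi_2]$, use the decomposition of Proposition~\ref{prop:general-computation-divergence-P}, and exploit the spacetime adjointness of $C_1,C_2$ to turn the coupling into a divergence plus $O(a)$ bulk --- and this matches the paper. However, there are two genuine gaps.

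First, you omit the curvature term $\mathscr{R}^{(\That_\chi)}[\psi_1,\psi_2]$ from the decomposition \eqref{eq:divv-PP}. This is not automatically $O(a)$; the paper handles it via Lemma~\ref{lemma:RR-t}, writing the $\partial_t$-part as $\tfrac{1}{2}A_\alpha\,\Re(\dual\ov{\psi_i}\cdot\Ddot^\alpha\psi_i)$ with an explicit $A_\alpha=O(ar^{-3})$, then using the model equations themselves to reduce the resulting $\dual\ov{\psi_i}\cdot\squared\psi_i$ to acceptable terms plus further divergences. This is a real step, not just bookkeeping.

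Second --- and this is the main gap --- your treatment of the boundary terms coming from the coupling divergence would not yield the stated range $|Q|<M$. You write that these are ``absorbable by the energies \ldots\ using the smallness of $|a|,|Q|$ and Cauchy--Schwarz,'' but the coupling boundary term contains first-order angular pieces of the schematic form $\tfrac{Q^2}{|q|^2}\,\psi_1\cdot(\DD\cdot\ov{\psi_2})$ and $\tfrac{Q^2}{|q|^2}\,(\DD\hot\psi_1)\cdot\ov{\psi_2}$, which are at the same level as the $|\nab\psi_i|^2$ terms in the energy flux and carry no smallness in~$Q$. The paper's resolution (Section~\ref{sec:positivity-boundary-energy}) is to take a specific convex combination $\lambda=\tfrac12$ of the two equivalent forms \eqref{eq:N-coupl-1}--\eqref{eq:N-coupl-2} of $\mathscr{N}_{coupl}$, then apply the elliptic identity \eqref{eq:elliptic-estimates-psi1} and the Poincar\'e inequality \eqref{eq:elliptic-nablapsi2-divpsi2} to rewrite $|\nab\psi_1|^2,|\nab\psi_2|^2$ in terms of $|\DD\hot\psi_1|^2,|\ov{\DD}\cdot\psi_2|^2$, and complete the square (Lemma~\ref{lemma:positivity-quadratic-form}). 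The residual potentials after completing the square are $V_1+r^{-2}-2Q^2|q|^{-4}$ and $V_2-Q^2|q|^{-4}$, and it is \emph{their} positivity for $|Q|<M$ --- not that of $V_1,V_2$ alone --- that secures the full subextremal range. Your Cauchy--Schwarz approach would at best give $|Q|\ll M$.

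A minor remark: no Gr\"onwall is needed. The divergence theorem gives \eqref{eq:energy-estimates-conditional} directly, with $|a|\,\Mor$ left on the right-hand side; this is precisely why the estimate is called conditional.
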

\begin{proof} The proof of Proposition \ref{prop:energy-estimates-conditional} is obtained in Section \ref{section:energy-estimates}.
\end{proof}

Observe that the spacetime divergence terms arising from the divergence of the combined current modify the energy fluxes, and it is remarkable that the energy remains positive definite for the entire range $|Q|<M$, see Lemma \ref{lemma:positivity-quadratic-form}.

\subsubsection{Conditional Morawetz estimates for the model system}

Using the combined energy-momentum tensor and current with the Morawetz vectorfield $\FF(r)\partial_r$ for the choice of functions appeared in \cite{GKS}, we prove the following restricted Morawetz estimates, conditional on the control of the energy and of the spacetime angular and time derivatives at trapping.

 \begin{proposition}
\lab{proposition:Morawetz1-step1}
Let $\psi_1, \psi_2$ be solutions of the model system \eqref{final-eq-1-model}-\eqref{final-eq-2-model}. For $0<|a|, |Q| \ll M$, the following conditional Morawetz estimate holds true:
\bea\lab{eq:conditional-mor-par1-1-II}
\begin{split}
\Mor^{ax}[\psi_1, \psi_2](0, \tau) &\les \EF_{0}[\psi_1, \psi_2](0,\tau)\\
&+ |a| \int_{\MM(0, \tau)}\left( r^{-1}\big(|\nab\psi_1|^2+|\nab \psi_2|^2\big)+r^{-3}\big( |\nab_T\psi_1|^2+|\nab_T \psi_2|^2\big)\right) \\
&    +\int_{\MM(0, \tau)}\Big( | \nab_{\Rhat} \psi_1 | + r^{-1}|\psi_1| \Big)    |  N_1|+\Big( | \nab_{\Rhat} \psi_2 | + r^{-1}|\psi_2| \Big)    |  N_2|,
 \end{split}
\eea
where 
 \beaa
      \Mor^{ax}[\psi_1, \psi_2](\tau_1, \tau_2)&:=&\int_{\MM(\tau_1, \tau_2)} \left(  r^{-2} \big(  |\nab_{\Rhat}\psi_1|^2+ |\nab_{\Rhat}\psi_2|^2\big) +r^{-3} \big( |\psi_1|^2 + |\psi_2|^2\big) \right) \\
     && +\int_{\MM(\tau_1, \tau_2)} \frac{\TT^2}{r^6}  \Big(  r^{-2} |\nab_{\That} \psi_1|^2+  r^{-2} |\nab_{\That} \psi_2|^2 +r^{-1} |\nab \psi_1|^2+ r^{-1} |\nab \psi_2|^2  \Big)
 \eeaa
 with $\TT$ as in \eqref{eq:definition-TT}.
\end{proposition}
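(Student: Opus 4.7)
\medskip

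\textbf{Proof proposal.} The plan is to apply the divergence theorem to the combined current $\PP_\mu^{(X,w,J)}[\psi_1,\psi_2]$ introduced in Section \ref{sec:model-system} with a carefully chosen radial Morawetz triple $(X, w, J) = (\FF(r)\partial_r, w(r), J)$, then integrate over $\MM(0,\tau)$. The crucial input is that by Proposition \ref{prop:general-computation-divergence-P}, the contribution of the coupling operators $C_1[\psi_2], C_2[\psi_1]$ to $\D^\mu \PP_\mu^{(X,w,J)}[\psi_1,\psi_2]$ only appears through the controlled piece $\mathscr{N}_{coupl}^{(X,w)}$, together with a spacetime divergence that is absorbed into boundary fluxes. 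Thus the computation effectively reduces to two decoupled wave-type problems with the positive potentials $V_1, V_2$ given in \eqref{eq:potentials-model}, whose trapping locus is the hypersurface $\TT=0$ of \eqref{eq:definition-TT}.

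The first step is to choose $\FF$ and $w$ following the construction appearing in \cite{GKS} for very slowly rotating Kerr, suitably modified for the potentials $V_1, V_2$ so that the effective trapping sits at the photon sphere $r_{trap} = (3M + \sqrt{9M^2 - 8Q^2})/2$ of Reissner-Nordstr\"om. With the same multiplier pair applied to both equations, one checks that the combined bulk $\EE^{(X,w,J)}[\psi_1,\psi_2]$ is positive definite away from trapping, with a degenerate factor of order $\TT^2/r^6$ for the angular and $\That$-derivatives at trapping. This yields exactly the density defining $\Mor^{ax}[\psi_1,\psi_2]$ on the left-hand side. The conditions $|a|,|Q|\ll M$ enter to guarantee that $\TT=0$ remains a single hypersurface inside $[A_1, A_2]$, and that the non-integrability of the horizontal distribution and the non-axi-symmetry produce only cross-terms of size $|a|$ times lower-order angular/time derivatives at trapping, which is what yields the second line of \eqref{eq:conditional-mor-par1-1-II}.

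The second step is to collect the boundary contributions on $\Sigma_0, \Sigma_\tau, \HH^+, \mathscr{I}^+$ arising from the divergence theorem, including the spacetime-divergence corrections coming from the coupling cancellation and from the one-form $J$. With the standard Morawetz choice of $J$ adapted to the radial weight of $w$, all these boundary terms are controlled by $\EF_0[\psi_1,\psi_2](0,\tau)$, giving the first term on the right-hand side. Finally, the inhomogeneous right-hand sides $N_1, N_2$ contribute through
\beaa
\int_{\MM(0,\tau)} \Re\Big( \big( \FF\nab_{\partial_r} \ov{\psi_i} + w\, \ov{\psi_i}\big) \c N_i \Big), \qquad i=1,2,
\eeaa
which, since $\FF$ is bounded by $1$ and $w$ carries the weight $r^{-1}$, gives exactly the last line of \eqref{eq:conditional-mor-par1-1-II}.

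The main obstacle is the simultaneous positivity of the bulk for both $\psi_1$ and $\psi_2$ using a single multiplier: the coupling cancellation from the combined energy-momentum tensor crucially requires the same $(\FF, w, J)$ on both equations, so there is no freedom to independently tune the multiplier to each potential $V_1, V_2$. One must therefore verify that the $\FF, w$ inherited from \cite{GKS} still produce a positive bulk for the pair $(V_1, V_2)$ in a neighborhood of $r_{trap}$, which is possible because $V_1$ and $V_2$ share the same leading Schwarzschild-type structure and differ only in the $Q^2/r^4$ correction that is controlled by $|Q|\ll M$. The angular/time derivative terms at trapping that appear with coefficient $|a|$ are left as conditional errors to be absorbed at the next stage via commutation with the symmetry operators constructed in Section \ref{sec:proof-symmetry-operators}.
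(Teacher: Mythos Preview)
Your proposal is correct and follows essentially the same route as the paper: apply the combined current of Proposition~\ref{prop:general-computation-divergence-P} with the radial triple $(Y,w_Y,J)$ inherited from \cite{GKS} (specifically $z=\De/(r^2+a^2)^2$, $u=2\TT(r^2+a^2)/(r(r^2-a^2))$, plus a small $\delta_T w_T$ correction to gain the degenerate $\nab_{\That}$ control), then use Poincar\'e and a Hardy-type argument with a single explicit $v(r)$ to obtain simultaneous positivity for both quadratic forms $\Qr_{1,\de_P}$ and $\Qr_{2,\de_P}$. The only point you slightly understate is that $\mathscr{N}_{coupl}$ is not pure boundary---it has a residual bulk piece of size $O(Q^2)\TT^2 r^{-8}|\ov{\DD}\c\psi_2|^2$ (cf.\ \eqref{eq:NN-coupl-X-w-mor-conditional}) which must be absorbed into the main positive bulk via the elliptic identity \eqref{eq:elliptic-estimates-psi2}, but this is harmless for $|Q|\ll M$ exactly as you anticipate.
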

\begin{proof} The proof of Proposition \ref{proposition:Morawetz1-step1} is obtained in Section \ref{sec:conditional-mor-est}.
\end{proof}

Observe that the restricted Morawetz bulk $\Mor^{ax}[\psi_1, \psi_2](\tau_1, \tau_2)$ corresponds to the Morawetz energy norm for axially symmetric solutions to the model system, and it controls the Morawetz bulk, i.e. $\Mor^{ax}[\psi_1, \psi_2](\tau_1, \tau_2)\gtrsim \Mor[\psi_1, \psi_2](\tau_1, \tau_2)$. The derivation of the conditional Morawetz estimates relies on the choice of functions appeared in \cite{GKS}, with the application of Poincar\'e and Hardy estimates to show positivity.

\subsubsection{Symmetry operators and commuted Morawetz estimates for the model system}

To obtain Morawetz estimates which are non conditional we follow the method by Andersson-Blue \cite{And-Mor}, relying on commutation with second order symmetry operators. 
The symmetry operators adapted to the case of tensors appeared in \cite{GKS}, where in addition to the covariant derivatives with respect to the Killing vectorfields $T$ and $Z$, a second order operator associated to the Carter tensor \cite{Carter} is crucially used.

In \cite{GKS},
the following symmetric spacetime 2-tensors and second order differential operators for horizontal tensors were defined:
\beaa
S_1^{\a\b}= T^\a T^\b, \qquad S_2^{\a\b}=a T^\a Z^\b, \qquad S_3^{\a\b}=a^2 Z^\a Z^\b , \qquad S_4=O^{\a\b} \\
\SS_1=\nab_T \nab_T, \qquad \SS_2=a\nab_T\nab_Z, \qquad \SS_3=a^2\nab_Z\nab_Z, \qquad \SS_4=\OO,
\eeaa
where $\SS_\aund= |q|^2\Db_\b(|q|^{-2}S_\aund^{\a\b}\Db_\a \psi)$ for $\aund=1,2,3,4$. For $\psi \in \sk_k(\CCC)$  the operator $\OO$ is equivalently defined as (see Lemma 3.7.4 in \cite{GKS})
\bea\label{eq:def-OO}
\OO(\psi)&=& |q|^2\big( \lap_k \psi +(\eta+\etab) \c \nab \psi),
\eea
where $\lap_k=\nab_1\nab_1+\nab_2\nab_2$ is the horizontal Laplacian.  

\begin{remark}
The symmetric 2-tensors $S_\aund^{\a\b}$ are used to decompose the tensor $\RR^{\a\b}$ in \eqref{definition-RR-tensor} as 
\beaa
\RR^{\a\b}&= -(r^2+a^2) ^2S_1^{\a\b}- 2(r^2+a^2)S_2^{\a\b}- S_3^{\a\b}  +\De S_4^{\a\b}.
\eeaa
More compactly, using the repetition in $\aund$ to signify summation over $\aund=1,2,3,4$, we denote
          \bea\label{eq:RR-Sa}
          \RR^{\a\b} =\RR^\aund  \Sa^{\a\b},
          \eea
          with $\RR^\aund$, $\aund=1,2,3,4$, given by
          \bea\label{components-RR-aund}
          \RR^1&=&-(r^2+a^2)^2, \quad \RR^2 = -2(r^2+a^2), \quad \RR^3 =-1, \quad \RR^4=\De.
          \eea
\end{remark}

In the case of the model system here, the second order differential operator $\SS_4$ needs to be modified to take into account the right hand side of the system and the coupling term. We define
\bea
\psiao&:= \SS_\aund \psi_1, \qquad \psiat&:= \SS_\aund \psi_2, \qquad \aund=1,2,3 \label{eq:definition-psiao-psiat}\\
\psiao&:= \OO \psi_1, \qquad \psiat&:=\OO\psi_2-3|q|^2 \Kh \psi_2, \qquad \aund=4, \label{eq:definition-hat-psi}
\eea
where $\Kh$ is the horizontal Gauss curvature, defined as
\bea\label{eq:def-Gauss}
\Kh&:=&- \frac 14  \trch \trchb-\frac 1 4 \atrch \atrchb+\frac 1 2 \chih \c \chibh-  \frac 1 4 \R_{3434}, 
\eea
and satisfying  $|q|^2 \Kh=1+O(a^2r^{-2})$. The scalar $\Kh$ reduces to the Gauss curvature of the spheres in the case of spherically symmetric background.

The above are symmetry operators for the model system, as proved in the following.

\begin{lemma}\label{lemma:symmetry-operators-system} Let $\psi_1, \psi_2$ be solutions of the model system \eqref{final-eq-1-model}-\eqref{final-eq-2-model}. Then $\psiao, \psiat$ for $\aund=1,2,3, 4$ satisfy the same model system. More precisely,
\bea
 \squared_1\psiao  -V_1  \psiao &=&i  \frac{2a\cos\th}{|q|^2}\nab_T \psiao+4Q^2 C_1[\psiat]+ {N_1}_{\aund} \label{eq:commuted-equation-final1}\\
\squared_2\psiat -V_2  \psiat &=&i  \frac{4a\cos\th}{|q|^2}\nab_T \psiat-   \frac {1}{ 2} C_2[\psiao]+ {N_2}_{\aund} \label{eq:commuted-equation-final2}
 \eea
where ${N_1}_{\aund}$, ${N_2}_{\aund}$ satisfy
 \beaa
 |{N_1}_\aund|&\les & |\dk^{\leq 2}N_1|+ar^{-2}|\dk^{\leq 2}(\psi_1 +\psi_2)| \\
  |{N_2}_\aund|&\les& |\dk^{\leq 2}N_2|+ar^{-2}|\dk^{\leq 2}(\psi_1+ \psi_2)|.
 \eeaa
\end{lemma}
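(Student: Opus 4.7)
The plan is to handle the Killing cases $\aund = 1,2,3$ and the Carter case $\aund = 4$ separately, since the first are essentially automatic while the second requires the modification by $-3|q|^2 \Kh \psi_2$ together with careful commutation against the coupling operators.

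For $\aund = 1, 2, 3$, I would observe that $T$ and $Z$ are Killing vectorfields that preserve the principal null frame and the horizontal structure in Kerr-Newman (both $|q|^2$ and $\De$ are independent of $t$ and $\phi$, and so are $V_1, V_2$, the coefficients $\frac{2a\cos\th}{|q|^2}$, and the coefficients of $C_1, C_2$, as well as the Ricci scalars $H - \Hb$ appearing in $C_2$). Consequently the projected covariant derivatives $\nab_T$ and $\nab_Z$ commute with $\squared_1$, $\squared_2$, with multiplication by $V_1, V_2$, with $\nab_T$ itself, and with the coupling operators $C_1, C_2$. Applying $\SS_{\aund}$ to both sides of \eqref{final-eq-1-model}-\eqref{final-eq-2-model} then directly yields \eqref{eq:commuted-equation-final1}-\eqref{eq:commuted-equation-final2} with ${N_i}_{\aund} = \SS_{\aund} N_i$, which trivially satisfies the stated bound $|{N_i}_{\aund}| \les |\dk^{\leq 2} N_i|$; no $a r^{-2} \dk^{\leq 2}(\psi_1 + \psi_2)$ contribution arises in this case.

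For $\aund = 4$, the core inputs are: (i) the commutator identity from \cite{Giorgi8}, extended to horizontal tensors as in \cite{GKS}, of the schematic form $[\OO, \squared_k] \psi = O(|a| r^{-2}) \dk^{\leq 2} \psi$; (ii) a corresponding bound $[\OO, V_k]\psi = O(|a| r^{-2}) \dk^{\leq 2} \psi$ and $[\OO, \frac{2a\cos\th}{|q|^2} \nab_T] \psi = O(|a| r^{-2}) \dk^{\leq 2} \psi$, which follow from the explicit $r$- and $\th$-dependence of these coefficients and the fact that $\OO$ differentiates only along the horizontal directions while commuting with $\nab_T, \nab_Z$. I would then compute
\[
\squared_1(\OO \psi_1) - V_1 \OO \psi_1 - i \tfrac{2a \cos\th}{|q|^2} \nab_T \OO \psi_1 = \OO\bigl(4 Q^2 C_1[\psi_2] + N_1\bigr) + O(|a| r^{-2}) \dk^{\leq 2} \psi_1,
\]
and express $\OO(C_1[\psi_2])$ in terms of $C_1[\OO \psi_2 - 3|q|^2 \Kh \psi_2]$ plus acceptable errors. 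The scalar $3|q|^2 \Kh$ is precisely designed to absorb the curvature commutator between the Laplacian $\lap_2$ and the divergence operator $\ov{\DD}\c$ acting on a horizontal 2-tensor (a tensorial analogue of the identity $[\lap, \div] = K \div$ on a 2-sphere, with $\Kh$ playing the role of the Gauss curvature and $|q|^2 \Kh = 1 + O(a^2 r^{-2})$). A symmetric computation with $\OO \psi_2 - 3|q|^2 \Kh \psi_2$ on the second equation, exploiting the analogous identity for $[\OO, \DD \hot \cdot]$ together with the $-\tfrac{3}{2}(H - \Hb) \hot \cdot$ piece of $C_2$ to close the adjoint structure of Lemma \ref{lemma:adjoint-operators}, produces the matching coupling $-\tfrac{1}{2} C_2[\OO \psi_1]$ on the right-hand side, modulo the same class of errors.

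The main obstacle will be the $\aund = 4$ identity $\OO \circ C_i = C_i \circ \widetilde{\OO} + O(|a| r^{-2}) \dk^{\leq 2}$, where $\widetilde{\OO}$ denotes the appropriately modified operator on the target. The tensorial nature of the equations and the non-integrability of the horizontal distribution mean that the relevant commutators between second-order operators and the angular coupling operators carry genuine curvature contributions; these must either cancel (as they do for the terms arising from $\lap$) or be absorbed by the modification $3|q|^2 \Kh \psi_2$, or else appear weighted by $|a|$ and therefore contribute only to the $a r^{-2} \dk^{\leq 2}(\psi_1 + \psi_2)$ remainder. Once this commutation is established, collecting all contributions gives the stated bounds on ${N_1}_{\aund}$ and ${N_2}_{\aund}$.
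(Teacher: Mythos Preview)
Your outline for $\aund=4$ is essentially the paper's approach (Lemma \ref{lemma:system-hats}): commute $\OO$ with $|q|^2\squared_k$ via Proposition \ref{prop:commutators-OO-squared}, commute with the coupling operators via Lemma \ref{lemma:commutator-OO-DD}, and use the $-3|q|^2\Kh\psi_2$ modification to absorb the Gauss-curvature term coming from $[\OO,\ov{\DD}\c]$. That part is fine.

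The gap is in your treatment of $\aund=1,2,3$. You assert that $\nab_T$ and $\nab_Z$ commute exactly with $\squared_k$ and with $C_1, C_2$, so that ${N_i}_\aund=\SS_\aund N_i$ on the nose. This is false: it is the \emph{projected Lie derivatives} $\Lieb_T,\Lieb_Z$ that commute with $\squared_k$, not the projected covariant derivatives. By Lemma \ref{lemma:basicpropertiesLiebTfasdiuhakdisug:chap9} the two differ by a zeroth-order rotation with coefficient $\frac{a(2Mr-Q^2)\cos\th}{|q|^4}$ (and an $O(1)$ coefficient for $Z$), and commuting this multiplication operator with $\squared_k$ produces a genuine first-order remainder. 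Likewise $[\nab_T,\nab_{e_a}]\psi=O(ar^{-3})\psi$ (see Lemma \ref{lemma:comm-ov-DD-nabX}), so $[\SS_\aund,C_i]$ does not vanish either. The paper records the correct bounds
\[
|[\SS_\aund,\squared]\psi|\les ar^{-2}|\dk^{\leq 2}\psi|,\qquad |[\SS_\aund,C_1]\psi_2|\les ar^{-5}|\dk^{\leq 1}\psi_2|,\qquad |[\SS_\aund,C_2]\psi_1|\les ar^{-5}|\dk^{\leq 1}\psi_1|,
\]
and these nonzero commutators are precisely the source of the $ar^{-2}|\dk^{\leq 2}(\psi_1+\psi_2)|$ error in the statement for $\aund=1,2,3$. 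Your claim that ``no $ar^{-2}\dk^{\leq 2}(\psi_1+\psi_2)$ contribution arises in this case'' is therefore incorrect. The fix is straightforward---acknowledge that the commutators are $O(a)$ rather than zero and carry them into ${N_i}_\aund$---but as written the argument is wrong.
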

\begin{proof}
The proof of Lemma \ref{lemma:symmetry-operators-system} is obtained in Section \ref{sec:proof-lemma-symmetry}.
\end{proof}

 We denote 
\beaa
 |\psi_1|^2_{\SS}:=\sum_{\aund=1}^4\big|\SS_\aund \psi_1|^2, \quad  |\psi_2|^2_{\SS}:=\sum_{\aund=1}^4\big|\SS_\aund \psi_2|^2, \quad |\nab_Y\psi_1|^2_{\SS}:=\sum_{\aund=1}^4\big|\nab_Y\SS_\aund \psi_1|^2, \quad |\nab_Y\psi_2|^2_{\SS}:=\sum_{\aund=1}^4\big|\nab_Y\SS_\aund\psi_2|^2
\eeaa
for any given vectorfield $Y$.

Using a generalized combined energy-momentum tensor and current (see Section \ref{sec:generalized-current}) with the generalized Morawetz vectorfield $\FF^{\aund\bund}(r) \partial_r$ for $\aund, \bund=1,2,3,4$, we prove the following commuted Morawetz estimates.

 \begin{proposition}
 \label{prop:morawetz-higher-order}
 Let $\psi_1, \psi_2$ be solutions of the model system \eqref{final-eq-1-model}-\eqref{final-eq-2-model}. For $0<|a|, |Q| \ll  M$, the following commuted Morawetz estimate holds true:
 \beaa
&& \Mor_{\SS}[\psi_1, \psi_2](0, \tau)\\
&\les& \sum_{\aund=1}^4 \EF_0[\psiao, \psiat](0, \tau) \\
 &&+ \big(\EF_0[(\nab_T, r \nab)^{\leq 1} \psi_1, (\nab_T, r\nab)^{\leq 1} \psi_2](0, \tau)\big)^{1/2} \big(\EF_0[(\nab_T, r \nab)^{\leq 2} \psi_1, (\nab_T, r \nab)^{\leq 2} \psi_2](0, \tau) \big)^{1/2}\\
 &&+\big(|a| +|Q|\big)\Mor^2[\psi_1, \psi_2](0,\tau)\\
&& +\sum_{\aund=1}^4\int_{\MM(0, \tau)}\Big( \big(|\nab_{\Rhat} \psiao|+r^{-1}|\psiao|\big) |N_{1\aund}| +\big(|\nab_{\Rhat} \psiat|+r^{-1}|\psiat|\big) |N_{2\aund}|\Big),
\eeaa
 where 
\beaa
  \Mor_{\SS}[\psi_1, \psi_2](\tau_1, \tau_2)&=& \int_{\MM(\tau_1, \tau_2)} 
     \Big( r^{-2} \big( |\nab_{\Rhat} \psi_1|_{\SS}^2 + |\nab_{\Rhat} \psi_2|_{\SS}^2\big)+r^{-3}\big( |\psi_1|_{\SS}^2+ |\psi_2|_{\SS}^2\big) \Big) \\
&&+ \int_{\MM(\tau_1, \tau_2)}   r^{3} \Big(  \big|   \nab_{\That}  \Psi_1 \big|^2+|\nab_{\That} \Psi_2|^2 + r^2|\nab\Psi_1|^2+r^2 |\nab \Psi_2|^2\Big),
\eeaa
with
\beaa
\Psi_1:= \RRtp^{\aund} \psiao , \qquad \Psi_2:=\RRtp^{\aund} \psiat, \qquad \RRtp^{\aund}:=  \pr_r\left( \frac z \De\RR^{\aund}\right),
\eeaa
 for the scalar function $z$ given by $z=z_0-\de_0 z_0^2$, with $z_0=\frac{\De}{(r^2+a^2)^2}$.
 \end{proposition}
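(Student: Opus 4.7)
The plan is to use Lemma \ref{lemma:symmetry-operators-system} together with the generalized combined current of Section \ref{sec:generalized-current}, following the Andersson--Blue scheme in the tensorial form developed in \cite{GKS}. Each of the four commuted pairs $(\psiao, \psiat)$, $\aund=1,2,3,4$, satisfies the same model system with sources $N_{1\aund}, N_{2\aund}$, whose worst contribution is the commutator error of size $|a|r^{-2}|\dk^{\leq 2}(\psi_1+\psi_2)|$. The target bulk $\Mor_{\SS}$ is \emph{not} a diagonal sum of four individual Morawetz bulks: at trapping it only controls the $\RRtp$-weighted combinations $\Psi_1=\RRtp^{\aund}\psiao$, $\Psi_2=\RRtp^{\aund}\psiat$, which directly mirror the decomposition $\RR^{\a\b}=\RR^\aund \Sa^{\a\b}$. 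This is exactly the structure needed to recover the angular and $\That$-derivatives that Proposition \ref{proposition:Morawetz1-step1} only controlled with the degenerating weight $\TT^2/r^6$; commuting with $\SS_\aund$ and contracting through $\RR^\aund$ is the mechanism unlocking full control.

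Concretely, I will build a generalized combined current $\PP_\mu^{(X^{\aund\bund}, w^{\aund\bund}, J^{\aund\bund})}[\psiao, \psiat]$ with $X^{\aund\bund}=\FF^{\aund\bund}(r)\partial_r$ and auxiliary weights $w^{\aund\bund}, J^{\aund\bund}$ chosen as in \cite{GKS}, so that after integration by parts the $\FF$-deformation tensor contracted with $\RR^\aund \Sa^{\a\b}$ produces precisely the combinations $\nab_{\That}\Psi_i, \nab \Psi_i$ through the identity $\RRtp^{\aund}=\partial_r(z\RR^\aund/\De)$ with $z=z_0-\de_0 z_0^2$. Taking the divergence via Proposition \ref{prop:general-computation-divergence-P}, summing in $\aund,\bund$ and integrating over $\MM(0,\tau)$ then yields: boundary fluxes bounded by $\sum_{\aund=1}^4 \EF_0[\psiao,\psiat](0,\tau)$; the $\Rhat$-derivative part of $\Mor_{\SS}$ away from trapping; the $\Psi_{1,2}$ quadratic form at trapping; coupling contributions from $C_1[\psiat], C_2[\psiao]$ that reduce via Lemma \ref{lemma:adjoint-operators} to spacetime divergences plus $O(|Q|)$ lower-order remainders, absorbable into $|Q|\Mor^2[\psi_1,\psi_2]$; and flux pairings with $N_{1\aund}, N_{2\aund}$ exactly in the form claimed.

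The main obstacle is positivity of the bulk quadratic form at trapping. Away from trapping the $\nab_{\Rhat}$ part is handled by the standard Morawetz mechanism; the delicate point is near $r_{trap}$, where only $\Psi_1, \Psi_2$ remain. The choice of $z$ and of $\FF^{\aund\bund}$ from \cite{GKS} is tuned so that a Poincar\'e-plus-Hardy estimate on $\Psi_{1,2}$ yields a positive definite form for each scalar wave equation separately; the coupling terms $C_1, C_2$ only add lower-order, $Q$-small contributions which do not destroy positivity for $|a|, |Q|\ll M$, with any sign-indefinite remainder carrying a small coefficient $O(|a|+|Q|)$ and absorbed into the $(|a|+|Q|)\Mor^2[\psi_1,\psi_2]$ term on the right-hand side. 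Finally, to close the estimate, the commutator errors are paired with the Morawetz multipliers $\nab_{\Rhat}\psiao + r^{-1}\psiao$ and estimated by Cauchy--Schwarz; using that $\SS_1, \SS_2, \SS_3$ are built from the Killing derivatives $\nab_T, \nab_Z$, whose energies are controlled by the Killing-commuted Proposition \ref{prop:energy-estimates-conditional}, these integrals reproduce exactly the bilinear $(\EF_0[(\nab_T, r\nab)^{\leq 1}])^{1/2}(\EF_0[(\nab_T, r\nab)^{\leq 2}])^{1/2}$ contribution, and the $N_{1\aund}, N_{2\aund}$ pairings give the final line of the stated inequality.
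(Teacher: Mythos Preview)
Your overall strategy is correct and matches the paper: apply the generalized combined current of Section~\ref{sec:generalized-current} with $X^{\aund\bund}=\FF^{\aund\bund}(r)\partial_r$, choose $\FF^{\aund\bund}=zu^{\aund\bund}$ with $z=z_0-\de_0 z_0^2$ and the $u^{\aund\bund}$ of \cite{GKS}, use the adjointness of $C_1,C_2$ to handle the coupling, and obtain positivity of the bulk via Poincar\'e and Hardy inequalities.

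However, your explanation of the bilinear term
\[
\big(\EF_0[(\nab_T, r\nab)^{\leq 1}\psi_1,\psi_2]\big)^{1/2}\big(\EF_0[(\nab_T, r\nab)^{\leq 2}\psi_1,\psi_2]\big)^{1/2}
\]
is wrong. You claim it arises from pairing the commutator errors $O(|a|r^{-2})\dk^{\leq 2}(\psi_1+\psi_2)$ inside $N_{1\aund},N_{2\aund}$ with the Morawetz multipliers and applying Cauchy--Schwarz. That mechanism produces a \emph{bulk} spacetime integral, which is already part of $|a|\,\Mor^2[\psi_1,\psi_2]$; it cannot produce a product of \emph{flux} norms $\EF_0$. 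Proposition~\ref{prop:energy-estimates-conditional} plays no role here.

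In the paper the bilinear flux term has a different origin. The bulk $\EE^{(\textbf{Y},\textbf{w}_{\textbf{Y}},\textbf{J})}$ contains, after the choice $u^{\aund\bund}=-h\RRtp^{(\aund}\LL^{\bund)}$, off-diagonal quadratic forms such as $\Re(\SS_1\psi\cdot\OO\ov{\psi})$ and $\Re(\nab_r\SS_1\psi\cdot\nab_r\OO\ov{\psi})$. These are converted into the positive pieces $|q|^2|\nab\nab_T\psi|^2$ etc.\ only through an integration-by-parts lemma (Lemma~\ref{Lemma:integrationbypartsSS_3SS_4} and its consequence Lemma~\ref{general-computations-for-BB}), which generates boundary terms of the schematic form $\partial_t B(\psi)$ and $\Ddot_\b(|q|^{-2}O^{\a\b}\Ddot_\a\psi\cdot\SS_1\ov{\psi})$ with $|B(\psi)|\les|(\nab_T,r\nab)^{\leq 1}\psi|\,|(\nab_T,r\nab)^{\leq 2}\psi|$. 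Together with the boundary term $\BB^\a$ from Lemma~\ref{lemma:IntegrationbypartsP}, these are precisely the flux contributions whose control on $\partial\MM(0,\tau)$ gives the bilinear product. Your proposal omits this diagonalization step entirely; without it you cannot separate the $I,J,K$ quadratic forms into positive pieces, and you will not see where the bilinear flux term actually comes from.
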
 
\begin{proof}
The proof of Proposition \ref{prop:morawetz-higher-order} is obtained in Section \ref{sec:commuted-Morawetz}.
\end{proof}

Finally, we recall the following lemma which shows that the spacetime energy $\Mor_{\SS}[\psi_1, \psi_2](0, \tau)$ controls $\psi_1$, $\psi_2$. 
\begin{lemma}[Lemma 6.3.11 in \cite{GKS}]\lab{LEMMA:LOWERBOUNDPHIZOUTSIDEMTRAP}
For $\de_0>0$ small enough and $|a| \ll \de_0 M $, there exists a universal constant $c_0>0$ such that the following holds on $\Mntrap$:
\beaa
&&r^{3} \Big(  \big|   \nab_{\That}  \Psi_1 \big|^2+|\nab_{\That} \Psi_2|^2 + r^2|\nab\Psi_1|^2+r^2 |\nab \Psi_2|^2\Big)+r^{-3}\big( |\psi_1|_{\SS}^2+|\psi_2|_{\SS}^2\big) \\
&\geq& c_0r^{-3}\Big(|\nab_T\psi_1|^2_{\SS}+|\nab_Z\psi_1|^2_{\SS}+r^2|\nab\psi_1|^2_{\SS}+|\nab_T\psi_2|^2_{\SS}+|\nab_Z\psi_2|^2_{\SS}+r^2|\nab\psi_2|^2_{\SS}\Big)\\
&& -O(ar^{-3})\big( \big|(\nab_T, r\nab )^{\leq 1}\dk^{\leq 2}\psi_1\big|^2+\big|(\nab_T, r\nab )^{\leq 1}\dk^{\leq 2} \psi_2\big|^2\big)  +\Ddot_\a F^\a
\eeaa
where the 1-form $F$ denotes  an expression in  $\psi$ for which we  have a bound of the form
\beaa
&&\left|\int_{\pr\MM(0, \tau) }F^\mu N_\mu\right|\\
&\les &   \big(\EF_0[(\nab_T, r \nab)^{\leq 1} \psi_1, (\nab_T, r\nab)^{\leq 1} \psi_2](0, \tau)\big)^{1/2} \big(\EF_0[(\nab_T, r \nab)^{\leq 2} \psi_1, (\nab_T, r \nab)^{\leq 2} \psi_2] (0, \tau)\big)^{1/2}.
\eeaa 
\end{lemma}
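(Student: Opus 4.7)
The proof follows the strategy of Lemma 6.3.11 in \cite{GKS}, adapted to the tensorial system at hand. The key structural observation is that $T$ and $Z$ are Killing vectorfields for Kerr-Newman and commute with each symmetry operator $\SS_\aund$, while the scalar coefficients $\RRtp^\aund = \pr_r(z\RR^\aund/\De)$ depend only on $r$. Consequently
\begin{align*}
\nab_{\That} \Psi_i \;=\; \RRtp^\aund \SS_\aund (\nab_{\That}\psi_i), \qquad \nab \Psi_i \;=\; \RRtp^\aund \SS_\aund (\nab \psi_i),
\end{align*}
modulo commutator errors of size $O(|a|r^{-2})|\dk^{\leq 2}\psi_i|$ arising from the tensorial nature of $\OO$ in \eqref{eq:def-OO}.

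The central step is to invert the operator $\RRtp^\aund \SS_\aund$ at each spacetime point of $\Mntrap$. Using $\RR^{\a\b} = \RR^\aund S_\aund^{\a\b}$ from \eqref{eq:RR-Sa} together with the inverse metric formula \eqref{inverse-metric-Kerr}, one rewrites $\RR^\aund \SS_\aund \phi = \De\bigl(|q|^2 \squared \phi - \De\, \nab_r^2 \phi\bigr) + \mbox{l.o.t.}$, which yields
\begin{align*}
\RRtp^\aund \SS_\aund \phi \;=\; \pr_r z \cdot \OO\phi \;+\; \pr_r(z/\De)\,\De^2\, \nab_r^2 \phi \;+\; |q|^2 \pr_r(z/\De)\,\De\, \squared\phi \;+\; \mbox{l.o.t.}
\end{align*}
The choice $z = z_0 - \de_0 z_0^2$ ensures that $\pr_r z$ factors through the trapping polynomial $\TT$, so that the two leading radial pieces on the right form a non-degenerate positive combination on $\Mntrap$ once $\de_0$ is small and $|a| \ll \de_0 M$. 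The $\squared\phi$ contribution is absorbed either by invoking the model equations \eqref{final-eq-1-model}-\eqref{final-eq-2-model} (when $\phi = \psi_i$) or by integration by parts (when $\phi = \nab_{\That}\psi_i$ or $r\nab\psi_i$), producing the divergence $\Ddot_\a F^\a$ appearing in the statement.

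Applying this inversion with $\phi = \nab_{\That}\psi_i$ and $\phi = r\nab\psi_i$ converts control of $r^3|\nab_{\That}\Psi_i|^2 + r^5|\nab\Psi_i|^2$ into a pointwise lower bound for $r^{-3}\bigl(|\nab_T\OO\psi_i|^2 + |\nab_Z\OO\psi_i|^2 + r^2|\nab\OO\psi_i|^2\bigr)$ on $\Mntrap$, which recovers the $\SS_4$-piece of the claim. The remaining $\SS_\aund$-pieces for $\aund = 1, 2, 3$ follow from $r^{-3}|\psi_i|_\SS^2$ together with the Killing commutation $[\nab_T, \SS_\aund] = [\nab_Z, \SS_\aund] = 0$ (modulo lower order), while the angular derivatives $r\nab \SS_\aund \psi_i$ are recovered by a standard elliptic estimate for the principal symbol of $\OO$, whose $L^2$ norm has already been bounded in the previous step.

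The main obstacle will be the bookkeeping of the divergence 1-form $F$ produced when re-expressing $\squared\phi$ and $\nab_r^2\phi$ via integration by parts, so as to verify that $|\int_{\pr\MM(0,\tau)} F^\mu N_\mu|$ is controlled by the Cauchy--Schwarz product of the order-$1$ and order-$2$ commuted energy fluxes stated in the lemma. A secondary technical point is confirming that the commutators of $\nab_T,\nab_Z$ through the tensorial pieces of $\OO$ -- namely the horizontal Laplacian acting on horizontal $k$-tensors and the correction $(\eta+\etab)\c\nab$ -- generate only contributions of the acceptable size $O(|a|r^{-3})|\dk^{\leq 2}\psi_i|^2$, using the $O(|a|r^{-2})$-smallness of the non-integrability of the principal null horizontal structure in Kerr-Newman.
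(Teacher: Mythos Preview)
The paper does not prove this lemma; it is cited directly from \cite{GKS} (as Lemma 6.3.11 there), so there is no in-paper argument to compare against. Evaluated on its own, your approach has a genuine gap. Your displayed identity for $\RRtp^\aund\SS_\aund\phi$ is not correct: while $\RR^\aund\SS_\aund\phi = \De(|q|^2\squared\phi - \De\nab_r^2\phi) + \mbox{l.o.t.}$ holds at the symbol level, the passage to $\RRtp^\aund = \pr_r(z\RR^\aund/\De)$ differentiates each scalar $\RR^\aund$ separately, and the contributions $(z/\De)(\pr_r\RR^\aund)\SS_\aund$ for $\aund=1,2$ are genuine second-order terms in $\nab_T,\nab_Z$ that do not reassemble into your $\nab_r^2$, $\squared$ form with the stated coefficients. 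The honest expansion is the one recorded in \eqref{eq:Psiz-Explicit}. More seriously, even granting a formula of your shape, your ``inversion'' introduces $\nab_r^2\phi$ for $\phi=\nab_\That\psi_1$ or $r\nab\psi_1$; these are third-order quantities involving $\nab_\Rhat$, and the left-hand side of the lemma offers \emph{no} control of $\nab_\Rhat$-derivatives --- it contains only $\nab_\That\Psi_1$, $r\nab\Psi_1$, and the zeroth-order $|\psi_1|_\SS^2$. You therefore cannot isolate the $\OO$-piece as claimed. Your final step has the same defect: $r^{-3}|\psi_1|_\SS^2$ is zeroth-order in derivatives of $\SS_\aund\psi_1$ and cannot produce the first-order control $|\nab_T\SS_\aund\psi_1|^2$ required on the right-hand side for $\aund=1,2,3$.

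The mechanism in \cite{GKS}, mirrored throughout Section \ref{sec:quadratic=forms} of the present paper, does not attempt to invert anything. On $\Mntrap$ the coefficient $f$ in \eqref{eq:Psiz-Explicit} is nonvanishing, so pointwise $|\nab_\That\Psi_1|^2 \gtrsim f^2\big|\de_0\nab_\That\SS_1\psi_1 + \nab_\That\OO\psi_1\big|^2$ modulo $O(a)$. One expands the square and handles the cross term by the integration-by-parts identity of Lemma \ref{Lemma:integrationbypartsSS_3SS_4} (applied with $\psi\to\nab_\That\psi_1$), which converts $\Re(\OO\phi\cdot\SS_1\ov\phi)$ into the nonnegative $|q|^2|\nab\nab_T\phi|^2$ plus a spacetime divergence --- this is the origin of $\Ddot_\a F^\a$ --- and $O(a)$ errors. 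The full square then dominates $\de_0^2|\nab_\That\SS_1\psi_1|^2 + |\nab_\That\OO\psi_1|^2$ separately, with no $\nab_\Rhat$-terms ever entering; $r\nab\Psi_1$ is treated identically, and the $\SS_2,\SS_3$ contributions to the right-hand side are $O(a^2)$ times already-controlled quantities.
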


\subsubsection{New symmetry operators and commuted energy estimates for the model system}

We conclude the derivation of the combined energy and Morawetz estimates for the model system by deriving commuted energy estimates. 

Upon commuting with the projected Lie derivatives with respect to $T$ and $Z$ and with the operators $|q| \DD\hot$ and $|q| \ov{\DD}\c$ one obtains control on the first order commuted energies, conditional on the second order energy norms. This is obtained in Section \ref{sec:first-order-commuted-system}.

To obtain control on the second order energy norms, in addition to the second order Lie derivatives with respect to $T$ and $Z$, one needs to commute with the symmetry operator $\OO$ through the commuted quantities $\psiao$, $\psiat$ given by Lemma \ref{lemma:symmetry-operators-system}. Nevertheless, the terms appearing on the right hand side in Lemma   \ref{lemma:symmetry-operators-system} given by $ar^{-2}|\dk^{\leq 2}(\psi_1+\psi_2)|$ are problematic in the energy estimates. 
Indeed, in the trapping region, they give rise to bulk terms which are schematically given by $a \nabla_T(\psiao+\psiat) \cdot \dk^{\leq 2} (\psi_1+\psi_2)$, where $\nabla_T(\psiao+\psiat)$ is not controlled at trapping and $\dk^{\leq 2} (\psi_1+\psi_2)$ is at the same level of derivatives as $(\psiao+\psiat)$.

\begin{remark} In the case of axially symmetric solutions of the model system, one could absorb for small $|a|$ in the trapping region terms of the schematic form $\nabla_T \psi \c \psi$ by inserting a degenerate term at the effective photon sphere through integrating by parts in $\partial_r$ as follows (ignoring boundary terms):
\beaa
\nabla_T \psi \c \psi &=& \partial_r (r-r_{trap})\nabla_T \psi \c \psi =(r-r_{trap}) \nab_T \psi \c \nab_r \psi \les (r-r_{trap})^2|\nab_T \psi|^2 + |\nab_r \psi|^2,
\eeaa
where the last two terms can be controlled by the axially symmetric Morawetz bulk. On the other hand, in the general case, the trapped time derivative appears in the commuted bulk as $\nabla_T \Psi$ for $\Psi=\RRtp^{\aund} \psia$, which prevents this kind of simplification.
\end{remark}

As in \cite{GKS}, we are led to define a new set of symmetry operators with enhanced commutation properties with the model system, in particular such that the final right hand side only contains terms of the form $a r^{-2} \nabla^{\leq1}_{\Rhat} \dk^{\leq 1} (\psi_1+\psi_2)$. These terms are acceptable in the energy estimates due to the non-degeneracy of the $\nabla_\Rhat$ derivative in the Morawetz bulk, and so they can be absorbed by integration by parts in $T$ and $\Rhat$.

In contrast with \cite{GKS}, the modification of the symmetry operators here is particularly involved due to the coupling of the equations. Surprisingly, in order to have acceptable terms in the commutator (and to specifically avoid an imaginary potential in the equations, see Remark \ref{remark:imaginary-potential}), our modification for $|a|, |Q| \neq 0$ is non trivial even in the limiting case of $|a|=0$, where Lemma \ref{lemma:symmetry-operators-system} could instead be applied directly. 

We summarize in the following. 

\begin{lemma}\lab{lemma:theoperqatorwidetildeOOcommutingwellwtihRWmodel} Let $\psi_1, \psi_2$ be solutions of the model system \eqref{final-eq-1-model}-\eqref{final-eq-2-model}. Then the following commuted tensors
\bea
\widetilde{\psi}_1&:=&\OO\psi_1+i\frac{2a(r^2+a^2+|q|^2)\cos\th}{|q|^2}\nab_T\psi_1 +i\frac{2a^2\cos\th}{|q|^2}\nab_Z\psi_1-\frac{4Q^2|q|^2}{q^3}|q|\ov{\DD}\c\psi_2, \label{eq:def-widetilde-psi1}\\
\widetilde{\psi}_2&:=&\OO\psi_2-3|q|^2 \Kh \psi_2+i\frac{4a(r^2+a^2+|q|^2)\cos\th}{|q|^2}\nab_T\psi_2 +i\frac{4a^2\cos\th}{|q|^2}\nab_Z\psi_2+\frac 1 2\frac{|q|^2}{\ov{q}^3}|q|\DD \hot \psi_1, \label{eq:def-widetilde-psi2}
\eea
satisfy
\beaa
\squared_1\widetilde{\psi}_1-V_1\widetilde{\psi}_1&=&i  \frac{2a\cos\th}{|q|^2}\nab_T\widetilde{\psi}_1+4Q^2 C_1[\widetilde{\psi}_2]+\widetilde{N}_1\\
\squared_2\widetilde{\psi}_2-V_2\widetilde{\psi}_2&=&i  \frac{4a\cos\th}{|q|^2}\nab_T\widetilde{\psi}_2-   \frac {1}{ 2}C_2[\widetilde{\psi}_1]+\widetilde{N}_2
\eeaa
where $\widetilde{N}_1$, $\widetilde{N}_2$ are given by
\beaa
\widetilde{N}_1&=& O(|a|r^{-2})\nab^{\leq 1}_{\Rhat}\dk^{\leq 1}\psi_1+O(Q^2r^{-2})\nab^{\leq 1}_{\Rhat}\dk^{\leq 1}\psi_2+\dk^{\leq2} N_1 \\
\widetilde{N}_2&=&O(|a|r^{-2})\nab^{\leq 1}_{\Rhat}\dk^{\leq 1}(\psi_1,\psi_2)+ O(r^{-2}) \nab^{\leq1}_{\Rhat}( \dk^{\leq1} \psi_1) +\dk^{\leq2} N_2.
\eeaa
\end{lemma}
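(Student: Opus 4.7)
\emph{Proof proposal.} The plan is to start from Lemma \ref{lemma:symmetry-operators-system} applied with $\aund=4$, which tells us that $\OO\psi_1$ and $\OO\psi_2-3|q|^2\Kh\psi_2$ satisfy the model system up to errors controlled schematically by $|a|r^{-2}|\dk^{\leq 2}(\psi_1,\psi_2)|$. The point of this lemma is to re-examine those errors and isolate their precise structure, revealing that the only terms containing top-order $\nab_T$ or $\nab_Z$ derivatives (which are degenerate at trapping and thus unacceptable in the energy identity) can be exactly reabsorbed by adding suitable linear corrections to $\OO\psi_1$ and $\OO\psi_2-3|q|^2\Kh\psi_2$. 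All other terms contain at least one $\nab_{\Rhat}$ derivative, which is non-degenerate in the Morawetz bulk and therefore acceptable.

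First I would write out explicitly the commutators $[\squared_k,\OO]\psi_k$ for $k=1,2$, using the identity \eqref{eq:def-OO} and the identity $\RR^{\alpha\beta}=\RR^{\aund}\Sa^{\alpha\beta}$ from \eqref{eq:RR-Sa}. The scalar Carter operator commutes with the scalar wave operator in Kerr-Newman by the main result of \cite{Giorgi8}; for horizontal tensors, one picks up extra commutator terms coming from the horizontal connection and curvature which are precisely what produces the Gauss curvature subtraction $3|q|^2\Kh\psi_2$ and the two-tensorial mismatch. Next I would commute $\OO$ with each term on the right-hand side of the model system:
\begin{itemize}
\item[(i)] $[\OO, V_k]$ produces harmless lower-order terms of the form $O(r^{-2})\nab^{\leq1}_{\Rhat}\dk^{\leq1}\psi_k$, absorbed into $\widetilde{N}_k$.
\item[(ii)] The commutator $\bigl[\OO,\, i\tfrac{2ka\cos\theta}{|q|^2}\nab_T\bigr]\psi_k$ is the source of the problematic $\nab_T$ and $\nab_Z$ top-order terms. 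Using $\OO=|q|^2\Delta_k+\cdots$ and the fact that $\OO$ differentiates $\cos\theta$ and $|q|^{-2}$, one finds that these top-order pieces take the form $i\,c_T(r,\theta) a\cos\theta\,\nab_T\psi_k+i\,c_Z(r,\theta) a^2\cos\theta\,\nab_Z\psi_k$ with explicit coefficients. Matching these against $\bigl[\squared_k-V_k-i\tfrac{2ka\cos\theta}{|q|^2}\nab_T\bigr]$ applied to the candidate correction $i\,\alpha_T\nab_T\psi_k+i\,\alpha_Z\nab_Z\psi_k$ yields the algebraic equations forcing $\alpha_T=\frac{2ka(r^2+a^2+|q|^2)\cos\theta}{|q|^2}$ and $\alpha_Z=\frac{2ka^2\cos\theta}{|q|^2}$, which is exactly the factor appearing in \eqref{eq:def-widetilde-psi1}-\eqref{eq:def-widetilde-psi2}. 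What remains after cancellation is of the form $O(|a|r^{-2})\nab_{\Rhat}\dk^{\leq1}\psi_k$.
\item[(iii)] The commutators with the coupling operators $C_1[\psi_2]$ and $C_2[\psi_1]$ from \eqref{eq:C_1-C_2} produce the coupling corrections. Here one uses that $\OO$ acting on $\tfrac{\ov{q}^3}{|q|^5}(\ov\DD\c\psi_2)$ decomposes into a principal part $\tfrac{\ov{q}^3}{|q|^5}\ov\DD\c(\OO\psi_2)$ plus commutator terms, and similarly for $C_2$. The principal part is absorbed by $C_1[\OO\psi_2]$ on the right-hand side of the commuted equation for $\widetilde\psi_1$; but because $\OO\psi_2$ is itself $\widetilde\psi_2$ modulo the correction terms, one has to additionally cancel the effect of the Gauss-curvature term $-3|q|^2\Kh\psi_2$, the $\nab_T,\nab_Z$ corrections of $\widetilde\psi_2$, and the $|q|\DD\hot\psi_1$ piece. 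The specific scalar prefactors $-\frac{4Q^2|q|^2}{q^3}$ and $\frac{1}{2}\frac{|q|^2}{\ov{q}^3}$ in \eqref{eq:def-widetilde-psi1}-\eqref{eq:def-widetilde-psi2} are precisely chosen so that these cross-terms cancel with the commutator contributions from $\OO$ hitting the complex weights $\tfrac{\ov{q}^3}{|q|^5}$ and $\tfrac{q^3}{|q|^5}$ (using the adjointness relation of Lemma \ref{lemma:adjoint-operators} for the $(H-\Hb)\hot$ piece). The residual contributions are of size $O(Q^2r^{-2})\nab^{\leq 1}_{\Rhat}\dk^{\leq 1}\psi_2$ in the first equation and $O(r^{-2})\nab^{\leq1}_{\Rhat}\dk^{\leq1}\psi_1$ in the second.
\end{itemize}

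Finally I would combine the three computations above, verify that the leading terms match the structure of the model system with $\widetilde\psi_1,\widetilde\psi_2$ in place of $\psi_1,\psi_2$, and collect the residuals as $\widetilde{N}_1,\widetilde{N}_2$ with exactly the stated bounds. The main obstacle will be step (iii): the modified symmetry operators must be designed to commute simultaneously with the wave operator, the imaginary first-order term, and the coupling — and the coupling correction alone is a nontrivial object since its principal symbol mixes the two tensor slots and interacts with the Carter second derivative via the horizontal connection. As the remark following the lemma hints, even in the limit $a\to 0$ the modification by $-\frac{4Q^2|q|^2}{q^3}|q|\ov\DD\c\psi_2$ is forced in order to prevent a spurious imaginary potential from appearing in $\widetilde N_1$, so the construction cannot be obtained simply by specialization from any previously known Kerr result and must be verified term-by-term in Kerr-Newman.
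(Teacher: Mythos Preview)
Your overall strategy matches the paper's: start from the $\aund=4$ case of Lemma~\ref{lemma:symmetry-operators-system} (the paper's Lemma~\ref{lemma:system-hats}), write $\widetilde\psi_k=\hat\psi_k+f_k\nab_T\psi_k+g_k\nab_Z\psi_k+h_k|q|(\cdots)$, and choose $f_k,g_k,h_k$ to kill the top-order pieces not containing $\nab_{\Rhat}$. Two points in your outline are misidentified and would trip you up in the actual computation.

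First, in step~(ii) the dangerous $\nab\nab_T$ and $\nab\nab_Z$ terms do \emph{not} arise solely from $\bigl[\OO,\,i\tfrac{2ka\cos\theta}{|q|^2}\nab_T\bigr]$. An equal contribution comes from the tensorial commutator $[\OO,|q|^2\squared_k]$ itself (Proposition~\ref{prop:commutators-OO-squared}), which produces $i\,\nab\!\bigl(\tfrac{4ka(r^2+a^2)\cos\theta}{|q|^2}\bigr)\cdot\nab\nab_{\That}\psi_k$. Only after combining both sources and expanding $\That=T+\tfrac{a}{r^2+a^2}Z$ do you get the coefficients $\tfrac{2ka(r^2+a^2+|q|^2)\cos\theta}{|q|^2}$ and $\tfrac{2ka^2\cos\theta}{|q|^2}$; each commutator alone gives the wrong answer.

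Second, the mechanism in step~(iii) is not the adjointness relation of Lemma~\ref{lemma:adjoint-operators}; that lemma plays no role here. What actually happens is this: the commutator $[\OO,|q|^2C_1]$ contributes a first-order operator $\YY$ acting on $\ov\DD\cdot\psi_2$ (see \eqref{eq:LL}), and when you pass from $C_1[\hat\psi_2]$ to $C_1[\widetilde\psi_2]$ you pick up extra terms $\tfrac{\ov q^3}{|q|^5}(f_1-f_2)\nab_T+\tfrac{\ov q^3}{|q|^5}(g_1-g_2)\nab_Z$ acting on $\ov\DD\cdot\psi_2$. The key computation in the paper is that, with the specific $f_k,g_k$ already fixed in step~(ii), the sum $\YY+\tfrac{\ov q^3}{|q|^5}(f_1-f_2)\nab_T+\tfrac{\ov q^3}{|q|^5}(g_1-g_2)\nab_Z$ collapses, via explicit Kerr--Newman Ricci-coefficient identities ($H_1=\ov{\Hb_1}$, $H_2=-\ov{\Hb_2}$, $\trchb\etab_1=-\atrchb\,{}^*\etab_1$, etc.), to the single term $2|q|\,\nab_1\!\bigl(\tfrac{|q|^2}{q^3}\bigr)\nab_1$. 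One then chooses $h_1=-\tfrac{4Q^2|q|^2}{q^3}$ so that $\nab_1 h_1$ cancels this exactly; the symmetric choice $h_2=\tfrac12\tfrac{|q|^2}{\ov q^3}$ does the same on the $\psi_2$ side. These values simultaneously force $8Q^2\ov q^3 h_2+q^3 h_1=0$, which kills the second-order potential terms $|q|\ov\DD\cdot(\DD\hot\psi_1)$ and $|q|\DD\hot(\ov\DD\cdot\psi_2)$ --- this double role of $h_1,h_2$ is the content of Remark~\ref{remark:imaginary-potential} and is what makes the $O(Q^2)$ correction unavoidable even as $a\to0$. The remaining pieces $2\tfrac{\De}{|q|^2}\partial_r h_k\,\nab_r(\cdots)$ and $(\square_\g h_k+\cdots)(\cdots)$ are then of the form $O(Q^2r^{-2})\nab_{\Rhat}^{\le1}\dk^{\le1}\psi_2$ and $O(r^{-2})\nab_{\Rhat}^{\le1}\dk^{\le1}\psi_1$, as claimed.
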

\begin{proof}
The proof of Lemma \ref{lemma:theoperqatorwidetildeOOcommutingwellwtihRWmodel} is obtained in Section \ref{sec:proof-lemma-commutators-energy}.
\end{proof}

Using the above enhanced commutations, we prove the following commuted energy and Morawetz estimates.

\begin{proposition}
\lab{THM:HIGHERDERIVS-MORAWETZ-CHP3}
Let $\psi_1, \psi_2$ be solutions of the model system \eqref{final-eq-1-model}-\eqref{final-eq-2-model}. For $0<|a|, |Q| \ll M$, the following energy and Morawetz estimate holds true for all   $s \geq 2$:
\bea\label{eq:prop-almost-final}
\begin{split}
 & E^s[\psi_1, \psi_2](\tau) +F^s_{\HH^+}[\psi_1, \psi_2](0,\tau)+F^s_{\mathscr{I}^+, 0}[\psi_1, \psi_2](0,\tau)+\Mor^s[\psi_1, \psi_2](0, \tau) \\
  &\les  E^s[\psi_1, \psi_2](0)  +\mathcal{N}^s [\psi_1, \psi_2, N_1, N_2] (\tau_1, \tau_2)+\big(|a| + |Q| \big)\BEF^s_0 [\psi_1, \psi_2](0, \tau),
  \end{split}
\eea
where
\bea\label{eq:definition-NN-psi1psi2N1N2}
\begin{split}
\mathcal{N}[\psi_1, \psi_2, N_1, N_2] (\tau_1, \tau_2)&:= \int_{\MM(\tau_1, \tau_2)}\Big( | \nab_{\Rhat} \psi_1 | + r^{-1}|\psi_1| \Big)    |  N_1|+\Big( | \nab_{\Rhat} \psi_2 | + r^{-1}|\psi_2| \Big)    |  N_2|\\
&+\int_{\MM(\tau_1, \tau_2)}\Big( |N_1|^2+Q^2 |N_2|^2\Big)\\
&+\left|\int_{\Mtrap(\tau_1, \tau_2)} \Re\Big( \nabla_{\That_\chi}\ov{\psi_1} \c N_1+8Q^2  \nabla_{\That_\chi}\ov{\psi_2}  \c N_2\Big)\right|\\
&+\int_{\Mntrap(\tau_1, \tau_2)} \big( |D \psi_1||N_1|+ Q^2 |D \psi_2| |N_2| \big), 
\end{split}
\eea
and as usual 
       \beaa
       \mathcal{N}^s[\psi_1, \psi_2, N_1, N_2] (\tau_1, \tau_2)=\sum_{k_1, k_2, k_3, k_4 \leq s} \mathcal{N}[\dk^{k_1}\psi_1, \dk^{k_2}\psi_2, \dk^{k_3}N_1, \dk^{k_4}N_2] (\tau_1, \tau_2).
       \eeaa
\end{proposition}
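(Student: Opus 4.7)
The plan is to close all estimates at derivative level $s=2$ by a bootstrap argument, and then propagate to $s\geq 3$ by further commutation. At level $2$, three conditional statements are combined: conditional energy boundedness (Proposition \ref{prop:energy-estimates-conditional}), conditional Morawetz (Proposition \ref{proposition:Morawetz1-step1}), and commuted Morawetz (Proposition \ref{prop:morawetz-higher-order}). The smallness of $|a|,|Q|$ will allow the bootstrap terms of size $(|a|+|Q|)\BEF^2_0$ appearing on the right to be absorbed on the left. The modified symmetry operators of Lemma \ref{lemma:theoperqatorwidetildeOOcommutingwellwtihRWmodel} are precisely what is needed to make the commuted energy estimate work at level $2$.

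First, I would apply the conditional energy estimate of Proposition \ref{prop:energy-estimates-conditional} to the modified quantities $\widetilde{\psi}_1,\widetilde{\psi}_2$ from Lemma \ref{lemma:theoperqatorwidetildeOOcommutingwellwtihRWmodel}, which satisfy the same model system with right-hand sides $\widetilde{N}_1,\widetilde{N}_2$. The structure of these error terms is crucial: apart from $\dk^{\leq2} N_i$, they are of the form $O(|a|r^{-2})\nab^{\leq1}_{\Rhat}\dk^{\leq1}(\psi_1,\psi_2)$ together with $O(Q^2 r^{-2})\nab^{\leq1}_{\Rhat}\dk^{\leq1}\psi_2$ in $\widetilde{N}_1$. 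The only term lacking an explicit $|a|$ or $Q$ smallness factor, namely $O(r^{-2})\nab^{\leq1}_{\Rhat}\dk^{\leq1}\psi_1$ in $\widetilde{N}_2$, acquires the factor $8Q^2$ inherent in the combined current $\PP^{(X)}_\mu[\psi_1,\psi_2]$, yielding an effective $|Q|$ smallness. Because the non-degenerate $\Rhat$-derivative is present in the Morawetz bulk $\Mor^2[\psi_1,\psi_2]$, all these commutator contributions fit in the $(|a|+|Q|)\BEF^2_0$ reservoir.

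Next, I would apply Proposition \ref{prop:morawetz-higher-order} to $\psi_1,\psi_2$ to obtain $\Mor_{\SS}[\psi_1,\psi_2]$. For $\aund=1,2,3$ the commuted quantities $\SSa\psi_i$ are literally second-order derivatives in the Killing directions. For $\aund=4$ the quantities $\OO\psi_1$ and $\OO\psi_2-3|q|^2\Kh\psi_2$ differ from $\widetilde{\psi}_1,\widetilde{\psi}_2$ only by lower order correction terms carrying explicit $|a|$ or $Q^2$ factors, so their energies are equivalent modulo $(|a|+|Q|)\BEF^2_0$. Lemma \ref{LEMMA:LOWERBOUNDPHIZOUTSIDEMTRAP} then promotes $\Mor_{\SS}$ to full second-order Morawetz control, including the bulk terms $r^{-1}|\nab\psi_i|^2$ and $r^{-3}|\nab_T\psi_i|^2$ that appear (multiplied by $|a|$) on the right-hand side of the conditional Morawetz estimate. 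Adding Proposition \ref{proposition:Morawetz1-step1} applied to $\psi_1,\psi_2$ to the two commuted estimates and absorbing all $(|a|+|Q|)\BEF^2_0$ terms on the left then closes the $s=2$ case. Levels $s=0,1$ follow since they are dominated by $s=2$; alternatively, $s=1$ can be closed directly by commuting with $T,Z$ and with the elliptic angular operators $|q|\DD\hot$, $|q|\ov{\DD}\c$, with missing angular derivatives supplied by the elliptic estimates of Section \ref{section:elliptic-estimates} applied to $\squared_1,\squared_2$. The induction on $s\geq3$ then proceeds by further commutation with $\nab_T,\nab_Z$ and repeating the $\OO$ step on the higher-order variables exactly as in Step 1, using that $T,Z$ are Killing so that these commutations introduce no error.

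The hardest step will be the commuted energy analysis via the modified operators, and more precisely the careful bookkeeping of the interaction between the $\OO$-commutator, the coupling operators $C_1,C_2$, and the trapping geometry. The design of $\widetilde{\psi}_1,\widetilde{\psi}_2$ in \eqref{eq:def-widetilde-psi1}--\eqref{eq:def-widetilde-psi2} is engineered precisely so that the would-be dangerous $\nab_T$-commutators (which are non-controllable at trapping) cancel and only $\nab_\Rhat$-commutators survive; this is what underlies Lemma \ref{lemma:theoperqatorwidetildeOOcommutingwellwtihRWmodel}. Once its conclusion is in hand, the remaining difficulty is to correctly sequence the absorption of the $(|a|+|Q|)\BEF^2_0$ terms across the three conditional estimates so that no circular dependency obstructs the closure for $|a|,|Q|\ll M$.
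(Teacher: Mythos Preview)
Your approach is essentially the same as the paper's and the ingredients are correctly identified: conditional energy applied to the modified quantities $\widetilde{\psi}_1,\widetilde{\psi}_2$ of Lemma~\ref{lemma:theoperqatorwidetildeOOcommutingwellwtihRWmodel}, the commuted Morawetz of Proposition~\ref{prop:morawetz-higher-order} together with Lemma~\ref{LEMMA:LOWERBOUNDPHIZOUTSIDEMTRAP}, the first-order commutation with $T,Z$ and $|q|\DD\hot,\,|q|\ov{\DD}\c$, and then an induction on $s$.

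One correction is needed: you cannot absorb the $(|a|+|Q|)\BEF^2_0$ term on the left at this stage. The left-hand side of \eqref{eq:prop-almost-final} does not control all of $\BEF^2_0$; in particular the far-away bulk piece $\int_{\MM_{r\geq R}} r^{-1}|\nab_4\psi_i|^2$ sitting inside $B_0$ is not captured by $\Mor$ (which only yields $r^{-2}$ control of $\nab_4$ via $\nab_{\Rhat}$ and $\nab_3$ in $\Mntrap$). The paper accordingly keeps this term on the right, exactly as written in the statement, and the absorption is performed only later in Theorem~\ref{thm:estimates-psi1-psi2-NN} after the $r^p$ estimates have been added. With this adjustment your scheme goes through as described.
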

\begin{proof}
The proof of Proposition \ref{THM:HIGHERDERIVS-MORAWETZ-CHP3} is obtained in Section \ref{sec:commuted-energy-estimates}.
\end{proof}

Finally, by combining the above with $r^p$ estimates in the region $r\geq R$ for $R \gg 4M$ large enough, we obtain the following final result for the model system.
       \begin{theorem}[Energy-Morawetz estimates for the model system]\label{thm:estimates-psi1-psi2-NN}
       Let $\psi_1, \psi_2$ be solutions of the model system \eqref{final-eq-1-model}-\eqref{final-eq-2-model}. For $0<|a|, |Q| \ll  M$, the following combined energy and Morawetz estimate holds true for $0\le p< 2$ and $s \geq 2$:
       \beaa
     \BEF^s_p[\psi_1, \psi_2](0,\tau) \les
       E_p^s[\psi_1, \psi_2](0)+\NN_p^s[\psi_1, \psi_2, N_1, N_2](0, \tau),
       \eeaa
       where 
       \beaa
\NN_p[\psi_1, \psi_2, N_1, N_2](\tau_1, \tau_2) &=& \NN[\psi_1, \psi_2, N_1, N_2](\tau_1, \tau_2)\\
&&+\left| \int_{\MM_{r\geq R}}  r^{p-1}  \, \nab_4 (r\psi_1 ) \c  N_1+ r^{p-1}  \, \nab_4 (r\psi_2 ) \c  N_2\right|.
\eeaa
       \end{theorem}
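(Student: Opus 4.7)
\emph{Plan of proof.} The strategy is to combine the commuted energy–Morawetz estimate of Proposition~\ref{THM:HIGHERDERIVS-MORAWETZ-CHP3} with a weighted $r^p$ hierarchy in the asymptotic region $r\geq R$. First I would absorb the last term in \eqref{eq:prop-almost-final}: since that estimate contains $(|a|+|Q|)\BEF^s_0[\psi_1,\psi_2](0,\tau)$ on the right and the corresponding quantities on the left (up to reordering between $E^s$, $F^s_{\HH^+}$, $F^s_{\mathscr{I}^+,0}$, $\Mor^s$), for $|a|,|Q|\ll M$ one may absorb it into the left to obtain the unweighted estimate
\begin{equation*}
\BEF^s_0[\psi_1,\psi_2](0,\tau) \les E^s[\psi_1,\psi_2](0) + \mathcal{N}^s[\psi_1,\psi_2,N_1,N_2](0,\tau).
\end{equation*}
This is the $p=0$ case of the theorem.

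For $0<p<2$, I would apply the $r^p$ method of Dafermos–Rodnianski to the combined energy–momentum tensor introduced in Section~\ref{sec:model-system}. Multiplying the first equation of the model system by $r^p\,\nab_4(r\ov{\psi_1})$ and the second by $8Q^2 r^p\,\nab_4(r\ov{\psi_2})$, and taking real parts, one produces (after the usual integration by parts in $\nab_4$ and commutation with $r$) bulk terms controlling $\int_{\MM_{r\geq R}} r^{p-1}(|\nab_4\psi_i|^2+|\nab\psi_i|^2+r^{-2}|\psi_i|^2)$ together with a boundary energy flux of the form appearing in $E_p$ and $F_{\mathscr{I}^+,p}$. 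The contributions from $N_1,N_2$ give exactly the extra term $\int_{\MM_{r\geq R}} r^{p-1}\nab_4(r\psi_i)\c N_i$ added to $\NN_p$. Crucially, the coupling terms $C_1[\psi_2]$ and $C_2[\psi_1]$ are handled by the same spacetime adjointness identity underlying the combined energy–momentum tensor (Proposition~\ref{prop:general-computation-divergence-P}): when integrated against the $r^p$-weighted multipliers, the highest-order pieces cancel between the two equations up to a spacetime divergence, producing only zeroth-order errors with a weight $r^{p-3}$ that are controlled either by Hardy in the $r\geq R$ region or by the $p=0$ Morawetz bulk. A standard cut-off $\chi_{nt}$ localizes the $r^p$ multiplier to $r\geq R$, and the resulting cut-off errors are supported in a compact region and thus absorbed by $\BEF^s_0$, hence by the $p=0$ estimate already established.

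To upgrade to $s\geq 2$, the $r^p$ multiplier method is applied to each of the commuted quantities $(\dk^{k_1}\psi_1,\dk^{k_2}\psi_2)$ with $k_1,k_2\leq s$. The commutators $[\dk,\square_i]$ and $[\dk,C_i]$ preserve the structural symmetry of the system up to lower-order terms that either decay faster in $r$ or involve only $\nab_{\Rhat}$-derivatives, hence are absorbable using the already-established Morawetz bulk together with the non-degeneracy of $\nab_{\Rhat}$ in $B_p^s$. The weighted boundary energy flux at null infinity from the commuted multiplier contributes to $F^s_{\mathscr{I}^+,p}$, and the resulting chain of inequalities closes.

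The main obstacle I expect is the compatibility of the $r^p$ weighted multiplier with the coupling operators $C_1,C_2$: one must check that the adjointness identity of Lemma~\ref{lemma:adjoint-operators} continues to produce a full cancellation of the two-derivative coupling pieces when the multiplier carries the extra weight $r^p$, with the uncancelled remainders falling strictly at a lower order $r^{p-3}|\psi|\cdot|\nab\psi|$ that can be controlled by the combined Morawetz bulk and a Cauchy–Schwarz against $r^{p-1}|\nab\psi|^2$. Once this structural check is carried out, the assembly of the hierarchy is essentially routine.
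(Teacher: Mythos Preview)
Your overall strategy---combine Proposition~\ref{THM:HIGHERDERIVS-MORAWETZ-CHP3} with an $r^p$ hierarchy in $r\geq R$ and then absorb the small $(|a|+|Q|)$ term---is the same as the paper's. Two points are worth noting, however.

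First, your ordering has a small technical gap. You propose to absorb $(|a|+|Q|)\BEF^s_0$ directly into the left side of \eqref{eq:prop-almost-final} \emph{before} running the $r^p$ multiplier. But the left side of \eqref{eq:prop-almost-final} is $\EF^s_0+\Mor^s$, while $\BEF^s_0=\EF^s_0+B^s_0$; the bulk $B^s_0$ contains $\int_{\MM_{r\geq R}} r^{-1}|\nab_4\psi_i|^2$, which is not present in $\Mor^s$ (the latter only gives $r^{-2}$ control of $\nab_{\Rhat}$ and $\nab_3$ in the far region). So the absorption does not close at this stage. The paper avoids this by reversing the order: it adds the $r^p$ current (with the specific choice $X=f(r)\big(e_4+\tfrac12 r^{-2}\lambda e_3\big)$, $w=\tfrac{2r}{|q|^2}f$, $J=\tfrac{2r}{|q|^2}f' e_4$, $f=r^p$ cut off to $r\geq R$, as in \cite{GKS} Section~10.3) \emph{first}, obtaining $\BEF^s_p$ on the left with $(|a|+|Q|)\BEF^s_0$ still on the right, and \emph{then} absorbs using the monotonicity $\BEF^s_0\leq\BEF^s_p$.

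Second, your concern about needing the adjointness identity of Lemma~\ref{lemma:adjoint-operators} to handle $C_1,C_2$ under the $r^p$ multiplier is more than what is actually required. In the region $r\geq R$ the coupling operators carry an extra factor $|q|^{-2}\sim r^{-2}$, so their contribution to the $r^p$ bulk is of size $r^{p-2}|\nab_4(r\psi_i)|\cdot|\nab\psi_j|$, which is directly controlled by Cauchy--Schwarz against the $r^{p-1}$ bulk and the unweighted Morawetz; no cancellation is needed. This is why the paper simply remarks that the asymptotic structure matches the Kerr gRW equation and defers to \cite{GKS}.
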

\begin{proof}
Notice that the asymptotic properties of the model system \eqref{final-eq-1-model}-\eqref{final-eq-2-model} are identical to the ones of the model gRW equation in Kerr \cite{GKS}. In particular, the $r^p$ estimates are obtained in the same way as in \cite{GKS}, by applying the vectorfield method to $X=f(r)\big( e_4 + \frac 12 r^{-2} \lambda e_3 \big)$, $w=\frac{2r}{|q|^2} f$, $J=\frac{2r}{|q|^2}f' e_4$ for $f=r^p$ for $r\geq R$ and $f=0$ for $r \leq R/2$ where $R$ is a sufficiently large constant. For more details see Section 10.3 in \cite{GKS}.

By adding such current to \eqref{eq:prop-almost-final} one then obtains
\beaa
     \BEF^s_p[\psi_1, \psi_2](0,\tau) \les
       E_p^s[\psi_1, \psi_2](0)+\NN_p^s[\psi_1, \psi_2, N_1, N_2](0, \tau)+\big(|a| + |Q| \big)\BEF^s_0 [\psi_1, \psi_2](0, \tau),
\eeaa
and for $|a|, |Q|$ small enough, one may absorb the last term on the right hand side from the left hand side and prove Theorem \ref{thm:estimates-psi1-psi2-NN}.
\end{proof}

\subsubsection{Estimates for the gRW system and proof of Theorem \ref{theorem:unconditional-result-final}}\label{sec:recap-estimates-gRW}

Recall that Theorem \ref{thm:estimates-psi1-psi2-NN} obtained above concerns solutions to the model system \eqref{final-eq-1-model}-\eqref{final-eq-2-model} for unspecified right hand sides $N_1$, $N_2$. We will apply the above to the case of the gRW system with the corresponding right hand sides given as in Theorem \ref{main-theorem-RW}.

The proof of Theorem \ref{theorem:unconditional-result-final} for the gRW system is obtained as a consequence of the following two Propositions.

The first Proposition gives an estimate for the term $\NN_p^s[\pf, \qf, N_1, N_2](0, \tau)$ appearing on the right hand side of Theorem \ref{thm:estimates-psi1-psi2-NN}, by bounding differently the various terms in its definition with particular attention to the terms of the form $\nab_{\That_\chi} \ov{\psi} \c N$, which require integration by parts in $\That_\chi$ in the trapping region and the use of the special structure of $L_1$ and $L_2$ to obtain crucial cancellations. We have the following:

\begin{proposition}\label{lemma:crucial1} Let $\pf, \qf$ be solutions of the gRW system \eqref{final-eq-1}-\eqref{final-eq-2}. Then, the following holds true for $0\le p< 2$, $s \geq 2$ and any $\delta>0$:
 \bea
\NN_p^s[\pf, \qf, N_1, N_2](0, \tau)
  \les |a| \BEF^s_\de[\pf, \qf, \Bfr, \Ffr, A, \Xfr] (0, \tau),
  \eea   
  where $N_1=  O(a^2 r^{-4}) \psi_1 + L_1$ and $N_2= O(a^2 r^{-4}) \psi_2 + L_2$, with $L_1$ and $L_2$ given by Theorem \ref{main-theorem-RW}.
\end{proposition}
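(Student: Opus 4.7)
The plan is to bound the four constituents of $\NN_p^s$ -- the nontrapping bulk factors $(|\nab_{\Rhat}\psi|+r^{-1}|\psi|)|N|$, the quadratic $L^2$ terms $|N_1|^2+Q^2|N_2|^2$, the trapping integral involving $\nab_{\That_\chi}\ov\psi\c N$, and the $r^p$-weighted infinity term -- separately against $|a|\BEF^s_\de[\pf,\qf,\Bfr,\Ffr,A,\Xfr](0,\tau)$. The key structural observation is that every summand in $L_1$ and $L_2$ from \eqref{eq:definition-L1}--\eqref{eq:definition-L2} carries at least one factor of $|a|$: the derivative terms $\nab_T\Bfr,\nab_Z\Bfr,\nab_T\Ffr,\nab_Z\Ffr$ are weighted by $a^2$ or $a$, the coupling scalars satisfy $L_{coupl}=O(\atrchb)=O(|a|)$ and $L_{A}=O(|a|^2)$, the remaining terms in $\Bfr,\Ffr,\Xfr,A$ are explicitly $O(|a|)$, while the leftover $O(a^2 r^{-4})\psi_i$ part of $N_i$ is $O(a^2)$. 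For the nontrapping bulk factors, the quadratic terms, and the $r^p$-infinity term, I would therefore apply weighted Cauchy--Schwarz directly, pairing the nondegenerate components $|\nab_{\Rhat}\pf|,\, r^{-1}|\pf|,\, |\nab_4(r\pf)|$ (controlled respectively in $\Mor$ and in the $r^p$-bulk) against the first-order derivatives of $\Bfr,\Ffr,A,\Xfr$ appearing in $L_i$; the explicit $|a|$ prefactor converts each contribution into one bounded by $|a|\BEF^s_\de$, with the order $s\ge 2$ dictated by the fact that $L_i$ is first order in $\Bfr,\Ffr,A,\Xfr$ and $\NN^s$ involves up to $s$ derivatives of $N_i$, matched by the norms $E^s_p[\Bfr,\Ffr]$ and $E^s_p[A,\Xfr]$ inside $\BEF^s_\de$.

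The delicate step is the trapping integral over $\Mtrap$, since neither $|\nab_T\pf|$ nor $|\nab_T\qf|$ is pointwise controlled by $\Mor$ at the photon sphere. The structural input is that by construction of the cutoff $\chi$ in \eqref{eq:definition-That-chi}, $\chi\equiv 0$ on $\Mtrap$, so $\That_\chi=T$ there, and $T$ is Killing. I would therefore integrate by parts in $T$,
$$\int_{\Mtrap}\Re\big(\nab_T\ov\pf\c L_1\big)=-\int_{\Mtrap}\Re\big(\ov\pf\c\nab_T L_1\big)+\mathcal{B}_1,$$
and similarly for the $\qf,L_2$ pairing, where $\mathcal{B}_1$ collects boundary terms on $\Si_0\cup\Si_\tau$ (absorbed in $\EF$) and on the timelike boundary $\pr\Mtrap=\{r=A_1\}\cup\{r=A_2\}$ (absorbed by the nondegenerate $\nab_{\Rhat}$ components of $\Mor$ there). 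No commutator appears, since $T$ commutes with $\nab_3,\nab_4,\DDc$ and with the Boyer--Lindquist coefficients of $L_1,L_2$. The resulting pairings $\ov\pf\c\nab_T L_1$ and $\ov\qf\c\nab_T L_2$ are schematically of the form $a^2\ov\pf\c\nab_T^2\Bfr$, $a\ov\pf\c\nab_T\nab_Z\Bfr$, $|a|\ov\pf\c\nab_T(\ov\DDc\c\Ffr)$ and similar second-order quantities in $\Bfr,\Ffr,A,\Xfr$, each still carrying the $|a|$ factor and absorbed into $|a|\BEF^s_\de$ via Cauchy--Schwarz for $s\ge 2$.

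Two structural identities underlie the remaining cancellations. First, after the $T$-integration by parts, the coupling contributions
$$8Q^2\int_{\Mtrap}\ov\pf\c\big(q^{-3/2}\ov q^{5/2}\ov L_{coupl}\,\ov\DDc\c\nab_T\Ffr\big)+8Q^2\int_{\Mtrap}\ov\qf\c\big(q\ov q^2 L_{coupl}\,\DDc\hot\nab_T\Bfr\big)$$
are rewritten using the spacetime adjointness of $\DDc\hot$ and $\ov\DDc\c$ from Lemma \ref{lemma:adjoint-operators}: the matching factors $L_{coupl}$ and $\ov L_{coupl}$ together with the weight $|q|^{-5}$ that makes $C_1,C_2$ spacetime-adjoint produce the same cancellation of top-order cross-derivatives as in the divergence computation of Proposition \ref{prop:general-computation-divergence-P}, modulo a spacetime divergence that becomes an additional boundary contribution on $\pr\Mtrap$ and lower order pieces proportional to $H\pm\Hb$. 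Second, the leading $\tfrac{2a^2\De}{r^2|q|^4}\nab_T\Bfr$ and $\tfrac{2a\De}{r^2|q|^4}\nab_Z\Bfr$ terms in $L_1$ (and their $\Ffr$ analogues in $L_2$) are reduced to boundary terms on $\Si_0\cup\Si_\tau$ by invoking the Chandrasekhar definitions \eqref{eq:definition-pf}--\eqref{eq:definition-qfF} together with the identity $\nab_T=\tfrac{|q|^2}{2(r^2+a^2)}\big(\nab_3+\tfrac{\De}{|q|^2}\nab_4\big)-\tfrac{a}{r^2+a^2}\nab_Z$, which expresses $\nab_T\Bfr$ as $\pf$ plus transport derivatives. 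The main obstacle I anticipate is the bookkeeping of these cancellations on $\Mtrap$ alone (rather than on all of $\MM$), and in particular controlling with the correct weights the timelike boundary contributions on $\pr\Mtrap$ produced both by the $T$-integration by parts and by the spacetime-adjointness divergence.
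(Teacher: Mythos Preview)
Your overall strategy matches the paper's: decompose $\NN^s_p$ into the nontrapping bulk, quadratic, $r^p$-exterior, and trapping pieces; bound the first three by weighted Cauchy--Schwarz using the explicit $|a|$ prefactor in every term of $L_1, L_2$; and for the trapping piece integrate by parts in $T$ (using $\That_\chi = T$ on $\Mtrap$), then invoke the Chandrasekhar relation and the adjointness of the coupling operators. You have correctly identified all the structural ingredients.

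There is, however, a genuine gap in the derivative count at the trapping step. After one integration by parts in $T$, the term $\dk^s\ov\pf \c \nab_T\dk^s N_1$ contains, from the $a\nab_Z\Bfr$ part of $L_1$, a contribution of the schematic form $a\,\dk^s\ov\pf \c \dk^{s+2}\Bfr$. This cannot be absorbed by Cauchy--Schwarz into $|a|\BEF^s_\de$, since $\BEF^s_\de[\Bfr,\Ffr]$ controls at most $\dk^{s+1}\Bfr$. Your proposed remedy --- ``express $\nab_T\Bfr$ as $\pf$ plus transport derivatives'' --- goes in the wrong direction: substituting $\nab_T\Bfr = c\,\pf + c'\nab_4\Bfr + c''\nab_Z\Bfr + \ldots$ into $L_1$ turns the $a^2\nab_T\Bfr$ contribution into a boundary term $\tfrac{a^2}{2}\nab_T|\pf|^2$, but leaves the $a\nab_Z\Bfr$ term (and new $\nab_4\Bfr$ residuals) that still lose a derivative after $\nab_T$.

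The paper's mechanism runs the substitution the \emph{other} way. After the first integration by parts, one writes $T = \Rhat + \tfrac{\De}{r^2+a^2}e_3 - \tfrac{a}{r^2+a^2}Z$ inside $\nab_T N_{1,2}$, integrates the resulting $\nab_T^2$ and $\nab_Z^2$ pieces by parts once more (in $T$ and $Z$ respectively), and \emph{then} substitutes the Chandrasekhar definition $\psi_1 = q^{1/2}\ov q^{9/2}\nab_3\Bfr + O(r^4)\Bfr$ into the $\ov\pf$ factor. The point is that each surviving product is now of the form $\nab_3(\dk^k\ov{\nab_Y\Bfr}) \c \dk^k\nab_Y\Bfr$ or $\nab_Z(\dk^k\ov{\nab_3\Bfr}) \c \dk^k\nab_3\Bfr$ with $Y\in\{T,Z\}$, i.e.\ an exact $e_3$- or $Z$-derivative of $|\dk^{k+1}\Bfr|^2$. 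These become boundary terms at level $\dk^{s+1}$, which \emph{is} controlled by $\EF^s$. The same substitution $\psi_i \to \nab_3(\Bfr,\Ffr)$ in the $\ov\pf,\ov\qf$ factors is also what makes the adjointness cancellation for the $L_{coupl}$ terms close at the correct derivative level.
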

\begin{proof}
The proof of Proposition \ref{lemma:crucial1} is obtained in Section \ref{sec:proof-lemma-crucial1}. 
\end{proof}

The second Proposition gives a bound for the norms of $\Bfr, \Ffr, A, \Xfr$ in terms of the energies of $\pf, \qf$ through transport estimates applied to the Chandrasekhar transformation \eqref{eq:definition-pf}-\eqref{eq:definition-qfF} and elliptic estimates for the Teukolsky equations for $\Bfr, \Ffr$. We have the following:

\begin{proposition}\label{lemma:crucial2} Let $\Bfr, \Ffr, A, \Xfr$ be solutions of the Teukolsky system in the form \eqref{Teukolsky-B}-\eqref{Teukolsky-F} coupled with the transport equations \eqref{relation-F-A-B}-\eqref{nabc-3-mathfrak-X} and let $\pf, \qf$ defined by \eqref{eq:definition-pf}-\eqref{eq:definition-qfF}. For $|a| \ll M$, the following transport estimate holds true for $0< p< 2$ and $s \geq 2$:
 \bea
 \lab{eq:transportA}
 \BEF^s_p[\Bfr, \Ffr, A, \Xfr] (0, \tau)&\les&   \BEF^s_p[\pf, \qf](0,\tau)  + E^s_p[\Bfr, \Ffr, A, \Xfr] (0).
 \eea
\end{proposition}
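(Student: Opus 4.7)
The plan is to recover $\Bfr, \Ffr, A, \Xfr$ and their $\dk$-derivatives from $\pf, \qf$ through a cascade of transport estimates in the incoming direction $e_3$ combined with purely algebraic identities for the outgoing derivatives. The starting point is that the Chandrasekhar transformation \eqref{eq:definition-pf}--\eqref{eq:definition-qfF} can be inverted pointwise to give
\beaa
\nab_3\Bfr \;=\; q^{-1/2}\ov{q}^{-9/2}\,\pf \;-\; \Big(2\trchb - \tfrac 12 \tfrac{(\atrchb)^2}{\trchb} - \tfrac 52 i\atrchb\Big)\Bfr,
\eeaa
and analogously for $\nab_3\Ffr$. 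After multiplying by the appropriate power of $r$, this is a first-order transport equation along $e_3$ for a rescaled $\Bfr$, with source $\pf$ and a zero-order coefficient of favourable sign. A Gronwall argument along $e_3$ starting from $\Sigma_0$, in which the $r^{p}$ weight of $E_p[\pf,\qf]$ combines with the $|q|^{5}$ factor from the Chandrasekhar normalisation to yield the $r^{p+8}$ weight of $E_p[\Bfr,\Ffr]$ (and similarly $r^{p}\cdot|q|^{3}\rightsquigarrow r^{p+4}$ for $\Ffr$), produces bounds for $\Bfr, \nab_3\Bfr$ and for $\Ffr, \nab_3\Ffr$ on every $\Sigma_\tau$, on $\HH^+$, and in spacetime, each controlled by $\BEF_p[\pf,\qf](0,\tau) + E_p[\Bfr,\Ffr](0)$.

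Once $\Bfr,\Ffr$ are controlled, one proceeds to $A,\Xfr$ via the transport equations \eqref{relation-F-A-B} and \eqref{nabc-3-mathfrak-X}. Dividing \eqref{relation-F-A-B} by the nowhere-vanishing $\PF = O(r^{-2})$, both become first-order transport equations in $e_3$ of the schematic form
\beaa
\nab_3 A + O(r^{-1}) A \;=\; (\PF)^{-1}\bigl(\tfrac 12 \DDc\hot\Bfr + \mbox{l.o.t.}(\Bfr,\Ffr)\bigr), \qquad \nab_3\Xfr + O(r^{-1})\Xfr \;=\; -\ov{\DDc}\c\Ffr + \mbox{l.o.t.}(\Bfr,\Xfr),
\eeaa
with sources involving one angular derivative of $\Bfr,\Ffr$. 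The angular derivatives $\nab\Bfr,\nab\Ffr$ needed here are produced by commuting $\nab$ with the inverted Chandrasekhar identity and running the same $e_3$-transport argument, absorbing the commutator $[\nab_3,\nab]\Bfr$ as a Gronwall lower-order term driven by $\nab\pf, \nab\qf$. Integrating the displayed equations from $\Sigma_0$ then yields the required norms of $A, \nab_3 A, \nab A$ and $\Xfr, \nab_3\Xfr, \nab\Xfr$. The outgoing derivatives $\nab_4\Bfr, \nab_4\Ffr$ are not integrated at all: they are recovered \emph{algebraically} from \eqref{relation-F-B-A} and \eqref{nabb-4-mathfrak-B}, which express them as polynomials in $A, \Xfr, \DDc\hot\Xfr, \ov{\DDc}\c A$ with no derivatives of $\Bfr,\Ffr$ on the right-hand side, so the bounds from the previous step transfer directly.

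Higher-order estimates for $s\geq 2$ follow by commuting the above three-step scheme with the weighted derivatives $\dk=(\nab_3, r\nab_4, r\nab)$; the commutators produce terms of the same or lower $\dk$-order and are closed by induction on $s$. The $r^p$-hierarchy for $0< p< 2$ is then automatic once the weights in the transport identities are aligned with the $r^{p+8}$, $r^{p+4}$ and $r^{p+3}$ weights prescribed in Section \ref{sec:def-weighted-energies}.

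The main obstacle is the compatibility of weights at both ends of the $e_3$-integration. At null infinity, $\pf,\qf$ are controlled only with $r^{p}$-weighted norms, so the $r^{p+8}$ and $r^{p+4}$ weights demanded of $\Bfr,\Ffr$ must be generated purely by the $|q|^{5}, |q|^{3}$ factors in the Chandrasekhar transformation together with the Hardy-type boundary contribution at $\mathscr{I}^+$; any loss here would force $p=0$. At the horizon, the $\De$-dependent factors appearing implicitly in the transport coefficients must remain regular so that the Gronwall constants are uniform in $\tau$ down to $\HH^+$. Both issues are reconciled by the specific choice of weights $q^{1/2}\ov{q}^{9/2}$ and $q\ov{q}^2$ in \eqref{eq:definition-pf}--\eqref{eq:definition-qfF}, but tracking them carefully through the induction on $\dk$-order (and in particular ensuring that the angular derivatives $\nab\Bfr,\nab\Ffr$ used to source the transport equations for $A,\Xfr$ are controlled with constants independent of $\tau$) is the technically delicate core of the argument.
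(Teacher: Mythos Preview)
Your plan has a genuine gap at the step where you recover the angular derivatives $\nab\Bfr,\nab\Ffr$ by commuting $\nab$ with the inverted Chandrasekhar transport equation. The resulting source terms are $\nab\pf,\nab\qf$, but in $\BEF_p^s[\pf,\qf]$ the angular derivatives $|\nab\psi|^2$ appearing in the Morawetz bulk $\Mor[\psi_1,\psi_2]$ are controlled only on $\Mntrap$; the spacetime integral $\int_{\Mtrap}|\nab(\dk^{s}\pf)|^2$ is not bounded by $\BEF_p^s[\pf,\qf]$ but only by $\BEF_p^{s+1}[\pf,\qf]$ (via the non-degenerate zeroth-order term $r^{-3}|\dk^{s+1}\pf|^2$). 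This forces a loss of one derivative at the top order and the estimate cannot close with the same $s$ on both sides. The paper avoids this by never commuting angular derivatives into the transport at the base step: instead it uses the Teukolsky wave equations \eqref{Teukolsky-B}--\eqref{Teukolsky-F} elliptically, multiplying by $\ov\Bfr,\ov\Ffr$ and integrating by parts to obtain $|\nab\Bfr|^2,|\nab\Ffr|^2$ in terms of $\nab_3\Bfr\cdot\nab_4\Bfr$, $\nab_3\Ffr\cdot\nab_4\Ffr$ plus lower order. Since $\nab_3,\nab_4$ derivatives of $\Bfr,\Ffr$ (coming from the transport) are non-degenerate at trapping, this recovers the angular derivatives without loss. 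Your proposal does not invoke the Teukolsky equations at all; this is the missing ingredient.

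There is also a circularity in your recovery of $\nab_4\Bfr,\nab_4\Ffr$ versus $\nab A,\nab\Xfr$. The relations \eqref{relation-F-B-A} and \eqref{nabb-4-mathfrak-B} express $\nab_4\Ffr,\nab_4\Bfr$ in terms of $\DDc\hot\Xfr$ and $\ov{\DDc}\c A$ --- i.e.\ they \emph{require} angular derivatives of $A,\Xfr$, which in your ordering have not yet been obtained (commuting the $e_3$-transport for $A,\Xfr$ would need $\nab^2\Bfr,\nab^2\Ffr$, again a loss). The paper reverses the logic: it first obtains $\nab_4\Bfr,\nab_4\Ffr$ from the $e_3$-transport commuted with $\nab_{\Rhat},\nab_4$ (Lemma~\ref{lemma:general-transport-estimate}), then reads \eqref{relation-F-B-A}--\eqref{nabb-4-mathfrak-B} in the opposite direction to extract $\DDc\hot\Xfr,\ov{\DDc}\c A$, and finally uses bilinear elliptic identities (the analogues of \eqref{eq:elliptic-estimates-psi1}--\eqref{eq:elliptic-estimates-psi2} paired between $\Bfr,\Xfr$ and $\Ffr,A$) to convert these into full control of $\nab\Xfr,\nab A$.
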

\begin{proof}
The proof of Proposition \ref{lemma:crucial2} is obtained in Section \ref{sec:proof-lemma-crucial2}.
\end{proof}

We can finally combine the above to obtain the proof of Theorem \ref{theorem:unconditional-result-final}.

 \begin{proof}[Proof of Theorem \ref{theorem:unconditional-result-final}]
 By combining the above with Theorem \ref{thm:estimates-psi1-psi2-NN} applied to $\pf, \qf$ solutions to the gRW system, we deduce
        \beaa
     \BEF^s_p[\pf, \qf](0,\tau) &\les&
       E_p^s[\pf, \qf](0)+\NN_p^s[\pf, \qf, N_1, N_2](0, \tau)\\
       &\stackrel{\text{Prop. \ref{lemma:crucial1}}}{\les}&
       E_p^s[\pf, \qf](0)+|a| \BEF^s_\de[\pf, \qf, \Bfr, \Ffr, A, \Xfr] (0, \tau)\\
          &\stackrel{p\geq \delta}{\les}&
       E_p^s[\pf, \qf](0)+|a|\Big( \BEF^s_p[\pf, \qf] (0, \tau)+ \BEF^s_p[\Bfr, \Ffr, A, \Xfr] (0, \tau)\Big)\\
              &\stackrel{\text{Prop. \ref{lemma:crucial2}}}{\les}&
       E_p^s[\pf, \qf](0)+|a|\Big(\BEF_p^s[\pf, \qf](0, \tau) 
 + E^s_p[\Bfr, \Ffr, A, \Xfr] (0)\Big)
       \eeaa
       For $|a| \ll M$, we can then absorb the term $\BEF_p^s[\pf, \qf](0, \tau) $ on the left hand side and obtain
       \bea\label{eq:intermediate-final-thm}
           \BEF_p^s[\pf, \qf](0, \tau)              &\les&
      E^s_p[\pf, \qf, \Bfr, \Ffr, A, \Xfr] (0).
       \eea
       Combining \eqref{eq:intermediate-final-thm} with Proposition \ref{lemma:crucial2}, we deduce
       \beaa
      \BEF^s_p[\Bfr, \Ffr, A, \Xfr] (0, \tau)&\les&   B^s_p[\pf, \qf](0,\tau)  + E^s_p[\Bfr, \Ffr, A, \Xfr] (0)\\
  &\les&           E^s_p[\pf, \qf, \Bfr, \Ffr, A, \Xfr] (0).
       \eeaa
Finally, by adding the above to \eqref{eq:intermediate-final-thm} we obtain
       \beaa
\BEF^s_p[\pf, \qf, \Bfr, \Ffr, A, \Xfr] (0, \tau)  &\les&           E^s_p[\pf, \qf, \Bfr, \Ffr, A, \Xfr] (0)
       \eeaa
       which proves Theorem \ref{theorem:unconditional-result-final}.
\end{proof}
 
 \begin{remark} Observe that Proposition \ref{lemma:crucial1} and \ref{lemma:crucial2}, and the proof of Theorem \ref{theorem:unconditional-result-final} above only need the smallness of $|a|$, and not the smallness of $|Q|$, while the restriction to $|Q| \ll M$ is only needed in Theorem \ref{thm:estimates-psi1-psi2-NN}. In particular, were one able to prove Theorem \ref{thm:estimates-psi1-psi2-NN} in the case of $|a| \ll M$, $|Q|<M$, the final energy-Morawetz estimates of Theorem \ref{theorem:unconditional-result-final} will then automatically hold for $|a| \ll M$, $|Q|<M$.
\end{remark}

\section{Preliminaries}\label{sec:preliminaries}

We collect here preliminaries that will be used in the following section for the proof of Theorem \ref{Thm:Nondegenerate-Morawetz}.

\subsection{The combined energy-momentum tensor for the model system}\label{sec:model-system}

The energy-momentum tensor for a complex horizontal tensor $\psi \in \sk_k(\mathbb{C})$ is defined as
\bea\label{def:energy-momentum-tensor}
\QQ[\psi]_{\mu\nu}:= \Re\big(\Db_\mu  \psi \c \Db _\nu \ov{\psi}\big)
          -\frac 12 \g_{\mu\nu} \LL[\psi],
\eea
where $\Re$ denotes the real part, $\Db$ is the projection to the horizontal structure of the covariant derivative and the Lagrangian $\LL$ is given by
\beaa
\LL[\psi]:= \Db_\la \psi\c\Db^\la \ov{\psi} + V\psi \c \ov{\psi},
\eeaa
for a real-valued function $V$.

 Let $\psi_1 \in \sk_1(\mathbb{C})$ and $\psi_2 \in \sk_2(\mathbb{C})$ horizontal tensors satisfying the model equations \eqref{final-eq-1-model} and \eqref{final-eq-2-model}. We define the following \textit{combined energy-momentum tensor} for the system\footnote{Observe that if $Q=0$ the control on $\psi_2$ vanishes and the analysis of the system becomes the analysis of the first equation for $\psi_1$, decoupled from $\psi_2$. A similar analysis could be done then to the second equation with controlled source in $\psi_1$.}:
\bea
\QQ[\psi_1, \psi_2]_{\mu\nu}&:=& \QQ[\psi_1]_{\mu\nu}+8Q^2 \QQ[\psi_2]_{\mu\nu} \label{eq:def-combined-em}
\eea

Let $X$ be a vectorfield, $w$ a scalar and $J$ a one-form, we define the following \textit{combined current} for the system:
\bea\label{eq:def-combined-current}
 \PP_\mu^{(X, w, J)}[\psi_1, \psi_2]&:=& \PP_\mu^{(X, w, J)}[\psi_1]+8 Q^2 \PP_\mu^{(X, w, J)}[ \psi_2],
\eea
where
 \bea\label{eq:definition-current}
 \PP_\mu^{(X, w, J)}[\psi]&:=&\QQ[\psi]_{\mu\nu} X^\nu +\frac 1 2  w \Re\big(\psi \c \Db_\mu \overline{\psi} \big)-\frac 1 4 \pr_\mu w |\psi|^2+\frac 1 4 J_{\mu} |\psi|^2.
  \eea

We collect the structure of the divergence of the combined current in the following proposition.

\begin{proposition}\label{prop:general-computation-divergence-P}  Let $\psi_1 \in \sk_1(\mathbb{C})$ and $\psi_2 \in \sk_2(\mathbb{C})$ satisfying the model system \eqref{final-eq-1-model}-\eqref{final-eq-2-model}. Then, the combined current defined in \eqref{eq:def-combined-current} satisfies the following divergence identity:
\bea\label{eq:divv-PP}
\begin{split}
\D^\mu \PP_\mu^{(X, w, J)}[\psi_1, \psi_2]&= \EE^{(X, w, J)}[\psi_1, \psi_2]+\mathscr{N}_{first}^{(X, w)}[\psi_1,\psi_2]+\mathscr{N}_{coupl}^{(X, w)}[\psi_1,\psi_2]\\
&+\mathscr{N}_{lot}^{(X, w)}[\psi_1,\psi_2]+\mathscr{R}^{(X)}[\psi_1, \psi_2],
\end{split}
\eea
where 
\begin{itemize}
\item the bulk term $\EE^{(X, w, J)}[\psi_1, \psi_2]$ is given by 
\beaa
\EE^{(X, w, J)}[\psi_1, \psi_2]&:=& \EE^{(X, w, J)}[\psi_1] +8Q^2 \EE^{(X, w, J)}[\psi_2]
\eeaa
where
 \bea\label{eq:EE-X-w-J}
 \EE^{(X, w, J)}[\psi]  &:=& \frac 1 2 \QQ[\psi]  \c\piX - \frac 1 2 X( V ) |\psi|^2+\frac 12  w \LL[\psi] -\frac 1 4 \square_\g  w |\psi|^2+ \frac 1 4  \mbox{Div}(|\psi|^2 J\big),
 \eea
 
 \item the term $\mathscr{N}_{first}$ involving the first order term on the RHS of the equations is given by
\bea\label{eq:definition-N-first}
\mathscr{N}_{first}^{(X, w)}[\psi_1,\psi_2]:= - \frac{2a\cos\th}{|q|^2} \Im\Big[ \big(\nabla_X\ov{\psi_1} +\frac 1 2   w \ov{\psi_1}\big)\c  \nab_T \psi_1+ 16Q^2\big(\nabla_X\ov{\psi_2} +\frac 1 2   w \ov{\psi_2}\big)\c  \nab_T \psi_2\Big],
\eea

\item the term $\mathscr{N}_{coupl}$ involving the coupling terms on the RHS of the equations is given by
 \bea\label{eq:N-coupl-1}
 \begin{split}
\mathscr{N}_{coupl}^{(X, w)}[\psi_1,\psi_2]&:=4Q^2 \Re\Big[ \big( \frac{ q^3}{|q|^5} w -X(\frac{ q^3}{|q|^5}) \big)\psi_1 \c(\DD \c\ov{\psi_2} ) +\frac{ q^3}{|q|^5} \psi_1 \c ([\DD \c,\nabla_X]\ov{\psi_2}  \big) \Big] \\
  &-\D_\a \Re \Big(\frac{ 4Q^2q^3}{|q|^5}\psi_1 \c \big(\nabla_X\ov{\psi_2} +\frac 1 2   w \ov{\psi_2} \big) \Big)^\a+\nab_X \Re\Big(\frac{ 4Q^2q^3}{|q|^5}\psi_1 \c  (\DD \c\ov{\psi_2}) \Big),
  \end{split}
\eea
or also, equivalently,
 \bea\label{eq:N-coupl-2}
 \begin{split}
 \mathscr{N}_{coupl}^{(X, w)}[\psi_1,\psi_2]&=4Q^2 \Re\Bigg[\big( X( \frac{q^3 }{|q|^5})-\frac{q^3 }{|q|^5}   w \big) (\DD\hot \psi_1 \big) \c   \ov{\psi_2}-   \frac{q^3 }{|q|^5} ([\DD\hot,  \nabla_X]\psi_1)  \c   \ov{\psi_2} \\
&-\Big(\Big(\big( X( \frac{q^3 }{|q|^5})-\frac{q^3 }{|q|^5}   w \big)\frac 3 2(  H-\Hb)+\frac{ q^3}{|q|^5}  \frac 3 2  \nab_X(  H-\Hb)\Big)\hot  \psi_1\Big)\c   \ov{\psi_2} \Bigg]\\
&+\D_\a \Re\Big( \frac{4Q^2q^3 }{|q|^5}  \big(\nabla_X\psi_1 +\frac 1 2   w \psi_1\big) \c \ov{\psi_2}\Big)^\a\\
&-\nab_X\Re\Big(  \frac{4Q^2q^3 }{|q|^5} (\DD\hot \psi_1 \big) \c   \ov{\psi_2}-\frac{ 4Q^2q^3}{|q|^5}  \frac 3 2  ((  H-\Hb)\hot   \psi_1 )\c   \ov{\psi_2}\Big).
\end{split}
\eea

\item the term $\mathscr{N}_{lot}$ involving the lower order terms on the RHS of the equations is given by
\bea\label{eq:definition-N-lot}
\mathscr{N}_{lot}^{(X, w)}[\psi_1,\psi_2]&=&  \Re\Big[ \big(\nabla_X\ov{\psi_1} +\frac 1 2   w \ov{\psi_1}\big)\c N_1+8Q^2  \big(\nabla_X\ov{\psi_2} +\frac 1 2   w \ov{\psi_2}\big)\c N_2\Big],
\eea
 
 \item the curvature term $\mathscr{R}$ is given by
 \bea\label{eq:definition-R-X}
 \mathscr{R}^{(X)}[\psi_1, \psi_2]&:=&  \Re\Big[ X^\mu \Db^\nu  \psi_1^a\Rdot_{ ab   \nu\mu}\ov{\psi_1}^b+ 8Q^2 X^\mu \Db^\nu  \psi_2^a\Rdot_{ ab   \nu\mu}\ov{\psi_2}^b\Big].
 \eea
\end{itemize}

\end{proposition}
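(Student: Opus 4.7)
The plan is to treat the proposition as a bookkeeping identity arising from applying the vectorfield method separately to each equation of the system and then summing with the relative weight $8Q^2$. First, I will record the general single-equation divergence identity (proved, e.g., in Appendix \ref{sec:appendix-commutators} for a horizontal tensor $\psi$ solving $\squared_k \psi - V\psi = \mathcal{S}$ with $V$ real):
\begin{equation*}
\D^\mu \PP_\mu^{(X, w, J)}[\psi] \;=\; \EE^{(X, w, J)}[\psi] + \mathscr{R}^{(X)}[\psi] + \Re\!\left[\big(\nabla_X\ov{\psi} + \tfrac12 w\,\ov{\psi}\big)\c \mathcal{S}\right],
\end{equation*}
where $\EE$ is defined by \eqref{eq:EE-X-w-J}, and $\mathscr{R}^{(X)}[\psi]$ is precisely the curvature correction \eqref{eq:definition-R-X} arising from commuting the horizontal covariant derivatives $\Db_\mu, \Db_\nu$ when integrating by parts in the principal term of $\Db^\mu \Db_\mu\psi$. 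This identity is entirely standard given the definition \eqref{def:energy-momentum-tensor} of $\QQ[\psi]$ and the shape of the current \eqref{eq:definition-current}.

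Next I will apply this identity to $\psi_1$ with source $\mathcal{S}_1 = i\tfrac{2a\cos\th}{|q|^2}\nab_T\psi_1 + 4Q^2 C_1[\psi_2] + N_1$, and to $\psi_2$ with source $\mathcal{S}_2 = i\tfrac{4a\cos\th}{|q|^2}\nab_T\psi_2 - \tfrac12 C_2[\psi_1] + N_2$, then add with weights $1$ and $8Q^2$. Because the potentials $V_1, V_2$ are real, the $V_i\psi_i$ contributions from the full wave operators $\squared_k\psi_i-V_i\psi_i$ are already absorbed into the bulk terms $\EE^{(X, w, J)}[\psi_i]$. The imaginary first-order pieces contribute, via $\Re(i z)=-\Im(z)$, exactly the term $\mathscr{N}_{first}^{(X, w)}$ of \eqref{eq:definition-N-first}; the inhomogeneities $N_i$ trivially yield $\mathscr{N}_{lot}^{(X, w)}$ as in \eqref{eq:definition-N-lot}; and the curvature terms $\mathscr{R}^{(X)}[\psi_1] + 8Q^2 \mathscr{R}^{(X)}[\psi_2]$ assemble into \eqref{eq:definition-R-X}.

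The heart of the proof is thus the identification of the coupling contribution
\begin{equation*}
\mathcal{T}_{coupl} \;=\; 4Q^2\,\Re\!\left[\ov{(\nabla_X \psi_1+\tfrac12 w\,\psi_1)}\c C_1[\psi_2]\right] \;-\; 4Q^2\,\Re\!\left[\ov{(\nabla_X \psi_2+\tfrac12 w\,\psi_2)}\c C_2[\psi_1]\right]
\end{equation*}
with the expressions \eqref{eq:N-coupl-1} or \eqref{eq:N-coupl-2}. Substituting the definitions \eqref{eq:C_1-C_2} of $C_1, C_2$, I write the first summand as $4Q^2\Re[\tfrac{q^3}{|q|^5}(\nabla_X\psi_1+\tfrac12 w\,\psi_1)\c(\DD\c\ov{\psi_2})]$, pull $\nabla_X$ inside the scalar factor using the product rule (which creates the term $-X(\tfrac{q^3}{|q|^5})$), and then commute $\DD\c$ with $\nabla_X$ modulo the commutator $[\DD\c,\nabla_X]\ov{\psi_2}$. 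The resulting principal term $\tfrac{q^3}{|q|^5}\psi_1\c\nabla_X(\DD\c\ov{\psi_2})$ is then rewritten as $\nab_X\Re(\tfrac{q^3}{|q|^5}\psi_1\c(\DD\c\ov{\psi_2}))$ minus terms where $\nabla_X$ falls on $\tfrac{q^3}{|q|^5}$ or on $\psi_1$, and finally Leibniz-combined with the second summand so that the $(\nabla_X\psi_1)\c(\DD\c\ov{\psi_2})$ and $\psi_1\c\nabla_X(\DD\c\ov{\psi_2})$ pieces assemble into a spacetime divergence, giving \eqref{eq:N-coupl-1}. To pass from \eqref{eq:N-coupl-1} to \eqref{eq:N-coupl-2}, I will apply Lemma \ref{lemma:adjoint-operators} with $F=\nabla_X\psi_1+\tfrac12 w\,\psi_1$ and $U=\psi_2$: the identity $(\DD\hot F)\c\ov{\psi_2} = -F\c(\DD\c\ov{\psi_2}) - ((H+\Hb)\hot F)\c\ov{\psi_2} + \D_\a(F\c\ov{\psi_2})^\a$ swaps the derivative onto $\psi_1$, and the surviving $(H+\Hb)$ terms combine with $\nab_X(H-\Hb)$ contributions (coming from the $-\tfrac32(H-\Hb)\hot\psi_1$ lower order piece of $C_2$ when $\nabla_X$ is moved onto the coefficient $\tfrac{q^3}{|q|^5}$) to give exactly the Ricci-coefficient terms displayed in \eqref{eq:N-coupl-2}.

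The main obstacle is this last bookkeeping step: the delicate cancellation that the derivative of the scalar coefficient $\tfrac{q^3}{|q|^5}$ (a complex-valued function involving $\cos\th$) combines with the $-\tfrac32(H-\Hb)$ piece of $C_2$ in precisely the way needed to produce the symmetric $(H+\Hb)\hot F$ term required by Lemma \ref{lemma:adjoint-operators}. This is the mechanism, indicated in the footnote accompanying the adjointness identity in the introduction, that makes $C_1$ and $C_2$ genuine \emph{spacetime adjoints} rather than merely pointwise ones. Once this cancellation is identified, all second-order derivatives of $\psi_1, \psi_2$ in $\mathcal{T}_{coupl}$ are absorbed into either the divergence $\D_\a(\cdots)^\a$ or the distinguished $\nab_X\Re(\cdots)$ term, and only first-order (plus commutator) quantities remain, yielding both \eqref{eq:N-coupl-1} and \eqref{eq:N-coupl-2}.
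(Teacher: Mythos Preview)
Your proposal is essentially the same as the paper's proof: apply the single-equation divergence identity (Proposition \ref{prop-app:stadard-comp-Psi}) to each $\psi_i$, sum with weights $1$ and $8Q^2$, and use Lemma \ref{lemma:adjoint-operators} to process the coupling term. One organizational difference: the paper does \emph{not} derive \eqref{eq:N-coupl-2} from \eqref{eq:N-coupl-1}; instead it obtains both directly from the raw coupling expression \eqref{eq:expression-intermediate-Ncoupling} via two independent applications of Lemma \ref{lemma:adjoint-operators}---once with $F=\psi_1$, $\ov{U}=\tfrac{q^3}{|q|^5}(\nabla_X\ov{\psi_2}+\tfrac12 w\ov{\psi_2})$ (using the exact identity $\DD(\tfrac{q^3}{|q|^5})=\tfrac12\tfrac{q^3}{|q|^5}(\Hb-5H)$, which is what produces the $\tfrac32(H-\Hb)$ cancellation) to get \eqref{eq:N-coupl-1}, and once with $F=\tfrac{q^3}{|q|^5}(\nabla_X\psi_1+\tfrac12 w\psi_1)$, $\ov{U}=\ov{\psi_2}$ to get \eqref{eq:N-coupl-2}---so your description of the route to \eqref{eq:N-coupl-1} (which never invokes the lemma and instead speaks of ``Leibniz-combining'' the two summands) is imprecise: the second summand carries $\DD\hot\psi_1$, not $\DD\c\ov{\psi_2}$, and converting one to the other is exactly where Lemma \ref{lemma:adjoint-operators} enters.
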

\begin{proof}
From Proposition \ref{prop-app:stadard-comp-Psi}, we deduce
\beaa
\D^\mu \PP_\mu^{(X, w, J)}[\psi_1, \psi_2]&=& \D^\mu\PP_\mu^{(X, w, J)}[\psi_1]+8 Q^2 \D^\mu\PP_\mu^{(X, w, J)}[\psi_2]\\
&=& \EE^{(X, w, J)}[\psi_1] +8Q^2 \EE^{(X, w, J)}[\psi_2] \\
&&+ \Re\Big[ \big(\nabla_X\ov{\psi_1} +\frac 1 2   w \ov{\psi_1}\big)\c \left(\squared_1 \psi_1- V_1\psi_1\right)+8Q^2  \big(\nabla_X\ov{\psi_2} +\frac 1 2   w \ov{\psi_2}\big)\c \left(\squared_2 \psi_2- V_2\psi_2\right)\Big]\\
&&+ \Re\Big[ X^\mu \Db^\nu  \psi_1^a\Rdot_{ ab   \nu\mu}\ov{\psi_1}^b+ 8Q^2 X^\mu \Db^\nu  \psi_2^a\Rdot_{ ab   \nu\mu}\ov{\psi_2}^b\Big].
\eeaa
By defining
\beaa
\EE^{(X, w, J)}[\psi_1, \psi_2]&:=& \EE^{(X, w, J)}[\psi_1] +8Q^2 \EE^{(X, w, J)}[\psi_2],\\
\mathscr{N}^{(X, w)}[\psi_1,\psi_2]&:=&  \Re\Big[ \big(\nabla_X\ov{\psi_1} +\frac 1 2   w \ov{\psi_1}\big)\c \left(\squared_1 \psi_1- V_1\psi_1\right)+8Q^2  \big(\nabla_X\ov{\psi_2} +\frac 1 2   w \ov{\psi_2}\big)\c \left(\squared_2 \psi_2- V_2\psi_2\right)\Big]\\
\mathscr{R}^{(X)}[\psi_1, \psi_2]&:=&  \Re\Big[ X^\mu \Db^\nu  \psi_1^a\Rdot_{ ab   \nu\mu}\ov{\psi_1}^b+ 8Q^2 X^\mu \Db^\nu  \psi_2^a\Rdot_{ ab   \nu\mu}\ov{\psi_2}^b\Big],
\eeaa
we see that the term $\EE^{(X, w, J)}[\psi_1, \psi_2]$ and $\mathscr{R}^{(X)}[\psi_1, \psi_2]$ are as written in the Proposition. 
We now simplify the term $\mathscr{N}^{(X, w)}[\psi_1,\psi_2]$ involving the right hand side of the equations.
Using the model system \eqref{final-eq-1-model} and \eqref{final-eq-2-model}, we have
\beaa
\mathscr{N}^{(X, w)}[\psi_1,\psi_2]&=&  \Re\Big[ \big(\nabla_X\ov{\psi_1} +\frac 1 2   w \ov{\psi_1}\big)\c \big(i  \frac{2a\cos\th}{|q|^2}\nab_T \psi_1+4Q^2 C_1[\psi_2]+ N_1\big)\\
&&+8Q^2  \big(\nabla_X\ov{\psi_2} +\frac 1 2   w \ov{\psi_2}\big)\c \big(i  \frac{4a\cos\th}{|q|^2}\nab_T \psi_2-   \frac {1}{ 2} C_2[\psi_1]+ N_2\big)\Big].
\eeaa
The above can be separated in the following terms, involving respectively the first order terms, the coupling and the lower order terms:
\beaa
\mathscr{N}^{(X, w)}[\psi_1,\psi_2]&=&\mathscr{N}_{first}^{(X, w)}[\psi_1,\psi_2]+\mathscr{N}_{coupl}^{(X, w)}[\psi_1,\psi_2]+\mathscr{N}_{lot}^{(X, w)}[\psi_1,\psi_2]\\
\mathscr{N}_{first}^{(X, w)}[\psi_1,\psi_2]&:=& - \frac{2a\cos\th}{|q|^2} \Im\Big[ \big(\nabla_X\ov{\psi_1} +\frac 1 2   w \ov{\psi_1}\big)\c  \nab_T \psi_1+ 16Q^2\big(\nabla_X\ov{\psi_2} +\frac 1 2   w \ov{\psi_2}\big)\c  \nab_T \psi_2\Big]\\
\mathscr{N}_{coupl}^{(X, w)}[\psi_1,\psi_2]&:=&4Q^2 \Re\Big[ \big(\nabla_X\ov{\psi_1} +\frac 1 2   w \ov{\psi_1}\big)\c  C_1[\psi_2]- \big(\nabla_X\ov{\psi_2} +\frac 1 2   w \ov{\psi_2}\big)\c C_2[\psi_1]\Big]\\
\mathscr{N}_{lot}^{(X, w)}[\psi_1,\psi_2]&:=&  \Re\Big[ \big(\nabla_X\ov{\psi_1} +\frac 1 2   w \ov{\psi_1}\big)\c N_1+8Q^2  \big(\nabla_X\ov{\psi_2} +\frac 1 2   w \ov{\psi_2}\big)\c N_2\Big].
\eeaa

We now further simplify the coupling term $\mathscr{N}_{coupl}$, where a cancellation takes place.
Using the expressions for $C_1[\psi_2]$ and $C_2[\psi_1]$ given by \eqref{eq:C_1-C_2}, we have  writing $\Re(z)=\Re(\ov{z})$,
\bea\label{eq:expression-intermediate-Ncoupling}
\begin{split}
\mathscr{N}_{coupl}^{(X, w)}[\psi_1,\psi_2]&=4Q^2 \Re\Big[ \frac{q^3 }{|q|^5}  \big(\nabla_X\psi_1 +\frac 1 2   w \psi_1\big)\c \left(  \DD \c  \ov{\psi_2}  \right)\\
&- \frac{q^3}{|q|^5}\big(\nabla_X\ov{\psi_2} +\frac 1 2   w \ov{\psi_2}\big)\c  \big(  \DD \hot  \psi_1 -\frac 3 2 \left( H - \Hb\right)  \hot \psi_1 \big)\Big].
\end{split}
\eea

The coupling term can be simplified by making use of Lemma \ref{lemma:adjoint-operators} in two equivalent ways.
 \begin{enumerate}
 \item Applying Lemma \ref{lemma:adjoint-operators} to $F= \psi_1$ and $\ov{U}=\frac{q^3}{|q|^5}\big(\nabla_X\ov{\psi_2} +\frac 1 2   w \ov{\psi_2}\big)$, we obtain
 \beaa
&& \frac{ q^3}{|q|^5} ( \DD \hot   \psi_1) \c  \big( \nabla_X\ov{\psi_2} +\frac 1 2   w \ov{\psi_2} \big)=  -\psi_1 \c \DD \c\Big( \frac{ q^3}{|q|^5} \big(\nabla_X\ov{\psi_2} +\frac 1 2   w \ov{\psi_2} \big)\Big)\\
 &&-\frac{ q^3}{|q|^5}( (H+\Hb ) \hot \psi_1 )\c \big(\nabla_X\ov{\psi_2} +\frac 1 2   w \ov{\psi_2} \big) +\D_\a \Big(\frac{ q^3}{|q|^5}\psi_1 \c \big(\nabla_X\ov{\psi_2} +\frac 1 2   w \ov{\psi_2} \big)\Big)^\a.
  \eeaa
  Using that $ \DD(\frac{ q^3}{|q|^5})  =\frac 1 2  \frac{ q^3}{|q|^5} (   \Hb -5  H)$, 
  we deduce
  \beaa
&& \frac{ q^3}{|q|^5} ( \DD \hot   \psi_1) \c  \big( \nabla_X\ov{\psi_2} +\frac 1 2   w \ov{\psi_2} \big) \\
  &=& -\frac{ q^3}{|q|^5} \psi_1 \c (\DD \c\big(\nabla_X\ov{\psi_2} +\frac 1 2   w \ov{\psi_2} \big))  -\frac 1 2  \frac{ q^3}{|q|^5}( (   \Hb -5  H) \hot \psi_1) \c  \big(\nabla_X\ov{\psi_2} +\frac 1 2   w \ov{\psi_2} \big)  \\
 &&-\frac{ q^3}{|q|^5}( (H+\Hb ) \hot \psi_1 )\c \big(\nabla_X\ov{\psi_2} +\frac 1 2   w \ov{\psi_2} \big) +\D_\a (\frac{ q^3}{|q|^5}\psi_1 \c \big(\nabla_X\ov{\psi_2} +\frac 1 2   w \ov{\psi_2} \big))^\a\\
    &=&   -\frac{ q^3}{|q|^5} \psi_1 \c (\DD \c\big(\nabla_X\ov{\psi_2} +\frac 1 2   w \ov{\psi_2} \big)) +\frac{ q^3}{|q|^5}( \frac 3 2 (H-\Hb ) \hot \psi_1 )\c \big(\nabla_X\ov{\psi_2} +\frac 1 2   w \ov{\psi_2} \big) \\
  &&+\D_\a (\frac{ q^3}{|q|^5}\psi_1 \c \big(\nabla_X\ov{\psi_2} +\frac 1 2   w \ov{\psi_2} \big))^\a.
 \eeaa
 By commuting $\DD\c$ and $\nab_X$ and integrating by parts in $X$ we obtain\footnote{Here we use that $\DD(w)=0$.}
   \beaa
&& \frac{ q^3}{|q|^5} ( \DD \hot   \psi_1) \c  \big( \nabla_X\ov{\psi_2} +\frac 1 2   w \ov{\psi_2} \big)       \\
&=&   -\frac{ q^3}{|q|^5} \psi_1 \c \nabla_X(\DD \c\ov{\psi_2} ) -\frac{ q^3}{|q|^5} \psi_1 \c ([\DD \c,\nabla_X]\ov{\psi_2}  \big) -\frac{ q^3}{|q|^5}\frac 1 2   w \psi_1 \c (\DD \c \ov{\psi_2}) \\
  &&+\frac{ q^3}{|q|^5}( \frac 3 2 (H-\Hb ) \hot \psi_1 )\c \big(\nabla_X\ov{\psi_2} +\frac 1 2   w \ov{\psi_2} \big) +\D_\a (\frac{ q^3}{|q|^5}\psi_1 \c \big(\nabla_X\ov{\psi_2} +\frac 1 2   w \ov{\psi_2} \big))^\a\\
   &=&   X(\frac{ q^3}{|q|^5}) \psi_1 \c (\DD \c\ov{\psi_2} )  +\frac{ q^3}{|q|^5} \big( \nabla_X\psi_1-\frac 1 2   w \psi_1 \big)  \c (\DD \c\ov{\psi_2} ) -\frac{ q^3}{|q|^5} \psi_1 \c ([\DD \c,\nabla_X]\ov{\psi_2}  \big)  \\
  &&+\frac{ q^3}{|q|^5}( \frac 3 2 (H-\Hb ) \hot \psi_1 )\c \big(\nabla_X\ov{\psi_2} +\frac 1 2   w \ov{\psi_2} \big) \\
  &&+\D_\a \Big(\frac{ q^3}{|q|^5}\psi_1 \c \big(\nabla_X\ov{\psi_2} +\frac 1 2   w \ov{\psi_2} \big)\Big)^\a-\nab_X \Big(\frac{ q^3}{|q|^5}\psi_1 \c  (\DD \c\ov{\psi_2}) \Big),
 \eeaa
  which gives
 \beaa
 \begin{split}
&\frac{ q^3}{|q|^5} \big( \nabla_X\psi_1+\frac 1 2   w \psi_1 \big)  \c (\DD \c\ov{\psi_2} )- \frac{ q^3}{|q|^5} \big( \nabla_X\ov{\psi_2} +\frac 1 2   w \ov{\psi_2} \big)\c   ( \DD \hot   \psi_1-\frac 3 2 (H-\Hb) \hot \psi_1)      &  \\
 =&-\Big( X(\frac{ q^3}{|q|^5}) - \frac{ q^3}{|q|^5} w \Big)\psi_1 \c(\DD \c\ov{\psi_2} ) +\frac{ q^3}{|q|^5} \psi_1 \c ([\DD \c,\nabla_X]\ov{\psi_2}  \big)  \\
  &-\D_\a (\frac{ q^3}{|q|^5}\psi_1 \c \big(\nabla_X\ov{\psi_2} +\frac 1 2   w \ov{\psi_2} \big))^\a+\nab_X \Big(\frac{ q^3}{|q|^5}\psi_1 \c  (\DD \c\ov{\psi_2}) \Big).
  \end{split}
 \eeaa
  Observe that the first line on the left hand side of the above is precisely the expression appearing in \eqref{eq:expression-intermediate-Ncoupling} for  $\mathscr{N}_{coupl}^{(X, w)}[\psi_1,\psi_2]$.   This proves \eqref{eq:N-coupl-1}.  
  
 \item Applying Lemma \ref{lemma:adjoint-operators} to $F= \frac{q^3 }{|q|^5}  \big(\nabla_X\psi_1 +\frac 1 2   w \psi_1\big)$ and $\ov{U}=\ov{\psi_2}$ we obtain
 \beaa
&&\frac{q^3 }{|q|^5}  \big(\nabla_X\psi_1 +\frac 1 2   w \psi_1\big) \c (\DD \c \ov{\psi_2})   = - \DD \hot    \Big(\frac{q^3 }{|q|^5}  \big(\nabla_X\psi_1 +\frac 1 2   w \psi_1\big)\Big) \c   \ov{\psi_2} \\
&&  - \frac{q^3 }{|q|^5}( (H+\Hb ) \hot   \big(\nabla_X\psi_1 +\frac 1 2   w \psi_1\big) )\c \ov{\psi_2} +\D_\a ( \frac{q^3 }{|q|^5}  \big(\nabla_X\psi_1 +\frac 1 2   w \psi_1\big) \c \ov{\psi_2})^\a\\
&=& -   \frac{q^3 }{|q|^5} \DD\hot  \big(\nabla_X\psi_1 +\frac 1 2   w \psi_1\big) \c   \ov{\psi_2}+\frac{ q^3}{|q|^5}  \frac 3 2  ((  H-\Hb)\hot    \big(\nabla_X\psi_1 +\frac 1 2   w \psi_1\big) )\c   \ov{\psi_2} \\
&&+\D_\a ( \frac{q^3 }{|q|^5}  \big(\nabla_X\psi_1 +\frac 1 2   w \psi_1\big) \c \ov{\psi_2})^\a.
 \eeaa
 By commuting $\DD\hot$ and $\nab_X$ and then integrating by parts in $X$ we obtain
  \beaa
&&\frac{q^3 }{|q|^5}  \big(\nabla_X\psi_1 +\frac 1 2   w \psi_1\big) \c (\DD \c \ov{\psi_2})   \\
&=& X( \frac{q^3 }{|q|^5}) (\DD\hot \psi_1 \big) \c   \ov{\psi_2}+   \frac{q^3 }{|q|^5} (\DD\hot \psi_1 \big) \c  \big( \nabla_X \ov{\psi_2}-\frac 1 2   w \ov{\psi_2}\big)-   \frac{q^3 }{|q|^5} ([\DD\hot,  \nabla_X]\psi_1)  \c   \ov{\psi_2}\\
&&-\frac{ q^3}{|q|^5}  \frac 3 2  ((  H-\Hb)\hot \psi_1 )\c  \big( \nab_X\ov{\psi_2}-\frac 1 2 w \ov{\psi_2}\big)-X(\frac{ q^3}{|q|^5} ) \frac 3 2  ((  H-\Hb)\hot  \psi_1 )\c   \ov{\psi_2}\\
&&-\frac{ q^3}{|q|^5}  \frac 3 2  (\nab_X(  H-\Hb)\hot  \psi_1)\c   \ov{\psi_2} \\
&&+\D_\a ( \frac{q^3 }{|q|^5}  \big(\nabla_X\psi_1 +\frac 1 2   w \psi_1\big) \c \ov{\psi_2})^\a-\nab_X\Big(  \frac{q^3 }{|q|^5} (\DD\hot \psi_1 \big) \c   \ov{\psi_2}-\frac{ q^3}{|q|^5}  \frac 3 2  ((  H-\Hb)\hot   \psi_1 )\c   \ov{\psi_2}\Big)
 \eeaa
 which gives
   \beaa
&&\frac{q^3 }{|q|^5}  \big(\nabla_X\psi_1 +\frac 1 2   w \psi_1\big) \c (\DD \c \ov{\psi_2})-  \frac{q^3 }{|q|^5} \big( \nabla_X \ov{\psi_2}+\frac 1 2   w \ov{\psi_2}\big)\c (\DD\hot \psi_1- \frac 3 2  (  H-\Hb)\hot \psi_1 \big)     \\
&=&\Big( X( \frac{q^3 }{|q|^5})-\frac{q^3 }{|q|^5}   w \Big) (\DD\hot \psi_1 -\frac 3 2(  H-\Hb)\hot  \psi_1 \big) \c   \ov{\psi_2}-   \frac{q^3 }{|q|^5} ([\DD\hot,  \nabla_X]\psi_1)  \c   \ov{\psi_2} \\
&&-\frac{ q^3}{|q|^5}  \frac 3 2  (\nab_X(  H-\Hb)\hot  \psi_1)\c   \ov{\psi_2} \\
&&+\D_\a ( \frac{q^3 }{|q|^5}  \big(\nabla_X\psi_1 +\frac 1 2   w \psi_1\big) \c \ov{\psi_2})^\a-\nab_X\Big(  \frac{q^3 }{|q|^5} (\DD\hot \psi_1 \big) \c   \ov{\psi_2}-\frac{ q^3}{|q|^5}  \frac 3 2  ((  H-\Hb)\hot   \psi_1 )\c   \ov{\psi_2}\Big).
 \eeaa
 This proves \eqref{eq:N-coupl-2}.
 \end{enumerate}

  This completes the proof. 
\end{proof}

\subsection{The symmetry operators for the system}\label{sec:proof-symmetry-operators}

The goal of this section is to prove Lemma \ref{lemma:symmetry-operators-system} and Lemma \ref{lemma:theoperqatorwidetildeOOcommutingwellwtihRWmodel}, i.e. that the commuted tensors defined by \eqref{eq:definition-psiao-psiat}-\eqref{eq:definition-hat-psi} and by \eqref{eq:def-widetilde-psi1}-\eqref{eq:def-widetilde-psi2} are symmetry operators for the model system.

\subsubsection{Proof of Lemma \ref{lemma:symmetry-operators-system}}\label{sec:proof-lemma-symmetry}

We first prove the Lemma for $\aund=1,2,3$. Indeed, by applying the operator $\SS_\aund$ for $\aund=1,2,3$ to the system  \eqref{final-eq-1-model}-\eqref{final-eq-2-model} we obtain
\beaa
\squared_1\psiao  -V_1  \psiao &=&i  \frac{2a\cos\th}{|q|^2}\nab_T \psiao+4Q^2 C_1[\psiat]+{N_1}_{\aund} \\
\squared_2\psiat -V_2  \psiat &=&i  \frac{4a\cos\th}{|q|^2}\nab_T \psiao-   \frac {1}{ 2} C_2[\psiao]+ {N_2}_{\aund}
 \eeaa
 with for $\aund=1,2,3$
 \beaa
 {N_1}_\aund&:=&-[\SS_\aund,\squared_1]\psi_1+[\SS_{\aund}, V_1]\psi_1+i[\SS_{\aund},\frac{2a\cos\th}{|q|^2}\nab_T]\psi_1 +4Q^2[\SS_\aund, C_1]\psi_2+ \dk^{\leq 2}N_1\\
  {N_2}_\aund&:=&-[\SS_\aund,\squared_2]\psi_2+[\SS_{\aund}, V_2]\psi_2+i [\SS_{\aund}, \frac{4a\cos\th}{|q|^2}\nab_T]\psi_2-\frac 1 2 [\SS_{\aund}, C_2]\psi_1+ \dk^{\leq 2}N_2.
 \eeaa
From the expressions for $V_1$ and $V_2$ given by \eqref{eq:potentials-model}, we deduce that $[\SS_{\aund}, V_1]=[\SS_{\aund}, V_2]=0$. Also, using Lemma \ref{lemma:commutation-nabT-nabZ} we deduce $[\SS_\aund, \frac{\cos\th}{|q|^2}\nab_T]=0$ for $\aund=1,2,3$.
Following Lemma 8.1.1 in \cite{GKS} we have
\beaa
|[\SS_\aund, \squared]\psi| &\les& ar^{-2}|\dk^{\leq 2}\psi|, \qquad \aund=1, 2, 3,
\eeaa
and therefore we obtain for $\aund=1,2,3$
 \beaa
 |{N_1}_\aund|&\les & |\dk^{\leq 2}N_1|+a^2r^{-4}|\dk^{\leq 2}\psi_1|+Q^2|[\SS_\aund, C_1]\psi_2|\\
  |{N_2}_\aund|&\les& |\dk^{\leq 2}N_2|+ar^{-4}|\dk^{\leq 2}\psi_2|+| [\SS_{\aund}, C_2]\psi_1|.
 \eeaa
 From the computations in Lemma \ref{lemma:comm-ov-DD-nabX}, we deduce for $\aund=1,2,3$ that 
 \beaa
 |[\SS_\aund, C_1]\psi_2|\les ar^{-5}|\dk^{\leq 1} \psi_2|, \qquad | [\SS_{\aund}, C_2]\psi_1|\les ar^{-5}|\dk^{\leq 1} \psi_1|.
 \eeaa
This gives the desired bound for ${N_1}_\aund$, ${N_2}_\aund$ and proves Lemma \ref{lemma:symmetry-operators-system} for $\aund=1,2,3$.

We now collect in the following the proof of Lemma  \ref{lemma:symmetry-operators-system} for $\aund=4$, that partially appeared as Proposition 3.7 in \cite{Giorgi8}.

\begin{lemma}\label{lemma:system-hats} Given $\psi_1 \in \sk_1(\CCC)$ and $\psi_2 \in \sk_2(\CCC)$ satisfying the system \eqref{final-eq-1-model}-\eqref{final-eq-2-model}. Then the following
\beaa
\hat{\psi}_1:=\OO \psi_1, \qquad \hat{\psi_2}:=\OO\psi_2-3|q|^2 \Kh \psi_2, 
\eeaa
satisfy
\beaa
\squared_1\hat{\psi}_1  -V_1 \hat{\psi}_1 &=&i  \frac{2a\cos\th}{|q|^2}\nab_T \hat{\psi}_1+4Q^2 C_1[ \hat{\psi}_2]+\hat{N_1}\\
\squared_2\hat{\psi_2} -V_2 \hat{\psi_2} &=&i  \frac{4a\cos\th}{|q|^2}\nab_T\hat{\psi_2}-   \frac {1}{ 2} C_2[\hat{\psi}_1] +\hat{N_2},
\eeaa
where $\hat{N_1}:={N_1}_{\OO}^{a\dk^{\leq2}}+ |q|^{-2}\OO(|q|^2N_1)$ and $\hat{N_2}:={N_2}_{\OO}^{a\dk^{\leq2}}+|q|^{-2} \OO(|q|^2N_2)-3|q|^2 \Kh  N_2$, with 
\beaa
{N_1}_{\OO}^{a\dk^{\leq2}}, {N_2}_{\OO}^{a\dk^{\leq2}}=O(|a|r^{-2})\dk^{\leq 2}(\psi_1+\psi_2),
\eeaa
 and explicitly given by \eqref{eq:N1-OO-dk-leq2} and \eqref{eq:N2-OO-dk-leq2}.
\end{lemma}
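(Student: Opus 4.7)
The plan is to apply $\OO$ to equation \eqref{final-eq-1-model} and the modified operator $\OO - 3|q|^2\Kh$ to equation \eqref{final-eq-2-model}, and then to analyze each resulting commutator. This amounts to controlling $[\OO,\squared_k]$, $[\OO, V_k]$, $[\OO, i\tfrac{2ka\cos\th}{|q|^2}\nab_T]$ and $[\OO, C_k]$ on horizontal $k$-tensors, and to showing that the principal contributions reassemble into $|q|^{-2}\OO(|q|^2 N_k)$ on the right-hand side (and, for $\hat\psi_2$, into the additional term $-3|q|^2\Kh N_2$), while all remainders satisfy the $O(|a|r^{-2})\dk^{\leq 2}(\psi_1+\psi_2)$ bound defining ${N_k}_{\OO}^{a\dk^{\leq 2}}$. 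The starting point is the scalar Carter commutation identity of \cite{Giorgi8}, which states that on Kerr-Newman scalars $\OO$ commutes with the weighted wave operator $|q|^2\squared$; the conformal weight $|q|^2$ is precisely what, after factoring $|q|^{-2}$ at the end, yields the $|q|^{-2}\OO(|q|^2\,\cdot\,)$ structure on the right-hand side of the commuted equation. The tensorial version is then obtained by carefully tracking the curvature contributions produced by the Ricci identity when passing from scalars to horizontal $k$-tensors.

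First I would compute $[\OO,\squared_k]$ on $\sk_k(\CCC)$ by combining the scalar identity of \cite{Giorgi8} with the Ricci identity $[\nab_a,\nab_b]\psi_{c_1\cdots c_k}=\sum_i \Rdot_{c_i}{}^{d}{}_{ab}\psi_{c_1\cdots d\cdots c_k}$ applied inside both $\lap_k$ and the term $(\eta+\etab)\c\nab$ of \eqref{eq:def-OO}. The scalar identity kills the highest-order symbol, and what remains is a sum of Riemann contractions with $\psi_k$; for $k=1$ these contribute $O(|a|r^{-2})\dk^{\leq 2}\psi_1$, using $|q|^2\Kh=1+O(a^2r^{-2})$ and the explicit form of the Weyl tensor components in Kerr-Newman. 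For $k=2$ an additional curvature contraction of leading size $3|q|^{-2}|q|^2\squared_2\psi_2$ remains (each of the two indices of $\psi_2$ contributes a $|q|^{-2}\Kh$ piece, combining with the one from the scalar operator into a total factor $3$), and this is cancelled exactly by the subtraction $-3|q|^2\Kh\psi_2$ in the definition of $\hat\psi_2$, after using equation \eqref{final-eq-2-model} itself to re-express the resulting $\squared_2\psi_2$ in terms of the principal terms of the commuted equation. The commutators $[\OO,V_k]$ and $[\OO,i\tfrac{2ka\cos\th}{|q|^2}\nab_T]$ reduce to derivatives of the explicit functions $V_k$ in \eqref{eq:potentials-model} and $\cos\th/|q|^2$ paired with at most two derivatives of $\psi_k$; they are handled exactly as in the scalar case of \cite{Giorgi8} and, using Lemma \ref{lemma:commutation-nabT-nabZ} for the first-order piece, are directly shown to be of the form $O(|a|r^{-2})\dk^{\leq 2}\psi_k$.

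The main obstacle is the coupling commutators $[\OO, C_1]$ and $[\OO, C_2]$, which are genuinely new here relative to vacuum and which, in contrast with the cases $\aund=1,2,3$ treated via Lemma \ref{lemma:comm-ov-DD-nabX}, involve a second-order angular operator commuting with a first-order one. The operators $C_k$ of \eqref{eq:C_1-C_2} are first-order complexified angular operators with rational coefficients in $q,\ov q$, so each commutator splits into a coefficient part, obtained when $\OO$ differentiates $q^3/|q|^5$ or $\ov q^3/|q|^5$, and a principal part, obtained by commuting $\lap_k+(\eta+\etab)\c\nab$ with $\ov{\DD}\c$ or $\DD\hot$. The coefficient part consists of terms with one extra $\nab$ of a smooth function of $q$, of size $O(r^{-3})\dk^{\leq 2}\psi_k$, and is absorbed into the error. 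The principal part is governed by a Weitzenb\"ock-type identity on a non-integrable horizontal distribution: $[\lap_k,\ov{\DD}\c]$ and $[\lap_k,\DD\hot]$ produce curvature contractions through the scalar $\Kh$ together with terms of order $|a|r^{-4}\dk^{\leq 2}$, the $|a|$ factor arising because in Reissner-Nordstr\"om ($a=0$) the horizontal distribution is integrable and these commutators reduce to standard sphere identities. The leading $\Kh$-contractions match the analogous terms in $[\OO,\squared_2]$ and are cancelled by the same $-3|q|^2\Kh$ correction in $\hat\psi_2$, in exact parallel with the $k=2$ cancellation above. After this cancellation only $O(|a|r^{-2})\dk^{\leq 2}(\psi_1+\psi_2)$ remainders survive, and collecting all contributions into ${N_1}_{\OO}^{a\dk^{\leq 2}}$ and ${N_2}_{\OO}^{a\dk^{\leq 2}}$ produces the explicit expressions \eqref{eq:N1-OO-dk-leq2}--\eqref{eq:N2-OO-dk-leq2} claimed in the statement.
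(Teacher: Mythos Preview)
Your overall plan is right and matches the paper's, but the second paragraph misidentifies the origin of the $3|q|^2\Kh$ shift, and the argument as written would not close.

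You claim that $[\OO,\squared_2]\psi_2$ contains a principal term of size $3\Kh\,\squared_2\psi_2$, arising from the Ricci identity acting on the two indices of $\psi_2$ plus a scalar contribution. This is false: by Proposition~\ref{prop:commutators-OO-squared} the weighted commutator $[\OO,|q|^2\squared_k]$ is entirely $O(|a|)$ for both $k=1$ and $k=2$, with top order $-i\nab\big(\tfrac{4ka(r^2+a^2)\cos\th}{|q|^2}\big)\c\nab\nab_{\That}\psi_k$ and no $\Kh$ piece whatsoever. The operator $\OO$ of \eqref{eq:def-OO} is already built from the horizontal Laplacian $\lap_k$ on $k$-tensors, so there is no Bochner-type discrepancy between $k=1$ and $k=2$ of the sort you describe. (One should also work with the weighted operator $|q|^2\squared_k$ throughout, so that $[\OO,|q|^2V_k]=0$ exactly, since $|q|^2V_k$ is a function of $r$ alone.)

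The $3\Kh$ terms that force the definition of $\hat\psi_2$ come exclusively from the coupling commutators, via Lemma~\ref{lemma:commutator-OO-DD}: $[\OO,\ov{\DD}\c]\psi_2$ contributes $-3|q|^2\Kh(\ov{\DD}\c\psi_2)$ and $[\OO,\DD\hot]\psi_1$ contributes $+3|q|^2\Kh(\DD\hot\psi_1)$, and neither is small in $|a|$ because $|q|^2\Kh=1+O(a^2r^{-2})$. The two equations then require two distinct mechanisms. In the equation for $\hat\psi_1$ one rewrites
\[
4Q^2\,C_1[\OO\psi_2]-12Q^2\tfrac{\ov q^3}{|q|^3}\Kh(\ov{\DD}\c\psi_2)=4Q^2\,C_1[\hat\psi_2]+O(|a|r^{-5})\psi_2,
\]
so the $\Kh$ term is absorbed by shifting the \emph{argument} of $C_1$ from $\OO\psi_2$ to $\hat\psi_2$. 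In the equation for $\hat\psi_2$ one instead writes the trivial wave equation satisfied by $|q|^2\Kh\psi_2$ from \eqref{final-eq-2-model} and subtracts it three times; the coupling term $-\tfrac12|q|^2\Kh\,\tfrac{q^3}{|q|^5}\DD\hot\psi_1$ coming from that subtraction is exactly what cancels the $-\tfrac32\tfrac{q^3}{|q|^3}\Kh(\DD\hot\psi_1)$ produced by the commutator. Your third paragraph correctly locates $\Kh$ in the coupling commutators, but then anchors the cancellation to the nonexistent ``$k=2$ cancellation above'' in $[\OO,\squared_2]$; as written this would either leave an uncancelled $\Kh$ term or produce a spurious one. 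A minor additional point: the coefficient piece you estimate as $O(r^{-3})\dk^{\leq 2}$ is in fact $O(|a|r^{-2})\dk^{\leq 2}$, since $\nab(\ov q^3/|q|^3)=O(|a|r^{-1})$; the $|a|$ factor is essential for placing it into ${N_k}_{\OO}^{a\dk^{\leq 2}}$.
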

\begin{proof} Applying the operator $\OO$ to the system \eqref{final-eq-1-model}-\eqref{final-eq-2-model} multiplied by $|q|^2$ we have 
\beaa
\squared_1(\OO \psi_1)  -V_1 (\OO \psi_1) &=&i  \frac{2a\cos\th}{|q|^2}\nab_T \OO \psi_1+4Q^2 C_1[\OO \psi_2]+{N_1}_{\OO} \\
\squared_2(\OO \psi_2) -V_2 (\OO \psi_2) &=&i  \frac{4a\cos\th}{|q|^2}\nab_T \OO \psi_2-   \frac {1}{ 2} C_2[\OO \psi_1]+ {N_2}_{\OO}
 \eeaa
 where
   \beaa
|q|^2 {N_1}_{\OO}&:=&- [\OO,|q|^2\squared_1]\psi_1+[\OO, |q|^2V_1]\psi_1+i [\OO,2a\cos\th\nab_T]\psi_1 +4Q^2[\OO, |q|^2C_1]\psi_2+ \OO(|q|^2N_1)\\
 |q|^2 {N_2}_{\OO}&:=&-[\OO,|q|^2\squared_2]\psi_2+[\OO, |q|^2V_2]\psi_2+i[\OO, 4a\cos\th\nab_T]\psi_2-\frac 1 2 [\OO, |q|^2C_2]\psi_1+ \OO(|q|^2N_2).
 \eeaa
 From the expressions for $V_1$ and $V_2$ given by \eqref{eq:potentials-model}, we deduce that $[\OO, |q|^2V_1]=[\OO, |q|^2V_2]=0$. Also, using that $[\nab_T, \OO] = O(ar^{-3})\dkb^{\leq 1}$, we deduce
\beaa
[\cos\th\nab_T, \OO]\psi&=& \cos\th [\nab_T, \OO]\psi-2|q|^2\nab \cos\th \c \nab \nab_T \psi +O(1)\dk^{\leq1}\psi\\
&=& -2|q|^2\nab \cos\th \c \nab \nab_T \psi +O(1)\dk^{\leq1}\psi.
\eeaa
 Using Proposition \ref{prop:commutators-OO-squared}, i.e. \eqref{eq-commutator-OO-q^2square-psi1} and \eqref{eq-commutator-OO-q^2square-psi2}, we then obtain
    \beaa
|q|^2 {N_1}_{\OO}&=& i|q|^2 \nab\big(\frac{4a(r^2+a^2)\cos\th}{|q|^2}\big)\c\nab\nab_\That\psi_1  +i|q|^2\nab (4a\cos\th) \c \nab \nab_T \psi_1 +4Q^2[\OO, |q|^2C_1]\psi_2\\
&&+O(|a|)\nab^{\leq 1}_{\Rhat}\dk^{\leq 1}\psi_1+ \OO(|q|^2N_1)\\
 |q|^2 {N_2}_{\OO}&=&i |q|^2 \nab\big(\frac{8a(r^2+a^2)\cos\th}{|q|^2}\big)\c\nab\nab_\That\psi_2 +i|q|^2\nab (8a \cos\th) \c \nab \nab_T \psi_2-\frac 1 2 [\OO, |q|^2C_2]\psi_1\\
 &&+O(|a|)\nab^{\leq 1}_{\Rhat}\dk^{\leq 1}\psi_2+ \OO(|q|^2N_2).
 \eeaa
We now compute the commutators $[\OO, |q|^2C_1]$ and $[\OO, |q|^2C_2]$. Using \eqref{eq:C_1-C_2}, we have
\beaa
\, [\OO, |q|^2C_1]\psi_2&=&[\OO, \frac{\ov{q}^3 }{|q|^3}   \ov{\DD} \c ]\psi_2 \\
&=&\frac{\ov{q}^3 }{|q|^3}[\OO,  \ov{\DD} \c ]\psi_2+2|q|^2 \nab (\frac{\ov{q}^3 }{|q|^3}) \c \nab (  \ov{\DD} \c \psi_2)+O(|a|r^{-1})\dk^{\leq1} \psi_2\\
\, [\OO, |q|^2C_2]\psi_1&=&[\OO, \frac{q^3}{|q|^3} \big(  \DD \hot  -\frac 3 2 \left( H - \Hb\right)  \hot \big)]\psi_1\\
&=& \frac{q^3}{|q|^3} [\OO,  \DD \hot  -\frac 3 2 \left( H - \Hb\right)  \hot]\psi_1+2|q|^2\nab( \frac{q^3}{|q|^3})\c\nab  ( \DD \hot \psi_1)+O(|a|)\dk^{\leq1} \psi_1\\
&=& \frac{q^3}{|q|^3} [\OO,  \DD \hot ]\psi_1+2|q|^2\nab( \frac{q^3}{|q|^3})\c\nab  ( \DD \hot \psi_1)+O(|a|r^{-1})\dk^{\leq1} \psi_1.
\eeaa

Using Lemma \ref{lemma:commutator-OO-DD}, we obtain
\beaa
\, [\OO, |q|^2C_1]\psi_2&=&|q|^2 \Big[-3\frac{\ov{q}^3 }{|q|^3}\Kh (\ov{\DD}\c\psi_2 ) -i \frac{\ov{q}^3 }{|q|^3} \frac{4a\cos\th (r^2+a^2)}{|q|^4}\nab_\That ( \ov{\DD}\c\psi_2)-i\frac{\ov{q}^3 }{|q|^3}(\ov{H}+\ov{\Hb}) \c\dual \nab(\ov{\DD}\c\psi_2)\\
&&+2 \nab (\frac{\ov{q}^3 }{|q|^3}) \c \nab (  \ov{\DD} \c \psi_2)\Big]+O(|a|r^{-1})\dk^{\leq1} \psi_2, \\
\, [\OO, |q|^2C_2]\psi_1&=&|q|^2 \Big[3  \frac{q^3}{|q|^3}  \Kh ( \DD\hot \psi_1) +i  \frac{q^3}{|q|^3}  \frac{4a\cos\th (r^2+a^2)}{|q|^4}\nab_\That( \DD\hot\psi_1 ) +i  \frac{q^3}{|q|^3}  (H+\Hb) \c \dual \nab (\DD\hot \psi_1)\\
&&+2\nab( \frac{q^3}{|q|^3})\c\nab  ( \DD \hot \psi_1)\Big]+O(|a|r^{-1})\dk^{\leq1} \psi_1.
\eeaa

We therefore deduce
    \beaa
 {N_1}_{\OO}&=&-12Q^2\frac{\ov{q}^3 }{|q|^3}\Kh (\ov{\DD}\c\psi_2 )+{N_1}_{\OO}^{a\dk^{\leq2}}+ |q|^{-2}\OO(|q|^2N_1)\\
{N_2}_{\OO}&=&-\frac3 2  \frac{q^3}{|q|^3}  \Kh ( \DD\hot \psi_1) +{N_2}_{\OO}^{a\dk^{\leq2}}+|q|^{-2} \OO(|q|^2N_2),
 \eeaa
where
\bea\label{eq:N1-OO-dk-leq2}
\begin{split}
{N_1}_{\OO}^{a\dk^{\leq2}}&:= i \nab\big(\frac{4a(r^2+a^2)\cos\th}{|q|^2}\big)\c\nab\nab_\That\psi_1  +i\nab (4a\cos\th) \c \nab \nab_T \psi_1 +O(|a|r^{-2})\nab^{\leq 1}_{\Rhat}\dk^{\leq 1}\psi_1+O(|a|r^{-3})\dk^{\leq1} \psi_2\\
&+4Q^2 \Big[ -i \frac{\ov{q}^3 }{|q|^3} \frac{4a\cos\th (r^2+a^2)}{|q|^4}\nab_\That -i\frac{\ov{q}^3 }{|q|^3}(\ov{H}+\ov{\Hb}) \c  \dual\nab+2 \nab (\frac{\ov{q}^3 }{|q|^3}) \c \nab \Big]( \ov{\DD}\c\psi_2)\\
&= O(|a|r^{-2})\dk^{\leq 2}(\psi_1+\psi_2)
\end{split}
\eea
and
\bea\label{eq:N2-OO-dk-leq2}
\begin{split}
{N_2}_{\OO}^{a\dk^{\leq2}}&:=i  \nab\big(\frac{8a(r^2+a^2)\cos\th}{|q|^2}\big)\c\nab\nab_\That\psi_2 +i\nab (8a \cos\th) \c \nab \nab_T \psi_2+O(|a|r^{-2})\nab^{\leq 1}_{\Rhat}\dk^{\leq 1}\psi_2+O(|a|r^{-3})\dk^{\leq1} \psi_1\\
 &-\frac 1 2 \Big[i  \frac{q^3}{|q|^3}  \frac{4a\cos\th (r^2+a^2)}{|q|^4}\nab_\That +i  \frac{q^3}{|q|^3}  (H+\Hb) \c \dual \nab +2\nab( \frac{q^3}{|q|^3})\c\nab  \Big]( \DD\hot\psi_1 )\\
 &= O(|a|r^{-2})\dk^{\leq 2}(\psi_1+\psi_2).
 \end{split}
\eea
 We therefore have
\beaa
\squared_1(\OO \psi_1)  -V_1 (\OO \psi_1) &=&i  \frac{2a\cos\th}{|q|^2}\nab_T \OO \psi_1+4Q^2 C_1[\OO \psi_2]-12Q^2\frac{\ov{q}^3 }{|q|^3}\Kh (\ov{\DD}\c\psi_2 )\\
&&+{N_1}_{\OO}^{a\dk^{\leq2}}+ |q|^{-2}\OO(|q|^2N_1) \\
\squared_2(\OO \psi_2) -V_2 (\OO \psi_2) &=&i  \frac{4a\cos\th}{|q|^2}\nab_T \OO \psi_2-   \frac {1}{ 2} C_2[\OO \psi_1]-\frac3 2  \frac{q^3}{|q|^3}  \Kh ( \DD\hot \psi_1) \\
&&+{N_2}_{\OO}^{a\dk^{\leq2}}+|q|^{-2} \OO(|q|^2N_2).
 \eeaa
Observe that we can rewrite the right hand side of the first equation as
\beaa
4Q^2 C_1[\OO \psi_2]-12Q^2\frac{\ov{q}^3 }{|q|^3}\Kh (\ov{\DD}\c\psi_2 )&=& 4Q^2 \frac{\ov{q}^3 }{|q|^5}  \ov{\DD} \c (\OO \psi_2) -12Q^2\frac{\ov{q}^3 }{|q|^3}\Kh (\ov{\DD}\c\psi_2 )\\
&=& 4Q^2 \frac{\ov{q}^3 }{|q|^5}  \ov{\DD} \c (\OO \psi_2-3|q|^2\Kh \psi_2) +O(|a|r^{-5}) \psi_2,
\eeaa
which gives
\beaa
\squared_1(\OO \psi_1)  -V_1 (\OO \psi_1) &=&i  \frac{2a\cos\th}{|q|^2}\nab_T \OO \psi_1+4Q^2 \frac{\ov{q}^3 }{|q|^5}  \ov{\DD} \c (\OO \psi_2-3|q|^2\Kh \psi_2)\\
&&+{N_1}_{\OO}^{a\dk^{\leq2}}+ |q|^{-2}\OO(|q|^2N_1).
\eeaa
Recall that $|q|^2 \Kh=1+O(a^2r^{-2})$, so we have
\beaa
\squared_2(|q|^2 \Kh\psi_2) -V_2 (|q|^2 \Kh \psi_2 )&=&i  \frac{4a\cos\th}{|q|^2}\nab_T (|q|^2 \Kh\psi_2)-   \frac {1}{ 2}|q|^2 \Kh \frac{q^3}{|q|^5} \left(  \DD \hot  \psi_1  \right)\\
&&+|q|^2 \Kh  N_2+O(ar^{-4})(\psi_1+ \psi_2),
\eeaa
which gives
\beaa
\squared_2(\OO \psi_2-3|q|^2 \Kh\psi_2) -V_2 (\OO \psi_2-3|q|^2 \Kh\psi_2) &=&i  \frac{4a\cos\th}{|q|^2}\nab_T( \OO \psi_2-3|q|^2 \Kh\psi_2)-   \frac {1}{ 2} C_2[\OO \psi_1] \\
&&+{N_2}_{\OO}^{a\dk^{\leq2}}+|q|^{-2} \OO(|q|^2N_2)-3|q|^2 \Kh  N_2.
\eeaa
By putting together the above we deduce the equations for $\OO\psi_1$ and $\OO \psi_2-3|q|^2 \Kh\psi_2$, and prove the lemma.
\end{proof}

\subsubsection{Proof of Lemma \ref{lemma:theoperqatorwidetildeOOcommutingwellwtihRWmodel}}\label{sec:proof-lemma-commutators-energy}

Recall that we have  from Lemma \ref{lemma:system-hats}
\beaa
\squared_1\hat{\psi}_1  -V_1 \hat{\psi}_1 &=&i  \frac{2a\cos\th}{|q|^2}\nab_T \hat{\psi}_1+4Q^2 C_1[ \hat{\psi}_2]+{N_1}_{\OO}^{a\dk^{\leq2}} \\
&&+O(|a|r^{-2})\nab^{\leq 1}_{\Rhat}\dk^{\leq 1}\psi_1+O(|a|r^{-3})\dk^{\leq1} \psi_2+\dk^{\leq2} N_1\\
\squared_2\hat{\psi_2} -V_2 \hat{\psi_2} &=&i  \frac{4a\cos\th}{|q|^2}\nab_T\hat{\psi_2}-   \frac {1}{ 2} C_2[\hat{\psi}_1] \\
&&+O(|a|r^{-2})\nab^{\leq 1}_{\Rhat}\dk^{\leq 1}\psi_2+O(|a|r^{-3})\dk^{\leq1} \psi_1+{N_2}_{\OO}^{a\dk^{\leq2}}+\dk^{\leq2} N_2
\eeaa
where ${N_1}_{\OO}^{a\dk^{\leq2}}$ and ${N_2}_{\OO}^{a\dk^{\leq2}}$ are given by  \eqref{eq:N1-OO-dk-leq2} and \eqref{eq:N2-OO-dk-leq2}. Using \eqref{define:That}, we can write
\bea
{N_1}_{\OO}^{a\dk^{\leq2}}&=& i \nab\big(\frac{4a(r^2+a^2+|q|^2)\cos\th}{|q|^2}\big)\c\nab\nab_{T}\psi_1+i \nab\big(\frac{4a^2\cos\th}{|q|^2}\big)\c\nab\nab_{Z}\psi_1+4Q^2 \YY ( \ov{\DD}\c\psi_2), \label{eq:N1-OO-dk-leq2-2}\\
{N_2}_{\OO}^{a\dk^{\leq2}}&=&i \nab\big(\frac{8a(r^2+a^2+|q|^2)\cos\th}{|q|^2}\big)\c\nab\nab_{T}\psi_1+i \nab\big(\frac{8a^2\cos\th}{|q|^2}\big)\c\nab\nab_{Z}\psi_1-\frac 1 2 \ov{\YY}( \DD\hot\psi_1 ),\label{eq:N2-OO-dk-leq2-2}
\eea
where $\YY$ is the first order differential operator
\bea\label{eq:LL}
\YY&:=&-i \frac{\ov{q}^3 }{|q|^3} \frac{4a\cos\th (r^2+a^2)}{|q|^4}\nab_\That -i\frac{\ov{q}^3 }{|q|^3}(\ov{H}+\ov{\Hb}) \c  \dual\nab+2 \nab (\frac{\ov{q}^3 }{|q|^3}) \c \nab.
\eea

We now define
\beaa
\widetilde{\psi}_1&:=&\hat{\psi}_1+f_1\nab_T\psi_1 +g_1\nab_Z\psi_1+h_1|q|\ov{\DD}\c\psi_2, \\
\widetilde{\psi}_2&:=&\hat{\psi}_2+f_2\nab_T\psi_2 +g_2\nab_Z\psi_2+h_2|q|\DD \hot \psi_1, 
\eeaa
for scalar functions $f_1, f_2, g_1, g_2=O(|a|)$ and $h_1, h_2$ not necessarily $O(|a|)$. From Lemma \ref{lemma:system-hats-nabT} and Lemma \ref{lemma:system-hats-qDDcpsi-DDhot} we deduce for $\widetilde{\psi}_1$ and $\widetilde{\psi}_2$, 
\beaa
&&\squared_1\widetilde{\psi}_1-V_1\widetilde{\psi}_1- i  \frac{2a\cos\th}{|q|^2}\nab_T\widetilde{\psi}_1\\
&=&4Q^2 C_1[ \hat{\psi}_2]\\
&&+ \Big( i \nab\big(\frac{4a(r^2+a^2+|q|^2)\cos\th}{|q|^2}\big)+2 \nab f_1\Big) \c\nab\nab_{T}\psi_1+\Big( i \nab\big(\frac{4a^2\cos\th}{|q|^2}\big)+2 \nab g_1\Big)\c\nab\nab_{Z}\psi_1\\
&&+\Big[4Q^2\Big(  \YY +  \frac{\ov{q}^3 }{|q|^5} f_1\nab_T+\frac{\ov{q}^3 }{|q|^5} g_1\nab_Z  \Big)+ 2|q| \nab h_1 \c \nab\Big]( \ov{\DD}\c\psi_2)-   \frac {1}{ 2} \frac{q^3}{|q|^5} h_1|q |\ov{\DD}\c (  \DD \hot  \psi_1 )  \\
&&+ 2\frac{\De}{|q|^2} \partial_r h_1 \nab_{r}( |q| \ov{\DD}\c \psi_2)+(\square_\g h_1+(V_2-V_1-3\Kh) h_1 )|q|\ov{\DD}\c \psi_2\\
&&+O(|a|r^{-2})\nab^{\leq 1}_{\Rhat}\dk^{\leq 1}\psi_1+O(|a|r^{-3})\dk^{\leq1} \psi_2+\dk^{\leq2} N_1
\eeaa
and 
\beaa
&&\squared_2\widetilde{\psi}_2-V_2\widetilde{\psi}_2- i  \frac{4a\cos\th}{|q|^2}\nab_T\widetilde{\psi}_2\\
&=&-   \frac {1}{ 2} C_2[\hat{\psi}_1]\\
&&+\Big( i \nab\big(\frac{8a(r^2+a^2+|q|^2)\cos\th}{|q|^2}\big)+2 \nab f_2\Big) \c\nab\nab_{T}\psi_2+\Big( i \nab\big(\frac{8a^2\cos\th}{|q|^2}\big)+2 \nab g_2\Big) \c\nab\nab_{Z}\psi_2\\
 &&+\Big[ -\frac 1 2 \Big( \ov{\YY}+ \frac{q^3}{|q|^5} f_2 \nab_T+\frac{q^3}{|q|^5} g_2 \nab_Z\Big)+ 2|q| \nab h_2 \c \nab \Big]( \DD\hot\psi_1 )+ 4Q^2\frac{\ov{q}^3 }{|q|^5}  h_2 |q|\DD\hot  (  \ov{\DD} \c  \psi_2 )\\
 &&+ 2\frac{\De}{|q|^2} \partial_r h_2 \nab_{r}( |q| \DD\hot \psi_1) + (\square_\g h_2+(V_1-V_2+3\Kh ) h_2 )|q|\DD\hot \psi_1\\
 &&+O(|a|r^{-2})\nab^{\leq 1}_{\Rhat}\dk^{\leq 1}(\psi_1,\psi_2)+O(|a|r^{-3})\dk^{\leq1}( \psi_1, \psi_2)+\dk^{\leq2} N_2.
\eeaa
Now writing that
\beaa
C_1[\hat{\psi}_2]&=& C_1[\widetilde{\psi}_2]-\frac{\ov{q}^3 }{|q|^5}    f_2\nab_T (\ov{\DD} \c\psi_2)-\frac{\ov{q}^3 }{|q|^5}  g_2\nab_Z(\ov{\DD} \c\psi_2)-\frac{\ov{q}^3 }{|q|^5}  h_2|q| \ov{\DD} \c (\DD \hot \psi_1)\\
&&+O(|a|r^{-3})\dk^{\leq1} (\psi_1,\psi_2),\\
C_2[\hat{\psi}_1]&=& C_2[\widetilde{\psi}_1]-\frac{q^3}{|q|^5}  f_1\nab_T \DD \hot \psi_1-\frac{q^3}{|q|^5} g_1\nab_Z \DD \hot   \psi_1-\frac{q^3}{|q|^5}   h_1|q|\DD \hot (\ov{\DD}\c\psi_2)\\
&&+O(|a|r^{-3})\dk^{\leq1} (\psi_1,\psi_2),
\eeaa
we obtain
\beaa
&&\squared_1\widetilde{\psi}_1-V_1\widetilde{\psi}_1- i  \frac{2a\cos\th}{|q|^2}\nab_T\widetilde{\psi}_1\\
&=&4Q^2 C_1[\widetilde{\psi}_2]-\frac{ 8Q^2 \ov{q}^3  h_2+q^3h_1}{2|q|^5}|q| \ov{\DD} \c (\DD \hot \psi_1) \\
&&+ \Big( i \nab\big(\frac{4a(r^2+a^2+|q|^2)\cos\th}{|q|^2}\big)+2 \nab f_1\Big) \c\nab\nab_{T}\psi_1+\Big( i \nab\big(\frac{4a^2\cos\th}{|q|^2}\big)+2 \nab g_1\Big)\c\nab\nab_{Z}\psi_1  \\
&&+\Big[4Q^2\Big(  \YY +  \frac{\ov{q}^3 }{|q|^5}( f_1-f_2)\nab_T+\frac{\ov{q}^3 }{|q|^5} (g_1-g_2)\nab_Z \Big)+ 2|q| \nab h_1 \c \nab\Big]( \ov{\DD}\c\psi_2)\\
&&+ 2\frac{\De}{|q|^2} \partial_r h_1 \nab_{r}( |q| \ov{\DD}\c \psi_2)+(\square_\g h_1+(V_2-V_1-3\Kh) h_1 )|q|\ov{\DD}\c \psi_2\\
&&+O(|a|r^{-2})\nab^{\leq 1}_{\Rhat}\dk^{\leq 1}(\psi_1, \psi_2)+O(|a|r^{-3})\dk^{\leq1}(\psi_1, \psi_2)+\dk^{\leq2} N_1
\eeaa
and 
\beaa
&&\squared_2\widetilde{\psi}_2-V_2\widetilde{\psi}_2- i  \frac{4a\cos\th}{|q|^2}\nab_T\widetilde{\psi}_2\\
&=&-   \frac {1}{ 2}C_2[\widetilde{\psi}_1]+ \frac{8Q^2 \ov{q}^3 h_2+q^3h_1}{2|q|^5}  |q|\DD\hot  (  \ov{\DD} \c  \psi_2 ) \\
&&+\Big( i \nab\big(\frac{8a(r^2+a^2+|q|^2)\cos\th}{|q|^2}\big)+2 \nab f_2\Big) \c\nab\nab_{T}\psi_2+\Big( i \nab\big(\frac{8a^2\cos\th}{|q|^2}\big)+2 \nab g_2\Big) \c\nab\nab_{Z}\psi_2\\
 &&+\Big[ -\frac 1 2 \Big( \ov{\YY}+ \frac{q^3}{|q|^5}( f_2-f_1) \nab_T+\frac{q^3}{|q|^5} (g_2-g_1) \nab_Z\Big)+ 2|q| \nab h_2 \c \nab \Big]( \DD\hot\psi_1 )\\
  &&+ 2\frac{\De}{|q|^2} \partial_r h_2 \nab_{r}( |q| \DD\hot \psi_1) + (\square_\g h_2+(V_1-V_2+3\Kh ) h_2 )|q|\DD\hot \psi_1\\
 &&+O(|a|r^{-2})\nab^{\leq 1}_{\Rhat}\dk^{\leq 1}(\psi_1,\psi_2)+O(|a|r^{-3})\dk^{\leq1}( \psi_1, \psi_2)+\dk^{\leq2} N_2.
\eeaa
Now we choose
\beaa
f_1&=&-i \big(\frac{2a(r^2+a^2+|q|^2)\cos\th}{|q|^2}\big),  \qquad  g_1=- i \frac{2a^2\cos\th}{|q|^2}\\
 f_2&=&-i \big(\frac{4a(r^2+a^2+|q|^2)\cos\th}{|q|^2}\big), \qquad g_2=- i \frac{4a^2\cos\th}{|q|^2},
\eeaa
in order to cancel the coefficients of $\nab \nab_T$ and $\nab \nab_Z$. Moreover with this choice we have
\beaa
&&\YY +  \frac{\ov{q}^3 }{|q|^5}( f_1-f_2)\nab_T+\frac{\ov{q}^3 }{|q|^5} (g_1-g_2)\nab_Z\\
&=&-i \frac{\ov{q}^3 }{|q|^3} \frac{4a\cos\th (r^2+a^2)}{|q|^4}\nab_{T}-i \frac{\ov{q}^3 }{|q|^3} \frac{4a^2\cos\th }{|q|^4}\nab_{ Z} -i\frac{\ov{q}^3 }{|q|^3}(\ov{H}+\ov{\Hb})_1 \dual\nab_1-i\frac{\ov{q}^3 }{|q|^3}(\ov{H}+\ov{\Hb})_2 \dual\nab_2\\
&&+2 \nab_1 (\frac{\ov{q}^3 }{|q|^3}) \nab_1+  \frac{\ov{q}^3 }{|q|^5}i \big(\frac{2a(r^2+a^2+|q|^2)\cos\th}{|q|^2}\big)\nab_T+\frac{\ov{q}^3 }{|q|^5} i \frac{2a^2\cos\th}{|q|^2}\nab_Z.
\eeaa
Using that $e_2=\frac{a\sin\th}{|q|}T+\frac{1}{|q|\sin\th}Z$ and that $H_1=\ov{\Hb_1}$, $H_2=-\ov{\Hb_2}$ and $\eta_1= -\frac{a^2\sin\th \cos\th}{|q|^3}$, we obtain
\beaa
&&\YY +  \frac{\ov{q}^3 }{|q|^5}( f_1-f_2)\nab_T+\frac{\ov{q}^3 }{|q|^5} (g_1-g_2)\\
&=&-i \frac{\ov{q}^3 }{|q|^3} \frac{4a\cos\th (r^2+a^2)}{|q|^4}\nab_{T}-i \frac{\ov{q}^3 }{|q|^3} \frac{4a^2\cos\th }{|q|^4}\nab_{ Z} -i\frac{\ov{q}^3 }{|q|^3}(\ov{H}_1+H_1) (\frac{a\sin\th}{|q|}\nab_T+\frac{1}{|q|\sin\th}\nab_Z)\\
&&+i\frac{\ov{q}^3 }{|q|^3}(\ov{H}_2-H_2) \nab_1+2 \nab_1 (\frac{\ov{q}^3 }{|q|^3}) \nab_1+  \frac{\ov{q}^3 }{|q|^5}i \big(\frac{2a(r^2+a^2+|q|^2)\cos\th}{|q|^2}\big)\nab_T+\frac{\ov{q}^3 }{|q|^5} i \frac{2a^2\cos\th}{|q|^2}\nab_Z\\
&=&-i \frac{\ov{q}^3 }{|q|^3} \frac{4a\cos\th (r^2+a^2)}{|q|^4}\nab_{T}-i \frac{\ov{q}^3 }{|q|^3} \frac{4a^2\cos\th }{|q|^4}\nab_{ Z} +i\frac{\ov{q}^3 }{|q|^3}2\frac{a^2\sin\th \cos\th}{|q|^3} (\frac{a\sin\th}{|q|}\nab_T+\frac{1}{|q|\sin\th}\nab_Z)\\
&&+i\frac{\ov{q}^3 }{|q|^3}(\ov{H}_2-H_2) \nab_1+2 \nab_1 (\frac{\ov{q}^3 }{|q|^3}) \nab_1+  \frac{\ov{q}^3 }{|q|^5}i \big(\frac{2a(r^2+a^2+|q|^2)\cos\th}{|q|^2}\big)\nab_T+\frac{\ov{q}^3 }{|q|^5} i \frac{2a^2\cos\th}{|q|^2}\nab_Z\\
&=&\big( 2\dual \eta_2 \frac{\ov{q}^3 }{|q|^3}+2 \nab_1 (\frac{\ov{q}^3 }{|q|^3}) \big)\nab_1.
\eeaa
Using that $\nab_1( \frac{\ov{q}^3}{|q|^3})= 3i\dual \eta_1 \frac{\ov{q}^3}{|q|^3}$, 
we have
\beaa
 \dual \eta_2 \frac{\ov{q}^3 }{|q|^3}+ \nab_1 (\frac{\ov{q}^3 }{|q|^3}) = \big(\dual \eta_2+ 3i\dual \eta_1\big) \frac{\ov{q}^3}{|q|^3}=|q|\nab_1\big(\frac{|q|^2}{q^3} \big).
\eeaa
This finally gives
\beaa
&&\squared_1\widetilde{\psi}_1-V_1\widetilde{\psi}_1- i  \frac{2a\cos\th}{|q|^2}\nab_T\widetilde{\psi}_1\\
&=&4Q^2 C_1[\widetilde{\psi}_2]-\frac{ 8Q^2 \ov{q}^3  h_2+q^3h_1}{2|q|^5}|q| \ov{\DD} \c (\DD \hot \psi_1)+2|q|\nab_1\big(\frac{4Q^2|q|^2}{q^3}+h_1 \big)\nab_1( \ov{\DD}\c\psi_2)\\
&&+ 2\frac{\De}{|q|^2} \partial_r h_1 \nab_{r}( |q| \ov{\DD}\c \psi_2)+(\square_\g h_1+(V_2-V_1-3\Kh) h_1 )|q|\ov{\DD}\c \psi_2\\
&&+O(|a|r^{-2})\nab^{\leq 1}_{\Rhat}\dk^{\leq 1}\psi_1+O(|a|r^{-3})\dk^{\leq1} \psi_2+\dk^{\leq2} N_1
\eeaa
and 
\beaa
&&\squared_2\widetilde{\psi}_2-V_2\widetilde{\psi}_2- i  \frac{4a\cos\th}{|q|^2}\nab_T\widetilde{\psi}_2\\
&=&-   \frac {1}{ 2}C_2[\widetilde{\psi}_1]+ \frac{8Q^2 \ov{q}^3 h_2+q^3h_1}{2|q|^5}  |q|\DD\hot  (  \ov{\DD} \c  \psi_2 ) +2 |q|\nab_1\big(-\frac 1 2\frac{|q|^2}{\ov{q}^3}+h_2 \big) \nab_1( \DD\hot\psi_1 )\\
  &&+ 2\frac{\De}{|q|^2} \partial_r h_2 \nab_{r}( |q| \DD\hot \psi_1) + (\square_\g h_2+(V_1-V_2+3\Kh ) h_2 )|q|\DD\hot \psi_1\\
 &&+O(|a|r^{-2})\nab^{\leq 1}_{\Rhat}\dk^{\leq 1}(\psi_1,\psi_2)+O(|a|r^{-3})\dk^{\leq1}( \psi_1, \psi_2)+\dk^{\leq2} N_2.
\eeaa

We now choose $h_1$ and $h_2$ satisfying $h_1=-\frac{4Q^2|q|^2}{q^3},$ $h_2=\frac 1 2\frac{|q|^2}{\ov{q}^3}$
which gives the cancellation of the terms $\nab_1(\ov{\DD}\c\psi_2)$ and $\nab_1(\DD\hot\psi_1 )$. This choice also implies $8Q^2 \ov{q}^3  h_2+q^3h_1=0$. With the above choices, we then obtain 
\beaa
\squared_1\widetilde{\psi}_1-V_1\widetilde{\psi}_1&=&i  \frac{2a\cos\th}{|q|^2}\nab_T\widetilde{\psi}_1+4Q^2 C_1[\widetilde{\psi}_2]\\
&&+ 2\frac{\De}{|q|^2} \partial_r h_1 \nab_{r}( |q| \ov{\DD}\c \psi_2)+(\square_\g h_1+(V_2-V_1-3\Kh) h_1 )|q|\ov{\DD}\c \psi_2\\
&&+O(|a|r^{-2})\nab^{\leq 1}_{\Rhat}\dk^{\leq 1}\psi_1+O(|a|r^{-3})\dk^{\leq1} \psi_2+\dk^{\leq2} N_1,
\eeaa
and 
\beaa
\squared_2\widetilde{\psi}_2-V_2\widetilde{\psi}_2&=&i  \frac{4a\cos\th}{|q|^2}\nab_T\widetilde{\psi}_2-   \frac {1}{ 2}C_2[\widetilde{\psi}_1]\\
  &&+ 2\frac{\De}{|q|^2} \partial_r h_2 \nab_{r}( |q| \DD\hot \psi_1) + (\square_\g h_2+(V_1-V_2+3\Kh ) h_2 )|q|\DD\hot \psi_1\\
 &&+O(|a|r^{-2})\nab^{\leq 1}_{\Rhat}\dk^{\leq 1}(\psi_1,\psi_2)+O(|a|r^{-3})\dk^{\leq1}( \psi_1, \psi_2)+\dk^{\leq2} N_2.
\eeaa
By writing that $h_1=O(Q^2r^{-1})$, $h_2=O(r^{-1})$ we prove the lemma.

\begin{remark}\label{remark:imaginary-potential} Using Lemma \ref{lemma:adjoint-operators} one could get cancellation of the terms $\nab_1( \ov{\DD}\c\psi_2)$ and $\nab_1( \DD\hot\psi_1 )$ by the means of the combined energy momentum tensor if the coefficients satisfy the relation:
\beaa
\ov{|q|\nab_1\big(\frac{4Q^2|q|^2}{q^3}+h_1 \big)}=8Q^2 |q|\nab_1\big(-\frac 1 2\frac{|q|^2}{\ov{q}^3}+h_2 \big),
\eeaa
which would give $\ov{h_1}-8Q^2 h_2 =-\frac{8Q^2|q|^2}{\ov{q}^3}+z(r)$ for some function $z(r)$ of $r$ only. On the other hand, with that choice the potential term in the two equations reduces to
\beaa
8Q^2 \ov{q}^3  h_2+q^3h_1=8Q^2 (\ov{q}^3  h_2+q^3 \ov{h_2})-8Q^2|q|^2+8q^3\ov{z(r)}.
\eeaa
While the first two terms are real, it is easy to check that the term $8q^3\ov{z(r)}$ necessarily has an imaginary component for any radial function $z\neq 0$. In order to avoid  in the equations for $\widetilde{\psi}_1$, $\widetilde{\psi}_2$ an imaginary potential (which we cannot treat in the energy estimates), one necessarily needs $z(r)=0$, which implies that $h_1$, $h_2$ cannot both be $O(|a|)$. 

This shows a surprising entanglement between the presence of a non zero charge and a non zero angular momentum. In fact, we need modification of $\widetilde{\psi}_1$, $\widetilde{\psi}_2$ which is $O(|Q|)$ in order to cancel the $O(|a|)$ coefficients of $\nab_1( \ov{\DD}\c\psi_2)$ and $\nab_1( \DD\hot\psi_1 )$ and avoid an imaginary potential at the same time. On the other hand, if $|a|=0$, the symmetry operators $\hat{\psi}_1$, $\hat{\psi}_2$ of Lemma \ref{lemma:system-hats} can be used directly for the energy estimates with no need of modification. In particular, this represents a surprising obstruction to treat the large charge case $|Q| <M$ even in the very slowly rotating case $|a| \ll M$.

\end{remark}

\subsection{Generalized energy-momentum tensor and current}\label{sec:generalized-current}

Here we define the generalized energy-momentum tensor and current for commuted solutions.

Consider $\psiao$, $\psiat$ and $\psibo$, $\psibt$ solutions of the commuted model system \eqref{eq:commuted-equation-final1}-\eqref{eq:commuted-equation-final2}. We define the \textit{generalized combined energy-momentum} for the system as
\beaa
\QQ[\psi_1, \psi_2]_{\aund\bund \mu \nu}&:=& \QQ[\psi_1]_{\aund\bund \mu \nu}+8Q^2 \QQ[\psi_2]_{\aund\bund \mu \nu},
\eeaa
where 
\beaa
\QQ[\psi_1]_{\aund\bund \mu \nu}&=&\frac 1 2  \Re\big(\Db_\mu  \psiao \c \Db _\nu \ov{\psibo}+\Db_\mu  \psibo \c \Db _\nu \ov{\psiao}\big)
          -\frac 12 \g_{\mu\nu} \Re\big( \Db_\la \psiao\c\Db^\la \ov{\psibo} + V\psiao \c \ov{\psibo}\big) \\
          \QQ[\psi_2]_{\aund\bund \mu \nu}&=&\frac 1 2  \Re\big(\Db_\mu  \psiat \c \Db _\nu \ov{\psibt}+\Db_\mu  \psibt \c \Db _\nu \ov{\psiat}\big)
          -\frac 12 \g_{\mu\nu} \Re\big( \Db_\la \psiat\c\Db^\la \ov{\psibt} + V\psiat \c \ov{\psibt}\big). 
\eeaa

 Let $\X$ be a double-indexed collection of vector fields $\X=\{ X^{\underline{a} \underline{b}}\}$, $\bold{w}$ be a double-indexed collection of functions $\bold{w}=\{ w^{\underline{ab}} \}$, and $\J=\{J^{\aund\bund} \}$   a double-indexed collection of  $1$-forms, all symmetric in the indices $\aund, \bund$.   We define the following \textit{generalized combined current} for the system:
 \beaa
  \PP_\mu^{(\bold{X}, \bold{w}, \J)}[\psi_1, \psi_2]&:=& \PP_\mu^{(\bold{X}, \bold{w}, \J)}[\psi_1]+8 Q^2 \PP_\mu^{(\bold{X}, \bold{w}, \J)}[ \psi_2],
  \eeaa
 where
 \beaa
\PP_\mu^{(\bold{X}, \bold{w}, \J)}[\psi_1]&=& \QQ[\psi_1]_{\underline{ab} \mu \nu} X^{\underline{ab} \nu}+\frac 1 4  w^{\underline{ab}} \, \Re\big(\Db_\mu  \psiao\c  \ov{\psibo}+\Db_\mu  \psibo\c  \ov{\psiao}\big)\\
&&-\frac 1 4 (\partial_\mu w^{\underline{ab}})\Re\big( \psiao\c\ov{\psibo} \big) +\frac 1 4 J^{\aund\bund}_\mu \Re\big(\psiao\c \ov{\psibo}\big).
  \eeaa

We have the following equivalent of Proposition \ref{prop:general-computation-divergence-P} for the divergence of the generalized current. 

\begin{proposition}\label{prop:general-computation-divergence-P-generalized}  Let $\psiao, \psibo \in \sk_1(\mathbb{C})$ and $\psiat, \psibt \in \sk_2(\mathbb{C})$ satisfying the commuted model system \eqref{eq:commuted-equation-final1}-\eqref{eq:commuted-equation-final2}. Then, the generalized combined current defined satisfies the following divergence identity:
\bea\label{eq:divv-PP-commuted}
\begin{split}
\D^\mu \PP_\mu^{(\textbf{X}, \textbf{w}, \textbf{J})}[\psi_1, \psi_2]&= \EE^{(\textbf{X}, \textbf{w}, \textbf{J})}[\psi_1, \psi_2]+\mathscr{N}_{first}^{(\textbf{X}, \textbf{w})}[\psi_1,\psi_2]+\mathscr{N}_{coupl}^{(\textbf{X}, \textbf{w})}[\psi_1,\psi_2]\\
&+\mathscr{N}_{lot}^{(\textbf{X}, \textbf{w})}[\psi_1,\psi_2]+\mathscr{R}^{(\textbf{X})}[\psi_1, \psi_2],
\end{split}
\eea
where 
\begin{itemize}
\item the bulk term $\EE^{(\textbf{X}, \textbf{w}, \textbf{J})}[\psi_1, \psi_2]$ is given by 
\beaa
\EE^{(\textbf{X}, \textbf{w}, \textbf{J})}[\psi_1, \psi_2]&:=& \EE^{(\textbf{X}, \textbf{w}, \textbf{J})}[\psi_1] +8Q^2 \EE^{(\textbf{X}, \textbf{w}, \textbf{J})}[\psi_2]
\eeaa
where
 \beaa
\EE^{(\textbf{X}, \textbf{w}, \textbf{J})}[\psi_1] &:=& \frac 1 2 \QQ_{\aund\bund\mu\nu}[\psi_1]  (\LL_{X^{\aund\bund}} g)^{\mu\nu} - \frac 1 2 X^{\aund\bund}( V ) \Re\big(\psiao\c \ov{\psibo}\big)\\
&&+\frac 12  w^{\aund\bund}\Re\big( \Db_\la \psiao\c\Db^\la \ov{\psibo} + V\psiao \c \ov{\psibo}\big)\\
&& -\frac 1 4 \square_\g  w^{\aund\bund} \Re\big(\psiao\c \ov{\psibo}\big)+ \frac 1 4  \mbox{Div}\Big(\Re\big(\psiao\c \ov{\psibo}\big) J^{\aund\bund}\Big),
 \eeaa
 and similarly for $\EE^{(\textbf{X}, \textbf{w}, \textbf{J})}[\psi_2]$.
 
 \item the term $\mathscr{N}_{first}$ involving the first order term on the RHS of the equations is given by
\beaa
\mathscr{N}_{first}^{(\textbf{X}, \textbf{w})}[\psi_1,\psi_2]&:=& - \frac{2a\cos\th}{|q|^2} \Im\Big[ \big(\nabla_{X^{\aund\bund}}\ov{\psiao} +\frac 1 2   w \ov{\psiao}\big)\c  \nab_T \psibo\\
&&+ 16Q^2\big(\nabla_{X^{\aund\bund}}\ov{\psiat} +\frac 1 2   w^{\aund\bund} \ov{\psiat}\big)\c  \nab_T \psibt\Big],
\eeaa

\item the term $\mathscr{N}_{coupl}$ involving the coupling terms on the RHS of the equations is given by
 \beaa
 \begin{split}
\mathscr{N}_{coupl}^{(\textbf{X}, \textbf{w})}[\psi_1,\psi_2]&:=4Q^2 \Re\Big[ \big( \frac{ q^3}{|q|^5} w^{\aund\bund} -X^{\aund\bund}(\frac{ q^3}{|q|^5}) \big)\psiao \c(\DD \c\ov{\psibt} ) +\frac{ q^3}{|q|^5} \psiao \c ([\DD \c,\nabla_X]\ov{\psibt}  \big) \Big] \\
  &-\D_\a \Re \Big(\frac{ 4Q^2q^3}{|q|^5}\psiao \c \big(\nabla_{X^{\aund\bund}}\ov{\psibt} +\frac 1 2   w^{\aund\bund} \ov{\psibt} \big) \Big)^\a+\nab_{X^{\aund\bund}} \Re\Big(\frac{ 4Q^2q^3}{|q|^5}\psiao \c  (\DD \c\ov{\psibt}) \Big).
  \end{split}
\eeaa

\item the term $\mathscr{N}_{lot}$ involving the lower order terms on the RHS of the equations is given by
\beaa
\mathscr{N}_{lot}^{(\textbf{X}, \textbf{w})}[\psi_1,\psi_2]&=&  \Re\Big[ \big(\nabla_{X^{\aund\bund}}\ov{\psiao} +\frac 1 2   w^{\aund\bund} \ov{\psiao}\big)\c N_{1\bund}+8Q^2  \big(\nabla_{X^{\aund\bund}}\ov{\psiao} +\frac 1 2   w^{\aund\bund} \ov{\psiao}\big)\c N_{2\bund}\Big],
\eeaa
 
 \item the curvature term $\mathscr{R}$ is given by
 \beaa
\mathscr{R}^{(\textbf{X})}[\psi_1, \psi_2]&:=&  \Re\Big[ {X^{\aund\bund}}^\mu \Db^\nu  \psiao^a\Rdot_{ ab   \nu\mu}\ov{\psibo}^b+ 8Q^2 {X^{\aund\bund}}^\mu \Db^\nu  \psiat^a\Rdot_{ ab   \nu\mu}\ov{\psibt}^b\Big].
 \eeaa
\end{itemize}

\end{proposition}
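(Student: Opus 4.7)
The plan is to follow the proof of Proposition \ref{prop:general-computation-divergence-P} essentially verbatim, after replacing the quadratic expression $\psi\c\ov\psi$ throughout by the symmetric bilinear pairing $\tfrac12\Re(\psiao\c\ov{\psibo}+\psibo\c\ov{\psiao})$, and treating $\psiao,\psibo$ (respectively $\psiat,\psibt$) as two separate solutions of the same commuted model system \eqref{eq:commuted-equation-final1}--\eqref{eq:commuted-equation-final2}. The first step is to derive the polarized analogue of Proposition \ref{prop-app:stadard-comp-Psi}: for any pair of solutions to a wave equation with real potential $V$, the current built from the polarized energy-momentum tensor satisfies a divergence identity whose structure is identical to the scalar one, with the Lagrangian $\Db_\la\psi\c\Db^\la\ov\psi+V|\psi|^2$ replaced by $\Re(\Db_\la\psiao\c\Db^\la\ov{\psibo}+V\psiao\c\ov{\psibo})$. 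Summing over $\aund,\bund$, and using the symmetry of $\X,\textbf{w},\J$ in these indices to collapse the $\aund\leftrightarrow\bund$ contributions, produces precisely the bulk term $\EE^{(\textbf{X},\textbf{w},\textbf{J})}[\psi_1,\psi_2]$, the curvature term $\mathscr{R}^{(\textbf{X})}[\psi_1,\psi_2]$, and a residual right-hand side still containing the equation operators $(\squared_1-V_1)\psibo$ and $(\squared_2-V_2)\psibt$.

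Next I would substitute into this residual the right-hand sides of \eqref{eq:commuted-equation-final1}--\eqref{eq:commuted-equation-final2}, splitting the result into the three contributions $\mathscr{N}_{first}$, $\mathscr{N}_{coupl}$, and $\mathscr{N}_{lot}$. The first-order and lower-order pieces come out immediately with the stated expressions, as they depend linearly on the pair $(\nabla_{X^{\aund\bund}}\ov{\psibo}+\tfrac12 w^{\aund\bund}\ov{\psibo}, \, \nabla_{X^{\aund\bund}}\ov{\psibt}+\tfrac12 w^{\aund\bund}\ov{\psibt})$. The coupling term requires repeating the spacetime adjointness computation from the proof of Proposition \ref{prop:general-computation-divergence-P}, now applied to the pair $(\psiao,\psibt)$ in place of $(\psi_1,\psi_2)$: one uses Lemma \ref{lemma:adjoint-operators} to move $\DD\hot$ off $\psiao$, invokes $\DD(q^3/|q|^5)=\tfrac12(q^3/|q|^5)(\Hb-5H)$ to combine the derivative of the prefactor with the $\tfrac32(H-\Hb)\hot\psiao$ piece of $C_2[\psiao]$, and finally commutes $\DD\c$ past $\nabla_{X^{\aund\bund}}$. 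This generates exactly the spacetime divergence term displayed in $\mathscr{N}_{coupl}^{(\textbf{X},\textbf{w})}[\psi_1,\psi_2]$.

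The main point of care — and the only place where anything nontrivial could go wrong — is the bookkeeping of the symmetrization in $\aund,\bund$ in the coupling step. This is clean because the commuted system is \emph{diagonal} in the index $\aund$: the equation for $\psiao$ couples $\psiao$ only with $\psiat$, and never with $\psibo$ or $\psibt$ for $\bund\neq\aund$. Consequently the coupling pairing naturally produces only the cross-terms $\psiao\c\ov{\psibt}$ and $\psibo\c\ov{\psiat}$, which under the symmetry $X^{\aund\bund}=X^{\bund\aund}$, $w^{\aund\bund}=w^{\bund\aund}$ collapse into the single expression with the stated index structure. Once this is checked, the remainder of the identity, including the spacetime divergence contribution, follows exactly as in the scalar case; weighting the $\psi_2$ contributions by $8Q^2$ yields the stated formula \eqref{eq:divv-PP-commuted}.
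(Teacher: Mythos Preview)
Your proposal is correct and matches the paper's intended approach: the paper does not give an explicit proof of this proposition, treating it as the evident polarized/bilinear generalization of Proposition~\ref{prop:general-computation-divergence-P}, and your sketch fills in exactly that argument. The only point worth noting is that the index bookkeeping you flag (diagonality of the commuted system in $\aund$ together with the symmetry $X^{\aund\bund}=X^{\bund\aund}$, $w^{\aund\bund}=w^{\bund\aund}$) is indeed the one nontrivial verification, and you handle it correctly.
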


\subsection{Elliptic identities and estimates}\label{section:elliptic-estimates}

Recall the following elliptic identities  for a 1-tensor $\xi$ and a 2-tensor $\chi$ (see Lemma 2.1.36 and Proposition 2.1.43 in \cite{GKS}):
\beaa
\DDd_2\DDs_2 \xi &=&-\frac12\lap_1\xi - \frac 1 2  \, \Kh  \xi +   \frac 1 4 \big(\atrch\nab_3+\atrchb \nab_4\big) \dual\xi ,\\
\DDs_2 \DDd_2 \chi &=&-\frac12\lap_2\chi +  \, \Kh  \chi -  \frac 1 4 \big(\atrch\nab_3+\atrchb \nab_4\big) \dual\chi  , 
\eeaa
where $\DDs_2=-\nabla \hot$ and $\lap_k$ denotes the horizontal Laplacian. By complexifying the above, we have for complex-valued 1-tensor $\psi_1=\xi + i \dual \xi$ and 2-tensor $\psi_2=\chi + i \dual \chi$,
\bea
   \DDb \c ( \DD \hot \psi_1)&=& 2\lap_1  \psi_1 +2\Kh  \psi_1+ i (\atrch\nab_3+\atrchb \nab_4) \psi_1   \label{relation-DDb-DD-hot-lap}\\
   \DD \hot (\DDb \c \psi_2)&=& 2\lap_2  \psi_2-4\Kh \psi_2  - i (\atrch\nab_3+\atrchb \nab_4) \psi_2.  \label{relation-DD-hot-DDb-lap}
   \eea

Multiplying the above relations by $\xi$ and $\chi$ respectively and integrating by parts one obtains (see Lemma 4.8.1 in \cite{GKS})
\beaa
\begin{split}
 |\nab  \xi   |^2-\Kh |\xi|^2&=2|\DDs_2   \xi   |^2-  \frac{2a\cos\th}{|q|^2}\dual  \nab_T  \xi  \c \xi+\D_\a \big( \nab^\a \xi \c \xi+2 (\DDs_2 \xi)^{\a\b} \xi_{\b} \big)\\
|\nab    \chi |^2+ 2 \Kh |\chi|^2 &=2 |\DDd_2   \chi  |^2+  \frac{2a\cos\th}{|q|^2}\dual  \nab_T  \chi  \c \chi+\D_\a \big( \nab^\a \chi \c \chi-2 (\div \chi)_\b \chi^{\a\b} \big),
\end{split}
\eeaa
and the respective complex analogue:
\bea
 |\nab \psi_1  |^2-\Kh |\psi_1 |^2&=&\frac 1 2 |\DD \hot  \psi_1  |^2+  \frac{2a\cos\th}{|q|^2} \Im(  \nab_T  \psi_1  \c\ov{\psi_1})+\D_\a\Re \big( \nab^\a \psi_1\c \ov{\psi_1}+ (\DD \hot  \psi_1) \c \ov{\psi_1} \big),\label{eq:elliptic-estimates-psi1}\\
 |\nab \psi_2  |^2+2\Kh |\psi_2 |^2&=&\frac 1 2 |\ov{\DD}\c  \psi_2  |^2-  \frac{2a\cos\th}{|q|^2} \Im(  \nab_T  \psi_2  \c\ov{\psi_2})+\D_\a\Re \big( \nab^\a \psi_2\c \ov{\psi_2}+ (\DD \hot  \psi_2) \c \ov{\psi_2} \big).\label{eq:elliptic-estimates-psi2}
\eea

We also recall the following Poincar\'e inequality, involving the integral on the topological spheres in Kerr-Newman spanned by $(\theta, \phi)$.

\begin{lemma}[Lemma 7.2.3 in \cite{GKS}]\lab{lemma:poincareinequalityfornabonSasoidfh:chap6}
For $\psi_1\in\sk_1$, we have 
\beaa
\int_S|\nab\psi_1|^2 &\geq& \frac{1}{r^2}\int_S|\psi_1|^2  -O(a)\int_S\big(|\nab\psi_1|^2+r^{-2}|\nab_T\psi_1|^2+r^{-4}|\psi_1|^2\big).
\eeaa
For $\psi_2\in\sk_2$, we have 
\beaa
\int_S|\nab\psi_2|^2 &\geq& \frac{2}{r^2}\int_S|\psi_2|^2  -O(a)\int_S\big(|\nab\psi_2|^2+r^{-2}|\nab_T\psi_2|^2+r^{-4}|\psi_2|^2\big).
\eeaa
\end{lemma}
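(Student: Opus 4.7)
The plan is to integrate the pointwise elliptic identities \eqref{eq:elliptic-estimates-psi1} and \eqref{eq:elliptic-estimates-psi2} over the topological $2$-sphere $S\subset\{t=\text{const},\,r=\text{const}\}$ and to treat the Kerr-Newman geometry as an $O(|a|)$ perturbation of Schwarzschild. Three sources of error must be tracked throughout: (i) the deviation $|q|^2\Kh = 1 + O(a^2/r^2)$, (ii) the coupling $\frac{2a\cos\th}{|q|^2}\Im(\nab_T\psi\c\ov{\psi})$, and (iii) the non-integrability of the horizontal distribution, which prevents $\int_S \D_\alpha V^\alpha$ from reducing to a true surface divergence on $S$; this last obstruction is controlled by $H+\Hb=O(|a|)$, precisely as in the spacetime adjointness formula of Lemma~\ref{lemma:adjoint-operators}.

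For $\psi_1$, I would integrate \eqref{eq:elliptic-estimates-psi1} over $S$, drop the non-negative term $\tfrac12\int_S|\DD\hot\psi_1|^2$, expand $\int_S\Kh|\psi_1|^2 = r^{-2}\int_S|\psi_1|^2 + O(|a|)\int_S r^{-4}|\psi_1|^2$ using $|q|^2\Kh = 1+O(a^2/r^2)$, and apply Cauchy--Schwarz with a small parameter $\epsilon = c|a|$ to the $\nab_T$-coupling, producing $O(|a|)\int_S r^{-2}|\nab_T\psi_1|^2$ plus an absorbable $c|a|\int_S r^{-2}|\psi_1|^2$. The spacetime divergence, decomposed in the principal null frame as a true surface divergence on $S$ plus transverse corrections weighted by $\eta+\etab=O(|a|)$, contributes at most $O(|a|)\int_S(|\nab\psi_1|^2+r^{-2}|\psi_1|^2)$, giving the stated bound.

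For $\psi_2$, the same procedure applied to \eqref{eq:elliptic-estimates-psi2} leaves the curvature contribution with the wrong sign,
\beaa
\int_S|\nab\psi_2|^2 \;=\; \tfrac12\int_S|\ov{\DD}\c\psi_2|^2 \;-\; 2\int_S \Kh|\psi_2|^2 + O(|a|)(\cdots),
\eeaa
so that the naive positivity of $|\ov{\DD}\c\psi_2|^2$ is insufficient. I would instead invoke the sharp Poincar\'e estimate on the round sphere, $\int_{S_0}|\ov{\DD}\c\psi_2|^2 \geq \tfrac{8}{r^2}\int_{S_0}|\psi_2|^2$, which follows from the fact that the lowest eigenvalue of $\DDd_2\DDs_2$ on anti-self-dual symmetric traceless $2$-tensors is $2/r^2$, saturated by the $\ell=2$ spin-weighted spherical harmonics. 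Combined with $2\int_S\Kh|\psi_2|^2 = (2/r^2)\int_S|\psi_2|^2 + O(|a|)\int_S r^{-4}|\psi_2|^2$, this yields the claimed lower bound, with the round-sphere spectral inequality transferred to the perturbed $S$ at the cost of errors of the same structure as in (i)--(iii).

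The main obstacle will be step (iii): one must verify that the ``leftover'' pieces of the spacetime divergence on the non-foliating $2$-sphere $S$ produce only the advertised error $O(|a|)\int_S(|\nab\psi|^2+r^{-2}|\nab_T\psi|^2+r^{-4}|\psi|^2)$, with no higher-derivative remainders and no unfavorable $r$-weights; once this bookkeeping is carried out, the $\psi_1$ and $\psi_2$ estimates follow by the same template, differentiated only by the spectral input needed in the $2$-tensor case.
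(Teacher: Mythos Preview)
The paper does not supply a proof of this lemma; it is quoted from \cite{GKS} (Lemma~7.2.3 there), so there is no in-paper argument to compare against directly. That said, your strategy is sensible and, for $\psi_1$, notably economical: dropping the nonnegative term $\tfrac12|\DD\hot\psi_1|^2$ in \eqref{eq:elliptic-estimates-psi1} yields $\int_S|\nab\psi_1|^2 \ge \int_S\Kh|\psi_1|^2 + (\text{errors})$ with no spectral input whatsoever, which is cleaner than a direct perturbation of the round-sphere eigenvalue. For $\psi_2$ the curvature term in \eqref{eq:elliptic-estimates-psi2} sits on the wrong side and you correctly fall back on the sharp round-sphere bound $\int_{S_0}|\ov{\DD}\c\psi_2|^2 \ge 8r^{-2}\int_{S_0}|\psi_2|^2$; but via \eqref{eq:elliptic-estimates-psi2} at $a=0$ this is equivalent to the round-sphere Poincar\'e inequality $\int_{S_0}|\nab\psi_2|^2 \ge 2r^{-2}\int_{S_0}|\psi_2|^2$ itself, so for $\psi_2$ your route collapses to the direct spectral-perturbation argument one would expect in \cite{GKS} for both tensor ranks.

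Your obstacle~(iii) is the real technical point, and your description of it is slightly optimistic. The divergence vector $W$ is horizontal, but $e_2 = \tfrac{a\sin\th}{|q|}\pr_t + \tfrac{1}{|q|\sin\th}\pr_\phi$ has a $\pr_t$-component, so integrating $\D_\a W^\a$ over $S=\{t,r=\mathrm{const}\}$ leaves a residue $\sim a\int_S |q|\sin^2\th\,\pr_t W^2$. With $W^2\sim(\nab_2\psi)\c\ov{\psi}$ this a~priori contains a second-derivative piece $(\nab_T\nab_2\psi)\c\ov\psi$, not just the $(\eta+\etab)$-weighted first-order terms you mention. One further integration by parts in the angular variables (trading $\nab_2$ for $\pr_\phi$ plus another $O(|a|)\pr_t$) reduces it to the claimed first-order error structure, so the argument does close, but you should record this extra step. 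The paper itself uses exactly this integrated form of \eqref{eq:elliptic-estimates-psi1} over $S$, with error $O(ar^{-2})|\psi_1||\nab_T\psi_1|$, in the proof of Lemma~\ref{lemma:positivity-quadratic-form}, consistent with your claimed error budget.
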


By combining the elliptic relation \eqref{eq:elliptic-estimates-psi2} and the Poincar\'e inequality of Lemma \ref{lemma:poincareinequalityfornabonSasoidfh:chap6} we obtain
     \beaa
 \int_S     |\nab \psi_2  |^2&=&  \int_S \frac12   |\nab \psi_2  |^2+   \int_S \frac12  |\nab \psi_2  |^2\\
      &\geq& \int_{S}   (-\Kh |\psi_2 |^2+\frac 1 4 |\ov{\DD}\c  \psi_2  |^2)-\int_{S}O(ar^{-2}) |\psi_2||\nab_T\psi_2| \\
      &&+    \frac{1}{r^2}\int_S|\psi_2|^2  -O(a)\int_S\big(|\nab\psi_2|^2+r^{-2}|\nab_T\psi_2|^2+r^{-4}|\psi_2|^2\big)
     \eeaa
     giving
     \bea\label{eq:elliptic-nablapsi2-divpsi2}
      \int_S     |\nab \psi_2  |^2            &\geq& \frac 1 4 \int_{S}   |\ov{\DD}\c  \psi_2  |^2 -O(a)\int_S\big(|\nab\psi_2|^2+r^{-2}|\nab_T\psi_2|^2+r^{-3}|\psi_2|^2\big),
     \eea
     where we used that $\Kh-\frac{1}{r^2} =O(ar^{-3})$.

\section{Conditional boundedness of the energy for the model system}\label{section:energy-estimates}
 
The goal of this section is to obtain conditional energy estimates for the model system and prove Proposition \ref{prop:energy-estimates-conditional}. We first collect  preliminary computations in Section \ref{sec:preliminaries-energy}. The proof is then obtain as the result of the bound on the bulk in Section \ref{sec:conditional-bulk-energy} and the positivity of the boundary terms in Section \ref{sec:positivity-boundary-energy}.

\subsection{Preliminaries}\label{sec:preliminaries-energy}

Recall the vectorfield $\That_\chi= T + \frac{a}{r^2+a^2} \chi Z$ with $\chi(r)$ equal to $1$ for $r \leq \frac 1 2 A_1$ and $0$ for $r \geq \frac 3 4 A_1$, so that $\That$ is timelike for all $r>r_{+}$ and equal to $T$ in the trapping region.

To obtain energy estimates for the model system, we apply Proposition \ref{prop:general-computation-divergence-P} with $X=\That_\chi$, $w=0$, $J=0$, which gives
 \bea\label{eq:divergence-theorem-identity-Thatde}
\begin{split}
\D^\mu \PP_\mu^{( \That_\chi, 0, 0)}[\psi_1, \psi_2]&= \EE^{(\That_\chi, 0, 0)}[\psi_1, \psi_2] +\mathscr{N}_{first}^{( \That_\chi, 0)}[\psi_1,\psi_2]+\mathscr{N}_{coupl}^{( \That_\chi, 0)}[\psi_1,\psi_2]\\
&+\mathscr{N}_{lot}^{( \That_\chi, 0)}[\psi_1,\psi_2]+\mathscr{R}^{( \That_\chi)}[\psi_1, \psi_2].
\end{split}
\eea
In what follows we compute each of the above terms.

\subsubsection{The bulk term}

We start by computing $ \EE^{(\That_\chi, 0, 0)}[\psi_1, \psi_2]$. The vectorfield $\That_\chi$ satisfies (see also Lemma 7.1.3 and 7.1.4 in \cite{GKS})
\beaa
|q|^2 \, ^{(\That_\chi)} \pi^{\a\b}&=&\frac{\Delta}{r^2+a^2} \left(-\frac{4ar}{r^2+a^2}\chi+  O(|a| ) \chi' \right) \,   \pr_\phi^\a \pr_r^\b,\\
   |q|^2  \QQ  \c \,^{(\That_\chi)} \pi&=& \left(-\frac{4ar}{r^2+a^2}\chi+  O(|a|) \chi' \right)\Re(\nab_\phi\psi \c \nab_{\Rhat} \overline{\psi}).
  \eeaa
We therefore have 
\beaa
\EE^{(\That_\chi, 0, 0)}[\psi_1, \psi_2]  &=&\frac 1 2 \big(  \QQ[\psi_1] +8 Q^2 \QQ[\psi_2]  \big) \c \,^{(\That_\de)} \pi \\
&=&\frac 1 2  \left(-\frac{4ar}{|q|^2(r^2+a^2)}\chi+  O(|a| ) \chi' \right)\Re\big( \nab_\phi\psi_1 \c \nab_{\Rhat} \overline{\psi_1}+8Q^2\nab_\phi\psi_2 \c \nab_{\Rhat} \overline{\psi_2}\big),
\eeaa
from which we deduce the bound
\bea\label{eq:bound-EE-That-de}
\big|\EE^{(\That_\chi, 0, 0)}[\psi_1, \psi_2] \big| &\les&\left( \frac{4 |a|  r} {|q|^2 (r^2+a^2)}|\chi|+ O(|a| ) |\chi'| \right)\big( |\nab_\phi\psi_1|  |\nab_{\Rhat} \psi_1|+|\nab_\phi\psi_2 || \nab_{\Rhat} \psi_2|\big)\nonumber\\
&\les& \mathbbm{1}_{\Mntrap}  \frac{ |a| }{ r^3}\big( |\nab_\phi\psi_1|  |\nab_{\Rhat} \psi_1|+|\nab_\phi\psi_2 || \nab_{\Rhat} \psi_2|\big) \nn\\
&\les& |a| \Mor[\psi_1, \psi_2](\tau_1, \tau_2).
\eea

\subsubsection{The first order term}

We now compute $\mathscr{N}_{first}^{( \That_\chi, 0)}[\psi_1,\psi_2]$. From the definition \eqref{eq:definition-N-first}, we have
\beaa
\mathscr{N}_{first}^{(\That_\chi, 0)}[\psi_1,\psi_2]&=& - \frac{2a\cos\th}{|q|^2} \Im\Big[ \nabla_{\That_\chi}\ov{\psi_1} \c  \nab_T \psi_1+ 16Q^2 \nabla_{\That_\chi}\ov{\psi_2} \c  \nab_T \psi_2\Big].
\eeaa
Observe that 
\beaa
\Im\Big( \nabla_{\That_\chi}\ov{\psi} \c  \nab_T \psi \Big)&=& \Im\Big( (\nabla_{T}+\frac{a}{r^2+a^2}\chi \nab_\phi)\ov{\psi} \c  \nab_T \psi \Big)=\frac{a}{r^2+a^2}\chi \Im\Big(  \nab_\phi\ov{\psi} \c  \nab_T \psi \Big),
\eeaa
since $\Im( \nab_T \ov{\psi} \c \nab_T \psi )=0$. We therefore deduce the bound
\bea\label{eq:NN-first-That-de}
\big|\mathscr{N}_{first}^{(\That_\chi, 0)}[\psi_1,\psi_2]\big|&\les&  \frac{2a^2\cos\th}{|q|^2(r^2+a^2)}  |\chi| \big( |\nabla_{\phi}\psi_1| | \nab_T \psi_1|+  |\nabla_{\phi}\psi_2|| \nab_T \psi_2|\big)\nonumber\\
&\les& |a| \Mor[\psi_1, \psi_2](\tau_1, \tau_2).
\eea

\subsubsection{The coupling term}

We now compute $\mathscr{N}_{coupl}^{(\That_\chi, 0)}[\psi_1,\psi_2]$.
Using \eqref{eq:N-coupl-1}, we have
 \beaa
\mathscr{N}_{coupl}^{(\That_\chi, 0)}[\psi_1,\psi_2]&=&4Q^2 \Re\Big[ \frac{ q^3}{|q|^5} \psi_1 \c ([\DD \c,\nabla_{\That_\chi}]\ov{\psi_2}  \big) \Big] \\
  &&-\D_\a \Re\Big(\frac{ 4Q^2q^3}{|q|^5}\psi_1 \c \nabla_{\That_\chi}\ov{\psi_2} \Big)^\a+\nab_{\That_\chi} \Re\Big(\frac{ 4Q^2q^3}{|q|^5}\psi_1 \c  (\DD \c\ov{\psi_2}) \Big).
\eeaa
Using Lemma \ref{lemma:comm-ov-DD-nabX}, to write $[\DD \c,\nabla_{\That_\chi}]\ov{\psi_2} =O(ar^{-3}) \ov{\psi_2}$, we deduce from the above
 \bea\label{eq:bound-NN-coupl-Thatde}
 \begin{split}
\mathscr{N}_{coupl}^{(\That_\chi, 0)}[\psi_1,\psi_2]&=O(ar^{-5}) |\psi_1||\psi_2| \\
  &-\D_\a \Re\Big(\frac{ 4Q^2q^3}{|q|^5}\psi_1 \c \nabla_{\That_\chi}\ov{\psi_2} \Big)^\a+\nab_{\That_\chi} \Re\Big(\frac{ 4Q^2q^3}{|q|^5}\psi_1 \c  (\DD \c\ov{\psi_2}) \Big).
  \end{split}
\eea
Similarly, using \eqref{eq:N-coupl-2} we can also write
\bea\label{eq:bound-NN-coupl-Thatde-2}
 \begin{split}
 \mathscr{N}_{coupl}^{(\That_\chi, 0)}[\psi_1,\psi_2]&=O(ar^{-5}) |\psi_1||\psi_2| \\
&+\D_\a \Re( \frac{4Q^2q^3 }{|q|^5} \nabla_{\That_\chi}\psi_1 \c \ov{\psi_2})^\a-\nab_{\That_\chi}\Re\Big(  \frac{4Q^2q^3 }{|q|^5} (\DD\hot \psi_1 \big) \c   \ov{\psi_2}+O(ar^{-4})\psi_1 \c \ov{\psi_2}\Big).
\end{split}
\eea
Notice that $\mathscr{N}_{coupl}^{(\That_\chi, 0)}[\psi_1,\psi_2]$ as obtained in \eqref{eq:bound-NN-coupl-Thatde} and \eqref{eq:bound-NN-coupl-Thatde-2} presents divergence terms that will contribute to the boundary terms when applying the divergence theorem to \eqref{eq:divergence-theorem-identity-Thatde}. In order to prove positivity of the boundary energy with those additional boundary terms  we will need to apply elliptic estimates to the terms involving $(\DD \c\ov{\psi_2})$ and $(\DD\hot \psi_1 \big)$. For this reason we combine the above in the following way. We write for $\lambda \in [0,1]$,
\beaa
\mathscr{N}_{coupl}^{( \That_\chi, 0)}[\psi_1,\psi_2]&=&\lambda  \mathscr{N}_{coupl}^{( \That_\chi, 0)}[\psi_1,\psi_2]+(1-\lambda)\mathscr{N}_{coupl}^{( \That_\chi, 0)}[\psi_1,\psi_2]\\
&=&O(ar^{-5})  |\psi_1||\psi_2|  \\
  &&+\nab_{\That_\chi} \Re\Big[ \frac{ 4Q^2q^3}{|q|^5} \big( \lambda\psi_1 \c  (\DD \c\ov{\psi_2})- (1-\lambda)(\DD\hot \psi_1 \big) \c   \ov{\psi_2}\big)+O(ar^{-4}) \psi_1 \c \ov{\psi_2} \Big] \\
  &&+\D_\a \Re\Big[ -\lambda\frac{ 4Q^2q^3}{|q|^5}\psi_1 \c \nabla_{\That_\chi}\ov{\psi_2}+(1-\lambda)\frac{4Q^2q^3 }{|q|^5} \nabla_{\That_\chi}\psi_1 \c \ov{\psi_2} \Big]^\a.
\eeaa
Finally, by writing $\nab_{\That_\chi} f=\D_\a (f \That_\chi^\a)- f \D_\a \That_\chi^\a=\D_\a (f \That_\chi^\a)- f \g_{\a\b}\, ^{(\That_\chi)} \pi^{\a\b}=\D_\a (f \That_\chi^\a)$, we then obtain
\bea\label{eq:NN-coupl-That-de}
\begin{split}
\mathscr{N}_{coupl}^{( \That_\chi, 0)}[\psi_1,\psi_2]&=O(ar^{-5})  |\psi_1||\psi_2|\\
  &+\D_\a \Re\Big[ \frac{ 4Q^2q^3}{|q|^5} \big( \lambda\psi_1 \c  (\DD \c\ov{\psi_2})- (1-\lambda)(\DD\hot \psi_1 \big) \c   \ov{\psi_2}\big)\That_\chi^\a+O(ar^{-4})|\psi_1||\psi_2| \That_\chi^\a\Big] \\
  &+\D_\a \Re\Big[ -\lambda\frac{ 4Q^2q^3}{|q|^5}\psi_1 \c \nabla_{\That_\chi}\ov{\psi_2}+(1-\lambda)\frac{4Q^2q^3 }{|q|^5} \nabla_{\That_\chi}\psi_1 \c \ov{\psi_2} \Big]^\a.
  \end{split}
\eea

\subsubsection{The curvature terms}

We now compute $\mathscr{R}^{( \That_\chi)}[\psi_1, \psi_2]$. We first decompose
\beaa
\mathscr{R}^{( \That_\chi)}[\psi_1, \psi_2]&=& \mathscr{R}^{( \partial_t)}[\psi_1, \psi_2]+\mathscr{R}^{( \That_\chi-\partial_t)}[\psi_1, \psi_2].
\eeaa
 We recall the following.
\begin{lemma}\label{lemma:RR-t}[Lemma 3.4 in \cite{Giorgi8}] We have 
\beaa
\mathscr{R}^{( \partial_t)}[\psi_1, \psi_2]&=& \frac 1 2 A_\a \, \Re\big(  \dual \ov{\psi_1} \c \Ddot^\a \psi_1+8Q^2   \dual \ov{\psi_2} \c \Ddot^\a \psi_2\big),
\eeaa
where $A_\a := \in^{bc}\Rdot_{bc \a\nu}\pr_t^\nu$ is given by
\beaa
A_\a=-\D_\a\Im \Big(\frac{2M}{q^2}-\frac{2Q^2}{q^2 \ov{q}} \Big).
\eeaa
\end{lemma}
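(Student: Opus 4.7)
The plan is to exploit the two-dimensional nature of the horizontal distribution together with the explicit curvature structure of Kerr-Newman. Since the horizontal Riemann tensor $\Rdot_{ab\nu\mu}$ is antisymmetric in the tensor indices $(a,b)$, and every antisymmetric 2-tensor in two dimensions is proportional to the volume form $\in_{ab}$, one can write
\beaa
\Rdot_{ab\nu\mu} = \frac{1}{2}\in_{ab}\big(\in^{cd}\Rdot_{cd\nu\mu}\big).
\eeaa
Inserting this into the definition of $\mathscr{R}^{(\partial_t)}$ and contracting with $\partial_t^\mu$ produces exactly the covector $A_\nu$ of the statement, so that
\beaa
\partial_t^\mu \Db^\nu \psi^a \Rdot_{ab\nu\mu}\ov\psi^b = \frac{1}{2} A_\nu \in_{ab}(\Db^\nu \psi)^a \ov\psi^b.
\eeaa

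Next, I would use the fact that $\in_{ab}$ is parallel on the horizontal distribution to commute it through $\Db^\nu$, rewriting the above in terms of $\dual(\Db^\nu\psi) = \Db^\nu(\dual\psi)$. The anti-self-duality relations $\dual\psi = -i\psi$ and $\dual\ov\psi = i\ov\psi$ satisfied by $\psi_1,\psi_2$ then allow one to trade the $\in$-contraction for $\dual\ov\psi\cdot\Ddot^\nu\psi$ (up to the sign convention); after taking real parts the stated pairing $\frac{1}{2}A_\alpha\Re\big(\dual\ov{\psi_1}\cdot\Ddot^\alpha\psi_1+8Q^2\dual\ov{\psi_2}\cdot\Ddot^\alpha\psi_2\big)$ emerges, the $\psi_1$ and $\psi_2$ contributions being summed with their already-present weights.

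The remaining, and main, task is the explicit identification of $A_\alpha = \in^{cd}\Rdot_{cd\alpha\nu}\partial_t^\nu$ as the gradient stated. I would compute $\in^{cd}\Rdot_{cd\alpha\nu}$ by expanding the horizontal components of the full Riemann tensor in the principal null frame of Kerr-Newman, using the standard decomposition into the Weyl curvature component $P=\rho+i\dual\rho$ and the electromagnetic contribution driven by $\PF\ov\PF$, which on Kerr-Newman are governed by the complex aspect $-M/q^3 + Q^2/(q^3\ov q)$. The contraction along the Killing direction $\partial_t$ converts the resulting one-form into an exact differential via a standard Killing identity, and direct manipulation in terms of $q = r+ia\cos\theta$ then yields the closed form $A_\alpha = -\D_\alpha\Im\big(\tfrac{2M}{q^2} - \tfrac{2Q^2}{q^2\ov q}\big)$. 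The main obstacle lies precisely in this last step: verifying both the exactness and the precise numerical coefficients of the gradient, which is not a priori obvious from the abstract formula and which crucially relies on the complex/conformal structure carried by the function $q$ on Kerr-Newman.
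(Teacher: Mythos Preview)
The paper does not prove this lemma; it is simply quoted from \cite{Giorgi8} and used as a black box. So there is no in-paper argument to compare against, and your outline has to be assessed on its own terms.

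Your reduction is sound. The key first step---that $\Rdot_{ab\nu\mu}$ is antisymmetric in the horizontal indices $(a,b)$---is correct but deserves a line of justification: the Riemann piece $\R_{ab\nu\mu}$ is antisymmetric by construction, and for the correction $\B_{ab\mu\nu}$ one checks directly from its definition in Proposition~\ref{prop-app:stadard-comp-Psi}, using the metric-compatibility relation $(\Lambda_\alpha)_{\beta\gamma}=-(\Lambda_\alpha)_{\gamma\beta}$ for orthogonal $e_\beta,e_\gamma$, that swapping $a\leftrightarrow b$ produces an overall sign. With that in hand, your $\in_{ab}$ factorization and the passage to $\dual\ov\psi\cdot\Ddot^\alpha\psi$ via $\dual\psi=-i\psi$ are straightforward, and the first displayed identity follows.

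The part you flag as the main obstacle---identifying $A_\alpha$ explicitly as the stated gradient---is indeed where all the content lives, and your sketch stops short of it. The phrase ``a standard Killing identity'' is doing a lot of work: what is actually needed is to compute $\in^{cd}\Rdot_{cd\alpha\nu}\partial_t^\nu$ component by component in the principal null frame (the relevant curvature components $\R_{ab34}=2\,\dual\rho\,\in_{ab}$, $\R_{a3b4}=-\rho\,\delta_{ab}+\,\dual\rho\,\in_{ab}$, $\R_{abcd}$ are recorded in the proof of Lemma~\ref{lemma:commutation-nabT-nabZ}), together with the contributions from the $\B$ piece, and then recognize the resulting one-form as the differential of $-\Im(2Mq^{-2}-2Q^2q^{-2}\ov q^{-1})$ using $P=-Mq^{-3}+Q^2q^{-3}\ov q^{-1}$. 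This is a genuine computation rather than an abstract consequence of Killing symmetry, so to complete the proof you would need to carry it out explicitly.
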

On the other hand, it is easy to bound (see also Section 7.3 in \cite{GKS}) 
  \beaa
 \big|(\That_\chi-\pr_t)^\mu \Db^\nu  \psi ^a\Rdot_{ ab   \nu\mu}\ov{\psi}^b\big|
 &\les&1_{\Mntrap} \frac{|a|}{r^3}\Big(|\nab_{\Rhat}\psi|+|\nab_T\psi|+|\nab\psi|\Big)|\psi|.
 \eeaa
 We therefore obtain
\beaa
\begin{split}
\mathscr{R}^{( \That_\chi)}[\psi_1, \psi_2]&=- \D_\a\Big(\Im\big(\frac{M}{q^2}-\frac{Q^2}{q^2 \ov{q}} \big)\Big) \, \Re\big(  \dual \ov{\psi_1} \c \Ddot^\a \psi_1+8Q^2   \dual \ov{\psi_2} \c \Ddot^\a \psi_2\big)\\
&+1_{\Mntrap} \frac{|a|}{r^3}\Big ( \big(|\nab_{\Rhat}\psi_1|+|\nab_T\psi_1|+|\nab\psi_1|\big)|\psi_1|+ \big(|\nab_{\Rhat}\psi_2|+|\nab_T\psi_2|+|\nab\psi_2|\big)|\psi_2|\Big) .
\end{split}
\eeaa
We now write the first line of the above as a divergence term: for $\tilde{w}=\Im\big(\frac{M}{q^2}-\frac{Q^2}{q^2 \ov{q}} \big)$,
\beaa
-(\D_\a\tilde{w}) \, \Re\big(  \dual \ov{\psi_1} \c \Ddot^\a \psi_1+8Q^2   \dual \ov{\psi_2} \c \Ddot^\a \psi_2\big)&=& \D_\a\Big[-\tilde{w}\Re\big(  \dual \ov{\psi_1} \c \Ddot^\a \psi_1+8Q^2   \dual \ov{\psi_2} \c \Ddot^\a \psi_2\big)  \Big]\\
&&+\tilde{w} \D_\a\Re\big(  \dual \ov{\psi_1} \c \Ddot^\a \psi_1+8Q^2   \dual \ov{\psi_2} \c \Ddot^\a \psi_2\big)\\
&=& \D_\a\Big[-\tilde{w}\Re\big(  \dual \ov{\psi_1} \c \Ddot^\a \psi_1+8Q^2   \dual \ov{\psi_2} \c \Ddot^\a \psi_2\big)  \Big]\\
&&+\tilde{w} \Re\big(  \dual \ov{\psi_1} \c \squared_1 \psi_1+8Q^2   \dual \ov{\psi_2} \c \squared_2 \psi_2\big).
\eeaa
Using the model system \eqref{final-eq-1-model}-\eqref{final-eq-2-model}, we write
\beaa
&&\Re \Big(\dual \ov{\psi_1} \c \squared_1 \psi_1+8Q^2   \dual \ov{\psi_2} \c \squared_2 \psi_2 \Big)\\
&=& \Re \Big[\dual \ov{\psi_1} \c \big(V_1  \psi_1+i  \frac{2a\cos\th}{|q|^2}\nab_T \psi_1+4Q^2 C_1[\psi_2]+ N_1 \big)\\
&&+8Q^2   \dual \ov{\psi_2} \c \big(V_2  \psi_2+i  \frac{4a\cos\th}{|q|^2}\nab_T \psi_2-   \frac {1}{ 2} C_2[\psi_1]+ N_2 \big) \Big]\\
&=&  -\partial_t \Big[ \frac{2a\cos\th}{|q|^2} \Im\big( \dual \ov{\psi_1}  \c \psi_1 \big) + 8Q^2 \frac{4a\cos\th}{|q|^2}\Im\big( \dual \ov{\psi_2}  \c \psi_2 \big) \Big]+4Q^2 \Re \Big[\dual \ov{\psi_1} \c  C_1[\psi_2] -   \dual \ov{\psi_2} \c    C_2[\psi_1]\Big]\\
&&+\Re\Big[\dual \ov{\psi_1} \c  N_1+8Q^2   \dual \ov{\psi_2} \c  N_2  \Big]
\eeaa
Using \eqref{eq:C_1-C_2} and Lemma \ref{lemma:adjoint-operators}, we deduce
\beaa
&&\Re \Big(\dual \ov{\psi_1} \c \squared_1 \psi_1+8Q^2   \dual \ov{\psi_2} \c \squared_2 \psi_2 \Big)\\
&=&  -\partial_t \Big[ \frac{2a\cos\th}{|q|^2} \Im\big( \dual \ov{\psi_1}  \c \psi_1 \big) + 8Q^2 \frac{4a\cos\th}{|q|^2}\Im\big( \dual \ov{\psi_2}  \c \psi_2 \big) \Big]+\Re\Big[\dual \ov{\psi_1} \c  N_1+8Q^2   \dual \ov{\psi_2} \c  N_2  \Big]\\
&&+4Q^2 \Re \Big[ \frac{q^3 }{|q|^5} \dual \psi_1 \c  \left(  \DD \c  \ov{\psi_2}  \right) - \frac{q^3}{|q|^5}   \dual \ov{\psi_2} \c    \left(  \DD \hot  \psi_1 -\frac 3 2 \left( H - \Hb\right)  \hot \psi_1 \right)\Big]\\
&=&  -\partial_t \Big[ \frac{2a\cos\th}{|q|^2} \Im\big( \dual \ov{\psi_1}  \c \psi_1 \big) + 8Q^2 \frac{4a\cos\th}{|q|^2}\Im\big( \dual \ov{\psi_2}  \c \psi_2 \big) \Big]\\
&&+4Q^2 \Re \Big[  \frac 1 2  \frac{q^3}{|q|^5}   ( (5 H-\Hb ) \hot \psi_1 )\c \ov{\dual \psi_2} + \frac{q^3}{|q|^5}   \D_\a (\psi_1 \c \ov{\dual \psi_2})^\a \Big]+\Re\Big[\dual \ov{\psi_1} \c  N_1+8Q^2   \dual \ov{\psi_2} \c  N_2  \Big]\\
&=&O(ar^{-4}) |\psi_1||\psi_2|+\Re\Big[\dual \ov{\psi_1} \c  N_1+8Q^2   \dual \ov{\psi_2} \c  N_2  \Big]\\
&&  +\D_\a \Big[ O(ar^{-2}) \big(|\psi_1|^2+|\psi_2|^2 \big) \partial_t^\a\Big]+\D_\a \Re \Big[ \frac{4Q^2q^3}{|q|^5}  (\psi_1 \c \ov{\dual \psi_2})^\a\Big].
\eeaa
Putting the above together, using that $\tilde{w}=O(ar^{-3})$ we can finally write
\bea\label{eq:RR-That-de}
\begin{split}
\mathscr{R}^{( \That_\chi)}[\psi_1, \psi_2]&\les |a| \Mor[\psi_1, \psi_2](\tau_1, \tau_2)\\
&+\Re\Big(\tilde{w}\dual \ov{\psi_1} \c  N_1+8Q^2  \tilde{w} \dual \ov{\psi_2} \c  N_2  \Big)\\
&- \D_\a\Big[O(ar^{-3})\Re\big(  \dual \ov{\psi_1} \c \Ddot^\a \psi_1+8Q^2   \dual \ov{\psi_2} \c \Ddot^\a \psi_2\big)  +O(ar^{-5}) \big(|\psi_1|^2+|\psi_2|^2 \big) \partial_t^\a\Big]\\
& +\D_\a \Re \Big[\tilde{w} \frac{4Q^2q^3}{|q|^5}  (\psi_1 \c \ov{\dual \psi_2})^\a\Big] .
\end{split}
\eea

\subsection{Conditional bound on the bulk}\label{sec:conditional-bulk-energy}

From the divergence identity \eqref{eq:divergence-theorem-identity-Thatde}, using \eqref{eq:bound-EE-That-de}, \eqref{eq:NN-first-That-de}, \eqref{eq:NN-coupl-That-de} and \eqref{eq:RR-That-de} and collecting all the divergence terms on the left hand side we obtain
\beaa
\begin{split}
\Big|\D^\mu\check{\PP}_\mu-\Re\Big( \big(\nabla_{\That_\chi}\ov{\psi_1}+ \tilde{w} \dual \ov{\psi_1}\big) \c N_1+8Q^2 \big(  \nabla_{\That_\chi}\ov{\psi_2}+\tilde{w} \dual \ov{\psi_2} \big) \c N_2\Big)\Big|\les  |a| \Mor[\psi_1, \psi_2](\tau_1, \tau_2),
\end{split}
\eeaa
where
\beaa
\check{\PP}_\mu&:=& \PP_\mu^{( \That_\chi, 0, 0)}[\psi_1, \psi_2]-\Re\Big(\frac{ 4Q^2q^3}{|q|^5} \big( \lambda\psi_1 \c  (\DD \c\ov{\psi_2})- (1-\lambda)(\DD\hot \psi_1 \big) \c   \ov{\psi_2}\big)\Big)(\That_\chi)_\mu\\
&&-\lambda \Re\Big(\frac{ 4Q^2q^3}{|q|^5}(\psi_1 \c \nabla_{\That_\chi}\ov{\psi_2})_\mu\Big)+(1-\lambda)\Re\Big(\frac{4Q^2q^3 }{|q|^5} (\nabla_{\That_\chi}\psi_1 \c \ov{\psi_2})_\mu\Big) -\tilde{w} \Re\big(\frac{4Q^2q^3}{|q|^5}  (\psi_1 \c \ov{\dual \psi_2})_\mu\big)\\
&&+O(ar^{-3})\Re\big(  \dual \ov{\psi_1} \c \Ddot_\mu \psi_1+8Q^2   \dual \ov{\psi_2} \c \Ddot_\mu \psi_2\big)  +O(ar^{-5}) \big(|\psi_1|^2+|\psi_2|^2 \big) (\partial_t)_\mu\\
&&+O(ar^{-4})|\psi_1||\psi_2| (\That_\chi)_\mu.
\eeaa
Applying the divergence theorem on $\MM(0, \tau)$ to the above we obtain
 \bea\label{eq:divergence-theorem-energy}
 \begin{split}
 \int_{\Si_\tau}\check{\PP}_\mu n_{\Sigma_\tau}^\mu+  \int_{\HH^{+}(0, \tau)}\check{\PP}_\mu n_{\HH^+}^\mu + \int_{\mathscr{I}^+(0, \tau)}\check{\PP}_\mu n_{\mathscr{I}^+}^\mu &\les  \int_{\Si_0}\check{\PP}_\mu n_{\Sigma_0}^\mu+ |a|  \Mor[\psi_1, \psi_2](0, \tau) \\
 &  +\left|\int_{\MM(0, \tau)}\Re\Big( \nabla_{\That_\chi}\ov{\psi_1} \c N_1+8Q^2  \nabla_{\That_\chi}\ov{\psi_2}  \c N_2\Big)\right|\\
 &+\int_{\MM(0, \tau)}\Big( |N_1|^2+ |N_2|^2\Big),
 \end{split}
 \eea
 where we used the fact that $\tilde{w}=O(ar^{-3})$ to control the terms $\tilde{w} \dual \ov{\psi_1} \c N_1$ and $\tilde{w} \dual \ov{\psi_2} \c N_2$.
Observe that in order to complete the proof of Proposition  \ref{prop:energy-estimates-conditional}  from \eqref{eq:divergence-theorem-energy} we are only left to show positivity of the boundary terms.

       \subsection{Positivity of the boundary terms for $|Q|<M$, $|a| \ll M$}\label{sec:positivity-boundary-energy}

       For any vectorfield $n$, we compute
       \beaa
       \check{\PP}_\mu n^\mu&=& \QQ[\psi_1, \psi_2](\That_\chi, n)-\Re\Big(\frac{ 4Q^2q^3}{|q|^5} \big( \lambda\psi_1 \c  (\DD \c\ov{\psi_2})- (1-\lambda)(\DD\hot \psi_1 \big) \c   \ov{\psi_2}\big)\Big)\g(\That_\chi, n)\\
&&-\lambda\Re\Big(\frac{ 4Q^2q^3}{|q|^5}(\psi_1 \c \nabla_{\That_\chi}\ov{\psi_2}) \Big)\c n+(1-\lambda)\Re\Big(\frac{4Q^2q^3 }{|q|^5} (\nabla_{\That_\chi}\psi_1 \c \ov{\psi_2})\Big) \c n -\tilde{w} \Re\Big(\frac{4Q^2q^3}{|q|^5}  (\psi_1 \c \ov{\dual \psi_2}) \Big)\c n \\
&&-O(ar^{-3})\big( |\psi_1||\nab_n \psi_1|+  |\psi_2||\nab_n \psi_2|\big)  -O(ar^{-4}) \big(|\psi_1|^2+|\psi_2|^2 \big) \big(\g(\partial_t, n)+\g(\That_\chi, n)\big).
     \eeaa
    For all cases the normalization of the normal vectors $n^\mu=n_{\Sigma_\tau}^\mu, n_{\HH^+}^\mu, n_{\mathscr{I}^+}^\mu$ are chosen so that $\g(\partial_t, n)+\g(\That_\de, n)$ is bounded and $n_{\Sigma_\tau}^a=O(|a|r^{-1})$, and therefore
       \beaa
       \check{\PP}_\mu n_{\Sigma_\tau}^\mu&=& \QQ[\psi_1, \psi_2](\That_\chi, n_{\Sigma_\tau})-\Re\Big(\frac{ 4Q^2q^3}{|q|^5} \big( \lambda\psi_1 \c  (\DD \c\ov{\psi_2})- (1-\lambda)(\DD\hot \psi_1 \big) \c   \ov{\psi_2}\big)\Big)\g(\That_\chi, n_{\Sigma_\tau})\\
&&-O(ar^{-3})\big( |\psi_1|(|\nab_n \psi_1|+|\nab_{\That_\chi}\psi_2|)+  |\psi_2|(|\nab_n \psi_2|+|\nab_{\That_\chi}\psi_1|\big)  -O(ar^{-4}) \big(|\psi_1|^2+|\psi_2|^2 \big).
     \eeaa
     Using \eqref{inverse-metric-Kerr} to write
\beaa
|q|^2 \Db_\la \psi\c\Db^\la \ov{\psi}&=& |q|^2 \g^{\a\b} \Db_\a \psi\c\Db_\b \ov{\psi}= (\Delta \pr_r^\a \pr_r^\b -\frac{(r^2+a^2)^2}{\Delta} \That^\a \That^\b +O^{\a\b}) \Db_\a \psi\c\Db_\b \ov{\psi}\\
&=& \Delta |\nab_r \psi|^2-\frac{(r^2+a^2)^2}{\Delta} |\nab_{\That}\psi|^2 +|q|^2 |\nab \psi|^2,
\eeaa
from the definition of $\QQ[\psi_1, \psi_2]$ we obtain
\beaa
\QQ[\psi_1, \psi_2](\That_\chi, n)&=& \Re\big(\nab_{\That_\chi}  \psi_1 \c \nab_n \ov{\psi_1}\big)+8Q^2 \Re\big(\nab_{\That_\chi}   \psi_2 \c \nab_n \ov{\psi_2}\big)\\
&&        -\frac 12 \g( \That_\chi, n)\Big[  \Db_\la \psi_1\c\Db^\la \ov{\psi_1} + V_1|\psi_1|^2      +8Q^2 ( \Db_\la \psi_2\c\Db^\la \ov{\psi_2} + V_2|\psi_2|^2)\Big]\\
&=& \Re\big(\nab_{\That_\chi}  \psi_1 \c \nab_n \ov{\psi_1}\big)+\frac 12 \g( \That_\chi, n)\frac{(r^2+a^2)^2}{|q|^2\Delta} |\nab_{\That}\psi_1|^2\\
&&+8Q^2 \Big(  \Re\big(\nab_{\That_\chi}   \psi_2 \c \nab_n \ov{\psi_2}\big)      +\frac 12 \g( \That_\chi, n) \frac{(r^2+a^2)^2}{|q|^2\Delta} |\nab_{\That}\psi_2|^2  \Big)\\
&&        -\frac 12 \g( \That_\chi, n)\Big[  \frac{\Delta}{|q|^2} |\nab_r \psi_1|^2   +8Q^2  \frac{\Delta}{|q|^2} |\nab_r \psi_2|^2  \Big]\\
&&        -\frac 12 \g( \That_\chi, n)\Big[ |\nab \psi_1|^2+8Q^2  |\nab \psi_2|^2+V_1|\psi_1|^2     +8Q^2   V_2|\psi_2|^2\Big].
\eeaa
Since $\That_\chi$ and $n^\mu=n_{\Sigma_\tau}^\mu, n_{\HH^+}^\mu, n_{\mathscr{I}^+}^\mu$ are causal vectorfields with $\g( \That_\chi, n) \leq 0$, then by the dominant energy condition $\QQ[\psi_1, \psi_2](\That_\chi, n) \geq 0$. More precisely, there is a universal constant $c>0$ such that on $\Sigma_\tau$:
\beaa
\QQ[\psi_1, \psi_2](\That_\chi, n_{\Sigma_\tau})&\geq & c \Big( |\nab_4\psi_1|^2 +  \frac{\De}{r^4} |\nab_3\psi_1|^2+ |\nab_4\psi_2|^2 +  \frac{\De}{r^4} |\nab_3\psi_2|^2\Big)\\
&&        -\frac 12 \g( \That_\chi, n_{\Sigma_\tau})\Big[ |\nab \psi_1|^2+8Q^2  |\nab \psi_2|^2+V_1|\psi_1|^2     +8Q^2   V_2|\psi_2|^2\Big].
\eeaa
Also, on $\HH^+$ we have
\beaa
\QQ[\psi_1, \psi_2](\That, n_{\HH^+})&\geq & c \Big( |\nab_L\psi_1|^2 + |\nab_L\psi_2|^2 \Big), 
\eeaa
and on $\mathscr{I}^+$ we have 
\beaa
\QQ[\psi_1, \psi_2](T, n_{\mathscr{I}^+})&\geq& c \Big(   |\nab_3\psi_1|^2+ |\nab_3\psi_2|^2\Big)\\
&&        -\frac 12 \g( T, n_{\mathscr{I}^+})\Big[ |\nab \psi_1|^2+8Q^2  |\nab \psi_2|^2+V_1|\psi_1|^2     +8Q^2   V_2|\psi_2|^2\Big].
\eeaa

We can therefore bound on $\Sigma_\tau$:
       \beaa
       \check{\PP}_\mu n_{\Sigma_\tau}^\mu&\geq& c \Big( |\nab_4\psi_1|^2 +  \frac{\De}{r^4} |\nab_3\psi_1|^2+ |\nab_4\psi_2|^2 +  \frac{\De}{r^4} |\nab_3\psi_2|^2\Big)\\
&&        -\frac 12 \g( \That_\chi, n_{\Sigma_\tau})\Big[ |\nab \psi_1|^2+8Q^2  |\nab \psi_2|^2+V_1|\psi_1|^2     +8Q^2   V_2|\psi_2|^2\\
&&+\Re\Big(\frac{ 8Q^2q^3}{|q|^5} \big( \lambda\psi_1 \c  (\DD \c\ov{\psi_2})- (1-\lambda)(\DD\hot \psi_1 \big) \c   \ov{\psi_2}\big)\Big)\Big]\\
&&-O(ar^{-3})\big( |\psi_1|(|\nab_n \psi_1|+|\nab_{\That_\de}\psi_2|)+  |\psi_2|(|\nab_n \psi_2|+|\nab_{\That_\de}\psi_1|\big)  -O(ar^{-4}) \big(|\psi_1|^2+|\psi_2|^2 \big).
     \eeaa
     Also, on $\HH^+$:
            \beaa
       \check{\PP}_\mu n_{\HH^+}^\mu&\geq&c \Big( |\nab_L\psi_1|^2 + |\nab_L\psi_2|^2 \Big)\\
&&-O(ar^{-3})\big( |\psi_1|(|\nab_L \psi_1|)+  |\psi_2|(|\nab_L \psi_2|\big)  -O(ar^{-4}) \big(|\psi_1|^2+|\psi_2|^2 \big).
     \eeaa
     and on $\mathscr{I}^+$:
            \beaa
       \check{\PP}_\mu n_{\mathscr{I}^+}^\mu&\geq& c \Big( |\nab_3\psi_1|^2+ |\nab_3\psi_2|^2\Big)\\
&&        -\frac 12 \g( T, n_{\mathscr{I}^+})\Big[ |\nab \psi_1|^2+8Q^2  |\nab \psi_2|^2+V_1|\psi_1|^2     +8Q^2   V_2|\psi_2|^2\\
&&+\Re\Big(\frac{ 8Q^2q^3}{|q|^5} \big( \lambda\psi_1 \c  (\DD \c\ov{\psi_2})- (1-\lambda)(\DD\hot \psi_1 \big) \c   \ov{\psi_2}\big)\Big)\Big]\\
&&-O(ar^{-3})\big( |\psi_1|(|\nab_3 \psi_1|)+  |\psi_2|(|\nab_3 \psi_2|\big)  -O(ar^{-4}) \big(|\psi_1|^2+|\psi_2|^2 \big).
     \eeaa

     We now consider the quadratic form appearing above:
     \beaa
 \mbox{Qr}_{\lambda}[\psi_1, \psi_2]&:=&  |\nab \psi_1|^2+8Q^2  |\nab \psi_2|^2+V_1|\psi_1|^2     +8Q^2   V_2|\psi_2|^2\\
 &&+\Re\Big(\frac{ 8Q^2q^3}{|q|^5} \big( \lambda\psi_1 \c  (\DD \c\ov{\psi_2})- (1-\lambda)(\DD\hot \psi_1 \big) \c   \ov{\psi_2}\big)\Big)
     \eeaa

     \begin{lemma}\label{lemma:positivity-quadratic-form} For $|Q|<M$ and $|a| \ll M$, with $\lambda =\frac 1 2 $, there exists a universal constant $c>0$ such that 
     \beaa
 \int_{S}   \mbox{Qr}_{\frac 1 2}[\psi_1, \psi_2]  &\geq& c  \int_{S}   \Big(|\nab\psi_1|^2 + r^{-2}|\psi_1|^2+|\nab\psi_2|^2 + r^{-2}|\psi_2|^2 \Big)\\
      && -O(a) \int_{S}   \big(|\nab\psi_2|^2+r^{-2}|\nab_T\psi_1|^2+r^{-2}|\nab_T\psi_2|^2+r^{-3}|\psi_1|^2+r^{-3}|\psi_2|^2\big).
     \eeaa
     \end{lemma}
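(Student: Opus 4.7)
The plan is to reduce the integrated quadratic form on $S$ to an algebraic positivity statement about a $2\times 2$ Hermitian quadratic form, and then to verify that positivity for the full subextremal range $|Q|<M$, collecting $|a|$-losses at the end. First, I would symmetrize the coupling using the adjointness Lemma \ref{lemma:adjoint-operators}. Applied to $F=\psi_1$, $\ov U=\ov{\psi_2}$, and weighted by $\frac{q^3}{|q|^5}$, together with the identity $\DD\big(\tfrac{q^3}{|q|^5}\big)=\tfrac12\tfrac{q^3}{|q|^5}(\ov{\Hb}-5\ov H)$ used in the proof of Proposition \ref{prop:general-computation-divergence-P}, one obtains
\begin{equation*}
\tfrac{q^3}{|q|^5}\Big(\tfrac12\psi_1\c(\DD\c\ov{\psi_2})-\tfrac12(\DD\hot\psi_1)\c\ov{\psi_2}\Big)=\tfrac{q^3}{|q|^5}\psi_1\c(\DD\c\ov{\psi_2})+O(1)\cdot\tfrac{q^3}{|q|^5}(H-\Hb)\hot\psi_1\c\ov{\psi_2}+\tfrac12\D_\a\big(\tfrac{q^3}{|q|^5}(\psi_1\c\ov{\psi_2})^\a\big).
\end{equation*}
Integrating over $S$ the spacetime divergence produces only an $O(|a|)$ error (since the horizontal distribution is integrable up to $O(|a|)$), and the $(H-\Hb)$ term is itself $O(|a|)$. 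Thus, modulo controllable $O(|a|)$ terms, the integrated coupling contribution reduces to a single bilinear expression of schematic form $\tfrac{Q^2}{r^2}\Re(\psi_1\c(\DD\c\ov{\psi_2}))$, or equivalently (by the same adjointness) $-\tfrac{Q^2}{r^2}\Re((\DD\hot\psi_1)\c\ov{\psi_2})$.

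Next I would use the elliptic identities \eqref{eq:elliptic-estimates-psi1}--\eqref{eq:elliptic-estimates-psi2} and Lemma \ref{lemma:poincareinequalityfornabonSasoidfh:chap6} to trade kinetic energies for coercive $L^2$ mass. Concretely, on $S$ one writes
\begin{equation*}
\int_S|\nab\psi_1|^2=\int_S\Big(\tfrac12|\DD\hot\psi_1|^2+\Kh|\psi_1|^2\Big)+O(|a|),\qquad \int_S|\nab\psi_2|^2=\int_S\Big(\tfrac12|\ov\DD\c\psi_2|^2-2\Kh|\psi_2|^2\Big)+O(|a|),
\end{equation*}
and interpolates these between the pure-gradient side and the divergence side via a convex-combination splitting with parameters $\mu\in(0,1)$. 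Using $\Kh=|q|^{-2}+O(|a|r^{-3})$ and the explicit potentials
$V_1=|q|^{-2}(1-2M/r+6Q^2/r^2)$, $V_2=4|q|^{-2}(1-2M/r+3Q^2/(2r^2))$, the integrated form is reduced to
\begin{equation*}
\int_S\mathrm{Qr}_{1/2}=\int_S\Big(\mu|\DD\hot\psi_1|^2+8Q^2\mu|\ov\DD\c\psi_2|^2+\tilde V_1|\psi_1|^2+8Q^2\tilde V_2|\psi_2|^2\pm\tfrac{8Q^2}{r^2}\Re((\DD\hot\psi_1)\c\ov{\psi_2})\Big)+O(|a|),
\end{equation*}
with $\tilde V_i$ obtained from $V_i,\Kh$, and the $|\nab\psi_i|^2$ lower bounds from Poincar\'e used to absorb $r^{-2}|\psi_i|^2$ with room to spare.

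With the coupling written as a cross term between $\DD\hot\psi_1$ and $\psi_2$, the problem becomes the positivity of a $2\times 2$ Hermitian form in $(|\DD\hot\psi_1|,|\psi_2|)$ (and separately one in $(|\psi_1|,|\ov\DD\c\psi_2|)$). Positivity holds for all $|Q|<M$ precisely when
\begin{equation*}
\mu\cdot 8Q^2\tilde V_2\ \geq\ \Big(\tfrac{4Q^2}{r^2}\Big)^2,
\end{equation*}
together with $\tilde V_1\geq 0$ after Poincar\'e absorption; these two conditions are to be verified by an explicit one-variable computation using $r>r_+=M+\sqrt{M^2-Q^2}$ and $|Q|<M$, with $\mu$ chosen $r$-dependent (or piecewise constant). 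This is the step I would expect to be the main obstacle, because $\tilde V_1$ genuinely changes sign in the subextremal range and the margin in the Cauchy--Schwarz inequality for the coupling shrinks as $|Q|\to M$; one must therefore exploit the extra $Q^2/r^2$ in $\tilde V_2$ in a sharp way, exactly as in the Reissner--Nordstr\"om analysis of \cite{Giorgi7a}. Finally, having established the strict positivity in the RN case with a universal constant, I would incorporate the $O(|a|)$ errors accumulated at every step — from the non-integrability of the horizontal distribution, from $(H-\Hb)$ terms, and from commuting $\dual\nab$ and $\nab_T$ — into the stated right-hand side of the Lemma, which completes the proof for $|a|\ll M$.
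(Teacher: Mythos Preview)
Your overall strategy --- convert $|\nab\psi_1|^2$ via the elliptic identity \eqref{eq:elliptic-estimates-psi1}, use a Poincar\'e-type bound for $|\nab\psi_2|^2$, complete squares on the coupling, and check sign of the leftover potentials --- is the right one and is what the paper does. The gap is in your \emph{first} step. Applying Lemma~\ref{lemma:adjoint-operators} to collapse the $\lambda=\tfrac12$ coupling back to a single cross term undoes precisely the structure that $\lambda=\tfrac12$ was chosen to create, and your subsequent reference to ``separately one in $(|\psi_1|,|\ov\DD\c\psi_2|)$'' is then inconsistent: after your reduction there is only \emph{one} cross term, hence only one genuine $2\times 2$ form.

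The paper instead keeps both halves of the coupling
\[
\Re\Big(\tfrac{4Q^2q^3}{|q|^5}\,\psi_1\c(\DD\c\ov{\psi_2})\Big)\quad\text{and}\quad -\Re\Big(\tfrac{4Q^2q^3}{|q|^5}\,(\DD\hot\psi_1)\c\ov{\psi_2}\Big)
\]
and completes \emph{two} squares simultaneously: the first with $2Q^2|\ov\DD\c\psi_2|^2$ (from \eqref{eq:elliptic-nablapsi2-divpsi2}), the second with $\tfrac12|\DD\hot\psi_1|^2$ (from \eqref{eq:elliptic-estimates-psi1}). The cost of each square lands on a \emph{different} $|\psi_i|^2$, leaving the residual potentials
\[
V_1+\Kh-\tfrac{2Q^2}{|q|^4}=\tfrac{1}{|q|^2}\big(1-\tfrac{2M}{r}+\tfrac{4Q^2}{r^2}\big)+O(ar^{-5}),\qquad
V_2-\tfrac{Q^2}{|q|^4}=\tfrac{4}{|q|^2}\big(1-\tfrac{2M}{r}+\tfrac{5Q^2}{4r^2}\big)+O(ar^{-5}),
\]
both of which are manifestly positive on $r\ge r_+$ for all $|Q|<M$ (at $r=r_+$ they equal $3Q^2/r_+^4$ and $Q^2/r_+^4$ respectively). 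This is why the ``main obstacle'' you anticipate --- a sign change of $\tilde V_1$ --- never arises: splitting the coupling cost between both potentials gives each one enough margin. By contrast, with your single cross term the full cost falls on one side; if for instance you put it on $|\psi_1|^2$, the condition becomes $\tfrac{1}{r^2}+V_1-\tfrac{8Q^2}{r^4}=\tfrac{2}{r^4}(r^2-Mr-Q^2)\ge 0$, which fails near $r_+$ for $|Q|$ close to $M$.
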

     \begin{proof} 
     Using the elliptic relation \eqref{eq:elliptic-estimates-psi1} to bound
     \beaa
 \int_{S}     |\nab \psi_1  |^2&=&\int_{S}\Big( \frac 1 2 |\DD \hot  \psi_1  |^2+\Kh |\psi_1 |^2\Big)-\int_{S}O(ar^{-2})|\psi_1||\nab_T\psi_1|
     \eeaa
     and the version of the Poincar\'e estimate \eqref{eq:elliptic-nablapsi2-divpsi2}, we can bound, by completing the square:
          \beaa
     \mbox{Qr}_{\frac 1 2}[\psi_1, \psi_2]&=&  |\nab \psi_1|^2+8Q^2  |\nab \psi_2|^2+V_1|\psi_1|^2     +8Q^2   V_2|\psi_2|^2\\
     &&+\Re\Big(\frac{ 4Q^2q^3}{|q|^5} \big(\psi_1 \c  (\DD \c\ov{\psi_2})- (\DD\hot \psi_1 \big) \c   \ov{\psi_2}\big)\Big)\\
&\geq_{S} & \frac 1 2 |\DD \hot  \psi_1  |^2+\frac{1}{r^2} |\psi_1 |^2+2Q^2  |\ov{\DD}\c  \psi_2  |^2+V_1|\psi_1|^2     +8Q^2   V_2|\psi_2|^2\\
     &&+\Re\Big(\frac{ 4Q^2q^3}{|q|^5} \big(\psi_1 \c  (\DD \c\ov{\psi_2})- (\DD\hot \psi_1 \big) \c   \ov{\psi_2}\big)\Big)\\
     &&-O(ar^{-2})|\psi_1|(|\nab_T\psi_1|+r^{-1}|\psi_1|)-O(ar^{-2}) |\psi_2||\nab_T\psi_2| \\
      && -O(a)\big(|\nab\psi_2|^2+r^{-2}|\nab_t\psi_2|^2+r^{-3}|\psi_2|^2\big)\\
      &=_{S} & \left( \frac{ 1}{ \sqrt{2}} \DD \hot  \psi_1-\frac{2\sqrt{2}Q^2q^3}{|q|^5}\psi_2  \right)^2+\left( \sqrt{2}Q  (\ov{\DD}\c  \psi_2) -\frac{2Q\ov{q}^3}{\sqrt{2}|q|^5}\ov{\psi_1} \right)^2\\
      &&+\big( V_1  +\frac{1}{r^2}-\frac{2Q^2}{|q|^4}\big) |\psi_1 |^2   +8Q^2  \big( V_2-\frac{Q^2}{|q|^4}\big)|\psi_2|^2\\
     &&-O(ar^{-2})|\psi_1|(|\nab_T\psi_1|+r^{-1}|\psi_1|)-O(ar^{-2}) |\psi_2||\nab_T\psi_2| \\
      && -O(a)\big(|\nab\psi_2|^2+r^{-2}|\nab_t\psi_2|^2+r^{-3}|\psi_2|^2\big).
     \eeaa 
   From the values of the potentials \eqref{eq:potentials-model}, we see that 
     \beaa
      V_1  +\frac{1}{r^2}-\frac{2Q^2}{|q|^4}&=&       \frac{1}{|q|^2}\big(1-\frac{2M}{r}+\frac{6Q^2}{r^2} \big)  +\frac{1}{r^2}-\frac{2Q^2}{|q|^4}\geq   \frac{1}{|q|^2}\big(1-\frac{2M}{r}+\frac{4Q^2}{r^2} \big) +O(ar^{-5}) \\
      V_2-\frac{Q^2}{|q|^4}&=& \frac{4}{|q|^2}\big(1-\frac{2M}{r}+\frac{3Q^2}{2r^2} \big)-\frac{Q^2}{|q|^4}= \frac{4}{|q|^2}\big(1-\frac{2M}{r}+\frac{5Q^2}{4r^2} \big)+O(ar^{-5})
     \eeaa
are positive in the exterior region for $|Q|<M$ and $|a| \ll M$, giving that $ \mbox{Qr}_{\frac 1 2}[\psi_1, \psi_2]$ is positive definite. This proves the lemma.
     \end{proof}
     
     \begin{remark}\label{remark:potentials-inverted} Observe that the final computation would still give a positive contribution for $|Q|<M$ and $|a| \ll M$ if $V_1$ and $V_2$ were inverted. This will be useful in the derivation of the energy estimates for the commuted equations, see Section \ref{sec:first-order-commuted-system}.
     \end{remark}

     Using Lemma \ref{lemma:positivity-quadratic-form}, we finally deduce that there exists a universal constant $c>0$ such that for $|Q|<M$ and $|a| \ll M$, the boundary terms are bound as follows:
        \beaa
    \int_{\Sigma_\tau}   \check{\PP}_\mu n_{\Sigma_\tau}^\mu&\geq& c \, E[\psi_1, \psi_2](\tau), \\
         \int_{\mathscr{I}^+(0, \tau)}  \check{\PP}_\mu n_{\mathscr{I}^+}^\mu&\geq& c \, F_{\mathscr{I}^+, 0}[\psi_1, \psi_2](0,\tau),
     \eeaa
 By enhancing the control of the zero-th order term and angular derivatives through a redshift estimate \cite{DR09}, we similarly obtain
                  \beaa
    \int_{\HH^+(0, \tau)}   \check{\PP}_\mu n_{\HH^+}^\mu&\geq&c \, F_{\HH^+}[\psi_1, \psi_2](0,\tau).
     \eeaa
     The above, combined with the conditional bound on the bulk given by \eqref{eq:divergence-theorem-energy} proves Proposition \ref{prop:energy-estimates-conditional}.

\section{Conditional Morawetz estimates for the model system}\label{sec:conditional-mor-est}

The goal of this section is to obtain conditional Morawetz estimates for the model system and prove Proposition \ref{proposition:Morawetz1-step1}. We first collect  preliminary computations in Section \ref{sec:preliminaries-conditional-mor-est}. The proof is obtained by combining the conditional bound on the bulk obtained in Section \ref{sec:conditional-bound-bulk} and the control on the terms involving the right hand side of the equation in Section \ref{sec:bound-Rhs-terms}.

 \subsection{Preliminaries}\label{sec:preliminaries-conditional-mor-est}
 
Consider the vectorfield $Y=\FF(r) \partial_r$ for a well-chosen function $\FF$, together with a scalar function $w_Y$ and a one form $J$.
To obtain Morawetz estimates for the model system, we apply Proposition \ref{prop:general-computation-divergence-P} with the above and obtain
 \bea\label{eq:divergence-theorem-identity-Y}
\begin{split}
\D^\mu \PP_\mu^{( Y, w_Y, J)}[\psi_1, \psi_2]&= \EE^{(Y, w_Y, J)}[\psi_1, \psi_2] +\mathscr{N}_{first}^{( Y, w_Y)}[\psi_1,\psi_2]+\mathscr{N}_{coupl}^{( Y, w_Y)}[\psi_1,\psi_2]\\
&+\mathscr{N}_{lot}^{(Y, w_Y)}[\psi_1,\psi_2]+\mathscr{R}^{(Y)}[\psi_1, \psi_2].
\end{split}
\eea
In what follows we compute each of the above terms.

\subsubsection{The bulk term}

We start by computing $\EE^{(Y, w_Y, J)}[\psi_1, \psi_2]$. The vectorfield $Y$ satisfies (see also Lemma 7.1.4 in \cite{GKS})
  \beaa
   |q|^2  \QQ  \c\piY  
   &=  \big(2 \De \pr_r \FF- \FF\pr_r \De\big)|\nab_r\psi|^2 
   - \FF\pr_r\left(\frac 1 \De\RR^{\a\b}\right)  \Db_\a \psi\c \Db_\b \psi \\
   &+Y\big( |q|^2\big)\big(\LL[\psi]-V|\psi|^2\big)  -\LL[\psi]  |q|^2\div_\g Y,
   \eeaa
   where $\RR^{\a\b}$ is given by \eqref{definition-RR-tensor} and $\LL[\psi]= \Db_\la \psi\c\Db^\la \ov{\psi} + V\psi \c \ov{\psi}$ is the Lagrangian.

Using Proposition \ref{proposition:Morawetz1}, we have that for functions $z$, $u$, such that
\bea\label{eq:FF-wred-w}
\FF=z u, \qquad               w_{red}=  \FF  z^{-1}\partial_r z =  (\partial_r z ) u , \qquad w_Y = z \pr_r u, 
\eea
we can write for $i=1,2$
  \beaa
  \begin{split}
   |q|^2\EE^{(Y, w_Y, J)}[\psi_i]    &=\AA |\nab_r\psi_i|^2 + \UU^{\a\b}\Re\big( (\Db_\a \psi_i )\c(\Db_\b \ov{\psi_i} )\big)+\VV_i |\psi_i|^2 +\frac 1 4 |q|^2  \D^\mu (|\psi_i|^2 J_\mu)  
   \end{split}
   \eeaa
   where
   \bea\label{eq:expressions-AA-UU-VV-general-hf}\lab{eq:coeeficientsUUAAVV-u}
\begin{split}
 \AA&=z^{1/2}\Delta^{3/2} \partial_r\left( \frac{ z^{1/2}  u}{\Delta^{1/2}}  \right),   \\
  \UU^{\a\b}&= -  \frac{ 1}{2}  u \pr_r\left( \frac z \De\RR^{\a\b}\right),\\
\VV_i&= -  \frac 1 4  \pr_r\Big(\De \pr_r \big(
 z \pr_r u  \big)  \Big)-\frac 1 2  u  \pr_r \left(z |q|^2 V_i\right)=: V_0 + V_{pot, i} ,
 \end{split}
\eea
where $V_0=- \frac 1 4  \pr_r\Big(\De \pr_r \big(
 z \pr_r u  \big)  \Big)$ and $V_{pot, i}=-\frac 1 2  u \pr_r \left(z |q|^2 V_i\right)$ for $i=1,2$.

 If  $J = v(r) \pr_r$, for some function $v=v(r)$, we have (see Proposition 7.1.5 in \cite{GKS})
\bea\label{expression-Div-M-I}
\frac 1 4 |q|^2 \mbox{Div}(|\psi|^2 J\big)&=& \frac 1 4 |q|^2\Big( 2 v(r)\psi\c \nab_r \psi + \big(\pr_r v+ \frac{2r}{|q|^2} v\big) |\psi|^2 \Big).
\eea  

We therefore finally obtain
\bea\label{eq:EE-Y-wY-J}
\begin{split}
|q|^2\EE^{(Y, w_Y, J)}[\psi_1, \psi_2]&=\AA \big(  |\nab_r\psi_1|^2 + 8Q^2|\nab_r\psi_2|^2 \big) + \UU^{\a\b} \Re\big( \Db_\a \psi_1 \c\Db_\b \ov{\psi_1} +8Q^2 \Db_\a \psi_2 \c\Db_\b \ov{\psi_2} \big)\\
&+\big( \VV_1 |\psi_1|^2+ 8Q^2 \VV_2 |\psi_2|^2 \big)  +\frac 1 4 |q|^2  \D^\mu \Big(\big( |\psi_1|^2+ 8Q^2 |\psi_2|^2\big) J_\mu \Big),
\end{split}
\eea
with $\AA$, $\UU^{\a\b}$, $\VV_1$, $\VV_2$ given as in \eqref{eq:expressions-AA-UU-VV-general-hf}.

\subsubsection{The first order term}

We now compute $\mathscr{N}_{first}^{( Y, w_Y)}[\psi_1,\psi_2]$. From the definition \eqref{eq:definition-N-first}, we have
\beaa
\mathscr{N}_{first}^{(Y, w_Y)}[\psi_1,\psi_2]&=& - \frac{2a\cos\th}{|q|^2} \Im\Big[ \big(\nabla_Y\ov{\psi_1} +\frac 1 2   w_Y \ov{\psi_1}\big)\c  \nab_T \psi_1+ 16Q^2\big(\nabla_Y\ov{\psi_2} +\frac 1 2   w_Y \ov{\psi_2}\big)\c  \nab_T \psi_2\Big].
\eeaa
Observe that for $Y$ and $w_Y$ as in \eqref{eq:FF-wred-w}, we have (see Lemma 7.1.6 in \cite{GKS})
\beaa
\Im \Big[\left(\nab_Y\ov{\psi}  +\frac 12 w_Y \ov{\psi}\right) \c \nab_T  \psi \Big]&=&- \frac 1 2 (\pr_r z)u \Im\Big(\ov{\psi}\c \nab_T \psi \Big) -zu  \rhod\frac{|q|^2}{\De}|\psi|^2\\
   && +\frac 1 2 \nab_r\Big(  zu \Im(\ov{\psi}\c\nab_T \psi)\Big)   -\frac 1 2\nab_T\Big(  zu \Im(\ov{\psi}\c  \nab_r \psi)\Big).
\eeaa
We therefore obtain
\beaa
\mathscr{N}_{first}^{(Y, w_Y)}[\psi_1,\psi_2]&=&  \frac{2a\cos\th}{|q|^2} \Big[ \frac 1 2 (\pr_r z) u \Im\Big(\ov{\psi_1}\c \nab_T \psi_1+16 Q^2\ov{\psi_2}\c \nab_T \psi_2 \Big) \\
&&+zu  \rhod\frac{|q|^2}{\De}(|\psi_1|^2+16Q^2 |\psi_2|^2\big)\Big]\\
   &&- \frac{a\cos\th}{|q|^2}\Im \Big[  \nab_r\Big(  zu (\ov{\psi_1}\c\nab_T \psi_1+16Q^2\ov{\psi_2}\c\nab_T \psi_2 )\Big) \\
   && -\nab_T\Big(  zu (\ov{\psi_1}\c  \nab_r \psi_1+16Q^2\ov{\psi_2}\c  \nab_r \psi_2)\Big)\Big].
\eeaa
Using that\footnote{Indeed, we have $\sqrt{|g|}=\sin\th|q|^2$ and hence
    \beaa
    \D_\mu\Big(\cos\th |q|^{-2}(\pr_r)^\mu\Big) = \frac{1}{\sqrt{|g|}}\pr_\mu(\sqrt{|g|}\cos\th |q|^{-2}(\pr_r)^\mu)=\frac{1}{\sin\th|q|^2}\pr_r(\sin\th|q|^2\cos\th |q|^{-2})=0.
    \eeaa}  
    $\D_\mu(\cos\th|q|^{-2}(\pr_r)^\mu)=0$, we write the above as 
    \bea\label{eq:mathscr-N-first0general}
    \begin{split}
\mathscr{N}_{first}^{(Y, w_Y)}[\psi_1,\psi_2]&=  \frac{a\cos\th}{|q|^2} \Big[  (\pr_r z) u \Im\Big(\ov{\psi_1}\c \nab_T \psi_1+16 Q^2\ov{\psi_2}\c \nab_T \psi_2 \Big) +2zu  \rhod\frac{|q|^2}{\De}(|\psi_1|^2+16Q^2 |\psi_2|^2\big)\Big]\\
   &-\D_\mu \Im \Big[  \frac{a\cos\th}{|q|^2} (\partial_r)^\mu   zu (\ov{\psi_1}\c\nab_T \psi_1+16Q^2\ov{\psi_2}\c\nab_T \psi_2 )\Big]  \\
   &+\partial_t \Im \Big(  \frac{a\cos\th}{|q|^2} zu (\ov{\psi_1}\c  \nab_r \psi_1+16Q^2\ov{\psi_2}\c  \nab_r \psi_2)\Big).
   \end{split}
\eea

 \subsubsection{The coupling term}   

We now compute $\mathscr{N}_{coupl}^{( Y, w_Y)}[\psi_1,\psi_2]$.   Using \eqref{eq:N-coupl-1} and Lemma \ref{lemma:comm-ov-DD-nabX},
we have
 \beaa
\mathscr{N}_{coupl}^{(Y, w_Y)}[\psi_1,\psi_2] &=&4Q^2 \Re\Big[   \Big( \frac{q^3 }{|q|^5}  w_Y-Y\big(\frac{ q^3}{|q|^5}\big)+\frac{ q^3}{|q|^5}   \frac{r \FF}{|q|^2}\Big) \psi_1 \c \left(  \DD \c  \ov{\psi_2}  \right)\Big] +O(a^2r^{-6})\FF \Re( \psi_1 \c \ov{\psi_2})\\
  &&-\D_\a \Re\big(\frac{4Q^2 q^3}{|q|^5}\psi_1 \c \big(\nabla_Y\ov{\psi_2} +\frac 1 2   w_Y \ov{\psi_2} \big)\big)^\a+\nab_Y \Re\Big(\frac{4Q^2 q^3}{|q|^5}\psi_1 \c  (\DD \c\ov{\psi_2}) \Big).
\eeaa
By writing that $w_Y =z \pr_r u$ and $Y=\FF \partial_r=zu\partial_r$ and using that $\partial_r\big(\frac{ q^3}{|q|^5}\big)=\frac{3q^2}{|q|^5}-\frac{5rq^3}{|q|^7}$, we compute the coefficient
\beaa
I&:=&  \frac{q^3 }{|q|^5}  w_Y-Y\big(\frac{ q^3}{|q|^5}\big)+\frac{ q^3}{|q|^5}   \frac{r \FF}{|q|^2}\\
&=&  \frac{q^3 }{|q|^5} z \pr_r u-zu \partial_r\big(\frac{ q^3}{|q|^5}\big)+\frac{ q^3}{|q|^5}   \frac{r zu}{|q|^2}\\
&=&z \Big(  \frac{q^3 }{|q|^5}  \pr_r u-u \big(\frac{3q^2}{|q|^5}-\frac{5rq^3}{|q|^7} \big)+\frac{ q^3}{|q|^5}   \frac{r u}{|q|^2} \Big)\\
&=& z \Big( \frac{q^3 }{|q|^5}  \pr_r u-\frac{3u q^2}{|q|^5}\big(1-\frac{2rq}{|q|^2}\big)\Big)= z \Big( \frac{q^3 }{|q|^5}  \pr_r u-\frac{3u q^2}{|q|^5}\big(-\frac{q^2}{|q|^2}\big)\Big)= \frac{zq^3}{|q|^5} \Big(  \pr_r u+\frac{3u q}{|q|^2}\Big).
\eeaa
 We therefore obtain
   \beaa
   \begin{split}
\mathscr{N}_{coupl}^{(Y, w_Y)}[\psi_1,\psi_2]  &=4Q^2 \Re\Big[ \frac{zq^3}{|q|^5} \Big(  \pr_r u+\frac{3u q}{|q|^2}\Big) \psi_1 \c \left(  \DD \c  \ov{\psi_2}  \right)\Big] +O(a^2r^{-6})\FF \Re( \psi_1 \c \ov{\psi_2})\\
  &-\D_\a \Re\big(\frac{4Q^2 q^3}{|q|^5}\psi_1 \c \big(\nabla_Y\ov{\psi_2} +\frac 1 2   w_Y \ov{\psi_2} \big)\big)^\a+\nab_Y \Re\Big(\frac{4Q^2 q^3}{|q|^5}\psi_1 \c  (\DD \c\ov{\psi_2}) \Big).
  \end{split}
\eeaa
Observe that the last term in the above can be written as, for $Y=zu \pr_r$
\beaa
\nab_Y \Re\Big(\frac{4Q^2 q^3}{|q|^5}\psi_1 \c  (\DD \c\ov{\psi_2}) \Big)&=&  \nab_{\pr_r} \Re\Big(\frac{4Q^2z q^3}{|q|^5}u\psi_1 \c  (\DD \c\ov{\psi_2}) \Big)-\pr_r(zu) \Re\Big(\frac{4Q^2 q^3}{|q|^5}\psi_1 \c  (\DD \c\ov{\psi_2}) \Big)\\
&=&  \nab_{\pr_r} \Re\Big(\frac{4Q^2z q^3}{|q|^5}u\psi_1 \c  (\DD \c\ov{\psi_2}) \Big)-4Q^2  \Re\Big(\frac{(\pr_rz)u q^3}{|q|^5}\psi_1 \c  (\DD \c\ov{\psi_2}) \Big)\\
&&-4Q^2 \Re\Big(\frac{ zq^3}{|q|^5}(\pr_ru)\psi_1 \c  (\DD \c\ov{\psi_2}) \Big).
\eeaa
Using the above we get the cancellation of the term involving $\pr_ru$ and obtain
\beaa
\begin{split}
\mathscr{N}_{coupl}^{(Y, w_Y)}[\psi_1,\psi_2]  &=4Q^2 \Re\Big[ u \big( \frac{3zq^4}{|q|^7}-\frac{(\pr_rz) q^3}{|q|^5} \big)  \psi_1 \c \left(  \DD \c  \ov{\psi_2}  \right)\Big] +O(a^2r^{-6})\FF \Re( \psi_1 \c \ov{\psi_2})\\
  &-\D_\a \Re\big(\frac{4Q^2 q^3}{|q|^5}\psi_1 \c \big(\nabla_Y\ov{\psi_2} +\frac 1 2   w_Y \ov{\psi_2} \big)\big)^\a+\nab_{\pr_r} \Re\Big(\frac{4Q^2z q^3}{|q|^5}u\psi_1 \c  (\DD \c\ov{\psi_2}) \Big).
  \end{split}
\eeaa
Finally, we write the last term as a divergence. Recall that $\D_\mu \pr_r^\mu=|q|^{-2} \pr_r (|q|^2)=\frac{2r}{|q|^2}$, and therefore we write
\beaa
\D_\mu \Re\Big(\frac{4Q^2z q^3}{|q|^5}u\psi_1 \c  (\DD \c\ov{\psi_2}) \pr_r^\mu\Big)=\nab_{\pr_r} \Re\Big(\frac{4Q^2z q^3}{|q|^5}u\psi_1 \c  (\DD \c\ov{\psi_2}) \Big)+4Q^2 \Re\Big(\frac{ 2rz q^3}{|q|^7}u\psi_1 \c  (\DD \c\ov{\psi_2}) \Big),
\eeaa
which finally gives
\bea\label{eq:mathscr-NN-coupl-X}
\begin{split}
\mathscr{N}_{coupl}^{(Y, w_Y)}[\psi_1,\psi_2]  &=4Q^2 \Re\Big[ u \big( \frac{zq^3(3q-2r)}{|q|^7}-\frac{(\pr_rz) q^3}{|q|^5} \big)  \psi_1 \c \left(  \DD \c  \ov{\psi_2}  \right)\Big] +O(a^2r^{-6})\FF \Re( \psi_1 \c \ov{\psi_2})\\
  &-\D_\mu \Re\Big[\frac{4Q^2 q^3}{|q|^5} \big( \psi_1 \c \big(\nabla_Y\ov{\psi_2} +\frac 1 2   w_Y \ov{\psi_2} \big)\big)^\mu -\frac{4Q^2z q^3}{|q|^5}u\psi_1 \c  (\DD \c\ov{\psi_2}) \pr_r^\mu \Big].
  \end{split}
\eea

   \subsubsection{The curvature terms}
   
   We now compute $\mathscr{R}^{(Y)}[\psi_1, \psi_2]$. We recall the following.
   
   \begin{lemma}[Proposition 4.7.3 in \cite{GKS}] For a vectorfield $X=X^3 e_3+X^4 e_4$ we have
    \beaa
 \Re\Big( X^\mu \Db^\nu  \psi^a\Rdot_{ ab   \nu\mu}\ov{\psi}^b\Big)&=&\Re\Big[ -\big(\rhod +\etab\wedge\eta\big)\nab_{X^4e_4-X^3e_3}  \psi\c\dual\ov{\psi} \\
&&-\frac{1}{2}\Im\Big(\tr\Xb H X^3 +\tr X\Hb X^4\Big)\c\nab\psi\c\dual\ov{\psi} \Big].
\eeaa
   \end{lemma}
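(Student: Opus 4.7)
The plan is to carry out the computation directly in the principal null frame $(e_3, e_4, e_1, e_2)$ of Kerr-Newman, using three ingredients: the Petrov type D structure of the background, the anti-self-duality of $\psi \in \sk_k(\CCC)$, and the structure formula relating the horizontal curvature $\Rdot$ to the ambient Riemann tensor $R$ with corrections coming from the non-integrability of the horizontal distribution (such a formula is derived in Chapter 4 of \cite{GKS}). Since $X$ is null, the final contraction reduces to the evaluation of $\Rdot_{ab\nu 3}$ and $\Rdot_{ab\nu 4}$ with $a, b$ horizontal and $\nu$ running over the full four-frame.

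First I would expand $X^\mu = X^3 e_3^\mu + X^4 e_4^\mu$ and split the sum over $\nu$ into the null piece $\nu \in \{3, 4\}$ and the horizontal piece $\nu \in \{1, 2\}$. For the null piece, Petrov type D eliminates the $\a, \aa, \b, \bb$ components of $W$ in the principal null frame, leaving only $\rho, \rhod$ from the Weyl tensor and $\rhoF, \rhodF$ from the Einstein-Maxwell Ricci part via \eqref{eq:E-M}. The anti-self-duality $\dual\psi = -i \psi$ then collapses the remaining contractions onto the imaginary (dual) components: under the real part, the contributions of $\rho$ and of $\rhoF \ov{\rhoF}$ cancel out, leaving only $\rhod$ paired with the antisymmetric combination $\nab_{X^4 e_4 - X^3 e_3} \psi \c \dual \ov{\psi}$, the minus sign between the two null directions arising from the antisymmetry of $\Rdot$ in its last two indices. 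The correction $\etab \wedge \eta$ emerges from the $\Rdot - R$ terms via a null Bianchi/structure identity in Kerr-Newman, which on the electrovacuum background recombines $\rho$, $\rhoF\ov{\rhoF}$ and $\etab \wedge \eta$ into a single dual scalar.

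For the horizontal piece $\nu \in \{1, 2\}$, the purely ambient part $R_{ab\nu\mu}$ with mixed horizontal-null indices vanishes in Kerr-Newman by Petrov type D, so the entire contribution comes from the non-integrability correction $\Rdot - R$. This correction is polynomial in the Ricci coefficients and is driven by $\atrch, \atrchb$ coupled to $\eta, \etab$. Using the complex packaging $H = \eta + i \dual \eta$, $\Hb = \etab + i \dual \etab$, together with $\tr X = \trch + i\atrch$ and $\tr \Xb = \trchb + i \atrchb$, one collects the terms into the imaginary-part combination $-\frac{1}{2}\Im(\tr\Xb H X^3 + \tr X \Hb X^4) \c \nab\psi \c \dual \ov{\psi}$ of the statement; the imaginary part is what survives under $\Re$ because of the factor of $i$ built into $H, \Hb, \tr X, \tr \Xb$ on one side and the anti-self-dual nature of $\psi$ on the other.

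The main obstacle is the careful algebraic bookkeeping required to collect the many individual contractions into the compact complex form of the right-hand side, and in particular to verify the exact cancellations that eliminate $\rho$ and the $\rhoF\ov{\rhoF}$ scalars while retaining only the combination $\rhod + \etab \wedge \eta$, and that combine all the non-integrability terms into the single $\tr\Xb H X^3 + \tr X \Hb X^4$ structure. Once the formula for $\Rdot - R$ is in hand and the Petrov type D reduction is applied, no further analytical input is needed: the remainder of the proof is a direct algebraic manipulation in the principal null frame.
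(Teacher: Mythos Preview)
Your overall strategy---direct computation in the principal null frame using the decomposition $\Rdot = R + \tfrac12\B$ together with the algebraic speciality of Kerr--Newman---is correct and coincides with the paper's approach, which simply records that $\R_{ab34}=2\rhod\in_{ab}$ continues to hold in the Einstein--Maxwell case and defers the rest to Proposition~4.7.3 of \cite{GKS}.

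However, several of the detailed mechanisms you describe are inaccurate. For the null piece $\nu\in\{3,4\}$, the only Riemann component entering is $\R_{ab34}=2\rhod\in_{ab}$: neither $\rho$ nor $\rhoF\ov{\rhoF}$ appears at all, so there is no cancellation ``under the real part'' to perform, and no Bianchi identity is invoked. The $\etab\wedge\eta$ term arises directly from the quadratic correction $\tfrac12\B_{ab34}$, which is a product of the connection coefficients $\eta_a,\etab_b$ (via $(\La_3)_{4a}=-2\eta_a$, $(\La_4)_{b3}=-2\etab_b$, with the $\xi,\xib$ terms vanishing on Kerr--Newman); no structure equation beyond the definition of $\B$ is needed. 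For the horizontal piece, your claim that $R_{abc3}=R_{abc4}=0$ is correct on Kerr--Newman, but the justification is not ``Petrov type D'' alone: type D kills the Weyl part, while the Ricci contribution vanishes because $\bF=\bbF=0$ in the principal null frame (the electromagnetic field is aligned). With these corrections the computation goes through exactly as you outline: $\Rdot_{ab34}$ yields the $-(\rhod+\etab\wedge\eta)\nab_{X^4e_4-X^3e_3}\psi\c\dual\ov\psi$ term, and the $\tfrac12\B_{abc3},\tfrac12\B_{abc4}$ corrections (products of $\chi,\chib$ with $\eta,\etab$) assemble into the $\Im(\tr\Xb H X^3+\tr X\Hb X^4)$ structure.
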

   \begin{proof} By direct computations. Observe that also for solutions to the Einstein-Maxwell equation the Riemann curvature component $\R_{ab34}$ is given by $\R_{ab34}= 2\rhod \in_{ab}$, so the proof follows in the same way as in Proposition 4.7.3 in \cite{GKS}.
   \end{proof}

Applying the above to $Y=\FF(r) \partial_r$, we have (see also equation (7.1.9) in \cite{GKS})
\bea
\begin{split}
\Re\Big( X^\mu \Db^\nu  \psi ^a\Rdot_{ ab   \nu\mu}\ov{\psi}^b\Big)&=\frac{2a^2r\cos\th\FF}{(r^2+a^2)|q|^4} \Re\big(i \nab_\phi\psi\c\ov{\psi} \big)\\
&+\left(\big(\rhod +\etab\wedge\eta\big)\frac{r^2+a^2}{\De}+\frac{2a^3r\cos\th(\sin\th)^2}{|q|^6}\right)\FF \Re\big( i \nab_{\That}\psi\c\ov{\psi}\big),
\end{split}
\eea
where we used that $\dual \psi = -i \psi$. We finally obtain
 \bea\label{eq:mathsc-R-Y}
 \begin{split}
 \mathscr{R}^{(Y)}[\psi_1, \psi_2]&= \frac{2a^2r\cos\th\FF}{(r^2+a^2)|q|^4} \Re\big(i \nab_\phi\psi_1\c\ov{\psi_1} + 8Q^2\nab_\phi\psi_2\c\ov{\psi_2} \big)\\
&+\left(\big(\rhod +\etab\wedge\eta\big)\frac{r^2+a^2}{\De}+\frac{2a^3r\cos\th(\sin\th)^2}{|q|^6}\right)\FF \Re\big( i \nab_{\That}\psi_1\c\ov{\psi_1}+i8Q^2 \nab_{\That}\psi_2\c\ov{\psi_2}\big).
\end{split}
 \eea

\subsubsection{Choice of functions}

We collect here the choice of functions that also appeared in \cite{GKS}. 

\begin{proposition}
\lab{prop:Choice-zhf}
The choice $Y=zu \pr_r$,  $w_Y =z \pr_r u$
for\footnote{This corresponds in \cite{GKS} to the choice $f=-\frac{2\TT}{ (r^2+a^2)^3}$, $h=\frac{(r^2+a^2)^4}{r(r^2-a^2)}$ with $u=-hf$. Recall that $\pr_r z=-\frac{2\TT}{(r^2+a^2)^3}$.}
\beaa
z=\frac{\De}{(r^2+a^2)^2}, \qquad u=\frac{(r^2+a^2)}{(r^2-a^2)}\frac{2\TT}{ r},
\eeaa
with $\TT$ as in \eqref{eq:definition-TT}, is such that 
\bea
\AA&=&\frac{2\Delta^{2} }{r^2(r^2-a^2)^2(r^2+a^2)} \big(3Mr^4-4(a^2+Q^2)r^3+Ma^4\big), \label{eq:prop-Choice-AA}\\
\UU^{\a\b}(\Db_\a\psi )\c( \Db_\b \psi)&=&\frac{\TT}{r}\frac{r^2+a^2}{r^2-a^2}\left(  \frac{2\TT}{ (r^2+a^2)^3}   O^{\a\b}\nab_\a\psi\c\nab_\b \psi - \frac{4ar}{(r^2+a^2)^2} \nab_{\That} \psi \c \nab_\phi \psi\right).\label{eq:principal-term}
\eea
and
\beaa
\VV_0&=&   \frac{9Mr^4-(46M^2+8Q^2)r^3+(54M^3+57MQ^2)r^2-Q^2(84M^2+12Q^2)r+30MQ^4}{r^6}  +O(a^2r^{-4}),\\
\VV_{pot, 1}&=&2\frac{\TT}{ r} \big(\frac{r^4-6Mr^3+(8M^2+14Q^2)r^2-35MQ^2r+18Q^4}{r^7}\big)+O(a^2r^{-3}), \\
\VV_{pot, 2}&=&8\frac{\TT}{ r}\big(\frac{r^4-6Mr^3+(8M^2+5Q^2)r^2-25/2MQ^2r+9/2Q^4}{r^7}\big)+O(a^2r^{-3}).
\eeaa

In particular we have
\bea
\VV_1&=& \VV_0+ \VV_{pot, 1}=  \frac{2r^3-9Mr^2+6M^2r+6M^3}{r^4}+O((a^2+Q^2)r^{-3}), \label{eq:VV-1}\\
\VV_2&=& \VV_0 + \VV_{pot,2}=\frac{8r^3-63Mr^2+162M^2r-138M^3}{r^4}+O((a^2+Q^2)r^{-3})\label{eq:VV-2}.
\eea
\end{proposition}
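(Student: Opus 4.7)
The statement is a direct verification: given the explicit choices of $z$ and $u$, I plug into the formulas \eqref{eq:expressions-AA-UU-VV-general-hf} for $\AA$, $\UU^{\a\b}$, $\VV_0$, $\VV_{pot,i}$. The key simplification is that the combination $z^{1/2}u/\Delta^{1/2}$ is a particularly clean rational function, which will allow the derivatives defining $\AA$ and $\UU^{\a\b}$ to be computed in closed form.

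First I would compute $\AA$. With $z^{1/2}=\Delta^{1/2}/(r^2+a^2)$, one finds $z^{1/2}u/\Delta^{1/2}=2\TT/\big(r(r^2-a^2)\big)$, so that $\AA=\big(\Delta^2/(r^2+a^2)\big)\,\partial_r\!\left(\tfrac{2\TT}{r(r^2-a^2)}\right)$. Using $\partial_r\TT=3r^2-6Mr+a^2+2Q^2$ and the definition of $\TT$, a straightforward polynomial simplification gives the identity
\[
(\partial_r\TT)\cdot r(r^2-a^2)-\TT\cdot(3r^2-a^2)\;=\;3Mr^4-4(a^2+Q^2)r^3+Ma^4,
\]
which yields \eqref{eq:prop-Choice-AA}. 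Next, for $\UU^{\a\b}$ I expand $\RR^{\a\b}$ using \eqref{definition-RR-tensor} and compute $\partial_r\big((r^2+a^2)^{-2}\RR^{\a\b}\big)$ component by component. The $\partial_t^\a\partial_t^\b$-coefficient equals $-1$ and drops out, while the key input is the identity
\[
\partial_r\!\left(\tfrac{\Delta}{(r^2+a^2)^2}\right)\;=\;-\tfrac{2\TT}{(r^2+a^2)^3},
\]
which is where the trapping function $\TT$ (now depending on $Q$ through its very definition \eqref{eq:definition-TT}) enters the $O^{\a\b}$ term. The remaining $\partial_\phi$ and $\partial_t\partial_\phi$ contributions combine, using $\That=T+\tfrac{a}{r^2+a^2}Z$, into a single term proportional to $\nab_{\That}\psi\c\nab_\phi\psi$, giving \eqref{eq:principal-term}.

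For the potential part I compute $\VV_0=-\tfrac{1}{4}\partial_r\!\big(\Delta\,\partial_r(z\,\partial_r u)\big)$ and $\VV_{pot,i}=-\tfrac12 u\,\partial_r(z|q|^2V_i)$ separately, then sum. The function $z\partial_r u$ and its two further $r$-derivatives are rational functions whose expansion in $1/r$ produces the leading behaviour recorded for $\VV_0$ (the Schwarzschild-type contribution with $Q$-corrections), with $a$-corrections absorbed into the stated $O(a^2r^{-4})$ remainder. For $\VV_{pot,i}$ I use
\[
z|q|^2 V_1=\tfrac{\Delta}{(r^2+a^2)^2}\!\big(1-\tfrac{2M}{r}+\tfrac{6Q^2}{r^2}\big),\qquad z|q|^2 V_2=\tfrac{4\Delta}{(r^2+a^2)^2}\!\big(1-\tfrac{2M}{r}+\tfrac{3Q^2}{2r^2}\big),
\]
noting that the $\theta$-dependent piece of $|q|^2=r^2+a^2\cos^2\theta$ cancels exactly against the $|q|^{-2}$ in $V_i$, so only an $O(a^2)$ correction enters; the remaining polynomial derivatives give the displayed expressions for $\VV_{pot,1},\VV_{pot,2}$. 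Summing, the leading $r^3$ and $Mr^2$ contributions of $\VV_0$ and $\VV_{pot,i}$ combine to produce the cleaner formulas \eqref{eq:VV-1} and \eqref{eq:VV-2}.

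The plan is thus essentially bookkeeping: all pieces reduce to rational functions of $r$ with parameters $M,a,Q$, and the only real work is keeping track of the $\TT$-identity and of which terms are $O(a^2r^{-3})$ or $O((a^2+Q^2)r^{-3})$ remainders. The main technical obstacle I expect is organizing the $\VV_0$ computation efficiently, since $z\partial_r u$ is a rational function with denominator $r^2(r^2-a^2)$ and taking two further derivatives (weighted by $\Delta$) produces many cancellations; a clean way is to first expand $z\partial_r u$ in partial fractions in $r$ modulo $O(a^2)$, so that each subsequent differentiation is elementary and the $a$-dependence is tracked in a single remainder term throughout.
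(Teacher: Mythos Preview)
Your proposal is correct and follows essentially the same direct-computation approach as the paper's proof: plug the explicit $z$ and $u$ into the formulas \eqref{eq:expressions-AA-UU-VV-general-hf} and simplify, working modulo $O(a^2)$ for the potential terms. The paper is slightly more streamlined in two places---it cites \cite{GKS} for the $\UU^{\a\b}$ identity rather than recomputing it, and for $\VV_0$ it immediately uses $\partial_r u = 4r-6M+O(a^2r^{-2})$ (so that $z\,\partial_r u$ has denominator $r^4$ and no partial-fraction maneuver is needed)---but the strategy is identical.
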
 

\begin{remark} Observe that $\VV_2$ is the same as $\VV$ in \cite{GKS}. Also, both $\VV_1$ and $\VV_2$ are negative close to the horizon and in the trapping region. 
\end{remark}

\begin{proof} The expression for $\UU^{\a\b}(\Db_\a\psi )( \Db_\b \psi)$ is the same as in Proposition 7.1.8 in \cite{GKS}. We compute
 \beaa
 \AA&=& \frac{2\Delta^{2} }{r^2+a^2}\pr_r\left(\frac{\TT}{r(r^2-a^2)}\right)= \frac{2\Delta^{2} }{r^2(r^2-a^2)^2(r^2+a^2)} \big(3Mr^4-4(a^2+Q^2)r^3+Ma^4\big).
 \eeaa
By explicit computation, we have
 \beaa
\VV_0&=&  - \frac 1 4  \pr_r\Big(\De \pr_r \big( z \pr_r  \big(\frac{(r^2+a^2)}{(r^2-a^2)}\frac{2\TT}{ r} \big)   \big)  \Big)= - \frac 1 4  \pr_r\Big(\De \pr_r \big( z  \big(4r-6M +O(a^2r^{-2}) \big)   \big)  \Big)\\
&=& \frac{9Mr^4-(46M^2+8Q^2)r^3+(54M^3+57MQ^2)r^2-Q^2(84M^2+12Q^2)r+30MQ^4}{r^6}  +O(a^2r^{-4}),
 \eeaa
and
  \beaa
  \VV_{pot, 1}&=& -\frac 1 2u \pr_r\big(z|q|^2 V_1\big)=-\frac{(r^2+a^2)}{(r^2-a^2)}\frac{\TT}{ r} \pr_r\big(z|q|^2 V_1\big)\\
    &=& -\frac{(r^2+a^2)}{(r^2-a^2)}\frac{\TT}{ r} \pr_r\big(\frac{r^4-4Mr^3+(4M^2+7Q^2)r^2-14MQ^2r+6Q^4}{r^6}+O(a^2r^{-4})\big)\\
        &=&2\frac{\TT}{ r} \big(\frac{r^4-6Mr^3+(8M^2+14Q^2)r^2-35MQ^2r+18Q^4}{r^7}\big)+O(a^2r^{-3})\\
  \VV_{pot, 2}&=& -\frac 1 2u \pr_r\big(z|q|^2 V_2\big)=-\frac{(r^2+a^2)}{(r^2-a^2)}\frac{\TT}{ r} \pr_r\big(z|q|^2 V_2\big)\\
    &=& -4\frac{\TT}{ r} \pr_r\big(\frac{r^4-4Mr^3+(4M^2+5/2Q^2)r^2-5MQ^2r+3/2Q^4}{r^6}\big)+O(a^2r^{-3})\\
        &=& 8\frac{\TT}{ r}\big(\frac{r^4-6Mr^3+(8M^2+5Q^2)r^2-25/2MQ^2r+9/2Q^4}{r^7}\big)+O(a^2r^{-3}),
  \eeaa
  as stated. 
\end{proof}

\subsection{Conditional bound on the bulk}\label{sec:conditional-bound-bulk}

Here we prove positive bounds on $\EE^{(X, w, J)}[\psi_1, \psi_2]$ for
\bea
(X, w, J)=(Y, w_Y, J)+(0, \delta_Tw_T, 0),
\eea
where $(Y, w_Y)$ are given by Proposition \ref{prop:Choice-zhf}, $J=v(r) \partial_r$ with $v$ given by \eqref{eq:definition-v} and $w_T$ is given by $w_T=- \frac{4 M \De \TT^2}{r^2 (r^2+a^2)^4}$ for some small $\delta_T>0$ to be determined.

\subsubsection{Poincar\'e and Hardy inequality}\label{section:Hardy-estimate}

We first consider $\EE^{(Y, w_Y, J)}[\psi_1, \psi_2]$. 
From \eqref{eq:EE-Y-wY-J} with the choices of Proposition \ref{prop:Choice-zhf}, we have:
\beaa
|q|^2\EE^{(Y, w_Y, J)}[\psi_1, \psi_2]&=&\AA \big(  |\nab_r\psi_1|^2 + 8Q^2|\nab_r\psi_2|^2 \big) + P+\big( \VV_1 |\psi_1|^2+ 8Q^2 \VV_2 |\psi_2|^2 \big)  \\
&&+\frac 1 4 |q|^2  \D^\mu \Big(\big( |\psi_1|^2+ 8Q^2 |\psi_2|^2\big) J_\mu \Big),
\eeaa
where 
\beaa
P&:=&\UU^{\a\b} \big( \Db_\a \psi_1 \c\Db_\b \psi_1 +8Q^2 \Db_\a \psi_2 \c\Db_\b \psi_2 \big)\\
&=&  \frac{2|q|^2\TT^2}{r(r^2-a^2) (r^2+a^2)^2}  \left(  |\nab \psi_1|^2+8Q^2|\nab \psi_2|^2 \right)\\
&&- \frac{4a \TT}{(r^2-a^2)(r^2+a^2)}\left(  \nab_{\That} \psi_1 \c \nab_\phi \psi_1+ 8Q^2\nab_{\That} \psi_2 \c \nab_\phi \psi_2\right)\\
&\gtrsim& \frac{2r\TT^2}{ (r^2+a^2)^2(r^2-a^2)}\big( |\nab\psi_1|^2+ 8Q^2 |\nab \psi_2|^2\big)\\
&& - |a|\Big(r\big( |\nab\psi_1|^2+ Q^2 |\nab \psi_2|^2\big)+r^{-1}\big( |\nab_T\psi_1|^2+ 8Q^2 |\nab_T \psi_2|^2\big)\Big).
\eeaa
For $\de_P>0$ sufficiently small to be chosen later, we write
\beaa
|q|^2\EE^{(Y, w_Y, J)}[\psi_1, \psi_2]&=&\delta_P \AA \big(  |\nab_r\psi_1|^2 + 8Q^2|\nab_r\psi_2|^2 \big)  + \delta_P P\\
&&+(1-\delta_P) \AA \big(  |\nab_r\psi_1|^2 + 8Q^2|\nab_r\psi_2|^2 \big)+ (1-\delta_P) P+\big( \VV_1 |\psi_1|^2+ 8Q^2 \VV_2 |\psi_2|^2 \big)  \\
&&+\frac 1 4 |q|^2  \D^\mu \Big(\big( |\psi_1|^2+ 8Q^2 |\psi_2|^2\big) J_\mu \Big).
\eeaa
To the above we add and subtract $(1-\delta_P)\frac{2\TT^2}{ r(r^2+a^2)^2(r^2-a^2)}\big( |\psi_1|^2+ 16Q^2 | \psi_2|^2\big)$, giving:
\beaa
|q|^2\EE^{(Y, w_Y, J)}[\psi_1, \psi_2]&=&\delta_P \AA \big(  |\nab_r\psi_1|^2 + 8Q^2|\nab_r\psi_2|^2 \big)\\
&&  + \delta_P P+ (1-\delta_P) \Big( P-\frac{2\TT^2}{ r(r^2+a^2)^2(r^2-a^2)}\big( |\psi_1|^2+ 16Q^2 | \psi_2|^2\big)\Big)\\
&&+ \mbox{Qr}_{1, \de_P}[\psi_1] +8Q^2 \mbox{Qr}_{2, \de_P}[\psi_2],
\eeaa
where the quadratic forms $ \mbox{Qr}_{1, \de_P}[\psi_1]$, $ \mbox{Qr}_{2, \de_P}[\psi_2]$ are given by
\beaa
\mbox{Qr}_{1, \de_P}[\psi_1]  :=(1-\delta_P) \AA  |\nab_r\psi_1|^2 + \Big( \VV_1+ (1-\delta_P)  \frac{2\TT^2}{ r(r^2+a^2)^2(r^2-a^2)}\Big) |\psi_1|^2+\frac 1 4 |q|^2  \D^\mu \big(|\psi_1|^2 J_\mu \big)\\
\mbox{Qr}_{2, \de_P}[\psi_2]  :=(1-\delta_P) \AA |\nab_r\psi_2|^2+ \Big( \VV_2+ (1-\delta_P) \frac{4\TT^2}{ r(r^2+a^2)^2(r^2-a^2)}\Big) |\psi_2|^2+\frac 1 4 |q|^2  \D^\mu \big( |\psi_2|^2 J_\mu \big).
\eeaa
Applying the Poincar\'e inequalities as in Lemma \ref{lemma:poincareinequalityfornabonSasoidfh:chap6}, we deduce upon integration on the sphere:
\beaa
&&\int_S\frac{1}{|q|^2}\left(\de_P P + (1-\de_P) \Big( P-\frac{2\TT^2}{ r(r^2+a^2)^2(r^2-a^2)}\big( |\psi_1|^2+ 16Q^2 | \psi_2|^2\big)\Big)\right)\\
 &\gtrsim& \de_P\frac{\TT^2}{r^7}\int_S\big( |\nab\psi_1|^2+Q^2|\nab \psi_2|^2\big)\\
 && -O(|a|r^{-1})\int_S\big(|\nab\psi_1|^2+Q^2|\nab \psi_2|^2+r^{-2}|\nab_T\psi_1|^2+Q^2r^{-2}|\nab_T\psi_2|^2+r^{-2}|\psi_1|^2+Q^2r^{-2}|\psi_2|^2\big)
\eeaa
and hence, from \eqref{eq:prop-Choice-AA}, 
 \beaa
&&\int_S\EE^{(Y, w_Y, J)}[\psi_1, \psi_2] \gtrsim \de_P\int_S\left( r^{-2}\big(  |\nab_{\Rhat}\psi_1|^2 + Q^2|\nab_{\Rhat}\psi_2|^2 \big)+\frac{\TT^2}{r^7}\big( |\nab\psi_1|^2+Q^2|\nab \psi_2|^2\big)\right)\\
&&+\int_S\frac{1}{|q|^2}\mbox{Qr}_{1, \de_P}[\psi_1]+8Q^2 \int_S\frac{1}{|q|^2}\mbox{Qr}_{2, \de_P}[\psi_2]  \\
&& -O(|a|r^{-1})\int_S\big(|\nab\psi_1|^2+Q^2|\nab \psi_2|^2+r^{-2}|\nab_T\psi_1|^2+Q^2r^{-2}|\nab_T\psi_2|^2+r^{-2}|\psi_1|^2+Q^2r^{-2}|\psi_2|^2\big).
 \eeaa 
Using \eqref{expression-Div-M-I} we deduce for the quadratic forms:
\beaa
\mbox{Qr}_{1, \de_P}[\psi_1]  &=&(1-\delta_P) \AA  |\nab_r\psi_1|^2 + \Big( \VV_1+ (1-\delta_P)  \frac{2\TT^2}{ r(r^2+a^2)^2(r^2-a^2)}\Big) |\psi_1|^2\\
 &&+ \frac 1 4 |q|^2\Big( 2 v(r)\psi_1\c \nab_r \psi_1 + \big(\pr_r v+ \frac{2r}{|q|^2} v\big) |\psi_1|^2 \Big)\\
\mbox{Qr}_{2, \de_P}[\psi_2]  &=&(1-\delta_P) \AA |\nab_r\psi_2|^2+ \Big( \VV_2+ (1-\delta_P) \frac{4\TT^2}{ r(r^2+a^2)^2(r^2-a^2)}\Big) |\psi_2|^2\\
 &&+ \frac 1 4 |q|^2\Big( 2 v(r)\psi_2\c \nab_r \psi_2 + \big(\pr_r v+ \frac{2r}{|q|^2} v\big) |\psi_2|^2 \Big).
\eeaa
We bound
\beaa
 \mbox{Qr}_{1,\de_P}[\psi_1]  &=&(1-\delta_P) \AA  \left|\nab_r\psi_1+\frac{|q|^2}{4(1-\de_P)\AA}v(r)\psi_1\right|^2  -\frac{|q|^4}{16(1-\de_P)\AA}v^2|\psi_1|^2\\
 &&+ \Big( \VV_1+ (1-\delta_P)  \frac{2\TT^2}{ r(r^2+a^2)^2(r^2-a^2)}\Big) |\psi_1|^2+ \frac 1 4 |q|^2\left(\pr_r v+ \frac{2r}{|q|^2} v\right) |\psi_1|^2 \\
 &\geq&  \Big( \VV_1+ (1-\delta_P)  \frac{2\TT^2}{ r(r^2+a^2)^2(r^2-a^2)}+ \frac 1 4 |q|^2\left(\pr_r v+ \frac{2r}{|q|^2} v\right)-\frac{|q|^4}{16(1-\de_P)\AA}v^2 \Big) |\psi_1|^2 \\
  \mbox{Qr}_{2,\de_P}[\psi_2]   &\geq&  \Big( \VV_2+ (1-\delta_P)  \frac{4\TT^2}{ r(r^2+a^2)^2(r^2-a^2)}+ \frac 1 4 |q|^2\left(\pr_r v+ \frac{2r}{|q|^2} v\right)-\frac{|q|^4}{16(1-\de_P)\AA}v^2 \Big) |\psi_2|^2 
\eeaa

We have the following.

\begin{lemma}\label{lemma:positivity-hardy} For $|a|, |Q| \ll M$ and $\delta_P$ sufficiently small, there exists a function $v(r)$ such that 
\beaa
 \mbox{Qr}_{1, \de_P}[\psi_1] +8Q^2 \mbox{Qr}_{2, \de_P}[\psi_2]&\geq& O(\de_P)  \left(  \big|\nab_{\Rhat}\psi_1|^2+ Q^2 |\nab_{\Rhat} \psi_2|^2 + r^{-1} \big( |\psi_1|^2+Q^2 |\psi_2|^2\big) \right).
\eeaa
\end{lemma}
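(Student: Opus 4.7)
The strategy is to absorb the cross term $\tfrac12|q|^2 v\,\psi_i\c\nab_r\psi_i$ by completing the square against the coefficient of $|\nab_r\psi_i|^2$, while retaining a small $\de_P$-reserve of the latter to produce the $|\nab_{\Rhat}\psi_i|^2$ contribution on the right-hand side. First I would split, for $i=1,2$,
\beaa
(1-\de_P)\AA |\nab_r\psi_i|^2 &=& \de_P(1-\de_P)\AA |\nab_r\psi_i|^2 + (1-\de_P)^2\AA |\nab_r\psi_i|^2
\eeaa
and use the second piece to complete the square with the mixed term, producing the loss
\beaa
-\frac{|q|^4 v^2}{16(1-\de_P)^2\AA}|\psi_i|^2.
\eeaa
Since $\AA \gtrsim \De^2/r^2$ and $\nab_r = \tfrac{r^2+a^2}{\De}\nab_{\Rhat}$, the reserve $\de_P(1-\de_P)\AA |\nab_r\psi_i|^2$, once divided by the prefactor $|q|^2$ that was implicit in the passage from $|q|^2\EE$ to $\EE$, yields exactly the $\de_P|\nab_{\Rhat}\psi_i|^2$ term required in the statement of the lemma.

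Next, the problem reduces to exhibiting a single radial function $v(r)$ such that the effective potentials
\beaa
V_{\mathrm{eff},i} &:=& \VV_i + (1-\de_P)\frac{c_i\,\TT^2}{r(r^2+a^2)^2(r^2-a^2)} + \tfrac14|q|^2\pr_r v + \tfrac{r}{2}v - \frac{|q|^4 v^2}{16(1-\de_P)^2\AA},
\eeaa
with $c_1=2$ and $c_2=4$, satisfy $V_{\mathrm{eff},i} \geq c\de_P r^{-1}$ uniformly in $r \geq r_+$. I would construct $v$ as a perturbation of the Hardy function used in the analogous estimate in \cite{Giorgi7a}, supported in the regions where $\VV_i$ are negative (near the horizon and near the photon sphere) and vanishing on $\HH^+$ so that the loss term $|q|^4 v^2/\AA$, which degenerates as $\De^{-2}$, remains bounded. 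In the Schwarzschild/Reissner-Nordstr\"om limit $|a|=0$ and $|Q|$ small, direct computation of $V_{\mathrm{eff},i}$ with such a $v$ gives a strict positive lower bound of the form $c r^{-1}$, strengthened by the beneficial $\TT^2/r^7$ term inherited from the Poincar\'e step.

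Finally, for $|a|, |Q| \ll M$, all corrections to $\VV_i$, $\AA$ and $\TT$ relative to their Schwarzschild values are $O(a^2+Q^2)$ and can be absorbed into the strict positive margin established in the previous step, at the cost of shrinking the constant $c$. Combined with the reserve $\de_P(1-\de_P)\AA|\nab_r\psi_i|^2$, this yields the claimed lower bound.

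The main obstacle is verifying that a \emph{single} choice of $v$ works simultaneously for $i=1,2$: the potentials $\VV_1$ and $\VV_2$ differ (with $\VV_2$ strictly more negative in parts of the trapping region), and the quartic loss term $|q|^4 v^2/\AA$ couples the construction. However, both $\VV_i$ share the same sign pattern and vanish to leading order at $r=3M$ when $Q=0$, which makes the Hardy function from \cite{GKS}\cite{Giorgi7a}, suitably rescaled to dominate the more stringent $\VV_2$-case, simultaneously control $\VV_1$. The compatibility check reduces to a computation analogous to the one performed in the Reissner-Nordstr\"om setting, plus the absorption of the $O(a^2+Q^2)$ errors.
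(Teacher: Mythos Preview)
Your plan is correct and follows essentially the same approach as the paper: complete the square in $\nab_r\psi_i$ against $(1-\de_P)\AA$, reduce to strict positivity of effective potentials, verify this in the $a=Q=\de_P=0$ limit, and perturb. The paper carries this out by an explicit choice, namely $v(r)=\tfrac{5}{2}(2M)^{3/2}r^{-7/2}\big(\tfrac{r}{2M}-1\big)$, and then checks by direct computation that the resulting effective potentials $E_1,E_2$ reduce, after the substitution $x=r/(2M)$ and $r^2 v=(x-1)k_0(x)$ with $k_0(x)=\tfrac{5}{2}x^{-3/2}$, to explicit positive polynomials in $x^{-1/2}$ on $x\geq 1$; this is precisely the ``compatibility check'' you describe at the end, and it does indeed handle both $i=1,2$ with the same $v$.
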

\begin{proof} By continuity it suffices to show that for $a,Q, \delta_P=0$ there exists a function $v(r)$ such that 
\beaa
E_1:=\VV_1+ (1-\delta)  \frac{2\TT^2}{ r(r^2+a^2)^2(r^2-a^2)}+ \frac 1 4 |q|^2\left(\pr_r v+ \frac{2r}{|q|^2} v\right)-\frac{|q|^4}{16(1-\de)\AA}v^2>0, \\
E_2:= \VV_2+ (1-\delta_P)  \frac{4\TT^2}{ r(r^2+a^2)^2(r^2-a^2)}+ \frac 1 4 |q|^2\left(\pr_r v+ \frac{2r}{|q|^2} v\right)-\frac{|q|^4}{16(1-\de_P)\AA}v^2>0.
\eeaa
 Observe that for $a, Q=0$ we have from Proposition \ref{prop:Choice-zhf}:
\beaa
\AA=\frac{6M\Delta^{2} }{r^4}, \quad \VV_1=  \frac{2r^3-9Mr^2+6M^2r+6M^3}{r^4}, \quad \VV_2=\frac{8r^3-63Mr^2+162M^2r-138M^3}{r^4}.
\eeaa
Denoting $\widetilde{v}= r^2 v$, we have
\beaa
E_1&=&\VV_1+   \frac{2(r-3M)^2}{ r^3}+ \frac 1 4 r^2\left(\pr_r v+ \frac{2}{r} v\right)-\frac{r^8}{96M}\Delta^{-2}v^2\\
&=& \frac{4r^3-21Mr^2+24M^2r+6M^3}{r^4}+ \frac 1 4 \pr_r \widetilde{v}-\frac{1}{96M}r^4\Delta^{-2}\widetilde{v}^2, \\
E_2&=& \frac{12r^3-87Mr^2+198M^2r-138M^3}{r^4} + \frac 1 4 \pr_r \widetilde{v}-\frac{1}{96M}r^4\Delta^{-2}\widetilde{v}^2. 
\eeaa
Introducing the notation $x=\frac{r}{2M}$ and assuming that  $\widetilde{v}= \widetilde{v}\big(\frac{r}{2M} \big)= \widetilde{v}_0(x)$, we derive
\beaa
E_1&=& \frac{4(2Mx)^3-21M(2Mx)^2+24M^2(2Mx)+6M^3}{r^4}+ \frac {1}{ 8M} \widetilde{v}_0'-\frac{1}{96M}\frac{x^2}{(x-1)^{2}}\widetilde{v}^2\\
&=& \frac{16x^3-42x^2+24x+3}{8Mx^4}+ \frac {1}{ 8M} \widetilde{v}_0'-\frac{1}{96M}\frac{x^2}{(x-1)^{2}}\widetilde{v}^2,  \\
E_2&=& \frac{48x^3-174x^2+198x-69}{8Mx^4} + \frac {1}{ 8M} \widetilde{v}_0'-\frac{1}{96M}\frac{x^2}{(x-1)^{2}}\widetilde{v}^2. 
\eeaa
 By setting $\widetilde{v}_0(x)= (x-1) k_0(x)$, and $\widetilde{v}_0'= k_0+ (x-1) k_0'$, for a function $k_0(x)$ then we have
 \beaa
8ME_1&=&16x^{-1}-42x^{-2}+24x^{-3}+3x^{-4}+k_0+ (x-1) k_0'-\frac{1}{12}x^2  k_0^2, \\
8ME_2&=&48x^{-1}-174x^{-2}+198x^{-3}-69x^{-4}+k_0+ (x-1) k_0'-\frac{1}{12}x^2  k_0^2.
\eeaa
We take $k_0(x):=\frac 5 2 x^{-3/2}$, $k_0'(x)=-  \frac{15}{4}  x^{-5/2}$, giving
 \beaa
8ME_1&=&\frac{743}{48} x^{-1}-\frac 5 4  x^{-3/2}-42x^{-2}+  \frac{15}{4}   x^{-5/2}+24x^{-3}+3x^{-4}, \\
8ME_2&=&\frac{2279}{48} x^{-1}-\frac 54  x^{-3/2}-174x^{-2}+  \frac{15}{4}  x^{-5/2}+198x^{-3}-69x^{-4}
\eeaa
which are positive polynomials for $x \geq 1$, corresponding to the exterior region. In conclusion, the function 
\bea\label{eq:definition-v}
v= \frac 5 2\frac{(2M)^{3/2}}{r^{7/2}}\left(\frac{r}{2M}-1\right)
\eea
satisfies the stated positivity.
\end{proof}

By putting the above together we conclude that for $|a|, |Q| \ll M$ and a small universal constant $c_0>0$, we have
  \bea\label{eq:bound-EE-Y}
  \begin{split}
\int_S\EE^{(Y, w_Y, J)}[\psi_1, \psi_2] &\gtrsim c_0\int_S\Big( r^{-2}\big(  |\nab_{\Rhat}\psi_1|^2 + Q^2|\nab_{\Rhat}\psi_2|^2 \big)+ r^{-3} \big(|\psi_1|^2+Q^2 |\psi_2|^2 \big) \Big)\\
&+c_0\int_S \frac{\TT^2}{r^7}\big( |\nab\psi_1|^2+Q^2|\nab \psi_2|^2\big)\\
& -O(|a|r^{-1})\int_S\big(|\nab\psi_1|^2+Q^2|\nab \psi_2|^2+r^{-2}|\nab_T\psi_1|^2+Q^2r^{-2}|\nab_T\psi_2|^2\big).
\end{split}
 \eea
 
\subsubsection{Conditional bound containing $\nab_\That$}

We now enhance the previous bound with a trapped control of the $\nab_\That$ derivative. 
From the general expression for $ \EE^{(X, w, J)}[\psi] $ in \eqref{eq:EE-X-w-J} we have for a function $w_T$
 \beaa
|q|^2 \EE^{(0, w_T, 0)}[\psi_i]  &=&\frac 12 |q|^2w_T \LL[\psi_i] -\frac 1 4 |q|^2\square_\g  w_T |\psi_i|^2\\
 &=&\frac 12|q|^2  w_T \big( \Db_\la \psi_i\c\Db^\la \ov{\psi_i} + V_i|\psi_i|^2 \big) -\frac 1 4 |q|^2\square_\g  w_T |\psi_i|^2\\
 &=&\frac 1 2 \De  \, w_T     |\nab_r\psi_i|^2+\frac 1 2   w_T \frac 1 \De \RR^{\a\b} \, \Db_\a \psi_i \c \Db_\b \psi_i  -\frac 1 2 \left( \frac 1 2|q|^2 \square_\g  w_T-  |q|^2  w_T V_i \right) |\psi_i|^2\\
 &=&\frac 1 2 \De  \, w_T   |\nab_r\psi_i|^2-\frac{w_T (r^2+a^2)^2}{ 2 \De}|\nab_{\That} \psi_i|^2 +\frac 1 2  w_T O^{\a\b}\Db_\a\psi_i\c\Db_\b \psi_i \\
&& -\frac 1 2 \Big( \frac 1 2|q|^2 \square_\g  w_T -  |q|^2  w_T V_i \Big) |\psi_i|^2,
\eeaa
and therefore 
 \beaa
|q|^2 \EE^{(0, w_T, 0)}[\psi_1, \psi_2]   &=&\frac 1 2 \De  \, w_T \big(  |\nab_r\psi_1|^2+ 8Q^2 |\nab_r\psi_2|^2\big)  -\frac{w_T (r^2+a^2)^2}{ 2 \De}\big(  |\nab_{\That}\psi_1|^2+ 8Q^2 |\nab_{\That}\psi_2|^2\big) \\
&& +\frac 1 2|q|^2  w_T  \left(  |\nab \psi_1|^2+8Q^2|\nab \psi_2|^2 \right) \\
&& -\frac 1 2 \Big( \frac 1 2|q|^2 \square_\g  w_T -  |q|^2  w_T V_1 \Big) |\psi_1|^2 -\frac 1 2 8Q^2\Big( \frac 1 2|q|^2 \square_\g  w_T -  |q|^2  w_T V_2 \Big) |\psi_2|^2.
\eeaa
By summing the above to \eqref{eq:bound-EE-Y} for the choice of $w_T=- \frac{4 M \De \TT^2}{r^2 (r^2+a^2)^4}$ we have for sufficiently small $\de_T$:
   \bea\label{eq:bound-EE-X-complete}
  \begin{split}
\int_S\EE^{(Y, w, J)}[\psi_1, \psi_2] &\gtrsim \int_S\Big( r^{-2}\big(  |\nab_{\Rhat}\psi_1|^2 + Q^2|\nab_{\Rhat}\psi_2|^2 \big)+ r^{-3} \big(|\psi_1|^2+Q^2 |\psi_2|^2 \big) \Big)\\
&+ \int_S \frac{\TT^2}{r^6}\big(r^{-1} |\nab\psi_1|^2+ r^{-2} |\nab_{\That} \psi_1|^2+Q^2\big(r^{-1} |\nab \psi_2|^2+ r^{-2} |\nab_{\That} \psi_2|^2\big)\big) \\
& -O(|a|r^{-1})\int_S\big(|\nab\psi_1|^2+Q^2|\nab \psi_2|^2+r^{-2}|\nab_T\psi_1|^2+Q^2r^{-2}|\nab_T\psi_2|^2\big),
\end{split}
 \eea
 where $w=w_Y+\delta_T w_T$.

\subsection{Control on the terms involving the right hand side}\label{sec:bound-Rhs-terms}

We now obtain bounds on the terms $\mathscr{N}_{first}^{( Y, w)}[\psi_1,\psi_2]$, $\mathscr{N}_{coupl}^{( Y, w)}[\psi_1,\psi_2]$, $\mathscr{N}_{lot}^{(Y, w)}[\psi_1,\psi_2]$, $\mathscr{R}^{(Y)}[\psi_1, \psi_2]$.

First, for $\mathscr{N}_{first}^{(Y, w_Y)}[\psi_1,\psi_2]$ with the choices of Proposition \ref{prop:Choice-zhf}, we deduce from \eqref{eq:mathscr-N-first0general}
\beaa
\mathscr{N}_{first}^{(Y, w_Y)}[\psi_1,\psi_2]&=&  \frac{a\cos\th}{|q|^2} \Big[  -\frac{4\TT^2}{(r^2+a^2)^2(r^2-a^2)r} \Im\Big(\ov{\psi_1}\c \nab_T \psi_1+16 Q^2\ov{\psi_2}\c \nab_T \psi_2 \Big) \\
&&+\frac{4|q|^2\TT}{ r(r^2+a^2)(r^2-a^2)} \rhod(|\psi_1|^2+16Q^2 |\psi_2|^2\big)\Big]\\
   &&-\D_\mu \Im \Big[  \frac{a\cos\th}{|q|^2} (\partial_r)^\mu   zu (\ov{\psi_1}\c\nab_T \psi_1+16Q^2\ov{\psi_2}\c\nab_T \psi_2 )\Big]  \\
   &&+\partial_t \Im \Big(  \frac{a\cos\th}{|q|^2} zu (\ov{\psi_1}\c  \nab_r \psi_1+16Q^2\ov{\psi_2}\c  \nab_r \psi_2)\Big),
\eeaa
that can be bounded for some $\delta_1>0$,
\beaa
\mathscr{N}_{first}^{(Y, w_Y)}[\psi_1,\psi_2] &\geq& -\de_1 \frac{\TT^2}{ r^6}  r^{-2} \big( |\nab_{\That} \psi_1|^2+8Q^2|\nab_{\That}\psi_2|^2\big) +O(a^3r^{-6}) \big(  |\nab_\phi \psi_1|^2+ 8Q^2|\nab_\phi \psi_2|^2\big)\\
&&+O(ar^{-4})\big(  |\psi_1|^2+|\psi_2|^2\big) -\D_\mu \Im \Big[  \frac{a\cos\th}{|q|^2} (\partial_r)^\mu   zu (\ov{\psi_1}\c\nab_T \psi_1+16Q^2\ov{\psi_2}\c\nab_T \psi_2 )\Big]  \\
   &&+\partial_t \Im \Big(  \frac{a\cos\th}{|q|^2} zu (\ov{\psi_1}\c  \nab_r \psi_1+16Q^2\ov{\psi_2}\c  \nab_r \psi_2)\Big).
\eeaa
From \eqref{eq:definition-N-first} we have:
\beaa
\mathscr{N}_{first}^{(0, w_T)}[\psi_1,\psi_2]&=&  \frac{a\cos\th}{|q|^2} \frac{4 M \De \TT^2}{r^2 (r^2+a^2)^4} \Im\Big[   \ov{\psi_1}\c  \nab_T \psi_1+ 16Q^2 \ov{\psi_2}\c  \nab_T \psi_2\Big]\\
&\geq& - \frac{\TT^2}{ r^6}  \frac{M}{r^2} \big( |\nab_{\That} \psi_1|^2+8Q^2|\nab_{\That}\psi_2|^2\big) +O(a^3r^{-6}) \big(  |\nab_\phi \psi_1|^2+ 8Q^2|\nab_\phi \psi_2|^2\big)\\
&&+O(ar^{-4})\big(  |\psi_1|^2+|\psi_2|^2\big),
\eeaa
from which we deduce for $\de_2>0$
\bea\label{eq:estimate-NN-first-mor-conditional}
\begin{split}
\mathscr{N}_{first}^{(Y, w)}[\psi_1,\psi_2] &\geq -\de_2 \frac{\TT^2}{ r^6} r^{-2} \big( |\nab_{\That} \psi_1|^2+8Q^2|\nab_{\That}\psi_2|^2\big) +O(a^3r^{-6}) \big(  |\nab_\phi \psi_1|^2+ 8Q^2|\nab_\phi \psi_2|^2\big)\\
&+O(ar^{-4})\big(  |\psi_1|^2+|\psi_2|^2\big) -\D_\mu \Im \Big[  \frac{a\cos\th}{|q|^2} (\partial_r)^\mu   zu (\ov{\psi_1}\c\nab_T \psi_1+16Q^2\ov{\psi_2}\c\nab_T \psi_2 )\Big]  \\
   &+\partial_t \Im \Big(  \frac{a\cos\th}{|q|^2} zu (\ov{\psi_1}\c  \nab_r \psi_1+16Q^2\ov{\psi_2}\c  \nab_r \psi_2)\Big).
   \end{split}
\eea

We now consider the coupling term $\mathscr{N}_{coupl}^{(Y, w)}[\psi_1,\psi_2]$. First, for $\mathscr{N}_{coupl}^{(Y, w_Y)}[\psi_1,\psi_2]$ with the choices of $z$ and $u$ given by Proposition \ref{prop:Choice-zhf}, we deduce from \eqref{eq:mathscr-NN-coupl-X}:
\beaa
\mathscr{N}_{coupl}^{(Y, w_Y)}[\psi_1,\psi_2]  &=&4Q^2 \Re\Big[ \frac{(r^2+a^2)}{(r^2-a^2)}\frac{2\TT}{ r} \big( \frac{\De}{(r^2+a^2)^2}\frac{q^3(3q-2r)}{|q|^7}+\frac{ q^3}{|q|^5}\frac{2\TT}{(r^2+a^2)^3} \big)  \psi_1 \c \left(  \DD \c  \ov{\psi_2}  \right)\Big] \\
&&+O(a^2r^{-6})\frac{\De}{(r^2+a^2)^2}\frac{(r^2+a^2)}{(r^2-a^2)}\frac{2\TT}{ r} \Re( \psi_1 \c \ov{\psi_2})\\
  &&-\D_\mu \Re\Big[\frac{4Q^2 q^3}{|q|^5} \big( \psi_1 \c \big(\nabla_Y\ov{\psi_2} +\frac 1 2   w_Y \ov{\psi_2} \big)\big)^\mu -\frac{4Q^2z q^3}{|q|^5}u\psi_1 \c  (\DD \c\ov{\psi_2}) \pr_r^\mu \Big],
\eeaa
that can be bounded for some $\delta_3>0$
  \beaa
 \mathscr{N}_{coupl}^{(Y, w_Y)}[\psi_1,\psi_2] &\gtrsim& -\de_3\frac{\TT^2}{r^8} Q^2|\DD \c  \ov{\psi_2}  |^2  +O(Q^2r^{-4}) |\psi_1|^2+O(a^2r^{-6})|\psi_2|^2 \\
  &&-\D_\mu \Re\Big[\frac{4Q^2 q^3}{|q|^5} \big( \psi_1 \c \big(\nabla_Y\ov{\psi_2} +\frac 1 2   w_Y \ov{\psi_2} \big)\big)^\mu -\frac{4Q^2z q^3}{|q|^5}u\psi_1 \c  (\DD \c\ov{\psi_2}) \pr_r^\mu \Big].
  \eeaa
From \eqref{eq:N-coupl-1} we have: 
 \beaa
\mathscr{N}_{coupl}^{(0, w_T)}[\psi_1,\psi_2]&=&-4Q^2  \frac{4 M \De \TT^2}{r^2 (r^2+a^2)^4} \Re\Big[  \frac{ q^3}{|q|^5}  \psi_1 \c(\DD \c\ov{\psi_2} )  \Big] -\D_\a \Re(\frac{ 2Q^2q^3}{|q|^5}\psi_1 \c  w_T \ov{\psi_2})^\a\\
&\geq &  -\frac{\TT^2}{r^8} Q^2|\DD \c  \ov{\psi_2}  |^2  +O(Q^2r^{-4}) |\psi_1|^2-\D_\a \Re(\frac{ 2Q^2q^3}{|q|^5}\psi_1 \c  w_T \ov{\psi_2})^\a,
\eeaa
which gives for $\delta_4>0$
  \bea\label{eq:NN-coupl-X-w-mor-conditional}
  \begin{split}
 \mathscr{N}_{coupl}^{(Y, w)}[\psi_1,\psi_2] &\gtrsim -\de_4\frac{\TT^2}{r^8} Q^2|\DD \c  \ov{\psi_2}  |^2  +O(Q^2r^{-4}) |\psi_1|^2+O(a^2r^{-6})|\psi_2|^2 \\
  &-\D_\mu \Re\Big[\frac{4Q^2 q^3}{|q|^5} \big( \psi_1 \c \big(\nabla_Y\ov{\psi_2} +\frac 1 2   w_Y \ov{\psi_2} \big)\big)^\mu +\frac{ 2Q^2q^3}{|q|^5}(\psi_1 \c  w_T \ov{\psi_2})^\mu\\
  &-\frac{4Q^2z q^3}{|q|^5}u\psi_1 \c  (\DD \c\ov{\psi_2}) \pr_r^\mu \Big].
  \end{split}
  \eea
  Using the elliptic identity \eqref{eq:elliptic-estimates-psi2}, we can bound 
      \bea\label{eq:bound-on-matchscr-N-coupl}
  \begin{split}
 \mathscr{N}_{coupl}^{(Y, w)}[\psi_1,\psi_2] &\gtrsim -\de_4\frac{\TT^2}{r^8} Q^2\big( |\nab \psi_2|^2  +  r^{-2} |\nab_{\That} \psi_2|^2 \big)+O(a^2 r^{-4}) \big(  Q^2|\nab_\phi \psi_2|^2\big) \\
 &+O(Q^2r^{-4}) |\psi_1|^2+O(\de_4 Q^2 r^{-4}+ a^2r^{-6})|\psi_2|^2 \\
  &-\D_\mu \Re\Big[\frac{4Q^2 q^3}{|q|^5} \big( \psi_1 \c \big(\nabla_Y\ov{\psi_2} +\frac 1 2   w_Y \ov{\psi_2} \big)\big)^\mu +\frac{ 2Q^2q^3}{|q|^5}(\psi_1 \c  w_T \ov{\psi_2})^\mu\\
  &-\frac{4Q^2z q^3}{|q|^5}u\psi_1 \c  (\DD \c\ov{\psi_2}) \pr_r^\mu+\nab^\mu \psi_2\c \ov{\psi_2}+ ((\DD \hot  \psi_2) \c \ov{\psi_2})^\mu  \Big].
  \end{split}
  \eea
  
For the term $\mathscr{N}_{lot}^{(X, w)}[\psi_1,\psi_2]$ we simply bound from \eqref{eq:definition-N-lot}
\bea\label{eq:bound-NN-lot-mor-conditional}
\mathscr{N}_{lot}^{(X, w)}[\psi_1,\psi_2]&\gtrsim &  O(1)(|\nab_{\Rhat}\psi_1|+r^{-1}|\psi_1|)|N_1| + O(1)Q^2(|\nab_{\Rhat}\psi_2|+r^{-1}|\psi_2|)|N_2|.
\eea

For the curvature terms, using \eqref{eq:mathsc-R-Y} we obtain for $\de_5>0$
\bea\label{eq:bound-RR-Y-conditional}
\begin{split}
 \mathscr{R}^{(Y)}[\psi_1, \psi_2]& \gtrsim - \de_5  \frac{ \TT^2}{ r^6}  \left(r^{-2}\big( |\nab_{\That} \psi_1|^2+ Q^2 |\nab_{\That} \psi_1|^2\big)+O(a^3 r^{-4}) \big(|\nab_\phi \psi_1|^2+Q^2 |\nab_\phi \psi_2|^2\big)\right) \\
 &+ O(a^2r^{-6})\big( |\psi_1|^2+|\psi_2|^2\big).
 \end{split}
\eea

Finally, by putting together \eqref{eq:bound-EE-X-complete}, \eqref{eq:estimate-NN-first-mor-conditional}, \eqref{eq:bound-on-matchscr-N-coupl}, for  $\de_2, \de_4, \de_5$ sufficiently small and $|a|, |Q| \ll M$ we have
\beaa
\int_{\MM(0, \tau)}\D^\mu \PP_\mu^{( Y, w, J)}[\psi_1, \psi_2]&\gtrsim& \int_{\MM(0, \tau)}\Big( r^{-2}\big(  |\nab_{\Rhat}\psi_1|^2 + |\nab_{\Rhat}\psi_2|^2 \big)+ r^{-3} \big(|\psi_1|^2+ |\psi_2|^2 \big) \Big)\\
&&+ \int_{\MM(0, \tau)} \frac{\TT^2}{r^6}\big(r^{-1} |\nab\psi_1|^2+ r^{-2} |\nab_{\That} \psi_1|^2+r^{-1} |\nab \psi_2|^2+ r^{-2} |\nab_{\That} \psi_2|^2\big) \\
&& - |a| \int_{\MM(0, \tau)} r^{-1}\big(|\nab\psi_1|^2+|\nab \psi_2|^2+r^{-2}|\nab_T\psi_1|^2+r^{-2}|\nab_T\psi_2|^2\big)\\
&& -\int_{\MM(0, \tau)}\D_\mu \Im \Big[  \frac{a\cos\th}{|q|^2} (\partial_r)^\mu   zu (\ov{\psi_1}\c\nab_T \psi_1+16Q^2\ov{\psi_2}\c\nab_T \psi_2 )\Big]  \\
   &&+\int_{\MM(0, \tau)}\partial_t \Im \Big(  \frac{a\cos\th}{|q|^2} zu (\ov{\psi_1}\c  \nab_r \psi_1+16Q^2\ov{\psi_2}\c  \nab_r \psi_2)\Big)\\
  &&-\int_{\MM(0, \tau)}\D_\mu \Re\Big[\frac{4Q^2 q^3}{|q|^5} \big( \psi_1 \c \big(\nabla_Y\ov{\psi_2} +\frac 1 2   w_Y \ov{\psi_2} \big)\big)^\mu +\frac{ 2Q^2q^3}{|q|^5}(\psi_1 \c  w_T \ov{\psi_2})^\mu\\
  &&-\frac{4Q^2z q^3}{|q|^5}u\psi_1 \c  (\DD \c\ov{\psi_2}) \pr_r^\mu+\nab^\mu \psi_2\c \ov{\psi_2}+ ((\DD \hot  \psi_2) \c \ov{\psi_2})^\mu  \Big]\\
&&+\int_{\MM(0, \tau)} \Big( (|\nab_{\Rhat}\psi_1|+r^{-1}|\psi_1|)|N_1| + Q^2(|\nab_{\Rhat}\psi_2|+r^{-1}|\psi_2|)|N_2|\Big).
\eeaa
Applying the divergence theorem to the above, we deduce 
\beaa
  \Mor^{ax}[\psi_1, \psi_2](0, \tau)  \les &&\int_{\pr\MM(0, \tau)}|M(\psi_1, \psi_2)|\\
  &&+ |a| \int_{\MM(0, \tau)}\left( r^{-1}\big(|\nab\psi_1|^2+|\nab \psi_2|^2\big)+r^{-3}\big( |\nab_T\psi_1|^2+|\nab_T \psi_2|^2\big)\right)
\\
&& +\int_{\MM(0, \tau)}\Big( | \nab_{\Rhat} \psi_1 | + r^{-1}|\psi_1| \Big)    |  N|+\Big( | \nab_{\Rhat} \psi_2 | + r^{-1}|\psi_2| \Big)    |  N_2|,
\eeaa
where
\beaa
M(\psi_1,\psi_2)&:=& \PP_\mu^{(Y, w, J)}[\psi_1, \psi_2]\c n\\
&&+O(a r^{-2}) zu (\ov{\psi_1}\c  \nab_T \psi_1+16Q^2\ov{\psi_2}\c  \nab_T \psi_2)+O(a r^{-2}) zu(\ov{\psi_1}\c  \nab_r \psi_1+16Q^2\ov{\psi_1}\c  \nab_r \psi_1)\\
&&+O(Q^2 r^{-2}) zu (\ov{\psi_1}\c  \nab \psi_1+16Q^2\ov{\psi_2}\c  \nab \psi_2)+O(Q^2 r^{-2}) zu(\ov{\psi_1}\c  \nab_r \psi_1+16Q^2\ov{\psi_1}\c  \nab_r \psi_1),
\eeaa
is a quadratic expression in $\psi_1$ and $\psi_2$ and their first derivatives for which we easily deduce 
 \beaa
 \int_{\pr\MM(0, \tau)}|M(\psi_1, \psi_2)| &\les&  \sup_{[0, \tau]}E_{deg}[\psi_1,\psi_2](\tau)+ F_{\HH^+}[\psi_1, \psi_2](0,\tau)+F_{\mathscr{I}^+, 0}[\psi_1, \psi_2](0,\tau).
\eeaa
This proves \eqref{eq:conditional-mor-par1-1-II}, Proposition \ref{proposition:Morawetz1-step1}.

\section{Commuted Morawetz estimates for the model system}\label{sec:commuted-Morawetz}

The goal of this section is to obtain commuted Morawetz estimates and prove Proposition \ref{prop:morawetz-higher-order}. We first collect preliminary computations in Section \ref{sec:preliminaries-commuted-mor-estimates}. The proof is then obtained by combining the bounds on bulk obtained through quadratic forms in Section \ref{sec:quadratic=forms} and the control on the terms involving the right hand side of the equations in Section \ref{sec:rhs-terms-commuted-mor}.

\subsection{Preliminaries}\label{sec:preliminaries-commuted-mor-estimates}

Consider the generalized vectorfield $\Y=\FF^{\aund\bund}(r) \partial_r$ for a well-chosen double-indexed collection of functions $\FF^{\aund\bund}(r)$, together with a double-indexed collection of scalar functions $\w_{\Y}$ and a double-indexed collection of one-forms $\J$, where $\FF^{\aund\bund}$ and $w_Y^{\aund\bund}$ are given, as in \eqref{eq:FF-wred-w}, by
\beaa
\FF^{\aund\bund}= z u^{\aund\bund}, \qquad   w^{\aund\bund} = z \pr_r  u^{\aund\bund} .
\eeaa

To obtain commuted Morawetz estimates for the model system, we apply Proposition \ref{prop:general-computation-divergence-P-generalized} to the above and obtain
\beaa
\D^\mu \PP_\mu^{(\Y, \w_\Y, \J)}[\psi_1, \psi_2]&=& \EE^{(\Y, \w_\Y, \textbf{J})}[\psi_1, \psi_2]+\mathscr{N}_{first}^{(\X, \w_\Y)}[\psi_1,\psi_2]+\mathscr{N}_{coupl}^{(\textbf{Y}, \textbf{w}_\Y)}[\psi_1,\psi_2]\\
&&+\mathscr{N}_{lot}^{(\textbf{Y}, \textbf{w}_\Y)}[\psi_1,\psi_2]+\mathscr{R}^{(\textbf{Y})}[\psi_1, \psi_2].
\eeaa
In what follows we compute each of the above terms.

\subsubsection{The bulk term}

We start by computing $ \EE^{(\Y, \w_\Y, \textbf{J})}[\psi_1, \psi_2]$. Using Proposition \ref{proposition:Morawetz3}, we have that for a function $z$ and a double-indexed collection of functions $u^{\aund\bund}$ we can write
\bea\label{eq:separation-EE-I-J-K}
\begin{split}
|q|^2\EE^{(\textbf{Y}, \textbf{w}_\Y, \textbf{J})}[\psi_1, \psi_2]&=P+I+J  +K,
\end{split}
\eea
where
\beaa
P&:=&  \UU^{\a\b\aund\bund} \,\Re \big(  \Db_\a \psiao \c \Db_\b \ov{\psibo}+ 8Q^2  \Db_\a \psiat \c \Db_\b \ov{\psibt}\big) \\
I&:=& \AA^{\aund\bund}  \Re\big( \nab_r\psiao\c \nab_r\ov{\psibo}+8Q^2\nab_r\psiat\c \nab_r\ov{\psibt} \big)\\
J&:=& \VV^{\aund\bund}_1 \Re\big(\psiao \c \ov{\psibo}\big)+ 8Q^2 \VV^{\aund\bund}_2 \Re\big(\psiat \c \ov{\psibt}\big) \\
K&:=& \frac 1 4 |q|^2  \D^\mu \Big(J^{\aund\bund}_\mu \Re\big( \psiao\c\ov{\psibo}+ 8Q^2\psiat \c \ov{\psibt}\big)  \Big),
\eeaa
for $\AA^{\aund\bund}$, $\UU^{\a\b\aund\bund}$, $\VV^{\aund\bund}_i$ given by \eqref{eq:coefficients-commuted-mor}.

From 
\beaa
\UU^{\a\b\aund\bund}&=&  - \frac{ 1}{2}  u^{\aund\bund} \pr_r\left( \frac z \De\RR^{\a\b}\right)=  -\frac{ 1}{2}  u^{\aund\bund} \pr_r\left( \frac z \De\RR^\aund  \right)\Sa^{\a\b}=-\frac{ 1}{2}  u^{\aund\bund}\RRtp^{\aund} \Sa^{\a\b}
\eeaa
where $\RRtp^{\aund}:=  \pr_r\left( \frac z \De\RR^{\aund}\right)$, we deduce
\beaa
P&=&- \frac{ 1}{2} u^{\aund\bund}\RRtp^{\cund} \Sc^{\a\b} \,\Re \big(  \Db_\a \psiao \c \Db_\b \ov{\psibo}+ 8Q^2  \Db_\a \psiat \c \Db_\b \ov{\psibt}\big).
\eeaa
We choose
\bea\label{eq:choice-f-aund-bund}
  u^{\underline{a}\underline{b}}&=& -h  \tilde{\RR}'^{(\underline{a}} \LL^{\underline{b})}=- \frac 1 2h \big(\RRtp^{\aund}\LL^{\bund}+\RRtp^{\bund}\LL^{\aund} \big),
\eea
where $h$ is a function of $r$ and $\LL^{\aund} $ are coefficients only depending\footnote{In \cite{GKS} the coefficients $\LL^{\aund} $ were chosen to be constant. Here, they are slightly modified to take into account the coupling terms.} on $r$ to be chosen later of  a given 2-tensor of the form
\bea
\lab{eq:operato-LL2}\lab{eq:operato-LL}
 L^{\a\b} &=& \LL^{\underline{a}} S^{\a\b}_{\underline{a}}= \LL^{1} T^\a T^\b+ \LL^{2} aT^{(\a}  Z^{\b)} +\LL^{3}a^2Z^\a Z^\b+\LL^4 O^{\a\b}.
\eea
This gives
 \beaa
P&=& \UU^{\aund\bund\,\a\b}  \, \Re \big(  \Db_\a \psiao \c \Db_\b \ov{\psibo}+ 8Q^2  \Db_\a \psiat \c \Db_\b \ov{\psibt}\big)
\eeaa
where
\bea
\lab{definition:UUab}
\UU^{\aund\bund}:=\frac 1 2  h\RRtp^{\aund}  \RRtp^{\bund} , \qquad   \UU^{\aund\bund\,\a\b} := \frac 1  2\big(\UU^{\aund\cund} \LL^{\bund}+\UU^{\bund\cund} \LL^\aund \big)\Sc^{\a\b}= \UU^{\cund (\aund} \LL^{\bund)} \Sc^{\a\b}.
\eea

We recall the following lemma, that was partially obtained in \cite{Giorgi8}.
 \begin{lemma}
   \lab{lemma:IntegrationbypartsP}
The term $P$ satisfies the identity
\bea
\begin{split}
 P&=\frac 1 2  h  L^{\a\b} \Re\big( \Db_\a \Psi_1\c    \Db_\b \ov{\Psi_1}+8Q^2  \Db_\a \Psi_2\c    \Db_\b \ov{\Psi_2} \big)\\
 &-\frac 1 2  h \Re\Big( \Psi_1\c ( \RRtp^{\cund}  \LL^{\bund}[\SS_{\cund}, \SS_{\bund}]\ov{\psi_1})+8Q^2  \Psi_2\c ( \RRtp^{\cund}  \LL^{\bund}[\SS_{\cund}, \SS_{\bund}]\ov{\psi_2})  \Big)\\
 &  + 12Q^2 \Kh h  \Re \Big(\Psi_2(\LL^{4} \RRtp^{\aund}-  \LL^{\aund} \RRtp^{4} ) \SS_\aund \ov{\psi_2}\Big)+|q|^2\Db_\a \BB^\a,
 \end{split}
\eea
where $\Psi_1$ and $\Psi_2$ are defined as 
\bea\label{eq:definition-Psi1-Psi2}
\Psi_1:= \RRtp^{\aund} \psiao , \qquad \Psi_2:=\RRtp^{\aund} \psiat, 
\eea
and the boundary term $\BB$ is given by
\bea\label{eq:definition-BB}
\begin{split}
\BB^\a&:= |q|^{-2} \frac 1 2  h \Re\Big(\Psi_1  \RRtp^{\cund}  \LL^{\bund}\c\left(\Sc^{\a\b}   \, \Db_\b \ov{\psibo}-  \Sb^{\a\b}   \Db_\b \ov{\psico} \right) \Big)\\
&+8Q^2 |q|^{-2} \frac 1 2  h \Re\Big(\Psi_2  \RRtp^{\cund}  \LL^{\bund}\c\left(\Sc^{\a\b}   \, \Db_\b \ov{\psibt}-  \Sb^{\a\b}   \Db_\b \ov{\psict} \right)\Big).
\end{split}
\eea
\end{lemma}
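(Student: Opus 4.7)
The plan is to reorganize the quadratic form $P$ --- originally a bilinear expression in $(\Db\psiao, \Db\psibo)$ weighted by the symmetrized coefficient $\UU^{\cund(\aund}\LL^{\bund)}\Sc^{\a\b}$ --- into the analogous bilinear form $\tfrac12 h L^{\a\b} \Db_\a \Psi_1 \c \Db_\b \ov{\Psi_1}$ in the summed tensor $\Psi_1 = \RRtp^\aund \psiao$ (and its $\psi_2$ analogue), modulo a commutator involving $[\SS_\cund,\SS_\bund]$, a Gauss-curvature correction specific to the 2-tensor $\psi_2$, and a pure spacetime divergence $|q|^2\Db_\a \BB^\a$. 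The symmetrization in the pair $(\aund,\bund)$ built into $\UU^{\aund\bund\,\a\b}$ is designed precisely so that exactly one factor of $\RRtp$ can be absorbed into $\Psi$ on each side of the bilinear form.

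First, after expanding the symmetrization, I would apply the pointwise identity $\RRtp^\aund \Db_\a \psiao = \Db_\a \Psi_1 - (\partial_\a \RRtp^\aund)\psiao$, valid because each $\RRtp^\aund$ is a scalar function of $r$ and the index $\aund$ is summed. This absorbs one copy of $\RRtp$ on each side and already produces the main term $\tfrac12 h L^{\a\b}\Db_\a\Psi_1\c\Db_\b\ov{\Psi_1}$, with residuals carrying the scalar $\partial_\a \RRtp^\aund$ and one first-derivative factor. Next, I would integrate by parts on these residuals against the weight $|q|^{-2}\Sc^{\a\b}$, invoking the defining identity $\SS_\cund\psi = |q|^2\Db_\b(|q|^{-2}\Sc^{\a\b}\Db_\a\psi)$. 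This step converts the contracted pair $\Sc^{\a\b}\Db_\a\Db_\b$ into the symmetry operator $\SS_\cund$ and sweeps the remaining boundary contributions into the 1-form $\BB^\a$ of \eqref{eq:definition-BB}. The emerging interior term is of schematic form $\RRtp^\cund\LL^\bund\,\Psi_1\c|q|^{-2}\SS_\cund\psibo$, and since $\psibo = \SS_\bund \psi_1$, I would apply the commutator identity $\SS_\cund \SS_\bund\psi_1 = \SS_\bund\SS_\cund\psi_1 + [\SS_\cund,\SS_\bund]\psi_1$ to swap the ordering. A final reabsorption $\RRtp^\cund\psico \mapsto \Psi_1$ folds this back into the desired bilinear piece and leaves the commutator $[\SS_\cund,\SS_\bund]\psi_1$ as the only interior residual, matching the second line of the statement.

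The principal technical obstacle will appear at the component $\aund=4$, where $\SS_4=\OO$ acts as a modified horizontal Laplacian whose behavior differs between $\sk_1(\CCC)$ and $\sk_2(\CCC)$. By the elliptic identities \eqref{relation-DDb-DD-hot-lap}--\eqref{relation-DD-hot-DDb-lap}, $\OO$ on an $\sk_2(\CCC)$ tensor produces an extra Gauss-curvature contribution (already partially subtracted off through the rescaling $\OO\psi_2 - 3|q|^2 \Kh\psi_2$ built into the definition of $\psiat$ for $\aund=4$ in \eqref{eq:definition-hat-psi}), which has no counterpart for $\psi_1$. Consequently, when Step 2 is carried out on the $\psi_2$-sector at $\aund=4$, an additional $\Kh$-type remainder appears and, upon symmetrization, survives only in the asymmetric combination $\LL^{4}\RRtp^\aund - \LL^\aund\RRtp^{4}$. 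This is precisely the origin of the term $12Q^2 \Kh h\,\Re\big(\Psi_2(\LL^{4}\RRtp^\aund - \LL^\aund \RRtp^{4})\SS_\aund \ov{\psi_2}\big)$ in the conclusion. Tracking this contribution requires careful index bookkeeping between the $\aund=4$ and $\bund=4$ slots of the symmetrized sum but introduces no new analytic idea beyond the elliptic identities of Section \ref{section:elliptic-estimates}.
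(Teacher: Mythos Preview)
Your integration-by-parts scheme in Step~2, the commutator swap, and the ``final reabsorption'' are exactly the mechanism the paper uses, and your handling of the $\Kh$ correction in the $\psi_2$-sector is right. The gap is in Step~1. In the coefficient $\UU^{\aund\bund\,\a\b}=\tfrac12 h\,\RRtp^{\cund}\,\RRtp^{(\aund}\LL^{\bund)}\,\Sc^{\a\b}$ the $\LL$ index lives on the symmetrized pair $(\aund,\bund)$ attached to $\psibo$, while the \emph{extra} factor $\RRtp^{\cund}$ is attached to $\Sc^{\a\b}$. The product-rule step therefore absorbs only one copy of $\RRtp$ (on the $\psiao$ side), and since $\Sc^{\a\b}$ carries no $\partial_r$ component the purported residual $(\partial_\a\RRtp^{\aund})\psiao$ vanishes identically upon contraction with $\Sc^{\a\b}$. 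What remains after Step~1 is $\tfrac12 h\,\RRtp^{\cund}\LL^{\bund}\,\Sc^{\a\b}\Re(\Db_\a\Psi_1\c\Db_\b\ov{\psibo})$, which is \emph{not} the main term: the tensor is $\RRtp^{\cund}\Sc^{\a\b}$ rather than $L^{\a\b}=\LL^{\cund}\Sc^{\a\b}$, and the right-hand factor is $\psibo$, not $\Psi_1$.

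The main term appears only after applying your Step~2 and a \emph{second} integration by parts to this whole expression (not to any vanishing residual): integrating by parts in $\Db_\a$ against $|q|^{-2}\Sc^{\a\b}$ yields $\Psi_1\c|q|^{-2}\SS_{\cund}\ov{\psibo}$; the commutator swap gives $\Psi_1\c|q|^{-2}\SS_{\bund}\ov{\psico}$ plus the $[\SS_{\cund},\SS_{\bund}]$ term; and integrating by parts once more unfolds $\SS_{\bund}=|q|^2\Db_\a(|q|^{-2}\Sb^{\a\b}\Db_\b\,\cdot)$. It is this last step that produces $\LL^{\bund}\Sb^{\a\b}=L^{\a\b}$ and lets $\RRtp^{\cund}\Db_\b\ov{\psico}$ collapse to $\Db_\b\ov{\Psi_1}$. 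The two boundary contributions from these two integrations by parts combine into the antisymmetric expression for $\BB^\a$ in \eqref{eq:definition-BB}. So your ingredients are correct, but the ordering is off: Step~1 does not produce the main term, and your Step~2 must act on the full expression rather than on residuals that are in fact zero.
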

\begin{proof} By the integration by parts in $\Db_\a$, we have
\beaa
|q|^{-2}\UU^{\a\b\aund\bund} \,\Re \big(  \Db_\a \psiao \c \Db_\b \ov{\psibo}\big)&=& -\frac{ 1}{2} |q|^{-2} u^{\aund\bund}\RRtp^{\cund} \Sc^{\a\b}\,\Re \big(  \Db_\a \psiao \c \Db_\b \ov{\psibo}\big)\\
&=& -\frac{ 1}{2}\D_\a\Re\big( |q|^{-2} u^{\aund\bund}\RRtp^{\cund} \Sc^{\a\b}\, \psiao \c \Db_\b \ov{\psibo}\big)\\
&& +\frac{ 1}{2} u^{\aund\bund}\Re \Big( \RRtp^{\cund} \, \psiao \c \Db_\a  \big( |q|^{-2}  \Sc^{\a\b}  \Db_\b \ov{\psibo}\big)\Big)\\
&& +\frac{ 1}{2}\Db_\a \big( u^{\aund\bund} \RRtp^{\cund} \big) \Re\Big(\, \psiao \c   \big( |q|^{-2}  \Sc^{\a\b}  \Db_\b \ov{\psibo}\big)\Big). 
\eeaa
For $u^{\aund\bund}$ and $\RRtp^{\cund}$ only depending on $r$, we have $\Sc^{\a\b}\Db_\a \big( u^{\aund\bund} \RRtp^{\cund} \big)=0$, and similarly for $\psi_2$. Now by definition of commuted tensors \eqref{eq:definition-psiao-psiat}, \eqref{eq:definition-hat-psi} we have
\beaa
&&|q|^2\Db_\a  \big( |q|^{-2}  \Sc^{\a\b}  \Db_\b \ov{\psibo}\big)= \SS_\cund \ov{\psi_1} = \ov{\psico} \qquad \text{for $\cund=1,2,3,4$}\\
&&|q|^2\Db_\a  \big( |q|^{-2}  \Sc^{\a\b}  \Db_\b \ov{\psibt}\big)= \SS_\cund \ov{\psi_2} = \ov{\psict} \qquad \text{for $\cund=1,2,3$} \\
&&|q|^2\Db_\a  \big( |q|^{-2}  \Sc^{\a\b}  \Db_\b \ov{\psibt}\big)= \OO \ov{\psi_2} = \ov{\psict} + 3|q|^2\Kh \ov{\psi_2} \qquad \text{for $\cund=4$.}
\eeaa
Therefore we can write
\beaa
\Db_\a  \big( |q|^{-2}  \Sc^{\a\b}  \Db_\b \ov{\psibo}\big)&=&\Db_\a  \big( |q|^{-2}  \Sc^{\a\b}  \Db_\b \SS_\bund \ov{\psi_1}\big)=|q|^{-2} \SS_\cund \SS_\bund \ov{\psi_1}=|q|^{-2} \SS_\bund \SS_\cund \ov{\psi_1}+ |q|^{-2} [\SS_\cund, \SS_\bund]\ov{\psi_1}\\
&=&\Db_\a  \big( |q|^{-2}  \Sb^{\a\b}  \Db_\b \ov{\psico}\big)+ |q|^{-2} [\SS_\cund, \SS_\bund]\ov{\psi_1} \\
\Db_\a  \big( |q|^{-2}  \Sc^{\a\b}  \Db_\b \ov{\psibt}\big)&=&\Db_\a  \big( |q|^{-2}  \Sc^{\a\b}  \Db_\b( \SS_\bund \ov{\psi_2}-3\delta_{\bund 4}|q|^2\Kh\ov{\psi_2}) \big)\\
&=&|q|^{-2} \SS_\bund \SS_\cund \ov{\psi_2}+ |q|^{-2} [\SS_\cund, \SS_\bund]\ov{\psi_2} -3\delta_{\bund 4} \Kh  \SS_\cund \ov{\psi_2} +O(a^2r^{-4})\dk^{\leq 1}\ov{\psi_2}\\
&=&\Db_\a  \big( |q|^{-2}  \Sb^{\a\b}  \Db_\b \ov{\psict}\big)  -3 \Kh  (\delta_{\bund 4}\SS_\cund -\delta_{\cund 4}\SS_\bund )\ov{\psi_2} \\
&&+O(a^2r^{-4})\dk^{\leq 1}\ov{\psi_2}+ |q|^{-2} [\SS_\cund, \SS_\bund]\ov{\psi_2}
\eeaa
where $\delta_{\bund 4}=1$ if $\bund=4$ and $0$ otherwise.
Thus, repeating the integration by parts procedure, and  recalling that $u^{\underline{a}\underline{b}}= -h  \tilde{\RR}'^{(\underline{a}} \LL^{\underline{b})}$ and $  \LL^{\bund}   \Sb^{\a\b}=L^{\a\b}  $, we obtain
\beaa
|q|^{-2}\UU^{\a\b\aund\bund} \,\Re \big(  \Db_\a \psiao \c \Db_\b \ov{\psibo}\big)&=&\frac 1 2  h  L^{\a\b}  \Db_\a (\RRtp^{\aund} \psiao )    \Db_\b(\RRtp^{\cund} \ov{\psico}) -\frac 1 2  h \RRtp^{\cund}  \LL^{\bund}\Re\big((\RRtp^{\aund}  \psiao )\c [\SS_{\cund}, \SS_{\bund}]\ov{\psi_1} \big)\\
&& +|q|^2\Db_\a \Re\left(|q|^{-2} \frac 1 2  h  \RRtp^{\cund}  \LL^{\bund} (\RRtp^{\aund} \psiao )\left(\Sc^{\a\b}   \, \Db_\b \psibo-  \Sb^{\a\b}   \Db_\b \psico \right) \right )
\eeaa
and 
\beaa
|q|^{-2}\UU^{\a\b\aund\bund} \,\Re \big(  \Db_\a \psiat \c \Db_\b \ov{\psibt}\big)&=&\frac 1 2  h  L^{\a\b}  \Db_\a (\RRtp^{\aund} \psiat )    \Db_\b(\RRtp^{\cund} \ov{\psict}) -\frac 1 2  h \RRtp^{\cund}  \LL^{\bund}\Re\big((\RRtp^{\aund}  \psiat )\c [\SS_{\cund}, \SS_{\bund}]\ov{\psi_2} \big)\\
&& +|q|^2\Db_\a \Re\left(|q|^{-2} \frac 1 2  h  \RRtp^{\cund}  \LL^{\bund} (\RRtp^{\aund} \psiat )\left(\Sc^{\a\b}   \, \Db_\b \psibt-  \Sb^{\a\b}   \Db_\b \psict \right) \right )\\
&&  + \frac{ 3}{2} \Kh h   \LL^{\bund} \RRtp^{\cund} \Re \Big((\RRtp^{\aund} \psiat)\big( \de_{\bund4} \SS_\cund -\de_{\cund4} \SS_\bund \big)\ov{\psi_2}\Big).
\eeaa
By denoting $\Psi_1= \RRtp^{\aund} \psiao$ , $\Psi_2=\RRtp^{\aund} \psiat$ we obtain the stated expression. Notice that 
\beaa
  \LL^{\bund} \RRtp^{\cund}  \big( \de_{\bund4} \SS_\cund -\de_{\cund4} \SS_\bund \big)\ov{\psi_2}&=&   \LL^{4} (\RRtp^{\cund}   \SS_\cund )\ov{\psi_2}-   \RRtp^{4}  \big(\LL^{\bund} \SS_\bund \big)\ov{\psi_2}= (\LL^{4} \RRtp^{\aund}-  \LL^{\aund} \RRtp^{4} ) \SS_\aund \ov{\psi_2}.
\eeaa
This proves the lemma.
\end{proof}

We can therefore write
\bea\label{eq:bulk-commuted-first}
|q|^2\EE^{(\textbf{Y}, \textbf{w}_\Y, \textbf{J})}[\psi_1, \psi_2]-|q|^2\Db_\a \BB^\a&=\widetilde{P}+P_{lot}+I+J  +K,
\eea
where
\beaa
\widetilde{P}&=& \frac 1 2  h  L^{\a\b} \Re\big( \Db_\a \Psi_1\c    \Db_\b \ov{\Psi_1}+8Q^2  \Db_\a \Psi_2\c    \Db_\b \ov{\Psi_2} \big)\\
P_{lot}&=& -\frac 1 2  h \Re\Big( \Psi_1\c ( \RRtp^{\cund}  \LL^{\bund}[\SS_{\cund}, \SS_{\bund}]\ov{\psi_1})+8Q^2  \Psi_2\c ( \RRtp^{\cund}  \LL^{\bund}[\SS_{\cund}, \SS_{\bund}]\ov{\psi_2})  \Big)\\
&& + 12Q^2 \Kh h  \Re \Big(\Psi_2(\LL^{4} \RRtp^{\aund}-  \LL^{\aund} \RRtp^{4} ) \SS_\aund \ov{\psi_2}\Big)
\eeaa
Also, the quadratic forms $I$, $J$, $K$ are given by
\beaa
I&=& \Re\Big(( \AAa[z] \nab_r\psiao)\c (\LL^{\aund}\nab_r\ov{\psiao})+8Q^2 ( \AAa[z]\nab_r\psiat)\c(\LL^{\aund}\nab_r\ov{\psiat})\Big) \\
J&=& \Re\Big( (\VVa_1[z] \psiao) \c (\LL^{\aund}\ov{\psiao})+ 8Q^2 (\VVa_2[z] \psiat) \c(\LL^{\aund}\ov{\psiat})\Big)
\eeaa
where
\beaa
 \AAa[z]=  - z^{1/2}\De^{3/2}   \RRtpp^{\aund},  \qquad  \RRtpp^\aund:=  \pr_r\Big( \frac{ h  z^{1/2}  \RRtp^{\aund}  }{\De^{1/2} } \Big),\\
   \VV_i^{\aund}= \frac 1 4  \pr_r\left(\De \pr_r \Big(
 z \pr_r \big( h  \tilde{\RR}'^{\underline{a}}  \big)  \Big)  \right)+2   h   \tilde{\RR}'^{\underline{a}}   \pr_r \left(z |q|^2 V_i\right).
\eeaa
Finally,
\beaa
K&=&\frac 1 4 |q|^2 \D^\mu \Re\Big((J^{\aund}_\mu\psiao )\c (\LL^\aund   \ov{\psiao}) +8Q^2(J^{\aund}_\mu\psiat )\c (\LL^\aund   \ov{\psiat})  \Big).
   \eeaa

\subsubsection{The first order term}

We now compute $\mathscr{N}_{first}^{(\textbf{Y}, \textbf{w}_\Y)}[\psi_1,\psi_2]$. As in the derivation of \eqref{eq:mathscr-N-first0general} we obtain for the choice \eqref{eq:choice-f-aund-bund}:
    \beaa
    \begin{split}
\mathscr{N}_{first}^{(\textbf{Y}, \textbf{w}_\Y)}[\psi_1,\psi_2]&= - \frac{a\cos\th}{|q|^2} \Big[  (\pr_r z) h \Im\Big(\ov{\LL^\aund\psiao}\c \nab_T \Psi_1+\ov{\Psi_1}\c \nab_T (\LL^{\aund}\psiao)\\
&+16 Q^2 \big(\ov{\LL^\aund\psiat}\c \nab_T \Psi_2+\ov{\Psi_2}\c \nab_T (\LL^{\aund}\psiat) \big)\Big) \\
&+2zh   \rhod\frac{|q|^2}{\De}( \Psi_1(\LL^\aund \psiao)+16Q^2\Psi_2 (\LL^\aund \psiat)  \big)\Big]\\
   &+\D_\mu \Im \Big[  \frac{a\cos\th}{|q|^2} (\partial_r)^\mu   zh ( \ov{ \LL^\aund\psiao}\c\nab_T \Psi_1+16Q^2\ov{\LL^\aund \psiat}\c\nab_T \Psi_2 )\Big]  \\
   &-\partial_t \Im \Big(  \frac{a\cos\th}{|q|^2} zh  (\ov{\Psi_1}\c  \LL^\bund\nab_r \psibo+16Q^2\ov{\Psi_2}\c  \LL^\bund\nab_r \psibt)\Big)\\
   &= - \frac{a\cos\th}{|q|^2} \Big[ 2 (\pr_r z) h \Im\Big(\ov{\LL^\aund\psiao}\c \nab_T \Psi_1+16 Q^2 \ov{\LL^\aund\psiat}\c \nab_T \Psi_2 \Big) \\
&+2zh   \rhod\frac{|q|^2}{\De}( \Psi_1(\LL^\aund \psiao)+16Q^2\Psi_2 (\LL^\aund \psiat)  \big)\Big]\\
   &+\D_\mu \Im \Big[  \frac{a\cos\th}{|q|^2} (\partial_r)^\mu   zh ( \ov{ \LL^\aund\psiao}\c\nab_T \Psi_1+16Q^2\ov{\LL^\aund \psiat}\c\nab_T \Psi_2 )\Big]  \\
   &-\partial_t \Im \Big[  \frac{a\cos\th}{|q|^2} (zh  \ov{\Psi_1}\c  \LL^\bund\nab_r \psibo+ (\pr_r z) h \Psi_1 \c \ov{\LL^\aund\psiao}\\
   &+16Q^2(zh \ov{\Psi_2}\c  \LL^\bund\nab_r \psibt+ (\pr_rz) h \Psi_2 \c \ov{\LL^\aund\psiat})\Big].
   \end{split}
\eeaa
For $z=O(r^{-2})$, we can therefore bound the above by
    \bea\label{eq:bound-N-first-SS}
    \begin{split}
\mathscr{N}_{first}^{(\textbf{Y}, \textbf{w}_\Y)}[\psi_1,\psi_2]      &\geq -\de_2 r^{-2} h\big( |\nab_{\That} \Psi_1|^2+Q^2|\nab_{\That} \Psi_2|^2\big) - a^3r^{-6}h \big(|\nab_Z \Psi_1|^2+Q^2 |\nab_Z \Psi_2|^2\big)\\
   &+O(ar^{-3})\big( |\psi_1|_\SS^2+Q^2|\psi_2|_{\SS}^2\big)\\
      &+\D_\mu \Im \Big[  \frac{a\cos\th}{|q|^2} (\partial_r)^\mu   zh ( \ov{ \LL^\aund\psiao}\c\nab_T \Psi_1+16Q^2\ov{\LL^\aund \psiat}\c\nab_T \Psi_2 )\Big]  \\
   &-\partial_t \Im \Big[  \frac{a\cos\th}{|q|^2} (zh  \ov{\Psi_1}\c  \LL^\bund\nab_r \psibo+ (\pr_r z) h \Psi_1 \c \ov{\LL^\aund\psiao}\\
   &+16Q^2(zh \ov{\Psi_2}\c  \LL^\bund\nab_r \psibt+ (\pr_rz) h \Psi_2 \c \ov{\LL^\aund\psiat})\Big].
   \end{split}
\eea

\subsubsection{The coupling term}

We now compute $\mathscr{N}_{coupl}^{(\textbf{Y}, \textbf{w}_\Y)}[\psi_1,\psi_2] $. As in the derivation of \eqref{eq:mathscr-NN-coupl-X} we obtain for the choice \eqref{eq:choice-f-aund-bund}:
\beaa
\begin{split}
\mathscr{N}_{coupl}^{(\textbf{Y}, \textbf{w}_\Y)}[\psi_1,\psi_2]  &=-4Q^2 \Re\Big[ h \big( \frac{zq^3(3q-2r)}{|q|^7}-\frac{(\pr_rz) q^3}{|q|^5} \big)\big( \Psi_1 \c \left(  \DD \c  \ov{\LL^\bund\psibt}  \right)+\LL^\aund \psiao \c \left(  \DD \c \ov{ \Psi_2} \right) \big)\Big] \\
&+O(a^2r^{-6}) z h   \Re( \Psi_1 \c \ov{\LL^\bund\psibt}+ \LL^\aund\psiao \c \ov{\Psi_2})\\
  &-\D_\mu \Re\Big[-\frac{4Q^2 q^3}{|q|^5} \big( \Psi_1 \c \big(zh\nabla_{\pr_r}(\LL^\bund\ov{\psibt}) +\frac 1 2   w_Y (\LL^{\bund}\ov{\psibt}) \big)\big)^\mu\\
  &-\frac{4Q^2 q^3}{|q|^5} \big(\LL^\aund \psiao \c \big(zh \RRtp^\bund\nabla_{\pr_r}\ov{\psibt} +\frac 1 2   w_Y\Psi_2 \big)\big)^\mu\\
  & +\frac{4Q^2z q^3}{|q|^5}h\Psi_1 \c  (\DD \c\ov{\LL^\bund\psibt}) \pr_r^\mu +\frac{4Q^2z q^3}{|q|^5}h\LL^\aund\psiao \c  (\DD \c\ov{\Psi_2}) \pr_r^\mu \Big].
  \end{split}
\eeaa
Using Lemma \ref{lemma:adjoint-operators} to write
   \beaa
h \big( \frac{zq^3(3q-2r)}{|q|^7}-\frac{(\pr_rz) q^3}{|q|^5} \big) \Psi_1 \c (\DD \c \ov{\LL^\bund\psibt})  &=& -h \big( \frac{zq^3(3q-2r)}{|q|^7}-\frac{(\pr_rz) q^3}{|q|^5} \big) (  \DD \hot   \Psi_1) \c   \ov{\LL^\bund\psibt} \\
&&+O(a^2r^{-6}) z h    \Psi_1 \c \ov{\LL^\bund\psibt} \\
&&+\D_\mu \Big(h \big( \frac{zq^3(3q-2r)}{|q|^7}-\frac{(\pr_rz) q^3}{|q|^5} \big) \Psi_1 \c \ov{\LL^\bund\psibt}\Big)^\mu,
 \eeaa
 we deduce
\bea
\begin{split}
\mathscr{N}_{coupl}^{(\textbf{Y}, \textbf{w}_\Y)}[\psi_1,\psi_2]  &=4Q^2 \Re\Big[ h \big( \frac{zq^3(3q-2r)}{|q|^7}-\frac{(\pr_rz) q^3}{|q|^5} \big) \big((  \DD \hot   \Psi_1) \c   \ov{\LL^\bund\psibt} +\LL^\aund \psiao \c \left(  \DD \c  \ov{\Psi_2} \right) \big)\Big] \\
&+O(a^2r^{-6}) z h   \Re( \Psi_1 \c \ov{\LL^\bund\psibt}+ \LL^\aund\psiao \c \ov{\Psi_2})+\D_\mu \mbox{Bdr}_{\mathscr{N}_{coupl}^{(\textbf{Y}, \textbf{w}_Y)}}^\mu
  \end{split}
\eea
where
\beaa
\begin{split}
\mbox{Bdr}_{\mathscr{N}_{coupl}^{(\textbf{Y}, \textbf{w}_Y)}}^\mu&=4Q^2\Re\Big[\frac{ q^3}{|q|^5} \big( \Psi_1 \c \big(zh\nabla_{\pr_r}(\LL^\bund\ov{\psibt}) +\frac 1 2   w_Y (\LL^{\bund}\ov{\psibt}) \big)\big)^\mu\\
  &+\frac{ q^3}{|q|^5} \big(\LL^\aund \psiao \c \big(zh \RRtp^\bund\nabla_{\pr_r}\ov{\psibt} +\frac 1 2   w_Y\Psi_2 \big)\big)^\mu\\
  & -\frac{z q^3}{|q|^5}h\Psi_1 \c  (\DD \c\ov{\LL^\bund\psibt}) \pr_r^\mu -\frac{z q^3}{|q|^5}h\LL^\aund\psiao \c  (\DD \c\ov{\Psi_2}) \pr_r^\mu\\
  &-h \big( \frac{zq^3(3q-2r)}{|q|^7}-\frac{(\pr_rz) q^3}{|q|^5} \big) (\Psi_1 \c \ov{\LL^\bund\psibt})^\mu \Big].
  \end{split}
\eeaa
Using the elliptic estimates of Section \ref{section:elliptic-estimates} as in the derivation of \eqref{eq:bound-on-matchscr-N-coupl}, we can bound
\bea\label{eq:NN-coupl-Y-wY-SS}
\begin{split}
\mathscr{N}_{coupl}^{(\textbf{Y}, \textbf{w}_\Y)}[\psi_1,\psi_2]  &\geq-\de_3 r^{-2} h\big( |\nab \Psi_1|^2+Q^2|\nab \Psi_2|^2\big) +O((Q+a^2)r^{-3})\big( |\psi_1|_\SS^2+Q^2|\psi_2|_{\SS}^2\big) \\
&+\D_\mu \mbox{Bdr}_{\mathscr{N}_{coupl}^{(\textbf{Y}, \textbf{w}_Y)}}^\mu
  \end{split}
\eea

\subsubsection{The curvature and lower order terms}

We now compute $ \mathscr{R}^{(\textbf{Y})}[\psi_1, \psi_2]$. As in the derivation of \eqref{eq:mathsc-R-Y} we obtain for the choice \eqref{eq:choice-f-aund-bund}:
 \beaa
 \begin{split}
 \mathscr{R}^{(\textbf{Y})}[\psi_1, \psi_2]&=- \frac{2a^2r\cos\th z h }{(r^2+a^2)|q|^4}  \Re\big(i \nab_\phi\Psi_1\c\ov{\LL^{\bund}\psibo} + Q^2\nab_\phi\Psi_2\c\ov{\LL^\bund\psibt} \big)\\
&-\left(\big(\rhod +\etab\wedge\eta\big)\frac{r^2+a^2}{\De}+\frac{2a^3r\cos\th(\sin\th)^2}{|q|^6}\right) z h  \Re\big( i \nab_{\That}\Psi_1\c\ov{\LL^\bund\psibo}+iQ^2 \nab_{\That}\Psi_2\c\ov{\LL^\bund\psibt}\big)\\
&+\nab_\phi \Big[\frac{2a^2r\cos\th z h }{(r^2+a^2)|q|^4}  \Re\big(i \Psi_1\c\ov{\LL^{\bund}\psibo} + Q^2\Psi_2\c\ov{\LL^\bund\psibt} \big)\Big]\\
&+\nab_{\That} \Big[ \left(\big(\rhod +\etab\wedge\eta\big)\frac{r^2+a^2}{\De}+\frac{2a^3r\cos\th(\sin\th)^2}{|q|^6}\right) z h  \Re\big( i\Psi_1\c\ov{\LL^\bund\psibo}+iQ^2 \Psi_2\c\ov{\LL^\bund\psibt}\big)\Big].
\end{split}
 \eeaa
For functions $z=O(r^{-2})$ and $h=O(r^5)$ we can bound the above by
 \bea\label{eq:RR-Y-SS}
 \begin{split}
| \mathscr{R}^{(\textbf{Y})}[\psi_1, \psi_2]| &\leq \de_4 r^3\Big(|\nab_{\That} \Psi_1|^2+Q^2|\nab_{\That}\Psi_1|^2+r^2\big(|\nab\Psi_1|^2+Q^2|\nab \Psi_2|^2\big)\Big)  \\
&+O(a r^{-3})\big( |\psi_1|_\SS^2+Q^2 |\psi_2|_{\SS}^2\big)\\
 &+\nab_\phi \Big[\frac{2a^2r\cos\th z h }{(r^2+a^2)|q|^4}  \Re\big(i \Psi_1\c\ov{\LL^{\bund}\psibo} + Q^2\Psi_2\c\ov{\LL^\bund\psibt} \big)\Big]\\
&+\nab_{\That} \Big[ \left(\big(\rhod +\etab\wedge\eta\big)\frac{r^2+a^2}{\De}+\frac{2a^3r\cos\th(\sin\th)^2}{|q|^6}\right) z h  \Re\big( i\Psi_1\c\ov{\LL^\bund\psibo}+iQ^2 \Psi_2\c\ov{\LL^\bund\psibt}\big)\Big].
\end{split}
\eea

Finally we have
\beaa
\mathscr{N}_{lot}^{(\textbf{Y}, \textbf{w}_Y)}[\psi_1,\psi_2]&=&  \Re\Big[ \big(\FF^{\aund\bund}\nabla_{\partial_r}\ov{\psiao} +\frac 1 2   w^{\aund\bund} \ov{\psiao}\big)\c N_{1\bund}+8Q^2  \big(\big(\FF^{\aund\bund}\nabla_{\partial_r}\ov{\psiat} +\frac 1 2   w^{\aund\bund} \ov{\psiat}\big)\c N_{2\bund}\Big],
\eeaa
For $\textbf{Y}=O(1) \Rhat $ and $w^{\aund\bund}=O(r^{-1})$, we can bound the above by
\bea\label{eq:NN-lot-Y-SS}
\left| \mathscr{N}_{lot}^{(\textbf{Y}, \textbf{w}_Y)}[\psi_1,\psi_2] \right|&\les& \sum_{\aund=1}^4\int_{\MM(0, \tau)}\Big( \big(|\nab_{\Rhat} \psiao|+r^{-1}|\psiao|\big) |N_{1\aund}| +\big(|\nab_{\Rhat} \psiat|+r^{-1}|\psiat|\big) |N_{2\aund}|\Big).
\eea

\subsubsection{Choice of functions}\label{sec:choices-function-comm}
As in \cite{GKS}, we make the following choice of functions:
\beaa
z=z_0 - \delta_0 z_0^2, \qquad z_0=\frac{\De}{(r^2+a^2)^2}, \qquad h=\frac{(r^2+a^2)^4}{r(r^2-a^2)}, 
\eeaa
for $\delta_0>0$ chosen later to be a small positive constant. 
With these choices we have
\bea
\label{asymptotics-RRtp-new}
\begin{split}
\RRtp^1[z]&=\de_0 f, \\
\RRtp^2[z]& =\frac{4r}{(r^2+a^2)^2} \big(1+O(r^{-2} \de_0) \big), \\
\RRtp^3[z]  &=\frac{4r}{(r^2+a^2)^3} \big(1+O(r^{-2} \de_0) \big), \\
\RRtp^4[z] &= f( 1-2\de_0 z_0), 
\end{split}
\eea
where $f=\pr_r z_0=-\frac{2\TT}{ (r^2+a^2)^3}$. Also,
   \bea\lab{eq:explicit-AA-}
\begin{split}
 \AA^1[z]&= \de_0\De^2\left( \frac{2(3Mr^4-4(a^2+Q^2)r^3+Ma^4)}{r^2(r^2+a^2)(r^2-a^2)^2}+ O(\de_0 r^{-5} )\right),\\
 \AA^2[z]&= \frac{2\De^2}{r^2+a^2} \left(\frac{8a^2r}{(r^2-a^2)^2}+O(\de_0 r^{-3})\right), \\
 \AA^3[z]&= \De^2\left( \frac{8r}{(r^2+a^2)(r^2-a^2)^2}+ O(\de_0 r^{-7} )\right), \\
  \AA^4[z]&= \De^2 \left(  \frac{2(3Mr^4-4(a^2+Q^2)r^3+Ma^4)}{r^2(r^2+a^2)(r^2-a^2)^2}+ O(\de_0 r^{-5} )\right).
 \end{split}
\eea 
Note that    we can write 
\bea
\lab{simplifications-AA}
\begin{split}
\AA^1[z]&= \de_0 \AA\big(1+O(r^{-1} \de_0)\big), \quad \,\AA^4[z]=\AA\big(1+O(r^{-1}\de_0)\big), \\
 \AA^2[z]&=\widetilde{\AA} \big(2a^2+O(\de_0 )\big), \qquad\,\,\,\,   \AA^3[z]= \widetilde{\AA}\big(1+O(r^{-2}\de_0 )\big),
 \end{split}
\eea
where 
\beaa
\AA&=&   \frac{2\De^2}{r^2(r^2+a^2)(r^2-a^2)^2}(3Mr^4-4(a^2+Q^2)r^3+Ma^4), \qquad \widetilde{\AA}= \frac{8\De^2r}{(r^2+a^2)(r^2-a^2)^2},
\eeaa
are positive coefficients.

The coefficients $\VVa_i$ for $i=1,2$ verify
\bea\label{expressions-VV-asymp}
\VV^1_i = \de_0\VV_i, \qquad \VV_i^2 = O(r^{-1}), \qquad \VV_i^3 = O(r^{-3}), \qquad \VV_i^4=\VV_i+O(\de_0 r^{-3}), 
\eea
where $\VV_1$ and $\VV_2$ are the scalar functions given by \eqref{eq:VV-1} and \eqref{eq:VV-2}.

Using the above values we deduce as in \cite{GKS} from the definition \eqref{eq:definition-Psi1-Psi2}
\bea\label{eq:Psiz-Explicit}
\begin{split}
\Psi_1&=& f \big(\de_0  \SS_1\psi_1+ (1+O(r^{-2} \de_0)) \OO\psi_1\big)+ \frac{4ar}{(r^2+a^2)^2}  \nab_{\That} \nab_Z \psi_1   \big(1+O(r^{-2} \de_0) \big)\\
\Psi_2&=& f \big(\de_0  \SS_1\psi_2+ (1+O(r^{-2} \de_0)) \OO\psi_2\big)+ \frac{4ar}{(r^2+a^2)^2}  \nab_{\That} \nab_Z \psi_2   \big(1+O(r^{-2} \de_0) \big).
\end{split}
\eea

Defining
\bea
\lab{eq:remark:choiceofLL-weak}
 \LL^{1} =  \de_0, \qquad    \LL^{2} =0, \qquad \LL^3=1, \qquad \LL^4=1-2\delta_0 z_0,
\eea
we deduce
 \bea
 \widetilde{P}&=&  \frac 1 2  h  L^{\a\b} \Re\big( \Db_\a \Psi_1\c    \Db_\b \ov{\Psi_1}+8Q^2  \Db_\a \Psi_2\c    \Db_\b \ov{\Psi_2} \big) \nn \\
 &=& \frac 1 2  h   \Big[\de_0 \big|   \nab_T  \Psi_1 \big|^2 + a^2\big|  \nab_Z \Psi_1 \big|^2 + \big(1-2\delta_0 z_0 \big)O^{\a\b} \Db_\a \Psi_1 \Db_\b   \ov{\Psi_1} \nn\\
 &&+8Q^2  \big(\de_0  \big|   \nab_T  \Psi_2 \big|^2 + a^2\big|  \nab_Z \Psi_2 \big|^2 + \big(1-2\delta_0 z_0 \big)O^{\a\b} \Db_\a \Psi_2 \Db_\b   \ov{\Psi_2}\big) \Big]. \label{eq:expression-widetilde-P}
 \eea
 Using the commutators in section \ref{section:preliminaries-energy-commuted}, and computing
 \beaa
 (\LL^{4} \RRtp^{\aund}-  \LL^{\aund} \RRtp^{4} ) \SS_\aund \ov{\psi_2}&=&  (\LL^{4} \RRtp^{1}-  \LL^{1} \RRtp^{4} ) \SS_1 \ov{\psi_2}+O(a r^{-1})\dk^{\leq 2}\psi_2=O(a r^{-1})\dk^{\leq 2}\psi_2
 \eeaa
we have
 \bea\label{eq:expression-Plot}
  P_{lot}    &=&  O(a r^{-1})\big(|\dk^{\leq 2}\psi_1|^2+Q^2 |\dk^{\leq2} \psi_2|^2\big).
 \eea

\subsection{The quadratic forms $I$, $J$, $K$}\label{sec:quadratic=forms}

We recall here the following integration by parts Lemma, proved in \cite{GKS}.
\begin{lemma}[Lemma 8.2.3 in \cite{GKS}]\label{Lemma:integrationbypartsSS_3SS_4} 
For any function $H=H(r)$, the following identities hold true:
\beaa
H \Re\big(\OO(\psi) \c \SS_1 \ov{\psi}\big)  &=&H|q|^2| \nab \nab_T\psi|^2  -O(ar^{-2})H|\dk^{\leq 2}\psi|^2 +|q|^2 \Ddot_\b\Re (H|q|^{-2}O^{\a\b}\Ddot_\a \psi \c \SS_1 \ov{\psi} )\\
&& +\partial_t (HB(\psi)), 
 \eeaa
where $B(\psi)$ denotes quadratic expressions in $\psi$ and its derivatives which satisfies the bound
  \beaa
|B(\psi)| &\les& |(\nab_T, r\nab)^{\leq 1}\psi||(\nab_T, r\nab)^{\leq 2}\psi|.
\eeaa
 \end{lemma}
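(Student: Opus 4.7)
The plan is to use the intrinsic representation of the modified Laplacian given in \eqref{eq:def-OO}, namely $\OO(\psi)=|q|^2\Db_\b(|q|^{-2}O^{\a\b}\Db_\a\psi)$, and then integrate by parts. Writing
\beaa
H\Re\big(\OO(\psi)\c\SS_1\ov{\psi}\big)&=&H|q|^2\,\Re\big(\Db_\b(|q|^{-2}O^{\a\b}\Db_\a\psi)\c\nab_T\nab_T\ov{\psi}\big),
\eeaa
I would apply the Leibniz rule in $\Db_\b$ (noting that $H=H(r)$ commutes with $\nab_T$ and $\nab_Z$, and that $\Db_\b H$ only contributes terms which can be absorbed by the $O(a r^{-2})H|\dk^{\leq 2}\psi|^2$ error), so that the boundary contribution is exactly $|q|^2\Ddot_\b\Re\big(H|q|^{-2}O^{\a\b}\Ddot_\a\psi\c\SS_1\ov{\psi}\big)$, while the remaining bulk term is
\beaa
-H\,O^{\a\b}\Re\big(\Db_\a\psi\c\Db_\b(\nab_T\nab_T\ov{\psi})\big).
\eeaa

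Next I would commute $\Db_\b$ past the two copies of $\nab_T$. Since $T$ is Killing and the Kerr--Newman horizontal structure only fails to commute with $T$ through the terms recorded in Lemma \ref{lemma:commutation-nabT-nabZ} (which contribute $O(ar^{-2})$ times lower order derivatives), each commutator $[\Db_\b,\nab_T]$ produces a term which, upon pairing with $\Db_\a\psi$, yields an expression controlled pointwise by $O(ar^{-2})H|\dk^{\leq 2}\psi|^2$. After this commutation the bulk becomes
\beaa
-H\,O^{\a\b}\Re\big(\Db_\a\psi\c\nab_T\nab_T\Db_\b\ov{\psi}\big)+O(ar^{-2})H|\dk^{\leq 2}\psi|^2.
\eeaa

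Then I would integrate by parts twice in the timelike Killing direction $T$. Since $T$ is Killing, $\D_\mu T^\mu=0$ and $\nab_T$ integrates by parts cleanly, producing two $\partial_t$ divergence contributions of the schematic form $\partial_t(H\,B(\psi))$ where $B(\psi)$ is bilinear in $\psi$ and its first and second $(\nab_T,r\nab)$-derivatives, compatible with the stated bound on $B(\psi)$. The surviving bulk term becomes
\beaa
H\,O^{\a\b}\Re\big(\nab_T\Db_\a\psi\c\nab_T\Db_\b\ov{\psi}\big)=H|q|^2|\nab\nab_T\psi|^2,
\eeaa
using the identity $O^{\a\b}=|q|^2(e_1^\a e_1^\b+e_2^\a e_2^\b)$ recalled after \eqref{definition-RR-tensor}, together with $[\nab_T,\nab]=O(ar^{-2})\dkb^{\leq 1}$ to exchange $\nab_T\Db_a\psi$ for $\Db_a\nab_T\psi$ modulo absorbable errors.

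The main obstacle will be the careful bookkeeping of the commutators $[\Db_\b,\nab_T]$ on a horizontal tensor: the non-integrability of the horizontal structure introduces Ricci coefficients rather than pure Killing identities, and one must verify that every such correction is indeed at the level of $O(a r^{-2})\dk^{\leq 2}\psi$ and that the correction terms which are total $t$-derivatives can be collected into the $\partial_t(HB(\psi))$ piece, with $B(\psi)$ obeying the claimed pointwise bound by $|(\nab_T,r\nab)^{\leq 1}\psi||(\nab_T,r\nab)^{\leq 2}\psi|$. Once this accounting is done, the three contributions (principal bulk $H|q|^2|\nab\nab_T\psi|^2$, absorbable error $O(ar^{-2})H|\dk^{\leq 2}\psi|^2$, and the spacetime divergence plus $\partial_t$ terms) combine exactly into the identity in the lemma.
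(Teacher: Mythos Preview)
Your approach is correct and is the standard integration-by-parts argument (the lemma is quoted from \cite{GKS} and not re-proved in this paper). Two minor corrections: since $O^{\a\b}=|q|^2(e_1^\a e_1^\b+e_2^\a e_2^\b)$ has no $\partial_r$ component and $H=H(r)$, the term $O^{\a\b}\Db_\b H$ vanishes \emph{identically} rather than merely being absorbable; and only \emph{one} integration by parts in $T$ is needed to pass from $-HO^{\a\b}\Re(\Db_\a\psi\c\nab_T\nab_T\Db_\b\ov{\psi})$ to the positive bulk $H|q|^2|\nab_T\nab\psi|^2$ plus a single $\partial_t$-boundary term, after which a final commutation $[\nab_T,\nab_a]=O(ar^{-3})$ converts this to $H|q|^2|\nab\nab_T\psi|^2$ modulo the stated error.
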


 We use the above to obtain the following general computation, partially appeared as Lemma 8.2.4 in \cite{GKS}.
 
 \begin{lemma}\label{general-computations-for-BB} 
 Let $\psiao=\SS_\aund\psi_1$, $\psiat=\SS_\aund \psi_2 - 3\delta_{\aund 4} |q|^2 \Kh \psi_2$ as defined in \eqref{eq:definition-psiao-psiat}-\eqref{eq:definition-hat-psi}, and let $\mathcal{Y}^\aund$ be some coefficients only depending on $r$, such that $\mathcal{Y}^{1} =\de_0 \mathcal{Y}$, $\mathcal{Y}^{4}=\mathcal{Y}$.
 Then for $\LL^{\aund}$ given by \eqref{eq:remark:choiceofLL-weak}, 
 we have
 \beaa
\Re\Big(\big(\mathcal{Y}^{\aund}\psiao\big)\c \big( \LL^{\aund}\ov{\psiao}\big)\Big)&=& \mathcal{Y}\Big(\de_0^2 |\SS_1\psi_1|^2+ (1-2\de_0 z_0)|\OO\psi_1|^2 +2\de_0(1-\de_0 z_0)|q|^2| \nab \nab_T\psi_1|^2\Big)\\
&& -O(a)(|\mathcal{Y}|+|\mathcal{Y}^2|+|\mathcal{Y}^3|)(\dk^{\leq 2}\psi_1)^2+\text{Bdr}[\psi_1],\\
\Re\Big(\big(\mathcal{Y}^{\aund}\psiat\big)\c \big( \LL^{\aund}\ov{\psiat}\big)\Big)&=& \mathcal{Y}\Big(\de_0^2 |\SS_1\psi_2|^2+(1-2\de_0 z_0) |\OO\psi_2|^2 +2\de_0(1-\de_0 z_0)|q|^2| \nab \nab_T\psi_2|^2\Big)\\
&&+ 3\mathcal{Y} |q|^2 \Kh \big(\de_0|\nab_T \psi_2|^2 + (1-2\de_0 z_0) |q|^2 |\nab \psi_2|^2 \big)\\
&& -O(a)(|\mathcal{Y}|+|\mathcal{Y}^2|+|\mathcal{Y}^3|)(\dk^{\leq 2}\psi_2)^2+\text{Bdr}[\psi_2],
\eeaa
where the boundary term is given by
\beaa
\text{Bdr}[\psi_1]&=&\partial_t \Big(O(\de_0) \mathcal{Y} B(\psi_1)\Big)+|q|^2 \Ddot_\b \Re\Big(2\de_0(1-\de_0 z_0)  |q|^{-2}O^{\a\b}\Ddot_\a \psi_1 \c \mathcal{Y} \SS_1 \ov{\psi_1} \Big), \\
\text{Bdr}[\psi_2]&=&\partial_t \Big(O(\de_0) \mathcal{Y} B(\psi_2) \Big)+|q|^2 \Ddot_\b \Re\Big((1-\de_0 z_0)  |q|^{-2}O^{\a\b}\Ddot_\a \psi_2 \c (2\de_0\mathcal{Y} \SS_1 \ov{\psi_2}-3\mathcal{Y} |q|^2 \Kh\ov{\psi_2} )\Big).
\eeaa
\end{lemma}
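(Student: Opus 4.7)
The strategy is to fully expand the double sum implicit in the expression $\Re\big((\mathcal{Y}^{\aund}\psiao)\c(\LL^{\aund}\ov{\psiao})\big)$, separating diagonal contributions (which give the $|\SS_1\psi_1|^2$ and $|\OO\psi_1|^2$ terms directly) from cross contributions (which, after applying Lemma \ref{Lemma:integrationbypartsSS_3SS_4}, produce the $|q|^2|\nabla\nabla_T\psi_1|^2$ term). Concretely, the double sum should be read as $\sum_{\aund,\bund}\mathcal{Y}^\aund\LL^\bund\,\psiao\c\ov{\psibo}$, and the expansion exploits the fact that $\LL^2=0$ and $\LL^3=1$, $\mathcal{Y}^2=O(1)$ (hidden in the error $O(|\mathcal{Y}^2|)$), $\mathcal{Y}^3=O(1)$ with $\SS_3=a^2\nab_Z\nab_Z$ bringing an extra $a^2$ factor.

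The plan is as follows. First, I would write out the $16$ terms in $\sum_{\aund,\bund}\mathcal{Y}^\aund\LL^\bund\psiao\c\ov{\psibo}$ and immediately drop all terms with $\bund=2$ (since $\LL^2=0$), as well as those with $\aund=2$ (since they are purely $O(a^2)$ through $\SS_2$ and absorbed in the $O(a)|\mathcal{Y}^2||\dk^{\leq 2}\psi_1|^2$ error). Next, I would retain as diagonal contributions only the pairs $(\aund,\bund)=(1,1)$, $(3,3)$, $(4,4)$: the $(1,1)$ pair produces $\delta_0^2\mathcal{Y}|\SS_1\psi_1|^2$, the $(4,4)$ pair produces $(1-2\delta_0 z_0)\mathcal{Y}|\OO\psi_1|^2$, while the $(3,3)$ pair gives $a^4\mathcal{Y}^3|\nabla_Z\nabla_Z\psi_1|^2$, which is absorbed in the error $O(a)|\mathcal{Y}^3||\dk^{\leq 2}\psi_1|^2$. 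Similarly all off-diagonal pairs containing a $3$-index are bounded by $O(a)(|\mathcal{Y}|+|\mathcal{Y}^3|)|\dk^{\leq 2}\psi_1|^2$ by Cauchy-Schwarz using the $a^2$ factor in $\SS_3$.

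The crucial off-diagonal cross terms are $(\aund,\bund)=(1,4)$ and $(4,1)$: their sum yields $2\delta_0(1-\delta_0 z_0)\mathcal{Y}\Re(\OO\psi_1\c\SS_1\ov{\psi_1})$. Here I would invoke Lemma \ref{Lemma:integrationbypartsSS_3SS_4} with $H=\mathcal{Y}$ to rewrite $\Re(\OO\psi_1\c\SS_1\ov{\psi_1})=|q|^2|\nab\nab_T\psi_1|^2+O(ar^{-2})|\dk^{\leq 2}\psi_1|^2+|q|^2\Ddot_\b\Re(|q|^{-2}O^{\a\b}\Ddot_\a\psi_1\c\SS_1\ov{\psi_1})+\partial_t B(\psi_1)$; multiplying by $2\delta_0(1-\delta_0 z_0)\mathcal{Y}$ yields precisely the third stated principal term as well as the stated boundary term $\text{Bdr}[\psi_1]$, while the $O(a)$ contribution is absorbed into the global error.

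For the $\psi_2$ computation, I would first repeat verbatim the argument above with $\SS_\aund\psi_2$ in place of $\SS_\aund\psi_1$, producing the same structural identity for $\sum_{\aund,\bund}\mathcal{Y}^\aund\LL^\bund\,\SS_\aund\psi_2\c\ov{\SS_\bund\psi_2}$. Then I would account for the Gauss-curvature subtraction in the definition $\psiat=\SS_\aund\psi_2-3\delta_{\aund 4}|q|^2\Kh\psi_2$: writing out the bilinear expansion, all cross terms between $\SS_\aund\psi_2$ and $-3|q|^2\Kh\psi_2$ contribute only when $\aund=4$ or $\bund=4$, and the purely quadratic Gauss-curvature term $9|q|^4\Kh^2\LL^4\mathcal{Y}^4|\psi_2|^2$ is of lower order. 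The cross term $-3\mathcal{Y}^4|q|^2\Kh\Re(\LL^{\bund}\SS_\bund\psi_2\c\ov{\psi_2})$, expanded using $\LL^1=\delta_0$, $\LL^4=1-2\delta_0 z_0$, $\LL^3=1$, together with the symmetric term, produces the extra piece $3\mathcal{Y}|q|^2\Kh(\delta_0|\nab_T\psi_2|^2+(1-2\delta_0 z_0)|q|^2|\nab\psi_2|^2)$ after one integration by parts in $\SS_1$ and in $\OO$ (the latter using Lemma 3.7.4 of \cite{GKS} to trade $\OO$ for $|q|^2\lap_k+$lot), with the $\SS_3$-piece again absorbed in the $a$-error. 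The boundary contribution from these integrations assembles into the stated $\text{Bdr}[\psi_2]$.

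The main technical obstacle is bookkeeping: keeping track of which off-diagonal pair produces the $|q|^2|\nab\nab_T\psi|^2$ principal term (a single cross term between the Carter operator $\OO$ and the time symmetry $\SS_1$) versus which pairs generate only $O(a)$ or $O(a^2)$ contributions, and in the $\psi_2$ case, correctly distributing the $|q|^2\Kh\psi_2$ correction to generate both the principal $3\mathcal{Y}|q|^2\Kh$ term and the correct boundary form in $\text{Bdr}[\psi_2]$ without miscounting factors of $\delta_0$ or $(1-2\delta_0 z_0)$.
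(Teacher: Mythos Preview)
Your proposal is correct and follows essentially the same approach as the paper: expand the product of the two sums, extract the $(1,1)$ and $(4,4)$ diagonal terms, apply Lemma \ref{Lemma:integrationbypartsSS_3SS_4} to the $(1,4)+(4,1)$ cross term to produce the $|q|^2|\nab\nab_T\psi|^2$ contribution, absorb all $\SS_2,\SS_3$ contributions as $O(a)$ errors, and for $\psi_2$ integrate by parts the Gauss-curvature cross terms against $\SS_1$ and $\OO$. One small slip: $\SS_2=a\nab_T\nab_Z$ is $O(a)$, not $O(a^2)$, but this is harmless for the absorption argument.
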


\begin{proof} We prove the second identity.
By writing
\beaa
 \big(\mathcal{Y}^{\aund}\psiat\big)&=&\de_0 \mathcal{Y}( \SS_1\psi_2)+ \mathcal{Y}( \OO\psi_2-3|q|^2 \Kh \psi_2)+ \mathcal{Y}^2 (\SS_2\psi_2)+ \mathcal{Y}^3 (\SS_3\psi_2 ) \\
 \big( \LL^{\aund}\psiat\big)&=& \de_0 \SS_1\psi_2+(1-2\de_0 z_0) \OO\psi_2 +  \SS_3\psi_2
\eeaa
we obtain
\beaa
\Re\Big(\big(\mathcal{Y}^{\aund}\psiat\big)\c \big( \LL^{\aund}\ov{\psiat}\big)\Big)&=&\de_0^2 \mathcal{Y} |\SS_1\psi_2|^2+(1-2\de_0 z_0)\mathcal{Y} |\OO\psi_2|^2+\mathcal{Y}^3 |\SS_3\psi_2|^2+(\mathcal{Y}+\mathcal{Y}^3)\de_0 \Re(\SS_1\psi_2 \c \SS_3\ov{\psi_2}) \\
&&+2\de_0(1-\de_0 z_0) \mathcal{Y}\Re(\SS_1\psi_2 \c \OO \ov{\psi_2} )+(\mathcal{Y}+\mathcal{Y}^3) \Re(\SS_3 \psi_2 \c \OO \ov{\psi_2}) \\
&&-3\mathcal{Y} |q|^2 \Kh \Re\Big( \psi_2 \c \big(\de_0 \SS_1\ov{\psi_2}+(1-2\de_0 z_0) \OO\ov{\psi_2} +  \SS_3\ov{\psi_2} \big)\Big) \\
&& +\Re\Big(( \mathcal{Y}^2 \SS_2\psi_2 )\c ( \de_0 \SS_1\ov{\psi_2}+ (1-2\de_0 z_0)\OO\ov{\psi_2} +  \SS_3\ov{\psi_2} )\Big).
\eeaa
Since $\SS_1\psi, \OO\psi=O(1)\dk^{\leq 2}\psi$, $\SS_2\psi=O(a)\dk^{\leq 2}\psi$ and $\SS_3\psi=O(a^2)\dk^{\leq 2}\psi$, we infer
\beaa
\Re\Big(\big(\mathcal{Y}^{\aund}\psiat\big)\c \big( \LL^{\aund}\ov{\psiat}\big)\Big)&=&\de_0^2 \mathcal{Y} |\SS_1\psi_2|^2+(1-2\de_0 z_0)\mathcal{Y} |\OO\psi_2|^2 +2\de_0(1-\de_0 z_0) \mathcal{Y}\Re(\SS_1\psi_2 \c \OO \ov{\psi_2} ) \\
&&-3\mathcal{Y} |q|^2 \Kh\Re\Big( \psi_2 \c \big(\de_0 \SS_1\ov{\psi_2}+(1-2\de_0 z_0) \OO\ov{\psi_2} \big)\Big) \\
&&-O(a)(|\mathcal{Y}|+|\mathcal{Y}^2|+|\mathcal{Y}^3|)|\dk^{\leq 2}\psi_2|^2.
\eeaa
Using Lemma \ref{Lemma:integrationbypartsSS_3SS_4} and a similar integration by parts for the second line, we obtain the desired identity. 
\end{proof}

We can finally use the above to write the expressions for the quadratic forms $I$, $J$, $K$. For the quadratic form $I$ we can apply an appropriate extension of Lemma \ref{general-computations-for-BB}  for $\nab_r\psiao$, $\nab_2\psiat$ with $\mathcal{Y}^\aund=\AA^\aund$ according to \eqref{simplifications-AA}.
We have
 \bea\lab{eq:computationofthequadraticformIforchap8:fspoighs}
\begin{split}
 I =& {\AA}\big(1+O(r^{-1} \de_0)\big)\Big[\de_0^2 |\nab_r\SS_1\psi_1|^2+ (1-2\de_0 z_0)|\nab_r\OO\psi_1|^2 +2\de_0(1-\de_0 z_0)|q|^2| \nab \nab_T\nab_r\psi_1|^2 \\
 &+8Q^2\big[\de_0^2 |\nab_r\SS_1\psi_2|^2+(1-2\de_0 z_0) |\nab_r\OO\psi_2|^2 +2\de_0(1-\de_0 z_0)|q|^2| \nab \nab_T\nab_r\psi_2|^2\\
 &+3 \big(\de_0|\nab_r \nab_T \psi_2|^2 +(1-2\de_0 z_0) |q|^2 |\nab_r \nab \psi_2|^2 \big) \big]\Big]\\
& -O(a)(|\AA|+|\AAt|)  \big(1+O(r^{-1} \de_0)\big)\Big(|\nab_r\dk^{\leq 2}\psi_1|^2+r^{-2}|\dk^{\leq 2}\psi_1|^2+Q^2\big(|\nab_r\dk^{\leq 2}\psi_2|^2+r^{-2}|\dk^{\leq 2}\psi_2|^2\big)\Big)\\
& +\text{Bdr}[\psi_1, \psi_2]_I,
\end{split}
\eea
where the boundary term is given by
\beaa
\text{Bdr}[\psi_1, \psi_2]_I&=&\partial_t \Big[O(\de_0)\AA \big( B(\nab_r\psi_1)+Q^2 B(\nab_r \psi_2)\big)\Big] \\
&&+|q|^2 \Ddot_\b\Re\Big[2\de_0(1-\de_0 z_0)  |q|^{-2}O^{\a\b}\Ddot_\a \psi_1 \c \mathcal{A} \SS_1 \ov{\psi_1}\\
&&+8Q^2(1-\de_0 z_0)  |q|^{-2}O^{\a\b}\Ddot_\a \psi_2 \c (2\de_0\mathcal{A} \SS_1 \ov{\psi_2}-3\mathcal{A} |q|^2 \Kh\ov{\psi_2} ) \Big].
\eeaa

For the quadratic form $J$ we apply Lemma \ref{general-computations-for-BB} with the expressions of $\VVa_1$, $\VVa_2$ given by \eqref{expressions-VV-asymp}. We obtain
 \bea
 \begin{split}
J =& \big( \VV_1 +O( \de_0   r^{-3}) \big)\Big(\de_0^2 |\SS_1\psi_1|^2+(1-2\de_0 z_0) |\OO\psi_1|^2 +2\de_0(1-\de_0 z_0)|q|^2| \nab \nab_T\psi_1|^2\Big) \\
&+8Q^2\big( \VV_2 +O( \de_0   r^{-3}) \big)\Big[\de_0^2 |\SS_1\psi_2|^2+ (1-2\de_0 z_0)|\OO\psi_2|^2 +2\de_0(1-\de_0 z_0)|q|^2| \nab \nab_T\psi_2|^2\\
&+ 3\big(\de_0 |\nab_T \psi_2|^2 + (1-2\de_0 z_0) |q|^2 |\nab \psi_2|^2\big)\Big] \\
& - O(ar^{-1})\big(|\dk^{\leq 2}\psi_1|^2+ Q^2|\dk^{\leq2} \psi_2|^2\big)+\text{Bdr}[\psi_1, \psi_2]_J,
\end{split}
\eea
 with boundary term
\beaa
\text{Bdr}[\psi_1, \psi_2]_J &=&\partial_t \Big[\de_0(\VV_1 +O(   r^{-3}))   B(\psi_1)+\de_0(\VV_2 +O(   r^{-3}))  r^2   B(\psi_2)\Big] \\
&&+|q|^2 \Ddot_\b\Re\Big[2\de_0(1-\de_0 z_0)  |q|^{-2}O^{\a\b}\Ddot_\a \psi_1 \c \VV_1 \SS_1 \ov{\psi_1}\\
&&+8Q^2(1-\de_0 z_0)  |q|^{-2}O^{\a\b}\Ddot_\a \psi_2 \c (2\de_0\VV_2 \SS_1 \ov{\psi_2}-3\VV_2 |q|^2 \Kh\ov{\psi_2} ) \Big].
\eeaa

Finally, for  $J^\aund :=v^{\aund} \pr_ r$, with $v^2=v^3=0$, $v^1=\de_0 v$ and $v^4=v$ for some given function $v=v(r)$, the term $K$ is given by, see also Lemma 8.2.6 in \cite{GKS},
\beaa
 K&=&\frac{|q|^2}{2} v \Re\Big( \de_0^2 \nab_r \SS_1\psi_1 \c \SS_1\ov{\psi_1}+ (1-2\de_0z_0)\nab_r \OO\psi_1 \c \OO\ov{\psi_1}+2\de_0(1-\de_0z_0)   |q|^2 \nab\nab_T\nab_r \psi_1 \c \nab \nab_T \ov{\psi_1}\Big)\\
&&+\frac{|q|^2}{4}v'\Big( \de_0^2  |\SS_1\psi_1|^2+(1-2\de_0z_0) |\OO\psi_1|^2 +2\de_0(1-\de_0z_0) |q|^2| \nab \nab_T\psi_1|^2\Big) \\
&&+8Q^2 \Big[\frac{|q|^2}{2} v \Re \Big( \de_0^2 \nab_r \SS_1\psi_2 \c \SS_1\ov{\psi_2}+(1-2\de_0z_0) \nab_r \OO\psi_2 \c \OO\ov{\psi_2}+2\de_0 (1-\de_0z_0)  |q|^2 \nab\nab_T\nab_r \psi_2 \c \nab \nab_T \ov{\psi_2}\Big)\\
&&+\frac{|q|^2}{4}v'\Big( \de_0^2  |\SS_1\psi_2|^2+(1-2\de_0z_0) |\OO\psi_2|^2 +2\de_0(1-\de_0z_0) |q|^2| \nab \nab_T\psi_2|^2\Big)\\
&&+ 3 \Re\big(\de_0 v \nab_r \nab_T \psi_2 \c \nab_T \ov{\psi_2} + v(1-2\de_0 z_0) |q|^2 \nab_r \nab \psi_2 \c \nab \ov{\psi_2}+\de_0 v'|\nab_T \psi_2|^2 +v' (1-2\de_0 z_0) |q|^2 |\nab \psi_2|^2 \big)\Big] \\
&&- vO(a r^{\frac{5}{2}})|\nab_r\dk^{\leq 2}\psi_1|^2 - O(a r^{\frac{3}{2}}) v|\dk^{\leq 2}\psi_1|^2 - O(a r^2)v' |\dk^{\leq 2}\psi_1|^2\\
&&- vO(a r^{\frac{5}{2}})|\nab_r\dk^{\leq 2}\psi_2|^2 - O(a r^{\frac{3}{2}}) v|\dk^{\leq 2}\psi_2|^2 - O(a r^2)v' |\dk^{\leq 2}\psi_2|^2+\text{Bdr}[\psi_1, \psi_2]_K, 
\eeaa
where we denoted
\beaa
v'^{\aund}:=\pr_r v^\aund+ \frac{2r}{|q|^2} v^\aund,
\eeaa 
and with boundary term 
\beaa
\text{Bdr}[\psi_1, \psi_2]_K &=& \partial_t \Big[vr^2 \big(B(\nab_r\psi_1) + Q^2 B(\nab_r \psi_2)\big)+\de_0 v'r^2 \big(B(\psi_1) + Q^2 B( \psi_2)\big) \Big]\\
&&+\frac{|q|^4}{4} \Ddot_\b\Re\Big[2\de_0(1-\de_0 z_0)  v|q|^{-2}O^{\a\b}\big(\Ddot_\a\psi_1 \c \SS_1\nab_r \ov{\psi_1}+\Ddot_\a\nab_r\psi_1 \c \SS_1\ov{\psi_1}\\
&& +8Q^2\Ddot_\a\psi_2 \c \SS_1\nab_r \ov{\psi_2}+8Q^2\Ddot_\a\nab_r\psi_2 \c \SS_1\ov{\psi_2} \big)\\
&&+2(1-\de_0 z_0)  v' |q|^{-2}O^{\a\b}\big(\de_0\Ddot_\a \psi_1 \c \SS_1\ov{\psi_1}+\Ddot_\a \psi_2 \c(\de_0\SS_1\ov{\psi_2}-3|q|^2\Kh \ov{\psi_2})\big) \Big].
\eeaa

\subsection{Bound on the bulk}

Here we prove positive bounds on $\EE^{(\textbf{Y}, \textbf{w}_\Y, \textbf{J})}[\psi_1, \psi_2]-\Db_\a \BB^\a$. We first summarize the following expression from \eqref{eq:bulk-commuted-first}:
\beaa
|q|^2\EE^{(\textbf{Y}, \textbf{w}_Y, \textbf{J})}[\psi_1, \psi_2]-|q|^2\Db_\a \BB^\a&=\widetilde{P}+\Qr_{\SS_1, \OO, \nab \nab_T}[\psi_1,\psi_2]+   \EE_{lot}+\text{Bdr},
\eeaa
where $\widetilde{P}$ is given by \eqref{eq:expression-widetilde-P} and the quadratic   form $\Qr_{\SS_1, \OO, \nab \nab_T}$ is given by
   \beaa
   \Qr_{\SS_1, \OO, \nab \nab_T}[\psi_1, \psi_2]&:=&  \Qr_{\SS_1, \OO, \nab \nab_T}[\psi_1]+8Q^2 \Qr_{\SS_1, \OO, \nab \nab_T}[\psi_2]
   \eeaa
   with
   \beaa
      \Qr_{\SS_1, \OO, \nab \nab_T}[\psi_1]&:=& {\AA}(1+O(r^{-1} \de_0))\Big(\de_0^2 |\nab_r\SS_1\psi_1|^2+(1-2\de_0 z_0) |\nab_r\OO\psi_1|^2 +2\de_0(1-\de_0 z_0)|q|^2| \nab \nab_T\nab_r\psi_1|^2\Big)\\
   && +\frac{|q|^2}{2} v \Re\Big( \de_0^2 \nab_r \SS_1\psi_1 \c \SS_1\ov{\psi_1}+(1-2\de_0 z_0) \nab_r \OO\psi_1 \c \OO\ov{\psi_1}+2\de_0   |q|^2(1-\de_0 z_0) \nab\nab_T\nab_r \psi_1 \c \nab \nab_T \ov{\psi_1}\Big)\\
 &&  +\left( \VV_1+\frac{|q|^2}{4}v' +O( \de_0   r^{-3}) \right)\Big(\de_0^2 |\SS_1\psi_1|^2+(1-2\de_0 z_0) |\OO\psi_1|^2 +2\de_0(1-\de_0 z_0)|q|^2| \nab \nab_T\psi_1|^2\Big), \\
    \Qr_{\SS_1, \OO, \nab \nab_T}[\psi_2]&:=& {\AA}(1+O(r^{-1} \de_0))\Big[\de_0^2 |\nab_r\SS_1\psi_2|^2+(1-2\de_0 z_0) |\nab_r\OO\psi_2|^2 +2\de_0(1-\de_0 z_0)|q|^2| \nab \nab_T\nab_r\psi_2|^2\\
    &&+3 \de_0|\nab_r \nab_T \psi_2|^2 + 3|q|^2(1-2\de_0 z_0) |\nab_r \nab \psi_2|^2  \Big]\\
   && +\frac{|q|^2}{2} v \Re\Big[ \de_0^2 \nab_r \SS_1\psi_2 \c \SS_1\ov{\psi_2}+ (1-2\de_0 z_0)\nab_r \OO\psi_2 \c \OO\ov{\psi_2}+2\de_0 (1-\de_0 z_0)  |q|^2 \nab\nab_T\nab_r \psi_2 \c \nab \nab_T \ov{\psi_2}\\
   &&+3\de_0  \nab_r \nab_T \psi_2 \c \nab_T \ov{\psi_2} +3 (1-2\de_0 z_0) |q|^2 \nab_r \nab \psi_2 \c \nab \ov{\psi_2}\big) \Big] \\
 &&  +\left( \VV_2+\frac{|q|^2}{4}v' +O( \de_0   r^{-3}) \right)\Big[\de_0^2 |\SS_1\psi_2|^2+(1-2\de_0 z_0) |\OO\psi_1|^2 +2\de_0(1-\de_0 z_0) |q|^2| \nab \nab_T\psi_2|^2\\
 &&+ 3\de_0 |\nab_T \psi_2|^2 +3 (1-2\de_0 z_0) |q|^2 |\nab \psi_2|^2 \Big].
   \eeaa
Also,  $\EE_{lot}$ denotes terms that  can be bounded by
   \bea\label{eq:bound-EE-lot}
      \EE_{lot} \geq -O(a)\big( |\nab_{\Rhat}\dk^{\leq 2}\psi_1|^2+Q^2 |\nab_{\Rhat}\dk^{\leq 2}\psi_2|^2+r^{-1} |\dk^{\leq 2}\psi_1|^2+r^{-1} Q^2|\dk^{\leq 2}\psi_2|^2\big).
   \eea
   and the boundary terms are given by
   \beaa
   \text{Bdr}&=& \text{Bdr}[\psi_1,\psi_2]_I+\text{Bdr}[\psi_1, \psi_2]_J+\text{Bdr}[\psi_1, \psi_2]_K.
   \eeaa

\subsubsection{The Poincar\'e and Hardy inequalities}

Here we prove that the quadratic forms $\Qr_{\SS_1, \OO, \nab \nab_T}[\psi_1]$ and $\Qr_{\SS_1, \OO, \nab \nab_T}[\psi_2]$ are positive definite.

 Using the Poincar\'e inequalities of Lemma \ref{lemma:poincareinequalityfornabonSasoidfh:chap6} we bound for $|a| \ll \de_0 M$ and $\delta_0$ sufficiently small,
 \beaa
\int_S \widetilde{P} &\geq&\int_S \frac 1 2  h   \Big[\de_0 \big|   \nab_T  \Psi_1 \big|^2 + a^2\big|  \nab_Z \Psi_1 \big|^2 +\int_S|\Psi_1|^2  -O(a)\int_S\big(r^2|\nab\Psi_1|^2+|\nab_t\Psi_1|^2+r^{-2}|\Psi_1|^2\big)\\
 &&+8Q^2  \big(\de_0 \big|   \nab_T  \Psi_2 \big|^2 + a^2\big|  \nab_Z \Psi_2 \big|^2 + 2\int_S|\Psi_2|^2  -O(a)\int_S\big(r^2|\nab\Psi_2|^2+|\nab_t\Psi_2|^2+r^{-2}|\Psi_2|^2\big)\big) \Big]\\
 &\geq& \int_S   \big( \frac 1 2  h |\Psi_1|^2+8Q^2 h |\Psi_2|^2 \big) -O(ar^7)\int_S\big(|\nab\Psi_1|^2+Q^2 |\nab \Psi_2|^2\big).
 \eeaa
Using the integration by parts Lemma \ref{Lemma:integrationbypartsSS_3SS_4}, we write
  \beaa
 |\Psi_1|^2 &=& f^2\big(\de_0^2| \SS_1\psi_1|^2+|\OO\psi_1|^2+ 2 \de_0| \nab \nab_T\psi_1|^2\big) -O(ar^{-6})|\dk^{\leq 2}\psi_1|^2\\
 &&+\partial_t \big(r^{-6}B(\psi_1)\big)  +|q|^2 \Ddot_\b\Re (O(r^{-6})|q|^{-2}O^{\a\b}\Ddot_\a \psi_1 \c (\SS_1+\SS_3) \ov{\psi_1} ), \\
 |\Psi_2|^2 &=& f^2\big(\de_0^2| \SS_1\psi_2|^2+|\OO\psi_2|^2+ 2 \de_0| \nab \nab_T\psi_2|^2+ 3\de_0 |\nab_T \psi_2|^2 + 3 |q|^2 |\nab \psi_2|^2+9|\psi_2|^2\big) -O(ar^{-6})|\dk^{\leq 2}\psi_2|^2\\
 &&+\partial_t \big(r^{-6}B(\psi_2)\big)  +|q|^2 \Ddot_\b\Re (O(r^{-6})|q|^{-2}O^{\a\b}\Ddot_\a \psi_2 \c (\SS_1+\SS_3) \ov{\psi_2} ).
 \eeaa
We therefore obtain
 \beaa
\int_S \Big(|q|^2\EE^{(\textbf{Y}, \textbf{w}_Y, \textbf{J})}[\psi_1, \psi_2]-|q|^2\Db_\a \BB^\a\Big)&\geq& \int_S\delta\widetilde{P}+\int_S\widetilde{\Qr}_{\SS_1, \OO, \nab \nab_T}[\psi_1,\psi_2]+   \int_S\widetilde{\EE}_{lot}+\int_S\text{Bdr},
\eeaa
where the quadratic form $\widetilde{\Qr}_{\SS_1, \OO, \nab \nab_T}$ is given by
\beaa
\widetilde{\Qr}_{\SS_1, \OO, \nab \nab_T}[\psi_1,\psi_2]&=& \widetilde{\Qr}_{\SS_1, \OO, \nab \nab_T}[\psi_1]+8Q^2\widetilde{\Qr}_{\SS_1, \OO, \nab \nab_T}[\psi_2]
\eeaa
where
   \beaa
  \widetilde{\Qr}_{\SS_1, \OO, \nab \nab_T}[\psi_1]&:=& \Qr_{\SS_1, \OO, \nab \nab_T}[\psi_1]+\frac 1 2(1-\de)h  f^2\big(\de_0^2| \SS_1\psi_1|^2+|\OO\psi_1|^2+ 2 \de_0| \nab \nab_T\psi_1|^2\big)\\
  &=& {\AA}(1+O(r^{-1} \de_0))\Big(\de_0^2 |\nab_r\SS_1\psi_1|^2+ |\nab_r\OO\psi_1|^2 +2\de_0|q|^2| \nab \nab_T\nab_r\psi_1|^2\Big)\\
   && +\frac{|q|^2}{2} v \Re\Big( \de_0^2 \nab_r \SS_1\psi_1 \c \SS_1\ov{\psi_1}+ \nab_r \OO\psi_1 \c \OO\ov{\psi_1}+2\de_0   |q|^2 \nab\nab_T\nab_r \psi_1 \c \nab \nab_T \ov{\psi_1}\Big)\\
 &&  +\left( \VV_1+\frac{|q|^2}{4}v' +\frac 1 2(1-\de)h  f^2+O( \de_0   r^{-3}) \right)\Big[\de_0^2 |\SS_1\psi_1|^2+ |\OO\psi_1|^2\\
 && +2\de_0|q|^2| \nab \nab_T\psi_1|^2\Big], \\
\widetilde{\Qr}_{\SS_1, \OO, \nab \nab_T}[\psi_2]&:=& \Qr_{\SS_1, \OO, \nab \nab_T}[\psi_2]\\
&&+(1-\de)h  f^2\big(\de_0^2| \SS_1\psi_2|^2+|\OO\psi_2|^2+ 2 \de_0| \nab \nab_T\psi_2|^2+ 3\de_0 |\nab_T \psi_2|^2 + 3 |q|^2 |\nab \psi_2|^2+9|\psi_2|^2\big)\\
&=& {\AA}(1+O(r^{-1} \de_0))\Big[\de_0^2 |\nab_r\SS_1\psi_2|^2+|\nab_r\OO\psi_2|^2 +2\de_0|q|^2| \nab \nab_T\nab_r\psi_2|^2\\
    &&+3 \de_0|\nab_r \nab_T \psi_2|^2 + 3|q|^2 |\nab_r \nab \psi_2|^2  \Big]\\
   && +\frac{|q|^2}{2} v \Re\Big[ \de_0^2 \nab_r \SS_1\psi_2 \c \SS_1\ov{\psi_2}+\nab_r \OO\psi_2 \c \OO\ov{\psi_2}+2\de_0   |q|^2 \nab\nab_T\nab_r \psi_2 \c \nab \nab_T \ov{\psi_2}\\
   &&+3\de_0  \nab_r \nab_T \psi_2 \c \nab_T \ov{\psi_2} +3  |q|^2 \nab_r \nab \psi_2 \c \nab \ov{\psi_2}\big) \Big] \\
 &&  +\left( \VV_2+\frac{|q|^2}{4}v' +(1-\de)h  f^2+O( \de_0   r^{-3}) \right)\Big[\de_0^2 |\SS_1\psi_2|^2+ |\OO\psi_1|^2 +2\de_0 |q|^2| \nab \nab_T\psi_2|^2\\
 &&+ 3\de_0 |\nab_T \psi_2|^2 +3  |q|^2 |\nab \psi_2|^2+9|\psi_2|^2 \Big].
   \eeaa
and the lower order term $\widetilde{\EE}_{lot}$ is given by
\beaa
\widetilde{\EE}_{lot}&:=&\EE_{lot}- O(ar^7)(|\nab\Psi_1|^2+Q^2|\nab\Psi_2|^2) -O(ar^{-1})(|\dk^{\leq 2}\psi_1|^2+Q^2|\dk^{\leq2}\psi_2|^2).
\eeaa
We can now apply the Hardy inequality with $v$ constructed as in Lemma \ref{lemma:positivity-hardy} to the terms in $\SS_1$, $\OO$, $\nab\nab_T$, $\nab_T$, $\nab$ to show that for $|a|, |Q| \ll M$ and a universal constant $c_1$ we have
\beaa
\widetilde{\Qr}_{\SS_1, \OO, \nab \nab_T}[\psi_1,\psi_2] &\geq& c_1  \Big[ \big(\big|\nab_{\Rhat}\SS_1\psi_1|^2 + \big|\nab_{\Rhat}\OO\psi_1|^2+|q|^2\big|\nab\nab_T\nab_r\psi_1|^2\big)\\
&&+Q^2\big(\big|\nab_{\Rhat}\SS_1\psi_2|^2 + \big|\nab_{\Rhat}\OO\psi_2|^2+|q|^2\big|\nab\nab_T\nab_r\psi_2|^2+|\nab_r \nab_T\psi_2|^2+|q|^2 |\nab_r \nab \psi_2|^2\big)\\
&& + r^{-1}\big( |\SS_1\psi_1|^2+|\OO\psi_1|^2+|q|^2 |\nab\nab_T\psi_1|^2\big) \\
&&+ r^{-1}Q^2\big( |\SS_1\psi_2|^2+|\OO\psi_2|^2+|q|^2 |\nab\nab_T\psi_2|^2+| \nab_T\psi_2|^2+|q|^2 | \nab \psi_2|^2\big) \Big]\\
&& -O(ar^{-2})\big(|\nab_{\Rhat}\dk^{\leq 2}\psi_1|^2+|\nab_{\Rhat}\dk^{\leq 2}\psi_2|^2+|\dk^{\leq 2}\psi_1|^2+|\dk^{\leq 2}\psi_2|^2\big).
\eeaa

Since  $\SS_2=O(a)\dk^{\leq 2}$ and $\SS_3=O(a^2)\dk^{\leq 2}$, we deduce for the generalized current 
 \bea\label{eq:positivity-first-quadratic-form-3}
 \begin{split}
&\int_S \Big(|q|^2\EE^{(\textbf{Y}, \textbf{w}_Y, \textbf{J})}[\psi_1, \psi_2]-|q|^2\Db_\a \BB^\a\Big)\\
&\geq \delta \int_S\widetilde{P}+c_1 \int_S \Big(  \big( |\nab_{\Rhat} \psi_1|_{\SS}^2 +Q^2 |\nab_{\Rhat} \psi_2|_{\SS}^2\big)+r^{-1}\big( |\psi_1|_{\SS}^2+Q^2 |\psi_2|_{\SS}^2\big) \Big)\\
&-|a|\int_S\big( |\nab_{\Rhat}\dk^{\leq 2}\psi_1|^2+ Q^2|\nab_{\Rhat}\dk^{\leq 2}\psi_2|^2+r^{-1} |\dk^{\leq 2}\psi_1|^2+r^{-1}Q^2 |\dk^{\leq 2}\psi_2|^2\big)\\
&- |a|O(r^7)\int_S(|\nab\Psi_1|^2+Q^2|\nab\Psi_2|^2)+\int_S\text{Bdr}.
\end{split}
\eea
Since from \eqref{eq:expression-widetilde-P}, 
 \beaa
 \widetilde{P} &\geq&   c_0r^5\Big(  \big|   \nab_T  \Psi_1 \big|^2+Q^2|\nab_T \Psi_2|^2 + |q|^2|\nab\Psi_1|^2+Q^2|q|^2 |\nab \Psi_2|^2\Big),
 \eeaa
we infer
 \bea\label{eq:positivity-first-quadratic-form-4}
 \begin{split}
&\int_S \Big(\EE^{(\textbf{Y}, \textbf{w}_Y, \textbf{J})}[\psi_1, \psi_2]-\Db_\a \BB^\a\Big)\\
&\geq  \int_S \Big( r^{-2} \big( |\nab_{\Rhat} \psi_1|_{\SS}^2 +Q^2 |\nab_{\Rhat} \psi_2|_{\SS}^2\big)+r^{-3}\big( |\psi_1|_{\SS}^2+Q^2 |\psi_2|_{\SS}^2\big) \Big)\\
&+\int_S  r^{3} \Big(  \big|   \nab_{\That}  \Psi_1 \big|^2+Q^2|\nab_{\That} \Psi_2|^2 + r^2|\nab\Psi_1|^2+Q^2r^2 |\nab \Psi_2|^2\Big)\\
&-O(a)\int_S\big( r^{-2}|\nab_{\Rhat}\dk^{\leq 2}\psi_1|^2+ r^{-2}Q^2|\nab_{\Rhat}\dk^{\leq 2}\psi_2|^2+r^{-3} |\dk^{\leq 2}\psi_1|^2+r^{-3}Q^2 |\dk^{\leq 2}\psi_2|^2\big)+\int_S\text{Bdr}.
\end{split}
\eea

\subsection{Control on the terms involving the right hand side}\label{sec:rhs-terms-commuted-mor}

We now obtain bounds on the terms $\mathscr{N}_{first}^{(\textbf{Y}, \textbf{w}_\Y)}[\psi_1,\psi_2]$, $\mathscr{N}_{coupl}^{(\textbf{Y}, \textbf{w}_\Y)}[\psi_1,\psi_2]$, $\mathscr{N}_{lot}^{(\textbf{Y}, \textbf{w}_\Y)}[\psi_1,\psi_2]$, $\mathscr{R}^{(\textbf{Y})}[\psi_1, \psi_2]$ with the choices of Section \ref{sec:choices-function-comm}.
Applying the divergence theorem to \eqref{eq:divv-PP-commuted}, we obtain
\beaa
&&\int_{\MM(0, \tau)}\Big[ \big( \EE^{(\textbf{Y}, \textbf{w}_Y, \textbf{J})}[\psi_1, \psi_2]-\Db_\a \BB^\a\big)+\mathscr{N}_{first}^{(\textbf{Y}, \textbf{w}_Y)}[\psi_1,\psi_2]\\
&&+\mathscr{N}_{coupl}^{(\textbf{Y}, \textbf{w}_Y)}[\psi_1,\psi_2]+\mathscr{N}_{lot}^{(\textbf{Y}, \textbf{w}_Y)}[\psi_1,\psi_2]+\mathscr{R}^{(\textbf{Y})}[\psi_1, \psi_2]\Big]\\
&\leq& \int_{\partial \MM(0, \tau)}\big( |\PP_\mu^{(\textbf{Y}, \textbf{w}_Y, \textbf{J})}[\psi_1, \psi_2] N^\mu|+|\BB_\mu N^\mu| \big)
\eeaa
Using the lower bound \eqref{eq:positivity-first-quadratic-form-4} and estimating the remaining terms with \eqref{eq:bound-N-first-SS}, \eqref{eq:NN-coupl-Y-wY-SS}, \eqref{eq:NN-lot-Y-SS}, \eqref{eq:RR-Y-SS} for sufficiently small $\de_2$, $\de_3$, $\de_4$, we conclude
\beaa
 \Mor_{\SS}[\psi_1, \psi_2](0, \tau)&\les& \int_{\partial \MM(0, \tau)}|M_{\SS}[\psi_1,\psi_2]|\\
&&+(|a|+ |Q|)\int_{\MM(\tau, \tau)} \big( r^{-2}|\nab_{\Rhat}\dk^{\leq 2}\psi_1|^2+ r^{-2}Q^2|\nab_{\Rhat}\dk^{\leq 2}\psi_2|^2+r^{-3} |\dk^{\leq 2}\psi_1|^2+r^{-3} Q^2|\dk^{\leq 2}\psi_2|^2\big)\\
&& +\sum_{\aund=1}^4\int_{\MM(0, \tau)}\Big( \big(|\nab_{\Rhat} \psiao|+r^{-1}|\psiao|\big) |N_{1\aund}| +\big(|\nab_{\Rhat} \psiat|+r^{-1}|\psiat|\big) |N_{2\aund}|\Big),
\eeaa
where $M_{\SS}[\psi_1,\psi_2]$ is a quadratic expression involving the boundary terms collected above which clearly satisfies
\beaa
\int_{\partial \MM(0, \tau)}|M_{\SS}[\psi_1,\psi_2]| &\les& \sum_{\aund=1}^4 \EF_0[\psiao, \psiat] \\
 &&+ \big(\EF_0[(\nab_T, r \nab)^{\leq 1} \psi_1, (\nab_T, r\nab)^{\leq 1} \psi_2]\big)^{1/2} \big(\EF_0[(\nab_T, r \nab)^{\leq 2} \psi_1, (\nab_T, r \nab)^{\leq 2} \psi_2] \big)^{1/2}. 
\eeaa
Finally, since we have the bound
\beaa
\int_{\MM(0, \tau)} \big( r^{-2}|\nab_{\Rhat}\dk^{\leq 2}\psi_1|^2+ r^{-2}Q^2|\nab_{\Rhat}\dk^{\leq 2}\psi_2|^2+r^{-3} |\dk^{\leq 2}\psi_1|^2+r^{-3} Q^2|\dk^{\leq 2}\psi_2|^2\big)&\les \Mor^2[\psi_1, \psi_2](\tau_1,\tau_2)
\eeaa
we obtain the proof of Proposition \ref{prop:morawetz-higher-order}.

\section{Commuted energy estimates for the model system}\label{sec:commuted-energy-estimates}

The goal of this section is to prove Proposition \ref{THM:HIGHERDERIVS-MORAWETZ-CHP3}. We first prove it for the particular case $s=2$, and then argue by iteration from $s=2$ to recover higher order derivatives.

Observe that by combining Proposition \ref{proposition:Morawetz1-step1}, equation \eqref{eq:conditional-mor-par1-1-II} and Proposition \ref{prop:energy-estimates-conditional}, equation \eqref{eq:energy-estimates-conditional} multiplied by a large constant $\Lambda$,
\bea
\begin{split}
&\Lambda \big( E[\psi_1, \psi_2](\tau)+F_{\HH^+}[\psi_1, \psi_2](0,\tau)+F_{\mathscr{I}^+, 0}[\psi_1, \psi_2](0,\tau)  \big) +\Mor^{ax}[\psi_1, \psi_2](0, \tau) \\
&\les   |a| \int_{\MM(0, \tau)}\left( r^{-1}\big(|\nab\psi_1|^2+|\nab \psi_2|^2\big)+r^{-3}\big( |\nab_T\psi_1|^2+|\nab_T \psi_2|^2\big)\right) \\
&     +\int_{\MM(0, \tau)}\Big( | \nab_{\Rhat} \psi_1 | + r^{-1}|\psi_1| \Big)    |  N_1|+\Big( | \nab_{\Rhat} \psi_2 | + r^{-1}|\psi_2| \Big)    |  N_2|\\
 &+ \Lambda  E[\psi_1, \psi_2](0)+\Lambda  |a|   \Mor[\psi_1, \psi_2](0, \tau) \\
 &  +\Lambda \left|\int_{\MM(0, \tau)}\Re\Big( \nabla_{\That_\chi}\ov{\psi_1} \c N_1+8Q^2  \nabla_{\That_\chi}\ov{\psi_2}  \c N_2\Big)\right|+\Lambda \int_{\MM(0, \tau)}\Big( |N_1|^2+Q^2 |N_2|^2\Big).
 \end{split}
 \eea
 By bounding the second line by $B_0^1[\psi](0, \tau)$, we deduce
\bea\label{eq:E-psi1-2-step1}
\begin{split}
&E[\psi_1, \psi_2](\tau)+F_{\HH^+}[\psi_1, \psi_2](0,\tau)+F_{\mathscr{I}^+, 0}[\psi_1, \psi_2](0,\tau)+\textrm{Mor}[\psi_1, \psi_2](0, \tau)\\
&\les E[\psi_1, \psi_2](0)+\mathcal{N}[\psi_1, \psi_2, N_1, N_2] (0, \tau)+|a|\BEF^1_0[\psi_1, \psi_2](0, \tau).
\end{split}
\eea
Notice that in the above the energy of $\psi_1, \psi_2$ is bounded by the first order commuted energies of $\psi_1, \psi_2$.

\subsection{The first order commuted system}\label{sec:first-order-commuted-system}

We now obtain the bound on the first commuted energy of $\psi_1, \psi_2$ in terms of the second order commuted energies.

We start by commuting the model system \eqref{final-eq-1-model}-\eqref{final-eq-2-model} with the projected Lie derivatives (see Section \ref{sec:appendix-lie-derivatives}) with respect to $T$ and $Z$. We obtain
\beaa
 \squared_1(\Lieb_T\psi_1)  -V_1(\Lieb_T  \psi_1) &=&i  \frac{2a\cos\th}{|q|^2}\nab_T (\Lieb_T\psi_1)+4Q^2 C_1[\Lieb_T\psi_2]+ \dk^{\leq 1}N_1 \\
\squared_2(\Lieb_T\psi_2) -V_2 (\Lieb_T \psi_2) &=&i  \frac{4a\cos\th}{|q|^2}\nab_T(\Lieb_T \psi_2)-   \frac {1}{ 2} C_2[\Lieb_T\psi_1]+ \dk^{\leq 1}N_2
 \eeaa
 and
 \beaa
 \squared_1(\Lieb_Z\psi_1)  -V_1  (\Lieb_Z\psi_1) &=&i  \frac{2a\cos\th}{|q|^2}\nab_T (\Lieb_Z\psi_1)+4Q^2 C_1[\Lieb_Z\psi_2]+ \dk^{\leq 1}N_1 \\
\squared_2(\Lieb_Z\psi_2) -V_2 (\Lieb_Z \psi_2) &=&i  \frac{4a\cos\th}{|q|^2}\nab_T(\Lieb_Z \psi_2)-   \frac {1}{ 2} C_2[\Lieb_Z\psi_1]+ \dk^{\leq 1}N_2.
 \eeaa
We now apply the control obtained in \eqref{eq:E-psi1-2-step1} to $\Lieb_T\psi_1, \Lieb_T \psi_2$, and $\Lieb_Z\psi_1, \Lieb_Z \psi_2$ and using Lemma \ref{lemma:basicpropertiesLiebTfasdiuhakdisug:chap9} we deduce
\beaa
&&E[(\nab_T, \nab_Z)^{\leq 1}\psi_1, (\nab_T, \nab_Z)^{\leq 1}\psi_2](\tau)+F_{\HH^+}[(\nab_T, \nab_Z)^{\leq 1}\psi_1, (\nab_T, \nab_Z)^{\leq 1}\psi_2](0,\tau)\\
&&+F_{\mathscr{I}^+, 0}[(\nab_T, \nab_Z)^{\leq 1}\psi_1, (\nab_T, \nab_Z)^{\leq 1}\psi_2](0,\tau)+\textrm{Mor}[(\nab_T, \nab_Z)^{\leq 1}\psi_1, (\nab_T, \nab_Z)^{\leq 1}\psi_2](0, \tau)\\
&\les& E^1[\psi_1, \psi_2](0)+\mathcal{N}^1[\psi_1, \psi_2, N_1, N_2] (0, \tau)+|a|\BEF^2_0[\psi_1, \psi_2](0, \tau).
\eeaa

Secondly, we commute equation \eqref{final-eq-1-model} for $\psi_1$ with $|q|\DD\hot$ and equation \eqref{final-eq-2-model} for $\psi_2$ with $|q| \ov{\DD} \c $ using Lemma \ref{lemma:comm-DDhot-DDc-squared}, i.e.
\beaa
|q|\DD\hot \squared_1 \psi_1 -\squared_2 |q|\DD\hot\psi_1&=&- \frac{3}{r^2} ( |q|\DD\hot \psi_1)+O(ar^{-2})\dk^{\leq 2}\psi_1,\\
|q|\ov{\DD}\c\squared_2\psi_2 - \squared_1|q|\ov{\DD}\c\psi_2&=& \frac{3}{r^2} (|q|\ov{\DD}\c\psi_2) +O(ar^{-2} )\dk^{\leq 2}\psi_2,
\eeaa
which gives a gRW system with inverted potentials and complex conjugate coupling terms of the form:
\beaa
\squared_1(|q|\ov{\DD}\c\psi_2)-V_2(|q|\ov{\DD}\c\psi_2)&=&    i  \frac{2a\cos\th}{|q|^2}\nab_T (|q|\ov{\DD}\c \psi_2)-   \frac {1}{ 2} \ov{C}_1 [|q|  \DD \hot  \psi_1 ]+N_{1\dkb} \\
\squared_2(|q|\DD\hot \psi_1)-V_1(|q|\DD\hot \psi_1)&=&i \frac{4a\cos\th}{|q|^2} \nab_T(|q| \DD\hot\psi_1 ) + 4Q^2\ov{C}_2[ |q|  \ov{\DD} \c  \psi_2 ]+ N_{2\dkb},
\eeaa
where
 \beaa
 N_{1\dkb}&=&  \dk^{\leq 1} N_2- \frac{3}{r^2} (|q|\ov{\DD}\c\psi_2)+O(ar^{-2}) \dk^{\leq 2} (\psi_1, \psi_2) \\
 N_{2\dkb}&=& \dk^{\leq 1} N_1+\frac{3}{r^2} ( |q|\DD\hot \psi_1)+O(ar^{-2}) \dk^{\leq 2}(\psi_1, \psi_2).
 \eeaa
 Applying Proposition \ref{prop:energy-estimates-conditional} to the above system (see Remark \ref{remark:potentials-inverted}) we obtain
  \beaa
&&E[|q|\ov{\DD}\c\psi_2, |q|\DD\hot \psi_1](\tau)+F_{\HH^+}[|q|\ov{\DD}\c\psi_2, |q|\DD\hot \psi_1](0,\tau)+F_{\mathscr{I}^+, 0}[|q|\ov{\DD}\c\psi_2, |q|\DD\hot \psi_1](0,\tau) \\
&\les& E^1[\psi_1, \psi_2](0)+ |a|  \Mor[|q|\ov{\DD}\c\psi_2, |q|\DD\hot \psi_1](0, \tau) \\
 &&  +\left|\int_{\MM(0, \tau)}\Re\Big( \nabla_{\That_\chi}\ov{|q|\ov{\DD}\c\psi_2} \c N_{1\dkb}+8Q^2  \nabla_{\That_\chi}\ov{ |q|\DD\hot \psi_1}  \c N_{2\dkb}\Big)\right|+\int_{\MM(0, \tau)}\Big( |N_{1\dkb}|^2+Q^2 |N_{2\dkb}|^2\Big).
\eeaa
Observe that
\beaa
\left|\int_{\MM(0, \tau)} \Re\Big( \nabla_{\That_\chi}\ov{|q|\ov{\DD}\c\psi_2} \c N_{1\dkb}\Big)\right| &\les& \left|\int_{\MM(0, \tau)} \frac{1}{r^2} \nab_{\That_\chi }(|q|\ov{\DD}\c\psi_2)  \c |q|\ov{\DD}\c\psi_2\right|\\
&&+\mathcal{N}^1[\psi_1, \psi_2, N_1, N_2] (0, \tau)+|a|\BEF_0^2(0, \tau)\\
&\les&  \sup_{\tau\in[\tau_1, \tau_2]}E[\psi_1, \psi_2](\tau)+\mathcal{N}^1[\psi_1, \psi_2, N_1, N_2] (0, \tau)+|a|\BEF_0^2(0, \tau),
\eeaa
and similarly for $\nabla_{\That_\chi}\ov{ |q|\DD\hot \psi_1}  \c N_{2\dkb}$. Using the elliptic identities \eqref{eq:elliptic-estimates-psi1}-\eqref{eq:elliptic-estimates-psi2}, we control $(r\nabla)$ derivatives from the $|q|\DD\hot$, $|q| \ov{\DD} \c $ and conclude from the above
\beaa
&&E[(\nab_T, r\nabla)^{\leq 1}\psi_1, (\nab_T, r\nabla)^{\leq 1}\psi_2](\tau)+F_{\HH^+}[(\nab_T, r\nabla)^{\leq 1}\psi_1, (\nab_T, r\nabla)^{\leq 1}\psi_2](0,\tau)\\
&&+F_{\mathscr{I}^+, 0}[(\nab_T, r\nabla)^{\leq 1}\psi_1, (\nab_T, r\nabla)^{\leq 1}\psi_2](0,\tau)+\textrm{Mor}[(\nab_T, \nab_Z)^{\leq 1}\psi_1, (\nab_T, \nab_Z)^{\leq 1}\psi_2](0, \tau)\\
&\les& E^1[\psi_1, \psi_2](0)+\mathcal{N}^1[\psi_1, \psi_2, N_1, N_2] (0, \tau)+|a|\BEF^2_0[\psi_1, \psi_2](0, \tau).
\eeaa

Finally, using the wave equation written as (see Lemma 4.7.6 in \cite{GKS})
\bea\label{eq:representation-wave}
|q|^2 \squared_k \psi &=&\frac{(r^2+a^2)^2}{\De} \big( -  \nab_{\That} \nab_{\That} \psi+   \nab_{\Rhat} \nab_{\Rhat} \psi \big) +2r \nab_{\Rhat} \psi+  |q|^2 \lap_2 \psi   + |q|^2  (\eta+\etab) \c \nab \psi,
\eea
we can add to the above the control of the energy for the $\nab_{\Rhat}$ derivative and prove that
\bea\lab{prop:energyforzeroandfirstderivativeinchap9}
\begin{split}
& E^1[\psi_1, \psi_2](\tau)+F^1_{\HH^+}[\psi_1, \psi_2](0,\tau)+F^1_{\mathscr{I}^+, 0}[\psi_1, \psi_2](0,\tau)\\
&+\textrm{Mor}[(\nab_T, \nab_Z)^{\leq 1}\psi_1, (\nab_T, \nab_Z)^{\leq 1}\psi_2](\tau_1, \tau_2)\\
&\les E^1[\psi_1, \psi_2](0)+\mathcal{N}^1[\psi_1, \psi_2, N_1, N_2] (0, \tau)+|a|\BEF^2_0[\psi_1, \psi_2](0, \tau).
\end{split}
\eea
This gives control of energy for at most one derivative of $\psi_1$, $\psi_2$ and Morawetz for at most one $(\nab_T, \nab_Z)$ derivative of $\psi_1$, $\psi_2$.

\subsection{The second order commuted system}

We start by commuting the model system \eqref{final-eq-1-model}-\eqref{final-eq-2-model} with the second order projected Lie derivatives with respect to $T$ and $Z$. We obtain
\beaa
 \squared_1((\Lieb_T, \Lieb_Z)^2\psi_1)  -V_1((\Lieb_T, \Lieb_Z)^2 \psi_1) &=&i  \frac{2a\cos\th}{|q|^2}\nab_T ((\Lieb_T, \Lieb_Z)^2\psi_1)+4Q^2 C_1[(\Lieb_T, \Lieb_Z)^2\psi_2]+ \dk^{\leq 2}N_1 \\
\squared_2((\Lieb_T, \Lieb_Z)^2\psi_2) -V_2 ((\Lieb_T, \Lieb_Z)^2 \psi_2) &=&i  \frac{4a\cos\th}{|q|^2}\nab_T((\Lieb_T, \Lieb_Z)^2 \psi_2)-   \frac {1}{ 2} C_2[(\Lieb_T, \Lieb_Z)^2\psi_1]+ \dk^{\leq 2}N_2.
 \eeaa
 Applying Proposition \ref{prop:energy-estimates-conditional} to the above gRW system and using Lemma \ref{lemma:basicpropertiesLiebTfasdiuhakdisug:chap9}, we deduce 
\bea\label{eq:control-energy-nabT-nabZ-2}
\begin{split}
&E[(\nab_T, \nab_Z)^{\leq 2}\psi_1, (\nab_T, \nab_Z)^{\leq 2}\psi_2](\tau)+F_{\HH^+}[(\nab_T, \nab_Z)^{\leq 2}\psi_1, (\nab_T, \nab_Z)^{\leq 2}\psi_2](0,\tau)\\
&+F_{\mathscr{I}^+, 0}[(\nab_T, \nab_Z)^{\leq 2}\psi_1, (\nab_T, \nab_Z)^{\leq 2}\psi_2](0,\tau)\\
&\les E^2[\psi_1, \psi_2](0)+\mathcal{N}^2[\psi_1, \psi_2, N_1, N_2] (0, \tau)+|a|\BEF^2_0[\psi_1, \psi_2](0, \tau).
\end{split}
\eea
Now we consider the system satisfied by $\widetilde{\psi}_1$, $\widetilde{\psi}_2$ defined by \eqref{eq:def-widetilde-psi1}-\eqref{eq:def-widetilde-psi2}, which according to Lemma \ref{lemma:theoperqatorwidetildeOOcommutingwellwtihRWmodel} is given by
\beaa
\squared_1\widetilde{\psi}_1-V_1\widetilde{\psi}_1&=&i  \frac{2a\cos\th}{|q|^2}\nab_T\widetilde{\psi}_1+4Q^2 C_1[\widetilde{\psi}_2]+\widetilde{N}_1\\
\squared_2\widetilde{\psi}_2-V_2\widetilde{\psi}_2&=&i  \frac{4a\cos\th}{|q|^2}\nab_T\widetilde{\psi}_2-   \frac {1}{ 2}C_2[\widetilde{\psi}_1]+\widetilde{N}_2
\eeaa
where $\widetilde{N}_1$, $\widetilde{N}_2$ are given by
\beaa
\widetilde{N}_1&=& O(|a|r^{-2})\nab^{\leq 1}_{\Rhat}\dk^{\leq 1}\psi_1+O(Q^2r^{-2})\nab^{\leq 1}_{\Rhat}\dk^{\leq 1}\psi_2+\dk^{\leq2} N_1 \\
\widetilde{N}_2&=&O(|a|r^{-2})\nab^{\leq 1}_{\Rhat}\dk^{\leq 1}(\psi_1,\psi_2)+ O(r^{-2}) \nab^{\leq1}_{\Rhat}( \dk^{\leq1} \psi_1) +\dk^{\leq2} N_2.
\eeaa
 Applying Proposition \ref{prop:energy-estimates-conditional} to the above gRW system we deduce
 \beaa
 \begin{split}
&E[\widetilde{\psi}_1, \widetilde{\psi}_2](\tau)+F_{\HH^+}[\widetilde{\psi}_1, \widetilde{\psi}_2](0,\tau)+F_{\mathscr{I}^+, 0}[\widetilde{\psi}_1, \widetilde{\psi}_2](0,\tau)    \\
&\les  E^2[\psi_1, \psi_2](0)+|a|  \BEF^2_0[\psi_1, \psi_2](0, \tau) \\
 &  +\left|\int_{\MM(0, \tau)}\Re\Big( \nabla_{\That_\chi}\ov{\widetilde{\psi}_1} \c \widetilde{N}_1+8Q^2  \nabla_{\That_\chi}\ov{\widetilde{\psi}_2}  \c \widetilde{N}_2\Big)\right|+\int_{\MM(0, \tau)}\Big( |\widetilde{N}_1|^2+Q^2 |\widetilde{N}_2|^2\Big).
 \end{split}
 \eeaa
Now observe that 
\beaa
&&\int_{\MM(0, \tau)}\Re\Big( \nabla_{\That_\chi}\ov{\widetilde{\psi}_1} \c \widetilde{N}_1+8Q^2  \nabla_{\That_\chi}\ov{\widetilde{\psi}_2}  \c \widetilde{N}_2\Big)\\
&\les&\int_{\MM(0, \tau)}\Re\Big( \nabla_{\That_\chi}\ov{\widetilde{\psi}_1} \c \big(O(|a|r^{-2})\nab^{\leq 1}_{\Rhat}\dk^{\leq 1}\psi_1+O(Q^2r^{-2})\nab^{\leq 1}_{\Rhat}\dk^{\leq 1}\psi_2 \big)\\
&&+8Q^2  \nabla_{\That_\chi}\ov{\widetilde{\psi}_2}  \c \big(O(|a|r^{-2})\nab^{\leq 1}_{\Rhat}\dk^{\leq 1}(\psi_1,\psi_2)+ O(r^{-2}) \nab^{\leq1}_{\Rhat}( \dk^{\leq1} \psi_1)  \big)\Big)\\
&&+\mathcal{N}^2[\psi_1, \psi_2, N_1, N_2] (0, \tau).
\eeaa
Integrating by parts in the $\That_\chi$ derivatives, we can bound the above by 
\beaa
&&\int_{\MM(0, \tau)}\Re\Big( \nabla_{\That_\chi}\ov{\widetilde{\psi}_1} \c \widetilde{N}_1+8Q^2  \nabla_{\That_\chi}\ov{\widetilde{\psi}_2}  \c \widetilde{N}_2\Big)\\
&\les&(|a|+|Q|)\int_{\MM(0, \tau)}\Big( r^{-2}|\dk^{\leq 2}\psi_1|\big(|\nab_{\Rhat}\dk^{\leq 2}\psi_1| +|\nab_{\That_\de}\dk^{\leq 1}\psi_1|\big)+r^{-2}|\dk^{\leq 2}\psi_2|\big(|\nab_{\Rhat}\dk^{\leq 2}\psi_2| +|\nab_{\That_\de}\dk^{\leq 1}\psi_2|\big) \Big)\\
&&+E^2[\psi_1, \psi_2](0)+\mathcal{N}^2[\psi_1, \psi_2, N_1, N_2] (0, \tau)\\
&\les& E^2[\psi_1, \psi_2](0)+\mathcal{N}^2[\psi_1, \psi_2, N_1, N_2] (0, \tau)+ (|a|+|Q|) B_0^2[\psi_1, \psi_2](0, \tau).
\eeaa
Together with the definition of $\widetilde{\psi}_1, \widetilde{\psi}_2$ and the control of \eqref{prop:energyforzeroandfirstderivativeinchap9}, we deduce
 \beaa
 \begin{split}
&E[\OO\psi_1, \OO\psi_2](\tau)+F_{\HH^+}[\OO\psi_1, \OO\psi_2](0,\tau)+F_{\mathscr{I}^+, 0}[\OO\psi_1, \OO\psi_2](0,\tau)    \\
&\les  E^2[\psi_1, \psi_2](0)+\mathcal{N}^2[\psi_1, \psi_2, N_1, N_2] (0, \tau)+(|a| +|Q|) \BEF^2_0[\psi_1, \psi_2](0, \tau).
 \end{split}
 \eeaa
The above, together with \eqref{eq:control-energy-nabT-nabZ-2}, implies
 \bea\label{eq:bound-energies-psiao-psiat}
 \begin{split}
&\sum_{\aund, \bund=1}^4E[\psiao, \psiat](\tau)+\sum_{\aund, \bund=1}^4F_{\HH^+}[\psiao, \psiat](0,\tau)+\sum_{\aund, \bund=1}^4 F_{\mathscr{I}^+, 0}[\psiao, \psiat](0,\tau)    \\
&\les  E^2[\psi_1, \psi_2](0)+\mathcal{N}^2[\psi_1, \psi_2, N_1, N_2] (0, \tau)+(|a| +|Q|) \BEF^2_0[\psi_1, \psi_2](0, \tau).
 \end{split}
 \eea

 \subsection{Conclusions}
 
 Combining Proposition \ref{prop:morawetz-higher-order} with Lemma \ref{LEMMA:LOWERBOUNDPHIZOUTSIDEMTRAP} and \eqref{eq:bound-energies-psiao-psiat}, we deduce
 \beaa
 && \Mor[\nab_T^2\psi_1, \nab_T^2\psi_2](\tau_1, \tau_2)+\Mor[\OO\psi_1, \OO\psi_2](\tau_1, \tau_2)+\Mor[r\nab\nab_T\psi_1, r\nab \nab_T\psi_2](\tau_1, \tau_2)\\
 &\les& \big(\EF_0[(\nab_T, r \nab)^{\leq 1} \psi_1, (\nab_T, r\nab)^{\leq 1} \psi_2](0, \tau)\big)^{1/2} \big(\EF_0[(\nab_T, r \nab)^{\leq 2} \psi_1, (\nab_T, r \nab)^{\leq 2} \psi_2](0, \tau)\big)^{1/2}\\
 &&+E^2[\psi_1, \psi_2](0)+\mathcal{N}^2[\psi_1, \psi_2, N_1, N_2] (0, \tau)+(|a| +|Q|) \BEF^2_0[\psi_1, \psi_2](0, \tau).
 \eeaa
 Using the definition of $\OO$ and the Hodge estimates (see Proposition 9.3.2 in \cite{GKS}), we deduce together with \eqref{prop:energyforzeroandfirstderivativeinchap9} that
  \beaa
 && \Mor[(r\nab, \nab_T)^{\leq2}\psi_1, (r\nab,\nab_T)^{\leq2}\psi_2](\tau_1, \tau_2) \\
  &\les& \big(\EF_0[(\nab_T, r \nab)^{\leq 1} \psi_1, (\nab_T, r\nab)^{\leq 1} \psi_2](0, \tau)\big)^{1/2} \big(\EF_0[(\nab_T, r \nab)^{\leq 2} \psi_1, (\nab_T, r \nab)^{\leq 2} \psi_2](0, \tau)\big)^{1/2}\\
 &&+E^2[\psi_1, \psi_2](0)+\mathcal{N}^2[\psi_1, \psi_2, N_1, N_2] (0, \tau)+(|a| +|Q|) \BEF^2_0[\psi_1, \psi_2](0, \tau).
 \eeaa
Using once again the representation of the wave equation in \eqref{eq:representation-wave} we can add to the above the control of the $\nab_\Rhat^2$ derivatives and the mixed derivatives upon integration by parts. Using \eqref{prop:energyforzeroandfirstderivativeinchap9} to bound the right hand side of the above we finally obtain:
  \bea\label{eq:bound-Mor2}
  \begin{split}
 & \Mor^{ 2}[\psi_1, \psi_2](\tau_1, \tau_2) \\
  &\les \big(E^1[\psi_1, \psi_2](0)+\mathcal{N}^1[\psi_1, \psi_2, N_1, N_2] (0, \tau)+|a|\BEF^2_0[\psi_1, \psi_2](0, \tau)\big)^{1/2} \big(\EF^2_0[\psi_1, \psi_2](0, \tau)\big)^{1/2}\\
 &+E^2[\psi_1, \psi_2](0)+\mathcal{N}^2[\psi_1, \psi_2, N_1, N_2] (0, \tau)+(|a| +|Q|) \BEF^2_0[\psi_1, \psi_2](0, \tau).
 \end{split}
 \eea
 The above gives control of the Morawetz bulk in terms of the energy.

 Now, combining  \eqref{prop:energyforzeroandfirstderivativeinchap9} and \eqref{eq:bound-energies-psiao-psiat} and using \eqref{eq:representation-wave} to add to the above the control of the $\nab_\Rhat^2$ derivatives and mixed derivatives, we deduce
  \bea\label{eq:bound-E2}
\begin{split}
&E^2[\psi_1, \psi_2](\tau)+F^2_{\HH^+}[\psi_1, \psi_2](0,\tau)+F^2_{\mathscr{I}^+, 0}[\psi_1, \psi_2](0,\tau)\\
&\les E^2[\psi_1, \psi_2](0) + \NN^2[\psi_1, \psi_2, N_1, N_2] (0, \tau)+|a|\BEF^2_0[\psi_1, \psi_2](0, \tau),
\end{split}
\eea

Using \eqref{eq:bound-E2} to bound $\EF_0[(\nab_T, r \nab)^{\leq 2} \psi_1, (\nab_T, r \nab)^{\leq 2} \psi_2](0, \tau)$ on the right hand side of \eqref{eq:bound-Mor2} and summing the above we conclude
\beaa
&E^2[\psi_1, \psi_2](\tau)+F^2_{\HH^+}[\psi_1, \psi_2](0,\tau)+F^2_{\mathscr{I}^+, 0}[\psi_1, \psi_2](0,\tau)+ \Mor^{ 2}[\psi_1, \psi_2](\tau_1, \tau_2)\\
&\les E^2[\psi_1, \psi_2](0) + \NN^2[\psi_1, \psi_2, N_1, N_2] (0, \tau)+(|a|+|Q|)\BEF^2_0[\psi_1, \psi_2](0, \tau),
\eeaa
which proves Proposition  \ref{THM:HIGHERDERIVS-MORAWETZ-CHP3} for $s=2$. 

In order to prove for $s \geq 2$, we can follow an iteration procedure: we assume that Proposition \ref{THM:HIGHERDERIVS-MORAWETZ-CHP3} holds for $s=j$ and shows that the iteration assumption implies that it holds for $s=j+1$. This is done by commuting the equations with $\Lieb_T, \Lieb_Z$, $|q|\DD\hot, |q| \ov{\DD} \c $ as in Section \ref{sec:first-order-commuted-system} and using the representation of the wave operator given by \eqref{eq:representation-wave}. For more details see Section 9.5.2 in \cite{GKS}. This concludes the proof of Proposition  \ref{THM:HIGHERDERIVS-MORAWETZ-CHP3}.

\section{Estimates for the generalized Regge-Wheeler system}\label{sec:estimates-gRW}

In this section, we denote $\pf=\psi_1$, $\qf=\psi_2$ and recall (see Remark \ref{rem:connection-model-system-gRW}) that $\psi_1, \psi_2$ satisfy the model system \eqref{final-eq-1-model}-\eqref{final-eq-2-model} with 
 $N_1=  O(a^2 r^{-4}) \psi_1 + L_1$ and $N_2= O(a^2 r^{-4}) \psi_2 + L_2$
 where $L_1$ and $L_2$ are given by \eqref{eq:definition-L1}-\eqref{eq:definition-L2}.

       \subsection{Proof of Proposition \ref{lemma:crucial1}}\label{sec:proof-lemma-crucial1}
       
       The goal of this section is to prove Proposition \ref{lemma:crucial1}.
       
Recall that by definition of $\NN_p^s[\psi_1, \psi_2, N_1, N_2](\tau_1, \tau_2)$ in \eqref{eq:definition-NN-psi1psi2N1N2} we can write
       \beaa
\NN^s_p[\psi_1, \psi_2, N_1, N_2](\tau_1, \tau_2) &=& \, ^{(Mor)} \NN^s(\tau_1, \tau_2)+^{(ext)} \NN^s_p(\tau_1, \tau_2)+\, ^{(En)} \NN^s(\tau_1, \tau_2)
\eeaa
where
\beaa
\, ^{(Mor)} \NN^s(\tau_1, \tau_2) &:=&\sum_{k\le s} \int_{\MM(\tau_1, \tau_2)}\Big( | \nab_{\Rhat}(\dk^k \psi_1) | + r^{-1}|\dk^k\psi_1| \Big)    |  \dk^k N_1|\\
&&+\sum_{k\le s} \int_{\MM(\tau_1, \tau_2)}\Big( | \nab_{\Rhat}(\dk^k \psi_2) | + r^{-1}|\dk^k\psi_2| \Big)    | \dk^k N_2|\\
&&+\sum_{k\le s} \int_{\MM(\tau_1, \tau_2)}\Big( |\dk^k N_1|^2+Q^2 |\dk^k N_2|^2\Big)\\
&&+\sum_{k\le s} \int_{\Mntrap(\tau_1, \tau_2)} \big( |D (\dk^k\psi_1)||\dk^k N_1|+ Q^2 |D(\dk^k \psi_2)| |\dk^k N_2| \big)\\
^{(ext)} \NN^s_p(\tau_1, \tau_2)&:=&\sum_{k\le s}\left| \int_{\MM_{r\geq R}}  r^{p-1}  \, \nab_4 (r\dk^k \psi_1 ) \c  \dk^k N_1+ r^{p-1}  \, \nab_4 (r \dk^k\psi_2 ) \c  \dk^k N_2\right|\\
\, ^{(En)} \NN^s(\tau_1, \tau_2)&:=&\sum_{k\le s}\left|\int_{\Mtrap(\tau_1, \tau_2)} \Re\Big( \nabla_{\That_\de}(\dk^k\ov{\psi_1}) \c \dk^kN_1+8Q^2  \nabla_{\That_\de}(\dk^k\ov{\psi_2})  \c \dk^k N_2\Big)\right|.
\eeaa
We now estimate the above terms separately.

\subsubsection{Bounds for $\, ^{(Mor)} \NN^s(0, \tau)+^{(ext)} \NN^s_p(0, \tau)$}

Here we prove that
\bea\label{eq:bound-NN-mor-NN-ext}
\, ^{(Mor)} \NN^s(0, \tau)+^{(ext)} \NN^s_p(0, \tau) \les |a| B^s_\de[\psi_1, \psi_2, \Bfr, \Ffr, A, \Xfr] (0, \tau) .
\eea
We have by Cauchy-Schwarz,
\beaa
 ^{(Mor)}\NN^s(0, \tau) &\les&\Big( \int_{\MM(0, \tau)}  r^{-1-\de}  \Big(|\nab_{\Rhat}\dk^{\le s} \psi_1|^2 + r^{-2} |\dk^{\leq s}\psi_1|^2\Big)\Big)^{1/2}   \Big(\int_{\MM(0, \tau)} r^{1+\de}\big(  |\dk^{\le s }N_1| ^2\big)\Big)^{1/2}\\
 &&+\Big( \int_{\MM(0, \tau)}  r^{-1-\de}  \Big(|\nab_{\Rhat}\dk^{\le s} \psi_2|^2+ r^{-2} |\dk^{\leq s}\psi_2|^2\Big)\Big)^{1/2} \Big(\int_{\MM(0, \tau)} r^{1+\de}\big(  |\dk^{\le s }N_2| ^2\big)\Big)^{1/2}\\
&&+  \int_{\MM(0, \tau)} r^{1+\de}\big(  |\dk^{\le s }N_1| ^2+ |\dk^{\le s} N_2|^2\big)\\
&&+\Big( \int_{\MM(0, \tau)}  r^{-1-\de} |D\dk^{\le s} \psi_1|^2\Big)^{1/2}   \Big(\int_{\MM(0, \tau)} r^{1+\de}\big(  |\dk^{\le s }N_1| ^2\big)\Big)^{1/2}\\
 &&+\Big( \int_{\MM(0, \tau)}  r^{-1-\de} |D\dk^{\le s} \psi_2|^2\Big)^{1/2} \Big(\int_{\MM(0, \tau)} r^{1+\de}\big(  |\dk^{\le s }N_2| ^2\big)\Big)^{1/2}\\
&\les&\Big(B_\de^s[\psi_1, \psi_2](0, \tau)\Big)^{1/2}\left( \int_{\MM(0, \tau)} r^{1+\de} \big(  |\dk^{\le s }N_1| ^2+ |\dk^{\le s} N_2|^2\big)\right)^{1/2}\\
&&+  \int_{\MM(0, \tau)} r^{1+\de}\big(  |\dk^{\le s }N_1| ^2+ |\dk^{\le s} N_2|^2\big).
\eeaa
Using that $N_1$ and $N_2$ are schematically given by 
 \beaa
 N_1&=& O(|a| r^{-4}) \psi_1 +O(|a|r)\dk^{\leq 1}\Bfr+O(ar^{-2})  \dk^{\leq1}\mathfrak{F}+  O(|a|r^{-2}) \mathfrak{X}\\
 N_2&=& O(|a| r^{-4}) \psi_2 +O(|a|r^{-1})\dk^{\leq 1}\Ffr+ O(a^2 r^{-3}) A + O(a)  \dk^{\leq1}\mathfrak{B} +  O(|a|r^{-1})  \mathfrak{X},
 \eeaa
 we deduce 
\beaa
 \int_{\MM(0, \tau)} r^{1+\de}\big(  |\dk^{\le s }N_1| ^2+ |\dk^{\le s} N_2|^2\big)&\les& |a| B^s_\de[\psi_1, \psi_2,\Bfr, \Ffr, A, \Xfr] (0, \tau),
\eeaa
which gives
\beaa
 ^{(Mor)}\NN^s(0, \tau) &\les&|a| B^s_\de [\psi_1, \psi_2, \Bfr, \Ffr, A, \Xfr] (0, \tau).
\eeaa

Similarly, we obtain
 \beaa
^{(ext)} \NN^s_p(0, \tau)&=&\sum_{k\le s}\left| \int_{\MM_{r\geq R}}  r^{p-1}  \, \nab_4 (r\dk^k \psi_1 ) \c  \dk^k N_1+ r^{p-1}  \, \nab_4 (r \dk^k\psi_2 ) \c  \dk^k N_2\right|\\
 &\les& \left(  \int_{\MM_{r\geq R}}    r^{p-3} |\dk^{\le s+1} \psi_1|^2\right)^{1/2} \left( \int_{\MM_{r\geq R}}  r^{p+1}   |\dk^{\le s }N_1| ^2 \right)^{1/2}\\
 &&+ \left(  \int_{\MM_{r\geq R}}    r^{p-3} |\dk^{\le s+1} \psi_2|^2\right)^{1/2} \left( \int_{\MM_{r\geq R}}  r^{p+1}   |\dk^{\le s }N_2| ^2 \right)^{1/2}\\
&\les&\Big( B_p^s[\psi_1, \psi_2](0, \tau) \Big)^{1/2}  \left( \int_{\MM_{r\geq R}}  r^{p+1} \big(  |\dk^{\le s }N_1| ^2+ |\dk^{\le s} N_2|^2\big) \right)^{1/2}.
 \eeaa
Now, for $0\leq p< 2$, we have  
\beaa
\int_{\MM_{r\geq R}}  r^{p+1} \big(  |\dk^{\le s }N_1| ^2+ |\dk^{\le s} N_2|^2\big) &\les& |a| \int_{\MM_{r\geq R}} r^{p-7}(|\dk^{\leq s}\psi_1|^2+|\dk^{\leq s}\psi_2|^2)\\
&&+|a|\int_{\MM_{r\geq R}}\Big( r^{p+3}|\dk^{\leq s+1} \Bfr|^2+r^{p-1}|\dk^{\leq s+1} \Ffr|^2+r^{p-5}|\dk^{\leq s}A|^2 +r^{p-1}|\dk^{\leq s}\Xfr|^2\Big)\\
&\les& |a| B^s_\de [\psi_1, \psi_2, \Bfr, \Ffr, A, \Xfr] (0, \tau).
\eeaa
We deduce 
\beaa
^{(ext)} \NN^s_p(0, \tau) &\les& |a| B^s_\de [\psi_1, \psi_2, \Bfr, \Ffr, A, \Xfr] (0, \tau),
\eeaa
that combined with the above proves \eqref{eq:bound-NN-mor-NN-ext}.

\subsubsection{Bounds for $\, ^{(En)} \NN^s(0, \tau)$}

Here we prove that
\bea\label{eq:bound-NN-en}
\, ^{(En)} \NN^s(\tau_1, \tau_2) \les |a| \BEF^s_\de[\psi_1, \psi_2, \Bfr, \Ffr, A, \Xfr] (0, \tau).
\eea
      Notice that the integral in $ \, ^{(En)} \NN^s(\tau_1, \tau_2)$ is in the region $\Mtrap$ of bounded $r$, so in what follows we neglect the powers of $r$. Also, recall that $\That_\chi=T$ in $\Mtrap$, and therefore we need to estimate
      \beaa
      \, ^{(En)} \NN^s(0, \tau)&=&\sum_{k\le s}\left|\int_{\Mtrap(0, \tau)} \Re\Big( \nabla_{T}(\dk^k\ov{\psi_1}) \c \dk^kN_1+8Q^2  \nabla_{T}(\dk^k\ov{\psi_2})  \c \dk^k N_2\Big)\right|.
      \eeaa
      Since $B^s_p[\psi_1, \psi_2]$ does not control the $\nab_T$ derivatives of $\psi_1, \psi_2$, we perform an integration by parts in $T$, as follows:
      \beaa
       \nabla_{T}(\dk^k\ov{\psi_1}) \c \dk^kN_1&=& -     (\dk^k\ov{\psi_1})\c   \nabla_{T}(\dk^kN_1) + \D_\a (  (\dk^k\ov{\psi_1}) \c (\dk^kN_1)T^\a) \\
       &=& -     (\dk^k\ov{\psi_1})\c  (\dk^k \nabla_{T} N_1) +   (\dk^k\ov{\psi_1})\c  (\dk^kN_1) + \D_\a (  (\dk^k\ov{\psi_1}) \c (\dk^kN_1)T^\a) 
      \eeaa
where we used that $[\nab_T, \dk^k]=O(1)\dk^k$.
Notice that the lower order terms and the boundary terms\footnote{Notice that since $\Mtrap$ is a bounded region of $r$, the boundary terms are bounded by $\sup_{\tau \in [0, \tau]} E^s_\delta [\psi_1, \psi_2, \Bfr, \Ffr, A, \Xfr](\tau)$. } satisfy
\beaa
\left|\int_{\Mtrap}(\dk^k\ov{\psi_1})\c  (\dk^kN_1) + \D_\a (  (\dk^k\ov{\psi_1}) \c (\dk^kN_1)T^\a) \right| \les |a| \BEF^s_\de[\psi_1, \psi_2, \Bfr, \Ffr, A, \Xfr] (0, \tau)
\eeaa
       and similarly for the term involving $\psi_2$ and $N_2$. Therefore 
             \beaa
      \, ^{(En)} \NN^s(0, \tau)&\les &\sum_{k\le s} \left|\int_{\Mtrap} \Re\Big(  (\dk^k\ov{\psi_1})\c  (\dk^k \nabla_{T} N_1)+8Q^2   (\dk^k\ov{\psi_2})\c  (\dk^k \nabla_{T} N_2)\Big)\right|\\
      &&+|a| \BEF^s_\de[\psi_1, \psi_2, \Bfr, \Ffr, A, \Xfr] (0, \tau).
      \eeaa
       We are then left to analyze the first line. Here we will need to use the full structure of the terms $N_1$ and $N_2$ as obtained in Theorem \ref{main-theorem-RW}, i.e.
               \beaa
 N_1&=& O(|a| r^{-4}) \psi_1- \frac{2q^{1/2} \ov{q}^{9/2}\De}{r^2|q|^4}( a^2\nab_T+a\nab_Z)\Bfr +8Q^2 q^{-3/2} \ov{q}^{5/2} \ov{L}_{coupl} (\ov{\DDc}\c\mathfrak{F})\\
 &&+ O(|a|r) \Bfr +  O(|a|Q^2r^{-2})( \mathfrak{F}, \ \mathfrak{X})\\
 N_2&=& O(|a| r^{-4}) \psi_2  - \frac{8q\ov{q}^2\De}{r^2|q|^4} (a^2\nab_T+a\nab_Z)\Ffr + q\ov{q}^2 L_{coupl} \DDc\hot \Bfr \\
 && +O(|a|r^{-3}) A +O(|a|)  \mathfrak{B} +  O(|a|r^{-1}) ( \mathfrak{F} , \ \mathfrak{X}  ).
 \eeaa
 We analyze each of the above. For the first term we use that $\nab_T  \psi \c \ov{\psi} =\frac 12 \nab_T (|\psi|^2)$ and write
        \beaa
       &&\left|\int_{\Mtrap} \Re\Big( O(|a| r^{-4}) (\dk^k\ov{\psi_1})\c  (\dk^k \nabla_{T}  \psi_1)+8Q^2 O(|a| r^{-4})  (\dk^k\ov{\psi_2})\c  (\dk^k \nabla_{T}  \psi_2)\Big)\right|\\
          &=&    \left|\int_{\Mtrap} \D_\a \Big( O(|a| r^{-4})( |\dk^k  \psi_1|^2+|\dk^k \psi_2|^2)T^\a\Big)\right|\les |a| \EF^s_\de[\psi_1, \psi_2, \Bfr, \Ffr, A, \Xfr] (0, \tau).
       \eeaa
       For the second term, we write
       \beaa
       \nab_T N_{1,2}&=&\nab_T \Big(- \frac{2q^{1/2} \ov{q}^{9/2}\De}{r^2|q|^4}( a^2\nab_T+a\nab_Z)\Bfr \Big)= - \frac{2q^{1/2} \ov{q}^{9/2}\De}{r^2|q|^4}( a^2\nab^2_T+a\nab_T\nab_Z)\Bfr\\
       \nab_T N_{2,2}&=& - \frac{8q\ov{q}^2\De}{r^2|q|^4} (a^2\nab^2_T+a\nab_T\nab_Z)\Ffr .
       \eeaa
 Now writing $T=\Rhat +\frac{\De}{r^2+a^2}e_3 - \frac{a}{r^2+a^2}Z$, the above becomes
        \beaa
       \nab_T N_{1,2}       &=& - \frac{2q^{1/2} \ov{q}^{9/2}\De}{r^2|q|^4}( a^2\nab^2_T+\frac{\De}{r^2+a^2} a\nab_{3}\nab_Z-\frac{a^2}{r^2+a^2}\nab^2_Z)\Bfr+O(a)\nab_{\Rhat}\dk^{\leq 1} \Bfr \\
       \nab_T N_{2,2}&=& - \frac{8q\ov{q}^2\De}{r^2|q|^4} (a^2\nab^2_T+\frac{\De}{r^2+a^2} a\nab_{3}\nab_Z-\frac{a^2}{r^2+a^2}\nab^2_Z)\Ffr +O(a)\nab_{\Rhat}\dk^{\leq 1} \Ffr.
       \eeaa
       We now integrate by parts in $T$ the first and in $Z$ the last, and obtain
       \beaa
      && \Re\Big(  (\dk^k\ov{\psi_1})\c  (\dk^k \nabla_{T} N_{1,2})+8Q^2   (\dk^k\ov{\psi_2})\c  (\dk^k \nabla_{T} N_{2,2})\Big)\\
            &=&|a| \Re\Big[ q^{1/2} \ov{q}^{9/2} \big( \nab_T(\dk^k\ov{\psi_1})\c  \dk^k\nab_T\Bfr-(\dk^k\ov{\psi_1})\c  \dk^k \nab_{3}\nab_Z\Bfr-(\nab_Z\dk^k\ov{\psi_1})\c  \dk^k\nab_Z\Bfr\big)\\
      &&+  q\ov{q}^2 \big(\nab_T(\dk^k\ov{\psi_2})\c  \dk^k\nab_T\Ffr- (\dk^k\ov{\psi_2})\c  \dk^k \nab_{3}\nab_Z\Ffr- (\nab_Z\dk^k\ov{\psi_2})\c  \dk^k\nab_Z\Ffr\big)\Big]\\
      &&+|a|\Re( \nab_{\Rhat}(\dk^k\ov{\psi_1})\dk^{\leq 1} \Bfr+ \nab_{\Rhat}(\dk^k\ov{\psi_2})\dk^{\leq 1} \Ffr)+|a|\D_\mu\Big(\dk^{k}(\psi_1+\psi_2)\c\dk^{\leq k+1}(\Bfr+\Ffr)\big(\Rhat^\mu, T^\mu, Z^\mu\big)\Big).
       \eeaa
Using \eqref{eq:definition-pf}-\eqref{eq:definition-qfF} to write
 \beaa
\psi_1= q^{\frac 1 2 } \ov{q}^{\frac 9 2 }\nab_3 \Bfr +O(r^4)\Bfr, \qquad \psi_2=  q \ov{q}^2 \nab_3 \Ffr +O(r^2)\Ffr,
\eeaa
we obtain
       \beaa
      && \Re\Big(  (\dk^k\ov{\psi_1})\c  (\dk^k \nabla_{T} N_{1,2})+8Q^2   (\dk^k\ov{\psi_2})\c  (\dk^k \nabla_{T} N_{2,2})\Big)\\
            &=&|a|  \Re\Big[|q|^{10}\big( \nab_T(\dk^k\ov{\nab_3 \Bfr})\c  \dk^k\nab_T\Bfr-(\dk^k\ov{\nab_3 \Bfr })\c  \dk^k \nab_{3}\nab_Z\Bfr-(\nab_Z\dk^k\ov{\nab_3 \Bfr })\c  \dk^k\nab_Z\Bfr\big)\Big]\\
      &&+  |a| \Re\Big[ |q|^6\big(\nab_T(\dk^k\ov{\nab_3 \Ffr})\c  \dk^k\nab_T\Ffr- (\dk^k\ov{\nab_3 \Ffr})\c  \dk^k \nab_{3}\nab_Z\Ffr- (\nab_Z\dk^k\ov{\nab_3 \Ffr})\c  \dk^k\nab_Z\Ffr\big)\Big]\\
      &&+|a|\Re( \nab_{\Rhat}(\dk^k\ov{\psi_1})\dk^{\leq 1} \Bfr+ \nab_{\Rhat}(\dk^k\ov{\psi_2})\dk^{\leq 1} \Ffr)+|a|\D_\mu\Big(\dk^{k}(\psi_1+\psi_2)\c\dk^{\leq k+1}(\Bfr+\Ffr)\big(\Rhat^\mu, T^\mu, Z^\mu\big)\Big).
       \eeaa
 By writing each of the products in the first two lines above as boundary terms we deduce
        \beaa
      && \Re\Big(  (\dk^k\ov{\psi_1})\c  (\dk^k \nabla_{T} N_{1,2})+8Q^2   (\dk^k\ov{\psi_2})\c  (\dk^k \nabla_{T} N_{2,2})\Big)\\
                   &=&|a| \D_\mu\Big(\dk^{k+1}(\Bfr+\Ffr)\c\dk^{\leq k+1}(\Bfr+\Ffr)\big(e_3^\mu, Z^\mu\big)\Big)+|a|\D_\mu\Big(\dk^{k}(\psi_1+\psi_2)\c\dk^{\leq k+1}(\Bfr+\Ffr)\big(\Rhat^\mu, T^\mu, Z^\mu\big)\Big)\\
      &&+|a|\Re( \nab_{\Rhat}(\dk^k\ov{\psi_1})\dk^{\leq 1} \Bfr+ \nab_{\Rhat}(\dk^k\ov{\psi_2})\dk^{\leq 1} \Ffr)
       \eeaa
and therefore,
\beaa
     \left|\int_{\Mtrap} \Re\Big(  (\dk^k\ov{\psi_1})\c  (\dk^k \nabla_{T} N_{1,2})+8Q^2   (\dk^k\ov{\psi_2})\c  (\dk^k \nabla_{T} N_{2,2})\Big)\right| &\les& |a|\BEF^s_\de[\psi_1, \psi_2, \Bfr, \Ffr, A, \Xfr] (0, \tau).
\eeaa
 
 We now consider the third term, i.e.
        \beaa
&& \Re\Big(  (\dk^k\ov{\psi_1})\c  (\dk^k \nabla_{T} N_{1,3})+8Q^2   (\dk^k\ov{\psi_2})\c  (\dk^k \nabla_{T} N_{2,3})\Big)\\
&=&  \Re\Big(  (\dk^k\psi_1)\c  (\dk^k \nabla_{T} (8Q^2 \ov{q}^{-3/2} q^{5/2} L_{coupl} (\DDc\c\ov{\mathfrak{F}})))+8Q^2   (\dk^k\ov{\psi_2})\c  (\dk^k \nabla_{T} (q\ov{q}^2 L_{coupl} \DDc\hot \Bfr))\Big),
       \eeaa
   By factorizing out $8Q^2 L_{coupl} $ and writing as above $T=\Rhat +\frac{\De}{r^2+a^2}e_3 - \frac{a}{r^2+a^2}Z$, we have
                      \beaa
&& \Re\Big(  (\dk^k\ov{\psi_1})\c  (\dk^k \nabla_{T} N_{1,3})+8Q^2   (\dk^k\ov{\psi_2})\c  (\dk^k \nabla_{T} N_{2,3})\Big)\\
&=&  8Q^2 \Re\Big( L_{coupl}\Big( \ov{q}^{-3/2} q^{5/2}  (\dk^k\psi_1)\c  \dk^k \nabla_{T} (\DDc\c\ov{\mathfrak{F}})+  q\ov{q}^2 (\dk^k\ov{\psi_2})\c  \dk^k \nabla_{T} (  \DDc\hot \Bfr)\Big)\Big)\\
&=&  8Q^2\frac{\De}{r^2+a^2} \Re\Big( L_{coupl}\Big( \ov{q}^{-3/2} q^{5/2}  (\dk^k\psi_1)\c  \dk^k \nabla_{3} (\DDc\c\ov{\mathfrak{F}})+  q\ov{q}^2 (\dk^k\ov{\psi_2})\c  \dk^k \nabla_{3} (  \DDc\hot \Bfr)\Big)\Big)\\
&&- \frac{a}{r^2+a^2}8Q^2 \Re\Big( L_{coupl}\Big( \ov{q}^{-3/2} q^{5/2}  (\dk^k\psi_1)\c  \dk^k \nabla_{Z} (\DDc\c\ov{\mathfrak{F}})+  q\ov{q}^2 (\dk^k\ov{\psi_2})\c  \dk^k \nabla_{Z} (  \DDc\hot \Bfr)\Big)\Big)\\
&&+|a|\Re((\dk^k\ov{\psi_1}) \nab_{\Rhat}\dk^{\leq 1} \Ffr+(\dk^k\ov{\psi_2}) \nab_{\Rhat}\dk^{\leq 1} \Bfr).
       \eeaa
     Writing once again $\psi_1= q^{\frac 1 2 } \ov{q}^{\frac 9 2 }\nab_3 \Bfr +O(r^4)\Bfr$, $\psi_2=  q \ov{q}^2 \nab_3 \Ffr +O(r^2)\Ffr$, we obtain
                      \beaa
&& \Re\Big(  (\dk^k\ov{\psi_1})\c  (\dk^k \nabla_{T} N_{1,3})+8Q^2   (\dk^k\ov{\psi_2})\c  (\dk^k \nabla_{T} N_{2,3})\Big)\\
&=&  8Q^2\frac{\De}{r^2+a^2} \Re\Big[ |q|^6 L_{coupl}\Big(  (\dk^k\nab_3 \Bfr)\c  \dk^k  \DDc\c(\nabla_{3}\ov{\mathfrak{F}})+  (\dk^k\ov{\nab_3 \Ffr})\c  \dk^k   \DDc\hot(\nabla_{3}  \Bfr)\Big)\Big]\\
&&- \frac{a}{r^2+a^2}8Q^2 \Re\Big[ |q|^6  L_{coupl}\Big( (\dk^k\nab_Z \Bfr)\c  \dk^k (\DDc\c(\nabla_{3} \ov{\mathfrak{F}}))+  (\dk^k\ov{\nab_3 \Ffr})\c  \dk^k   \DDc\hot(\nabla_{Z} \Bfr)\Big)\Big]\\
&&+O(a)\Re((\dk^k\ov{\psi_1}) \nab_{\Rhat}\dk^{\leq 1} \Ffr+(\dk^k\ov{\psi_2}) \nab_{\Rhat}\dk^{\leq 1} \Bfr)+O(a) \Re(\dk^{k+1}\ov{\Bfr} \c \dk^{\leq k+1} \Ffr)
       \eeaa
       Applying Lemma \ref{lemma:adjoint-operators} to write
   \beaa
 ( \DD \hot   \nabla_{3}  \Bfr) \c   \ov{\nab_3 \Ffr}  &=&  -\nabla_{3}  \Bfr \c (\DD \c \ov{\nab_3 \Ffr}) -( (H+\Hb ) \hot \nabla_{3}  \Bfr )\c \ov{\nab_3 \Ffr} +\D_\a (\nabla_{3}  \Bfr \c \ov{\nab_3 \Ffr})^\a.\\
  ( \DD \hot   \nabla_{Z} \Bfr) \c   \ov{\nab_3 \Ffr}  &=&  -\nabla_{Z} \Bfr \c (\DD \c \ov{\nab_3 \Ffr}) -( (H+\Hb ) \hot \nabla_{Z} \Bfr )\c \ov{\nab_3 \Ffr} +\D_\a (\nabla_{Z} \Bfr \c \ov{\nab_3 \Ffr})^\a,
 \eeaa
 we obtain the cancellation of the first two lines above.  Recalling that $L_{coupl}=O(ar^{-2})$, we proceed as before to deduce that
 \beaa
       \left|\int_{\Mtrap} \Re\Big(  (\dk^k\ov{\psi_1})\c  (\dk^k \nabla_{T} N_{1,3})+8Q^2   (\dk^k\ov{\psi_2})\c  (\dk^k \nabla_{T} N_{2,3})\Big)\right| &\les& |a| \BEF^s_\de[\psi_1, \psi_2, \Bfr, \Ffr, A, \Xfr] (0, \tau).
\eeaa

Finally, the remaining terms only involve zero-th order derivatives of $ \Bfr, \Ffr, A, \Xfr$ and using once again that $T=\Rhat +\frac{\De}{r^2+a^2}e_3 - \frac{a}{r^2+a^2}Z$ we write
\beaa
\dk^k \nab_T N_{1,4} = |a| \Big( \dk^{k+1}(\Bfr ,\Ffr)+ \nab_{\Rhat} (\Xfr, A)+( \nab_3, \nab_Z) (\Xfr, A)\Big)\\
\dk^k \nab_T N_{2, 4} = |a|\Big(\dk^{k+1}(\Bfr ,\Ffr)+ \nab_{\Rhat} (\Xfr, A)+( \nab_3, \nab_Z) (\Xfr, A) \Big).
\eeaa
The first terms involving $\dk^{k+1}(\Bfr ,\Ffr)$ can be bounded by Cauchy-Schwarz. Since the norms of $A, \Xfr$ only control the $\nab_3, \nab$ derivatives, the terms involving $\nab_{\Rhat}(\Xfr, A)$ needs to be integrated by parts in $\nab_\Rhat$, while the terms involving $( \nab_3, \nab_Z) (\Xfr, A)$ can also be bounded directly by Cauchy-Schwarz. We finally obtain
       \beaa
      \left|\int_{\Mtrap} \Re\Big(  (\dk^k\ov{\psi_1})\c  (\dk^k \nabla_{T} N_{1,4})+8Q^2   (\dk^k\ov{\psi_2})\c  (\dk^k \nabla_{T} N_{2,4})\Big)\right|\les |a| \BEF^s_\de[\psi_1, \psi_2, \Bfr, \Ffr, A, \Xfr] (0, \tau).
       \eeaa
By putting together the above bounds, we prove \eqref{eq:bound-NN-en}.
       
       By combining \eqref{eq:bound-NN-mor-NN-ext} and \eqref{eq:bound-NN-en} we complete the proof of Proposition \ref{lemma:crucial1}.

       \subsection{Proof of Proposition \ref{lemma:crucial2}}\label{sec:proof-lemma-crucial2}

        The goal of this section is to prove Proposition \ref{lemma:crucial2}.

 \subsubsection{Factorization of $\psi_1$ and $\psi_2$}

We denote according to \eqref{eq:definition-pf}-\eqref{eq:definition-qfF}
\beaa
\psi_1= q^{\frac 1 2 } \ov{q}^{\frac 9 2 } \Big(\nabc_3 \Bfr + C_1 \Bfr \Big), \qquad \psi_2= q \ov{q}^2\Big( \nabc_3 \Ffr +C_2 \Ffr \Big),
\eeaa
where
\beaa
C_1=2\trchb-\frac 1 2  \frac {(\atrchb)^2}{ \trchb} -\frac 5 2  i  \atrchb, \qquad C_2=\trchb-2 \frac {(\atrchb)^2}{ \trchb} -3 i \atrchb.
\eeaa

 \begin{lemma} We have
  \bea
  \nabc_3\Big(\frac{\ov{\tr\Xb}}{\Re(\tr\Xb)(\tr\Xb)^4} \Bfr \Big)  &=& O(r^{-1}) \psi_1 \label{eq:nabc3-Bfr-psi1},\\
  \nabc_3\Big(\frac{(\ov{\tr\Xb})^4}{(\Re(\tr\Xb))^4(\tr\Xb)^2}\Ffr \Big)  &=&O(r^{-1}) \psi_2 \label{eq:nabc3-Ffr-psi2}.
 \eea
 \end{lemma}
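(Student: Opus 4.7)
The two identities are designed so that the multipliers
\[
g_1 := \frac{\ov{\tr\Xb}}{\Re(\tr\Xb)(\tr\Xb)^4}, \qquad g_2 := \frac{(\ov{\tr\Xb})^4}{(\Re(\tr\Xb))^4(\tr\Xb)^2},
\]
are (approximate) solutions of the first order transport equations $\nabc_3 g_i = g_i\, C_i$, where
\[
C_1 = 2\trchb - \tfrac{1}{2}\tfrac{(\atrchb)^2}{\trchb} - \tfrac{5}{2}i\atrchb, \qquad C_2 = \trchb - 2\tfrac{(\atrchb)^2}{\trchb} - 3i\atrchb
\]
are the coefficients appearing in the definitions \eqref{eq:definition-pf}-\eqref{eq:definition-qfF} of $\psi_1$ and $\psi_2$. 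Granting such an identity on $g_i$, the Leibniz rule gives
\[
\nabc_3(g_i\,\Phi_i) = g_i\bigl(\nabc_3 \Phi_i + C_i\,\Phi_i\bigr) + \bigl(\nabc_3 g_i - g_i C_i\bigr)\Phi_i,
\]
with $\Phi_1=\Bfr$ and $\Phi_2=\Ffr$, and by \eqref{eq:definition-pf}-\eqref{eq:definition-qfF} the first bracket on the right is precisely $\psi_1/(q^{1/2}\ov{q}^{9/2})$ respectively $\psi_2/(q\ov{q}^2)$. An asymptotic analysis using $\trchb = -2r/(r^2+a^2\cos^2\theta) + O(r^{-3})$ and $|q|, |\ov q| \sim r$ then gives $g_1/(q^{1/2}\ov q^{9/2}) = O(r^{-1})$ and $g_2/(q\ov q^2) = O(r^{-1})$, which accounts for the $O(r^{-1})$ factor in front of $\psi_i$ on the right-hand sides.

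The first step is therefore to recall the conformally invariant Ricci identity for $\tr\Xb$ on the Kerr-Newman background, which is already used to derive the system in \cite{Giorgi7} and takes the form $\nabc_3\tr\Xb = -\tfrac{1}{2}(\tr\Xb)^2$ modulo lower order terms of size $O(r^{-3})$; this identity is a consequence of the Raychaudhuri equation together with the fact that $\BBF=0$ on the principal null frame of Kerr-Newman, so that the electromagnetic source term does not contribute at the background level. Complex conjugation and taking real parts produce the corresponding transport equations for $\ov{\tr\Xb}$ and $\Re(\tr\Xb)$, which together with logarithmic differentiation yield
\[
\frac{\nabc_3 g_i}{g_i} = a_i\,\frac{\nabc_3\ov{\tr\Xb}}{\ov{\tr\Xb}} + b_i\,\frac{\nabc_3\Re(\tr\Xb)}{\Re(\tr\Xb)} + c_i\,\frac{\nabc_3\tr\Xb}{\tr\Xb},
\]
with $(a_1,b_1,c_1)=(1,-1,-4)$ and $(a_2,b_2,c_2)=(4,-4,-2)$. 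The exponents in $g_1,g_2$ are in fact uniquely determined by requiring the resulting real and imaginary parts (when written in terms of $t=\trchb$ and $s=\atrchb$) to match the corresponding components of $C_1$ and $C_2$, respectively. Once one fixes the conventions, this is then a short algebraic verification.

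The remaining residual $\nabc_3 g_i - g_i C_i$ is of size $O(r^{-2}) g_i$, coming entirely from the lower order terms in the transport equation of $\tr\Xb$; multiplied by $\Phi_i$ and converted via the asymptotics of the prefactor $g_i/(q^{1/2}\ov q^{9/2})$ (resp.\ $g_i/(q\ov q^2)$), this produces exactly a contribution of the form $O(r^{-1})\,\psi_i$ in the sense of the statement. The main conceptual subtlety lies in this last reduction and in treating consistently the complex-conjugation conventions of \cite{Giorgi7} — specifically $\Xb = \chib + i\,\dual\chib$, so $\ov{\tr\Xb}=\trchb-i\atrchb$ — when differentiating $g_i$; once this bookkeeping is in place, the proof consists of a direct substitution and contains no analytic difficulty.
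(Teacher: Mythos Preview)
Your overall strategy---logarithmic differentiation of $g_i$ and matching the exponents $(n,m,p)$ so that $g_i^{-1}\nabc_3 g_i = C_i$---is exactly the paper's approach. The gap is that you treat the transport equation for $\tr\Xb$ as approximate. On the Kerr--Newman background (principal null frame, type D), the shear $\chibh$ and the electromagnetic component $\BBF$ both vanish, so the complex Raychaudhuri equation
\[
\nabc_3\tr\Xb + \tfrac{1}{2}(\tr\Xb)^2 = 0
\]
holds \emph{exactly}, with no $O(r^{-3})$ error. Plugging this into your logarithmic derivative gives $g_i^{-1}\nabc_3 g_i = C_i$ on the nose for $(n,m,p)=(1,1,4)$ and $(4,4,2)$, and the residual term $(\nabc_3 g_i - g_i C_i)\Phi_i$ is identically zero. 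The lemma is then the single-line identity $\nabc_3(g_i\Phi_i) = \big(g_i/(q^{1/2}\ov q^{9/2})\big)\psi_1$ (resp.\ $\big(g_i/(q\ov q^2)\big)\psi_2$), where the prefactor is a specific function of size $O(r^{-1})$.

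Your treatment of the residual is also incorrect as written, not merely unnecessary. A nonzero residual would be a term of the form $h(r,\theta)\Bfr$ with no $\nabc_3\Bfr$ present; such a term can never be written as a scalar multiple of $\psi_1 = q^{1/2}\ov q^{9/2}(\nabc_3\Bfr + C_1\Bfr)$, and the identity in the lemma is used downstream precisely as an exact transport equation $\nabc_3\Phi_1 = \Phi_2$ to which Lemma~\ref{lemma:general-transport} is applied. Fortunately the point is moot once you observe exactness.
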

 \begin{proof} We compute for scalar functions $ f, g$
 \beaa
  \nabc_3\Big(f \Bfr \Big)&=& f \nabc_3 \Bfr + \nabc_3(f) \Bfr= f \Big(\frac{1}{q^{\frac 1 2 } \ov{q}^{\frac 9 2 } } \psi_1+(f^{-1}\nabc_3(f) -C_1) \Bfr  \Big) \\
  \nabc_3\Big(g\Ffr \Big)  &=& g \Big(\frac{1}{q \ov{q}^2} \psi_2+(g^{-1}\nabc_3(g) -C_2) \Ffr  \Big) 
 \eeaa
 Using that $\nabc_3\tr\Xb +\frac{1}{2}(\tr\Xb)^2=0$, we compute
 \beaa
 \frac{\nabc_3\left(\frac{(\ov{\tr\Xb})^n}{(\Re(\tr\Xb))^m(\tr\Xb)^p}\right)}{\frac{(\ov{\tr\Xb})^n}{(\Re(\tr\Xb))^m(\tr\Xb)^p}}&=& -\frac n 2 \ov{\tr\Xb}+\frac p 2 \tr\Xb +\frac m 4  ( \tr \Xb^2+ \ov{\tr\Xb}^2)\frac{1}{\Re(\tr\Xb)}\\
 &=& -\frac n 2(\trchb + i \atrchb)+\frac p 2(\trchb - i \atrchb) +\frac m 2\frac{1}{\trchb}  (\trchb^2-\atrchb^2)\\
  &=&\big(\frac p 2 -\frac n 2+\frac m 2\big)\trchb-\frac m 2 \frac {(\atrchb)^2}{ \trchb}  -\big(\frac p 2 + \frac n 2\big) i \atrchb.
 \eeaa
 We see that for $n=1, p=4, m=1$ we obtain $C_1$ and for $n=4, p=2, m=4$ we obtain $C_2$. Therefore for 
 \beaa
 f=\frac{\ov{\tr\Xb}}{\Re(\tr\Xb)(\tr\Xb)^4}, \qquad g=\frac{(\ov{\tr\Xb})^4}{(\Re(\tr\Xb))^4(\tr\Xb)^2}
 \eeaa
we have $f^{-1}\nabc_3(f) -C_1=0$ and $g^{-1}\nabc_3(g) -C_2=0$, and we prove the lemma.
 \end{proof}

 Similarly, we can rewrite  relations \eqref{relation-F-A-B} and \eqref{nabc-3-mathfrak-X} as 
  \bea
\nabc_3 \Big(\frac{1}{(\tr\Xb)^3}\PF A \Big)&=&O(r^3)  \DD \hot \mathfrak{B}+ O(ar) \mathfrak{B} +O(1)\mathfrak{F} \label{eq:factorization-A}\\
 \nabc_3 \Big(\frac{1}{\ov{\tr\Xb}} \Xfr \Big)&=&O(r)\ov{\DD} \c \mathfrak{F}  +O(ar^{-1})\mathfrak{F}+O(r)\mathfrak{B}\label{eq:factorization-Xfr}.
\eea

 \subsubsection{General transport estimates}

 We recall the following general transport estimates.
 \begin{lemma}[Lemma 11.4.5 in \cite{GKS}]\lab{lemma:general-transport} 
 Suppose $\Phi_1, \Phi_2$ satisfy the differential relation 
 \bea\label{eq:relation-nab3Phi1Phi2}
 \nabc_3 \Phi_1=\Phi_2.
 \eea
  Then for every  $p >0$  we have 
 \bea\label{eq:general-integrated-estimate-e3}
\nn\int_{\MM(\tau_1, \tau_2)}  r^{p-3} |\Phi_1|^2+ \int_{\Sigma_{\tau_2}} r^{p-2}|\Phi_1|^2+ \int_{\HH^+(\tau_1, \tau_2)} |\Phi_1|^2 &\les& \int_{\MM(\tau_1, \tau_2)} r^{p-1}|\Phi_2|^2+ \int_{\Si(\tau_1)} r^{p-2}|\Phi_1|^2.
\eea 
 \end{lemma}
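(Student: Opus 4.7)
The plan is to perform a spacetime energy estimate for the scalar quantity $|\Phi_1|^2$ along integral curves of $e_3$, using a weighted multiplier that will generate the desired bulk and boundary terms. First, I would absorb the conformal weight in $\nabc_3$ into a rescaling of $\Phi_1$ so that, up to harmless lower-order terms in the final estimate, the transport equation reads $\nab_3 \Phi_1 = \Phi_2$. Contracting this with $\bar{\Phi}_1$ and taking the real part gives
\begin{equation*}
\tfrac{1}{2}\,\nab_3|\Phi_1|^2 = \Re\bigl(\Phi_2 \cdot \bar{\Phi}_1\bigr).
\end{equation*}
Multiplying by the radial weight $r^{p-2}$ and using $e_3(r) = -\Delta/|q|^2$ to commute the weight through $\nab_3$, I obtain
\begin{equation*}
\nab_3\!\left(r^{p-2}|\Phi_1|^2\right) + (2-p)\,\frac{\Delta}{|q|^2}\,r^{p-3}|\Phi_1|^2 \;=\; 2\, r^{p-2}\,\Re\bigl(\Phi_2 \cdot \bar{\Phi}_1\bigr).
\end{equation*}
Since $\Delta \geq 0$ in the exterior with $\Delta/|q|^2 \gtrsim 1$ for $r$ bounded away from $r_+$, the second term on the left is a positive definite bulk for the relevant range of $p$, producing exactly the weighted bulk $\int r^{p-3}|\Phi_1|^2$ (on the horizon, the factor of $\Delta$ is absorbed into the natural horizon flux after the usual rescaling, see below).

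Next, I would rewrite $\nab_3 f = |q|^{-2}\D_\mu\!\bigl(|q|^2 f\, e_3^\mu\bigr) - f\,|q|^{-2}\D_\mu(|q|^2 e_3^\mu)$, so that $\nab_3$ applied to $r^{p-2}|\Phi_1|^2$ becomes a spacetime divergence plus a term proportional to $\tr\chib \cdot r^{p-2}|\Phi_1|^2$. The latter is of lower order (since $\tr\chib = O(r^{-1})$) and can be combined with or absorbed by the principal bulk term. I would then apply the divergence theorem on $\MM(\tau_1,\tau_2)$ to the resulting identity. Since $e_3$ is past-directed null and $n_{\Sigma_\tau}$ is future-timelike, the flux on $\Sigma_{\tau_2}$ produces a positive contribution $\gtrsim \int_{\Sigma_{\tau_2}} r^{p-2}|\Phi_1|^2$, while on $\Sigma_{\tau_1}$ it yields an analogous term bounded by the initial data. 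The flux through $\mathscr{I}^+$ has a favorable sign (the $r^{p-2}$ weight ensures integrability for $p<2$) and can be discarded.

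The main technical point will be the treatment of the horizon contribution. On $\HH^+$, the vectorfield $e_3$ degenerates (its coefficient in front of $\pr_r$ vanishes as $\Delta \to 0$), but the rescaled null generator $\Delta^{-1}\Lb$ has a smooth nonzero limit on $\HH^+$, and the absorption of the factor $\Delta$ from the bulk coefficient $\Delta/|q|^2$ into the horizon flux produces exactly the non-degenerate term $\int_{\HH^+(\tau_1,\tau_2)} |\Phi_1|^2$. Once all boundary terms are assembled with positive sign, the only remaining RHS term is the spacetime integral $\int r^{p-2}\Re(\Phi_2 \cdot \bar{\Phi}_1)$, to which I would apply Cauchy--Schwarz with weights $r^{(p-3)/2}$ on $\Phi_1$ and $r^{(p-1)/2}$ on $\Phi_2$:
\begin{equation*}
\left|\int_{\MM} r^{p-2}\,\Phi_2\cdot\bar\Phi_1\right| \leq \varepsilon \int_{\MM} r^{p-3}|\Phi_1|^2 + \varepsilon^{-1}\int_{\MM} r^{p-1}|\Phi_2|^2.
\end{equation*}
Choosing $\varepsilon$ small absorbs the first term into the bulk on the left, leaving the claimed estimate. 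The main obstacle is choosing the weight and carefully tracking the coefficient of the horizon bulk to ensure non-degeneracy after the $\Delta$-cancellation; the range of admissible $p$ is dictated precisely by the requirement that $(2-p)\Delta/|q|^2$ stays positive and that the flux through $\mathscr{I}^+$ is controlled, which is why the hypothesis $p>0$ (paired implicitly with integrability considerations) suffices for the argument.
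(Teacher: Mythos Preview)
Your overall strategy---multiply by a radial weight, convert $\nab_3$ to a spacetime divergence, apply the divergence theorem, and close with Cauchy--Schwarz---is exactly the paper's approach; the paper's proof is the single remark that one applies the divergence theorem to $\mbox{Div}(|q|^{p-2}|\Phi_1|^2 e_3)$. However, two related errors in your execution would make the argument fail as written.

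First, there is a sign error in your displayed identity. Since $e_3(r)=-\Delta/|q|^2$, one computes $e_3(r^{p-2})=-(p-2)\tfrac{\Delta}{|q|^2}r^{p-3}$, so the correct identity is
\[
\nab_3\bigl(r^{p-2}|\Phi_1|^2\bigr) + (p-2)\,\frac{\Delta}{|q|^2}\,r^{p-3}|\Phi_1|^2 = 2r^{p-2}\Re(\Phi_2\cdot\bar\Phi_1),
\]
with coefficient $(p-2)$, not $(2-p)$. With the correct sign this bulk term is \emph{negative} for $0<p<2$, so it cannot by itself supply the positive spacetime integral.

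Second---and this is what actually rescues the estimate---the contribution you dismiss as ``lower order'' is not lower order at all. Writing $e_3(f)=\D_\mu(f e_3^\mu)-f\,\D_\mu e_3^\mu$ and using $\D_\mu e_3^\mu = -\Delta'/|q|^2 = -(2r-2M)/|q|^2$, the resulting bulk coefficient is
\[
(p-2)\,r\Delta\,|q|^{-2} + (2r-2M),
\]
which for large $r$ behaves like $pr$ and is positive precisely when $p>0$; near the horizon it equals $2(r_+-M)>0$. The second piece is the same order as the first and is essential for positivity on the full range $p>0$---it cannot be absorbed.

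Finally, the horizon flux $\int_{\HH^+}|\Phi_1|^2$ does not arise from any ``absorption of $\Delta$ from the bulk''; it is simply the boundary term of the divergence theorem on $\HH^+$, where $e_3$ is transverse to the null generator so that $g(e_3,n_{\HH^+})$ is nonvanishing. The flux through $\mathscr{I}^+$ vanishes because the normal there is parallel to $e_3$. Once these points are corrected, your argument coincides with the paper's.
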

 \begin{proof} The proof consists of applying the divergence theorem to $\mbox{Div}(|q|^{p-2}|\Phi_1|^2 e_3)$. Notice that in Lemma 11.4.1 in \cite{GKS} the boundary terms are evaluated in the future boundary of $\MM(\tau_1, \tau_2)$. In this case, this results in an integral over $\Sigma_{\tau_2}$ and $\HH^+(\tau_1, \tau_2)$, since the normal to the $\mathscr{I}^+$ is proportional to $e_3$.
\end{proof}

Upon commutation with $\nab_{\Rhat}$ and $\nab_4$ for $r \geq R$, one can extend the above lemma to the following.

         \begin{lemma}[Lemma 11.4.1. in \cite{GKS}]\lab{lemma:general-transport-estimate}
  Suppose $\Phi_1, \Phi_2$ satisfy the differential relation 
  \bea
  \nabc_3 \Phi_1=\Phi_2.
  \eea
  Also, let $\chi_{nt}=\chi_{nt}(r)$  a smooth cut-off function equal to 0 on $\MM_{trap}$ and equal to 1 on $r\geq R$. Then, for every $p> 0$, we have 
 \bea\lab{eq:general-estimate-commuted-Rhat}
 \begin{split}
&\int_{\MM(\tau_1, \tau_2)}  r^{p-3} \big(r^2|\nab_3\Phi_1|^2+r^2|\nab_4\Phi_1|^2+|\Phi_1|^2\big)\\
&+ \int_{\Si_{\tau_2}} r^{p-2}\big(r^2\chi_{nt}^2|\nab_4\Phi_1|^2+|\nab_{\Rhat}\Phi_1|^2+|\Phi_1|^2\big) + \int_{\HH^+(\tau_1, \tau_2)} \big(|\nab_{\Rhat}\Phi_1|^2+|\Phi_1|^2\big)  \\
  \les& \int_{\MM(\tau_1, \tau_2)} r^{p-1}\Big(r^2\chi_{nt}^2|\nab_4\Phi_2|^2+|\nab_{\Rhat}\Phi_2|^2+|\Phi_2|^2\Big)\\
  & + \int_{\Si_{\tau_1}} r^{p-2}\big(r^2\chi_{nt}^2|\nab_4\Phi_1|^2+|\nab_{\Rhat}\Phi_1|^2+|\Phi_1|^2\big)  + a^2\int_{\MM(\tau_1, \tau_2)}r^{p-1} | \nab \Phi_1|^2.
\end{split}
\eea
\end{lemma}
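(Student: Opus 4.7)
The plan is to reduce to the base case Lemma~\ref{lemma:general-transport} by deriving new transport equations, of the form $\nabc_3(\cdot) = (\cdot)$, for the commuted quantities $\nab_{\Rhat}\Phi_1$ and $\chi_{nt}\nab_4\Phi_1$, and applying the base estimate to each. Applied directly to $(\Phi_1,\Phi_2)$, Lemma~\ref{lemma:general-transport} already yields the $r^{p-3}|\Phi_1|^2$ bulk term, the $r^{p-2}|\Phi_1|^2$ boundary term on $\Sigma_{\tau_2}$, and the $|\Phi_1|^2$ boundary term on $\HH^+(\tau_1,\tau_2)$. Moreover, the bulk control of $r^{p-1}|\nab_3\Phi_1|^2$ follows directly from $\nabc_3\Phi_1 = \Phi_2$ by converting between $\nabc_3$ and $\nab_3$, since the difference is a scalar multiple of $\omb\Phi_1$, controlled by the already-obtained $|\Phi_1|^2$ estimate.

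Next, for the $|\nab_{\Rhat}\Phi_1|^2$ contributions, I would commute the transport equation with $\nab_{\Rhat}$, obtaining
\begin{align*}
\nabc_3(\nab_{\Rhat}\Phi_1) \;=\; \nab_{\Rhat}\Phi_2 \;+\; [\nabc_3,\nab_{\Rhat}]\Phi_1.
\end{align*}
In the principal null frame of Kerr-Newman the commutator $[\nabc_3,\nab_{\Rhat}]$ is a first order operator whose contribution involving angular derivatives $\nab\Phi_1$ carries a factor of $a$, as a direct consequence of the non-integrability of the horizontal structure, while the remaining terms are bounded pointwise by $O(r^{-1})\dk^{\leq 1}\Phi_1$. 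Applying Lemma~\ref{lemma:general-transport} to $\nab_{\Rhat}\Phi_1$ yields the $|\nab_{\Rhat}\Phi_1|^2$ bulk and boundary contributions on the left of \eqref{eq:general-estimate-commuted-Rhat}, with the right-hand side containing $r^{p-1}|\nab_{\Rhat}\Phi_2|^2$, the angular error $a^2 r^{p-1}|\nab\Phi_1|^2$, and absorbable lower order contributions in $\dk^{\leq 1}\Phi_1$ already controlled by the first step.

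For the $\chi_{nt}^2 r^2|\nab_4\Phi_1|^2$ contributions (supported only on $r \geq R$), I would similarly commute with $\chi_{nt}\nab_4$, which yields
\begin{align*}
\nabc_3(\chi_{nt}\nab_4\Phi_1) \;=\; \chi_{nt}\nab_4\Phi_2 \;+\; [\nabc_3,\chi_{nt}\nab_4]\Phi_1.
\end{align*}
The cutoff $\chi_{nt}$ localizes the argument to the asymptotic region where $\nab_4$ has good weights, while derivatives of $\chi_{nt}$ are supported on a compact $r$-interval outside the trapping region and contribute only to terms already controlled by the previous steps. The same structure appears in $[\nabc_3,\chi_{nt}\nab_4]$: an $O(a r^{-2})$-weighted angular derivative plus lower order terms. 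Invoking Lemma~\ref{lemma:general-transport} with the $r^p$ weights produces the remaining $\chi_{nt}^2 r^2|\nab_4\Phi_1|^2$ bulk and boundary contributions, with the angular commutator again feeding into the error $a^2\int r^{p-1}|\nab\Phi_1|^2$ on the right-hand side.

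Summing the three applications of the base lemma and absorbing the lower-order contributions gives \eqref{eq:general-estimate-commuted-Rhat}. The main obstacle is the careful bookkeeping of the commutators $[\nabc_3,\nab_{\Rhat}]$ and $[\nabc_3,\chi_{nt}\nab_4]$ in the non-integrable principal frame of Kerr-Newman: isolating the angular component of the commutator and verifying that its coefficient is exactly $O(a)$ is what both produces the characteristic $a^2|\nab\Phi_1|^2$ error term and ensures no uncontrolled contributions remain.
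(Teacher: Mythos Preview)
Your proposal is correct and matches the paper's approach: the paper only states, immediately before the lemma, that ``Upon commutation with $\nab_{\Rhat}$ and $\nab_4$ for $r \geq R$, one can extend the above lemma to the following,'' and then cites \cite{GKS} for the details. Your outline---apply Lemma~\ref{lemma:general-transport} to $\Phi_1$ directly, then to $\nab_{\Rhat}\Phi_1$ and $\chi_{nt}\nab_4\Phi_1$ via the commuted transport equations, with the $O(a)$ angular piece of the commutators producing the $a^2\int r^{p-1}|\nab\Phi_1|^2$ error---is exactly this.
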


In what follows we will use the above transport estimates to bound $\BEF^s_p[\Bfr, \Ffr, A, \Xfr] (0, \tau)$. We proceed in steps: we first bound $\BEFdot_p[\Bfr, \Ffr](0, \tau)$ and $\BEF_p[\Bfr, \Ffr](0, \tau)$, followed by bounds on  $\BEF_p[A, \Xfr](0, \tau)$.

       \subsubsection{Bounds for $\BEFdot_p[\Bfr, \Ffr](0, \tau)$}

We have the following.
\begin{lemma}\label{prop:control-BEFdot} The following estimate holds true for $0<  p <2$, 
 \bea\label{eq:propo-control-BEFdot}
\begin{split}
\BEFdot_p[\Bfr, \Ffr](0, \tau) &\les \BEF_p[\psi_1, \psi_2](0, \tau) + \Edot_p[\Bfr, \Ffr](0)+|a| \BEF_p[\Bfr, \Ffr](0, \tau).
\end{split}
\eea
\end{lemma}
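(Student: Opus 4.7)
The plan is to exploit the refined Chandrasekhar identities \eqref{eq:nabc3-Bfr-psi1}--\eqref{eq:nabc3-Ffr-psi2} as transport equations in the $e_3$-direction for the rescaled quantities
\[
\Phi_1 := \frac{\ov{\tr\Xb}}{\Re(\tr\Xb)(\tr\Xb)^4}\,\Bfr, \qquad \Phi_2 := \frac{(\ov{\tr\Xb})^4}{(\Re(\tr\Xb))^4(\tr\Xb)^2}\,\Ffr,
\]
which, by those identities, satisfy $\nabc_3 \Phi_1 = O(r^{-1})\psi_1$ and $\nabc_3 \Phi_2 = O(r^{-1})\psi_2$. Since $\tr\Xb = O(r^{-1})$, we have $\Phi_1 = O(r^4)\,\Bfr$ and $\Phi_2 = O(r^2)\,\Ffr$, so that the weights appearing in $\Edot_p$ and $\Bdot_p$ are exactly matched by the generic transport weights $r^{p-2}|\Phi_i|^2$ and $r^{p-3}|\Phi_i|^2$; for instance $r^{p-3}|\Phi_1|^2 \simeq r^{p+5}|\Bfr|^2$, precisely the zeroth-order weight in $\Bdot_p$.

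Next I would apply the commuted transport estimate of Lemma \ref{lemma:general-transport-estimate} to each equation $\nabc_3 \Phi_i = O(r^{-1})\psi_i$. On the left-hand side this controls the $e_3$-, $e_4$- and $\Rhat$-derivatives of $\Phi_1, \Phi_2$ both in the spacetime bulk and as fluxes on $\Sigma_\tau$ and $\HH^+$; after undoing the $r^4$ and $r^2$ rescalings these recombine into $\sup_\tau \Edot_p[\Bfr, \Ffr](\tau)$, $\Fdot_{\HH^+}[\Bfr,\Ffr]$ and $\Bdot_p[\Bfr, \Ffr]$, i.e.\ exactly $\BEFdot_p[\Bfr,\Ffr](0,\tau)$. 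On the right-hand side of \eqref{eq:general-estimate-commuted-Rhat} the forcing contributes bulk norms of $r^{-1}\psi_i$, $\nab_\Rhat(r^{-1}\psi_i)$ and $\chi_{nt}\nab_4(r^{-1}\psi_i)$; each of these unwinds into ingredients of $\BEF_p[\psi_1,\psi_2]$, the $\chi_{nt}$ cut-off matching the fact that $\nab_4 \pf, \nab_4 \qf$ are controlled by $B_p$ only for $r \geq R$. The initial data contribution $r^{p-2}|\Phi_i|^2 + r^p \chi_{nt}^2 |\nab_4 \Phi_i|^2$ on $\Sigma_0$ is bounded by $\Edot_p[\Bfr,\Ffr](0)$, and the derivatives of the scalar rescaling coefficients $\ov{\tr\Xb}/((\Re \tr\Xb)(\tr\Xb)^4)$ etc., which are of size $O(r^{-2})$, are absorbed by the same hierarchy.

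The main obstacle, and the reason for the $|a|\BEF_p[\Bfr,\Ffr](0,\tau)$ term on the right-hand side, is the final contribution $a^2 \int_\MM r^{p-1}|\nab \Phi_i|^2$ in \eqref{eq:general-estimate-commuted-Rhat}: the norm $\BEFdot_p$ does \emph{not} control angular derivatives of $\Bfr,\Ffr$, so this term cannot be absorbed back into the left-hand side as in the scalar case. After the rescaling it becomes $a^2 \int r^{p+7}|\nab \Bfr|^2 + a^2 \int r^{p+3}|\nab\Ffr|^2$, which is precisely controlled by $|a|\,B_p[\Bfr,\Ffr] \leq |a|\,\BEF_p[\Bfr,\Ffr]$. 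This smallness-in-$a$ factor is essential, because the angular-derivative norms of $\Bfr,\Ffr$ will only be recovered in the next step of the argument (Proposition \ref{lemma:crucial2}), via elliptic estimates coming from the Teukolsky equations \eqref{Teukolsky-B}--\eqref{Teukolsky-F}, and then absorbed into the left-hand side using the $|a|\ll M$ smallness.
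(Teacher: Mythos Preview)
Your proposal is correct and follows essentially the same route as the paper: identify the rescaled quantities $\Phi_1,\Phi_2$ from \eqref{eq:nabc3-Bfr-psi1}--\eqref{eq:nabc3-Ffr-psi2}, apply the commuted transport estimate of Lemma~\ref{lemma:general-transport-estimate}, match weights so that the right-hand side is bounded by $\BEF_p[\psi_1,\psi_2]$, and recognize the residual $a^2\int r^{p-1}|\nab\Phi_i|^2$ term as the origin of the $|a|\,\BEF_p[\Bfr,\Ffr]$ contribution. One small point you glossed over: the $\Sigma_\tau$-flux in \eqref{eq:general-estimate-commuted-Rhat} controls only $\chi_{nt}^2|\nab_4\Phi_i|^2$ and $|\nab_{\Rhat}\Phi_i|^2$, not $|\nab_3\Phi_i|^2$ directly; the paper recovers the missing $\int_{\Sigma_\tau} r^{p+8}|\nab_3\Bfr|^2 + r^{p+4}|\nab_3\Ffr|^2$ separately by invoking the defining relations \eqref{eq:definition-pf}--\eqref{eq:definition-qfF} (equivalently the transport equations themselves), which bound $|\nab_3\Bfr|^2 \lesssim r^{-10}|\psi_1|^2 + r^{-2}|\Bfr|^2$ pointwise, and then the $\nab_4$ control follows from $\nab_\Rhat$ and $\nab_3$.
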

\begin{proof} By applying Lemma \ref{lemma:general-transport-estimate} to \eqref{eq:nabc3-Bfr-psi1}-\eqref{eq:nabc3-Ffr-psi2}, with respectively $\Phi_1=\frac{\ov{\tr\Xb}}{\Re(\tr\Xb)(\tr\Xb)^4} \Bfr=O(r^4)\Bfr$, $\Phi_2=O(r^{-1}) \psi_1$ and $\Phi_1=\frac{(\ov{\tr\Xb})^4}{(\Re(\tr\Xb))^4(\tr\Xb)^2}\Ffr=O(r^2)\Ffr $, $\Phi_2=O(r^{-1}) \psi_2$, we deduce
 \beaa
&&\int_{\MM(0, \tau)}  r^{p+5} \big(r^2|\nab_3\Bfr|^2+r^2|\nab_4\Bfr|^2+|\Bfr|^2\big)\\
&&+ \int_{\Si_\tau} r^{p+6}\big(r^2\chi_{nt}^2|\nab_4\Bfr|^2+|\nab_{\Rhat}\Bfr|^2+|\Bfr|^2\big)+ \int_{\HH^+(0, \tau)} \big(|\nab_{\Rhat}(\Delta\Bfr)|^2+|\Delta\Bfr|^2\big)  \\
 & \les& \int_{\MM(0, \tau)} r^{p-3}\Big(r^2\chi_{nt}^2|\nab_4\psi_1|^2+|\nab_{\Rhat}\psi_1|^2+|\psi_1|^2\Big)\\
  && + \int_{\Si_0} r^{p+6}\big(r^2\chi_{nt}^2|\nab_4\Bfr|^2+|\nab_{\Rhat}\Bfr|^2+|\Bfr|^2\big)  + a^2\int_{\MM(0, \tau)}r^{p+7} | \nab \Bfr|^2.
\eeaa
and
 \beaa
&&\int_{\MM(0, \tau)}  r^{p+1} \big(r^2|\nab_3\Ffr|^2+r^2|\nab_4\Ffr|^2+|\Ffr|^2\big)\\
&&+ \int_{\Si_\tau} r^{p+2}\big(r^2\chi_{nt}^2|\nab_4\Ffr|^2+|\nab_{\Rhat}\Ffr|^2+|\Ffr|^2\big)+ \int_{\HH^+(0, \tau)} \big(|\nab_{\Rhat}(\Delta\Ffr)|^2+|\Delta\Ffr|^2\big)  \\
 & \les& \int_{\MM(0, \tau)} r^{p-3}\Big(r^2\chi_{nt}^2|\nab_4\psi_2|^2+|\nab_{\Rhat}\psi_2|^2+|\psi_2|^2\Big)\\
  && + \int_{\Si_0} r^{p+2}\big(r^2\chi_{nt}^2|\nab_4\Ffr|^2+|\nab_{\Rhat}\Ffr|^2+|\Ffr|^2\big)  + a^2\int_{\MM(0, \tau)}r^{p+3} | \nab \Ffr|^2.
\eeaa
In view of the definition of $B_p[\psi_1, \psi_2]$ we infer for $0<p < 2$ by summing the above:
 \beaa
&&\int_{\MM(0, \tau)}  r^{p+5} \big(r^2|\nab_3\Bfr|^2+r^2|\nab_4\Bfr|^2+|\Bfr|^2\big)+r^{p+1} \big(r^2|\nab_3\Ffr|^2+r^2|\nab_4\Ffr|^2+|\Ffr|^2\big)\\
&&+ \int_{\Si_\tau} r^{p+6}\big(r^2\chi_{nt}^2|\nab_4\Bfr|^2+|\nab_{\Rhat}\Bfr|^2+|\Bfr|^2\big) +r^{p+2}\big(r^2\chi_{nt}^2|\nab_4\Ffr|^2+|\nab_{\Rhat}\Ffr|^2+|\Ffr|^2\big)  \\
&&+\int_{\HH^+(0, \tau)} \big(|\nab_{\Rhat}(\Delta\Bfr)|^2+|\Delta\Bfr|^2+|\nab_{\Rhat}(\Delta\Ffr)|^2+|\Delta\Ffr|^2\big) \\
 & \les& B_p[\psi_1, \psi_2] + \int_{\Si_0} r^{p+6}\big(r^2\chi_{nt}^2|\nab_4\Bfr|^2+|\nab_{\Rhat}\Bfr|^2+|\Bfr|^2\big)+r^{p+2}\big(r^2\chi_{nt}^2|\nab_4\Ffr|^2+|\nab_{\Rhat}\Ffr|^2+|\Ffr|^2\big) \\
  &&  + a^2\int_{\MM(\tau_1, \tau_2)}r^{p+7} | \nab \Bfr|^2+r^{p+3} | \nab \Ffr|^2.
\eeaa
We now sum to the above $\int_{\Sigma_\tau}r^{p+8} |\nab_3 \Bfr|^2 + r^{p+4}|\nab_3\Ffr|^2$ and using \eqref{eq:nabc3-Bfr-psi1}-\eqref{eq:nabc3-Ffr-psi2} we obtain
\beaa
\begin{split}
\BEFdot_p[\Bfr, \Ffr](0, \tau) &\les \BEF_p[\psi_1, \psi_2](0, \tau) + \Edot_p[\Bfr, \Ffr](0)+a^2 \Big(\int_{\MM(0,\tau)} r^{p+7} | \nab \Bfr|^2+r^{p+3} | \nab \Ffr|^2\Big).
\end{split}
\eeaa
In particular, recalling that $\Rhat=\frac 1 2 \big(\frac{\Delta}{r^2+a^2} e_4-\frac{|q|^2}{r^2+a^2}e_3\big)$, we see that the above contains both the $\nab_3$ and $\nab_4$ derivatives of $\Bfr, \Ffr$. 
Finally, since the control of the angular derivatives of $\Bfr, \Ffr$ is contained in $\BEF_p[\Bfr, \Ffr]$, this proves the proposition.
\end{proof}

       \subsubsection{Bounds for $\BEF_p[\Bfr, \Ffr](0, \tau)$}

We now obtain control of the angular derivatives of $\Bfr$ and $\Ffr$ by making use of their Teukolsky equations.  
 \begin{lemma}\label{lemma:transport-angular-Bfr-Ffr} For $|a| \ll M$, the following estimate holds true for $0< p <2$ and $\lambda>0$:
  \bea\label{eq:BEF-Bfr-Ffr-with-A-Xfr}
\BEF_p[\Bfr, \Ffr](0, \tau)  &\les & \BEF_p[\psi_1, \psi_2](0, \tau)+ E_p[\Bfr, \Ffr](0) +\lambda  B_p[A, \Xfr](0, \tau).
\eea
 \end{lemma}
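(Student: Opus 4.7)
The plan is to use the Teukolsky equations \eqref{Teukolsky-B}--\eqref{Teukolsky-F} as elliptic relations in the angular direction, trading the angular derivatives $|\nab\Bfr|^2$ and $|\nab\Ffr|^2$ (which appear in $E_p[\Bfr,\Ffr]$ but not in $\Edot_p[\Bfr,\Ffr]$) for $\nabc_3\nabc_4$ derivatives of $\Bfr,\Ffr$ plus the RHS $\M_1[\Ffr,\Xfr]$, $\M_2[A,\Xfr,\Bfr]$. The transport estimates of Lemma~\ref{prop:control-BEFdot} already control the $\nabc_3\nabc_4$ derivatives via the Chandrasekhar quantities $\psi_1,\psi_2$, so it only remains to handle the right-hand sides $\M_1,\M_2$.

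\textbf{Key steps.} First, I would apply the elliptic identities \eqref{eq:elliptic-estimates-psi1}--\eqref{eq:elliptic-estimates-psi2} to reduce the problem to bounding the weighted integrals $\int r^{p+7}|\DD\hot\Bfr|^2$ and $\int r^{p+3}|\ov{\DD}\c\Ffr|^2$ modulo a zero-order term that is already controlled by $\BEFdot_p[\Bfr,\Ffr]$. Second, I would multiply the operator $\TT_1(\Bfr)$ from \eqref{operator-Teukolsky-B} by a weighted copy of $\Bfr$ and integrate, using the adjointness Lemma~\ref{lemma:adjoint-operators} to move $\ov{\DDc}\c$ onto $\Bfr$, which produces $|\DDc\hot\Bfr|^2$ as the leading bulk term (and similarly for $\Ffr$ via $\TT_2$). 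The $\nabc_3\nabc_4\Bfr$ term is then integrated by parts in $e_3$ so as to expose $\nabc_3\Bfr$, which by the definition \eqref{eq:definition-pf} of the Chandrasekhar transform is schematically $|q|^{-5}\psi_1 + O(r^{-1})\Bfr$, yielding control by $\BEF_p[\psi_1,\psi_2] + \BEFdot_p[\Bfr,\Ffr]$. Third, the lower-order and potential contributions in $\TT_1,\TT_2$, as well as the part of $\M_1$, $\M_2$ involving $\Xfr$ and $A$, are estimated by Cauchy--Schwarz with a small constant $\lambda$, giving the $\lambda B_p[A,\Xfr]$ term plus an absorbable multiple of the LHS.

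\textbf{Main obstacle.} The crucial difficulty is the cross coupling between $\Bfr$ and $\Ffr$ at top angular order: $\M_1$ contains $\ov{\DDc}\c\Ffr$ paired against $\Bfr$, while $\M_2$ contains $\DDc\hot\Bfr$ paired against $\Ffr$. Treated naively these would generate top-order terms $\int r^{p+7}|\ov{\DD}\c\Ffr|\,|\Bfr|$ and $\int r^{p+3}|\DD\hot\Bfr|\,|\Ffr|$ that are uncontrolled. The remedy is to combine the two Teukolsky estimates into a single weighted identity and apply Lemma~\ref{lemma:adjoint-operators}: the adjointness of $\DD\hot$ and $\ov{\DD}\c$ produces a cancellation of these problematic cross terms up to a spacetime divergence (treated as a boundary term) and lower-order commutator terms of size $O(1)$ which, because of the compatibility of the $r$-weights $r^{p+7}$ and $r^{p+3}$, close against $\BEFdot_p[\Bfr,\Ffr]$. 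This mirrors the combined-current philosophy of Section~\ref{sec:model-system} but applied at the Teukolsky rather than gRW level.

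\textbf{Closing the estimate.} After combining the two weighted multiplier identities, one obtains
\[
\int r^{p+7}|\DD\hot\Bfr|^2 + \int r^{p+3}|\ov{\DD}\c\Ffr|^2 \les \BEF_p[\psi_1,\psi_2](0,\tau) + \BEFdot_p[\Bfr,\Ffr](0,\tau) + \lambda\, B_p[A,\Xfr](0,\tau) + E_p[\Bfr,\Ffr](0),
\]
with all boundary terms controlled by $E_p[\Bfr,\Ffr](0)$ and $\sup_\tau \Edot_p[\Bfr,\Ffr](\tau) \le \BEFdot_p[\Bfr,\Ffr]$. Finally, invoking Lemma~\ref{prop:control-BEFdot} to bound $\BEFdot_p[\Bfr,\Ffr]$ in terms of $\BEF_p[\psi_1,\psi_2]$ and using the smallness $|a|\ll M$ to absorb the error $|a|\BEF_p[\Bfr,\Ffr]$ into the LHS produces the desired \eqref{eq:BEF-Bfr-Ffr-with-A-Xfr}.
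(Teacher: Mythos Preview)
Your overall strategy---rewrite the Teukolsky operators as elliptic relations for $\DDc\hot\Bfr$ and $\ov{\DDc}\c\Ffr$, multiply by $\ov{\Bfr}$ and $\ov{\Ffr}$, use Lemma~\ref{lemma:adjoint-operators} on the principal angular term, integrate by parts the $\nabc_3\nabc_4$ part to expose $\psi_1,\psi_2$, and close via Lemma~\ref{prop:control-BEFdot}---is exactly what the paper does. However, you misidentify the ``main obstacle'': the cross coupling is \emph{not} the difficulty and no adjointness cancellation between the two estimates is required. In $\M_1$ the term $\ov{\DDc}\c\Ffr$ carries the coefficient $\PF\ov{\PF}=O(Q^2|q|^{-4})$, so after the weight $r^{p+7}$ the cross term is $O(Q^2)\int r^{p+3}|\ov{\DDc}\c\Ffr|\,|\Bfr|$, not $\int r^{p+7}|\ov{\DDc}\c\Ffr|\,|\Bfr|$ as you wrote. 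And for the $\M_2$ term $\int r^{p+3}|\DDc\hot\Bfr|\,|\Ffr|$, Cauchy--Schwarz with a small parameter $\lambda_1$ gives $\lambda_1\int r^{p+5}|\nab\Bfr|^2+\lambda_1^{-1}\int r^{p+1}|\Ffr|^2$; the first sits two powers of $r$ below the left-hand weight $r^{p+7}$ and the second lies in $\Bdot_p[\Bfr,\Ffr]$. Thus both cross terms absorb directly for small $\lambda_1$, which is precisely how the paper closes (see the display with $\lambda_1,\lambda_2$ and the line ``By taking $\lambda_1$ small enough, we can absorb the second line''). Your proposed combined-current cancellation is unnecessary here---and in fact would not be exact, since the coefficients $4\PF\ov{\PF}$ and $-\tfrac12$ do not match.

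A second, smaller gap: you assert that ``all boundary terms are controlled by $E_p[\Bfr,\Ffr](0)$ and $\sup_\tau\Edot_p[\Bfr,\Ffr](\tau)$'', but the divergence $\D_\a(\ov{\Ffr}\c(\nab\Ffr)^\a)$ produces a flux on $\Sigma_\tau$ involving $|\nab\Ffr|$, which $\Edot_p$ does \emph{not} control---it is part of what you are trying to estimate. The paper deals with this by a further Cauchy--Schwarz with small $\lambda_3$ (leaving $\lambda_3\int_{\Sigma_\tau}r^{p+3}|\nab\Ffr|^2$ on the right), then inserting a cutoff $\ka_n(\tau)$ and applying the mean-value theorem to extract a good slice $\tau_{(n)}$, and finally a local energy estimate from $\tau_{(n)}$ to $\tau$. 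Without this step the flux bound in $\BEF_p$ is not established.
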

 \begin{proof} In view of the Teukolsky equations satisfied by $\Bfr$ and $\Ffr$, respectively \eqref{operator-Teukolsky-B} and \eqref{operator-Teukolsky-F}, we can write
   \beaa
\frac 1 2 \ov{\DDc}\c (\DDc \hot \Bfr)&=&   \nabc_3\nabc_4\Bfr+O(Q^2 r^{-4})\ov{\DDc}\c\mathfrak{F}+O(ar^{-2})  \nabc  \Bfr+O(Q^2r^{-5})\Xfr \\
&&+O(r^{-1}) \nabc_3\Bfr+O(r^{-1})\nabc_4\Bfr+O(r^{-2})\Bfr+O(ar^{-6}) \Ffr,\\
 \frac 1 2 \DDc \hot (\ov{\DDc} \c \Ffr)&=& \nabc_3\nabc_4 \Ffr- \frac 1 2 \DDc \hot \mathfrak{B}+O(ar^{-2}) \nabc \Ffr+O(Qr^{-3}) A   + O(ar^{-3}) \mathfrak{X}\\
 &&+O(r^{-1}) \nabc_3\Ffr+O(r^{-1}) \nabc_4\Ffr  +O(r^{-2})\Ffr+O(ar^{-2})  \Bfr.
\eeaa
We now multiply the first identity by $\overline{\Bfr}$ and the second identity by $\overline{\Ffr}$. Using Lemma \ref{lemma:adjoint-operators} we have
\beaa
\frac 1 2\overline{\Bfr} \c \Big(\ov{\DDc}\c (\DDc \hot \Bfr)\Big)&=& -\frac 1 2  |   \DDc \hot \Bfr |^2+ O(ar^{-2}) \ov{\Bfr} \c (\DDc \hot \Bfr)+ \D_\a (\ov{\Bfr} \c (\DDc \hot \Bfr))^\a\\
 \frac 1 2\overline{\Ffr} \c \Big( \DDc \hot (\ov{\DDc} \c \Ffr)\Big)&=& -\frac 1 2 |\ov{\DDc} \c \Ffr|^2+ O(ar^{-2}) \ov{\Ffr} \c (\ov{\DDc} \c \Ffr)+ \D_\a (\ov{\Ffr} \c (\ov{\DDc} \c \Ffr))^\a.
\eeaa
Plugging in the above and applying integration by parts for the first terms $ \nabc_3\nabc_4\Bfr$ and  $ \nabc_3\nabc_4\Bfr$, we obtain
\beaa
 \frac 1 2  |   \DDc \hot \Bfr |^2&=& \nabc_3 \overline{\Bfr} \c \nabc_4\Bfr+O(Q^2 r^{-4})\overline{\Bfr} \c \ov{\DDc}\c\mathfrak{F}+O(ar^{-2}) \overline{\Bfr} \c  \nabc  \Bfr+O(Q^2r^{-5})\overline{\Bfr} \c \Xfr \\
&&+O(r^{-1}) \overline{\Bfr} \c \nabc_3\Bfr+O(r^{-1})\overline{\Bfr} \c \nabc_4\Bfr+O(r^{-2})|\Bfr|^2+O(ar^{-6})\overline{\Bfr} \c  \Ffr\\
&&+O(ar^{-2}) \ov{\Bfr} \c (\DDc \hot \Bfr)+ \D_\a (\ov{\Bfr} \c (\DDc \hot \Bfr))^\a-\nab_3 (\ov{\Bfr} \nabc_4 \Bfr),\\
 \frac 1 2 |\ov{\DDc} \c \Ffr|^2&=&\nabc_3\overline{\Ffr} \c \nabc_4 \Ffr+ \frac 1 2 \overline{\Ffr} \c \DDc \hot \mathfrak{B}+O(ar^{-2})\overline{\Ffr} \c  \nabc \Ffr+O(Qr^{-3}) \overline{\Ffr} \c A   + O(ar^{-3}) \overline{\Ffr} \c \mathfrak{X}\\
 &&+O(r^{-1}) \overline{\Ffr} \c \nabc_3\Ffr+O(r^{-1})\overline{\Ffr} \c  \nabc_4\Ffr  +O(r^{-2})|\Ffr|^2+O(ar^{-2}) \overline{\Ffr} \c  \Bfr\\
 &&+ O(ar^{-2}) \ov{\Ffr} \c (\ov{\DDc} \c \Ffr)+ \D_\a (\ov{\Ffr} \c (\ov{\DDc} \c \Ffr))^\a-\nabc_3 (\overline{\Ffr} \c \nabc_4 \Ffr).
\eeaa
By applying Cauchy-Schwarz to the terms on the right hand sides, we have for some constants $\lambda_1, \lambda_2>0$, the following bounds:
\beaa
  |   \DDc \hot \Bfr |^2&\les & |\nabc_3\Bfr|^2+  |\nabc_4\Bfr|^2+r^{-2} |\Bfr|^2 +O(ar^{-10})| \Ffr|^2+(\lambda_1^{-1}+\lambda_2^{-1}) O(r^{-2} ) |\Bfr|^2 \\
 &&+\lambda_1 \Big[ O(Q^2 r^{-4})|\ov{\DDc}\c\mathfrak{F}|^2+O(ar^{-2})( |\nabc  \Bfr|^2+|\DDc \hot \Bfr|^2)\Big] \\
  &&+\lambda_2 \Big[ O(Q^2r^{-8})| \Xfr|^2\Big] + \D_\a (\ov{\Bfr} \c (\DDc \hot \Bfr)^\a - \ov{\Bfr} \nabc_4 \Bfr e_3^\a ),\\
 |\ov{\DDc} \c \Ffr|^2&\les&|\nabc_3\Ffr|^2+  |\nabc_4\Ffr|^2+r^{-2} |\Ffr|^2+O(ar^{-2}) |\Bfr|^2+(\lambda_1^{-1}+\lambda_2^{-1})O(r^{-2}) |\Ffr|^2 \\
 &&+ \lambda_1\Big[ r^2|\DDc \hot \mathfrak{B}|^2+O(ar^{-2})\big( |\nabc \Ffr|^2+ |\ov{\DDc} \c \Ffr|^2 \big)\Big]\\
 &&+\lambda_2\Big[O(Q^2r^{-4}) | A|^2+O(ar^{-4}) |\Xfr|^2\Big]  + \D_\a (\ov{\Ffr} \c (\ov{\DDc} \c \Ffr)^\a-\overline{\Ffr} \c \nabc_4 \Ffr e_3^\a).
\eeaa
Using the elliptic identities \eqref{eq:elliptic-estimates-psi1}-\eqref{eq:elliptic-estimates-psi2}, and recalling that $T=O(1) e_3+O(1)e_4+O(ar^{-1})e_a$, we can control $|\nab \Bfr|^2$ and $|\nab \Ffr|^2$ by $  |   \DDc \hot \Bfr |^2$ and $ |\ov{\DDc} \c \Ffr|^2$, and obtain
\bea
  |   \nab \Bfr |^2&\les & |\nab_3\Bfr|^2+  |\nab_4\Bfr|^2+r^{-2} |\Bfr|^2 +O(ar^{-10}, Q^2r^{-6})| \Ffr|^2+ O(ar^{-6})(|\nab_3 \Ffr|^2+|\nab_4 \Ffr|^2) \nonumber \\
  &&+(\lambda_1^{-1}+\lambda_2^{-1}) O(r^{-2} ) |\Bfr|^2 +\lambda_1 \Big[ O(Q^2 r^{-4})|\nab\mathfrak{F}|^2+O(ar^{-2}) |\nab  \Bfr|^2\Big] \nn \\
  &&+\lambda_2 \Big[ O(Q^2r^{-8})| \Xfr|^2\Big] + \D_\a \big(\ov{\Bfr} \c (\nabla \Bfr)^\a - \ov{\Bfr} \nabc_4 \Bfr e_3^\a \big), \label{eq:bound-nab-Bfr}\\
 |\nab \Ffr|^2&\les&|\nab_3\Ffr|^2+  |\nab_4\Ffr|^2+r^{-2} |\Ffr|^2+O(ar^{-2}) |\Bfr|^2+O(a)(|\nab_3 \Bfr|^2 +|\nab_4\Bfr|^2) \nn \\
 &&+(\lambda_1^{-1}+\lambda_2^{-1})O(r^{-2}) |\Ffr|^2 + \lambda_1\Big[ r^2|\nab \mathfrak{B}|^2+O(ar^{-2}) |\nabc \Ffr|^2\Big]\nn \\
 &&+\lambda_2\Big[O(Q^2r^{-4}) | A|^2+O(ar^{-4}) |\Xfr|^2\Big]  + \D_\a \big(\ov{\Ffr} \c (\nab \Ffr)^\a-\overline{\Ffr} \c \nabc_4 \Ffr e_3^\a\big). \label{eq:bound-nab-Ffr}
\eea
We now multiply \eqref{eq:bound-nab-Bfr} by $r^{p+7}$ and \eqref{eq:bound-nab-Ffr} by $r^{p+3}$ and integrate over $\MM(0, \tau)$. We treat the boundary terms as follows:
\beaa
\int_{\MM(0, \tau)}r^{p+3} \D_\a \big(\ov{\Ffr} \c (\nab \Ffr)^\a-\overline{\Ffr} \c \nabc_4 \Ffr e_3^\a\big)&=&\int_{\MM(0, \tau)}  \D_\a \big(r^{p+3}\ov{\Ffr} \c (\nab \Ffr)^\a-r^{p+3}\overline{\Ffr} \c \nabc_4 \Ffr e_3^\a\big)\\
&&+\int_{\MM(0, \tau)}e_3(r^{p+3})\big(\overline{\Ffr} \c \nabc_4 \Ffr \big)\\
&\les& \Bdot_p[\Bfr, \Ffr](0, \tau)+ \EFdot_p[\Bfr, \Ffr](0, \tau)\\
&& + |\int_{\Sigma_\tau}   \big(r^{p+3}\ov{\Ffr} \c (\nab \Ffr)^\a\big)(n_{\Sigma_\tau})_\a|\\
&&+|\int_{\Sigma_0}  \big(r^{p+3}\ov{\Ffr} \c (\nab \Ffr)^\a\big)(n_{\Sigma_0})_\a|
\eeaa
where we used that $\nabla \FF^\a$ is only horizontal and the boundary terms at $\mathscr{I}^+$ vanish. 
By separating the boundary terms by Cauchy-Schwarz for $\lambda_3>0$ and using that $g(n_{\Sigma_\tau}, e_a)=O(|a|r^{-1})$, we obtain
\beaa
\int_{\MM(0, \tau)}r^{p+3} \D_\a \big(\ov{\Ffr} \c (\nab \Ffr)^\a-\overline{\Ffr} \c \nabc_4 \Ffr e_3^\a\big)&\les& \Bdot_p[\Bfr, \Ffr](0, \tau)+ \EFdot_p[\Bfr, \Ffr](0, \tau)\\
&&+\lambda_3^{-1}\int_{\Sigma_\tau} r^{p+1}|\Ffr|^2+ \lambda_3\int_{\Sigma_\tau} r^{p+3}|\nab \Ffr|^2 \\
&&+\int_{\Sigma_0} r^{p+1}|\Ffr|^2+ \int_{\Sigma_0} r^{p+3}|\nab \Ffr|^2 \\
&\les& \Bdot_p[\Bfr, \Ffr](0, \tau)+(1+\lambda_3^{-1}) \EFdot_p[\Bfr, \Ffr](0, \tau)\\
&&+ \lambda_3\int_{\Sigma_\tau} r^{p+3}|\nab \Ffr|^2+\int_{\Sigma_0} r^{p+3}|\nab \Ffr|^2, 
\eeaa
and similarly for the boundary terms involving $\Bfr$. We finally deduce
\beaa
\int_{\MM(0, \tau)} \Big(r^{p+7}  |   \nab \Bfr |^2+ r^{p+3} |\nab \Ffr|^2 \Big)&\les &(1+\lambda_1^{-1}+\lambda_2^{-1}) \Bdot_p[\Bfr, \Ffr](0, \tau)+(1+ \lambda_3^{-1}) \EFdot_p[\Bfr, \Ffr](0, \tau) \\
&&+\lambda_1 \int_{\MM(0, \tau)} \big( r^{p+5} |\nab  \Bfr|^2+ r^{p+3}|\nab\mathfrak{F}|^2\big) \\
  &&+\lambda_2 \int_{\MM(0, \tau)} \big( Q^2r^{p-1} | A|^2+r^{p-1}| \Xfr|^2\big) \\
  &&+ \lambda_3\int_{\Sigma_\tau}\big(r^{p+7}|\nabla \Bfr|^2+r^{p+3}|\nabla \Ffr|^2\big)\\
  &&+ \int_{\Sigma_0}\big(r^{p+7}|\nabla \Bfr|^2+r^{p+3}|\nabla \Ffr|^2\big). 
\eeaa
By taking $\lambda_1$ small enough, we can absorb the second line of the right hand side on the left, and we obtain
\bea\label{eq:bound-angular-spacetime}
\begin{split}
\int_{\MM(0, \tau)} \Big(r^{p+7}  |   \nab \Bfr |^2+ r^{p+3} |\nab \Ffr|^2 \Big)&\les  \BEFdot_p[\Bfr, \Ffr](0, \tau) +E_p[\Bfr, \Ffr](0)\\
  &+ \lambda_3\int_{\Sigma_\tau}\big(r^{p+7}|\nabla \Bfr|^2+r^{p+3}|\nabla \Ffr|^2\big)\\
  &+\lambda_2\int_{\MM(0, \tau)} \big( Q^2r^{p-1} | A|^2+r^{p-1}| \Xfr|^2\big). 
  \end{split}
\eea
We now use the above to control the energy norms for the angular derivatives of $\Bfr$ and $\Ffr$. For any integer $2\leq n \leq \tau-2$, consider a cut-off function $\ka_n(\tau)$ such that $0\leq \ka_n \leq 1$, $\ka_n(\tau)=1$ on $\{ n \leq \tau \leq n+1\}$ and $\ka_n$ vanishes on $\tau \leq n-1$ and $\tau \geq n+2$. We now multiply \eqref{eq:bound-nab-Bfr} by $r^{p+7}\ka_n(\tau)$ and \eqref{eq:bound-nab-Ffr} by $r^{p+3}\ka_n(\tau)$ and integrate over $\MM(0, \tau)$. In view of the support properties of the cut-off function $\ka_n(\tau)$, we have
\beaa
\begin{split}
\int_{\MM(n-1, n+2)} \Big(r^{p+7}  |   \nab \Bfr |^2+ r^{p+3} |\nab \Ffr|^2 \Big)&\les  \BEFdot_p[\Bfr, \Ffr](0, \tau) +E_p[\Bfr, \Ffr](0)\\
  &+\lambda_2\int_{\MM(0, \tau)} \big( Q^2r^{p-1} | A|^2+r^{p-1}| \Xfr|^2\big). 
  \end{split}
\eeaa
By the mean value theorem, we infer the existence of $\tau_{(n)} \in [n-1, n+2]$ such that 
\beaa
\begin{split}
\int_{\Sigma_{\tau_{(n)}}} \Big(r^{p+7}  |   \nab \Bfr |^2+ r^{p+3} |\nab \Ffr|^2 \Big)&\les  \BEFdot_p[\Bfr, \Ffr](0, \tau) +E_p[\Bfr, \Ffr](0)\\
  &+\lambda_2\int_{\MM(0, \tau)} \big( Q^2r^{p-1} | A|^2+r^{p-1}| \Xfr|^2\big). 
  \end{split}
\eeaa
Let $\tau \geq 0$. By local energy estimates, it suffices to consider the case $\tau \geq 5$. We then choose $n$ such that $0 \leq n-1 \leq n+2 \leq \tau_2 < n+3$, so that $\tau_{(n)} +1 \leq \tau \leq \tau_{(n)} + 3$ and hence, using local energy estimates between $\tau_{(n)}$ and $\tau$ we infer from the previous estimate
\bea\label{eq:bound-angular-spacetime-energy}
\begin{split}
\int_{\Sigma_{\tau}} \Big(r^{p+7}  |   \nab \Bfr |^2+ r^{p+3} |\nab \Ffr|^2 \Big)&\les  \BEFdot_p[\Bfr, \Ffr](0, \tau) +E_p[\Bfr, \Ffr](0)\\
  &+\lambda_2\int_{\MM(0, \tau)} \big( Q^2r^{p-1} | A|^2+r^{p-1}| \Xfr|^2\big). 
  \end{split}
\eea
By summing \eqref{eq:bound-angular-spacetime} and \eqref{eq:bound-angular-spacetime-energy} and taking $\lambda_3 \ll 1$ small enough, we can absorb the energy terms on the RHS of \eqref{eq:bound-angular-spacetime} by the LHS and we obtain for $\lambda=\lambda_2>0$, 
 \beaa
&&\int_{\MM(0, \tau)} \Big(r^{p+7}  |   \nab \Bfr |^2+ r^{p+3} |\nab \Ffr|^2 \Big)+\int_{\Sigma_\tau} \Big(r^{p+7}  |   \nab \Bfr |^2+ r^{p+3} |\nab \Ffr|^2 \Big)\\
 &\les & \BEFdot_p[\Bfr, \Ffr](\tau_1, \tau_2)+ E_p[\Bfr, \Ffr](0) +\lambda B_p[A, \Xfr](0, \tau)\\
 &\les & \BEF_p[\psi_1, \psi_2](0, \tau) + E_p[\Bfr, \Ffr](0) +|a| \BEF_p[\Bfr, \Ffr](0, \tau)+\lambda B_p[A, \Xfr](0, \tau),
\eeaa
where we used Lemma \ref{prop:control-BEFdot} in the second line.
Finally, summing the above with the previously obtained \eqref{eq:propo-control-BEFdot}, we deduce
 \beaa
\BEF_p[\Bfr, \Ffr](0, \tau)&\les &\BEF_p[\psi_1, \psi_2](0, \tau) + E_p[\Bfr, \Ffr](0) +|a| \BEF_p[\Bfr, \Ffr](0, \tau)+\lambda B_p[A, \Xfr](0, \tau).
\eeaa
For $|a| \ll M$ sufficiently small, we conclude the proof of the lemma.
 \end{proof}

        \subsubsection{Bounds for $\BEF_p[A, \Xfr](0, \tau)$}
        
        We complete the proof by obtaining control of the norms for $A$ and $\Xfr$.

 \begin{lemma}\label{lemma:control-A-Xfr} The following estimate holds true for $0 < p <2$:
 \bea
 \BEF_p[A, \Xfr](0, \tau)\les E_p[A, \Xfr](0)+\BEF_p[\Bfr, \Ffr](0, \tau).
 \eea
 \end{lemma}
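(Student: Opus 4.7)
The strategy is to apply the general transport estimate of Lemma \ref{lemma:general-transport-estimate} to the factorized relations \eqref{eq:factorization-A} and \eqref{eq:factorization-Xfr}, which have the schematic form $\nabc_3 \Phi_1 = \Phi_2$ with
\beaa
\Phi_1 \;=\; \frac{1}{(\tr\Xb)^3}\PF A \;=\; O(r) A, \qquad \Phi_2 \;=\; O(r^3) \DDc \hot \Bfr + O(ar) \Bfr + O(1)\Ffr,
\eeaa
and an analogous pair for $\Xfr$. This is the same pattern used in Lemma \ref{prop:control-BEFdot} to obtain $\BEFdot_p[\Bfr, \Ffr]$ from $\BEF_p[\psi_1, \psi_2]$, except that here the right-hand sides $\Phi_2$ are already controlled by $\BEF_p[\Bfr, \Ffr](0, \tau)$ through the angular and zeroth-order norms of $\Bfr$, $\Ffr$ contained in the spacetime bulk $B_p[\Bfr, \Ffr]$.

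The first step is to apply \eqref{eq:general-estimate-commuted-Rhat} directly with $p$ replaced by the weight needed to match the $r^{p+3}$ factor in the definition of $E_p[A,\Xfr]$ and $B_p[A,\Xfr]$. Since $\Phi_1 = O(r) A$, this amounts to choosing $p'$ such that $r^{p'-3}|\Phi_1|^2 \sim r^{p+1}|A|^2$, and analogously for $\Xfr$. The right-hand side then produces spacetime integrals of $r^{p+5}|\DDc\hot\Bfr|^2$, $r^{p-1}|\Bfr|^2$, $r^{p+1}|\Ffr|^2$ and similarly $r^{p+1}|\ov{\DDc}\c\Ffr|^2$, $r^{p+1}|\Ffr|^2$, $r^{p+5}|\Bfr|^2$, all of which are majorized by $\BEF_p[\Bfr, \Ffr](0, \tau)$ (after using the elliptic identities of Section \ref{section:elliptic-estimates} to pass between $|\DDc\hot|$, $|\ov{\DDc}\c|$ and $|\nab|$). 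The horizon and $\Si_\tau$ boundary contributions combine into $F_{\mathcal{H}^+}[A,\Xfr](0,\tau)$ and $E_p[A,\Xfr](\tau)$, and the initial data term gives $E_p[A,\Xfr](0)$.

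To recover the angular norms $|\nab A|^2$ and $|\nab \Xfr|^2$ present in $E_p[A,\Xfr]$ and $B_p[A,\Xfr]$, the next step is to commute the transport equations \eqref{eq:factorization-A}, \eqref{eq:factorization-Xfr} with a horizontal angular derivative $\nab$. Using the commutator identities of Appendix \ref{sec:appendix-commutators}, $[\nab, \nabc_3]$ produces terms of the form $O(r^{-1}) \nab \Phi_1 + O(r^{-2}) \Phi_1$, which can be absorbed back into the transport estimate. The resulting right-hand side now involves one more angular derivative on $\Bfr$ and $\Ffr$, i.e.\ terms like $r^3 \nab \DDc\hot \Bfr$, which are still controlled by $B_p[\Bfr,\Ffr]$ through the full first-order control of $(\nab_3, \nab_4, \nab)\Bfr$ and $(\nab_3, \nab_4, \nab)\Ffr$ contained in $\BEF_p[\Bfr,\Ffr]$. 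The $a^2 \int r^{p-1}|\nab \Phi_1|^2$ error term on the right of \eqref{eq:general-estimate-commuted-Rhat} has exactly the structure of the left-hand side angular norm we are trying to bound and can be absorbed for $|a|\ll M$.

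The main technical obstacle is the bookkeeping of $r$-weights: the transport identity \eqref{eq:factorization-A} relates $\nabc_3 A$ to $\DDc \hot \Bfr$ with a $r^3$ weight, which is dangerous at large $r$. One must verify that the weight $p+7$ on $\nab\Bfr$ in $\BEF_p[\Bfr,\Ffr]$ exactly compensates this factor so that the $r^p$-hierarchy closes for all $0<p<2$. Assuming this matches (as is confirmed by the parallel weights chosen in the gRW definitions), the combination of the transport estimate, its angular commutation, and the elliptic replacement of $|\DDc\hot|, |\ov{\DDc}\c|$ by $|\nab|$ delivers
\beaa
\BEF_p[A,\Xfr](0,\tau) \;\les\; E_p[A,\Xfr](0) + \BEF_p[\Bfr,\Ffr](0,\tau),
\eeaa
as claimed.
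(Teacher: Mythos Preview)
Your plan loses a derivative. The right-hand sides $\Phi_2$ in \eqref{eq:factorization-A}--\eqref{eq:factorization-Xfr} already carry an angular derivative of $\Bfr$ and $\Ffr$ (the terms $O(r^3)\DDc\hot\Bfr$ and $O(r)\ov{\DDc}\c\Ffr$). If you apply the commuted transport estimate \eqref{eq:general-estimate-commuted-Rhat}, its right-hand side requires $|\nab_{\Rhat}\Phi_2|^2$ and $|\nab_4\Phi_2|^2$, hence $\nab_{\Rhat}\DDc\hot\Bfr$ and $\nab_4\DDc\hot\Bfr$; these are second-order in $\Bfr$ and are not contained in $\BEF_p[\Bfr,\Ffr]$ at the base level $s=0$. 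Your second step, commuting with $\nab$, makes the situation worse: the resulting right-hand side contains $\nab\DDc\hot\Bfr$, again second-order angular. The phrase ``still controlled by $B_p[\Bfr,\Ffr]$ through the full first-order control'' is exactly where the argument fails.

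The paper avoids this loss by a different mechanism. It applies only the basic transport Lemma~\ref{lemma:general-transport} to obtain the zeroth-order bulk and flux for $A,\Xfr$. The $\nab_3$ derivatives are then read off \emph{pointwise} from \eqref{relation-F-A-B} and \eqref{nabc-3-mathfrak-X} themselves, costing only first derivatives of $\Bfr,\Ffr$. For angular control, the companion $\nabc_4$-transport relations \eqref{relation-F-B-A}--\eqref{nabb-4-mathfrak-B} are used to express $\DDc\hot\Xfr$ and $\ov{\DDc}\c A$ in terms of $\nabc_4\Ffr,\nabc_4\Bfr$ and lower order. Finally, a \emph{bilinear} version of the elliptic identities \eqref{relation-DDb-DD-hot-lap}--\eqref{relation-DD-hot-DDb-lap}, obtained by multiplying them for $\Bfr,\Ffr$ against $\ov{\Xfr},\ov{A}$ and integrating by parts, converts control of $\nab\Bfr,\nab\Ffr,\DDc\hot\Xfr,\ov{\DDc}\c A$ into control of $\nab\Xfr,\nab A$ without ever requiring a second derivative of $\Bfr$ or $\Ffr$. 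This bilinear trick and the use of both the $\nabc_3$ and $\nabc_4$ transport relations are the missing ingredients in your proposal.
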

 \begin{proof} By applying Lemma \ref {lemma:general-transport}  to \eqref{eq:factorization-A}-\eqref{eq:factorization-Xfr}, with respectively $\Phi_1=\frac{1}{(\tr\Xb)^3}\PF A=O(Qr)A$, $\Phi_2=O(r^3)  \DD \hot \mathfrak{B}+ O(ar) \mathfrak{B} +O(1)\mathfrak{F}$ and $\Phi_1=\frac{1}{\ov{\tr\Xb}} \Xfr=O(r)\Xfr$, $\Phi_2=O(r)\ov{\DD} \c \mathfrak{F}  +O(ar^{-1})\mathfrak{F}+O(r)\mathfrak{B}$, we deduce (shifting $p$ to $p+2$)
  \beaa
\nn\int_{\MM(0, \tau)} Q^2 r^{p+1} |A|^2+ \int_{\Sigma_\tau} Q^2r^{p+2}|A|^2+\int_{\mathcal{H}^+(0, \tau)}Q^2|\Delta^2A|^2 &\les& \int_{\MM(0, \tau)} r^{p+1} \Big( r^6 |\DDc \hot \mathfrak{B}|^2+ a^2r^2 |\mathfrak{B}|^2 +|\mathfrak{F}|^2\Big)\\
&&+ \int_{\Si_0}Q^2 r^{p+2}|A|^2, 
\eeaa
and
 \beaa
\nn\int_{\MM(0, \tau)}  r^{p+1} |\Xfr|^2+ \int_{\Sigma_\tau} r^{p+2}|\Xfr|^2 + \int_{\HH^+(0, \tau)} |\Delta^2\Xfr|^2 &\les& \int_{\MM(0, \tau)} r^{p+1}\Big(r^2|\ov{\DDc} \c \mathfrak{F}|^2  +a^2r^{-2}| \mathfrak{F}|^2+r^2|\mathfrak{B}|^2\Big)\\
&&+ \int_{\Si_0} r^{p+2}|\Xfr|^2.
\eeaa
Notice that the right hand side can be bounded by $B_p[\Bfr, \Ffr](\tau_1, \tau_2)$, giving
  \beaa
&&\nn\int_{\MM(0, \tau)}r^{p+1} \big(Q^2  |A|^2+ |\Xfr|^2\big)+ \int_{\Sigma_\tau} r^{p+2}\big(Q^2|A|^2+ |\Xfr|^2\big)+ \int_{\HH^+(0, \tau)}\big(Q^2 |\Delta^2A|^2+ |\Delta^2\Xfr|^2\big) \\
&\les&B_p[\Bfr, \Ffr](0, \tau)+ \int_{\Si_0} r^{p+2}\big(Q^2|A|^2+|\Xfr|^2\big).
\eeaa
Using relations \eqref{relation-F-A-B}-\eqref{nabc-3-mathfrak-X}, we can add to the above the control of $Q^2\nab_3A$ and $\nab_3\Xfr$, and using \eqref{relation-F-B-A}-\eqref{nabb-4-mathfrak-B} the control of $\DD\hot \Xfr$ and $\ov{\DD} \c A$, as those can also be controlled by the respective norms of $\nabla \Bfr$, $\nabla \Ffr$, $\Bfr$, $\Ffr$ giving
  \bea\label{eq:intermediate-bound-Xfr-A}
  \begin{split}
&\int_{\MM(0, \tau)}r^{p+3} \big(Q^2|\nab_3A|^2+Q^2|\ov{\DD} \c A|^2+Q^2r^{-2}  |A|^2+|\nab_3\Xfr|^2+|\DD\hot \Xfr|^2+ r^{-2}|\Xfr|^2\big)\\
&+ \int_{\Sigma_\tau} r^{p+4}\big(Q^2r^{-1}|\nab_3 A|^2+Q^2  |\ov{\DD}\c A|^2+Q^2r^{-2}|A|^2+r^{-1}|\nab_3\Xfr|^2+ |\DD\hot \Xfr|^2+r^{-2} |\Xfr|^2\big)\\
&+ \int_{\HH^+(0, \tau)}\big(Q^2 |\Delta^2A|^2+ |\Delta^2\Xfr|^2\big) \\
&\les\BEF_p[\Bfr, \Ffr](0, \tau)\\
&+ \int_{\Si_0} r^{p+4}\big(Q^2r^{-1}|\nab_3 A|^2+Q^2  |\ov{\DD}\c A|^2+Q^2r^{-2}|A|^2+r^{-1}|\nab_3\Xfr|^2+ |\DD\hot \Xfr|^2+r^{-2} |\Xfr|^2\big).
\end{split}
\eea
We finally need to add to the above the control for the angular derivatives of $A$ and $\Xfr$. Using \eqref{relation-DDb-DD-hot-lap}-\eqref{relation-DD-hot-DDb-lap} applied to $\Bfr$, $\Ffr$ respectively and multiplying them by $\ov{\Xfr}$, $\ov{A}$ respectively we obtain
\bea
 \ov{\Xfr} \c \big(   \DDb \c ( \DD \hot \Bfr)\big) &=& 2\ov{\Xfr} \c \lap_1  \Bfr +2\Kh \ov{\Xfr} \c  \Bfr+ i \ov{\Xfr} \c (\atrch\nab_3+\atrchb \nab_4) \Bfr   \\
\ov{A} \c \big(   \DD \hot (\DDb \c \Ffr))&=& 2\ov{A} \c \lap_2  \Ffr-4\Kh \ov{A} \c \Ffr  - i \ov{A} \c (\atrch\nab_3+\atrchb \nab_4) \Ffr.  
   \eea
   Using Lemma \ref{lemma:adjoint-operators} and integrating by parts the horizontal Laplacian, we obtain the bilinear analogue of \eqref{eq:elliptic-estimates-psi1}-\eqref{eq:elliptic-estimates-psi2} for the above, i.e.
   \beaa
\nab \Bfr \c \nab \ov{\Xfr}-\Kh \Bfr \c \ov{\Xfr}&=&\frac 1 2 (\DD \hot  \Bfr  ) \c (\ov{\DD} \hot \ov{\Xfr})+  \frac{2a\cos\th}{|q|^2} \Im(  \nab_T  \Bfr  \c\ov{\Xfr})\\
&&+\D_\a\Re \big( \nab^\a \Bfr\c \ov{\Xfr}+ (\DD \hot  \Bfr) \c \ov{\Xfr} \big),\\
\nab \Ffr \c \nab \ov{A} +2\Kh \Ffr \c \ov{A}&=&\frac 1 2(\ov{\DD}\c \Ffr) \c (\DD \c \ov{A})-  \frac{2a\cos\th}{|q|^2} \Im(  \nab_T  \Ffr  \c\ov{A})\\
&&+\D_\a\Re \big( \nab^\a \Ffr\c \ov{A}+ (\DD \hot  \Ffr) \c \ov{A} \big).
\eeaa
Observe that from the above we can control $\nabla \Xfr$ and $\nabla A$ in terms of $\nabla \Bfr, \nabla \Ffr, \nabla_T\Bfr, \nabla_T \Ffr, \Bfr, \Ffr, \DD\hot \Xfr, \DDb \c A, \Xfr, A$. Upon integration on $\MM(0, \tau)$ of the above equality and using the intermediate bound \eqref{eq:intermediate-bound-Xfr-A}, we obtain
\beaa
\int_{\MM(0, \tau)} r^{p+3}Q^2|\nab A|^2 +r^{p+3} |\nab \Xfr|^2 &\les& \BEF_p[\Bfr, \Ffr](0, \tau) + E_p[A, \Xfr](0).
\eeaa
We now use the above to control the energy norms for the angular derivatives of $A$ and $\Xfr$, as done in the proof of Lemma \ref{lemma:transport-angular-Bfr-Ffr}. More precisely, by multiplying upon a cut-off function $\ka_n(\tau)$ which vanishes on $\tau \leq n-1$ and $\tau \geq n+2$. By the mean value theorem, we infer the existence of $\tau_{(n)} \in [n, n+1]$ such that 
\beaa
\int_{\Sigma_{\tau_{(n)}}} r^{p+3}Q^2|\nab A|^2 +r^{p+3} |\nab \Xfr|^2 &\les& \BEF_p[\Bfr, \Ffr](0, \tau) + E_p[A, \Xfr](0).
\eeaa
By local energy estimates, we conclude the same bound for the energy at time $\tau$. 
By putting the above together, we finally prove the lemma. 
 \end{proof}

 \subsubsection{Conclusion: proof of Proposition \ref{lemma:crucial2}}
 
 We can finally conclude the proof. Combining Lemma \ref{lemma:transport-angular-Bfr-Ffr} with Lemma \ref{lemma:control-A-Xfr} we obtain for $|a| \ll M$ and $\lambda >0$
  \beaa
\BEF_p[\Bfr, \Ffr](0, \tau)  &\les & \BEF_p[\psi_1, \psi_2](0, \tau)+ E_p[\Bfr, \Ffr](0) +\lambda  \BEF_p[A, \Xfr](0, \tau)\\
&\les & \BEF_p[\psi_1, \psi_2](0, \tau)+ E_p[\Bfr, \Ffr](0) +\lambda  \Big(E_p[A, \Xfr](0)+\BEF_p[\Bfr, \Ffr](0, \tau)\Big).
\eeaa
For $\lambda \ll 1$ sufficiently small, one can absorb the term $\BEF_p[\Bfr, \Ffr](0, \tau)$ on the left hand side and deduce
  \beaa
\BEF_p[\Bfr, \Ffr](0, \tau) &\les & \BEF_p[\psi_1, \psi_2](0, \tau)+ E_p[\Bfr, \Ffr, A, \Xfr](0).
\eeaa
Using the above in Lemma \ref{lemma:control-A-Xfr} we deduce
 \beaa
 \BEF_p[A, \Xfr](0, \tau)\les \BEF_p[\psi_1, \psi_2](0, \tau)+ E_p[\Bfr, \Ffr, A, \Xfr](0), 
 \eeaa
which combined with the above completes the proof of Proposition \ref{lemma:crucial2} for $s=0$.

  In order to deduce the proof of Proposition \ref{lemma:crucial2} for higher derivatives, we argue by iteration: we assume that Proposition \ref{lemma:crucial2} holds for $s=j$ and show that the iteration assumption implies that it holds for $s=j+1$. This is done by commuting the transport equations \eqref{eq:nabc3-Bfr-psi1}-\eqref{eq:nabc3-Ffr-psi2}-\eqref{eq:factorization-A}-\eqref{eq:factorization-Xfr} with $\Lieb_T$, $q \ov{\DD}\c$, $\ov{q} \DD\hot$ and $\nab_4$, using that
  \beaa
\, [\nabc_3, \Lieb_T]U &=& 0,\\
 \, [\nabc_3, q\ov{\DDc}\c] U &=& O(ar^{-1})\nabc_3U + O(ar^{-2})U,  \\
  \, [\nabc_3, \ov{q}\DD\hot] U &=& O(ar^{-1})\nabc_3U + O(ar^{-2})U,  \\
 \, [\nabc_3, \nabc_4] U&=& O(ar^{-2})\nab U+O(r^{-3})U.
 \eeaa
Using the iteration assumption for the commuted systems, we infer that Proposition \ref{lemma:crucial2} holds for $s=j$ derivatives replaced with $\Bfr, \Ffr, A, \Xfr$ replaced with their $(\Lieb_\T, q\ov{\DDc}\c, \ov{q} \DD\hot, \nabc_4)$ derivatives. Since these derivatives span all derivatives, using the iteration assumption to absorb lower order terms in differentiability, we infer that Proposition \ref{lemma:crucial2} holds for $s=j$ with $\Bfr, \Ffr, A, \Xfr$ replaced with their $\dk^{\leq 1} (\Bfr, \Ffr, A, \Xfr)$ derivatives.  For more details, see also Section 11.4.5 of \cite{GKS}. This end the proof of Proposition  \ref{lemma:crucial2}.

\appendix

\section{Proof of Theorem \ref{main-theorem-RW}}\label{proof-thm-derivation-equations}\addtocontents{toc}{\protect\setcounter{tocdepth}{1}}

We summarize here the steps of the proof of Theorem \ref{main-theorem-RW} appearing in \cite{Giorgi7}, with a notation that is consistent with our modified version of the Theorem. 

\subsection{Step 1: The commutators}

The derivation of the commutators appeared as Proposition 7.4 in \cite{Giorgi7}, where the coefficients $Z^{\Bfr}_4$, $Z^{\Ffr}_4$ appearing below were imposed to vanish. Here we allow them to be non vanishing. 

\begin{proposition}\label{proposition-commutator}\label{proposition-commutator-F-A} 
Let $\mathfrak{P}=\mathcal{P}_{C_1}(\mathfrak{B})= \nabc_3 \Bfr + C_1  \Bfr$ and $\mathfrak{Q}=\mathcal{P}_{C_2}(\mathfrak{F})= \nabc_3 \Ffr + C_2  \Ffr$, with $C_1$ and $C_2$ given by
\bea\label{eq:first-assumptions-C1-C2}
C_1= 2\trchb +\widetilde{C_1}, \qquad C_2= \trchb + \widetilde{C_2}
\eea
where $\widetilde{C_1}$ and $\widetilde{C_2}$ are complex functions satisfying $\widetilde{C_1}, \widetilde{C_2}=O(|a|r^{-2})$.
Then the commutators between the Chandrasekhar operators $\mathcal{P}_{C_1}$ and $\PP_{C_2}$ and the Teukolsky operators $\TT_1$ and $\TT_2$ are respectively given by
\beaa
\, [\mathcal{P}_{C_1}, \TT_1](\mathfrak{B})&=&2 \etab \c \nab\mathfrak{P} - \trchb  \nab_4\mathfrak{P}  +\hat{V}_1   \mathfrak{P}- \frac 1 2 \left(\tr \Xb +\ov{\tr\Xb}\right)\M_1[\mathfrak{F}, \mathfrak{X}]-L_{\Pfr}[\Bfr, \Ffr], \\
\, [\mathcal{P}_{C_2}, \TT_2](\mathfrak{F})&=&2 \etab \c \nab\mathfrak{Q} - \trchb  \nab_4\mathfrak{Q}  + \hat{V}_2  \mathfrak{Q}- \frac 1 2 \left(\tr \Xb +\ov{\tr\Xb}\right) \M_2[A, \mathfrak{X}, \mathfrak{B}] -L_{\Qfr}[\Bfr, \Ffr] ,
\eeaa
where
\begin{itemize}
\item the potentials $\hat{V}_1$ and $\hat{V}_2$ are given by
\beaa
\hat{V}_1&=& I^{\Bfr}_{3} +J^{\Bfr}_3 +K^{\Bfr}_3+M^{\Bfr}_3=  - \frac 5 2 \trch \trchb -4\rho -2 \rhoF^2+ O(ar^{-3})\\
\hat{V}_2&=& I^{\Ffr}_{3} +J^{\Ffr}_3 +K^{\Ffr}_3+M^{\Ffr}_3=  - \frac 3 2 \trch \trchb -4\rho -2 \rhoF^2+ O(ar^{-3})
\eeaa
\item $L_{\Pfr}[\Bfr, \Ffr]$ and $L_{\Qfr}[\Bfr, \Ffr]$ are linear first order operators in $\Bfr$ and $\Ffr$, given by
\beaa
L_{\Pfr}[\Bfr, \Ffr]&:=&-Z^{\Bfr}_4 \nabc_4 \Bfr-Z^{\Bfr}_{a} \c \nabc \Bfr-\left( 4\trchb \rhoF^2 +Z^{\Bfr}_0\right) \Bfr,\\
L_{\Qfr}[\Bfr, \Ffr]&:=&-Z^{\Ffr}_4 \nabc_4 \Ffr-Z^{\Ffr}_{a} \c \nabc \Ffr+\left(4\trchb \rhoF^2-Z^{\Ffr}_0\right)  \Ffr
\eeaa
where $Z^{\Bfr}_{a}$ and $Z^{\Ffr}_{a}$ are complex one-forms and $Z^{\Bfr}_4$, $Z^{\Ffr}_4$, $Z^{\Bfr}_0$, $Z^{\Ffr}_0$ are complex functions of $(r, \th)$, all of which vanish for zero angular momentum, having the following fall-off in $r$:
\beaa
Z^{\Bfr}_4, Z^{\Ffr}_4,Z^{\Bfr}_{a}, Z^{\Ffr}_{a}=O(ar^{-3}), \qquad Z^{\Bfr}_{0}, Z^{\Ffr}_{0}=O(ar^{-4})
\eeaa
More precisely, 
\beaa
Z^{\Bfr}_4=\nabc_3 \widetilde{C_1} + \trchb \widetilde{C_1}, \qquad Z^{\Ffr}_4=\nabc_3 \widetilde{C_2} + \trchb \widetilde{C_2},
\eeaa
and
\beaa
Z^{\Bfr}_{a}= I^{\Bfr}_a + J^{\Bfr}_a + L^{\Bfr}_a+M_a^{\Bfr}-2\etab \c C_1, \qquad Z^{\Ffr}_{a}= I^{\Ffr}_a + J^{\Ffr}_a + L^{\Ffr}_a+M_a^{\Ffr}-2\etab \c C_2  
\eeaa
\end{itemize}
\end{proposition}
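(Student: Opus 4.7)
The computation is a lengthy but systematic expansion of the commutators, and I would organize it exactly along the lines of Proposition 7.4 of \cite{Giorgi7}, modified to keep track of the extra freedom introduced by the non-zero coefficients $\widetilde{C}_1, \widetilde{C}_2$ in the Chandrasekhar transformations. The starting point is the identity
\beaa
[\mathcal{P}_{C_i}, \TT_i] \psi \;=\; [\nabc_3, \TT_i]\psi \;+\; [C_i, \TT_i]\psi,
\eeaa
so the whole argument reduces to computing $[\nabc_3, \TT_i]$ on a complex horizontal tensor, plus the multiplicative part coming from the potential $C_i$. Writing the Teukolsky operator schematically as $\TT_i = -\nabc_3\nabc_4 + \tfrac 12 \ov{\DDc}\c\DDc\hot$ (or $\tfrac 12 \DDc\hot \ov{\DDc}\c$) plus transport and zero-th order terms, the commutator $[\nabc_3, \TT_i]$ splits into a principal part coming from the $[\nabc_3, \nabc_3\nabc_4]$ and $[\nabc_3, \ov{\DDc}\c\DDc\hot]$ pieces, and a lower order part coming from the commutator with the first-order coefficients such as $\trchb, \atrchb, \trch, \atrch, H, \Hb$, etc.

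My plan is to carry out these four ingredient computations in order. First, I would compute $[\nabc_3, \nabc_4]$ acting on horizontal $k$-tensors in Kerr--Newman using the frame equations (these yield the terms $I^{\Bfr}_*$ in the notation of \cite{Giorgi7}). Second, I would compute $[\nabc_3, \ov{\DDc}\c \DDc\hot]$ by commuting $\nabc_3$ past one horizontal derivative at a time and using the structure equations for $\nabc_3 \chib$, $\nabc_3 \chi$, and the Codazzi/Mumford equations that relate $\nabc\chi$ to curvature (these yield the $J^{\Bfr}_*$ and $K^{\Bfr}_*$ terms). Third, I would commute $\nabc_3$ past the first-order coefficients in $\TT_i$, producing the $L^{\Bfr}_*$, $M^{\Bfr}_*$ terms via the transport equations $\nabc_3 \trchb + \tfrac 12 (\trchb)^2 + \tfrac 12 (\atrchb)^2 = 0$ and analogous identities for $H, \Hb, \rho, \rhoF$. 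Finally, $[C_i, \TT_i]\psi$ is easy: since $C_i$ depends only on $(r,\th)$, only the principal part contributes via $[C_i, \nabc_3\nabc_4]$ and $[C_i, \ov{\DDc}\c \DDc\hot]$; the latter vanishes (since $C_i$ is scalar) and the former produces precisely the term $-\nabc_3 C_i \cdot \nabc_4 \Bfr - \nabc_4 C_i \cdot \nabc_3 \Bfr + \nabc_3\nabc_4 C_i \cdot \Bfr$, which when reorganized using $\mathfrak{P} = \nabc_3\Bfr + C_1 \Bfr$ produces the transport term $-\trchb \nabc_4 \mathfrak{P}$ and contributes to $\hat{V}_1$ and $Z_4^\Bfr$.

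The key algebraic step is the reorganization of all these contributions into a \emph{transport equation for} $\mathfrak{P}$ and $\mathfrak{Q}$: that is, to show that, after using the specific choice $C_1 = 2\trchb + \widetilde{C}_1$ with $\widetilde{C}_1 = O(|a|r^{-2})$, all terms with two derivatives of $\Bfr$ can be absorbed into $2\etab\c\nab \mathfrak{P} - \trchb \nabc_4\mathfrak{P}$, modulo the multiplicative factor $-\tfrac 12(\tr\Xb + \ov{\tr\Xb})\M_1[\Ffr, \Xfr]$ (which comes from the Teukolsky RHS rather than the homogeneous part). The leading coefficient $2\trchb$ in $C_1$ is what forces the $\nabc_4\Bfr$ coefficient in the final transport equation to collapse to $\trchb$; this is precisely the calculation that was already carried out in Proposition 7.4 of \cite{Giorgi7} for the case $\widetilde{C}_1 = \widetilde{C}_2 = 0$, and the only new work is to track the extra first-order perturbation introduced by $\widetilde{C}_1, \widetilde{C}_2$. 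Since these are $O(|a|r^{-2})$, they contribute terms of the form $\nabc_3 \widetilde{C}_1 + \trchb \widetilde{C}_1 = O(|a|r^{-3})$, which is exactly the stated coefficient $Z^{\Bfr}_4$.

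The main obstacle, which will consume most of the work but is ultimately bookkeeping rather than a conceptual difficulty, is to identify the combinations of Ricci and curvature components that arise and verify that they recombine into $\hat{V}_1, \hat{V}_2$ of the stated form $-\tfrac{n}{2}\trch\trchb - 4\rho - 2\rhoF^2 + O(|a|r^{-3})$. Here one must use both the Bianchi identities (to convert $\nabc$ of curvature into lower order curvature combinations) and the electrovacuum null structure equations (to handle the $\rhoF^2$ contribution from the Einstein--Maxwell equations). Once this is done, the residual first-order errors in $\Bfr, \Ffr$ collect into $Z^{\Bfr}_4, Z^{\Bfr}_a, Z^{\Bfr}_0$ and their $\Ffr$-analogues, and the stated $a$-decay follows because every error term either vanishes on the spherically symmetric background (where $\atrchb, \atrch, \dual\rho$ and the non-integrability defects of the horizontal distribution all vanish) or is proportional to $\widetilde{C}_1, \widetilde{C}_2 = O(|a|r^{-2})$. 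With this accounting done, the statement of the Proposition follows directly, and the remainder of Theorem \ref{main-theorem-RW} is obtained by using the transport equations for $\Bfr, \Ffr, A, \Xfr$ once more to convert $\nabc_4\mathfrak{P}, \nabc_4\mathfrak{Q}$ into the correct combinations and to rewrite the RHS in the form $L_1, L_2$ of \eqref{eq:definition-L1}--\eqref{eq:definition-L2}.
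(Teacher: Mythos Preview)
Your proposal is correct in spirit and matches the paper's approach: the bulk of the computation is indeed deferred to Proposition 7.4 and Appendix C.1 of \cite{Giorgi7}, and the only genuinely new work here is to track the contribution of the perturbations $\widetilde{C}_1,\widetilde{C}_2$, which you correctly identify as producing $Z^{\Bfr}_4=\nabc_3\widetilde{C}_1+\trchb\,\widetilde{C}_1$ (and analogously for $Z^{\Ffr}_4$). The paper's proof is in fact even more terse than your outline: it simply quotes the decomposition $Z^{\Bfr}_4=I^{\Bfr}_4+J^{\Bfr}_4+L^{\Bfr}_4+\tfrac12(\tr\Xb+\ov{\tr\Xb})C_1$ from \cite{Giorgi7}, substitutes $C_1=2\trchb+\widetilde{C}_1$, and uses the transport equation $\nabc_3\tr\Xb+\tfrac12(\tr\Xb)^2=0$ to see the leading $2\trchb$ contribution cancel, leaving exactly $\nabc_3\widetilde{C}_1+\trchb\,\widetilde{C}_1$.

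One small slip in your sketch: the commutator $[C_i,\ov{\DDc}\c\DDc\hot]$ does \emph{not} vanish, since $C_i$ depends on $\th$ through $\trchb,\atrchb$; the angular derivative $\nabc C_i$ contributes to the first-order coefficient $Z^{\Bfr}_a$ (this is the $-2\nabc C_1$ piece visible inside $J^{\Bfr}_a$ in Step~4 of the appendix). This does not affect the overall structure of your argument, but you would need to include those terms when carrying out the bookkeeping.
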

\begin{proof} The proof of most parts of this Proposition appears in Appendiz C.1 in \cite{Giorgi7}. The only point that is modified is that the coefficients $Z^{\Bfr}_4$ and $Z^{\Ffr}_4$ are not imposed to vanish. Then, with the choice of $C_1$ and $C_2$ given by \eqref{eq:first-assumptions-C1-C2}, we deduce 
\beaa
Z^{\Bfr}_4&=& I^\Bfr_4+J^\Bfr_4+L^\Bfr_4+\frac 1 2\left(\tr \Xb +\ov{\tr\Xb}\right)C_1 \\
&=&  \nabc_3C_1- \frac 1 2 \left(\tr \Xb +\ov{\tr\Xb}\right)\left(\frac{3}{2}\tr\Xb +\frac 1 2\ov{\tr\Xb}\right)-\nabc_3 \left(\frac{3}{2}\tr\Xb +\frac 1 2\ov{\tr\Xb}\right)+\frac 1 2\left(\tr \Xb +\ov{\tr\Xb}\right)C_1 \\
&=&  \nabc_3C_1+\frac {1}{ 2}\left(\tr \Xb +\ov{\tr\Xb}\right) C_1-  \tr\Xb\ov{\tr\Xb}\\
&=&  \nabc_3(2\trchb + \widetilde{C_1})+(2\trchb + \widetilde{C_1}) \trchb  - (\trchb^2+\atrchb^2)\\
&=&  - \big( \trchb^2-\atrchb^2\big) +\nabc_3 \widetilde{C_1}+(2 \trchb + \widetilde{C_1}) \trchb  - (\trchb^2+ \atrchb^2)=\nabc_3 \widetilde{C_1} + \trchb \widetilde{C_1},
\eeaa
and similarly for $Z^{\Ffr}_4$.
\end{proof}

\subsection{Step 2: The rescaling and the right hand side of the equations}

This step is not modified from \cite{Giorgi7}, see Proposition 7.5 and Proposition 7.6 in \cite{Giorgi7}, which we summarize here.

\begin{proposition}\label{prop:rescaling-f} Let $f_1$ and $f_2$ be given by $f_1=q^{1/2} \ov{q}^{9/2}$, $f_2=q \ov{q}^2$,
then $\pf=f_1 \Pfr$ and $\qf^\F=f_2 \Qfr$ satisfy the following wave equations:
\beaa
 \squared_1\pf-i  \frac{2a\cos\th}{|q|^2}\nab_T \pf  -\dot{V}_1  \pf &=&4Q^2 \frac{\ov{q}^3 }{|q|^5} \left(  \ov{\DD} \c  \qf^\F  \right) + L_\pf[\Bfr, \Ffr]  \\
\squared_2\qf^\F-i  \frac{4a\cos\th}{|q|^2}\nab_T \qf^\F -\dot{V}_2  \qf^\F &=&-   \frac 1 2\frac{q^3}{|q|^5} \left(  \DD \hot  \pf  -\frac 3 2 \left( H - \Hb\right)  \hot \pf \right) + L_{\qf^\F}[\Bfr, \Ffr],
 \eeaa
where
     \bea
        \dot{V}_1&:=&\check{V}_1=  \widetilde{V}_1+ f_1^{-1}\square_g(f_1) \label{eq:definition-V1-dot}\\
        \dot{V}_2&:=& \check{V}_2+3 \ov{P}+2\PF\ov{\PF}= \widetilde{V}_2+ f_2^{-1}\square_g(f_2)+3 \ov{P}+2\PF\ov{\PF}\label{eq:definition-V2-dot}
        \eea
and 
\beaa
\widetilde{V}_1&=& \frac{9}{2}\tr\Xb \ov{\tr X} +4 \PF \ov{\PF}-9 \ov{ \Hb} \c  H\\
&&+ \frac 14 \trch\trchb+\frac 1 4 \atrch\atrchb+ \rho-\rhoF^2-\dual\rhoF^2+i \left(- \rhod+ \eta \wedge \etab  \right)-\hat{V}_1, \\
\widetilde{V}_2&=& \frac 3 4 \tr \Xb  \ov{\tr X}+ \frac 1 4\ov{\tr \Xb}   \tr X-3\ov{P} +P -4\PF\ov{\PF} + \frac 3 2\ov{\DDc}\c H-\eta \c \etab -i \eta \wedge \etab\\
&&-\frac 1 2  \trch\trchb- \frac 1 2 \atrch\atrchb-2\rho+2\rhoF^2+2\dual\rhoF^2+i \left(- 2\rhod+2 \eta \wedge \etab  \right)- \hat{V}_2
\eeaa
and the lower order terms are given by 
\beaa
L_\pf[\Bfr, \Ffr] &=& q^{1/2} \ov{q}^{9/2} \Big[-Z^{\Bfr}_4 \nabc_4 \Bfr-Z^{\Bfr}_{a} \c \nabc \Bfr+ 2 \PF\ov{\PF} \ Y^{\Ffr}_{a} (\ov{\DDc}\c\mathfrak{F})\Big] \\
&&+ O(|a|r) \Bfr +  O(|a|Q^2r^{-2})( \mathfrak{F}, \ \mathfrak{X}),
\eeaa
and
\beaa
L_{\qf^\F}[\Bfr, \Ffr] &=& q\ov{q}^2 \Big[ \big(W_4^{\Ffr}-Z^{\Ffr}_4\big) \nabc_4 \mathfrak{F}+\left( W_a^{\Ffr}  -Z^{\Ffr}_{a} \right) \c \nabc \mathfrak{F}+W_a^{\Xfr} \DDc \hot \mathfrak{X}  + W_{a}^{\Bfr} \DDc\hot \Bfr \Big]\\
&&+O(|a|)  \mathfrak{B} +  O(|a|r^{-1}) ( \mathfrak{F} , \ \mathfrak{X}  ).
\eeaa
 \end{proposition}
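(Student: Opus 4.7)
\smallskip

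\noindent\textbf{Proof plan.} The strategy is to compose the Chandrasekhar operators $\PP_{C_1}, \PP_{C_2}$ with the Teukolsky equations \eqref{Teukolsky-B}--\eqref{Teukolsky-F}, convert the resulting Teukolsky-type operators into genuine wave operators on horizontal tensors, and then carry out the scalar rescaling by $f_1, f_2$ designed to cancel the $\nab_4$ first-order derivative terms. First, applying $\PP_{C_1}$ to \eqref{Teukolsky-B} and $\PP_{C_2}$ to \eqref{Teukolsky-F} and using the commutator identities of Proposition \ref{proposition-commutator-F-A}, I would obtain
\beaa
\TT_1(\Pfr) + 2\etab\c\nab\Pfr - \trchb\,\nab_4\Pfr + \hat{V}_1\Pfr &=& \PP_{C_1}\M_1[\Ffr,\Xfr] + \tfrac{1}{2}(\tr\Xb+\ov{\tr\Xb})\M_1[\Ffr,\Xfr] + L_\Pfr[\Bfr,\Ffr],\\
\TT_2(\Qfr) + 2\etab\c\nab\Qfr - \trchb\,\nab_4\Qfr + \hat{V}_2\Qfr &=& \PP_{C_2}\M_2[A,\Xfr,\Bfr] + \tfrac{1}{2}(\tr\Xb+\ov{\tr\Xb})\M_2[A,\Xfr,\Bfr] + L_\Qfr[\Bfr,\Ffr].
\eeaa
Using the elliptic relations \eqref{relation-DDb-DD-hot-lap}--\eqref{relation-DD-hot-DDb-lap} to convert $\tfrac{1}{2}\ov{\DDc}\c\DDc\hot$ and $\tfrac{1}{2}\DDc\hot\ov{\DDc}\c$ into horizontal Laplacians plus Gauss curvature and anti-symmetric $\atrch,\atrchb$ corrections, together with the standard identity $\squared_k = -\tfrac{1}{2}(\nab_3\nab_4+\nab_4\nab_3)+\lap_k+(\text{Ricci curvature terms})$, I would recast $\TT_i$ as $-\squared_i + (\text{first-order in }\nab_3,\nab_4,\nab) + V^{\TT}_i$. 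Matching the potential pieces that remain after this rewriting against the commutator's $\hat{V}_i$ identifies precisely the expressions for $\widetilde{V}_1,\widetilde{V}_2$ stated in \eqref{eq:definition-V1-dot}--\eqref{eq:definition-V2-dot}.

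\smallskip

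\noindent Next, I would perform the rescaling via the identity $\squared_i(f_i\Pfr_i)=f_i\squared_i\Pfr_i+2\Db^\mu f_i\,\Db_\mu\Pfr_i+(\square_g f_i)\Pfr_i$ (with $\Pfr_1=\Pfr$, $\Pfr_2=\Qfr$). The powers $f_1=q^{1/2}\ov{q}^{9/2}$ and $f_2=q\ov{q}^2$ are designed so that $2\,e_4(\log f_i)$ cancels exactly the real part $-\trchb$ of the commutator's $\nab_4$ first-order coefficient, leaving behind only the imaginary residue $-i\,\tfrac{2a\cos\th}{|q|^2}$ (respectively $-i\,\tfrac{4a\cos\th}{|q|^2}$) times $\nab_T$: this is what produces the $-i\,\tfrac{2a\cos\th}{|q|^2}\nab_T\pf$ and $-i\,\tfrac{4a\cos\th}{|q|^2}\nab_T\qf^\F$ terms on the left-hand side. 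The remaining $(\square_g f_i)\Pfr_i$ contribution becomes $(f_i^{-1}\square_g f_i)\pf$ (resp.\ $\qf^\F$) after division by $f_i$ and is absorbed into the potential, producing $\dot V_i=\widetilde V_i+f_i^{-1}\square_g f_i$ (plus the signature-dependent shift $3\ov{P}+2\PF\ov{\PF}$ in $\dot V_2$ coming from the 2-tensor wave operator's additional Ricci-curvature terms). The $2\,\Db^\mu f_i\,\Db_\mu\Pfr_i$ piece combines with the commutator's $2\etab\c\nab\Pfr$ to contribute to the $\nab_T$ first-order term and to the lower-order part absorbed into $L_\pf,L_{\qf^\F}$.

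\smallskip

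\noindent For the coupling terms on the right-hand sides, I would expand $\PP_{C_1}(\M_1[\Ffr,\Xfr])$ using \eqref{definition-MM1}: its leading piece $4\PF\ov{\PF}\cdot 2\ov{\DDc}\c\Ffr$ under the action of $\nabc_3+C_1$ generates $\ov{\DDc}\c\nabc_3\Ffr$ plus commutator corrections; substituting $\nabc_3\Ffr=f_2^{-1}\qf^\F-C_2\Ffr$ and collecting the scalar coefficient $8\PF\ov{\PF}\cdot f_1 f_2^{-1}$ together with the explicit conformal weights of $\nabc_3,\ov{\DDc}$ in Kerr-Newman yields $4Q^2\frac{\ov{q}^3}{|q|^5}\ov{\DD}\c\qf^\F$. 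The analogous computation for $\PP_{C_2}(\M_2[A,\Xfr,\Bfr])$, whose leading piece is $-\tfrac{1}{2}\DDc\hot\Bfr-(H+\Hb)\hot\Bfr$, produces $-\tfrac{1}{2}\frac{q^3}{|q|^5}(\DD\hot\pf-\tfrac{3}{2}(H-\Hb)\hot\pf)$, where the $-\tfrac{3}{2}(H-\Hb)\hot\pf$ contribution arises from combining the commutator of $\nabc_3$ with $\DDc\hot$ (which produces $H+\Hb$ corrections) against the rescaling's angular derivative of $f_2$ (which contributes an $H-\Hb$ piece). All remaining sub-leading contributions---namely the $\tfrac{1}{2}(\tr\Xb+\ov{\tr\Xb})\M_i$ terms, the sub-leading parts of $\PP_{C_i}(\M_i)$ involving $\Xfr$ and $A$, the curvature coupling through $\ov{\Hb}$, and the operators $L_\Pfr, L_\Qfr$ from Proposition \ref{proposition-commutator-F-A}---are collected into $L_\pf[\Bfr,\Ffr]$ and $L_{\qf^\F}[\Bfr,\Ffr]$ as stated. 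The main obstacle will be Steps~2--3 executed in tandem: the conversion between $\TT_i$ and $\squared_i$ must be arranged so that the complex logarithmic derivative $2e_4(\log f_i)$ matches the commutator's $-\trchb$ in its real part while delivering exactly the imaginary $\nab_T$ coefficient on the left-hand side, and this delicate compatibility is what fixes the specific complex powers $1/2,9/2$ for $f_1$ and $1,2$ for $f_2$.
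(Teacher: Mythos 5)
Your plan is essentially the proof: the paper gives no argument for this proposition but defers to Propositions 7.5 and 7.6 of \cite{Giorgi7}, where exactly the computation you describe is carried out --- composing the Chandrasekhar operators with the Teukolsky equations via the Step-1 commutators, converting $\TT_i$ into $-\squared_i$ plus first-order terms through the elliptic identities \eqref{relation-DDb-DD-hot-lap}--\eqref{relation-DD-hot-DDb-lap}, rescaling by $f_i$, and substituting $\nabc_3\Ffr=f_2^{-1}\qf^{\F}-C_2\Ffr$ (resp.\ $\nabc_3\Bfr=f_1^{-1}\pf-C_1\Bfr$) in the coupling terms. Two bookkeeping attributions in your sketch should be corrected before executing the computation: since $\g(e_3,e_4)=-2$ the coefficient of $\nab_4\Pfr$ produced by $2\Db^\mu f_1\,\Db_\mu\Pfr$ is $-e_3(\log f_1)$ rather than $2e_4(\log f_1)$ (it is the $e_3$-logarithmic derivative of $f_i$ that cancels the residual $\nab_4$ coefficient), and the term $-i\,\tfrac{2a\cos\th}{|q|^2}\nab_T\pf$ on the left-hand side originates not from an imaginary residue of that cancellation but from the $i(\atrch\nab_3+\atrchb\nab_4)$ contribution of the elliptic identity together with $\atrch\, e_3+\atrchb\, e_4=\tfrac{4a\cos\th(r^2+a^2)}{|q|^4}\That$, modulo $O(a^2)$ corrections absorbed into the lower order terms.
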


\subsection{Step 3: The reality of the potential and lower order terms}

This step is not modified from \cite{Giorgi7}, see Proposition 7.7 and Lemma 7.8 in \cite{Giorgi7}. Notice that in contrast with Lemma 7.8 in \cite{Giorgi7}, as a consequence of the modification of Step 1 (and the non-vanishing of $Z^{\Bfr}_4$ and $Z^{\Ffr}_4$) here $W_4^{\Ffr}$ and $W_a^{\Xfr}$ are not real functions.

\begin{proposition}
Let $C_1$, $C_2$ as in \eqref{eq:first-assumptions-C1-C2}. 
   \begin{enumerate}
   \item If $\Im(\widetilde{C}_1)=-\frac 5 2\atrchb $, then the potential $\dot{V}_1$ as given in \eqref{eq:definition-V1-dot} is real and the scalar $Z_4^{\Bfr}$ and the one-form $Z^{\Bfr}_{a}$ are real.
   \item If $\Im(\widetilde{C}_2)=-3\atrchb$, then the potential $\dot{V}_2$ as given in \eqref{eq:definition-V2-dot} is real and the scalar $Z_4^{\Ffr}$ and the one-form $W_a^{\Ffr}  -Z^{\Ffr}_{a}$ are real. 
   \end{enumerate}
\end{proposition}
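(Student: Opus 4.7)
The proof would follow the strategy of Proposition 7.7 and Lemma 7.8 in \cite{Giorgi7}, with the additional verification that the scalars $Z_4^{\Bfr}$ and $Z_4^{\Ffr}$, which now need not vanish because of the modified choice of $\Re(\widetilde{C}_i)$, are nonetheless real. The central algebraic input is the null transport equation $\nabc_3(\tr\Xb) = -\frac{1}{2}(\tr\Xb)^2$, which upon separating real and imaginary parts yields
\beaa
\nabc_3 \trchb \;=\; -\frac{1}{2}\big(\trchb^2-\atrchb^2\big), \qquad \nabc_3 \atrchb \;=\; -\trchb\,\atrchb.
\eeaa

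First I would verify the reality of $Z_4^{\Bfr} = \nabc_3 \widetilde{C_1} + \trchb \widetilde{C_1}$ by taking its imaginary part. Since $\trchb$ is real, we have $\Im(Z_4^{\Bfr}) = \nabc_3 \Im(\widetilde{C_1}) + \trchb \Im(\widetilde{C_1})$. Substituting $\Im(\widetilde{C_1}) = -\frac{5}{2}\atrchb$ and using the transport equation for $\atrchb$ gives
\beaa
\Im(Z_4^{\Bfr}) \;=\; -\tfrac{5}{2}\nabc_3 \atrchb - \tfrac{5}{2}\trchb\,\atrchb \;=\; \tfrac{5}{2}\trchb\,\atrchb - \tfrac{5}{2}\trchb\,\atrchb \;=\; 0.
\eeaa
The same argument with $\Im(\widetilde{C}_2) = -3\atrchb$ shows $\Im(Z_4^{\Ffr}) = 0$.

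For the reality of $\dot V_1$ and $\dot V_2$, the key observation is that the imaginary parts of $\widetilde{C}_i$ prescribed in the statement coincide exactly with the ones used in Proposition 7.7 of \cite{Giorgi7}. Since $\widetilde{V}_i$ depends on $\widetilde{C}_i$ only through the potential $\hat V_i$ obtained from the commutator, and the new real correction $\Re(\widetilde{C_1}) = -\frac{1}{2}\frac{(\atrchb)^2}{\trchb}$ (respectively $\Re(\widetilde{C_2}) = -2\frac{(\atrchb)^2}{\trchb}$) contributes only to the real part of $\hat V_i$, the cancellation $\Im(\widetilde{V}_i) + \Im(f_i^{-1}\square_g f_i) = 0$ established in \cite{Giorgi7} for the rescaling factors $f_1 = q^{1/2}\ov{q}^{9/2}$, $f_2 = q\ov{q}^2$ carries over unchanged.

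Finally, for the one-forms $Z^{\Bfr}_a$ and $W^{\Ffr}_a - Z^{\Ffr}_a$, I would decompose the sums $I^{\Bfr}_a + J^{\Bfr}_a + L^{\Bfr}_a + M^{\Bfr}_a - 2\etab \c C_1$ (and its analogue for $\Ffr$) into real and imaginary components, using the explicit formulas from Appendix C.1 of \cite{Giorgi7} together with the Kerr-Newman structure of $\eta, \etab$ (in particular $\eta+\etab$ and $\eta-\etab$ have definite parity under $\Re/\Im$). The new term $-2\etab\c(\widetilde{C}_1^{\mathrm{new}} - \widetilde{C}_1^{\mathrm{old}}) = \etab\,\frac{(\atrchb)^2}{\trchb}$ is a product of real quantities with $\etab$, so its imaginary part matches that of the $-2\etab\c(2\trchb)$ contribution modulo terms already shown to cancel in \cite{Giorgi7}. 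The main obstacle is purely computational bookkeeping: the operators $I^{\Bfr}_a, J^{\Bfr}_a, L^{\Bfr}_a, M^{\Bfr}_a$ each contain many terms mixing $H, \Hb, \eta, \etab, \PF, \tr\Xb$, and verifying the imaginary cancellations requires the null and electromagnetic transport equations in a systematic way. Since our modification only adds a real multiple of $\etab$ to the previously established identity, the computation reduces to checking that no new imaginary contribution is generated, which is immediate from the reality of $\Re(\widetilde{C}_i)$.
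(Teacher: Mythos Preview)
Your proposal is correct and takes essentially the same approach as the paper, which simply defers to Proposition 7.7 and Lemma 7.8 of \cite{Giorgi7} and remarks that the step is unmodified. You go further by explicitly verifying the reality of $Z_4^{\Bfr}$ and $Z_4^{\Ffr}$ via the transport identity $\nabc_3\atrchb = -\trchb\,\atrchb$, which is the one genuinely new point here since in \cite{Giorgi7} these scalars vanished identically; your observation that the modification $\Re(\widetilde{C}_i)\neq 0$ enters $J_a^{\Bfr}$, $\hat V_i$, and $-2\etab\cdot C_i$ only through real contributions (because $\etab$, $\trchb$, and $\nabc\Re(\widetilde{C}_i)$ are real) is exactly what reduces the remaining checks to the computations already carried out in \cite{Giorgi7}.
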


\subsection{Step 4: The vanishing of the one-forms $Z_1^{\Bfr}$ and $W_1^{\Ffr}  -Z^{\Ffr}_{1}$}

This part did not appear in \cite{Giorgi7}. 

   \begin{proposition}\label{prop:potentials-real} Let $C_1$, $C_2$ as in \eqref{eq:first-assumptions-C1-C2}. 
   \begin{enumerate}
   \item If $\Re(\widetilde{C}_1)=-\frac 1 2 \frac {(\atrchb)^2}{ \trchb}$, then the one-form $Z^{\Bfr}_{1}$ vanishes. Moreover, $\dot{V}_1$, $Z_4^{\Bfr}$ and $Z^{\Bfr}_{2}$ are given (in the outgoing frame) by
   \beaa
   \dot{V}_1= -\frac {1}{ 4} \trch\trchb  +5\rhoF^2+O(|a|r^{-4}), \qquad  Z_4^{\Bfr}=\frac{a^2\De^2\cos^2\th}{r^2|q|^6}, \qquad Z^{\Bfr}_{2}= \frac{2a\Delta \sin\th}{|q|^5}   .
   \eeaa
      \item If $\Re(\widetilde{C}_2)=-2 \frac {(\atrchb)^2}{ \trchb}$, then the one-form $W_1^{\Ffr}  -Z^{\Ffr}_{1}$ vanishes. Moreover, $\dot{V}_2$, $Z_4^{\Ffr}$ and  $W_2^{\Ffr}  -Z^{\Ffr}_{2}$ are given (in the outgoing frame) by
   \beaa
   \dot{V}_2=- \trch\trchb+2\rhoF^2+O(|a|r^{-4}), \qquad Z_4^{\Ffr}=\frac{4a^2\De^2\cos^2\th}{r^2|q|^6}, \qquad 
W_2^{\Ffr}  -Z^{\Ffr}_{2}= \frac{8a\Delta \sin\th}{|q|^5}.
   \eeaa
   \item Finally, with the above choices, the remaining terms are given by
    \beaa
 Y^{\Ffr}_{a} =3\frac {(\atrchb)^2}{ \trchb} -3  i \atrchb, \qquad W^{\Ffr}_4= -3 \atrchb^2  +  3  i  \frac {(\atrchb)^3}{ \trchb}, \\
W_a^{\Xfr}= -\frac3 2 \atrchb^2  + \frac 32  i  \frac {(\atrchb)^3}{ \trchb}, \qquad  W_{a}^{\Bfr} =\frac 3 4  \frac {(\atrchb)^2}{ \trchb}+  \frac 3 4  i  \atrchb .
 \eeaa
 In particular observe that $ \overline{Y^{\Ffr}_{a}}=4 W_{a}^{\Bfr} $ and $W^{\Ffr}_4=2W_a^{\Xfr}$.
   \end{enumerate}
   \end{proposition}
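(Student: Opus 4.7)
The plan is to carry out a direct computation, substituting the prescribed choices of $\widetilde{C}_1, \widetilde{C}_2$ into the formulas already derived in Proposition \ref{proposition-commutator} and Proposition \ref{prop:rescaling-f}, and verifying the stated vanishing and explicit formulas by working in the principal null frame on Kerr--Newman. Combining the result of Step 3 (fixing $\Im\widetilde{C}_1=-\tfrac52\atrchb$ and $\Im\widetilde{C}_2=-3\atrchb$ for reality) with the choices in the present proposition (fixing the real parts) gives exactly the $C_1, C_2$ appearing in Theorem \ref{main-theorem-RW}, so the statement amounts to checking that this precise algebraic choice produces the cancellations advertised.

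First I would compute $Z^{\Bfr}_4 = \nabc_3\widetilde{C}_1 + \trchb\,\widetilde{C}_1$ using the Kerr--Newman transport equations for $\trchb$ and $\atrchb$ in the outgoing frame (namely $\nabc_3\trchb = -\tfrac12(\trchb^2-\atrchb^2)+\cdots$ and the analogous formula for $\atrchb$), substituting $\widetilde{C}_1 = -\tfrac12\tfrac{\atrchb^2}{\trchb}-\tfrac52 i\,\atrchb$. The algebra reduces to the stated $\tfrac{a^2\Delta^2\cos^2\theta}{r^2|q|^6}$ after collecting the powers of $\trchb, \atrchb$ and using the Kerr--Newman identities $\trchb = -\tfrac{2r\Delta}{|q|^2(r^2+a^2)}$, $\atrchb = \tfrac{2a\cos\theta\,\Delta}{|q|^2(r^2+a^2)}$. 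The same method works for $Z^{\Ffr}_4$, with a factor of $4$ coming from the doubled imaginary part of $\widetilde{C}_2$.

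Next, to show that $Z^{\Bfr}_1 = 0$ (the $e_1$-component), I would unpack $Z^{\Bfr}_a = I^{\Bfr}_a + J^{\Bfr}_a + L^{\Bfr}_a + M^{\Bfr}_a - 2\etab\c C_1$ using the expressions carried over from \cite{Giorgi7}. The contribution $-2\etab\c C_1 = -2\etab\,(2\trchb + \widetilde{C}_1)$ splits into a part proportional to $\etab\,\trchb$ (real) and a part proportional to $\etab\,\widetilde{C}_1$. Since $\etab$ has a nontrivial $e_1$-component ($\etab_1 = -\tfrac{a^2\sin\theta\cos\theta}{|q|^3}$-type), the prescribed real part $\Re\widetilde{C}_1 = -\tfrac12\tfrac{\atrchb^2}{\trchb}$ is tuned precisely so that the $e_1$-projection of $-2\etab\c\widetilde{C}_1$ cancels the $e_1$-components of the remaining $I^{\Bfr}_1, J^{\Bfr}_1, L^{\Bfr}_1, M^{\Bfr}_1$ contributions. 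The residual $e_2$-component is then computed by direct substitution and collapses to $\tfrac{2a\Delta\sin\theta}{|q|^5}$. The $\Ffr$-side is parallel, with $\Re\widetilde{C}_2 = -2\tfrac{\atrchb^2}{\trchb}$ tuned to cancel against $-2\etab\c C_2$ and against the additional contribution from $W^{\Ffr}_1$ (present in $\TT_2$ but not in $\TT_1$).

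For the potentials, I would insert the chosen $\widetilde{C}_1, \widetilde{C}_2$ into \eqref{eq:definition-V1-dot} and \eqref{eq:definition-V2-dot}, simplifying $f_1^{-1}\square_g f_1$ and $f_2^{-1}\square_g f_2$ using $q = r + ia\cos\theta$. The imaginary contributions in $\hat V_1, \hat V_2$ cancel against $\rhod, \eta\wedge\etab$-terms by construction (that was the content of Step 3), and the remaining real expressions collapse to the stated forms modulo $O(|a|r^{-4})$. Finally, the coefficients $Y^{\Ffr}_a, W^{\Ffr}_4, W^{\Xfr}_a, W^{\Bfr}_a$ are read off from the expressions in Proposition \ref{prop:rescaling-f} (carried over from \cite{Giorgi7}) with the new $C_1, C_2$ plugged in; the identities $\overline{Y^{\Ffr}_a} = 4W^{\Bfr}_a$ and $W^{\Ffr}_4 = 2W^{\Xfr}_a$ follow by inspection and are exactly the relations needed later (in the proof of Proposition \ref{lemma:crucial1}) to obtain the adjointness cancellation in the coupling of the lower order terms. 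The main obstacle is purely bookkeeping: tracking the precise $e_1/e_2$ decomposition of each ``$I,J,L,M$'' term from \cite{Giorgi7} and verifying that the prescribed $\Re\widetilde{C}_1, \Re\widetilde{C}_2$ produce the cancellation of the $e_1$-components; no new analytic ideas are needed beyond the Kerr--Newman null structure equations.
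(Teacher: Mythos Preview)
Your proposal is correct and follows essentially the same route as the paper: compute $Z^{\Bfr}_a$ (resp.\ $Z^{\Ffr}_a-W^{\Ffr}_a$) by summing the $I,J,L,M$ contributions from \cite{Giorgi7} together with $-2\etab\c C_1$, reduce via the null structure equations for $\nabc_3\eta,\nabc_3\etab,\nabc\trchb$, and then evaluate the $e_1$ and $e_2$ components using the Kerr--Newman identities $\eta_1=\etab_1$, $\eta_2=-\etab_2$, $\trchb\,\etab_1=-\atrchb\,\dual\etab_1$ to see that the prescribed real parts force the $e_1$-cancellation (the paper in fact parametrizes $\Re(\widetilde C_1)=n\,\atrchb^2/\trchb$ and finds $Z^{\Bfr}_1=(2+4n)\atrchb\,\dual\etab_1$, whence $n=-\tfrac12$, and similarly $n=-2$ on the $\Ffr$-side). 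One small correction: in the outgoing frame the paper uses $\trchb=-\tfrac{2r\Delta}{|q|^4}$ and $\atrchb=\tfrac{2a\cos\theta\,\Delta}{|q|^4}$, not the $|q|^2(r^2+a^2)$ denominators you wrote, so adjust those before carrying out the final substitutions.
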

   \begin{proof} We use the expressions calculated in \cite{Giorgi7}. We start by considering $Z^{\Bfr}_{a}=\Re(Z^{\Bfr}_{a})$. We compute
        \beaa
      \Re(  I^{\Bfr}_a)&=& -  2\nabc_3(\eta-\etab)+\trchb (\eta-\etab)-\atrchb \dual (\eta-\etab)\\
\Re(L^{\Bfr}_a)&=&-  4 \trchb (\eta-\etab)\\
\Re(M_a^{\Bfr})&=& \nabc_3 \left( 7\eta+ 3  \etab  \right)-\frac  1 2   \trchb\, \left( 7\eta+ 3  \etab   \right)+\frac 1 2 \atrchb\, \dual \left( 7\eta+ 3  \etab   \right).
 \eeaa
and, recalling that $\Re(C_1)=2\trchb + \Re(\widetilde{C}_1)$
\beaa
\Re(J^{\Bfr}_a)&=&- \trchb\Big(2(\eta-\etab)  -\left( 7\eta+ 3  \etab \right)\Big)  - 2\nabc  (2\trchb + \Re(\widetilde{C}_1))  - \trchb \eta -\atrchb \dual \eta + \atrchb (-\dual\etab +\dual\eta)\\
&=&  - 4\nabc  (\trchb) -2\nabc( \Re(\widetilde{C}_1)) + \trchb (4\eta+5\etab)  - \atrchb \dual\etab.
\eeaa
We therefore obtain
\beaa
\Re(Z^{\Bfr}_{a})&=&  5\nabc_3\eta+  5\nabc_3\etab - 4\nabc  (\trchb)  -2\nabc \Re(\widetilde{C}_1)-2\Re(\widetilde{C}_1)\etab  \\
&&-\frac 5 2 \trchb (\eta- \etab)+\frac 1 2 \atrchb\, \dual \left( 5\eta+ 3  \etab   \right).
\eeaa
Using 
\beaa
\nabc_3 \etab &=&  -\frac{1}{2}\trchb(\etab-\eta)+\frac{1}{2}\atrchb(\dual\etab-\dual\eta)  \\
\nab_3 \eta&=&-\trchb \eta -\atrchb \dual \eta\\
\nabc (\trchb)&=& -\frac 3 2 \trchb   \left(\etab + \eta\right) -\frac 1 2 \atrchb  \left(  \dual \eta-   \dual \etab\right)+ \trchb \etab
\eeaa
we obtain
\beaa
\Re(Z^{\Bfr}_{a})&=&  5(-\trchb \eta -\atrchb \dual \eta)+  5( -\frac{1}{2}\trchb(\etab-\eta)+\frac{1}{2}\atrchb(\dual\etab-\dual\eta) )\\
&& - 4( -\frac 3 2 \trchb   \left(\etab + \eta\right) -\frac 1 2 \atrchb  \left(  \dual \eta-   \dual \etab\right)+ \trchb \etab) -2\nabc \Re(\widetilde{C}_1)-2\Re(\widetilde{C}_1)\etab  \\
&&-\frac 5 2 \trchb (\eta- \etab)+\frac 1 2 \atrchb\, \dual \left( 5\eta+ 3  \etab   \right)\\
&=&-2\big(\nabc \Re(\widetilde{C}_1)+\Re(\widetilde{C}_1)\etab\big)+ \atrchb(2\dual\etab-3\dual\eta) + \trchb   \left(2\etab + \eta\right)  .
\eeaa
Using that for $\Re(\widetilde{C_1})=n\frac {(\atrchb)^2}{ \trchb} $, we have \cite{GKS}
\beaa
\nabc \Re(\widetilde{C}_1)+\Re(\widetilde{C}_1)\etab&=&n  \frac {\atrchb^2}{ \trchb} \left(\frac 1 2\etab - \frac 3 2\eta\right)  +n\atrchb  (\dual \eta-  \dual \etab ) +n\frac 1 2\frac {\atrchb^3}{ \trchb^2} \left(  \dual \eta-   \dual \etab\right) ,
\eeaa
we obtain
\beaa
\Re(Z^{\Bfr}_{a})&=&-2n  \frac {\atrchb^2}{ \trchb} \left(\frac 1 2\etab - \frac 3 2\eta\right)  -n\frac {\atrchb^3}{ \trchb^2} \left(  \dual \eta-   \dual \etab\right)\\
&&+ \atrchb((2+2n)\dual\etab-(3+2n)\dual\eta) + \trchb   \left(2\etab + \eta\right)  .
\eeaa
Evaluating the above to $e_a=e_1$ and using that $\eta_1=\etab_1$ and $\dual\eta_1=-\dual\etab_1$, we have
\beaa
\Re(Z^{\Bfr}_{1})&=&2n  \frac {\atrchb^2}{ \trchb} \etab_1 +2n\frac {\atrchb^3}{ \trchb^2}  \dual \etab_1+ (5+4n)\atrchb\dual\etab_1 +3 \trchb  \etab_1  .
\eeaa
Using that $\trchb  \etab_1=-\atrchb \dual\etab_1$, we obtain
\beaa
\Re(Z^{\Bfr}_{1})&=&2n  \frac {\atrchb^2}{ \trchb^2} (-\atrchb \dual\etab_1) +2n\frac {\atrchb^3}{ \trchb^2}  \dual \etab_1+ (2+4n)\atrchb\dual\etab_1= (2+4n)\atrchb\dual\etab_1   .
\eeaa
If $n=-\frac1 2$, i.e. if $\Re(\widetilde{C_1})=-\frac 1 2 \frac {(\atrchb)^2}{ \trchb} $, then $\Re(Z^{\Bfr}_{1})=0$. 
On the other hand, evaluating at $e_a=e_2$ and using that $\eta_2=-\etab_2$ and $\dual\eta_2=\dual\etab_2$, we have for $n=-\frac 1 2$
\beaa
\Re(Z^{\Bfr}_{2})&=&-4n  \frac {\atrchb^2}{ \trchb} \etab_2- \atrchb\dual\etab_2 + \trchb  \etab_2\\
&=&-8n  \frac{a^3\Delta\cos^2\th\sin\th}{|q|^7} -\frac{2a\Delta\cos\th}{|q|^4}\frac{a^2\sin\th \cos\th}{|q|^3} +\frac{2r\Delta}{|q|^4} \frac{ar\sin\th }{|q|^3} =  \frac{2a\Delta \sin\th}{|q|^5}   .
\eeaa
With this choice for $\Re(\widetilde{C_1})$ we compute \cite{GKS}
 \beaa
Z^{\Bfr}_4&=&\Re(Z^{\Bfr}_4)=\nabc_3\Re( \widetilde{C_1}) + \trchb \Re(\widetilde{C_1})=\frac{a^2\De^2\cos^2\th}{r^2|q|^6}.
\eeaa

Similarly, we compute $\Re(Z^{\Ffr}_{a})$:
\beaa
\Re(I^{\Ffr}_a)&=& -  2\nabc_3(\eta-\etab)+\trchb (\eta-\etab)-\atrchb \dual (\eta-\etab) \\
\Re( L^{\Ffr}_a)&=&- 2\trchb(\eta-\etab) \\
 \Re(M_a^{\Ffr})&=& \nabc_3 \left(5\eta + \etab\right)-\frac  1 2   \trchb\, \left(5\eta + \etab \right)+\frac 1 2 \atrchb\, \dual \left(5\eta + \etab \right) \\
         \Re(J^{\Ffr}_a    )     &=&- \trchb\Big(2(\eta-\etab)  - \left(5\eta + \etab \right)\Big) - 2\nabc  (\trchb + \Re(\widetilde{C}_2))\\
         &&+ \atrchb\,   \dual \eta- \atrchb    \dual \etab - \trchb \eta - \atrchb  \dual \eta\\
          &=&- 2\nabc  (\trchb )- 2\nabc  ( \Re(\widetilde{C}_2))+ \trchb (2\eta+3\etab ) - \atrchb    \dual \etab .
\eeaa
We therefore obtain
\beaa
\Re(Z^{\Ffr}_{a})&=&  3\nabc_3\eta+3\nabc_3\etab- 2\nabc  (\trchb )- 2\nabc   \Re(\widetilde{C}_2)-2\Re(\widetilde{C}_2)\etab   \\
&&-\frac 3 2\trchb ( \eta- \etab)+\frac 1 2 \atrchb\, \dual \left(3\eta + \etab \right).
\eeaa
Using that $W_a^{\Ffr}=  -  3   \nabc_3 H$, we have $\Re(W_a^{\Ffr})=-3\nabc_3\eta$, and 
we deduce
\beaa
\Re(Z^{\Ffr}_{a}-W_a^{\Ffr})&=&  6\nabc_3\eta+3\nabc_3\etab- 2\nabc  (\trchb )- 2\nabc   \Re(\widetilde{C}_2)-2\Re(\widetilde{C}_2)\etab   \\
&&-\frac 3 2\trchb ( \eta- \etab)+\frac 1 2 \atrchb\, \dual \left(3\eta + \etab \right),
\eeaa
which gives
\beaa
\Re(Z^{\Ffr}_{a}-W_a^{\Ffr})&=&  6(-\trchb \eta -\atrchb \dual \eta)+3( -\frac{1}{2}\trchb(\etab-\eta)+\frac{1}{2}\atrchb(\dual\etab-\dual\eta) )\\
&&- 2(-\frac 3 2 \trchb   \left(\etab + \eta\right) -\frac 1 2 \atrchb  \left(  \dual \eta-   \dual \etab\right)+ \trchb \etab)- 2\nabc   \Re(\widetilde{C}_2)-2\Re(\widetilde{C}_2)\etab   \\
&&-\frac 3 2\trchb ( \eta- \etab)+\frac 1 2 \atrchb\, \dual \left(3\eta + \etab \right)\\
&=& - 2 \big( \nabc   \Re(\widetilde{C}_2)+\Re(\widetilde{C}_2)\etab \big) -3\trchb \eta+\trchb \etab  -5\atrchb \dual \eta+\atrchb\dual\etab.
\eeaa
If $\Re(\widetilde{C_2})=n\frac {(\atrchb)^2}{ \trchb} $, then 
\beaa
\Re(Z^{\Ffr}_{a}-W_a^{\Ffr})&=& - 2 n  \frac {\atrchb^2}{ \trchb} \left(\frac 1 2\etab - \frac 3 2\eta\right)  -n\frac {\atrchb^3}{ \trchb^2} \left(  \dual \eta-   \dual \etab\right) \\
&& -3\trchb \eta+\trchb \etab  -(5+2n)\atrchb \dual \eta+(1+2n)\atrchb\dual\etab.
\eeaa
Evaluating the above to $e_a=e_1$ as above, we obtain
\beaa
\Re(Z^{\Ffr}_{1}-W_1^{\Ffr})&=&  2 n  \frac {\atrchb^2}{ \trchb} \etab_1  +2n\frac {\atrchb^3}{ \trchb^2} \dual \etab_1  -2\trchb \etab_1  +(6+4n)\atrchb \dual \etab_1=   (8+4n)\atrchb \dual \etab_1.
\eeaa
If $n=-2$, i.e. if $\Re(\widetilde{C_2})=-2\frac {(\atrchb)^2}{ \trchb} $, then $\Re(Z^{\Ffr}_{1}-W_1^{\Ffr})=0$.
On the other hand evaluating at $e_a=e_2$, we have for $n=-2$
\beaa
\Re(Z^{\Ffr}_{2}-W_2^{\Ffr})&=& - 4 n  \frac {\atrchb^2}{ \trchb} \etab_2 -4\atrchb\dual\etab_2 +4\trchb \etab_2  \\
&=&-8n  \frac{a^3\Delta\cos^2\th\sin\th}{|q|^7} -\frac{8a\Delta\cos\th}{|q|^4}\frac{a^2\sin\th \cos\th}{|q|^3} +\frac{8r\Delta}{|q|^4} \frac{ar\sin\th }{|q|^3} = \frac{8a\Delta \sin\th}{|q|^5}.
\eeaa

With this choice for $\Re(\widetilde{C_2})$ we compute \cite{GKS}
\beaa
Z^{\Ffr}_4&=&\Re(Z^{\Ffr}_4)=\nabc_3\Re( \widetilde{C_2}) + \trchb \Re(\widetilde{C_2})=\frac{4a^2\De^2\cos^2\th}{r^2|q|^6}.
\eeaa

Finally, using the expressions of the coefficients obtained in \cite{Giorgi7}, we have
 \beaa
 Y^{\Ffr}_{a}&=& 2C_1-2C_2+\tr \Xb-3 \ov{\tr\Xb} =3\frac {(\atrchb)^2}{ \trchb} -3  i \atrchb \\
W^{\Ffr}_4&=& -\frac 32 \tr \Xb^2+ \ov{\tr \Xb} \tr\Xb  +\frac 1 2 \left(2\tr \Xb-\ov{\tr \Xb}  \right) C_2= -3 \atrchb^2  +  3  i  \frac {(\atrchb)^3}{ \trchb},\\
W_a^{\Xfr}&=& \frac 1 2 W^{\Ffr}_4= -\frac3 2 \atrchb^2  + \frac 32  i  \frac {(\atrchb)^3}{ \trchb}, \\
 W_{a}^{\Bfr}&=& \frac 1 2 \left(C_1- C_2 - \tr \Xb \right) =\frac 3 4  \frac {(\atrchb)^2}{ \trchb}+  \frac 3 4  i  \atrchb,
 \eeaa
 as stated.
   \end{proof}

\subsection{Step 5: End of the proof}

As a consequence of the above we have that with the choice $C_1= 2\trchb -\frac 1 2 \frac {(\atrchb)^2}{ \trchb}-\frac 5 2i\atrchb$ and $C_2= \trchb -2 \frac {(\atrchb)^2}{ \trchb}-3i\atrchb$, 
we have
\beaa
L_\pf[\Bfr, \Ffr] &=& -q^{1/2} \ov{q}^{9/2} \Big(\frac{a^2\De^2\cos^2\th}{r^2|q|^6} \nab_4 \Bfr+\frac{2a\Delta \sin\th}{|q|^5}  \nab_2 \Bfr\Big) + \frac{8Q^2}{|q|^4} q^{1/2} \ov{q}^{9/2} \ov{W_{a}^{\Bfr}} (\ov{\DDc}\c\mathfrak{F}) \\
&&+ O(|a|r) \Bfr +  O(|a|Q^2r^{-2})( \mathfrak{F}, \ \mathfrak{X}), \\
L_{\qf^\F}[\Bfr, \Ffr] &=& -q\ov{q}^2 \Big(\frac{4a^2\De^2\cos^2\th}{r^2|q|^6} \nab_4 \mathfrak{F}+ \frac{8a\Delta \sin\th}{|q|^5}  \nab_2 \mathfrak{F}\Big)+q\ov{q}^2 \Big[ W_4^{\Ffr}\left( \nabc_4 \mathfrak{F}+\frac 1 2 \DDc \hot \mathfrak{X} \right) + W_{a}^{\Bfr} \DDc\hot \Bfr \Big]\\
&&+O(|a|)  \mathfrak{B} +  O(|a|r^{-1}) ( \mathfrak{F} , \ \mathfrak{X}  )
\eeaa
where $W^{\Ffr}_4= -3 \atrchb^2  +  3  i  \frac {(\atrchb)^3}{ \trchb}$ and $W_{a}^{\Bfr} =\frac 3 4  \frac {(\atrchb)^2}{ \trchb}+  \frac 3 4  i  \atrchb $.
Observe that, writing in the outgoing normalization
\beaa
 \frac 1 2  \frac{\De}{r^2+a^2} e_4=T+\frac{a}{r^2+a^2}Z-\frac 1 2 \frac{|q|^2}{r^2+a^2}  e_3, \quad e_2=\frac{a\sin\th}{|q|}\nab_T +\frac{1}{|q|\sin\th}\nab_Z
\eeaa
we have
\beaa
\frac{a^2\De^2\cos^2\th}{r^2|q|^6} \nab_4+ \frac{2a\Delta \sin\th}{|q|^5}  \nab_2 &=& \frac{2a^2\De}{r^2|q|^4} \nab_T+\frac{2a\De }{r^2|q|^4} \nab_Z-\frac{a^2\De\cos^2\th}{r^2|q|^4}  \nab_3
\eeaa
and therefore we have
\beaa
&&q^{1/2} \ov{q}^{9/2} \Big(\frac{a^2\De^2\cos^2\th}{r^2|q|^6} \nab_4 \Bfr+\frac{2a\Delta \sin\th}{|q|^5}  \nab_2 \Bfr\Big)\\
&=& q^{1/2} \ov{q}^{9/2} \Big(\frac{2a^2\De}{r^2|q|^4} \nab_T\Bfr+\frac{2a\De }{r^2|q|^4} \nab_Z\Bfr-\frac{a^2\De\cos^2\th}{r^2|q|^4}  \nab_3\Bfr\Big)\\
&=& q^{1/2} \ov{q}^{9/2} \Big(\frac{2a^2\De}{r^2|q|^4} \nab_T\Bfr+\frac{2a\De }{r^2|q|^4} \nab_Z\Bfr\Big)-\frac{a^2\De\cos^2\th}{r^2|q|^4} \pf +O(a^2)\Bfr
\eeaa
and 
\beaa
&&q\ov{q}^2 \Big(\frac{4a^2\De^2\cos^2\th}{r^2|q|^6} \nab_4 \mathfrak{F}+ \frac{8a\Delta \sin\th}{|q|^5}  \nab_2 \mathfrak{F}\Big)\\
&=& q\ov{q}^2 \Big(\frac{8a^2\De}{r^2|q|^4} \nab_T\Ffr+\frac{8a\De }{r^2|q|^4} \nab_Z\Ffr-\frac{4a^2\De\cos^2\th}{r^2|q|^4}  \nab_3\Ffr\Big)\\
&=& q\ov{q}^2 \Big(\frac{8a^2\De}{r^2|q|^4} \nab_T\Ffr+\frac{8a\De }{r^2|q|^4} \nab_Z\Ffr\Big)- \frac{4a^2\De\cos^2\th}{r^2|q|^4} \qf^\F +O(a^2r^{-2}) \Ffr.
\eeaa
Also, using \eqref{relation-F-B-A}, we can write $\nabc_4 \mathfrak{F}+\frac 1 2 \DDc \hot \mathfrak{X}=-\PF A +O(r^{-1}) \Ffr+ O(ar^{-2}) \Xfr$ in $L_{\qf^\F}[\Bfr, \Ffr] $.

We therefore finally obtain
\bea
 \squared_1\pf-i  \frac{2a\cos\th}{|q|^2}\nab_T \pf  -V_1  \pf &=&4Q^2 \frac{\ov{q}^3 }{|q|^5} \left(  \ov{\DD} \c  \qf^\F  \right) + L_1\\
\squared_2\qf^\F-i  \frac{4a\cos\th}{|q|^2}\nab_T \qf^\F -V_2  \qf^\F &=&-   \frac 1 2\frac{q^3}{|q|^5} \left(  \DD \hot  \pf  -\frac 3 2 \left( H - \Hb\right)  \hot \pf \right) + L_2,
 \eea
 where $V_1= \dot{V}_1+ \frac{a^2\De\cos^2\th}{r^2|q|^4}$, $V_2=\dot{V}_2 +\frac{4a^2\De\cos^2\th}{r^2|q|^4} $
 and 
 \beaa
L_1 &=& -q^{1/2} \ov{q}^{9/2} \Big(\frac{2a^2\De}{r^2|q|^4} \nab_T\Bfr+\frac{2a\De }{r^2|q|^4} \nab_Z\Bfr\Big) +8Q^2 q^{-3/2} \ov{q}^{5/2} \ov{L}_{coupl} (\ov{\DDc}\c\mathfrak{F}) \\
&&+ O(|a|r) \Bfr +  O(|a|Q^2r^{-2})( \mathfrak{F}, \ \mathfrak{X})
\eeaa
\beaa
L_2 &=& -q\ov{q}^2 \Big(\frac{8a^2\De}{r^2|q|^4} \nab_T\Ffr+\frac{8a\De }{r^2|q|^4} \nab_Z\Ffr\Big) + q\ov{q}^2 L_{coupl} \DDc\hot \Bfr - Q q L_{A} A  \\
&&+O(|a|)  \mathfrak{B} +  O(|a|r^{-1}) ( \mathfrak{F} , \ \mathfrak{X}  )
\eeaa
 where $L_{coupl}=\frac 3 4  \frac {(\atrchb)^2}{ \trchb}+  \frac 3 4  i  \atrchb$ and $L_A= -3 \atrchb^2  +  3  i  \frac {(\atrchb)^3}{ \trchb}$.
 
 This completes the proof of Theorem \ref{main-theorem-RW}.

\section{Commutators and energy currents}\label{sec:appendix-commutators}

\subsection{Vectorfield commutators}

\begin{lemma}\label{lemma:comm-ov-DD-nabX} We have 
\beaa
\, [\ov{\DD} \c, \FF\nab_{\partial_r}]\psi &=& \frac{r \FF}{|q|^2} \, ( \ov{\DD} \c\psi )+ O(a^2r^{-4})\FF \psi, \\
 \, [\ov{\DD}\c, \nab_{\That_\chi}]\psi&=& O(ar^{-3}) \psi.
 \eeaa
\end{lemma}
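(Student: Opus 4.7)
My plan is to prove each identity separately, reducing both commutators to known Ricci-coefficient commutation formulas for the principal null frame of Kerr--Newman.

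For the first identity, the starting point is the decomposition of the coordinate vectorfield $\partial_r$ in the principal null frame. A direct computation from \eqref{eq:Out.PGdirections-Kerr} (inverting the linear system) gives
\[
\partial_r \;=\; \tfrac12 e_4 \;-\; \tfrac{|q|^2}{2\Delta}\, e_3,
\]
with no horizontal component. Thus
\[
[\ov{\DD}\c,\,\FF\nab_{\partial_r}]\,\psi \;=\; \tfrac{\FF}{2}\big[\ov{\DD}\c,\nab_4\big]\psi \;-\; \tfrac{\FF|q|^2}{2\Delta}\big[\ov{\DD}\c,\nab_3\big]\psi \;+\; (\text{derivatives of coefficients})\cdot \psi.
\]
The commutators $[\ov{\DD}\c,\nab_4]$ and $[\ov{\DD}\c,\nab_3]$ acting on a horizontal complex $2$-tensor are standard in the $\sk_k(\mathbb{C})$ formalism of \cite{GKS}: they are obtained by complexifying the usual identities $[\div,\nab_4]u$ and $[\div,\nab_3]u$, whose principal parts are governed by $\tr\chi,\tr\chib$ (giving a scalar multiple of $\ov{\DD}\c\psi$) while the remainder is a combination of $\hat\chi,\hat\chib,\atr\chi,\atr\chib,\eta,\etab,\xi,\xib$ contracted with one derivative of $\psi$ (times $\psi$ itself via curvature terms). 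For the Kerr--Newman principal null frame one has
\[
\tr\chi \;=\; \tfrac{2r}{|q|^2},\qquad \tr\chib \;=\; -\tfrac{2r\Delta}{|q|^2(r^2+a^2)} + O(a^2 r^{-3}),
\]
and all of $\hat\chi,\hat\chib,\atr\chi,\atr\chib,\eta,\etab,\xi,\xib$ are $O(|a|r^{-2})$. Assembling the contributions of $\tr\chi$ and $\tr\chib$:
\[
\tfrac{\FF}{2}\cdot \tr\chi \;-\; \tfrac{\FF|q|^2}{2\Delta}\cdot \tr\chib \;=\; \tfrac{\FF r}{|q|^2}\,+\,\tfrac{\FF r}{(r^2+a^2)}\,+\,O(a^2 r^{-3})\FF \;=\; \tfrac{rF}{|q|^2} + O(a^2 r^{-3})\FF,
\]
where the $O(a^2r^{-3})$ contributes to the error $O(a^2r^{-4})\FF\,\psi$ once multiplied against $\psi$ (taking into account the degree of $\psi$ in the commutator). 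All remaining pieces of the commutator give contributions of order $O(a^2 r^{-4})\FF\,\psi$, either because they carry explicit factors of $\hat\chi, \atr\chi$, etc., or because they involve curvature components that vanish in Schwarzschild and are $O(a^2 r^{-4})$ or better on Kerr--Newman. This produces the stated identity.

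For the second identity the key observation is that $T=\partial_t$ and $Z=\partial_\phi$ are Killing vectorfields of the Kerr--Newman metric, and hence their projected Lie derivatives $\Lieb_T,\Lieb_Z$ commute exactly with the horizontal operator $\ov{\DD}\c$ (see Lemma \ref{lemma:basicpropertiesLiebTfasdiuhakdisug:chap9} and surrounding results in the framework inherited from \cite{GKS}). The commutator with $\nab_T$ (resp.\ $\nab_Z$) then differs from the commutator with $\Lieb_T$ only by terms measuring the failure of integrability of the horizontal distribution, which in Kerr--Newman are controlled by $\eta-\etab = O(ar^{-2})$ and by the Christoffel symbols $\Gamma^a_{bT}, \Gamma^a_{bZ}$, all of size $O(ar^{-2})$ or better. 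Combined with the expansion $\That_\chi = T + \frac{a}{r^2+a^2}\chi(r) Z$ and the extra factor of $a$ multiplying $Z$, one obtains $[\ov{\DD}\c,\nab_{\That_\chi}]\psi = O(a r^{-3})\psi$, which is the claimed bound.

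The main technical point is the bookkeeping of the Ricci-coefficient contributions in the first identity, where one must verify that all non--$\tr$ pieces indeed degenerate as $O(a^2 r^{-4})$ after the cancellations between the $\nab_4$ and $\nab_3$ terms --- the presence of the factor $\tfrac{|q|^2}{\Delta}$ in the second term could potentially enhance near-horizon size, so one has to check that the $\chib$-type Ricci coefficients decay appropriately as $r\to r_+$. This is guaranteed in Kerr--Newman since each such coefficient either carries an explicit $\Delta$ or an explicit $a$, rendering the net contribution $O(a^2 r^{-4})\FF$ as stated.
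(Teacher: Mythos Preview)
Your argument for the first identity has two genuine gaps.

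First, the coefficient computation is wrong. The exact commutators from \cite{GKS} (Lemma 4.2.1) read
\[
[\nab_4,\ov{\DD}\c]\,U = -\tfrac12\ov{\tr X}\,(\ov{\DD}\c U - 2\ov{\Hb}\c U)+\ov{(\Hb+Z)}\c\nab_4 U,\qquad
[\nab_3,\ov{\DD}\c]\,U = -\tfrac12\ov{\tr\Xb}\,(\ov{\DD}\c U - 2\ov{H}\c U)+\ov{(H-Z)}\c\nab_3 U,
\]
so the principal coefficient is $\tfrac12\ov{\tr X}$, not $\tr\chi$. With $\ov{\tr X}=\tfrac{2}{\ov q}$ and $\ov{\tr\Xb}=-\tfrac{2\Delta\ov q}{|q|^4}$ one gets the \emph{exact} value $\tfrac{r\FF}{|q|^2}$ for the coefficient of $\ov{\DD}\c\psi$. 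Your displayed line $\tfrac{\FF r}{|q|^2}+\tfrac{\FF r}{r^2+a^2}=\tfrac{r\FF}{|q|^2}+O(a^2r^{-3})\FF$ is simply false (the left side is $\sim \tfrac{2\FF}{r}$), and reflects both a missing $\tfrac12$ and the use of $\tr\chi,\tr\chib$ instead of the complex traces.

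Second, and more seriously, you have not dealt with the first-derivative terms. The commutator formulas above produce contributions $\ov{(\Hb+Z)}\c\nab_4\psi$ and $\ov{(H-Z)}\c\nab_3\psi$, and the horizontal derivative of the coefficient $-\tfrac{|q|^2}{2\Delta}$ in $\partial_r=\tfrac12 e_4-\tfrac{|q|^2}{2\Delta}e_3$ produces an additional $-\tfrac{|q|^2}{2\Delta}(\ov H+\ov\Hb)\c\nab_3\psi$ (since $\ov{\DD}(|q|^2)=|q|^2(\ov H+\ov\Hb)$). Each of these is $O(|a|r^{-2})\nab_{3,4}\psi$, \emph{not} $O(a^2r^{-4})\psi$; your claim that ``all remaining pieces give $O(a^2r^{-4})\FF\psi$'' is therefore unjustified. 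In the paper's proof these terms cancel \emph{exactly}: combining the $\nab_3\psi$ contributions gives $-\tfrac{\FF|q|^2}{2\Delta}\,\ov{(\Hb+Z)}\c\nab_3\psi$, and in the outgoing frame one has the null-frame relation $\Hb=-Z$, killing both the $\nab_3\psi$ and $\nab_4\psi$ terms. Without invoking this relation the commutator is \emph{not} zeroth order in $\psi$, and the lemma does not follow.

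For the second identity your Lie-derivative route is workable, but the paper's argument is more direct: since $[T,e_a]=0$ one has $[\nab_T,\nab_{e_a}]\psi=\Rdot(T,e_a)\psi=O(ar^{-3})\psi$ purely from curvature, and similarly $[\nab_\phi,\nab_{e_a}]\psi=O(r^{-2})\psi$; the factor $\tfrac{a}{r^2+a^2}\chi$ in front of $Z$ then gives the claimed $O(ar^{-3})$.
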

\begin{proof} 
Recall that in the outgoing null frame we have  $\Rhat = \frac 1 2 \left( \frac{\De}{r^2+a^2} e_4-\frac{|q|^2}{r^2+a^2}  e_3\right)$. 
We compute
\beaa
\,[\ov{\DD} \c, \nab_{\Rhat}] &=&\frac 1 2 [\ov{\DD} \c,  \frac{\De}{r^2+a^2}\nab_{4}-\frac{|q|^2}{r^2+a^2}  \nab_3]\\
&=&\frac 1 2  \frac{\De}{r^2+a^2}[\ov{\DD} \c, \nab_{4}]-\frac 1 2 \frac{|q|^2}{r^2+a^2} [\ov{\DD} \c, \nab_3]-\frac 1 2 \frac{\ov{\DD}(|q|^2)}{r^2+a^2}  \c\nab_3\\
&=&\frac 1 2  \frac{\De}{r^2+a^2}[\ov{\DD} \c, \nab_{4}]-\frac 1 2 \frac{|q|^2}{r^2+a^2} [\ov{\DD} \c, \nab_3]-\frac 1 2 \frac{ |q|^2}{r^2+a^2} \ov{(\Hb +H)} \c\nab_3
\eeaa
where we used that $\DD(|q|^2)=(\Hb +H) |q|^2 $. Using, see Lemma 4.2.1 in \cite{GKS},
 \beaa
\, [\nab_4, \ov{\DD}\c] U&=&- \frac 1 2\ov{\tr X}\, ( \ov{\DD} \c U - 2 \ov{\Hb} \c U)+\ov{(\Hb+Z)} \c \nab_4 U, \\
\, [\nab_3, \ov{\DD}\c] U&=&- \frac 1 2\ov{\tr\Xb}\, ( \ov{\DD} \c U -  2 \ov{H} \c U)+\ov{(H-Z)} \c \nab_3 U ,
\eeaa
we obtain
\beaa
\,[\ov{\DD} \c, \nab_{\Rhat}]\psi &=&\frac 1 2  \frac{\De}{r^2+a^2}\Big(\frac 1 2\ov{\tr X}\, ( \ov{\DD} \c\psi  - 2 \ov{\Hb} \c \psi)-\ov{(\Hb+Z)} \c \nab_4 \psi \Big)\\
&&-\frac 1 2 \frac{|q|^2}{r^2+a^2} \Big( \frac 1 2\ov{\tr\Xb}\, ( \ov{\DD} \c \psi -  2 \ov{H} \c \psi)-\ov{(H-Z)} \c \nab_3 \psi \Big)-\frac 1 2 \frac{ |q|^2}{r^2+a^2} \ov{(\Hb +H)} \c\nab_3\psi\\
&=&\frac 1 2  \frac{\De}{r^2+a^2}\Big(\frac 1 2\ov{\tr X}\, ( \ov{\DD} \c\psi  - 2 \ov{\Hb} \c \psi)-\ov{(\Hb+Z)} \c \nab_4 \psi \Big)\\
&&-\frac 1 2 \frac{|q|^2}{r^2+a^2} \Big( \frac 1 2\ov{\tr\Xb}\, ( \ov{\DD} \c \psi -  2 \ov{H} \c \psi)+\ov{(\Hb +Z)} \c \nab_3 \psi \Big)
\eeaa
Since $\Hb=-Z$ in the outgoing frame, we finally obtain
\beaa
\,[\ov{\DD} \c, \nab_{\Rhat}]\psi &=&\frac 1 4  \frac{1}{r^2+a^2}\big( \De\ov{\tr X}- |q|^2 \ov{\tr\Xb}\big) \, ( \ov{\DD} \c\psi )\\
&&-\frac 1 2  \frac{\De}{r^2+a^2}\ov{\tr X}\, (   \ov{\Hb} \c \psi)+\frac 1 2 \frac{|q|^2}{r^2+a^2}  \ov{\tr\Xb}\, (     \ov{H} \c \psi).
\eeaa
Using that $\tr X=\frac{2}{q}$, $\tr\Xb=-\frac{2\Delta q}{|q|^4}$, we have $\De\ov{\tr X}- |q|^2 \ov{\tr\Xb}=\frac{2\De}{\ov{q}}+  \frac{2\Delta \ov{q}}{|q|^2}=\frac{2\De (q+\ov{q})}{|q|^2}=\frac{4r \De }{|q|^2}$. This gives
\beaa
\,[\ov{\DD} \c, \nab_{\Rhat}]\psi &=& \frac{r \De }{|q|^2(r^2+a^2)} \, ( \ov{\DD} \c\psi )-\frac 1 2  \frac{\De}{r^2+a^2}\ov{\tr X}\, (   \ov{\Hb} \c \psi)+\frac 1 2 \frac{|q|^2}{r^2+a^2}  \ov{\tr\Xb}\, (     \ov{H} \c \psi).
\eeaa
We can use the above and the fact that $\Rhat=\frac{\De}{r^2+a^2} \pr_r$ to deduce
\beaa
\,[\ov{\DD} \c, \nab_{\partial_r}]\psi &=& \frac{r^2+a^2}{\De}[\ov{\DD} \c,\nab_{\Rhat}]\psi = \frac{r }{|q|^2} \, ( \ov{\DD} \c\psi )+O(a^2r^{-4}) \psi.
\eeaa
This finally implies $[\ov{\DD} \c, \FF\nab_{\partial_r}]\psi$ for $\FF=\FF(r)$, as stated.

To compute $[\ov{\DD}\c, \nab_T]\psi$ we use, see Corollary 2.1.29 in \cite{GKS},
\beaa
\big(\nab_X \nab_Y -\nab_Y\nab_X) \psi=\nab_{[X, Y]} \psi+    \Rdot(X, Y) \psi.
\eeaa
Since $[T, e_a]=0$, we have
\beaa
[\nab_T, \nab_{e_a}] \psi=    \Rdot(T, e_a) \psi=\R(T, e_a)\psi+ \frac 1 2  \B(T, e_a)\psi=O(ar^{-3})\psi,
\eeaa
from which we can deduce the same for  $[\ov{\DD}\c, \nab_T]\psi$ and $[\DD\hot, \nab_T]\psi$.

Similarly, we compute
\beaa
[\nab_\phi,  \nab_{e_a} ] \psi=\nab_{[\partial_\phi, e_a]} \psi+    \Rdot(\partial_\phi, e_a) \psi=\R(\partial_\phi, e_a)\psi+ \frac 1 2  \B(\partial_\phi, e_a)\psi=O(r^{-2}) \psi,
\eeaa
from which we deduce
\beaa
[\ov{\DD}\c, \nab_{\That_\chi}]\psi=[\ov{\DD}\c, \nab_T]\psi+\frac{a}{r^2+a^2}\chi [\ov{\DD}\c, \nab_\phi]\psi=O(ar^{-3})\psi,
\eeaa
as stated.
\end{proof}

\begin{lemma}\label{lemma:commutation-nabT-nabZ} 
We have
\beaa
[\nab_T, \nab_Z]\psi=0.
\eeaa
\end{lemma}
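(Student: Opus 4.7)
The plan is to apply the standard commutator identity (Corollary 2.1.29 of \cite{GKS}), already used in the proof of Lemma~\ref{lemma:comm-ov-DD-nabX}:
$$\big(\nabla_T \nabla_Z - \nabla_Z\nabla_T\big)\psi \;=\; \nabla_{[T,Z]}\psi + \dot R(T, Z)\psi.$$
Since $T = \partial_t$ and $Z = \partial_\phi$ are Boyer--Lindquist coordinate vectorfields, their Lie bracket $[T,Z]=0$ identically, so the first term drops out and the lemma reduces to showing $\dot R(T, Z)\psi = 0$ for any horizontal tensor $\psi$.

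For this step I would exploit two structural features of Kerr--Newman: first, both $T$ and $Z$ are Killing for $g_{a,Q,M}$; second, every coefficient of the principal null frame fixed in Section~\ref{sec:KN-metric} depends only on $(r,\theta)$, so that $[T, e_\mu]=[Z, e_\mu]=0$ for each frame element $e_\mu$. From the latter one has $\nabla_T e_a = \nabla_{e_a} T$ and $\nabla_Z e_a = \nabla_{e_a} Z$, and the Killing equation forces the horizontal connection coefficients $(\Gamma_T)_{ab}:=g(\nabla_{e_a}T, e_b)$ to be antisymmetric in $a,b$; the same holds for $(\Gamma_Z)_{ab}$, and both $\Gamma_T, \Gamma_Z$ are scalar functions of $(r,\theta)$ only. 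Expanding the commutator component by component in this invariant frame, the $TZ-ZT$ piece acting on scalar components vanishes by $[T,Z]=0$, the cross terms $T(\Gamma_Z)$ and $Z(\Gamma_T)$ vanish by the $(r,\theta)$-invariance of the connection coefficients, and the remaining pure product terms collapse to the matrix commutator $[\Gamma_Z,\Gamma_T]$ acting on each horizontal slot of $\psi$. In two dimensions every antisymmetric $2\times 2$ matrix is a scalar multiple of the Levi-Civita $\epsilon_{ab}$, so this commutator vanishes identically; the analogous cancellation handles the contributions of $\nabla_T e_3, \nabla_T e_4$ that enter $\dot R$ through the $B$-term appearing in Lemma~\ref{lemma:comm-ov-DD-nabX}.

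The main obstacle I anticipate is the bookkeeping of horizontal versus null contributions to $\nabla_T e_\mu$, given the non-integrability of the horizontal distribution in Kerr--Newman. A cleaner alternative I would first try is to work directly in Boyer--Lindquist coordinates: since all metric components and Christoffel symbols in $(t, r, \theta, \phi)$ are independent of $t$ and $\phi$, the projected expressions of $\nabla_T$ and $\nabla_Z$ on a horizontal tensor $\psi$ act as first-order operators $T + A_T$ and $Z + A_Z$, where $A_T, A_Z$ are endomorphism-valued functions of $(r,\theta)$ acting on the frame components of $\psi$. The commutator reduces to $[A_T, A_Z]$, which by the Killing antisymmetry argument of the previous paragraph vanishes in two horizontal dimensions, yielding the claim.
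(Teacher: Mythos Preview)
Your proposal is correct, and the ``cleaner alternative'' you describe is a genuinely different route from the paper's. The paper proceeds via the commutator identity $[\nabla_T,\nabla_Z]\psi=\dot R(T,Z)\psi$ and then evaluates the curvature contraction $T^\mu Z^\nu R_{ac\mu\nu}$ explicitly, plugging in the Kerr--Newman values of $R_{abcd}$, $R_{ab34}$, $R_{a3b4}$ (this is where the Einstein--Maxwell structure enters, through the $\rhoF^2+\rhodF^2$ term in $R_{abcd}$) and checking by hand that the contraction with the horizontal and null components of $T,Z$ vanishes; the $B$-term is handled by referring back to the identical computation in \cite{GKS}. Your argument instead bypasses curvature entirely: since the orthonormal horizontal frame satisfies $[T,e_a]=[Z,e_a]=0$ and all connection coefficients depend only on $(r,\theta)$, the operators $\nabla_T,\nabla_Z$ act on frame components as $T+A_T$, $Z+A_Z$ with $A_T,A_Z$ antisymmetric $2\times 2$ endomorphisms depending only on $(r,\theta)$, and in two dimensions such endomorphisms are scalar multiples of $\epsilon$ and hence commute.

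Your route is shorter and more robust: it uses only the stationarity--axisymmetry of the metric, the existence of an $(r,\theta)$-invariant orthonormal horizontal frame, and the fact that the horizontal distribution is two-dimensional. In particular nothing specific to Kerr--Newman curvature is needed, and the argument would go through verbatim for any stationary axisymmetric background with these features. The paper's explicit computation, on the other hand, makes direct contact with the curvature components $R_{abcd}$, which is consistent with how the surrounding lemmas (e.g.\ Lemma~\ref{lemma:RR-t}) are organized, and generalizes more readily when one needs to track nonvanishing curvature contributions in other commutators. One small note: in your first paragraph you invoke the Killing property to obtain antisymmetry of $(A_T)_{ab}=-g(\nabla_T e_a,e_b)$, but in fact orthonormality of the frame already gives $T(\delta_{ab})=0$, which is all you need; the Killing property enters only if you want to rewrite $g(\nabla_T e_a,e_b)=g(\nabla_{e_a}T,e_b)$ first.
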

\begin{proof} The proof goes like for Lemma 3.6.1 in \cite{GKS}. The only difference is that the Riemann curvature is now given by
\beaa
 \R_{a3b4}&=&-\rho \de_{ab} +\rhod \in_{ab},\qquad  \R_{ab34}= 2\rhod \in_{ab}, \\
  \R_{abcd}&=&-\rho \in_{ab}\in_{cd}+\big(\de_{ac}\de_{bd} -\de_{ad}\de_{bc} \big)(\rhoF^2+\rhodF^2) .
 \eeaa
 In particular we still obtain
  \beaa
T^\mu Z^\nu  \R_{ac\mu\nu} {\psi^{c}}_{b}&=&    T^d Z^e  \R_{acde} {\psi^{c}}_{b}+ T^3 Z^4 \R_{ac34} {\psi^{c}}_{b}+ T^4 Z^3 \R_{ac43} {\psi^{c}}_{b}\\
&=&  -2\rho \in_{de}  T^d Z^e \dual \psi_{ab}+   T^d Z^e \big(\de_{ad}\de_{ce} -\de_{ae}\de_{cd} \big)(\rhoF^2+\rhodF^2) {\psi^{c}}_{b}\\
&&+4\dual \rho( T^3 Z^4-T^4 Z^3) \dual \psi_{ab}=0.
\eeaa
The remaining computation is the same.

\end{proof}

\subsection{Projected Lie derivatives}\label{sec:appendix-lie-derivatives}

Recall that the Lie derivative of a $k$-covariant tensor $U$ relative to a vectorfield  $X$ is given by
\beaa
\Lie_X (Y_1, \ldots , Y_k) = X U(Y_1, \ldots, Y_k)-  U(\Lie_XY_1, \ldots,Y_k) - U(Y_1, \ldots, \Lie_X Y_k),
\eeaa
where $\Lie_X Y=[X, Y]$.  Given  a horizontal covariant k-tensor $U$,  the horizontal  Lie derivative $\Lieb_X U $ is defined   to be the projection of $\Lie_X U $
  to the  horizontal space, i.e.
   \beaa
 \Lieb_X Y :=\Lie_X Y+ \frac 1 2 \g(\Lie_XY, e_3) e_4+  \frac 1 2 \g(\Lie_XY, e_4) e_3.
 \eeaa

Recall (see for example Lemma 2.2.10 in \cite{GKS}) that Lie derivatives with respect to Killing vectorfields commute with the covariant derivatives, i.e. $[\Lieb_T, \dk]=[\Lieb_Z, \dk]=0$ and $[\Lieb_T, \squared_k]=[\Lieb_Z, \squared_k]=0$.

\begin{lemma}\lab{lemma:basicpropertiesLiebTfasdiuhakdisug:chap9}
For a horizontal covariant k-tensor $U$, we have
\beaa
\nab_T U_{b_1\cdots b_k} &=& \Lieb_T U_{b_1\cdots b_k} +\frac{a(2Mr-Q^2)\cos\th}{|q|^4}\sum_{j=1}^k\in_{b_jc} U_{b_1\cdots c\cdots b_k},\\
\nab_Z U_{b_1\cdots b_k} &=& \Lieb_Z U_{b_1\cdots b_k} -\frac{\cos\th((r^2+a^2)^2-a^2(\sin\th)^2\De)}{|q|^4}\sum_{j=1}^k\in_{b_jc} U_{b_1\cdots c\cdots b_k}.
\eeaa
\end{lemma}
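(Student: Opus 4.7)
The plan is to derive both identities from a single general formula relating $\nab_X$ and the projected Lie derivative $\Lieb_X$ on horizontal tensors, and then specialize to the Killing vector fields $T$ and $Z$, for which only one scalar per vector field needs to be computed.

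First I would establish the general identity. Starting from $\nab_X Y - \nab_Y X = [X,Y]$ and the standard definitions of the covariant and Lie derivatives on covariant tensors, subtracting gives, for any horizontal covariant $k$-tensor $U$ and any vector field $X$,
$$ \nab_X U_{b_1\cdots b_k} - \Lieb_X U_{b_1\cdots b_k} = -\sum_{j=1}^k (\nab_{e_{b_j}} X)^c\, U_{b_1\cdots c\cdots b_k}, $$
where only the horizontal component of $\nab_{e_{b_j}} X$ contributes (because $U$ is horizontal), and where we use the identification $(\Lie_X U)(e_{b_1},\ldots,e_{b_k}) = (\Lieb_X U)(e_{b_1},\ldots,e_{b_k})$ that follows from the definition of horizontal projection of a covariant tensor.

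Specializing to $X = T$ or $X = Z$, the Killing equation forces $\nab_\mu K_\nu$ to be antisymmetric, so restricted to the horizontal $2$-plane one has $\nab_a K_b = \lambda_K \in_{ab}$ for a single scalar $\lambda_K$. Substituting back produces exactly the structure appearing in the lemma, reducing the problem to the computation of $\lambda_T$ and $\lambda_Z$. For Killing $K$ one has $dK_{\mu\nu} = 2\nab_\mu K_\nu$, hence $\lambda_K = \nab_1 K_2 = \tfrac12 (dK)(e_1,e_2)$. The dual $1$-forms are read directly from \eqref{metric-KN}, namely $T_\mu dx^\mu = g_{tt}\,dt + g_{t\phi}\,d\phi$ and $Z_\mu dx^\mu = g_{t\phi}\,dt + g_{\phi\phi}\,d\phi$; with $e_1 = |q|^{-1}\pr_\th$ and $e_2 = |q|^{-1}(a\sin\th\,\pr_t + \sin^{-1}\th\,\pr_\phi)$, only the components $(dK)_{\th t}$ and $(dK)_{\th\phi}$ contribute, and these are plain $\theta$-derivatives of the relevant metric functions.

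The remainder is a direct computation. Using $\pr_\th|q|^2 = -2a^2\cos\th\sin\th$ and the identity $|q|^2 + a^2\sin^2\th = r^2+a^2$ to simplify, $\theta$-differentiation of $g_{tt},\,g_{t\phi},\,g_{\phi\phi}$ followed by contraction with $(e_1,e_2)$ yields $\lambda_T = -\tfrac{a(2Mr-Q^2)\cos\th}{|q|^4}$ and $\lambda_Z = \tfrac{\cos\th\,((r^2+a^2)^2 - a^2\sin^2\th\,\De)}{|q|^4}$, and inserting these into the general identity reproduces the two formulas claimed. The only mildly delicate step will be collecting the several terms produced by $\pr_\th$-differentiating $Z_\phi = |q|^{-2}\sin^2\th\,[(r^2+a^2)^2 - a^2\sin^2\th\,\De]$ into the compact closed form of $\lambda_Z$; everything else is routine algebra, and no curvature or frame-connection input is needed beyond the Killing property of $T,Z$ and the explicit form of the Kerr-Newman metric.
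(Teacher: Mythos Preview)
Your proposal is correct and follows essentially the same approach as the paper: both reduce the identity to computing the horizontal part $\g(\nab_b K, e_c)$ of the covariant derivative of the Killing field, and both use the standard relation $\Lieb_X U_{b_1\cdots b_k} = \nab_X U_{b_1\cdots b_k} + \sum_j \g(\D_{b_j}X, e_c) U_{b_1\cdots c\cdots b_k}$. The paper computes $\g(\nab_b T, e_c)$ directly (referring to the analogous computation in \cite{GKS} for $Z$), while you route the same computation through $\lambda_K = \tfrac12 (dK)(e_1,e_2)$ after invoking the Killing antisymmetry; the resulting values and the final formulas coincide.
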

\begin{proof} We compute, as in Lemma 9.2.1 of \cite{GKS},
\beaa
2\g(\nab_bT, e_c) &=& \left(\frac{2a\cos\th \De}{|q|^4}-\frac{2a(r^2+a^2)\cos\th}{|q|^4}\right)\in_{bc}= - \frac{2a\cos\th (2Mr-Q^2)}{|q|^4} \in_{bc}.
\eeaa
Since we have
\beaa
\Lieb_T U_{b_1\cdots b_k} &=& \nab_T U_{b_1\cdots b_k} +\g(\D_{b_1}T, e_c)U_{cb_2\cdots b_k}+\cdots,
\eeaa
we infer the stated. The second identity is obtained in the same way as in Lemma 9.2.1. of \cite{GKS}.
\end{proof}

\subsection{Commutators for angular operators}

\begin{lemma}\label{lemma:commutator-OO-DD} We have
\beaa
 \,[  \OO , \DD\hot] \psi_1   &=&3 |q|^2 \Kh ( \DD\hot \psi_1)+i  |q|^2\frac{4a\cos\th (r^2+a^2)}{|q|^4}\nab_\That( \DD\hot\psi_1 ) + i |q|^2 (H+\Hb)\c \dual\nab (\DD\hot \psi_1) \nonumber \\
  &&+O(ar^{-1}) \dk^{\leq 1} \psi_1\\
  \,[\OO ,\ov{\DD}\c ]\psi_2&=&-3|q|^2\Kh (\ov{\DD}\c\psi_2 ) -i |q|^2 \frac{4a\cos\th (r^2+a^2)}{|q|^4}\nab_\That ( \ov{\DD}\c\psi_2)-i|q|^2(\ov{H}+\ov{\Hb}) \c  \dual\nab(\ov{\DD}\c\psi_2)\nonumber\\
&&+ O(ar^{-1}) \dk^{\leq1}\psi_2. 
\eeaa
\end{lemma}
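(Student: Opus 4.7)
The plan is to derive both commutators from the elliptic identities \eqref{relation-DDb-DD-hot-lap}--\eqref{relation-DD-hot-DDb-lap}, which act as a bridge between the horizontal Laplacian $\lap_k$ and the compositions $\DDb\c\DD\hot$, $\DD\hot\DDb\c$. The key observation is that, since $\DD\hot\psi_1 \in \sk_2(\mathbb{C})$, one can apply \eqref{relation-DD-hot-DDb-lap} to $\DD\hot\psi_1$ to compute $\DD\hot(\DDb\c(\DD\hot\psi_1))$, and independently apply \eqref{relation-DDb-DD-hot-lap} to $\psi_1$ and then act with $\DD\hot$ to compute the same quantity. Equating the two gives
\[
2[\lap_2, \DD\hot]\psi_1 \;=\; 6\Kh (\DD\hot\psi_1) + 2\DD\hot(\Kh \psi_1) + i(\atrch\nab_3+\atrchb\nab_4)(\DD\hot\psi_1)+i\DD\hot\big((\atrch\nab_3+\atrchb\nab_4)\psi_1\big),
\]
and an analogous identity for $[\lap_1,\ov{\DD}\c]\psi_2$ with opposite signs of the $\Kh$ and first-order terms coming from the sign flip between \eqref{relation-DDb-DD-hot-lap} and \eqref{relation-DD-hot-DDb-lap}.

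Next I would convert the first-order combination $\atrch\nab_3+\atrchb\nab_4$ into the $\nab_\That$ form appearing in the statement. Using the decomposition $\nab_\That=\tfrac12\frac{\Delta}{r^2+a^2}\nab_4+\tfrac12\frac{|q|^2}{r^2+a^2}\nab_3$, and the explicit Kerr--Newman values $\atrch=\frac{2a\cos\theta\,\Delta}{(r^2+a^2)|q|^2}$, $\atrchb=-\frac{2a\cos\theta}{|q|^2}$ (up to signs/conventions), one isolates the coefficient $\frac{4a\cos\theta(r^2+a^2)}{|q|^4}$ of $\nab_\That$, with any remaining terms of size $O(ar^{-3})$ absorbed in the error. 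The derivatives of $\atrch,\atrchb$ landing on $\DD\hot\psi_1$ (from commuting $\DD\hot$ with $\nab_3,\nab_4$) are likewise $O(ar^{-2})$ and feed the error term $O(ar^{-1})\dk^{\leq 1}\psi_1$.

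Third, I would upgrade from $[\lap_k,\DD\hot]$ to $[\OO,\DD\hot]$ using $\OO=|q|^2\lap_k+|q|^2(\eta+\etab)\c\nab$. The commutator with the conformal factor gives $[|q|^2,\DD\hot]\psi_1= -\DD(|q|^2)\hot\psi_1=-|q|^2(H+\Hb)\hot\psi_1$, and interacts with the other pieces to produce the $(H+\Hb)\c\dual\nab(\DD\hot\psi_1)$ term (after writing $\DD=\nab+i\dual\nab$ and regrouping the imaginary part of $\DD(|q|^2)\c\nab$). The drift term $(\eta+\etab)\c\nab$ contributes a commutator of the form $O(ar^{-3})\dk^{\leq1}\psi_1$ since $\eta,\etab=O(ar^{-3})$, hence is absorbed in $O(ar^{-1})\dk^{\leq1}\psi_1$. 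Assembling all pieces and multiplying by $|q|^2$ produces the $+3|q|^2\Kh(\DD\hot\psi_1)$ term and the imaginary first-order piece with the correct coefficient; the sign conventions for the second identity yield the opposite signs in $[\OO,\ov{\DD}\c]\psi_2$.

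The main obstacle I expect is the careful bookkeeping of signs and $i$ factors in the complex/anti-self-dual setup, together with the passage from the raw combination $\atrch\nab_3+\atrchb\nab_4$ to the clean $\nab_\That$ form: one must verify that the pieces not of the form $c\,\nab_\That$ combine with the angular $\dual\nab$ component (coming from $\DD=\nab+i\dual\nab$ applied to $|q|^2$) to produce exactly $(H+\Hb)\c\dual\nab$ plus acceptable errors. Once this algebraic identity is nailed down for $[\OO,\DD\hot]$, the statement for $[\OO,\ov{\DD}\c]$ follows by complex conjugation of the principal symbol and the symmetric elliptic identity, so no separate computation is strictly required.
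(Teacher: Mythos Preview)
Your route --- derive $[\lap,\DD\hot]$ directly from the elliptic identities \eqref{relation-DDb-DD-hot-lap}--\eqref{relation-DD-hot-DDb-lap}, then upgrade to $[\OO,\DD\hot]$ via $\OO=|q|^2\lap_k+|q|^2(\eta+\etab)\c\nab$ --- is a sound alternative to the paper, which instead imports $[\OO,\DDs_2]$ from \cite{Giorgi8} and complexifies. Your derivation of $[\lap,\DD\hot]$ does give \eqref{eq:commutator-lap-DDhot}, and the identity $\atrch\nab_3+\atrchb\nab_4=\frac{4a\cos\th(r^2+a^2)}{|q|^4}\nab_\That$ is in fact exact, with no residual error to absorb.

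The gap is in your third step. The conformal-weight commutator actually produces $-|q|^2(H+\Hb)\hot\lap_1\psi_1$ (with $\lap_1\psi_1$, not $\psi_1$, inside), and this is \emph{not} converted into $i|q|^2(H+\Hb)\c\dual\nab(\DD\hot\psi_1)$ by ``regrouping the imaginary part of $\DD(|q|^2)\c\nab$''. The mechanism you are missing is: re-apply the elliptic identity to write $\lap_1\psi_1=\frac{1}{2}\ov{\DD}\c(\DD\hot\psi_1)+O(r^{-2})\dk^{\leq 1}\psi_1$, then use the contraction rule $F\hot(\ov{\DD}\c U)=2F\c\nab U$ (for $F\in\sk_1(\CCC)$, $U\in\sk_2(\CCC)$) to obtain $-|q|^2(H+\Hb)\c\nab(\DD\hot\psi_1)$, and finally use the anti-self-duality $\dual(H+\Hb)=-i(H+\Hb)$ to rewrite $-(H+\Hb)\c\nab$ as $i(H+\Hb)\c\dual\nab$. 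This is exactly the step the paper carries out (applied there to $\OO(\psi_1)$ rather than $\lap_1\psi_1$, but the substance is identical), and it is the heart of the lemma --- not just sign bookkeeping. A minor correction: $\eta,\etab=O(ar^{-2})$, not $O(ar^{-3})$; the drift commutator $[|q|^2(\eta+\etab)\c\nab,\DD\hot]\psi_1$ is nonetheless acceptable, but for the right reason: it is a commutator of first-order horizontal operators with $O(a)$ coefficient, hence $O(ar^{-2})\dk^{\leq1}\psi_1$.
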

\begin{proof}
From Lemma 3.6 in \cite{Giorgi8}, we have
 \beaa
 \, [\OO, \DDs_2]\phi    &=& 3 |q|^2 \Kh \DDs_2 \phi-|q|^2\nab(|q|^{-2})\hot (\OO\phi)-  |q|^2(\atrch\nab_3+\atrchb \nab_4) \dual\DDs_2\phi  +O(ar^{-1}) \dk^{\leq 1} \phi\\
\, [\OO , \DDd_2] \psi &=& -3|q|^2\Kh \DDd_2\psi +|q|^2\nab (|q|^{-2}) \c  \OO(\psi) +|q|^2(\atrch\nab_3+\atrchb \nab_4) \dual \DDd_2\psi+ O(ar^{-1}) \dk^{\leq1}\psi
 \eeaa
and their complexification
\beaa
 \,[  \OO , \DD\hot] \psi_1 &=&3 |q|^2 \Kh ( \DD\hot \psi_1) +|q|^2\DD(|q|^{-2})\hot \OO(\psi_1)+i  |q|^2(\atrch\nab_3+\atrchb \nab_4)( \DD\hot\psi_1 )\nonumber \\
  &&+O(ar^{-1}) \dk^{\leq 1} \psi_1, \\
\,[\OO ,\ov{\DD}\c ]\psi_2&=&-3|q|^2\Kh (\ov{\DD}\c\psi_2 )+|q|^2\ov{\DD} (|q|^{-2}) \c  \OO(\psi_2) -i |q|^2 (\atrch\nab_3+\atrchb \nab_4) ( \ov{\DD}\c\psi_2)\nonumber\\
&&+ O(ar^{-1}) \dk^{\leq1}\psi_2.
\eeaa
By writing in the above
\beaa
\OO(\psi_1)&=& |q|^2 \lap_1 \psi_1 + O(r^{-1})\dk^{\leq 1} \psi_1=\frac 1 2  |q|^2 \ov{\DD} \c (\DD\hot \psi_1) + O(r^{-1})\dk^{\leq 1} \psi_1\\
\OO(\psi_2)&=& |q|^2 \lap_2 \psi_2+ O(r^{-1})\dk^{\leq 1} \psi_2=\frac 1 2 |q|^2 \DD\hot (\ov{\DD}\c\psi_2)+ O(r^{-1})\dk^{\leq 1} \psi_2
\eeaa
and using that $F\hot (\ov{\DD} \c U)=2 F \c \nab U$, we deduce
\beaa
 \,[  \OO , \DD\hot] \psi_1 &=&3 |q|^2 \Kh ( \DD\hot \psi_1) -\frac 1 2  \DD(|q|^{2})\hot  \ov{\DD} \c (\DD\hot \psi_1) +i  |q|^2(\atrch\nab_3+\atrchb \nab_4)( \DD\hot\psi_1 )\nonumber \\
  &&+O(ar^{-1}) \dk^{\leq 1} \psi_1\\
  &=&3 |q|^2 \Kh ( \DD\hot \psi_1) -  (H+\Hb)|q|^2 \c \nab (\DD\hot \psi_1) +i  |q|^2(\atrch\nab_3+\atrchb \nab_4)( \DD\hot\psi_1 )\nonumber \\
  &&+O(ar^{-1}) \dk^{\leq 1} \psi_1,
\eeaa
and similarly using that $\ov{F} \c (\DD\hot E)= 2\ov{F} \c \nab E$ 
\beaa
\,[\OO ,\ov{\DD}\c ]\psi_2&=&-3|q|^2\Kh (\ov{\DD}\c\psi_2 )-\frac 1 2\ov{\DD} (|q|^{2}) \c  \DD\hot (\ov{\DD}\c\psi_2) -i |q|^2 (\atrch\nab_3+\atrchb \nab_4) ( \ov{\DD}\c\psi_2)\nonumber\\
&&+ O(ar^{-1}) \dk^{\leq1}\psi_2\\
&=&-3|q|^2\Kh (\ov{\DD}\c\psi_2 )-(\ov{H}+\ov{\Hb})|q|^2 \c  \nab(\ov{\DD}\c\psi_2) -i |q|^2 (\atrch\nab_3+\atrchb \nab_4) ( \ov{\DD}\c\psi_2)\nonumber\\
&&+ O(ar^{-1}) \dk^{\leq1}\psi_2.
\eeaa
Recall that 
\beaa
\atrch e_3 + \atrchb e_4 &=  \frac{4a\cos\th (r^2+a^2)}{|q|^4}\That,
\eeaa
and by writing $F=i\dual F$ and $\ov{F}=-i\dual \ov{F}$ we obtain the stated identities.
\end{proof}

We also have
\bea
\,[\lap, \DD\hot] \psi_1&=&\, 3  \Kh ( \DD\hot \psi_1) +i  (\atrch\nab_3+\atrchb \nab_4)( \DD\hot\psi_1 ) +O(|a|r^{-3})\dk^{\leq 1}\psi_1\label{eq:commutator-lap-DDhot}\\
\,[\lap ,\ov{\DD}\c ]\psi_2&=&-3\Kh (\ov{\DD}\c\psi_2 ) -i  (\atrch\nab_3+\atrchb \nab_4) ( \ov{\DD}\c\psi_2)+ O(ar^{-3}) \dk^{\leq1}\psi_2. \label{eq:commutator-lap-DDc}
\eea

\subsection{Commutators for higher order and wave operators}\label{section:preliminaries-energy-commuted}

We have the following commutators:
\beaa
&&\,[\SS_1, \SS_2], \,[\SS_1, \SS_3],\,[\SS_2, \SS_3]=0, \qquad [\nab_T, \OO] = O(ar^{-3})\dkb^{\leq 1}\psi, \qquad \,[\nab_Z, \OO] = O(1)\dkb^{\leq 1}\psi\\
&&\,[\SS_1, \OO] = O(ar^{-3})\dk^{\leq 2}\psi, \qquad \,[\SS_2, \OO] ,[\SS_3, \OO] = O(a^2)\dkb^{\leq 2}\psi, \qquad [\nab_{\Rhat}, \SS_1] = O(amr^{-4})\dk^{\leq 1}\psi \\
&&\, [\nab_{\Rhat}, \SS_2] = O(a^2r^{-3})\dk^{\leq 1}\psi , \qquad \,[\nab_{\Rhat}, \SS_3] =  O(a^4r^{-3})\dk^{\leq 1}\psi, \qquad [\OO, \nab_{\Rhat}]\psi = O(ar^{-2})\dk^{\leq 1}\psi .
\eeaa

In what follows, we obtain commutators of various operators with $\squared_1$ and $\squared_2$. Recall,
\bea\label{eq:wave-squared}
\begin{split}
\squared_k \psi&=-\frac 1 2 \big(\nab_3\nab_4\psi+\nab_4 \nab_3 \psi\big)+\left(\omb -\frac 1 2 \trchb\right) \nab_4\psi+\left(\om -\frac 1 2 \trch\right) \nab_3\psi \\
&+\lap_k \psi + (\eta+\etab) \c\nab \psi.
\end{split}
\eea

\begin{lemma}\label{lemma:comm-DDhot-DDc-squared}[See also Lemma 4.7.13 in \cite{GKS}] The following commutation formula holds true for $\psi_1 \in \sk_1$ 
\beaa
|q|\DD\hot \squared_1 \psi_1 -\squared_2 |q|\DD\hot\psi_1&=&- 3  \Kh ( |q|\DD\hot \psi_1) - i \frac{2a\cos\th}{|q|} \nab_T( \DD\hot\psi_1 )\\
&& - |q|(H+\Hb)\hot\squared_1\psi_1+O(ar^{-2} )\dk^{\leq 1}\psi_1,
\eeaa
and for $\psi_2 \in \sk_2$
\beaa
|q|\ov{\DD}\c\squared_2\psi_2 - \squared_1|q|\ov{\DD}\c\psi_2&=& 3 \Kh (|q|\ov{\DD}\c\psi_2) +i\frac{2a\cos\th}{|q|}\nab_T (\ov{\DD}\c \psi_2)  \\
&&-|q|(\ov{H}+\ov{\Hb}) \c \squared_2\psi_2  +O(ar^{-2} )\dk^{\leq 1}\psi_2. 
\eeaa
\end{lemma}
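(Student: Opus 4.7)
The plan is to expand the wave operator using the explicit expression \eqref{eq:wave-squared}, commute $|q|\DD\hot$ (respectively $|q|\ov{\DD}\c$) piece by piece through each term, and then regroup on the right-hand side. I will focus on the first identity; the second follows symmetrically by complex conjugation type considerations.

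First I would write $\squared_1 \psi_1 = -\frac{1}{2}(\nab_3\nab_4 + \nab_4\nab_3)\psi_1 + (\omb - \frac{1}{2}\trchb)\nab_4\psi_1 + (\om - \frac{1}{2}\trch)\nab_3\psi_1 + \lap_1\psi_1 + (\eta+\etab)\cdot\nab\psi_1$ and deal with the Laplacian part first: applying \eqref{eq:commutator-lap-DDhot} and multiplying through by $|q|$ gives the contribution
\[
|q|\DD\hot(\lap_1\psi_1) = \lap_2(|q|\DD\hot\psi_1) - 3\Kh(|q|\DD\hot\psi_1) - i|q|(\atrch\nab_3 + \atrchb\nab_4)(\DD\hot\psi_1) + O(|a|r^{-2})\dk^{\leq 1}\psi_1,
\]
modulo terms where the Laplacian must be transferred from $\psi_1$ to $|q|\DD\hot\psi_1$, which contributes $(\nab_a|q|)\nab^a(\DD\hot\psi_1)$ and similar pieces absorbed using $\DD|q| = \frac{1}{2}(H+\Hb)|q|$. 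For the $\nab_3\nab_4$ terms I would invoke the outgoing analogues of Lemma 4.2.1 in \cite{GKS}, namely $[\nab_4, \DD\hot]U$ and $[\nab_3, \DD\hot]U$ formulas (conjugates of the identities already recalled in the proof of Lemma \ref{lemma:comm-ov-DD-nabX}), commuting $\DD\hot$ through each $\nab_3$ and $\nab_4$ factor and using $\nab_3|q|, \nab_4|q| = O(1)$ to transfer the $|q|$ weight. The remaining transport coefficients in front of $\nab_3\psi_1, \nab_4\psi_1$ are commuted by brute-force differentiation of scalar functions of $(r,\theta)$ — these produce only $O(ar^{-2})\dk^{\leq 1}\psi_1$ terms since $\trch, \trchb, \om, \omb$ are known explicitly.

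The bookkeeping step I expect to be the most delicate is showing that, after all commutators are collected, the net coefficient of a first-order operator on $\DD\hot\psi_1$ along the $T$-direction is precisely $-i\frac{2a\cos\th}{|q|}$. The Laplacian commutator \eqref{eq:commutator-lap-DDhot} after multiplication by $|q|$ produces the combination $i|q|(\atrch\nab_3+\atrchb\nab_4)(\DD\hot\psi_1)$, which using $\atrch e_3 + \atrchb e_4 = \frac{4a\cos\th(r^2+a^2)}{|q|^4}\widehat{T}$ gives a contribution proportional to $\nab_{\widehat{T}}$. The $\nab_3\nab_4$ part of $\squared_1$ produces additional first-order pieces from the commutators $[\nab_3,\DD\hot]$ and $[\nab_4,\DD\hot]$ involving $\ov{\tr X}, \ov{\tr\Xb}$ and $H, \Hb$; part of these combine with the $\widehat{T}$ piece above to cancel the $\nab_Z$ component (since $\widehat{T} = T + \frac{a}{r^2+a^2}Z$) and leave the claimed coefficient on $\nab_T$, while the remaining mismatch is absorbable into $O(|a|r^{-2})\dk^{\leq 1}\psi_1$.

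Finally, the term $-|q|(H+\Hb)\hot \squared_1\psi_1$ on the right-hand side arises naturally when collecting the $|q|$-weight. Writing $|q|\DD\hot(\squared_1\psi_1) = \DD\hot(|q|\squared_1\psi_1) - \DD|q|\hot\squared_1\psi_1 = \DD\hot(|q|\squared_1\psi_1) - \frac{1}{2}|q|(H+\Hb)\hot\squared_1\psi_1$, and applying the same identity on the $\DD\hot(|q|\cdot)$ term that reappears inside $\squared_2(|q|\DD\hot\psi_1)$ through the Laplacian and $\nab_3\nab_4$ pieces, I obtain the stated factor; the curvature contributions from the electromagnetic part of the Einstein–Maxwell equations affect only $\Rdot_{abcd}$, giving corrections bounded by $O(|Q|^2r^{-4})\psi_1$ which sit inside the error term. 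This is precisely the Kerr-Newman analogue of Lemma 4.7.13 in \cite{GKS}, and no new ideas beyond careful coefficient tracking are needed.
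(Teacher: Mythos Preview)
Your overall strategy---expanding the wave operator via \eqref{eq:wave-squared} and commuting $|q|\DD\hot$ through each piece using Lemma~4.2.1 of \cite{GKS} and \eqref{eq:commutator-lap-DDhot}---matches the paper's. However, there is a genuine gap in how you propose to obtain the term $-|q|(H+\Hb)\hot\squared_1\psi_1$ and the precise $\nab_T$ coefficient.

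The point you miss is that the commutators $[\nab_3,|q|\DD\hot]$ and $[\nab_4,|q|\DD\hot]$ each contain a piece $|q|(H-Z)\hot\nab_3$ and $|q|(\Hb+Z)\hot\nab_4$ respectively. When these act on the $-\tfrac12(\nab_3\nab_4+\nab_4\nab_3)$ part of $\squared_1$, they produce a genuine \emph{second-order} term $|q|(H+\Hb)\hot\nab_3\nab_4\psi_1$. Since $H+\Hb=O(ar^{-2})$ but $\nab_3\nab_4$ is second order, this is $O(ar^{-2})\dk^{\leq 2}\psi_1$, which does \emph{not} fit in the error class $O(ar^{-2})\dk^{\leq 1}\psi_1$. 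Your product-rule manipulation $|q|\DD\hot(\squared_1\psi_1)=\DD\hot(|q|\squared_1\psi_1)-\tfrac12|q|(H+\Hb)\hot\squared_1\psi_1$ produces only a coefficient $\tfrac12$, and the vague appeal to ``the same identity reappearing inside $\squared_2(|q|\DD\hot\psi_1)$'' does not account for this second-order obstruction.

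The paper's resolution is to substitute $\nab_3\nab_4=-\squared_1+\lap_1+O(r^{-1})\dk^{\leq 1}$ directly into $|q|(H+\Hb)\hot\nab_3\nab_4\psi_1$. The $-\squared_1$ piece gives the claimed term with the correct coefficient $-1$ in one stroke. The $\lap_1=\tfrac12\ov{\DD}\c\DD\hot$ piece, using the algebraic identity $F\hot(\ov{\DD}\c U)=2F\c\nab U$, produces an additional first-order angular term $|q|(H+\Hb)\c\nab(\DD\hot\psi_1)$. The imaginary part of this, namely $i|q|(\dual\eta+\dual\etab)\c\nab=-i|q|(\eta+\etab)\c\dual\nab$, is exactly what combines with the surviving $-\tfrac12 i|q|(\atrch\nab_3+\atrchb\nab_4)$ from the other commutators via the identity $\atrch e_3+\atrchb e_4+2(\eta+\etab)\c\dual\nab=\tfrac{4a\cos\th}{|q|^2}T$ to yield $-i\tfrac{2a\cos\th}{|q|}\nab_T$ on the nose. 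Your route through $\widehat{T}$ and a hoped-for $\nab_Z$ cancellation would not close without this extra angular contribution.
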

\begin{proof} 
Using, see Lemma 4.2.1 in \cite{GKS}, we have
    \beaa
    \begin{split}
\, [\nab_4, |q| \DD \hot] \psi_1  &=\frac 1 2 i \atrch (|q|\DD \hot  \psi_1) +|q|(\Hb+Z)\hot\nab_4 \psi_1+O(|a|r^{-2}) \psi_1\\
\, [\nab_3, |q| \DD \hot] \psi_1  &=\frac 1 2 i\atrchb   (|q|\DD \hot  \psi_1)+|q|(H-Z)\hot\nab_3 \psi_1 +O(|a|r^{-2}) \psi_1.
 \end{split}
\eeaa
Also, from \eqref{eq:commutator-lap-DDhot} we have
\beaa
\,[\lap, |q|\DD\hot] \psi_1&=&|q|[\lap, \DD\hot ]\psi_1+2\nab(|q|)\c \nab \DD\hot \psi_1+\lap(|q|) \DD\hot \psi_1\\
&=& 3  \Kh ( |q|\DD\hot \psi_1) +i |q| (\atrch\nab_3+\atrchb \nab_4)( \DD\hot\psi_1 ) +|q|(\eta+\etab) \c \nab \DD\hot \psi_1\\
&&+O(|a|r^{-2})\dk^{\leq 1}\psi_1.
\eeaa
From \eqref{eq:wave-squared}, we deduce
\beaa
|q|\DD\hot \squared_1 \psi_1 -\squared_2 |q|\DD\hot\psi_1&=&-\frac 1 2 \big([|q|\DD\hot,\nab_3]\nab_4\psi+\nab_3[|q|\DD\hot,\nab_4]\psi+[|q| \DD\hot, \nab_4] \nab_3 \psi+\nab_4[|q|\DD\hot, \nab_3] \psi\big)\\
&&+|q|\DD\hot \lap_2\psi - \lap_1 |q|\DD\hot\psi+O(ar^{-2} )\dk^{\leq 1}\psi\\
&=&\frac 1 2 i|q|\atrchb  \nab_4  (\DD \hot \psi_1)+ \frac 1 2 i|q| \atrch \nab_3(\DD \hot  \psi_1) +|q|(H+\Hb)\hot\nab_3\nab_4 \psi_1\\
&&- 3  \Kh ( |q|\DD\hot \psi_1) -i |q| (\atrch\nab_3+\atrchb \nab_4)( \DD\hot\psi_1 ) -|q|(\eta+\etab) \c \nab \DD\hot \psi_1\\
&&+O(ar^{-2} )\dk^{\leq 1}\psi.
\eeaa
Writing $\nab_3\nab_4=-\squared_1+\lap_1+O(r^{-1}) \dk^{\leq1}=-\squared_1+\frac 1 2 \ov{\DD}\c \DD\hot+O(r^{-1}) \dk^{\leq1}$, we have
\beaa
|q|(H+\Hb)\hot\nab_3\nab_4 \psi_1&=&- |q|(H+\Hb)\hot\squared_1\psi_1+ |q|(H+\Hb)\c \nab  \DD\hot \psi_1+O(|a|r^{-2}) \dk^{\leq1} \psi_1
\eeaa
where we used that $F\hot (\ov{\DD} \c U)=2F \c \nab U$. This gives
\beaa
|q|\DD\hot \squared_1 \psi_1 -\squared_2 |q|\DD\hot\psi_1&=&-\frac 1 2 i |q| (\atrch\nab_3+\atrchb \nab_4)( \DD\hot\psi_1 )+ |q|( i \dual \eta +i \dual \etab)\c \nab  \DD\hot \psi_1\\
&&- 3  \Kh ( |q|\DD\hot \psi_1)  - |q|(H+\Hb)\hot\squared_1\psi_1+O(ar^{-2} )\dk^{\leq 1}\psi.
\eeaa
Recalling that, 
\beaa
 \atrch e_3+\atrchb e_4+ 2(\eta+\etab) \c \dual \nab&=&\frac{4a\cos\th}{|q|^2} T,
\eeaa
we obtain the stated. Similarly for $\ov{\DD}\c$ where we used that $\ov{F} \c (\DD\hot E)=2\ov{F} \c \nab E$.
\end{proof}

\begin{lemma}\label{lemma:system-hats-qDDcpsi-DDhot} 
Given $\psi_2 \in \sk_2(\CCC)$ satisfying \eqref{final-eq-2-model}. Then for a scalar function $h_1$ we have
\beaa
\squared_1(h_1|q|\ov{\DD}\c\psi_2)-V_1(h_1|q|\ov{\DD}\c\psi_2)&=&    i  \frac{2a\cos\th}{|q|^2}\nab_T (h_1 |q|\ov{\DD}\c \psi_2)+ 2|q| \nab h_1 \c \nab \ov{\DD}\c \psi_2\\
&&+ 2\frac{\De}{|q|^2} \partial_r h_1 \nab_{r}( |q| \ov{\DD}\c \psi_2)-   \frac {1}{ 2} \frac{q^3}{|q|^5} h_1|q |\ov{\DD}\c (  \DD \hot  \psi_1 )\\
&& +(\square_\g h_1+(V_2-V_1-3\Kh) h_1 )|q|\ov{\DD}\c \psi_2 \\
&&+O(ar^{-2} )\dk^{\leq 1}(\psi_1,\psi_2)+\dk^{\leq 1}N_2.
\eeaa

Given $\psi_1 \in \sk_1(\CCC)$ satisfying  \eqref{final-eq-1-model}. Then for a scalar function $h_2$ we have
\beaa
\squared_2(h_2|q|\DD\hot \psi_1)-V_2(h_2|q|\DD\hot \psi_1)&=&i \frac{4a\cos\th}{|q|^2} \nab_T(h_2|q| \DD\hot\psi_1 )+ 2|q| \nab h_2 \c \nab \DD\hot \psi_1 \\
&&+ 2\frac{\De}{|q|^2} \partial_r h_2 \nab_{r}( |q| \DD\hot \psi_1) + 4Q^2\frac{\ov{q}^3 }{|q|^5}  h_2 |q|\DD\hot  (  \ov{\DD} \c  \psi_2 )\\
&&+ (\square_\g h_2+(V_1-V_2+3\Kh ) h_2 )|q|\DD\hot \psi_1\\
&&+O(ar^{-2}) \dk^{\leq 1}(\psi_1, \psi_2)+\dk^{\leq 1} N_1.
\eeaa

\end{lemma}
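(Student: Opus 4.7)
The proof reduces to a careful combination of the commutator identity of Lemma \ref{lemma:comm-DDhot-DDc-squared} with the Leibniz rule for the wave operator applied to a scalar times a horizontal tensor. I will only sketch the first equation, since the second is proved in exactly the same way by applying $|q|\DD\hot$ to \eqref{final-eq-1-model} and using the other half of Lemma \ref{lemma:comm-DDhot-DDc-squared}.

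The plan is to first apply $|q|\ov{\DD}\c$ to the model equation \eqref{final-eq-2-model} for $\psi_2$, and then use Lemma \ref{lemma:comm-DDhot-DDc-squared} to write
\beaa
\squared_1(|q|\ov{\DD}\c\psi_2) &=& |q|\ov{\DD}\c \squared_2 \psi_2 - 3\Kh(|q|\ov{\DD}\c\psi_2) - i\frac{2a\cos\th}{|q|^2}\nab_T(|q|\ov{\DD}\c\psi_2) \\
&& + |q|(\ov{H}+\ov{\Hb})\c\squared_2\psi_2 +O(ar^{-2})\dk^{\le1}\psi_2,
\eeaa
where I used that $T|q|=0$ to combine the $\frac{1}{|q|}\nab_T(\ov{\DD}\c\psi_2)$ factor from the commutator with an additional $|q|$. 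Substituting the model equation for $\squared_2\psi_2$ on the right-hand side yields a first-order term $i\frac{(4-2)a\cos\th}{|q|^2}\nab_T(|q|\ov{\DD}\c\psi_2) = i\frac{2a\cos\th}{|q|^2}\nab_T(|q|\ov{\DD}\c\psi_2)$, which is the characteristic cancellation driving the change of first-order coefficient from the $\psi_2$ equation to the $\psi_1$-type equation. The $|q|(\ov H+\ov\Hb)\c\squared_2\psi_2$ contribution is $O(ar^{-2})\squared_2\psi_2$, which by \eqref{final-eq-2-model} itself is absorbed into the $O(ar^{-2})\dk^{\le 1}(\psi_1,\psi_2) + \dk^{\le 1}N_2$ error. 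The coupling $-\frac12|q|\ov{\DD}\c C_2[\psi_1]$ expands via $C_2[\psi_1]=\frac{q^3}{|q|^5}(\DD\hot\psi_1-\tfrac32(H-\Hb)\hot\psi_1)$ and commuting $\ov{\DD}\c$ past the scalar factor $\frac{q^3}{|q|^5}$ to produce $-\frac12\frac{q^3}{|q|^5}\,|q|\ov{\DD}\c(\DD\hot\psi_1)$ modulo $O(ar^{-3})\dk^{\le 1}\psi_1$.

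I then multiply the resulting equation for $|q|\ov{\DD}\c\psi_2$ by the scalar $h_1$ and apply Leibniz for the wave operator on a scalar-valued multiplier: for $\phi:=|q|\ov{\DD}\c\psi_2$,
\beaa
\squared_1(h_1\phi) = h_1\squared_1\phi + 2\,\Db^\mu h_1\,\Db_\mu\phi + (\square_\g h_1)\phi.
\eeaa
Since $h_1$ depends only on $(r,\th)$, the cross term splits as $\Db^\mu h_1\,\Db_\mu\phi = \frac{\De}{|q|^2}\partial_r h_1\,\nab_r\phi + \nab h_1\c\nab\phi$, using the identity $\partial_r=\frac12(e_4-\frac{|q|^2}{\De}e_3)$ to convert the null-frame piece $-\frac12(\nab_3 h_1\nab_4\phi+\nab_4 h_1\nab_3\phi)$ into the $\frac{\De}{|q|^2}\partial_r h_1 \nab_r\phi$ form; the angular factor $\nab h_1\c\nab(|q|\ov{\DD}\c\psi_2)$ is written as $|q|\nab h_1\c\nab(\ov{\DD}\c\psi_2)$ modulo a $\nab|q|=O(1)$ term of lower order. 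This produces exactly the two derivative terms displayed in the statement.

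Finally, writing $h_1 V_2=V_1 h_1+(V_2-V_1)h_1$ absorbs the potential into $V_1(h_1|q|\ov{\DD}\c\psi_2)$ on the left, with the correction $(V_2-V_1)h_1$ collected together with $-3\Kh h_1$ and $\square_\g h_1$ to form the displayed coefficient $\square_\g h_1+(V_2-V_1-3\Kh)h_1$. Collecting all remaining terms confirms the stated identity. I do not anticipate a genuine obstacle here: the work is entirely bookkeeping, with the one algebraic point requiring attention being the cancellation of the first-order $\nab_T$ coefficient (which is the reason the commutator Lemma \ref{lemma:comm-DDhot-DDc-squared} was set up with the factor $\frac{2a\cos\th}{|q|}$ rather than $\frac{4a\cos\th}{|q|^2}$), together with verifying that the potentially dangerous contribution $|q|(\ov H+\ov\Hb)\c\squared_2\psi_2$ is indeed absorbed into acceptable lower-order terms via the $\psi_2$ equation.
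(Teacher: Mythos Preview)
Your proposal is correct and follows essentially the same approach as the paper: both proofs combine the Leibniz rule for $\squared_1(h_1\phi)$ with the commutator identity of Lemma~\ref{lemma:comm-DDhot-DDc-squared} and then substitute the model equation~\eqref{final-eq-2-model}, with the only cosmetic difference being that the paper applies Leibniz first and then the commutator, while you reverse this order. The key algebraic points you identify---the $4-2=2$ cancellation in the $\nab_T$ coefficient, the absorption of $|q|(\ov H+\ov\Hb)\c\squared_2\psi_2$ into the error via the model equation, and the potential rearrangement $V_2=V_1+(V_2-V_1)$---match the paper's treatment exactly.
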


\begin{proof} We have 
\beaa
\squared_1(h_1|q|\ov{\DD}\c\psi_2)&=& (\square_\g h_1 )|q|\ov{\DD}\c \psi_2 +2 \g^{\a\b} \D_\a h_1 \D_\b (|q|\ov{\DD}\c\psi_2)+ h_1 \squared_1(|q|\ov{\DD}\c\psi_2)\\
&=& (\square_\g h_1 )|q|\ov{\DD}\c \psi_2 +2|q| \nab h_1 \c \nab \ov{\DD}\c \psi_2+ 2\frac{\De}{|q|^2} \partial_r h_1 \nab_{r}( |q| \ov{\DD}\c \psi_2)+ h_1 \squared_1(|q|\ov{\DD}\c\psi_2)\\
&=&  2|q| \nab h_1 \c \nab \ov{\DD}\c \psi_2+ 2\frac{\De}{|q|^2} \partial_r h_1 \nab_{r}( |q| \ov{\DD}\c \psi_2)+h_1 |q|\ov{\DD}\c (\squared_2 \psi_2)\\
&&+h_1\big( \squared_1(|q|\ov{\DD}\c\psi_2)-|q|\ov{\DD}\c (\squared_2 \psi_2) \big)+(\square_\g h_1 )|q|\ov{\DD}\c \psi_2.
\eeaa
Using Lemma \ref{lemma:comm-DDhot-DDc-squared}, 
we write for $\psi_2$ satisfying  \eqref{final-eq-2-model} that
\beaa
\squared_1|q|\ov{\DD}\c\psi_2-|q|\ov{\DD}\c\squared_2\psi_2 &=& -i\frac{2a\cos\th}{|q|}\nab_T (\ov{\DD}\c \psi_2)  -3 \Kh (|q|\ov{\DD}\c\psi_2)  +O(ar^{-2} )\dk^{\leq 1}(\psi_1,\psi_2)+N_2.
\eeaa
We therefore have
\beaa
\squared_1(h_1|q|\ov{\DD}\c\psi_2)&=&  2|q| \nab h_1 \c \nab \ov{\DD}\c \psi_2+ 2\frac{\De}{|q|^2} \partial_r h_1 \nab_{r}( |q| \ov{\DD}\c \psi_2)+h_1 |q| \big(i  \frac{4a\cos\th}{|q|^2}\nab_T (\ov{\DD}\c \psi_2)-   \frac {1}{ 2}\ov{\DD}\c C_2[\psi_1]\big)\\
&& -ih_1\frac{2a\cos\th}{|q|}\nab_T (\ov{\DD}\c \psi_2)  +(\square_\g h_1+(V_2-3\Kh) h_1 )|q|\ov{\DD}\c \psi_2 \\
&&+O(ar^{-2} )\dk^{\leq 1}(\psi_1,\psi_2)+\dk^{\leq 1}N_2,
\eeaa
which gives
\beaa
\squared_1(h_1|q|\ov{\DD}\c\psi_2)-V_1(h_1|q|\ov{\DD}\c\psi_2)&=&    i  \frac{2a\cos\th}{|q|^2}\nab_T (h_1 |q|\ov{\DD}\c \psi_2)+ 2|q| \nab h_1 \c \nab \ov{\DD}\c \psi_2\\
&&+ 2\frac{\De}{|q|^2} \partial_r h_1 \nab_{r}( |q| \ov{\DD}\c \psi_2)-   \frac {1}{ 2} \frac{q^3}{|q|^5} h_1|q |\ov{\DD}\c (  \DD \hot  \psi_1 )\\
&& +(\square_\g h_1+(V_2-V_1-3\Kh) h_1 )|q|\ov{\DD}\c \psi_2 \\
&&+O(ar^{-2} )\dk^{\leq 1}(\psi_1,\psi_2)+\dk^{\leq 1}N_2,
\eeaa
as stated.
Similarly we have
\beaa
\squared_2(h_2|q|\DD\hot \psi_1)&=& 2|q| \nab h_2 \c \nab \DD\hot \psi_1+ 2\frac{\De}{|q|^2} \partial_r h_2 \nab_{r}( |q| \DD\hot \psi_1) + h_2 |q|\DD\hot  (\squared_1 \psi_1)\\
&&+h_2\big( \squared_2(|q|\DD\hot\psi_1)-|q|\DD\hot (\squared_1 \psi_1) \big)+ (\square_\g h_2 )|q|\DD\hot \psi_1.
\eeaa
Using Lemma \ref{lemma:comm-DDhot-DDc-squared}, we write for $\psi_2$ satisfying  \eqref{final-eq-2-model} that
\beaa
\squared_2 |q|\DD\hot\psi_1-|q|\DD\hot \squared_1 \psi_1 &=&  i \frac{2a\cos\th}{|q|} \nab_T( \DD\hot\psi_1 )+3  \Kh ( |q|\DD\hot \psi_1)+O(ar^{-2} )\dk^{\leq 1}(\psi_1, \psi_2)+N_1.
\eeaa
and we obtain
\beaa
\squared_2(h_2|q|\DD\hot \psi_1)-V_2(h_2|q|\DD\hot \psi_1)&=&i \frac{4a\cos\th}{|q|^2} \nab_T(h_2|q| \DD\hot\psi_1 )+ 2|q| \nab h_2 \c \nab \DD\hot \psi_1 \\
&&+ 2\frac{\De}{|q|^2} \partial_r h_2 \nab_{r}( |q| \DD\hot \psi_1) + 4Q^2\frac{\ov{q}^3 }{|q|^5}  h_2 |q|\DD\hot  (  \ov{\DD} \c  \psi_2 )\\
&&+ (\square_\g h_2+(V_1-V_2+3\Kh ) h_2 )|q|\DD\hot \psi_1\\
&&+O(ar^{-2}) \dk^{\leq 1}(\psi_1, \psi_2)+\dk^{\leq 1} N_1,
\eeaa
as stated.
\end{proof}

\begin{lemma}\label{lemma:system-hats-nabT} Given $\psi_1 \in \sk_1(\CCC)$ satisfying  \eqref{final-eq-1-model}. Then for $f_1, g_1=O(|a|)$ we have
\beaa
\squared_1 (f_1 \nab_T \psi_1)-V_1  ( f_1 \nab_T\psi_1) &=& i  \frac{2a\cos\th}{|q|^2}  \nab_T(f_1\nab_T \psi_1)+2 \nab f_1 \c \nab \nab_T \psi_1  +4Q^2   \frac{\ov{q}^3 }{|q|^5} f_1\nab_T\left(  \ov{\DD} \c  \psi_2  \right)  \\
&&+O(ar^{-2}) \nab_\Rhat \dk^{\leq 1} \psi_1+O(ar^{-2}) \dk^{\leq 1}\psi_1+\dk^{\leq 1} N_1, \\
\squared_1 (g_1 \nab_Z \psi_1)-V_1  ( g_1 \nab_Z\psi_1) &=& i  \frac{2a\cos\th}{|q|^2}  \nab_T(g_1\nab_Z \psi_1)+2 \nab g_1 \c \nab \nab_Z \psi_1  +4Q^2   \frac{\ov{q}^3 }{|q|^5} g_1\nab_Z\left(  \ov{\DD} \c  \psi_2  \right)  \\
&&+O(ar^{-2}) \nab_\Rhat \dk^{\leq 1} \psi_1+O(ar^{-2}) \dk^{\leq 1}\psi_1+\dk^{\leq 1} N_1.
\eeaa
Given $\psi_2 \in \sk_2(\CCC)$ satisfying \eqref{final-eq-2-model}. Then for $f_2, g_2=O(|a|)$ we have
\beaa
\squared_2 (f_2 \nab_T \psi_2)-V_2  ( f_2 \nab_T\psi_2) &=& i  \frac{4a\cos\th}{|q|^2}  \nab_T(f_2\nab_T \psi_2)+2 \nab f_2 \c \nab \nab_T \psi_2  -\frac 1 2  \frac{q^3}{|q|^5} f_2 \nab_T \DD \hot  \psi_1   \\
&&+O(ar^{-2}) \nab_\Rhat \dk^{\leq 1} \psi_2+O(ar^{-2}) \dk^{\leq 1}(\psi_1, \psi_2)+\dk^{\leq 1} N_1, \\
\squared_2 (g_2 \nab_Z \psi_2)-V_2  ( g_2 \nab_Z\psi_2) &=& i  \frac{4a\cos\th}{|q|^2}  \nab_T(g_2\nab_Z \psi_2)+2 \nab g_2 \c \nab \nab_Z \psi_2  -\frac 1 2  \frac{q^3}{|q|^5} g_2 \nab_Z \DD \hot  \psi_1   \\
&&+O(ar^{-2}) \nab_\Rhat \dk^{\leq 1} \psi_2+O(ar^{-2}) \dk^{\leq 1}(\psi_1, \psi_2)+\dk^{\leq 1} N_1.
\eeaa
\end{lemma}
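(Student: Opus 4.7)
The plan is to apply the Leibniz rule for the wave operator $\squared_1$ on the scalar--tensor product $f_1 \nab_T \psi_1$, then commute $\nab_T$ past $\squared_1$ using the fact that $T$ is Killing, and finally substitute equation \eqref{final-eq-1-model} to eliminate $\squared_1 \psi_1$. Specifically, for any scalar function $f = f(r,\theta)$ and any horizontal tensor $u$, the decomposition $|q|^2 g^{\alpha\beta} = \Delta \partial_r^\alpha \partial_r^\beta + \frac{1}{\Delta}\RR^{\alpha\beta}$ yields
\begin{eqnarray*}
\squared_1(fu) = (\square_g f)u + 2\nab f \cdot \nab u + 2\frac{\Delta}{|q|^2}\partial_r f \,\nab_{\partial_r} u + f \squared_1 u.
\end{eqnarray*}
Applied with $f = f_1 = O(|a|)$ and $u = \nab_T \psi_1$, one has $\square_g f_1 = O(|a|r^{-2})$ and, after rewriting $\nab_{\partial_r} = \frac{r^2+a^2}{\Delta}\nab_\Rhat$, also $\frac{\Delta}{|q|^2}\partial_r f_1 \nab_{\partial_r} = O(|a|r^{-2})\nab_\Rhat$; these two contributions fall into the stated error $O(|a|r^{-2})\nab_\Rhat \dk^{\leq 1}\psi_1 + O(|a|r^{-2})\dk^{\leq 1}\psi_1$, while the cross term $2\nab f_1 \cdot \nab \nab_T \psi_1$ appears explicitly on the right-hand side.

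Next I would split the remaining principal term as $f_1 \squared_1 \nab_T \psi_1 = f_1[\squared_1, \nab_T]\psi_1 + f_1 \nab_T \squared_1 \psi_1$. Since $T$ is Killing, $[\squared_1, \Lieb_T] = 0$, and by Lemma~\ref{lemma:basicpropertiesLiebTfasdiuhakdisug:chap9} the difference $\nab_T - \Lieb_T$ is the algebraic operator $\frac{a(2Mr-Q^2)\cos\theta}{|q|^4}\in$ of size $O(|a|r^{-3})$; hence $[\squared_1, \nab_T]\psi_1 = [\squared_1, \nab_T - \Lieb_T]\psi_1$ produces only the Leibniz contributions from a scalar multiplication, giving an error of the schematic form $O(|a|r^{-4})\nab_\Rhat \dk^{\leq 1}\psi_1 + O(|a|r^{-4})\dk^{\leq 1}\psi_1$. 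Multiplying by $f_1 = O(|a|)$, this is absorbed in the stated error. For the term $f_1 \nab_T \squared_1 \psi_1$, substituting \eqref{final-eq-1-model} and using that $V_1$, $\frac{2a\cos\theta}{|q|^2}$, $\frac{\bar q^3}{|q|^5}$, and $f_1$ are all $T$-invariant gives
\begin{eqnarray*}
f_1 \nab_T \squared_1 \psi_1 = V_1(f_1 \nab_T \psi_1) + i\frac{2a\cos\theta}{|q|^2}\nab_T(f_1 \nab_T \psi_1) + 4Q^2 \frac{\bar q^3}{|q|^5} f_1 \nab_T(\ov{\DD}\c \psi_2) + f_1\cdot 4Q^2 [\nab_T, C_1]\psi_2 + f_1 \nab_T N_1,
\end{eqnarray*}
where the first two terms reproduce the $V_1$ and first-order contributions on the right-hand side, while Lemma~\ref{lemma:comm-ov-DD-nabX} bounds $[\nab_T, \ov{\DD}\c]\psi_2 = O(|a|r^{-3})\psi_2$, making the commutator term an acceptable error. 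Finally $f_1 \nab_T N_1$ is controlled by $\dk^{\leq 1}N_1$.

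The proof for $g_1 \nab_Z \psi_1$ is identical, using that $Z$ is also Killing and that $Z(g_1) = 0$ (as $g_1$ depends only on $r, \theta$). The two identities for $\psi_2$ proceed in the same way, substituting equation \eqref{final-eq-2-model}: when $\nab_T$ (resp.\ $\nab_Z$) is commuted past $C_2[\psi_1] = \frac{q^3}{|q|^5}(\DD\hot \psi_1 - \frac{3}{2}(H - \Hb)\hot \psi_1)$, Lemma~\ref{lemma:comm-ov-DD-nabX} again yields an $O(|a|r^{-3})$ commutator, and the algebraic $(H - \Hb)$ piece is $O(|a|r^{-3})$, both of which become acceptable errors after multiplication by $f_2, g_2 = O(|a|)$.

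The main obstacle is to ensure that every commutator generated fits within the restrictive error form $O(|a|r^{-2})\nab_\Rhat \dk^{\leq 1}\psi + O(|a|r^{-2})\dk^{\leq 1}\psi$, which allows at most one derivative beyond $\psi$ itself and only in the non-degenerate $\nab_\Rhat$ direction (no $\dk^{\leq 2}$ errors are permitted). This is possible precisely because the non-Killing discrepancy $\nab_T - \Lieb_T$ is purely algebraic, so $[\squared_1, \nab_T]\psi$ produces only first-order derivatives on $\psi$; among those, the radial contribution arises through the $\frac{\Delta}{|q|^2}\partial_r$ piece of the inverse metric in the Leibniz cross-term, giving exactly the $\nab_\Rhat$-weighted structure required. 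The smallness factor $f_1, g_1, f_2, g_2 = O(|a|)$ is then essential to gain the remaining $|a|$ power and conclude.
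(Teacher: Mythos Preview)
Your proposal is correct and follows essentially the same approach as the paper: Leibniz expansion of $\squared(f\nab_T\psi)$, commutation of $\nab_T$ (resp.\ $\nab_Z$) past $\squared$, and substitution of the model equation, with the same allocation of terms to the stated error. The paper simply cites the bounds $[\nab_T,\squared]=O(ar^{-4})\dk^{\leq 1}$ and $[\nab_Z,\squared]=O(r^{-2})\dk^{\leq 1}$ directly, whereas you derive them via $[\Lieb_{T,Z},\squared]=0$ and the algebraic correction of Lemma~\ref{lemma:basicpropertiesLiebTfasdiuhakdisug:chap9}; note in particular that the $\nab_\Rhat\dk^{\leq 1}$ piece of the error arises from the $\partial_r f$ cross-term in the Leibniz rule (applied to $u=\nab_T\psi\in\dk^{\leq 1}\psi$) rather than from the commutator itself, which is purely $O(ar^{-4})\dk^{\leq 1}$.
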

\begin{proof}
Let $f=f(r, \theta)=O(a)$ be a scalar function, then 
\beaa
\squared (f \nab_T \psi)&=& \square_\g f \nab_T \psi +2 \g^{\a\b} \D_\a f \D_\b \nab_T \psi + f \squared(\nab_T \psi)\\
&=& 2 \nab f \c \nab \nab_T \psi + f \nab_T (\squared \psi)+O(ar^{-2}) \nab_\Rhat \dk^{\leq 1} \psi+O(ar^{-2}) \dk^{\leq 1}\psi
\eeaa
 where recall that $[\nab_T, \squared]=O(ar^{-4})\dk^{\leq 1}$ and $[\nab_Z, \squared]=O(r^{-2})\dk^{\leq1}$.  Therefore in the two cases
\beaa
\squared_1 (f_1 \nab_T \psi_1)-V_1  ( f_1 \nab_T\psi_1) &=& i  \frac{2a\cos\th}{|q|^2}  \nab_T(f_1\nab_T \psi_1)+2 \nab f_1 \c \nab \nab_T \psi_1  +4Q^2   \frac{\ov{q}^3 }{|q|^5} f_1\nab_T\left(  \ov{\DD} \c  \psi_2  \right)  \\
&&+O(ar^{-2}) \nab_\Rhat \dk^{\leq 1} \psi_1+O(ar^{-2}) \dk^{\leq 1}\psi_1+\dk^{\leq 1} N_1, \\
\squared_2 (f_2 \nab_T \psi_2)-V_2  ( f_2 \nab_T\psi_2) &=& i  \frac{4a\cos\th}{|q|^2}  \nab_T(f_2\nab_T \psi_2)+2 \nab f_2 \c \nab \nab_T \psi_2  -\frac 1 2  \frac{q^3}{|q|^5} f \nab_T \DD \hot  \psi_1   \\
&&+O(ar^{-2}) \nab_\Rhat \dk^{\leq 1} \psi_2+O(ar^{-2}) \dk^{\leq 1}(\psi_1, \psi_2)+\dk^{\leq 1} N_1.
\eeaa
Similarly for $\nab_Z$.
\end{proof}

\begin{proposition}[See also Proposition 3.7.6. in \cite{GKS}]\label{prop:commutators-OO-squared} The following commutation formula holds true for $\psi_1 \in \sk_1$,
\bea\label{eq-commutator-OO-q^2square-psi1}
\,[\OO, |q|^2 \squared_1 ]\psi_1 = |q|^2 \left[-i \nab\left(\frac{4a(r^2+a^2)\cos\th}{|q|^2}\right)\c\nab\nab_\That\psi_1 +O(ar^{-2})\nab^{\leq 1}_{\Rhat}\dk^{\leq 1}\psi_1 \right],
\eea
and for $\psi_2\in \sk_2$
\bea\label{eq-commutator-OO-q^2square-psi2}
\,[\OO, |q|^2 \squared_2 ]\psi_2 = |q|^2 \left[-i\nab\left(\frac{8a(r^2+a^2)\cos\th}{|q|^2}\right)\c\nab\nab_\That\psi_2 +O(ar^{-2})\nab^{\leq 1}_{\Rhat}\dk^{\leq 1}\psi_2 \right].
\eea
\end{proposition}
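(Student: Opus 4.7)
The plan is to adapt Proposition 3.7.6 of \cite{GKS}, which establishes the analogous identity in Kerr, to the Kerr-Newman background. The key structural observation is that the Kerr-Newman metric differs from Kerr only through $Q^2$ appearing in $\De$ and in the scalar curvature components $\rho$, $\rhoF$, none of which alter the angular architecture that underlies the Carter hidden symmetry. Thus the same cancellation mechanism that makes $\OO$ commute with the wave operator in Kerr, modulo a single explicit $\nab\nab_\That$ term, should persist here.

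First, I would write both operators in a common basis. Using $\OO(\psi)=|q|^2\lap_k\psi+|q|^2(\eta+\etab)\c\nab\psi$ together with the identity (4.7.6) of GKS to express
\[
|q|^2\squared_k\psi = -\frac{(r^2+a^2)^2}{\De}\nab_\That^2\psi+\De\nab_r^2\psi+|q|^2\lap_k\psi+|q|^2(\eta+\etab)\c\nab\psi+\mathcal{L}\psi,
\]
where $\mathcal{L}$ collects first-order terms in $\nab_\Rhat$, $\nab_T$, $\nab_Z$ with smooth $r,\th$-coefficients, one sees that the common principal angular part $|q|^2\lap_k+|q|^2(\eta+\etab)\c\nab$ commutes with itself. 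Hence $[\OO,|q|^2\squared_k]$ reduces to the commutator of this angular part with the remaining pieces $-\frac{(r^2+a^2)^2}{\De}\nab_\That^2+\De\nab_r^2+\mathcal{L}$.

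Next, I would isolate the single surviving first-order term. The radial block $\De\nab_r^2+\mathcal{L}$ produces only commutators of the form $[|q|^2,\De\nab_r^2]\lap_k\psi$ and $[\lap_k,\De\nab_r^2]$, which are $O(ar^{-2})\nab^{\leq 1}_\Rhat\dk^{\leq 1}\psi$ since $|q|^2$ depends on the angle only through $a^2\cos^2\th$. The genuinely nontrivial contribution comes from $\big[|q|^2\lap_k+|q|^2(\eta+\etab)\c\nab,\,-\tfrac{(r^2+a^2)^2}{\De}\nab_\That^2\big]$: here the angular derivatives land on the $|q|^{-2}$-dependent coefficient, and after using the identity $\atrch\,e_3+\atrchb\,e_4+2(\eta+\etab)\c\dual\nab=\tfrac{4a\cos\th}{|q|^2}T$ already exploited in Lemma \ref{lemma:commutator-OO-DD}, the remaining contribution collapses to exactly $-i\nab\!\big(\tfrac{4a(r^2+a^2)\cos\th}{|q|^2}\big)\c\nab\nab_\That\psi_1$ for $\psi_1\in\sk_1$, and to its analogue with coefficient $\tfrac{8a(r^2+a^2)\cos\th}{|q|^2}$ for $\psi_2\in\sk_2$, the factor of two arising from the tensorial weight. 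All other contributions absorb into $O(ar^{-2})\nab^{\leq 1}_\Rhat\dk^{\leq 1}\psi$.

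The main obstacle will be the careful bookkeeping of the tensorial curvature commutators $[\nab_a,\nab_b]\psi$ acting on horizontal 1- and 2-tensors: these produce terms proportional to $\Rdot_{abcd}$ and its projections, whose Kerr-Newman expressions differ from Kerr only by $Q^2$-dependent contributions to $\rho$, $\rhod$, $\rhoF$, $\rhodF$. Since these scalars are still $O(r^{-3})$ with purely radial-angular dependence, the tensorial error terms remain of the required form $O(ar^{-2})\dk^{\leq 1}\psi$, so the tensorial computation in Proposition 3.7.6 of \cite{GKS} transfers essentially verbatim. The bulk of the work is therefore verification rather than new mechanism: one must check that every nontrivial commutator in the Kerr computation has a Kerr-Newman analogue controlled by the same $O(ar^{-2})\nab^{\leq 1}_\Rhat\dk^{\leq 1}\psi$ schematic, relying on the fact that $\De=(r-r_+)(r-r_-)$ in Kerr-Newman differs from Kerr only in the constants $r_\pm$.
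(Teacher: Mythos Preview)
Your overall strategy—adapting Proposition 3.7.6 of \cite{GKS} and noting that $Q^2$ enters only through $\De$ and curvature scalars—is correct. However, you have misidentified the mechanism that produces the principal $\nab\nab_\That$ term, and this is a genuine gap.

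You claim the main contribution comes from ``angular derivatives landing on the $|q|^{-2}$-dependent coefficient'' in $-\tfrac{(r^2+a^2)^2}{\De}\nab_\That^2$, combined with the identity $\atrch\,e_3+\atrchb\,e_4+2(\eta+\etab)\c\dual\nab=\tfrac{4a\cos\th}{|q|^2}T$. But the coefficient $\tfrac{(r^2+a^2)^2}{\De}$ is a function of $r$ alone, so angular derivatives annihilate it; and the cited identity was used in Lemma \ref{lemma:commutator-OO-DD} for commutators with $\DD\hot$ and $\ov\DD\c$, not with $\nab_\That$. More fundamentally, for a \emph{scalar} $\psi$ one has $[\OO,\nab_\That]\psi=0$ exactly (since $\That$ is built from Killing vectorfields and $\OO$ from the Carter tensor), so the entire principal term must arise from the \emph{tensorial} curvature corrections $[\nab_a,\nab_b]\psi\sim\Rdot_{ab\cdot\cdot}\psi$—precisely the contributions you relegated to the $O(ar^{-2})\dk^{\leq 1}\psi$ error.

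The paper's route is shorter and makes this transparent. After citing \cite{GKS} for the reduction $[\OO,|q|^2\squared]\psi=-2(r^2+a^2)[\OO,\nab_\That]\nab_3\psi+O(a)\nab_\Rhat\dk^{\leq 1}\psi$, it computes $[\nab_\That,\OO]$ directly via Lemma \ref{lemma:basicpropertiesLiebTfasdiuhakdisug:chap9}: one writes $\nab_T=\Lieb_T+\tfrac{a(2Mr-Q^2)\cos\th}{|q|^4}\dual$ and $\nab_Z=\Lieb_Z-\tfrac{\cos\th((r^2+a^2)^2-a^2\sin^2\th\,\De)}{|q|^4}\dual$. Since the projected Lie derivatives $\Lieb_T,\Lieb_Z$ commute with $\OO$, the commutator comes entirely from the $\dual$-correction, yielding $[\nab_\That,\OO]\psi_1=|q|^2\nab f\c\nab\dual\psi_1$ with $f=\tfrac{2a\cos\th\De}{|q|^2(r^2+a^2)}$. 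The $Q^2$-dependence enters explicitly through the $(2Mr-Q^2)$ factor and cancels against the $\De$ in the $\nab_Z$ correction, producing the overall $\De$ in $f$—this is the Kerr-Newman modification. Finally $\dual\psi=-i\psi$ gives the stated formula, with the factor of $2$ for $\sk_2$ arising because the $\dual$-correction in Lemma \ref{lemma:basicpropertiesLiebTfasdiuhakdisug:chap9} sums over tensor indices.
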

\begin{proof} As in the proof of Proposition 3.7.6. in \cite{GKS}, in both cases we have
\beaa
[\OO, |q|^2\squared ] \psi&=&- 2(r^2+a^2)  [\OO,  \nab_\That ]\nab_3\psi +O(a)\nab_\Rhat \dk^{\leq1}\psi+ O(ar^{-1})\dk^{\leq 1}\psi.
\eeaa
Now we have for $\psi_1\in \sk_1$, using Lemma \ref{lemma:basicpropertiesLiebTfasdiuhakdisug:chap9}
\beaa
\, [\nab_T, \OO]\psi_1 &=& \left[ \Lieb_\T\psi + \frac{a(2Mr-Q^2)\cos\th}{|q|^4}\dual, \OO\right]\psi_1\\
&=& -2a(2Mr-Q^2)|q|^2\nab\left(\frac{\cos\th}{|q|^4}\right)\c\nab\dual\psi_1+O(a r^{-3})\psi_1,\\
\, [\nab_Z, \OO]\psi_1 &=&  2|q|^2\nab\left(\frac{\cos\th((r^2+a^2)^2-a^2(\sin\th)^2\De)}{|q|^4}\right)\c\nab\dual\psi_1+O(1)\psi_1.
\eeaa
This gives
\beaa
[\nab_\That, \OO]\psi_1 &=& [\nab_T, \OO]\psi+\frac{a}{r^2+a^2}[\nab_Z, \OO]\psi_1=|q|^2\nabla f \c \nabla \dual \psi_1
\eeaa
for
\beaa
f&:=& -2a(2Mr-Q^2)\frac{\cos\th}{|q|^4}+ \frac{2a}{r^2+a^2}\frac{\cos\th((r^2+a^2)^2-a^2(\sin\th)^2\De)}{|q|^4}\\
&=&\frac{2a\cos\th}{|q|^4(r^2+a^2)} \Big(  -(2Mr-Q^2)(r^2+a^2)+ (r^2+a^2)^2-a^2(\sin\th)^2\De \Big) \\
&=&\frac{2a\cos\th}{|q|^4(r^2+a^2)} \Big(  (r^2+a^2)\De-a^2(\sin\th)^2\De \Big) =\frac{2a\cos\th \De}{|q|^2(r^2+a^2)}.
\eeaa
The proof for $\psi_2 \in \sk_2$ is identical. Finally, by writing $\dual \psi = - i \psi$ we obtain the stated identity. 
\end{proof}

\subsection{Energy currents}

 \begin{proposition}[Proposition 4.7.2 in \cite{GKS}]\lab{prop-app:stadard-comp-Psi}
 Let   $\psi\in \mathfrak{s}_k(\CCC)$ and let $X$ be a vectorfield, $w$ a scalar and $J$ a one-form. Define
 \beaa
 \PP_\mu^{(X, w, J)}[\psi]&:=&\QQ[\psi]_{\mu\nu} X^\nu +\frac 1 2  w \Re\big(\psi \c \Db_\mu \overline{\psi} \big)-\frac 1 4 \pr_\mu w |\psi|^2+\frac 1 4 J_{\mu} |\psi|^2.
  \eeaa 
  Then,
    \bea\label{eq:Div-Pmu}
  \D^\mu  \PP_\mu^{(X, w, J)}[\psi] &=&\EE^{(X, w, J)}[\psi] + \Re\Big( \big(\nabla_X\ov{\psi} +\frac 1 2   w \ov{\psi}\big)\c \left(\squared_k \psi- V\psi\right)\Big)+ \Re\Big( X^\mu \Db^\nu  \psi ^a\Rdot_{ ab   \nu\mu}\ov{\psi}^b\Big),
 \eea
 where 
 \beaa
 \EE^{(X, w, J)}[\psi]  &:=& \frac 1 2 \QQ[\psi]  \c\piX - \frac 1 2 X( V ) |\psi|^2+\frac 12  w \LL[\psi] -\frac 1 4 \square_\g  w |\psi|^2+ \frac 1 4  \mbox{Div}(|\psi|^2 J\big)
 \eeaa
 and
 \beaa
 \Rdot_{ab   \mu\nu}&:=& \R_{ab    \mu\nu}+ \frac 1 2  \B_{ab   \mu\nu}\\
  \B_{ab   \mu\nu} &:=&  (\La_\mu)_{3a} (\La_\nu)_{b4}+  (\La_\mu)_{4a} (\La_\nu)_{b3}- (\La_\nu)_{3a} (\La_\mu)_{b4}-  (\La_\nu)_{4a} (\La_\mu)_{b3},
 \eeaa
 with  connection coefficients  $(\La_\a)_{\b\ga}= \g(\D_\a e_\ga, e_\b)$.
\end{proposition}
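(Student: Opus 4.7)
The plan is to compute $\D^\mu \PP_\mu^{(X,w,J)}[\psi]$ piece by piece, treating the $X$-contribution, the $w$-contribution, and the $J$-contribution separately, and then collecting all terms into the stated form.

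First I would handle the energy-momentum piece $\QQ[\psi]_{\mu\nu} X^\nu$. Using $\D^\mu(\QQ_{\mu\nu} X^\nu) = (\D^\mu \QQ_{\mu\nu}) X^\nu + \frac{1}{2}\QQ_{\mu\nu} \piX^{\mu\nu}$ (the last by symmetry of $\QQ$), the key step is to compute $\D^\mu \QQ[\psi]_{\mu\nu}$. Expanding $\QQ_{\mu\nu} = \Re(\Db_\mu \psi \cdot \Db_\nu \bar\psi) - \frac 12 g_{\mu\nu} \LL[\psi]$ and differentiating, one gets on one side $\Re((\Db^\mu \Db_\mu \psi)\c \Db_\nu \bar\psi) + \Re(\Db_\mu \psi \c \Db^\mu \Db_\nu \bar\psi)$ and on the other $-\frac 12 \Db_\nu \LL[\psi]$. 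To rearrange, I would commute $\Db^\mu \Db_\nu \bar\psi = \Db_\nu \Db^\mu \bar\psi + [\Db^\mu, \Db_\nu] \bar\psi$; the second term produces exactly the curvature contribution $X^\mu \Db^\nu \psi^a \Rdot_{ab\nu\mu} \bar\psi^b$ on the right-hand side of \eqref{eq:Div-Pmu}, where $\Rdot$ includes both the Riemann tensor and the correction $\B$ coming from the non-integrability of the horizontal distribution (so that $[\Db_\mu,\Db_\nu]$ acts correctly on horizontal tensors). The first term $\Db^\mu \Db_\mu \psi$ is precisely $\square_k \psi$, and combined with the $V \psi \c \bar\psi$ piece of $\LL[\psi]$ it assembles into $\Re((\square_k \psi - V\psi)\c \Db_\nu \bar\psi)$ after noting that $-\frac 12 \Db_\nu(V |\psi|^2) = -\frac 12 (\partial_\nu V)|\psi|^2 - V \Re(\psi \c \Db_\nu\bar\psi)$; the derivative-of-$V$ term, contracted with $X^\nu$, becomes the $-\frac 12 X(V)|\psi|^2$ piece of $\EE^{(X,w,J)}$.

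Next I would handle the $w$-contribution $\frac 12 w\Re(\psi \c \Db_\mu \bar\psi) - \frac 14 \partial_\mu w |\psi|^2$. A direct divergence gives $\frac 12 w \Re(\psi \c \square_k \bar\psi) + \frac 12 w \Db_\mu \psi \c \Db^\mu \bar\psi + \frac 12 (\partial^\mu w) \Re(\psi \c \Db_\mu \bar\psi) - \frac 14 \square_g w |\psi|^2 - \frac 14 (\partial_\mu w) \partial^\mu |\psi|^2$. Using $\frac 12 (\partial_\mu w) \partial^\mu |\psi|^2 = (\partial^\mu w)\Re(\psi \c \Db_\mu \bar\psi)$, the two middle terms cancel, and rewriting $\frac 12 w \Re(\psi\c \square_k \bar\psi) = \frac 12 w \Re(\bar\psi \c (\square_k \psi - V\psi)) + \frac 12 w V |\psi|^2$ recovers the $\frac 12 w\bar\psi \c (\square_k \psi - V\psi)$ piece of the source together with the $\frac 12 w \LL[\psi]$ piece of $\EE$ (after combining $\frac 12 w V |\psi|^2$ with $\frac 12 w \Db_\lambda \psi \c \Db^\lambda \bar\psi$). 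The $-\frac 14 \square_g w |\psi|^2$ drops straight into $\EE$. Finally $\D^\mu(\frac 14 J_\mu |\psi|^2) = \frac 14 \mathrm{Div}(|\psi|^2 J)$ contributes directly to $\EE$.

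The only potential obstacle is bookkeeping the curvature correction $\Rdot = \R + \frac 12 \B$: because the horizontal distribution in Kerr-Newman is non-integrable, the projected commutator $[\Db_\mu,\Db_\nu]$ acting on a horizontal tensor is not simply the ambient Riemann contraction but also produces the $\B$-term built from $(\La_\mu)_{3a}, (\La_\mu)_{4a}$, and one must verify that the $e_3/e_4$ pieces of $\D \Db \psi$ that fall outside the horizontal structure reorganize into exactly this $\B$-correction; once this identification is made the remainder is purely algebraic bookkeeping. Collecting everything yields the stated decomposition $\D^\mu \PP_\mu^{(X,w,J)}[\psi] = \EE^{(X,w,J)}[\psi] + \Re((\nab_X \bar\psi + \frac 12 w \bar\psi)\c(\square_k \psi - V\psi)) + \Re(X^\mu \Db^\nu \psi^a \Rdot_{ab\nu\mu} \bar\psi^b)$.
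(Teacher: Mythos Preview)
The paper does not contain its own proof of this proposition; it simply records the statement as a quotation of Proposition~4.7.2 from \cite{GKS} and uses it as a black box in the derivation of Proposition~\ref{prop:general-computation-divergence-P}. Your proposal is therefore not competing with any argument in the paper.

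That said, your sketch is correct and is exactly the standard computation one would expect behind such a result: split the current into its $X$-, $w$-, and $J$-pieces, use $\D^\mu(\QQ_{\mu\nu}X^\nu)=(\D^\mu\QQ_{\mu\nu})X^\nu+\tfrac12\QQ_{\mu\nu}\piX^{\mu\nu}$, and then compute $\D^\mu\QQ_{\mu\nu}$ by commuting $\Db^\mu\Db_\nu$ to $\Db_\nu\Db^\mu$. You have correctly identified the only nontrivial point, namely that because $\psi$ is a horizontal tensor and $\Db$ is the projected derivative, the commutator $[\Db_\mu,\Db_\nu]$ on horizontal indices picks up, in addition to the Riemann tensor, the correction $\tfrac12\B_{ab\mu\nu}$ built from the $e_3/e_4$ components of the connection coefficients $(\La_\mu)_{3a},(\La_\mu)_{4a}$; this is precisely the content of $\Rdot$ and is where the non-integrability of the horizontal structure in Kerr--Newman enters. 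The $w$-computation and the cancellation of the cross terms $\tfrac12(\partial^\mu w)\Re(\psi\cdot\Db_\mu\bar\psi)-\tfrac14(\partial^\mu w)\partial_\mu|\psi|^2$ are correct as written, and the $J$-term is immediate. There is no gap.
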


  \begin{proposition}[Proposition 7.1.5 in \cite{GKS}]
        \lab{proposition:Morawetz1}  Let  $\FF, w_{red}$   given  functions of  $r$.
         With the choice  of vectorfield $X=\FF \pr_r$ and scalar function $w=  |q|^2 \D_\a\big( |q|^{-2}  X^\a\big)-w_{red}$,  then $\EE^{(X, w, J)}[\psi]  $ 
satisfies  
  \beaa
  \begin{split}
   |q|^2\EE^{(X, w, J)}[\psi]    &=\AA |\nab_r\psi|^2 + \UU^{\a\b}(\Db_\a \psi )\c(\Db_\b \psi )+\VV |\psi|^2 +\frac 1 4 |q|^2  \D^\mu (|\psi|^2 J_\mu)  
   \end{split}
   \eeaa
   where
   \beaa
   \AA&=& \De \pr_r \FF- \frac 1 2 \FF\pr_r \De-\frac 1 2 \De w_{red},\\
   \UU^{\a\b}&=&  -\frac 1 2  \FF\pr_r \left(\frac 1 \De\RR^{\a\b}\right)-\frac 1 2   w_{red}\frac 1 \De \RR^{\a\b},\\
   \VV&=& -\frac 1 2 \Big( X\big(|q|^2\big) V  +|q|^2  X(V)+\frac 1 2|q|^2 \square_\g  w  + |q|^2  w_{red} V \Big).
   \eeaa
    If in addition we choose, for  fixed  functions $z, u$ depending on $r$,
\beaa
\FF=z u, \qquad               w_{red}=  \FF  z^{-1}\partial_r z =  (\partial_r z ) u, \qquad w = z \pr_r u, \lab{Equation:w}
\eeaa
then
\beaa
\begin{split}
 \AA&=z^{1/2}\Delta^{3/2} \partial_r\left( \frac{ z^{1/2}  u}{\Delta^{1/2}}  \right),   \\
  \UU^{\a\b}&=  - \frac{ 1}{2}  u\pr_r\left( \frac z \De\RR^{\a\b}\right),\\
\VV&=   -\frac 1 4  \pr_r\Big(\De \pr_r \big(
 z \pr_r u  \big)  \Big)-\frac 1 2  u  \pr_r \left(z |q|^2 V\right)   .
 \end{split}
\eeaa
        \end{proposition}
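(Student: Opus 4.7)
\textbf{Proof proposal for Proposition \ref{proposition:Morawetz1}.} The plan is to expand the definition of $\EE^{(X,w,J)}[\psi]$ from Proposition \ref{prop-app:stadard-comp-Psi} term by term, substitute $X = \FF(r)\pr_r$, and arrange the result into the four contributions $\AA|\nab_r\psi|^2$, $\UU^{\a\b}(\Db_\a\psi)\c(\Db_\b\psi)$, $\VV|\psi|^2$, and $\tfrac14|q|^2\D^\mu(|\psi|^2 J_\mu)$. The starting point is
\begin{equation*}
|q|^2 \EE^{(X,w,J)}[\psi] \;=\; \tfrac12 |q|^2 \QQ[\psi]\c\piX \;-\; \tfrac12 |q|^2 X(V)|\psi|^2 \;+\; \tfrac12 |q|^2 w\,\LL[\psi] \;-\; \tfrac14 |q|^2(\square_\g w)|\psi|^2 \;+\; \tfrac14 |q|^2\D^\mu(|\psi|^2 J_\mu).
\end{equation*}
The last two summands already match the claimed form of $\VV$ and $K$, so the real work lies in analyzing the first and third terms.

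First, I would compute the $X$-deformation tensor for $X=\FF\pr_r$ and pair it with $\QQ[\psi]$. Writing $\QQ_{\mu\nu} = \Re(\Db_\mu\psi\c\Db_\nu\ov\psi) - \tfrac12 g_{\mu\nu}\LL[\psi]$ gives
\begin{equation*}
\tfrac12 |q|^2 \QQ[\psi]\c\piX \;=\; |q|^2 \piX^{\mu\nu}\Re(\Db_\mu\psi\c\Db_\nu\ov\psi) \;-\; \tfrac12 \LL[\psi]\, |q|^2 \tr\piX,
\end{equation*}
and I would use the identity $|q|^2\tr\piX = 2|q|^2\D_\a X^\a$ together with the definition of $w$ to see that the second summand combines with $\tfrac12 |q|^2 w\,\LL[\psi]$ to leave only the $-\tfrac12 \De\, w_{red}\,\LL[\psi]$ piece, after inserting the inverse-metric decomposition $|q|^2 g^{\a\b} = \De\,\pr_r^\a\pr_r^\b + \De^{-1}\RR^{\a\b}$ from \eqref{inverse-metric-Kerr} to rewrite $\LL[\psi] = g^{\a\b}\Db_\a\psi\c\Db_\b\ov\psi + V|\psi|^2$ in the radial-plus-$\RR$ split. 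The key algebraic miracle is that the choice $w = |q|^2\D_\a(|q|^{-2}X^\a) - w_{red}$ is tailored precisely so that the modified divergence cancels the Lagrangian trace and promotes the leftover $\RR^{\a\b}\Db_\a\psi\c\Db_\b\psi/\De$ terms into the shape $\UU^{\a\b}(\Db_\a\psi)\c(\Db_\b\psi)$ with a coefficient involving $\pr_r(z\RR^{\a\b}/\De)$.

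For the quadratic part in $\Db\psi$, I would compute $|q|^2\piX^{\mu\nu}\Re(\Db_\mu\psi\c\Db_\nu\ov\psi)$ by noting that $|q|^2\piX = \De\,\pr_r\FF\,\pr_r\otimes\pr_r + \FF\,|q|^2\Lie_{\pr_r}(|q|^{-2}g^{-1})\,|q|^2/\De + \ldots$, and then pairing against the split inverse metric. Collecting the pure-radial coefficient yields $\AA = \De\pr_r\FF - \tfrac12\FF\pr_r\De - \tfrac12\De\,w_{red}$; the mixed $\RR$ part yields $\UU^{\a\b} = -\tfrac12\FF\pr_r(\RR^{\a\b}/\De) - \tfrac12 w_{red}\RR^{\a\b}/\De$; the remaining $|\psi|^2$ pieces — built from $X(|q|^2)V$, $|q|^2 X(V)$, $\tfrac12|q|^2\square_\g w$, and $|q|^2 w_{red}V$ — assemble into $\VV$. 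The main obstacle here is purely bookkeeping: keeping track of all the $|q|^2$ weights introduced by the non-diagonal inverse metric, and verifying that the divergence $\D_\a\piX^{\a\cdot}$ cross-terms combine as advertised; nothing deep is at stake, only the careful use of \eqref{inverse-metric-Kerr} and $\tr\piX = 2\D_\a X^\a$.

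Finally, for the last statement I would substitute $\FF = zu$, $w_{red} = (\pr_r z)u$, $w = z\pr_r u$ and simplify. The identity $\AA = \De\pr_r(zu) - \tfrac12 zu\,\pr_r\De - \tfrac12\De(\pr_r z)u$ rearranges exactly into $z^{1/2}\De^{3/2}\pr_r(z^{1/2}u/\De^{1/2})$ by the product rule applied to $z^{1/2}u\De^{-1/2}\cdot\De$; the coefficient of $\RR^{\a\b}$ in $\UU^{\a\b}$ becomes $-\tfrac12 u\,\pr_r(z\RR^{\a\b}/\De)$ directly; and collapsing $X(|q|^2)V + |q|^2 X(V) = X(|q|^2 V) = zu\pr_r(|q|^2 V)$ together with $w_{red}V|q|^2 = (\pr_r z)u\,|q|^2 V$ yields the compact form $\VV = -\tfrac14\pr_r(\De\pr_r(z\pr_r u)) - \tfrac12 u\pr_r(z|q|^2 V)$, completing the proof.
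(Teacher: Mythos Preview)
Your proposal is correct and follows essentially the same route as the paper, which defers the first part to \cite{GKS} (Lemma 7.1.4 and Proposition 7.1.5 there) and only spells out the potential term $\VV$; your final paragraph reproduces that computation of $\VV_{pot}$ verbatim. A couple of intermediate formulas in your sketch are off by harmless factors (e.g.\ the expansion of $\tfrac12|q|^2\QQ\c\piX$ should read $\tfrac12|q|^2\piX^{\mu\nu}\Re(\Db_\mu\psi\c\Db_\nu\ov\psi) - \tfrac14|q|^2\tr\piX\,\LL[\psi]$, and your expression for $|q|^2\piX$ is garbled), but these are precisely the bookkeeping issues you flag, and they do not affect the structure of the argument or the final coefficients $\AA$, $\UU^{\a\b}$, $\VV$.
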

        \begin{proof} The above is the same as in \cite{GKS}, except for the part involving the potential. More precisely from 
        \beaa
 \VV&=& -\frac 1 2 \Big( X\big(|q|^2\big) V  +|q|^2  X(V)+\frac 1 2|q|^2 \square_\g  w  + |q|^2  w_{red} V \Big)=\VV_0+\VV_{pot}
\eeaa
with
\beaa
 \VV_0:= -\frac 1 4 |q|^2 \square_\g  w, \qquad  \VV_{pot}:= -\frac 1 2 \Big( X\big(|q|^2\big) V  +|q|^2  X(V) + |q|^2  w_{red} V \Big),
\eeaa
we have as in \cite{GKS}
\beaa
 \VV_0:= -\frac 1 4 |q|^2 \square_\g  w=-\frac 1 4  \pr_r\Big(\De \pr_r \big(
 z \pr_r u  \big)  \Big).
\eeaa
On the other hand
\beaa
\VV_{pot}&=& -\frac 1 2 \Big( X\big(|q|^2\big) V  +|q|^2  X(V) + |q|^2  w_{red} V \Big)=-\frac 1 2  \Big( X\big(|q|^2 V\big)    + |q|^2  w_{red} V \Big).
\eeaa
Recalling that  $w_{red}=  \FF z^{-1}\partial_r z$  and $\FF=zu$    we deduce
\beaa
\VV_{pot}&=& -\frac 1 2  \Big( \FF \pr_r\big(|q|^2 V\big)    + \FF z^{-1}\partial_r z ( |q|^2   V) \Big)= \frac 1 2  \Big( -zu \pr_r\big(|q|^2 V\big)    -u \partial_r z ( |q|^2   V) \Big)= -\frac 1 2  u\pr_r\big(z|q|^2 V\big),
\eeaa
as stated.
        \end{proof}

\begin{proposition}[Proposition 8.1.3 in \cite{GKS}]
\lab{proposition:Morawetz3} If  we choose 
\beaa
X^{\aund\bund} =\FF^{\aund\bund} \pr_r, \qquad w^{\aund\bund}=  |q|^2 Div \big( |q|^{-2}  X ^{\aund\bund} \big)-w_{red}  ^{\aund\bund},
\eeaa
such that for a function  $z$, and a double-indexed function $u^{\aund\bund}$
\beaa
\FF^{\aund\bund}= z u^{\aund\bund}, \qquad   w^{\aund\bund} =z \pr_r u^{\aund\bund} , \qquad            w^{\aund\bund} _{red}=  \FF^{\aund\bund}  z^{-1}\partial_r z, 
\eeaa
then the generalized current  verifies the identity
\beaa
\begin{split}
 |q|^2\EE^{(\bold{X}, \bold{w}, \bold{J})}  &=\AA^{\aund\bund}  \nab_r\psia\c \nab_r\psib + \UU^{\a\b\aund\bund} \, \Db_\a \psia \c \Db_\b \psib  +\VV^{\aund\bund} \psia\c\psib  +\frac 1 4 |q|^2 \D^\mu\Big(J^{\aund\bund}_\mu\psi_\aund\c \psi_\bund \Big)
   \end{split}
   \eeaa
   where
\bea\label{eq:coefficients-commuted-mor}
\begin{split}
  \UU^{\a\b\aund\bund}&=  - \frac{ 1}{2} u^{\aund\bund} \pr_r\left( \frac z \De\RR^{\a\b}\right),\\
  \AA^{\aund\bund}&=z^{1/2}\Delta^{3/2} \partial_r\left( \frac{ z^{1/2}  u^{\aund\bund} }{\Delta^{1/2}}  \right),   
 \\
   \VV^{\aund\bund}&=-\frac 1 4 \pr_r \left(\De \pr_r \Big( z \pr_r \big( u^{\aund\bund} \big) \Big)\right)-2  u^{\aund\bund}  \pr_r \left(z |q|^2 V\right). 
 \end{split}
\eea
   
 If  $J^{\aund\bund} = v^{\aund\bund}(r) \pr_r$, for a double-indexed function $v=v^{\aund\bund}(r)$, we have
\beaa
\frac 1 4 |q|^2 Div\big( (\psia\c\psib)J^{\aund\bund} \big)&=& \frac 1 4 |q|^2\left( 2 v^{\aund\bund}(r)\psia\c \nab_r \psib + \left(\pr_r v^{\aund\bund}+ \frac{2r}{|q|^2} v^{\aund\bund}\right) \psia\c\psib \right).
\eeaa

\end{proposition}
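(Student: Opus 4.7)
The plan is to reduce the double-indexed identity to the scalar vectorfield computation of Proposition \ref{proposition:Morawetz1}, applied index-by-index. Recall from Section \ref{sec:generalized-current} that the generalized energy-momentum tensor $\QQ[\psi_i]_{\aund\bund\,\mu\nu}$ is simply the symmetrization in $(\psiao,\psibo)$ of the scalar energy-momentum expression, and similarly for the current $\PP_\mu^{(\bold{X},\bold{w},\J)}[\psi_i]$. Hence the divergence identity from Proposition \ref{prop-app:stadard-comp-Psi}, once polarized in the pair of arguments, produces for each fixed $(\aund,\bund)$ a bulk contribution of exactly the form computed in Proposition \ref{proposition:Morawetz1}, with the scalar data $(X,w,J)$ replaced by $(X^{\aund\bund}, w^{\aund\bund}, J^{\aund\bund})$ and the norm-square $|\psi|^2$ replaced by the bilinear pairing $\psia\c\psib$.

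\textbf{The deformation-tensor computation.} First I would expand $|q|^2 \EE^{(\bold{X},\bold{w},\bold{J})}$ using its definition, so that the only new computation beyond the scalar case is that of $\QQ[\psi_i]_{\aund\bund\,\mu\nu}\,(\LL_{X^{\aund\bund}}g)^{\mu\nu}$. For $X^{\aund\bund} = \FF^{\aund\bund}\pr_r$ the deformation tensor has the same structure as in Lemma 7.1.4 of \cite{GKS}, and the prescribed choice $w^{\aund\bund} = |q|^2 \D_\a(|q|^{-2} X^{\aund\bund\,\a}) - w_{red}^{\aund\bund}$ is designed precisely so that, as in the scalar case, the principal $\square_{\g} w$ contribution combines with the divergence of the deformation tensor to leave only the three geometric terms $\AA^{\aund\bund}\nab_r\psia\c\nab_r\psib$, $\UU^{\a\b\aund\bund}\,\Db_\a\psia\c\Db_\b\psib$ and $\VV^{\aund\bund}\,\psia\c\psib$, up to the $J$-divergence. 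The explicit coefficients emerge from the identities $2\De\pr_r\FF^{\aund\bund} - \FF^{\aund\bund}\pr_r\De - \frac 12 \De w^{\aund\bund}_{red}$ and its analogues, exactly as displayed in the unsimplified part of Proposition \ref{proposition:Morawetz1}.

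\textbf{Simplification under the factorization.} Next I would impose $\FF^{\aund\bund} = z u^{\aund\bund}$, which, via $\D_\a(|q|^{-2}\pr_r^\a)=0$, forces $w_{red}^{\aund\bund} = (\pr_r z)u^{\aund\bund}$ and $w^{\aund\bund} = z\,\pr_r u^{\aund\bund}$. Substituting in the general expressions for $\AA$, $\UU^{\a\b}$, $\VV$ from Proposition \ref{proposition:Morawetz1} yields the stated closed forms for $\AA^{\aund\bund}$, $\UU^{\a\b\aund\bund}$, $\VV^{\aund\bund}$ in \eqref{eq:coefficients-commuted-mor}. For the current term $J^{\aund\bund} = v^{\aund\bund}(r)\pr_r$, one computes directly
\[
\tfrac 14 |q|^2\,\D^\mu\!\big((\psia\c\psib)\,v^{\aund\bund}\pr_r^\mu\big) = \tfrac 14 |q|^2\Big(2 v^{\aund\bund}\psia\c\nab_r\psib + \big(\pr_r v^{\aund\bund} + \tfrac{2r}{|q|^2}v^{\aund\bund}\big)\psia\c\psib\Big)
\]
using $\D_\mu\pr_r^\mu = 2r/|q|^2$ and the symmetrization of the bilinear pairing, which reproduces the stated identity.

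\textbf{Main obstacle.} The one delicate point that I would verify carefully is the combinatorial normalization coming from the symmetrization in $(\aund,\bund)$ — in particular the factor $2$ in front of $u^{\aund\bund}\pr_r(z|q|^2 V)$ in the formula for $\VV^{\aund\bund}$ (as opposed to the $\tfrac 12$ appearing in Proposition \ref{proposition:Morawetz1}). This factor of $4$ discrepancy reflects the bilinear (rather than quadratic) structure of the middle terms in $\PP_\mu^{(\bold{X},\bold{w},\J)}[\psi_i]$ combined with the doubling due to the symmetric pairs in the double sum. I would pin it down by checking the diagonal case $\aund=\bund$ with $\psia=\psib=\psi$: the identity must then reduce to the scalar formula of Proposition \ref{proposition:Morawetz1} (up to the overall $\frac 14$ weight used in the definition of $\PP_\mu^{(\bold{X},\bold{w},\J)}[\psi_i]$ versus the $\frac 12$ in the scalar current), which fixes the normalization unambiguously.
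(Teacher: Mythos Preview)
The paper does not actually prove this proposition: it is stated verbatim as Proposition 8.1.3 of \cite{GKS} with no accompanying proof (unlike the preceding Proposition \ref{proposition:Morawetz1}, which the paper does rederive because the potential term differs from the vacuum case). Your overall strategy --- polarize the scalar computation of Proposition \ref{proposition:Morawetz1} and apply it for each fixed pair $(\aund,\bund)$ --- is correct and is exactly how such a bilinear identity is obtained.

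Your handling of the ``main obstacle'', however, is off. You propose that the factor $-2$ in $\VV^{\aund\bund}$ versus $-\tfrac 12$ in the scalar $\VV$ is explained by the $\tfrac 14$ vs.\ $\tfrac 12$ weights in the current definitions together with a doubling from symmetrization. But compare the bulk formulas directly: $\EE^{(\textbf{X},\textbf{w},\textbf{J})}[\psi_1]$ in Proposition \ref{prop:general-computation-divergence-P-generalized} and $\EE^{(X,w,J)}[\psi]$ in \eqref{eq:EE-X-w-J} carry \emph{identical} numerical coefficients in front of each term --- the symmetrization conventions in Section \ref{sec:generalized-current} are already normalized so that the diagonal case $\aund=\bund$, $\psiao=\psibo=\psi$ reproduces the scalar formula exactly with no extra factor. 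Your own proposed diagonal check would therefore not ``fix the normalization unambiguously'' but rather expose an inconsistency: the coefficient in $\VV^{\aund\bund}$ should be $-\tfrac 12$, not $-2$, by the very reduction argument you outline. The discrepancy is most likely a transcription slip from \cite{GKS} (note that the paper's proof of Proposition \ref{proposition:Morawetz1} explicitly recomputes $\VV_{pot}=-\tfrac 12 u\,\pr_r(z|q|^2 V)$, and nothing in the passage to the double-indexed setting alters that coefficient). Your instinct to scrutinize this point was right; the resolution you offer is not.
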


{\footnotesize

}


\begin{thebibliography}{7}



\bibitem[AB15]{And-Mor}
Andersson L., Blue P.
\textit{Hidden symmetries and decay for the wave equation on the Kerr spacetime}. 
Ann. of Math. (2) 182, no.3, 787-853 (2015)



\bibitem[Ape22]{Mario}
Apetroaie M.,
\textit{Instability of gravitational and electromagnetic perturbations of extremal Reissner-Nordstr\"om spacetime}. 
arXiv:2211.09182 (2022)



\bibitem[Are11]{Aretakis}
Aretakis S.
\textit{Stability and instability of extreme Reissner-Nordstr\"om black hole spacetimes for linear scalar perturbations I}. 
Communications in mathematical physics 307 (1), 17-63 (2011)

\bibitem[Are15]{Aretakis2}
Aretakis S.
\textit{Horizon instability of extremal black holes}. 
Adv. Theor. Math. Phys. 19, 507 (2015)


\bibitem[BK05]{Berti-Kokkotas}
Berti E., Kokkotas K. D.
\textit{Quasinormal modes of Kerr-Newman black holes: Coupling of electromagnetic and gravitational perturbations}. 
Phys. Rev. D 71, 124008 (2005)





\bibitem[Bic79]{Bicak}
Bic{\'a}k J.
\textit{On the theories of the interacting perturbations of the Reissner-Nordstr{\"o}m black hole}. 
Czech. J. Phys. B, 29, 945-80 (1979)


\bibitem[BP21]{Bozzola}
Bozzola G., Pascalidis V., 
\textit{General Relativistic Simulations of the Quasicircular Inspiral and Merger of Charged Black Holes: GW150914 and Fundamental Physics Implications}
Phys. Rev. Lett. 126, 041103 (2021)


\bibitem[Car68]{Carter}
 Carter B., \textit{Global structure of the Kerr family of gravitational fields}, Physical Review. 174 (5): 1559--1571 (1968)
 
\bibitem[Cha78]{Chandra1}
Chandrasekhar S., 
\textit{The gravitational perturbations of the Kerr black hole I. The perturbations in the quantities which vanish in the stationary state}. 
Proc. Roy. Soc. (London) A, 358, 421 (1978)

 
\bibitem[Cha79]{Chandra-RN2}
Chandrasekhar S., 
\textit{On the Equations Governing the Perturbations of the  Reissner-Nordstr{\"o}m Black Hole}. 
Proc. Roy. Soc. (London) A, 365, 453-65 (1979)

\bibitem[CX79]{Chandra-RN}
Chandrasekhar S., Xanthopoulos B. C. 
\textit{On the metric perturbations of the Reissner-Nordstr{\"o}m black hole}. 
Proc. Roy. Soc. (London) A, 367, 1-14 (1979)



\bibitem[Cha83]{Chandra}
Chandrasekhar S., 
\textit{The mathematical theory of black holes}. 
Oxford University Press (1983)



\bibitem[Chi76]{Chitre}
Chitre D. M.
\textit{Perturbations of charged black holes}. 
Phys. Rev. D, 13, 10, 2713-2719 (1976)



 
\bibitem[CK93]{Ch-Kl} Christodoulou D., Klainerman S., \textit{The global nonlinear stability of the Minkowski space}, Princeton University Press (1993).



\bibitem[DHR19a]{DHR}
Dafermos M., Holzegel G., Rodnianski I.
\textit{The linear stability of the Schwarzschild solution to gravitational perturbations}.
Acta Mathematica, 222: 1-214 (2019)



\bibitem[DHR19b]{TeukolskyDHR}
Dafermos M., Holzegel G., Rodnianski I. 
\textit{Boundedness and decay for the Teukolsky equation on Kerr spacetimes I: the case $|a|\ll M$}.
Ann. PDE, 5, 2  (2019)



\bibitem[DR09]{DR09}
Dafermos M.,  Rodnianski I.
\textit{The red-shift effect and radiation decay on black hole spacetimes}.
Comm. Pure Appl. Math, 62:859-919 (2009)



\bibitem[DRSR16]{DRSR}
Dafermos M.,  Rodnianski I., Shlapentokh-Rothman Y. 
\textit{Decay for solutions of the wave equation on Kerr exterior spacetimes III: The full subextremal case $|a|< M$}.
Annals of Math. 183, no. 3, 787-913 (2016)



\bibitem[DHRT21]{DHRT} Dafermos M.,  Holzegel G.,  Rodnianski I., Taylor M.,   \textit{The non-linear stability of the Schwarzschild family of black holes}, arXiv:2104.0822 (2021)




\bibitem[DGS15]{Dias}
Dias O.J.C., Godazgar M., Santos J.E.
\textit{Linear Mode Stability of the Kerr-Newman Black Hole and Its Quasinormal Modes}.
Phys. Rev. Lett. 114, 151101 (2015)


\bibitem[DGSCDL22]{Dias2}
Dias O.J.C., Godazgar M., Santos J.E., Carullo G., Del Pozzo W., Laghi D.,
\textit{Eigenvalue repulsions in the quasinormal spectra of the Kerr-Newman black hole}.
Phys. Rev. D 105, 084044 (2022)


\bibitem[DGS22]{Dias3}
Dias O.J.C., Godazgar M., Santos J.E.
\textit{Eigenvalue repulsions and quasinormal mode spectra of Kerr-Newman: an extended study}.
J. High Energ. Phys., 76 (2022)


\bibitem[DF79]{DF}
Dudley A. L., Finley J. D.
\textit{Covariant perturbed wave equations in arbitrary type-D backgrounds}
J. Math. Phys. 20 (2) (1979)


\bibitem[Fan21]{Allen1}
Fang A.,
\textit{Nonlinear stability of the slowly-rotating Kerr-de Sitter family}.
arXiv:2112.07183 (2021)



\bibitem[FTD17]{Dotti2}
Fern\'andez T\'io J.M., Dotti G. 
\textit{Black hole nonmodal linear stability under odd perturbations: the Reissner-Nordstr\"om case}.
Phys. Rev. D95 no.12, 124041 (2017)


\bibitem[Gio19]{Giorgi5}
Giorgi E.
\textit{Boundedness and decay for the Teukolsky system of spin $\pm1$ on Reissner-Nordstr{\"o}m spacetime: the $\ell=1$ spherical mode}.
 Class. Quantum Grav. 36, 205001 (2019)



\bibitem[Gio20a]{Giorgi4}
Giorgi E.
\textit{Boundedness and decay for the Teukolsky system of spin $\pm2$ on Reissner-Nordstr{\"o}m spacetime: the case $|Q| \ll M$}.
Ann. Henri Poincar\'e, 21, 2485 - 2580 (2020)





\bibitem[Gio20b]{Giorgi6}
Giorgi E.
\textit{The linear stability of Reissner-Nordstr{\"o}m spacetime for small charge}.
Annals of PDE, 6, 8 (2020)



\bibitem[Gio20c]{Giorgi7a}
Giorgi E.
\textit{The linear stability of Reissner-Nordstr{\"o}m spacetime: the full sub-extremal range  $|Q|<M$}.
Commun. Math. Phys. 380, 1313-1360 (2020)





\bibitem[Gio22a]{Giorgi7}
Giorgi E., \textit{Electromagnetic-gravitational perturbations of Kerr-Newman spacetime: the Teukolsky and Regge-Wheeler equations},
J. Hyperbolic Differ. Equ., Vol. 19, No. 01, pp. 1-139 (2022) 


\bibitem[Gio22b]{Giorgi8}
Giorgi E., \textit{The Carter tensor and the physical-space analysis in perturbations of Kerr-Newman spacetime},
arXiv:2105.14379, to appear in Journal of Differential Geometry (2022)



\bibitem[GKS22]{GKS}
Giorgi E., Klainerman S., Szeftel J., \textit{Wave equations estimates and the nonlinear stability of slowly rotating Kerr black holes},
arXiv:2205.14808 (2022)




\bibitem[Gun80]{Gunter1}
Gunter D.,
Philosophical Transactions of the Royal Society of London, 296, 457-526 (1980)




\bibitem[Gun81]{Gunter2}
Gunter D.,
Philosophical Transactions of the Royal Society of London, 301, 705-709 (1981)





\bibitem[HHV21]{Kerr-lin2}
H\"afner D., Hintz P., Vasy A.
\textit{Linear stability of slowly rotating Kerr black holes}. 
Inventiones mathematicae, 223:1227-1406 (2021)


\bibitem[He23]{Lili}
He L., 
\textit{The linear stability of weakly charged and slowly rotating Kerr-Newman family of charged black holes},
arXiv:2301.08557 (2023)



\bibitem[Hin18]{Hintz-M}
Hintz P.
\textit{Non-linear stability of the Kerr-Newman-de Sitter family of charged black holes}. 
Annals of PDE, 4(1):11 (2018)




\bibitem[HV18]{Hintz-Vasy}
Hintz P., Vasy A.
\textit{The global non-linear stability of the Kerr-de Sitter family of black holes}. 
Acta Mathematica, 220:1-206 (2018)


\bibitem[KW87]{Kay-Wald}
Kay B. S., Wald R. M. 
\textit{Linear stability of Schwarzschild under perturbations which are nonvanishing on the bifurcation two sphere}
Classical Quantum Gravity 4, 893 (1987)


\bibitem[KS20]{KS}
Klainerman S., Szeftel J.
\textit{Global Non-Linear Stability of Schwarzschild Spacetime under Polarized Perturbations}.
Annals of Mathematics Studies, Princeton University Press (2020)

\bibitem[KS22a]{KS-GCM1}  Klainerman S., Szeftel J., \textit{Construction of GCM spheres in perturbations of Kerr}, Ann. PDE, 8 (2), Art. 17, 153 pp. (2022)

\bibitem[KS22b]{KS-GCM2}  Klainerman S., Szeftel J., \textit{Effective  results in uniformization and intrinsic GCM spheres in perturbations of Kerr}, Ann. PDE, 8 (2), Art. 18, 89 pp. (2022)

\bibitem[KS23]{KS:Kerr}  Klainerman S., Szeftel J., \textit{Kerr stability for small angular momentum}, Pure and Applied Mathematics Quarterly, 19, no.3, 791-1678 (2023)





\bibitem[KI04]{highdim}
Kodama H., Ishibashi A. 
\textit{Master equations for perturbations of generalized static black holes with charge in higher dimensions}.
Prog.Theor.Phys. 111, 29-73 (2004)




\bibitem[Kok93]{Kokkotas}
Kokkotas K. D. 
\textit{Quasi-Normal Modes of the Kerr-Newman Black Hole}.
Il Nuovo Cimento, vol. 108 B, N. 9 (1993)





\bibitem[Lee76]{Lee}
Lee C. H. 
\textit{Coupled gravitational and electromagnetic perturbations around a charged black hole}.
Journal of Mathematical Physics 17, 1226 (1976)


\bibitem[LMRT13]{Lucietti}
Lucietti J., Murata K., Reall H. S., Tanahashi N.
\textit{On the horizon instability of an extreme Reissner-Nordstr\"om black hole}.
J. High Energy Phys. 03, 035 (2013)


\bibitem[Ma20]{ma2}
Ma S.
\textit{Uniform energy bound and Morawetz estimate for extreme component of spin fields in the exterior of a slowly rotating Kerr black hole II: linearized gravity}.
Commun. Math. Phys. 377, 2489-2551 (2020)


\bibitem[MYZC15]{Mark}
Mark Z., Yang H., Zimmerman A., Chen Y.
\textit{Quasinormal modes of weakly charged Kerr-Newman spacetimes}
Phys. Rev. D, 91, 044025 (2015)


\bibitem[Mas85]{mash}
Mashhoon B.,
\textit{Stability of charged rotating black holes in the eikonal approximation}.
Phys. Rev. D, 31, 2 (1985)



\bibitem[Mon74a]{Moncrief1}
Moncrief V.
\textit{Odd-parity stability of a Reissner-Nordstr\"om black hole}.
Phys. Rev. D 9, 2707 (1974)



\bibitem[Mon74b]{Moncrief2}
Moncrief V.
\textit{Stability of Reissner-Nordstr\"om black holes}.
Phys. Rev. D 10, 1057 (1974)



\bibitem[Mon74c]{Moncrief3}
Moncrief V.
\textit{Gauge-invariant perturbations of Reissner-Nordstr\"om black holes}.
Phys. Rev. D 12, 1526 (1974)




\bibitem[MG21]{MG21}
Moncrief V., Gudapati N. 
\textit{A Positive-Definite Energy Functional for the Axisymmetric Perturbations of Kerr-Newman Black Holes}.
arXiv:2105.12632 (2021)




\bibitem[NCCEPT65]{Newman}
Newman E.T., Couch E., Chinnapared K., Exton E., Prakash A., Torrence R. 
\textit{Metric of a rotating, charged mass}
 J. Math. Phys., 6, 918--919 (1965)




\bibitem[PBG13a]{Pani}
Pani P., Berti E., Gualtieri L. 
\textit{Gravitoelectromagnetic Perturbations of Kerr-Newman Black Holes: Stability and Isospectrality in the Slow-Rotation Limit}.
Phys. Rev. Lett. 110, 241103 (2013)



\bibitem[PBG13b]{Pani2}
Pani P., Berti E., Gualtieri L. 
\textit{Scalar, electromagnetic and gravitational perturbations of Kerr-Newman black holes in the slow-rotation limit}.
Phys. Rev. D 88, 064048 (2013)




\bibitem[RW57]{Regge-Wheeler}
Regge T., Wheeler J.A. 
\textit{Stability of a Schwarzschild singularity}.
Phys. Rev. 2, 108:1063-1069 (1957)




\bibitem[She23]{Shen} Shen D., \textit{Construction of GCM hypersurfaces in perturbations of Kerr}, Ann. PDE, 9 (1), Art. 11 (2023)



\bibitem[SRTdC20]{Y-R} Shlapentokh-Rothman Y., Teixeira da Costa R.  \textit{Boundedness and decay for the Teukolsky equation on Kerr in the full subextremal range $|a|< M$: frequency space analysis}, 
arXiv preprint arXiv:2007.07211 (2020) 





\bibitem[Teu73]{Teukolsky}
Teukolsky S.A. 
\textit{Perturbations of a rotating black hole. I. Fundamental equations for gravitational, electromagnetic and neutrino-field perturbations}.
The Astrophysical Journal, 185: 635-647 (1973)

\bibitem[WLJYHF21]{Wang}
Wang H.-T., Li P.-C., Jiang J.-L., Yuan G.-W., Hu Y.-M., Fan Y.-Z.,
\textit{Constrains on the electric charges of the binary black holes with GWTC-1 events}
The European Physical Journal C, 81, 8, 769 (2021)


\bibitem[Whi89]{Whiting}
Whiting B. F.
\textit{Mode stability of the Kerr black hole}.
J. Math. Phys., 30 (6):1301-1305 (1989)


\bibitem[Xan81]{Xanta}
Xanthopoulos B.C.
\textit{Metric and Electromagnetic Perturbations of the Reissner-Nordstr\"om Black Hole}.
Proc. Roy. Soc. (London) A, 378, 77-88 (1981)






\bibitem[Zer70]{Ze}  Zerilli F. J., \textit{Effective potential for even-parity Regge-Wheeler gravitational perturbation equations}, Phys. Rev. Lett. 24, 737--738 (1970)




\bibitem[Zer74]{Zerilli}
Zerilli F.J.,
\textit{Perturbation analysis for gravitational and electromagnetic radiation in a Reissner-Nordstr\"om geometry}.
Phys. Rev. D 9, 860-868 (1974)



\bibitem[ZCHLS14]{Zilhao}
Zilhao M., Cardoso V., Herdeiro C., Lehner L., Sperhake U.,
\textit{Testing the nonlinear stability of Kerr-Newman black holes}.
Phys. Rev. D 90, 124088 (2014)



\end{thebibliography}
\end{document}